\def\confversion{0}
\def\ifconf{\ifnum\confversion=1}
\def\ifnotconf{\ifnum\confversion=0}
\def\showauthornotes{1}
\def\showkeys{0}
\def\showdraftbox{1}
\newcommand{\calA}{\ensuremath{\mathcal{A}}\xspace}
\newcommand{\calC}{\ensuremath{\mathcal{C}}\xspace}
\newcommand{\calD}{\ensuremath{\mathcal{D}}\xspace}
\newcommand{\calE}{\ensuremath{\mathcal{E}}\xspace}
\newcommand{\calF}{\ensuremath{\mathcal{F}}\xspace}
\newcommand{\calH}{\ensuremath{\mathcal{H}}\xspace}
\newcommand{\calK}{\ensuremath{\mathcal{K}}\xspace}
\newcommand{\calL}{\ensuremath{\mathcal{L}}\xspace}
\newcommand{\calP}{\ensuremath{\mathcal{P}}\xspace}
\DeclareMathOperator{\sgn}{\operatorname{sgn}}
\DeclareMathOperator{\inv}{\operatorname{Inv}}
\DeclareMathOperator*{\argmax}{\arg\!\max}
\DeclareMathOperator{\Vtx}{\operatorname V}
\DeclareMathOperator{\Edg}{\operatorname E}
\DeclareMathOperator{\diag}{\operatorname{diag}}
\DeclareMathOperator{\diam}{\operatorname{diam}}
\newcommand{\ee}{\ensuremath{\mathrm e}}
\newcommand{\CC}{\ensuremath{\mathbb{C}}}
\newcommand{\RR}{\ensuremath{\mathbb{R}}}
\newcommand{\NN}{\ensuremath{\mathbb{N}}}
\newcommand{\sym}{\ensuremath{\mathbf{S}}}
\renewcommand{\bar}[1]{\ensuremath{\overline{#1}}}
\newcommand{\Authornote}[2]{{\sf\small\color{red}{[#1: #2]}}}
\newcommand{\Authorcomment}[2]{{\sf \small\color{gray}{[#1: #2]}}}
\newcommand{\Authorfnote}[2]{\footnote{\color{red}{#1: #2}}}
\newcommand{\Authornote}[2]{}
\newcommand{\Authorcomment}[2]{}
\newcommand{\Authorfnote}[2]{}
\newtheorem{theorem}{Theorem}[section]
\newtheorem{definition}[theorem]{Definition}
\newtheorem{observation}[theorem]{Observation}
\newtheorem{lemma}[theorem]{Lemma}
\newtheorem{remark}[theorem]{Remark}
\newtheorem{proposition}[theorem]{Proposition}
\newtheorem{corollary}[theorem]{Corollary}
\newtheorem{claim}[theorem]{Claim}
\newtheorem{fact}[theorem]{Fact}
\def\FullBox{\hbox{\vrule width 6pt height 6pt depth 0pt}}
\def\qed{\ifmmode\qquad\FullBox\else{\unskip\nobreak\hfil
\penalty50\hskip1em\null\nobreak\hfil\FullBox
\parfillskip=0pt\finalhyphendemerits=0\endgraf}\fi}
\def\qedsketch{\ifmmode\Box\else{\unskip\nobreak\hfil
\penalty50\hskip1em\null\nobreak\hfil$\Box$
\parfillskip=0pt\finalhyphendemerits=0\endgraf}\fi}
\renewenvironment{proof}{\begin{trivlist} \item {\bf Proof:~~}}
   {\qed\end{trivlist}}
\newenvironment{proofsketch}{\begin{trivlist} \item {\bf
Proof Sketch:~~}}
  {\qedsketch\end{trivlist}}
\newenvironment{proofof}[1]{\begin{trivlist} \item {\bf Proof
#1:~~}}
  {\qed\end{trivlist}}
\newcommand{\defeq}{\stackrel{\mathrm{def}}=}
\newcommand{\mper}{\,.}
\newcommand{\mcom}{\,,}
\newcommand{\inparen}[1]{\ensuremath{\left(#1\right)}}
\newcommand{\inbraces}[1]{\ensuremath{\left\{#1\right\}}}
\newcommand{\insquare}[1]{\ensuremath{\left[#1\right]}}
\let\nfrac=\nicefrac
\newcommand{\abs}[1]{\ensuremath{\left\lvert #1 \right\rvert}}
\newcommand{\innerprod}[2]{\ensuremath{\left\langle #1, #2 \right\rangle}}
\def\norm#1{
  \@ifnextchar\bgroup%
   {\normalpnorm{#1}}%
   {\defaultnorm{#1}}%
}
\def\defaultnorm#1{%
    \renderNorm{#1}{}
}
\def\normalpnorm#1#2{%
   \@ifnextchar\bgroup%
   {\banaxnorm{#1}{#2}}%
   {\renderNorm{#2}{#1}}
}
\def\banaxnorm#1#2#3{%
    \renderNorm{#3}{#1\to#2 }
}
\def\renderNorm#1#2{%
    \@ifnextchar^%
    {\fixExponent{#1}{#2}}%
    {\ensuremath{\mathchoice%
        {\lVert #1 \rVert_{#2}}%
        {\lVert #1 \rVert_{#2}}%
        {\lVert #1 \rVert_{#2}}%
        {\lVert #1 \rVert_{#2}}}%
    }%
}
\def\fixExponent#1#2^#3{%
    \ensuremath{{\lVert #1 \rVert^{#3}_{#2}}}%
}
\def\enorm#1#2{%
   \@ifnextchar\bgroup%
   {\norm{L_{#1}}{L_{#2}}}%
   {\norm{L_{#1}}{#2}}%
}
\def\cnorm#1#2{%
   \@ifnextchar\bgroup%
   {\norm{\ell_{#1}}{\ell_{#2}}}%
   {\norm{\ell_{#1}}{#2}}%
}
\newcommand{\ie}{i.e.,\xspace}
\newcommand{\aka}{a.k.a.,\xspace}
\newcommand{\eg}{e.g.,\xspace}
\newcommand{\etal}{et al.\xspace}
\newcommand{\suchthat}{{\;\; : \;\;}}
\newcommand{\problemmacro}[1]{\textsf{#1}}
\newcommand{\zo}{\ensuremath{\{0,1\}}\xspace}
\newcommand{\on}{\ensuremath{\{-1,1\}}\xspace}
\newcommand{\maxcsp}{\problemmacro{MAX-CSP}\xspace}
\newcommand{\maxkcsp}{\problemmacro{MAX k-CSP}\xspace}
\newcommand{\maxkcspq}{\problemmacro{MAX k-CSP}$_q$\xspace}
\newcommand{\maxthreesat}{\problemmacro{MAX 3-SAT}\xspace}
\newcommand{\maxthreexor}{\problemmacro{MAX 3-XOR}\xspace}
\newcommand{\maxcut}{\problemmacro{MAX-CUT}\xspace}
\newcommand{\threesat}{\problemmacro{3-SAT}\xspace}
\newcommand{\independentset}{\problemmacro{Maximum Independent Set}\xspace}
\newcommand{\vertexcover}{\problemmacro{Minimum Vertex Cover}\xspace}
\newcommand{\clique}{\problemmacro{Maximum Clique}\xspace}
\newcommand{\labelcover}{\problemmacro{Label Cover}\xspace}
\newcommand{\gaussian}{\ensuremath{\mathcal{N}}}
\newcommand{\lp}{\problemmacro{LP}\xspace}
\newcommand{\sa}{\problemmacro{SA}\xspace}
\newcommand{\sdp}{\problemmacro{SDP}\xspace}
\newcommand{\NP}{\problemmacro{NP}\xspace}
\newcommand{\EXP}{\problemmacro{EXP}\xspace}
\newcommand{\BPP}{\problemmacro{BPP}\xspace}
\newcommand{\BPTIME}[1]{\problemmacro{BPTIME}\left( #1 \right)\xspace}
\renewcommand{\P}{\problemmacro{P}\xspace}
\def\sa{%
    \HierarchyRender{\problemmacro{SA}}
}
\def\sos{%
    \HierarchyRender{\problemmacro{SoS}}
}
\def\HierarchyRender#1{
  \@ifnextchar\bgroup%
  {\hierarchywithlvl{#1}}
  {\nameofhierarchy{#1}}
}
\def\nameofhierarchy#1{#1\xspace}
\def\hierarchywithlvl#1#2{%
    \ensuremath{#1^{(#2)}}\!\xspace
}
\def\bfa{{\bf a}}
\def\bfb{{\bf b}}
\def\bfc{{\bf c}}
\def\bfg{{\bf g}}
\def\bfu{{\bf u}}
\def\bfv{{\bf v}}
\def\bfw{{\bf w}}
\def\bfx{{\bf x}}
\def\bfy{{\bf y}}
\def\bfz{{\bf z}}
\def\bfX{{\bf X}}
\def\bfA{{\bf A}}
\def\bfF{{\bf F}}
\newcommand{\Esymb}{\mathbb{E}}
\newcommand{\Psymb}{\mathbb{P}}
\DeclareMathOperator*{\ExpOp}{\Esymb}
\def\Pr{%
    \ProbabilityRender{\Psymb}%
}
\def\Ex{%
    \ProbabilityRender{\Esymb}%
}
\def\ProbabilityRender#1{%
  \@ifnextchar\bgroup%
  {\properRender{#1}}
  {#1}
}
\def\properRender#1#2{
  \@ifnextchar\bgroup%
  {\renderwithdist{#1}{#2}}
   {\singlervrender{#1}{#2}}
}
\def\singlervrender#1#2{%
   \ensuremath{\mathchoice
       {{#1}\left[ #2 \right]}
       {{#1}[ #2 ]}
       {{#1}[ #2 ]}
       {{#1}[ #2 ]}
   }
}
\def\renderwithdist#1#2#3{%
   \@ifnextchar\bgroup
   {\superfancyrender{#1}{#2}{#3}}
   {\ensuremath{\mathchoice
      {\underset{#2}{#1}\left[ #3 \right]}
      {{#1}_{#2}[ #3 ]}
      {{#1}_{#2}[ #3 ]}
      {{#1}_{#2}[ #3 ]}
     }
   }
}
\def\superfancyrender#1#2#3#4#5{
   \ensuremath{\mathchoice
      {\underset{#1}{{#1}}\left#4 #3 \right#5}
      {{#1}_{#2}#4 #3 #5}
      {{#1}_{#2}#4 #3 #5}
      {{#1}_{#2}#4 #3 #5}
   }
}
\DeclareMathOperator{\opt}{\operatorname{OPT}}
\newcommand{\given}[1]{\left\vert #1 \right.}
\DeclareMathOperator{\divides}{\operatorname \vert}
\DeclareMathOperator{\bigomega}{\operatorname \Omega}
\DeclareMathOperator{\bigoh}{\operatorname O}
\tikzset{->-/.style={decoration={
  markings,
  mark=at position .5 with {\arrow{>}}},postaction={decorate}}}
\newcommand{\Lovasz}{Lov\'asz\xspace}
\newcommand{\Bollobas}{Bollob\'as\xspace}
\newcommand{\Holder}{H\"{o}lder}
\newcommand{\Briet}{Bri\"et\xspace}          
\newcommand{\Esseen}{Ess\'een\xspace}          
\newcommand{\deffont}{\sf}
\newcommand{\defnt}[1]{ {\deffont #1} }
\crefname{ineq}{Ineq.}{Inequalities}
\newcommand{\N}{{\mathbb{N}}}
\newcommand{\Q}{{\mathbb{Q}}}
\newcommand{\R}{{\mathbb R}}
\def\eps{\varepsilon}
\newcommand{\vartwo}[2]{x_{(#1, #2)}}
\newcommand{\vartwoempty}{x_{(\emptyset,\emptyset)}}
\newcommand{\C}{\calC}
\newcommand{\gauss}[3]{\gamma_{#1}\inparen{#2, #3}}
\newcommand{\sat}{{\sf sat}}
\newcommand{\bnu}{\bar{\nu}}
\DeclareMathOperator{\lpopt}{\operatorname{FRAC}}
\DeclareMathOperator{\sdpopt}{\operatorname{FRAC}}
\newcommand{\dist}{\ensuremath{\mathcal D}}
\newcommand{\dzero}{\dist^{(0)}}
\newcommand{\treedist}{\overline{\cal D}}
\newcommand{\component}[1]{\mathcal{C}\inparen{#1}}
\newcommand{\Hyp}{{\cal H}_k\inparen{m,n,n_0,\Gamma}}
\newcommand{\dmu}{\ensuremath{d_\mu}}
\newcommand{\cl}{\operatorname{cl}}
\newcommand{\clR}{\cl_R}
\newcommand{\ttg}{\mathtt{g}}
\newcommand{\girth}{\ttg}
\newcommand{\littleoh}{\operatorname{o}}
\newcommand{\qary}{\ensuremath{[q]}}
\DeclareMathOperator{\poly}{\operatorname{poly}}
\newcommand{\cliques}{\calC}
\newcommand{\ind}[1]{\1\insquare{#1}}
\newcommand{\parts}{\mathcal{P}}
\newcommand{\lsim}{\lesssim}
\newcommand{\triangles}{\Delta}
\newcommand{\barz}{\overline{z}}
\newcommand{\hatf}{\widehat{f}}
\newcommand{\hatg}{\widehat{g}}
\newcommand{\bff}{{\mathbf f}}
\newcommand{\bfgg}{{\mathbf g}}
\newcommand{\bfhatf}{{\bf \widehat{f}}}
\newcommand{\bfhatg}{{\bf \widehat{g}}}
\newcommand{\bfhatP}{{\bf \widehat{P}}}
\newcommand{\bfhatL}{{\bf \widehat{L}}}
\newcommand{\ftwo}[1]{\norm{\infty}{#1}}
\newcommand{\fsp}[1]{\norm{\mathrm{sp}}{#1}}
\newcommand{\tetmat}{\mathfrak{M}}
\newcommand{\indicator}[1]{\mathds{1}_{#1}}
\newcommand{\nbr}{\mathsf{N}}
\newcommand{\sfM}{\mathsf{M}}
\newcommand{\sfA}{\mathsf{A}}
\newcommand{\sfX}{\mathsf{X}}
\newcommand{\sfZ}{\mathsf{Z}}
\newcommand{\sfI}{\mathsf{I}}
\newcommand{\sfT}{\mathsf{T}}
\newcommand{\hA}{\widehat{\mathsf{A}}}
\newcommand{\hM}{\widehat{\mathsf{M}}}
\newcommand{\hN}{\widehat{\mathsf{N}}}
\newcommand{\hJ}{\widehat{\mathsf{J}}}
\newcommand{\pth}[2][\!]{#1\left({#2}\right)}%
\newcommand{\Mat}[1]{\mathop{\mathbf{Mat}}\pth{#1}}
\newcommand{\Vector}[1]{\mathop{\mathsf{Vec}}\pth{#1}}
\newcommand{\Tr}[1]{\mathop{\mathsf{Tr}}\pth{#1}}
\newcommand{\cardin}[1]{\left| {#1} \right|}%
\newcommand{\Sym}{\mathbb{S}}
\newcommand{\orbit}[1]{\mathscr{O}{\,\pth{#1}}}
\DeclareMathOperator*{\E}{\ensuremath{\mathbb{E}}}
\newcommand{\multichoose}[2]{\ensuremath{\left(\kern-.3em\left(\genfrac{}{}{0pt}{}{#1}{#2}\right)\kern-.3em\right)}}
\def\restrict#1{\raise-.5ex\hbox{\ensuremath|}_{#1}}
\newcommand{\emphi}[1]{\emph{\textbf{#1}}}
\newcommand{\barq}{\overline{q}}
\newcommand{\one}{\mathds{1}}
\newcommand{\real}[1]{\mathrm{Re}\pth{#1}}
\newcommand{\im}[1]{\mathrm{Im}\pth{#1}}
\newcommand{\SSS}{\mathbb{S}}
\newcommand{\1}{\ensuremath{\mathbb{1}}}
\newcommand{\spread}[1]{\mathop{\mathbf{spread}}{\pth{ #1 }}}
\newcommand{\spreadA}[1]{\mathop{\mathbf{spread}_A}{\pth{ #1 }}}
\newcommand{\spreadbA}[1]{\mathop{\mathbf{spread}_{\overline{A}}}{\pth{ #1 }}}
\newcommand{\spreadB}[1]{\mathop{\mathbf{spread}_B}{\pth{ #1 }}}
\newcommand{\spreadC}[1]{\mathop{\mathbf{spread}_C}{\pth{ #1 }}}
\newcommand{\spreadD}[1]{\mathop{\mathbf{spread}_D}{\pth{ #1 }}}
\newcommand{\spreadABCD}[1]{\mathop{\mathbf{spread}_{ABCD}}{\pth{ #1 }}}
\newcommand{\spreadbABCD}[1]{\mathop{\mathbf{spread}_{\overline{A}BCD}}{\pth{ #1 }}}
\newcommand{\hgm}[2]{\mathsf{M}_{\textrm{hyp}}^{#1}(#2)}
\newcommand{\overbar}[1]{\mkern 1.2mu\overline{\mkern-1.2mu#1\mkern-1.2mu}\mkern 1.2mu}
\newcommand{\mi}[1]{\alpha({#1})}
\newcommand{\PR}[1]{{\RR}_{#1}[x]}
\newcommand{\pr}[1]{\PR{#1}}
\newcommand{\NPR}[1]{{\RR}^+_{#1}[x]}
\newcommand{\npr}[1]{\NPR{#1}}
\newcommand{\FPR}[2]{{\pth{{\RR}_{#2}\![x]}\!_{#1}}\![x]}
\newcommand{\NFPR}[2]{{\pth{{\RR}^+_{#2}\![x]}\!_{#1}}\![x]}
\newcommand{\degmindex}[1]{{\NN}_{#1}^{n}}
\newcommand{\udmindex}[1]{{\NN}_{\!\leq #1}^{n}}
\newcommand{\degbmindex}[1]{\{0,1\}_{#1}^{n}}
\newcommand{\multif}[1]{F_{#1}}
\newcommand{\supp}[1]{\mathsf{S}(#1)}
\newcommand{\fold}[2]{\overbar{#1}_{#2}}
\newcommand{\collapse}[2]{\mathsf{C}_{#1}\pth{#2}}
\newcommand{\unfold}[1]{\mathsf{U}(#1)}
\newcommand{\hscsos}[2]{\Lambda_{#1}\left(#2\right)}
\newcommand{\hssos}[1]{\Lambda\left(#1\right)}
\newcommand{\fmax}[1]{#1_{\max}}
\newcommand{\bx}{\overbar{x}}
\def\bfx {{\bf x}}
\def\bfy {{\bf y}}
\def\bfz {{\bf z}}
\def\bfu{{\bf u}}
\def\bfv{{\bf v}}
\def\bfw{{\bf w}}
\def\bfa{{\bf a}}
\def\bfb{{\bf b}}
\def\bfc{{\bf c}}
\def\bfg{{\bf g}}
\newcommand{\cip}[1]{\ensuremath{\left\langle #1 \right\rangle}}
\newcommand{\mysmalldot}{\innerprod}
\newcommand{\onorm}[2]{\ensuremath{#1\to #2}\xspace}
\newcommand{\groth}[2]{\ensuremath{(#1,#2)}-Grothendieck\xspace}
\newcommand{\sep}[1]{\,\left|\, {#1} \right.}%
\title{Some Applications and Limitations of Convex Optimization Hierarchies for Discrete and Continuous
Optimization Problems.}
\author{Mrinalkanti Ghosh}
\let\Title\@title\makeatother
\begin{document}
\pagenumbering{roman}
\begin{titlepage}
    \begin{center}
        \Huge
        \textbf{\Title}
            
        \vspace{0.4cm}
        \LARGE \textbf{Mrinalkanti Ghosh}
        \vspace{0.8cm}
        
        \large

        A thesis submitted\\
        in partial fulfillment of the requirements for\\
        the degree of\\
        \vspace{0.8cm}
        Doctor of Philosophy in Computer Science\\
        \vspace{0.8cm}
        at the\\

        \vspace{0.8cm}
        TOYOTA TECHNOLOGICAL INSTITUTE AT CHICAGO\\
        Chicago, Illinois\\

        \vspace{0.8cm}
        August 2025\\
            
        \vspace{0.8cm}
        \vspace{0.8cm}

        Thesis Committee:\\
        \vspace{0.4cm}
        Yury Makarychev\\
        \vspace{0.4cm}
        Aaron Potechin\\
        \vspace{0.4cm}
        Madhur Tulsiani (Thesis Advisor)\\
    \end{center}
\end{titlepage}

\begin{abstract}
In this thesis, we study algorithmic applications and limitations of convex relaxation hierarchies for
approximating both discrete and continuous optimization problems. Another common theme in these results
is their connection to geometry. For the discrete optimization problem of CSPs, geometry plays a crucial
role in our lower-bound proof. For the continuous optimization problem of optimizing a polynomial over
the sphere, geometry appears in an interesting way (beyond the appearance of the sphere, a geometric
object, in the problem definition). Earlier works~\cite{KN08,HLZ10,So11} use sequences of diameter
estimations of convex bodies to design their approximation algorithms. For a related continuous
optimization problem of approximating matrix $p\to q$-norm, the problem definition itself is geometric.
Therefore, geometry also appears both in our proofs of hardness results and that of the approximation
algorithm. We describe the specific results in this thesis in the following paragraphs.

We show a dichotomy of approximability of constraint satisfaction problems (CSPs) by linear programming
(LP) relaxations: for every CSP, the approximation obtained by a basic LP relaxation, is no weaker than
the approximation obtained using relaxations given by $\Omega\inparen{\frac{\log n}{\log \log n}}$ levels
of the Sherali-Adams hierarchy on instances of size $n$.

The polynomial optimization problem we consider: given an n-variate degree-d homogeneous polynomial f
with real coefficients, compute a unit vector $x \in \mathbb{R}^n$ that maximizes $|f(x)|$.  We give
approximation algorithms for this problem that offer a trade-off between the approximation ratio and
running time: in $n^{O(q)}$ time (for $2d\divides q$ and $q \le n$), we get an approximation within
factor $O_d(n/q)^{d/2-1}$ for arbitrary polynomials, $O_d(n/q)^{d/4-1/2}$ for polynomials with
non-negative coefficients, and $(m/q)^{1/2}$ for sparse polynomials with m monomials. The approximation
guarantees are with respect to the optimum of the level-q sum-of-squares (SoS) SDP relaxation of the
problem,  even though our algorithms do not rely on actually solving the SDP. For polynomials with
non-negative coefficients, we prove an $\tilde{\Omega}(n^{1/6})$ gap for the degree-$4$ case, for a
related relaxation.

The $p \to q$-norm of a matrix $A \in \mathbb{R}^{m \times n}$ is defined as $\|A\|_{p\to q}~\defeq~\sup_{x
\in \mathbb{R}^n \setminus \{0\}} \frac{\|Ax\|_q}{\|x\|_p}$.  The regime when $p < q$, known as
\emph{hypercontractive norms}. The case with $p = 2$ and $q > 2$ was studied in~\cite{BBHKSZ12}.
They proved a hardness of approximation result based on the Exponential Time Hypothesis. We
prove the first NP-hardness result for approximating hypercontractive norms: for any $1< p < q < \infty$
with $2 \notin [p,q]$, $\|A\|_{p\to q}$ is hard to approximate within $2^{O((\log n)^{1-\eps})}$ assuming
$\NP \not \subseteq \BPTIME{2^{(\log n)^{O(1)}}}$. We also prove almost tight results for the case when
$p \geq q$ with $2 \in [q,p]$. For such $p$ and $q$, we show that $\|A\|_{p\to q}$ is \NP-hard to
approximate within any factor smaller than $1/(\gamma_{p^*} \cdot \gamma_q)$, where $p^* := p/(p-1)$ is the
dual norm of $p$. We also give an approximation algorithm with approximation ratio
$(1+\epsilon_0)/(\sinh^{-1}(1)\cdot \gamma_{p^*} \,\gamma_{r^*})$ for some fixed $\epsilon_0 \le
0.00863$. For the algorithm we develop a generalization of random hyperplane rounding using
H\"{o}lder-duals of Gaussian projection, which may prove useful for rounding $\ell_p$-norm bounded
vectors arising from convex relaxations ($p\ge 2$).
\end{abstract}

\chapter*{Acknowledgements}
I thank my advisor, Madhur Tulsiani, for his constant support and encouragement over the years. He has been
incredibly patient and magnanimous with his time and advice, both in academic and non-academic
matters. I also thank other members of my thesis committee, Yury Makarychev and Aaron
Potechin, who have been supportive throughout the process.

I am especially thankful to my co-authors Vijay Bhattiprolu and Euiwoong Lee for their friendship.
Working with them has been both fun and enlightening. I also thank Venkatesan Guruswami for lending
us his valuable time and expertise.

TTIC is a stimulating and nurturing place for graduate students. Plenty of workshops, seminars,
and social events promote collaboration in a relaxed environment. Much of the credit goes to administrative staff
and the leadership. Special thanks to Mary who is a smiling presence day and night at the institute. Thank
you, Adam, Amy, Chrissy, Erica, Jessica and everyone else who makes navigating the system easy. I also
thank David McAllester and Avrim Blum for fostering an excellent learning environment during their tenure
as Chief Academic Officer.

I was fortunate to partake in useful theory courses at TTIC and UChicago offered by
L\'aszl\'o Babai, Alexander Razborov, Andrew Drucker, Aaron Potechin, Julia Chuzhoy, Yury Makarychev, and Madhur Tulsiani.

The students at TTIC and Uchicago made up an enjoyable social circle in Chicago. I would like to thank
Haris, Shubhendu, Behnam, Somaye, Pooya, and Denis for welcoming me to TTIC and Uchicago. I am also grateful
to many friends I met along the way, including, Fernando, Leonardo, Chris, Goutham, Pushkar, Shubham, and
Sudarshan, among others. Special thanks to Rachit for organizing many of our trips
around Chicago, and for the numerous trips to Devon. Many thanks to Shashank and Akash for their constant
friendship and encouragement to complete this thesis. 

Last but not least, I thank my family for their love. My parents always supported and
encouraged me to pursue an academic career, and I dedicate this thesis to them.

\tableofcontents

\newpage
\pagenumbering{arabic}
\setcounter{page}{1}

\chapter{Introduction}
Time complexity classes are usually defined for decision problems, \ie~to decide whether a
binary string is in a given language. Some of the most well-known complexity classes are \P,
\NP, and \EXP. Separating \P from \NP\ is one of the central open questions of complexity
theory. In contrast, optimization problems require optimizing some quantity over a
set of feasible solutions. For example, in the \clique~problem, we want to find the size of the
largest clique in a given graph $G$. Such optimization problems can easily be turned into decision
problems. For \clique, the decision version is the language consisting of binary encodings of the pairs
$(G,k)$ where $G$ is a graph and $k\in\NN$ is the size of the largest clique in $G$.  Thus, notions of \P,
\NP, \NP-hardness, and \NP-completeness of decision problems can be ported to the optimization version.

The solution spaces for optimization problems come in two forms: discrete and continuous. For a discrete
optimization problem, the space of solutions is finite for any given instance, whereas for a continuous
optimization problem, the solution space is uncountable. The \clique\ problem is a
discrete optimization problem. Finding the largest eigenvalue of a Hermitian matrix is an example of a
continuous optimization problem (which is efficiently computable).

Many natural optimization problems of interest turn out to be \NP-complete or \NP-hard. As solving these
exactly would show $\P=\NP$, these are believed to be computationally intractable. The next best thing might be
to hope for an approximate solution to these optimization problems. For an instance $I$ of an
optimization problem, we use $\opt(I)\ge 0$ to denote the value of the optimal solution. For a
maximization problem, the value $v$ is said to be $\alpha$-approximation (approximation factor for us
will be $\ge 1$) if $v\cdot \alpha\le \opt(I)$ -- approximation factor for minimization problems may be
defined similarly. For many \NP-complete problems, efficient non-trivial approximations have been
achieved. So we may ask what are the limits to such approximation. For the \NP-complete \clique\ problem,
we observe that approximating the value within factor $\nfrac{n}{(n-1)}$ for graphs on $n$ vertices would
solve the problem exactly (and thus establishing an easy \NP-completeness of approximation).
Interestingly, although all NP-complete problems are equivalent in terms of exact solvability, they can
behave very differently when it comes to approximability. For example, \threesat\ is shown to be 
\NP-hard to approximate within factor better than $\nfrac{8}{7}$ in~\cite{Hastad01}. Whereas, for
the \independentset problem, the result in~\cite{Hastad99} shows hardness of approximation within
factor $n^{1-\varepsilon}$ for any $\varepsilon>0$.

The problem of establishing \NP-completeness of approximation for various optimization problems turned
out to be a very interesting and fruitful area of research.  A powerful tool, the Probabilistically
Checkable Proof (PCP) was developed to prove \NP-hardness of approximation. PCP has a long history, and an
overview can be found in~\cite{Hastad01}. PCPs and their
variants have been used to prove optimal inapproximability of some problems, \ie\ one can prove
\NP-hardness of $\alpha$-factor approximation. Simultaneously, we have efficient algorithms, perhaps
probabilistic, to compute $\alpha+\varepsilon$-approximation for $\varepsilon>0$ (the runtime may or may
not depend on $\varepsilon$). The problem \threesat\ is such an example: while it is \NP-hard to
approximate within factor $\nfrac{8}{7}$, for any constant $\varepsilon>0$, there is an efficient,
randomized algorithm which approximates the value within factor $\nfrac{8}{7}+\varepsilon$. However, for many
other optimization problems there are significant gaps between the approximation factors achievable by
efficient algorithms and the approximation factors for which we can prove \NP-hardness. We refer to this
difference as the approximation gap. Another approach was proposed in~\cite{Khot02UGC}, known as
Unique Games Conjecture (UGC), for tightening of these approximation gaps for some optimization problems.
Here the starting point is to assume \NP-hardness of so-called Unique Games and deduce hardness of
approximation. This has shown success in improving the hardness factor. Known \NP-hardness of
approximation for \maxcut\ is given in~\cite{Hastad01} to be $\nfrac{17}{16}$ (this has been improved in
some special cases to $\nfrac{11}{8}$ in~\cite{HHMOW14}); these are not tight as the beautiful algorithm
of Goemans-Williamson~\cite{GoemansW95} achieves approximation factor $\approx0.878$. However, by using UGC,
tight approximation hardness was shown in~\cite{KKMO04}, matching the approximation factor of the
aforementioned algorithm. Also, for CSPs, optimal approximation hardness
has been obtained in~\cite{Raghavendra08}.  Still there are optimization problems for which approximation
gap remains. Two other notable assumptions that have been used for starting point of hardness reductions
for better approximation hardness are: Feige's R3SAT hypothesis proposed in~\cite{Feige02} and
Exponential Time Hypothesis (ETH) proposed in~\cite{IP01}. We note here that Feige's R3SAT hypothesis
assumes the hardness of randomly generated instances -- that is, average-case hardness as opposed to
worst-case hardness.

We may consider another weakening of requirements for hardness results:
instead of proving hardness for \emph{all} efficient algorithms, we restrict ourselves to
a large family of efficient algorithms. In other words, can we close the approximation gap for some
class of algorithms $\calA$ for some of these optimization problems? Note that such result would be
unconditional. In particular, a tight result for an optimization problem $O$ will state that there is an algorithm $A\in
\calA$ which achieves $\alpha$-approximation whereas no algorithm in $\calA$ can achieve approximation
factor $\alpha-\varepsilon$ for (some, all) $\varepsilon>0$. Some reasonable classes of such algorithms
are convex relaxation hierarchies.

As the name suggests, the convex relaxation hierarchies are sequences of successively stronger
convex relaxations. We will consider two variants of these hierarchies: the \lp~relaxation hierarchy,
\aka~the Sherali-Adams hierarchy and the \sdp~relaxation hierarchy, \aka~the Sum of Square (\sos)
hierarchy. The levels of these hierarchies are denoted $\sa{d}$ and $\sos{d}$, respectively. For
discrete optimization problems, these relaxations may be viewed as defining ``local distributions''. We
will skip the details here. Details of this relaxation for the discrete optimization problem of CSP may
be found in~\cref{sec:csp:relaxations}. Details for the \sos{d} relaxation for a continuous optimization
problem is given in~\cref{sec:poly:sos:prelims}. The constraints of the relaxation $\sa{d+1}$ contains the
constraints of $\sa{d}$ for any $d>0$. The same is true for the \sos~relaxation as well. So the higher levels
of these hierarchies provide tighter relaxations and hence provide a (possibly) better approximation factor.
However, the downside is higher levels of the hierarchy require larger description sizes (which may, in turn,
require more time to solve). These hierarchies form powerful classes of algorithms: most efficient
algorithms constructed (with some known exceptions) are known to be captured by one of these hierarchies.

Unconditional hardness results for \sa~or \sos~for an optimization problem thus reduce to 
demonstrating an integrality gap at certain levels of the hierarchy. That is, showing the existence of an
instance $I$ of the optimization problem such that the optimal value for $I$ is $s$ and the $d$-level
relaxation has value $c$. For maximization problems, this proves integrality gap factor
$\nfrac{c}{s}$; this is not quite the approximation factor. But this is used to show evidence of
hardness approximation for these convex relaxation hierarchies.

As these hierarchies constitute some of the most powerful classes of algorithms, we can approach the
problem of diminishing the approximation gap from the other end. That is, we may want to analyze the
performance of these convex relaxations and their rounding to provide approximation algorithms for
optimization problems. In~\cite{BKS17}, the authors designed a new algorithm
using \sos~hierarchy to give an approximation factor which was previously unknown.

In this thesis, we establish unconditional hardness for the discrete optimization
problem of CSP. For continuous optimization problems, we present an approximation algorithm for polynomial
optimization problem using \sos~hierarchy. For another, related, continuous optimization problem of
matrix norm estimation, we show some new and improved conditional hardness of approximations. We also show
almost tightness of our hardness by designing a convex relaxation based algorithm for some parameter
regime. The rest of the document is divided in sections for these problems. Each section provides some
details of our results and some natural questions that arise.

An interesting aspect of these problems and results is the role of geometry. In case of the matrix norm
optimization problem, the connection to geometry of Banach spaces is direct and apparent. The proofs make
essential use of embedding and projection results in Banach spaces, which, in retrospect, is not
surprising. The connections to geometry in the cases of CSPs and polynomial optimization are more subtle
and are described below.

For CSP, construction of the feasible solution in the soundness case requires building local
distributions (described in~\cref{sec:csp:relaxations}). Our approach constructs these local
distributions while remaining oblivious to the parts of the instance outside the neighborhood of the
variables of interest. Within this neighborhood, we partition the instance into forests and then randomly
satisfy each tree to obtain a local distribution (see~\cref{sec:csp:decomposition}). To ensure
consistency of the local distributions, the partitioning procedure must itself be local and oblivious to
``far variables'' (even when the set under consideration may contain them). We achieve this by using an
$\ell_2$-embeddable metric (from~\cite{CMM07Metric}) on the neighborhood and using a random
unitary-invariant partition of the space to generate the forest decomposition. A crucial advantage of
this process is that it allows us to ``step out of the instance hypergraph'' and partition it externally,
as it resides on an $\ell_2$-ball. This external viewpoint ensures the required obliviousness, which is
essential for proving local consistency of the distributions constructed. The same metric was also used
more directly in~\cite{CMM09SACSP} to establish that the \maxcut~problem has integrality gap
$\nfrac{1}{2}$ for polynomial levels of \sa.

For polynomial optimization, the geometric connection arises in a subroutine of one of the works we
extend. A common approach to polynomial optimization (over the sphere) is to decouple the variables,
reducing the problem to multilinear polynomial optimization (where each multilinear component lies on the
sphere). This is achieved by examining a monomial and replacing each occurrence of a variable with a
fresh copy of a new variable (see~\cref{sec:poly:decoupling}). This step introduces a multiplicative
factor of roughly $d^d$ to the approximation ratio for a degree $d$ homogeneous polynomial.
In~\cite{So}, the authors then recursively approximate the multilinear optimization problem, with a
crucial step in their approach relying on tools from geometry. In each recursive step of~\cite{So}, the
problem is reduced to approximating the diameter of a related convex body. By applying known tools from
convex geometry, the authors solve the inductive step, leading to an overall approximation for the
multilinear problem and, ultimately, for polynomial optimization. The connection to convex body
estimation also appears in the works of~\cite{KN08}.

\newcommand{\expop}{\mathbb{E}}
\newcommand{\B}{\zo}
\chapter{Lifting LP Lower Bounds for CSPs to Sherali-Adams}
\section{The Result in Context: Background and Overview}\label{sec:csp:intro}

Given a finite alphabet $\qary = \inbraces{0, \ldots, q-1}$ and a predicate $f: \qary^k \to \B$, an
instance of the problem $\maxkcsp(f)$ consists of (say) $m$ constraints over a set of $n$ variables
$x_1, \ldots, x_n$ taking values in the set $[q]$. 
Each constraint $C_i$ is of the form
$f(x_{i_1}+b_{i_1}, \ldots, x_{i_k} + b_{i_k})$ for some $k$-tuple of variables $(x_{i_1}, \ldots
x_{i_k})$ and $b_{i_1}, \ldots, b_{i_q}  \in \qary$, and the addition is taken to be modulo
$q$. We say an assignment $\sigma$ to the variables satisfying the constraint $C_i$ if
$C_i(\sigma(x_{i_1}), \ldots, \sigma(x_{i_k})) = 1$. Given an instance $\Phi$ of the problem, the
goal is to find an assignment $\sigma$ to the variables satisfying as many constraints as possible.
The approximability of the $\maxkcsp(f)$ problem has been extensively studied for various predicates $f$
(see \eg~\cite{Hastad07:survey} for a survey), and special cases include several interesting and
natural problems such as \maxthreesat, \maxthreexor and \maxcut.

A topic of much recent interest has been the efficacy of Linear Programming (LP) and Semidefinite
Programming (SDP) relaxations. For a given instance $\Phi$ of $\maxkcsp(f)$, let $\opt(\Phi)$ denote
the \emph{fraction} of constraints satisfied by an optimal assignment, and let $\lpopt(\Phi)$ denote
the value of the convex (LP/SDP) relaxation for the problem. Then, the performance guarantee of this
algorithm is given by the {\deffont integrality gap} which equals the supremum of
$\frac{\lpopt(\Phi)}{\opt(\Phi)}$, over all instances $\Phi$. 

The study of unconditional lower bounds for general families of LP relaxations was initiated by Arora,
\Bollobas~and \Lovasz~\cite{AroraBL02} (see also~\cite{AroraBLT06}). They studied the
\Lovasz-Schrijver~\cite{LoS91} LP hierarchy and proved lower bounds on the integrality gap for
\vertexcover~(their technique also yields similar bounds for \maxcut).  De la Vega and
Kenyon-Mathieu~\cite{delaVegaK07} and Charikar, Makarychev and Makarychev~\cite{CMM09SACSP} proved a
lower bound of $2-o(1)$ for the integrality gap of the LP relaxations for \maxcut~given respectively by
$\Omega(\log \log n)$ and $n^{\Omega(1)}$ levels  of the Sherali-Adams LP hierarchy~\cite{SA90}.  Several
follow-up works have also shown lower bounds for various other special cases of the \maxkcsp~problem,
both for LP and SDP hierarchies~\cite{AroraAT05, Schoenebeck08, Tulsiani09, RaghavendraS09,
BenabbasGMT12, BarakCK15, KothariMOW17}.

A recent result by Chan \etal~\cite{ChanLRS13} shows a connection between strong lower bounds for the
Sherali-Adams hierarchy, and lower bounds on the size of LP extended formulations for the corresponding
problem.  In fact, their result proved a connection not only for a lower bound on the worst case
integrality gap, but for the entire \emph{approximability curve}. We say that $\Phi$ is 
\deffont{$(c,s)$-integrality gap instance} for a relaxation of $\maxkcsp(f)$, if we have $\lpopt(\Phi) \geq c$ and
$\opt(\Phi) < s$. And we say that $\Phi$ is \deffont{$(c,s)$-approximable} by a relaxation of
$\maxkcsp(f)$, if for instances with $\opt(\phi)<s$, we have $\lpopt(\Phi) \le c$. They showed that for
any fixed $t \in \N$, if there exist $(c,s)$-integrality gap instances of size $n$ for the relaxation
given by $t$ levels of the Sherali-Adams hierarchy, then for all $\eps > 0$ and sufficiently large $N$,
there exists a $(c -\eps,s+\eps)$ integrality gap instance of size (number of variables) $N$, for any
linear extended formulation of size at most $N^{t/2}$.  They also give a tradeoff when $t$ is a function
of $n$. This was recently improved by Kothari \etal~\cite{KothariMR16} and we describe the improved
tradeoff later.

We strengthen the above results by showing that for all $c,s \in [0,1]$,  
$(c,s)$-integrality gap instances for a ``basic LP'' can be used to construct $(c-\eps,s+\eps)$ integrality
gap instances for $\Omega_{\eps}\inparen{\frac{\log n}{\log \log n}}$ levels of the Sherali-Adams
hierarchy. The basic LP uses only a subset of the constraints in the relaxation given by $k$ levels
of the Sherali-Adams hierarchy for $\maxkcsp(f)$. In particular, this shows that a lower bound on
the integrality gap for even the basic LP, implies a similar lower bound on the integrality gap of any
polynomial size extended formulation. 
This can also be viewed as a dichotomy result showing that for any predicate $f$, either
$\maxkcsp(f)$ is $(c,s)$-approximable by the \emph{basic LP relaxation} (which is of size linear in
the size of the instance) or for all $\eps > 0$, a $(c-\eps, s+\eps)$ cannot be achieved by
\emph{any polynomial sized LP extended formulation}.
We note that both the above results 
and our result apply to all $f, q$ and all $c,s \in [0,1]$.

\paragraph{Comparison with (implications of) Raghavendra's UGC hardness result.} 
A remarkable result by Raghavendra \cite{Raghavendra08} shows that a $(c,s)$-integrality gap
instance for a ``basic SDP'' relaxation of $\maxkcsp(f)$ implies hardness of distinguishing
instances $\Phi$ with $\opt(\Phi) < s$ from instances with $\opt(\Phi) \geq c$, assuming the Unique
Games Conjecture (UGC) of Khot \cite{Khot02:unique}. The basic SDP considered by
Raghavendra involves moments for all pairs of variables, and all subsets of variables included in a
constraint. 
The basic LP we consider is weaker than this SDP and does not contain the positive
semidefiniteness constraint.

Combining Raghavendra's result with known
constructions of integrality gaps for Unique Games by Raghavendra and Steurer~\cite{RaghavendraS09},
and by Khot and Saket~\cite{KhotS09}, one can obtain a result qualitatively similar to ours, for the
mixed hierarchy. In particular, a $(c,s)$ integrality gap for the basic SDP implies a
$(c-\eps,s+\eps)$ integrality gap for $\Omega((\log \log n)^{1/4})$ levels of the mixed hierarchy. 

Note however, that the above result is incomparable to our result, since it starts with stronger
hypothesis (a basic SDP gap) and yields a gap for the mixed hierarchy as opposed to the
Sherali-Adams hierarchy. While the above can also be used to derive lower bounds for linear extended
formulations, one needs to start with an SDP gap instance to derive an LP lower bound. The basic SDP
is known to be provably stronger than the basic LP for several problems including various
2-CSPs. Also, for the worst case $f$ for $q=2$, the integrality gap of the basic SDP is
$O(2^k/k)$~\cite{CMM07UGCCSP}, while that of the basic LP is $2^{k-1}$.

A recent result by Khot and Saket~\cite{KhotS15} shows a connection between the integrality
gaps for the basic LP and those for the basic SDP.
They prove that, assuming the UGC, a $(c,s)$ integrality gap instance for the basic LP implies an
NP-hardness of distinguishing instances $\Phi$ with $\opt(\Phi) \geq \Omega\inparen{\frac{c}{k^3
    \cdot \log(q)}}$ from instances with $\opt(\Phi) \leq 4s$.
Their result also shows that a $(c,s)$ integrality gap instance for the basic LP can be used to
produce  a $\inparen{\Omega\inparen{\frac{c}{k^3 \cdot \log(q)}}, 4s}$ integrality gap instance for
the basic SDP,  and hence for $\Omega({(\log \log n)}^{1/4})$ levels of the mixed hierarchy.

\paragraph{Other related work.}
The power of the basic LP for solving valued CSPs \emph{to optimality} has been studied in several
previous works. These works consider the problem of minimizing the penalty for unsatisfied
constraints, where the penalties take values in $\Q \cup \inbraces{\infty}$. Also, they study the
problem not only in terms of single predicate $f$, but rather in terms of the constraint
language generated by a given set of (valued) predicates. 

It was shown by Thapper and \v{Z}ivn\'{y} \cite{ThapperZ13} that when the penalties are
finite-valued, if the problem of finiding the optimum solution cannot be solved by the basic LP,
then it is NP-hard. Kolmogorov, Thapper and \v{Z}ivn\'{y} \cite{KolmogorovTZ15} give a
characterization of CSPs where the problem of minimizing the penalty for unsatisfied constraints can
be solved \emph{exactly} by the basic LP. Also, a recent result by Thapper and \v{Z}ivn\'{y}
\cite{ThapperZ16} shows the valued CSP problem for a constraint language can be solved to optimality
by a bounded number of levels of the Sherali-Adams hierarchy if and only if it can be solved by a
relaxation obtained by augmenting the basic LP with contraints implied by three levels of the
Sherali-Adams hierarchy. However, the above works only consider the case when the LP gives an exact
solution, and do not focus on approximation.

The techniques from~\cite{CMM09SACSP} used in our result, were also extended by Lee~\cite{Lee15}
to prove a hardness for the Graph Pricing problem. Kenkre \etal~\cite{KenkrePPS15} also applied these to
show the optimality of a simple LP-based algorithm for Digraph Ordering.

\subsubsection*{Our results}
Our main result is the following.
\begin{restatable}{reptheorem}{maintheoremcsp}\label{thm:csp:main}
Let $f: \qary^k \to \B$ be any predicate. Let $\Phi_0$ be a $(c,s)$ integrality gap instance for basic
LP relaxation of \maxkcsp($f$). Then for every $\eps > 0$, there exists $c_{\eps} > 0$ 
such that for infinitely many $N \in \N$, there exist $(c-\eps, s+\eps)$ integrality gap
instances of size $N$ for the LP relaxation given by $c_{\eps} \cdot \frac{\log N}{\log \log N}$
levels of the Sherali-Adams hierarchy.
\end{restatable}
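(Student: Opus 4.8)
The plan is to run the ``gap amplification'' strategy of de la Vega--Kenyon-Mathieu and Charikar--Makarychev--Makarychev~\cite{delaVegaK07,CMM09SACSP}, but starting from the weakest possible object -- a basic-LP gap for $\Phi_0$ -- and ending at $\Omega_\eps(\log N/\log\log N)$ levels of Sherali--Adams. Concretely: (i) blow $\Phi_0$ up into a large, locally tree-like instance $\Phi_N$; (ii) show the blow-up keeps $\opt$ below $s+\eps$; (iii) use the high girth to turn the basic-LP solution of $\Phi_0$ into a consistent family of Sherali--Adams local distributions for $\Phi_N$ of value at least $c-\eps$.

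For step (i) I would take $\Phi_N$ to be a random ``lift'' of $\Phi_0$: split the $N$ variables into $N/n_0$ blocks, each a labelled copy of $V(\Phi_0)$, and for each constraint of $\Phi_0$ add many copies, each landing on an independent uniformly random choice of blocks for its $k$ variables (same predicate, same shifts). Taking the multiplicity large enough that $\Phi_N$ has $m_N=\Theta_\eps(N\log q)$ constraints and then discarding the $o(m_N)$ constraints sitting on short cycles, the constraint hypergraph of what remains has girth $\Omega_\eps(\log N/\log\log N)$; this girth is what eventually caps the number of levels. For step (ii), fix an assignment $\sigma$ of $\Phi_N$ and let $\mu_v$ be the empirical distribution of the values $\sigma$ puts on the copies of $v\in V(\Phi_0)$. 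Since each lifted constraint sees independent random blocks, the fraction of constraints of $\Phi_N$ satisfied by $\sigma$ concentrates, over the randomness of the lift, around $\val_{\Phi_0}(\otimes_v\mu_v)$; being multilinear in the $\mu_v$, this is maximized at a genuine assignment, hence is at most $\opt(\Phi_0)<s$. A union bound over all $q^N$ assignments -- which is exactly why we needed $m_N=\Omega_\eps(N\log q)$ -- gives $\opt(\Phi_N)<s+\eps$ with high probability, so we can fix one good $\Phi_N$. (Constraints in which a variable repeats need a small separate argument, but behave the same way.)

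Step (iii) is the heart of the matter. Basic-LP feasibility of $\Phi_0$ gives, for every constraint $C$, a distribution $p_C$ on $[q]^{\,\mathrm{vars}(C)}$, all mutually consistent on shared variables, with objective at least $c$; pulling these back gives consistent distributions $p_{C'}$ for every constraint $C'$ of $\Phi_N$. I want, for every set $S$ of at most $t=\Omega_\eps(\log N/\log\log N)$ variables, a distribution $\mu_S$ on $[q]^S$, with the $\mu_S$ mutually consistent and the constraint objective at least $c-\eps$. Because $\Phi_N$ has girth larger than a fixed multiple of $t$, a bounded neighbourhood of any such $S$ spans a \emph{forest} of constraints; on a forest one can sample an assignment by rooting each tree, drawing each root from its single-variable marginal, and propagating downward through the hyperedges via the conditionals read off from the $p_{C'}$; set $\mu_S$ to be the law of the restriction of this sample to $S$. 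Since every constraint is a hyperedge contained in a single tree, the induced marginal on $\mathrm{vars}(C')$ is exactly $p_{C'}$, so the objective equals $\sum_{C'}\val(p_{C'})/m_N$, which is at least $c$ up to the negligible fraction of deleted constraints.

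The delicate point -- and the step I expect to be the real obstacle -- is making the family $\{\mu_S\}$ genuinely \emph{consistent}: the rooting of the local forest around $S$ must be determined by that neighbourhood alone, oblivious to $S$, or else the propagation around $S$ and around a superset $S'\supseteq S$ disagree on $S$. Here I would do exactly what the overview outlines: equip the local neighbourhood with the $\ell_2$-embeddable metric of~\cite{CMM07Metric}, so it sits on a Euclidean ball, fix once and for all a rotationally (unitarily) invariant random partition of that space, and let it induce the rooted-forest decomposition; rotation-invariance plus the locality of the metric make the decomposition, restricted to any neighbourhood, depend only on that neighbourhood, so propagation becomes a genuine Markov process and basic-LP marginal-consistency forces $\mu_S$ to be the common restriction of every $\mu_{S'}$ with $S'\supseteq S$. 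Verifying that this ``step outside the hypergraph and cut it in $\ell_2$'' really is oblivious, and that the resulting $\mu_S$'s are \emph{exactly} (not approximately) consistent, is the technical core; the remaining work is tuning parameters so that the number of levels $c_\eps\cdot\log N/\log\log N$ falls out of the girth of the construction.
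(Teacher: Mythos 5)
Your proposal tracks the paper's argument closely: lift $\Phi_0$ into a random $n_0$-partite instance, argue soundness by concentration plus a union bound over $q^N$ assignments, and build the Sherali--Adams solution by propagating the basic-LP constraint distributions along the locally tree-like hypergraph, with consistency enforced by the CMM $\ell_2$-embeddable metric and a unitarily-invariant separating decomposition of Euclidean space. Those are exactly the three ingredients the paper uses, so the strategy is the same.

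Two small corrections to the bookkeeping. First, the girth of the cleaned-up lift is $\Omega_\eps(\log N)$, not $\Omega_\eps(\log N/\log\log N)$: deleting $n^{2/3}$ edges already kills all cycles of length $\delta\log n$. The $\log\log N$ loss is \emph{not} caused by the girth but by the $\ell_2$ partitioning step: the CMM embedding of a size-$t$ set sits in dimension $\Theta(t)$ (dimension reduction would break the coupling that gives consistency), the separation probability of the $\ell_2$ decomposition scales like $\sqrt{\text{dimension}}\cdot(\text{distance}/\Delta)$, and the embedding is only valid for scale $\mu \gtrsim (\log t+\log\log n)/\log n$. Keeping the cut-edge fraction $\Theta(k\sqrt{\mu t})$ below $\eps$ then forces $t = O(\log n/\log\log n)$. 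Second, the basic LP only guarantees marginal consistency on \emph{singletons}, not on arbitrary shared variable sets; but this is precisely what tree propagation needs, since after removing short cycles any two hyperedges of the lift intersect in at most one vertex.
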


Combining the above with the connection between Sherali-Adams gaps and extended formulations 
by~\cite{ChanLRS13,KothariMR16} yields the following corollary:
\begin{restatable}{repcorollary}{sizecorollary}\label{cor:csp:size-lower-bound}
Let $f: \qary \to \B$ be any predicate. Let $\Phi_0$ be a $(c,s)$ integrality gap instance for basic
LP relaxation of \maxkcsp($f$). Then for every $\eps > 0$, there exists $c'_{\eps} > 0$ 
such that for infinitely $N \in \N$,  there exist $(c-\eps, s+\eps)$ integrality gap instances of
size $N$,  for every linear extended formulation of size at most $N^{c'_{\eps} \cdot
\frac{\log N}{\log \log N}}$.
\end{restatable}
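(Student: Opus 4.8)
The plan is to derive \cref{cor:csp:size-lower-bound} by combining \cref{thm:csp:main} with the Sherali--Adams-to-extended-formulation translation of Chan \etal~\cite{ChanLRS13}, in the sharpened form of Kothari \etal~\cite{KothariMR16}. The first step is to apply \cref{thm:csp:main} with error parameter $\eps/2$ in place of $\eps$: from the fixed $(c,s)$ basic-LP gap instance $\Phi_0$ this yields, for infinitely many $n$, a $(c-\eps/2,\,s+\eps/2)$ integrality gap instance of $\maxkcsp(f)$ of size $n$ for the relaxation given by $t(n) \defeq c_{\eps/2}\cdot\frac{\log n}{\log\log n}$ levels of the Sherali--Adams hierarchy.

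The second step is to feed this family of Sherali--Adams gaps, with its slowly growing level $t(n)$, into the \cite{ChanLRS13,KothariMR16} machinery. That translation converts a $(c',s')$ integrality gap at level $t$ of Sherali--Adams on $n$-variable instances into, for every $\delta>0$ and every sufficiently large $N$ that is polynomially bounded in $n$, a $(c'-\delta,\,s'+\delta)$ integrality gap instance of size $N$ for every linear extended formulation of size at most $N^{\Omega(t)}$; crucially, the tradeoff holds not only for constant $t$ but also when $t = t(n)$ grows (slowly) with $n$, which is the regime here. Applying it with $\delta = \eps/2$ to the gaps from the first step, and noting that $N = n^{O(1)}$ forces $\log N = \Theta(\log n)$ and $\log\log N = \Theta(\log\log n)$, so that the level satisfies $t(n) = \Theta\!\left(\frac{\log N}{\log\log N}\right)$, we obtain $(c-\eps,\,s+\eps)$ integrality gap instances of size $N$, for infinitely many $N$, against every linear extended formulation of size at most $N^{c'_{\eps}\cdot\frac{\log N}{\log\log N}}$ for a suitable constant $c'_{\eps}>0$. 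That is exactly the statement of \cref{cor:csp:size-lower-bound}.

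Since \cref{thm:csp:main} is doing all the real work, the remaining points are bookkeeping, but two of them deserve attention. First, one must invoke the ``$t$ a function of $n$'' version of the \cite{ChanLRS13,KothariMR16} reduction rather than its constant-level form, and verify that $t(n) = \Theta(\log n/\log\log n)$ is slow enough to meet that version's hypotheses (it is, since those hypotheses only require $t(n) = n^{o(1)}$). Second, one must confirm that the instance-size blowup of the reduction is genuinely polynomial and not quasi-polynomial: if $N$ were, say, $n^{t(n)}$, then the exponent $t(n)$ would translate into a strictly sub-$\frac{\log N}{\log\log N}$ function of the final size $N$, and the stated bound would weaken — it is precisely the polynomial size dependence furnished by~\cite{ChanLRS13,KothariMR16} that makes the exponent $\frac{\log N}{\log\log N}$ survive. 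No new ideas beyond this quantifier chasing are required.
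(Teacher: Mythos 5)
Your proposal is correct and follows essentially the same route as the paper: apply \cref{thm:csp:main} to obtain Sherali--Adams gaps at level $\Theta(\log n/\log\log n)$, then pass them through the translation of Kothari~\etal~\cite{KothariMR16} (\cref{thm:csp:KothariMR16}), noting that the polynomial size blowup $N = n^{O(1)}$ preserves $\log N/\log\log N = \Theta(\log n/\log\log n)$. The only minor difference is that you split the error budget as $\eps/2 + \eps/2$, whereas the form of \cref{thm:csp:KothariMR16} quoted in the paper transfers the $(c,s)$-gap without any additional loss, so the paper applies \cref{thm:csp:main} directly with $\eps$; your more conservative accounting is still valid and in fact necessary if one only invokes the weaker \cite{ChanLRS13} translation.
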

As an application of our methods, we also simplify and strengthen the approximation resistance
results for LPs proved by Khot \etal~\cite{KhotTW14}. They studied predicates $f: \B^k \to \B$ and
provided a necessary and sufficient condition for the predicate to be {\deffont strongly
  approximation resistant} for the Sherali-Adams LP hierarchy. We say a predicate is strongly
approximation resistant if for all $\eps > 0$, it is hard to distinguish instances $\Phi$ for which
$\abs{\opt(\Phi) - \Ex{x \in \B^k}{f(x)}} \leq \eps$ from instances with $\opt(\Phi) \geq
1-\eps$. In the context of the Sherali-Adams hierarchy, they showed that when this condition is
satisfied, there exist instances $\Phi$ satisfying $\abs{\opt(\Phi) - \Ex{x \in
    \B^k}{f(x)}} \leq \eps$ and $\lpopt(\Phi) \geq 1-\eps$, where $\lpopt(\Phi)$ is the value of the
relaxation given by $O_{\eps}(\log \log n)$ levels of the Sherali-Adams hierarchy. We strengthen
their result (and provide a simpler proof) to prove the following.
\begin{theorem}\label{thm:csp:ktw}
Let $f: \B^k \to \B$ be any predicate satisfying the condition for strong approximation resistance
for LPs, given by~\cite{KhotTW14}. Then for every $\eps > 0$, there exists $c_{\eps} > 0$ 
such that infinitely many $N \in \N$, there exists an instance $\Phi$ of $\maxkcsp(f)$ of size $N$, satisfying
\[
  \abs{\opt(\Phi) - \Ex{x \in \B^k}{f(x)}}~\leq~\eps \qquad \lpopt(\Phi)~\geq~1-\eps \mcom
\]
where $\lpopt(\Phi)$ is the value of the relaxation given by $c_{\eps} \cdot \frac{\log N}{\log \log
  N}$ levels of the Sherali-Adams hierarchy.
\end{theorem}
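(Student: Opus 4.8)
The plan is to derive \cref{thm:csp:ktw} as an application of the main lifting theorem \cref{thm:csp:main}, so that almost all of the work reduces to producing a suitable \emph{basic LP} integrality gap instance of \maxkcsp($f$). Write $\mu \defeq \Ex{x\in\B^k}{f(x)}$ and assume $f\not\equiv 1$ (the statement being trivial otherwise). Observe first that for any instance $\Phi$ in which each constraint involves $k$ distinct variables, a uniformly random assignment satisfies each constraint with probability exactly $\mu$, so $\opt(\Phi)\ge\mu$. Hence it suffices to construct, for every $\eps>0$ and infinitely many $N\in\N$, an instance $\Phi$ of size $N$ with $\opt(\Phi)<\mu+\eps$ and $\lpopt(\Phi)\ge 1-\eps$, where $\lpopt$ is the value of $c_\eps\cdot\frac{\log N}{\log\log N}$ levels of Sherali-Adams; the inequality $\abs{\opt(\Phi)-\mu}\le\eps$ then follows from $\opt(\Phi)\ge\mu$.

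The first step is to show that the condition of \cite{KhotTW14} produces, for every $\delta>0$, a $(1-\delta,\ \mu+\delta)$ integrality gap instance $\Phi_0$ for the basic LP of \maxkcsp($f$). I would take $\Phi_0$ to be a random instance on $n_0=n_0(\delta,k)$ variables (a sufficiently large constant) with $m_0=\Theta(n_0/\delta^2)$ constraints, each formed from an independent uniformly random ordered tuple of $k$ distinct variables and an independent uniformly random shift $b\in\B^k$. Soundness is the usual random-CSP argument: the number of constraints satisfied by a fixed assignment is a sum of $m_0$ independent $\{0,1\}$ variables each equal to $1$ with probability $\mu$, so by a Chernoff bound together with a union bound over the $2^{n_0}$ assignments we get $\opt(\Phi_0)<\mu+\delta$ with high probability. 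For the LP value, the role of the \cite{KhotTW14} condition is precisely to supply the local distributions: it provides a \emph{balanced} distribution on $\B^k$ (uniform single-coordinate marginals) that is supported, up to $\delta$, on $f^{-1}(1)$. After deleting the (negligibly few, for $n_0$ a large constant) constraints of $\Phi_0$ that share two or more variables with some other constraint --- so that the constraint hypergraph becomes linear and the only consistency constraints of the basic LP shared between distinct constraints are the single-variable marginals --- we put a suitably shifted copy of this distribution on each constraint; since shifts preserve uniform single marginals, this is a feasible basic LP solution of value at least $1-\delta$, whence $\lpopt(\Phi_0)\ge 1-\delta$. (One could also bypass this construction and invoke \cite{KhotTW14} as a black box: their condition already yields an $O(\log\log n_0)$-level Sherali-Adams gap instance with these parameters, which is in particular a basic LP gap instance once $\log\log n_0\ge k$.)

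Given $\Phi_0$, I would apply \cref{thm:csp:main} with $(c,s)=(1-\delta,\ \mu+\delta)$ (legitimate for $\delta<(1-\mu)/2$): it yields, for every $\eta>0$, a constant $c_\eta>0$ such that for infinitely many $N$ there are $(1-\delta-\eta,\ \mu+\delta+\eta)$ integrality gap instances $\Phi$ of size $N$ for the relaxation given by $c_\eta\cdot\frac{\log N}{\log\log N}$ levels of Sherali-Adams. Choosing $\delta=\eta=\eps/2$ gives $\lpopt(\Phi)\ge 1-\eps$ and $\mu\le\opt(\Phi)<\mu+\eps$, which is exactly the conclusion of \cref{thm:csp:ktw} with $c_\eps\defeq c_{\eps/2}$.

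The main obstacle is the completeness half of the basic-LP step. The criterion of \cite{KhotTW14} is a \emph{characterization} of strong approximation resistance and is therefore itself stated in a limiting ($\delta\to0$) form, so one has to unpack it into a concrete, finitely supported, balanced distribution on $\B^k$ that is $\delta$-close to being supported on $f^{-1}(1)$, and then check that --- once the random constraint hypergraph has been linearized --- this distribution induces a solution satisfying \emph{every} consistency constraint of the basic LP, not merely the single-variable ones. Everything else (the Chernoff/union-bound soundness estimate and the parameter bookkeeping) is routine, and the genuinely hard content, namely passing from a basic LP gap to $\Theta(\log N/\log\log N)$ levels of Sherali-Adams, is already supplied by \cref{thm:csp:main}.
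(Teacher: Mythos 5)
Your overall architecture is correct and matches the paper: reduce to a basic LP integrality gap and then invoke \cref{thm:csp:main}, using a random-assignment argument for the lower bound $\opt(\Phi)\ge\mu$. The issue is in how you propose to produce the basic LP gap instance $\Phi_0$.

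The main step of your construction assumes that the \cite{KhotTW14} condition supplies ``a balanced distribution on $\B^k$ (uniform single-coordinate marginals) that is supported, up to $\delta$, on $f^{-1}(1)$,'' and you then place a shifted copy of that one distribution on every constraint of a uniformly random instance with independent random shifts. This is not what the condition gives you. The KTW criterion is the existence of a \emph{vanishing measure} $\Lambda$ on the polytope $\calC(f)$ of first-moment (bias) vectors of distributions supported on $f^{-1}(1)$ (\cref{def:csp:vanishing-measure-lp-1,def:csp:vanishing-measure-lp-2}); it does not require $0\in\calC(f)$. The case where $f^{-1}(1)$ supports a balanced distribution is precisely the ``useless for the LP'' special case, and the whole point of the vanishing-measure characterization is to capture predicates that are strongly approximation resistant for LPs \emph{without} being useless. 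For such an $f$, your $\Phi_0$ --- a uniformly random instance with independent random negations --- would have basic LP value bounded away from $1$ (every feasible basic LP solution must give each variable some bias, and then each random constraint is a random negation pattern against those biases; without a balanced distribution you cannot make the objective close to $1$). The instance has to be correlated with the measure: that is exactly why the \cite{KhotTW14} construction in \cref{fig:csp:ktw-instance} samples $\zeta\sim\Lambda$, buckets each CSP coordinate by $|\zeta(j)|$, and negates the literal according to $\sgn(\zeta(j))$, so that variables in the same bucket can be given a common bias and each constraint's local distribution (obtained by the bias-rounding argument of \cref{claim:csp:bias-rounding}) is consistent with those biases. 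Your ``linearize the hypergraph and plug in one shifted distribution'' step therefore cannot be carried out for a general strongly-approximation-resistant $f$.

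Your parenthetical fallback is, however, correct and is the easy route the paper explicitly notes: the $O(\log\log n)$-level Sherali--Adams gap of \cref{thm:csp:ktw-SA-gap} restricts (for $n$ large enough that $c_\eps\log\log n\ge k$) to a level-$k$ and hence basic LP gap, to which \cref{thm:csp:main} applies, giving \cref{cor:csp:ktw-SA-gap-improved}. The paper's ``simplified'' proof, which you were aiming for but did not reach, constructs the basic LP solution for the \cref{fig:csp:ktw-instance} instance directly by (i) fixing a bias for each bucket $X_i$ (the right endpoint of $I_i$) and (ii) for each constraint $C$, rounding the distribution $\nu(C)$ associated with $\zeta(C)\in\calC_\delta(f)$ via \cref{claim:csp:bias-rounding} so that its literal marginals match the bucket biases; this is where the interval structure and the sign-dependent negation are indispensable, and it is the step your construction omits.
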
 
As before, the above theorem also yields a corollary for extended formulations.

\section{Preliminaries of CSPs and Their Relaxations}\label{sec:csp:prelims}
\subsection{Constraint Satisfaction Problems}\label{sec:csp:def}
We use $[n]$ to denote the set $\inbraces{1, \ldots, n}$. The only exception is $\qary$, where we
overload this notation to denote the set $\inbraces{0, \ldots, q-1}$, which corresponds to the
alphabet for the Constraint Satisfaction Problem under consideration.

\begin{definition}
Let $\qary$ denote the set $\inbraces{0, \ldots, q-1}$. 
For a predicate $f : \qary^k \rightarrow \{0,1\}$, an instance
$\Phi$ of \maxkcspq$(f)$ consists of a set of variables $\{x_1,\ldots,x_n\}$ and a set of
constraints $C_1, \ldots, C_m$. Each constraint $C_i$ is over a $k$-tuple of variables
$(x_{i_1}, \ldots, x_{i_k})$ and is of the form
\[
C_i ~\equiv~ f(x_{i_1} + b_{i_1}, \ldots, x_{i_k} + b_{i_k})
\]
for some $b_{i_1}, \ldots, b_{i_k} \in \qary$, where the addition is modulo $q$.
For an assignment $\sigma: \{x_1,\ldots,x_n\} \mapsto \qary$, let $\sat(\sigma)$ denote the fraction of
constraints satisfied by $\sigma$. 
The maximum fraction of constraints that can be simultaneously satisfied is denoted by $\opt(\Phi)$,
i.e.
\[ 
\opt(\Phi) = \max_{\sigma: \{x_1,\ldots,x_n\} \mapsto \qary}  \sat(\sigma). 
\]
\end{definition}

For a constraint $C$ of the above form, we use $x_C$ to denote the tuple of variables $(x_{i_1},
\ldots, x_{i_k})$ and $b_C$ to denote the tuple $(b_{i_1}, \ldots, b_{i_k})$. We then write
the constraint as $f(x_C + b_C)$. We also denote by $S_C$ the set of indices
$\{i_1,\ldots, i_k\}$ of the variables participating in the constraint $C$.

\subsection{The LP Relaxations for Constraint Satisfaction Problems}\label{sec:csp:relaxations}
Below we present various LP relaxations for the \maxkcspq$(f)$ problem that are relevant in this chapter.
Note that an integer solution to the problem can be given by an assignment $\sigma: [n] \to \qary$.

We start with the level-$t$ Sherali-Adams relaxation.
The intuition behind it is the following: instead of looking for an assignment to satisfy maximum number
of constraints, we look for distribution on assignments that maximize expected satisfied constraint.
Note that this, in itself, does not relax the problem as we can sample an assignment form the optimal
distribution to get an optimal assignment (the optimal distribution may be supported on one assignment).
The relaxation happens by looking at the collection of ``local distributions'' with consistency
constraints as described below. A feasible solution to our relaxation consists of local distributions for
all sets $S\subseteq [n], 1\leq \abs{S}\leq t$, where $t$ is the level of Sherali-Adams relaxation.
We define $\{0,1\}$-valued variables $\vartwo{S}{\alpha}$ for each $S \subseteq [n], 1 \leq |S| \leq t$
and $\alpha \in \qary^S$, to be the probability of $S$ getting assigned $\alpha$ in \emph{any} local
distribution of a feasible solution. In case of optimal distribution with unique support, we get
$\vartwo{S}{\alpha} = 1$ if $\sigma(S) = \alpha$ and 0 otherwise.
We also introduce a variable $\vartwoempty$, which equals to 1.
One of the consistency constraint is that the variables $\{\vartwo{S}{\alpha}\}_{\alpha \in \qary^{S}}$ give a
probability distribution over assignments to $S$, we denote it by $\dist_S$.
A further consistency constraint between these local distributions
by requiring that for $T \subseteq S$, the distribution over assignments to $S$, when marginalized
to $T$, denoted as $\dist_{S|T}$ is precisely the distribution over assignments to $T$ \ie $\dist_{S|T} = \dist_T$.
The relaxation is shown in~\cref{fig:csp:SA-lp}. Note that second and third constraint imply that for
each local distribution, probabilities sum to $1$.

\begin{figure}[t]
  \hrule
  \vline
  \begin{minipage}[t]{0.99\linewidth}
    \vspace{-5 pt}
    {\small
      \begin{align*}
        \mbox{maximize} &~~\Ex{C \in \Phi}{\sum_{\alpha \in \qary^k} f(\alpha \cdot b_C) \cdot
        \vartwo{S_C}{\alpha} }&  \\
        \vartwo{S}{\alpha} &~\geq~0
        & \forall S \subseteq [n], |S| \leq t,~\forall \alpha \in \qary^{S} \\ 
        \vartwoempty &~=~1\\
        \sum_{\substack{\alpha \in \qary^{S} \\ \alpha|_T = \beta} }\vartwo{S}{\alpha} &~=~\vartwo{T}{\beta}
        &\forall T \subseteq S \subseteq [n], |S|\leq t,~\forall \beta \in \qary^T \\
    \end{align*}}
    \vspace{-14 pt}
  \end{minipage}
  \hfill\vline
  \hrule
  \caption{Level-$t$ Sherali-Adams LP for \maxkcspq($f$)}\label{fig:csp:SA-lp}
\end{figure}

The basic LP relaxation is a reduced form of the above relaxation where only
those variables $\vartwo{S}{\alpha}$ are included for which $S  = S_C$ is the set of CSP variables
for some constraint $C$. The consistency constraints are included
only for singleton subsets of the sets $S_C$.
Note that the all the constraints for the
basic LP are implied by the relaxation obtained by level $k$ of the Sherali-Adams hierarchy.
\begin{figure}[ht]
  \hrule
  \vline
  \begin{minipage}[t]{0.99\linewidth}
    \vspace{-5 pt}
    {\small
      \begin{align*}
        \mbox{maximize} &~~\Ex{C \in \Phi}{\sum_{\alpha \in \qary^k} f(\alpha + b_C) \cdot
        \vartwo{S_C}{\alpha} }&  \\
        \sum_{j \in \qary}\vartwo{i}{b} &~=~1 &\forall i \in [n] \\
        \sum_{\substack{\alpha \in \qary^{S_C} \\ \alpha(i) = b} }\vartwo{S_C}{\alpha} &~=~\vartwo{i}{b}
        &\forall C \in \Phi, i \in S_C, b \in \qary\\
        \vartwo{S_C}{\alpha} &~\geq~0& \forall C \in \Phi,~\forall \alpha \in \qary^{S_C}
      \end{align*}
    }
    \vspace{-14 pt}
  \end{minipage}
  \hfill\vline
  \hrule
  \caption{Basic LP relaxation for \maxkcspq($f$)}\label{fig:csp:basic-lp}
\end{figure}

For an LP/SDP relaxation of \maxkcspq, and for a given instance $\Phi$ of the problem, we denote by
$\sdpopt(\Phi)$ the LP/SDP (fractional) optimum.
A relaxation is said to have a $(c, s)$-integrality gap if there exists a
CSP instance $\Phi$ such that $\sdpopt(\Phi) \geq c$ and $\opt(\Phi) < s$.

The main theorem we prove concerning the hardness of \maxkcspq\ in two different LP relaxations above:
\maintheoremcsp*

\subsubsection*{Proof Overview and Techniques}
\paragraph{The Gap Instance.} 
The construction of our gap instances is inspired by the construction by Khot \etal~\cite{KhotTW14}.
They gave a generic construction to prove integrality gaps for any approximation
resistant predicate (starting from certificates of hardness in form of certain ``vanishing
measures''), and we use similar ideas to give a construction which can start from a basic LP
integrality gap instance as a certificate, to produce a gap instance for many
levels. This construction is discussed in~\cref{sec:csp:sagaps}.

Given an integrality gap instance $\Phi_0$ on $n_0$ variables, we treat this as a
``template'' (as in Raghavendra~\cite{Raghavendra08}) and generate a random instance using
this. Concretely, we generate a new instance $\Phi$ on $n_0$ sets of $n$ variables each. To
generate a constraint, we sample a random constraint  $C_0 \in \Phi_0$, and pick a variable randomly
from each of the sets corresponding to variables in $C_0$. 
Thus, the instances generated are
$n_0$-partite random hypergraphs, with each edge being generated according to a specified ``type''
(indices of sets to chose vertices from). 

Note that previous instances of gap constructions for LP and SDP hierarchies were (hyper)graphs generated
according to the model ${\cal G}_{n,p}$ with the signs of the literals chosen independently at random.
However, proving an LP/SDP lower bound using such instances implies a strong result: it in fact proves
that the predicate $f$ is \emph{useless} for the corresponding relaxation, in the sense defined
by~\cite{AustrinH13}. Assuming the UGC, uselessness only holds for a limited class of predicates $f$
(when $f^{-1}(1)$ supports a balanced pairwise independent distribution on ${[q]}^k$), as shown
in~\cite{AustrinH13}. Thus, proving an SDP lower bound for predicates which are not expected to be
useless requires a new construction of instances, which cannot be generated uniformly at random. Our
construction provides such a generalization, and may be useful in proving new SDP lower bounds.  The
properties of random ${\cal G}_{n,p}$ hypergraphs easily carry over to our instances, and we collect
these properties in~\cref{sec:csp:hypergraphs}.

The above construction ensures that if the instance $\Phi_0$ does not have an assignment satisfying more
than an $s$ fraction of the constraints, then $\opt(\Phi) \leq s+\eps$ with high probability.  Also, it
is well-known that providing a good LP solution to the relaxation given by $t$ levels of the
Sherali-Adams hierarchy is equivalent to providing distributions $\dist_S$ on $\qary^S$ for all sets of
variables  $S$ with $\abs{S} \leq t$, such that the distributions are consistent restricted to subsets
\ie\ for all $S$ with $\abs{S} \leq t$ and all $T \subseteq S$, we have $\dist_{S|T} = \dist_{T}$.  Thus,
in our case, we need to produce such consistent local distributions such that the expected probability
that a random constraint $C \in \Phi$ is satisfied by the local distribution on the set of variables
involved in $C$ (which we denote as $S_C$) is at least $c - \eps$.

\paragraph{Local Distributions from Local Structure.}
Most works on integrality gaps for CSPs utilize the local structure of random
hypergraphs to produce such distributions. Since the girth of a sparse random hypergraph is
$\Omega(\log n)$, any induced subgraph on $o(\log n)$ vertices is simply a forest. In case the
induced (hyper)graph $G_S$ on a set $S$ is a \emph{tree}, there is an easy distribution to consider: simply
choose an arbitrary root and propagate down the tree by sampling each child conditioned on its
parent. It is also easy to see that for $T \subseteq S$, if the induced (hyper)graph  $G_T$ is a
\emph{subtree} of $G_S$, then the distributions $\dist_S$ and $\dist_T$ produced as above
are consistent.

The extension of this idea to forests requires some care. One can consider extending the
distribution to forests by propagating independently on each tree in the forest.
However, if for $T \subseteq S$ $G_T$ is a forest while $G_S$ is a tree, then a pair of vertices
disconnected in $G_T$ will have no correlation in $\dist_T$ but may be correlated in $\dist_S$.
This was handled, for example, in \cite{KhotTW14} by adding noise to the propagation and using a
large ball $B(S)$ around $S$ to define $\dist_S$. Then, if two vertices of $T$ are disconnected in
$B(T)$ but connected in $B(S)$, then they must be at a large distance from each other. Thus, because
of the noise, the correlation between them (which is zero in $\dist_T$) will be very small in
$\dist_S$. However, correcting approximate consistency to exact consistency incurs a cost which is
exponential in the number of levels (\ie the sizes of the sets), which is what limits the results in
\cite{KhotTW14, delaVegaK07} to $O(\log \log n)$ levels. This also makes the proof more
involved since it requires a careful control of the errors in consistency.

\paragraph{Consistent Partitioning Schemes.}
We resolve the above consistency issue by first partitioning the given set $S$ into a set of
clusters, each of which have diameter $\Delta_H = o(\log n)$ in the underlying hypergraph $H$.
Since each cluster has bounded diameter, it becomes a tree when we add all the missing paths
between any two vertices in the cluster. We then propagate independently on each
cluster (augmented with the missing paths). This preserves the correlation between any two vertices
in the same cluster, even if the path between them was not originally present in $G_S$.

Of course, the above plan requires that the partition obtained for $T \subseteq S$, is consistent
with the restriction to $T$ of partition obtained for the set $S$. In fact, we construct
distributions over partitions $\inbraces{\calP_S}_{\abs{S} \leq t}$, which satisfy the consistency
property $\calP_{S | T} = \calP_T$. These distributions over partitions, which we call {\deffont
  consistent partitioning schemes}, are constructed in~\cref{sec:csp:decomposition}.

In addition to being consistent, we require that the partitioning scheme cuts only a few
edges in expectation, since these contribute to a loss in the LP objective. We remark that such
low-diameter decompositions (known as \emph{separating} and \emph{padded} decompositions) have been
used extensively in the theory metric embeddings (see \eg~\cite{KrauthgamerLMN05} and the references
therein). The only additional requirement in our application is consistency.

We obtain the decompositions by proving the (easy) hypergraph extensions the results of Charikar,
Makarychev and Makarychev \cite{CMM07Metric}, who exhibit a metric which is similar to the shortest
path metric on graphs at small distances, and has the property that its restriction to any subset of
size at most $n^{\eps'}$ (for an appropriate $\eps' < 1$) 
is $\ell_2$ embeddable. This is proved in~\cref{sec:csp:hypergraphs}.
A variant of this metric was used by Charikar, Makarychev and Makarychev~\cite{CMM09SACSP} to
prove lower bounds for \maxcut, for $n^{\eps'}$ levels of the Sherali-Adams hierarchy. They used the
embedding to construct a ``local SDP solution'' for any $n^{\eps'}$ variables (with value $1-\eps'$) 
and produced the distributions required for Sherali-Adams by rounding the SDP solutions (which gives
value $1 - O(\sqrt{\eps'})$). However, rounding an SDP solution with a high value does 
not always produce a good integral solution for other CSPs.

Instead, we use these metrics in~\cref{sec:csp:decomposition} to construct the consistent partitioning
schemes as described above, by applying  a result of Charikar \etal~\cite{CharikarCGGP98} giving separating
decompositions for finite subsets of $\ell_2$.
We remark that it is the consistency requirement of the partitioning procedure that limits our results
to $O\inparen{\frac{\log n}{\log \log n}}$ levels. The separation probability in the decomposition
procedure grows with the dimension of the $\ell_2$ embedding, while (to the best of our knowledge)
dimension reduction procedures seem to break consistency.

\section{Hypergraphs}
An instance $\Phi$ of \maxkcsp defines a natural associated hypergraph $H = (V,E)$ with $V$ being
the set of variables in $\Phi$ and $E$ containing one $k$-hyperedge for every constraint $C \in \Phi$.
We remind the reader of the familiar notions of degree, paths, and cycles for the case of 
($k$-uniform) hypergraphs:
\begin{definition}
  Let $H=(V,E)$ be a hypergraph. 
\begin{itemize}
\item For a vertex $v\in V$, the {\deffont degree} of the vertex $v$ is defined to be the number of
  distinct hyperedges containing it.
\item   A {\deffont simple path} $P$ is a finite alternate sequence of distinct vertices and
  distinct edges starting and ending at vertices, \ie $P=v_1,e_1,v_2,\ldots ,v_{l},e_{l},v_{l+1}$,
  where $v_i \in V ~\forall i \in [l+1]$ and $e_i\in E~ \forall i\in [l]$.  Furthermore, $e_i$
  contains $v_i,v_{i+1}$ for each $i$. Here $l$ is called the {\deffont length} of the path $P$. 
  All paths discussed in this chapter will be simple paths.
\item A sequence $\calC = (v_1,e_1,v_2,\ldots, v_l,e_l,v_1)$ is called a cycle of length
  $l$ if the initial segment $v_1,e_1,\ldots, v_l$ is a (simple) path, $e_{l+1}\ne e_i$ for all $i\in [l]$,
  and $v_1\in e_l$. For a path $P$ (or cycle $\calC$), we use $\Vtx(P)$ (or $\Vtx(\calC)$) to 
  denote the set of all the vertices that occur in the edges, \ie the set $\{v\suchthat (\exists i\in[h])
  (v\in e_i)\}$, where $e_1,\ldots,e_h$ are the hyperedges included in $P$ (or $\calC$).
\item For a given hypergraph $H$, the length of the smallest cycle in $H$ is called the
  {\deffont girth} of $H$.
\end{itemize}
\end{definition}
To observe the difference the notions of cycle in graphs and hypergraphs, it is instructive to
consider the following example: let $u,v$ be two distinct vertices in a $k$-uniform hypergraph for
$k \geq 3$, and let $e_1,e_2$ be two distinct hyperedges both containing $u$ and $v$. Then
$u,e_1,v,e_2,u$ is a cycle of length $2$, which cannot occur in a graph.
 
We shall also need the following notion of the \emph{closure} of a set $S \subseteq V$ in a given
hypergraph $H$, defined by~\cite{CMM09SACSP} for the case of graphs. A stronger notion of closure
was also considered by~\cite{BarakCK15}.
\begin{definition}\label{def:csp:closure}
  For a given hypergraph $H$ and $R \in \N$, and a set $S\subseteq \Vtx(H)$, we denote by $\clR(S)$ 
  the {\deffont $R$-closure} of $S$ obtained by adding all the vertices in all the paths of length at most $R$
  connecting two vertices of $S$:
  \[
    \clR(S)=S\cup \bigcup_{\substack{P: P\text{ is a path in H} \\ P\text{ connects }u,v\in S \\
    \abs{P}\le R}}\Vtx(P)\mper
  \]
  For ease of notation, we use $\cl(S)$ to denote $\cl_1(S)$.
\end{definition}

\subsection{Properties of Random Hypergraphs}\label{sec:csp:hypergraphs}
In this section we collect various properties of the hypergraphs corresponding to our integrality gap
instances.
The gap instances we generate contain several disjoint collections of variables. Each
constraint in the instance has a specified ``type'', which specifies which of the collections each
of the participating $k$ variables much be sampled from. The constraint is generated by randomly 
sampling each of the $k$ variables, from the collections specified by its type. This is captured by
the generative model described below. 

In the model below and in the construction of the gap instance, the parameter $n_0$ should be
thought of as constant, while the parameters $n$ and $m$ should be though of a growing to infinity. 
We will choose $m = \gamma \cdot n$ for $\gamma = O_{k,q}(1)$.
\begin{definition}
Let $n_0, k \in \N$ with $k \geq 2$. Let $m, n > 0$ and let $\Gamma$ be a distribution on
$\insquare{n_0}^k$. We define a distribution $\Hyp$ on $k$-uniform $n_0$-partite hypergraphs with $m$
edges and $N=n_0 \cdot n$ vertices, divided in $n_0$ sets $X_1, \ldots, X_{n_0}$ of size $n$ each. A
random hypergraph $H \sim \Hyp$ is generated by sampling $m$ random hyperedges independently as
follows:
\begin{itemize}
\item Sample a random type $(i_1, \ldots, i_k) \in [n_0]^k$ from the distribution $\Gamma$.
\item For all $j \in [k]$, sample $v_{i_j}$ independently and uniformly in $X_{i_j}$.
\item Add the edge $e_i = \inbraces{v_{i_1}, \ldots, v_{i_k}}$ to $H$.
\end{itemize}
\end{definition}
Note that as specified above, the model may generate a multi-hypergraph. However, the number of such
repeated edges is likely to be small, and we will bound these, and in fact the number of cycles of
size $o(\log n)$ in~\cref{lem:csp:cycle-count}.

We will study the following metrics (similar to the ones defined in~\cite{CMM07Metric}):
\begin{definition}\label{def:csp:metric-dmu}
  Given a hypergraph $H$ with vertex set $V$, we define two metrics $d^H_\mu(\cdot\xspace,\cdot), \rho^H_\mu
  (\cdot\xspace,\cdot)$ on $V$ as
  \[
    d^H_\mu(u,v)~\coloneqq~1-{(1-\mu)}^{2\cdot d_H(u,v)}
    \qquad \text{ and } \qquad \rho^H_\mu(u,v)~\coloneqq~\sqrt{\frac{2\cdot d^H_\mu(u,v)+ \mu}{1+\mu}}\mcom
  \]
  for $u\ne v$, where $d_H(\cdot\xspace,\cdot)$ denotes the shortest path distance in $H$.
\end{definition}

We primarily need the fact the local $\ell_2$-embeddability of the metric $\rho_{\mu}$. 
The following theorem captures various properties of random hypergraphs required for our construction.
The proof of the theorem heavily uses results proved in~\cite{AroraBLT06}
and~\cite{CMM09SACSP}.
\begin{restatable}{reptheorem}{localmetric}\label{thm:csp:locally-l2}
  Let $H'\sim \Hyp$ with $m=\gamma\cdot n$ edges and let $\eps>0$. Then for large
  enough $n$, with high probability (at least $1-\eps$, over the choice of $H'$),
  there exists $\delta>0$, constant $c = c(k,\gamma,n_0,\eps)$, $\theta = \theta(k,\gamma,n_0,\eps)$ 
  and a subhypergraph $H\subset H'$ with $V(H) = V(H')$ satisfying the following:
\begin{itemize}
\item H has girth $\ttg\ge \delta\cdot \log n$.
\item $\abs{\Edg(H') \setminus\Edg(H)}\le \eps\cdot m$.
\item For all $t \leq n^{\theta}$, for $\mu \geq c \cdot \frac{\log t+\log\log n}{\log n}$, 
  for all $S\subseteq \Vtx(H')$ with
  $\abs{S} \leq t$, the metric $\rho^H_{\mu}$ restricted to $S$ is isometrically embeddable
  into the unit sphere in $\ell_2$,
\end{itemize}
\end{restatable}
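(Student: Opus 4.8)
The plan is to construct the subhypergraph $H$ by a standard ``delete all short cycles'' argument and then transport the metric statement from graphs to hypergraphs by a reduction to the results of \cite{CMM07Metric} and \cite{AroraBLT06}. First I would control the number of short cycles: using the generative model $\Hyp$ with $m = \gamma \cdot n$ edges, a first-moment computation (this is exactly the content of the cycle-counting lemma \cref{lem:csp:cycle-count} promised earlier) shows that for a suitable $\delta = \delta(k,\gamma,n_0,\eps) > 0$, the expected number of cycles of length at most $\delta \log n$ is $o(n)$, in fact at most $\eps m / k$ with probability at least $1 - \eps$. I would then obtain $H$ from $H'$ by deleting one edge from each such short cycle; this destroys all cycles of length $\le \delta \log n$ while removing at most $\eps m$ edges, giving the first two bullet points, and $V(H) = V(H')$ since we only delete edges.

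Next I would address the local $\ell_2$-embeddability, which is the heart of the statement. The idea is that on a set $S$ of size at most $t \le n^\theta$, the closure $\cl_R(S)$ for $R = O(\log_k t)$ is still a forest in $H$ (because $H$ has girth $\ge \delta \log n$ and the closure of a small set in a high-girth hypergraph cannot contain a cycle), so locally $H$ looks like a bounded-degree forest, and the metric $d_H$ restricted to $S$ agrees with the tree metric on this forest. Charikar--Makarychev--Makarychev \cite{CMM07Metric} construct, for graphs, a metric of the form $1 - (1-\mu)^{2 d_G(u,v)}$ (and its ``spherical'' normalization $\rho_\mu$) and show its restriction to any subset of size at most $n^{\eps'}$ embeds isometrically into the sphere in $\ell_2$, provided $\mu$ is at least a suitable multiple of $(\log t + \log\log n)/\log n$. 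I would verify that their argument only uses the forest-like local structure and the girth bound — precisely the properties $H$ enjoys — and that the $k$-uniform hyperedges cause no essential difficulty because shortest paths in a hypergraph of large girth behave like shortest paths in a graph (each hyperedge contributes distance $1$ between any two of its vertices, and no short cycles means no shortcuts). Packaging the $\ell_2$-embedding of $\rho_\mu$ so that it lands on the \emph{unit} sphere is exactly why the normalization by $\sqrt{(2 d^H_\mu + \mu)/(1+\mu)}$ appears in \cref{def:csp:metric-dmu}: one checks $\rho^H_\mu(u,v)^2 = $ (a constant) $- $ (inner product of the images), so constant norm is built in.

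The quantitative coupling of parameters is the last bookkeeping step: I would choose $\theta = \theta(k,\gamma,n_0,\eps)$ small enough (relative to $\delta$) that for $t \le n^\theta$ the relevant closure radius $R \asymp \log_k t$ satisfies $2R < \delta \log n$, guaranteeing the closure of any $\le t$-set is cycle-free in $H$; then the constant $c = c(k,\gamma,n_0,\eps)$ is the one inherited from the \cite{CMM07Metric} embedding bound on $\mu$. The high-probability clause ($1 - \eps$ over $H'$) is entirely absorbed into the cycle-count event from the first step, since everything afterwards is deterministic given the girth and edge-deletion bounds.

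I expect the main obstacle to be the second step: carefully checking that the \cite{CMM07Metric} construction, which is stated for graphs, goes through verbatim for $k$-uniform hypergraphs of large girth. The potential subtleties are (i) that ``distance $1$'' is now shared by all $\binom{k}{2}$ pairs inside a hyperedge, so the local metric space is a tree-of-cliques rather than a tree, and one must confirm the positive-semidefiniteness / embeddability computation is unaffected; and (ii) ensuring the girth bound $\delta \log n$ of the \emph{sub}hypergraph $H$ (not $H'$) is what feeds the embedding, while the sets $S$ are allowed to be arbitrary subsets of $\Vtx(H') = \Vtx(H)$. Both should be routine given the high-girth structure, but they are where the real verification lies; the rest is a first-moment estimate plus invoking \cite{CMM07Metric,AroraBLT06} as black boxes.
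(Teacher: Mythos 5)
Your cycle-deletion step matches the paper's use of \cref{cor:csp:large-girth}, and your identification of the ``tree-of-cliques'' concern is exactly the right worry; but the proposal as written has two genuine gaps.

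First, you leave the hypergraph-to-graph transfer as a verification ``that should be routine,'' when in fact it is the step that needs a definite idea. The paper does not re-derive the \cite{CMM07Metric}/\cite{CMM09SACSP} embedding for hypergraphs: it passes to the \emph{incidence graph} $G_H$ (bipartite on $V(H)\cup E(H)$), applies the graph theorem there, and then observes that $d_{G_H}(u,v)=2\,d_H(u,v)$ for $u,v\in V(H)$, so that the bipartite-graph metric $d^{G_H}_\mu(u,v)=1-(-1)^{d_{G_H}(u,v)}(1-\mu)^{d_{G_H}(u,v)}$ restricted to $V(H)$ coincides exactly with $d^H_\mu$ (the sign factor is always $+1$ on even-length paths). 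This neatly resolves your worry (i): inside a hyperedge, the incidence graph is a star rather than a clique, so the local structure really is a tree and no positive-semidefiniteness computation needs to be re-audited. Without this or an equivalent argument your ``tree-of-cliques'' issue is open, not routine.

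Second, and more seriously, your plan only deletes short cycles, but girth alone is not enough to invoke the embedding theorem. The CMM statement used here (\cref{thm:csp:l2-graph-embedding}) requires both a degree bound $D$ and that subgraphs on at most $D^{\ell+1}t$ vertices be $\ell$-path decomposable; the paper obtains $\ell$-path decomposability from $\eta$-sparsity via \cref{lem:csp:path-decomposable}, which in turn needs $\eta<1/(3\ell-1)$ and the girth bound together. Accordingly the paper's $H$ is obtained by \emph{three} preprocessing steps, not one: pruning high-degree vertices (\cref{lem:csp:low-degree}), deleting short cycles (\cref{cor:csp:large-girth}), and separately establishing local sparsity of $H'$ (\cref{lem:csp:locally-sparse}) and of its incidence graph (\cref{lem:csp:sparsity-variable-constraint}). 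Your sketch gives the girth bullet and the edge-deletion bullet, but omits the bounded-degree and sparsity ingredients entirely, so the invocation of the embedding theorem would be unjustified as stated.
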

This theorem is proved in the next section. We note for the reader that ideas and methods involved
in the proof of the above theorem are largely disconnected from those of the~\cref{thm:csp:main}.

\section[Local L2-embeddability of the Metric rho]{Local $\ell_2$-embeddability of the Metric
$\rho_{\mu}^H$}\label{sec:csp:appendix}
The goal of this section is to prove the following result about the local $\ell_2$-embeddability of
the metric $\rho_{\mu}^H$. 
\localmetric*
To prove the above theorem, we will use the local structure of random hypergraphs.
We first prove that with high probability for random hypergraphs (sampled from $\Hyp$)
a few hyperedges can be removed to obtain a hypergraph whose girth is $\Omega(\log n)$ and the
degree is bounded.
The following lemma shows a possible
trade-off between the degree of the hypergraph vs the number of hyperedges required to be
removed.
\begin{restatable}{replemma}{lowdegree}\label{lem:csp:low-degree}
  Let $H'\sim \Hyp$ be a random hypergraph with $m=\gamma\cdot n$ hyperedges. Then for any
  $\eps>0$, with probability $1 - \eps$, there exists a sub-hypergraph
  $H$ with $V(H) = V(H')$ such that $\forall u \in V(H)$, $\deg_H(u) \leq
  100\cdot\log\inparen{\frac{n_0}{\eps}}\cdot k \cdot \gamma$ and $\abs{\Edg(H')\setminus
  \Edg(H)}\le \eps\cdot m$.
\end{restatable}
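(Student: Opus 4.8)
The plan is to use a standard first-moment argument on the number of "heavy" vertices and then iteratively delete edges incident to them. Since each of the $m = \gamma n$ hyperedges is sampled independently, and each of its $k$ vertices is chosen uniformly from a set of size $n$ (within one of the $n_0$ blocks), the degree of any fixed vertex $u$ is a sum of independent indicator variables. Concretely, for each edge $e_i$ and each coordinate $j \in [k]$ the probability that the $j$-th sampled vertex equals $u$ is at most $1/n$ (it is $1/n$ if the type of $e_i$ picks block containing $u$ in coordinate $j$, and $0$ otherwise), so $\Pr[u \in e_i] \le k/n$ and hence $\Ex{\deg_{H'}(u)} \le km/n = k\gamma$, a constant.

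First I would fix a threshold $D \defeq 100 \cdot \log(n_0/\eps) \cdot k\gamma$ and, for each vertex $u$, bound $\Pr[\deg_{H'}(u) \ge D/2]$ by a Chernoff/Poisson-tail estimate: since $\deg_{H'}(u)$ is stochastically dominated by $\mathrm{Bin}(km, 1/n)$ with mean $\le k\gamma$, a multiplicative Chernoff bound gives $\Pr[\deg_{H'}(u) \ge D/2] \le 2^{-D/2} \le 2^{-50 k\gamma \log(n_0/\eps)} \le (\eps/n_0)^{50}$ for $n$ large (I am being generous here; any polynomial-in-$(\eps/n_0)$ savings suffices, and the constant $100$ is chosen with room to spare). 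Call a vertex \emph{heavy} if $\deg_{H'}(u) \ge D/2$; by a union bound over the $N = n_0 n$ vertices, the expected number of heavy vertices is at most $n_0 n \cdot (\eps/n_0)^{50}$, which is $o(\eps m)$, so by Markov's inequality with probability at least $1-\eps$ the number of heavy vertices is at most $\eps m / (2D)$ (again with slack to spare).

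Next, on this good event I would form $H$ by deleting \emph{all} hyperedges incident to any heavy vertex. The number of deleted edges is at most (number of heavy vertices) $\times$ (their total degree); the crude bound $\deg_{H'}(u) \le m$ is wasteful, so instead I would also condition on the (high-probability) event that \emph{no} vertex has degree exceeding, say, $\log^2 n$ — this follows from the same tail bound since $\Pr[\deg_{H'}(u) \ge \log^2 n] \le 2^{-\Omega(\log^2 n)}$, giving expected count $\to 0$. Then the number of deleted edges is at most $(\eps m/(2D)) \cdot \log^2 n$, which is still not obviously $\le \eps m$; to fix this cleanly I would instead just bound deleted edges by $\sum_{u \text{ heavy}} \deg_{H'}(u)$ and control this sum directly by a first-moment computation: $\Ex{\sum_u \deg_{H'}(u)\,\one[\deg_{H'}(u)\ge D/2]} \le \sum_u \Ex{\deg_{H'}(u)^2}^{1/2}\Pr[\deg_{H'}(u)\ge D/2]^{1/2}$, or more simply $\le \sum_u \sum_{d \ge D/2} d\Pr[\deg = d]$, and the Poisson tail makes this $\le n_0 n \cdot O(D \cdot 2^{-D/2}) = o(\eps m)$. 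Markov then gives $\abs{\Edg(H')\setminus\Edg(H)} \le \eps m$ with probability $\ge 1-\eps$. After deletion every surviving vertex has degree $\le D/2 < D = 100\log(n_0/\eps)k\gamma$, and $V(H) = V(H')$ since we only remove edges. Taking the intersection of the (constantly many) high-probability events and adjusting the constants in the $\eps$'s finishes the proof.

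The main obstacle — and the only genuinely delicate point — is getting the edge-deletion count to come out below $\eps m$ rather than merely $o(m)$ or $\eps m \cdot \mathrm{polylog}$: one must be careful to bound the \emph{total degree of heavy vertices} in expectation (using the exponential tail to kill the $d$ factor inside the sum) rather than naively multiplying a bound on the number of heavy vertices by a worst-case degree bound. Everything else is a routine Chernoff-plus-union-bound calculation, and the constant $100$ in the statement is deliberately loose to absorb the various factors of $k$, $\gamma$, and the $\log(n_0/\eps)$ from the union bound.
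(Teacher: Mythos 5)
Your proposal is correct and lands on essentially the same argument as the paper: both compute $\Ex{|\Edg(H')\setminus\Edg(H)|}$ by bounding the expected \emph{degree-weighted} count $\sum_u \Ex{\deg_{H'}(u)\cdot\mathbf 1[\deg_{H'}(u)\ge D]}$ rather than the number of heavy vertices, use the Chernoff tail on the constant-mean degree distribution to make this $o(\eps^2 m)$, and finish with Markov. The one small technical difference is how that weighted tail is controlled: you sum $\sum_{d\ge D/2} d\,\Pr[\deg = d]$ directly (or via Cauchy--Schwarz), while the paper uses the independence of hyperedges to get $\Ex{\deg(u)\mid\deg(u)\ge D}\le D+\Ex{\deg(u)}$; these are interchangeable routes to the same bound, and your observation that the exponential tail must absorb the extra factor of $d$ is precisely the point the paper's conditioning trick is designed to handle.
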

\begin{proof}
  By linearity of expectation, the expected degree of any vertex $v$ in $H'$ is at most
  $k\cdot\gamma$. Let $D = 100\cdot\log\inparen{\frac{n_0}{\eps}}\cdot k\cdot \gamma$, and let $S$ be the
  set of all vertices $u$ with $\deg_{H'}(u) > D$. Let $E_S$ be the set of all hyperedges with at least
  one vertex in $S$. We shall take $E(H) = E(H') \setminus E_S$. 
Note that for any $u \in V(H')$, $\Pr{u \in S} = \Pr{\deg_{H'}(u) \geq D} \leq \exp(-D/4)$ by a
Chernoff-Hoeffding bound. We use this to bound the expected number of edges deleted.
\begin{align*}
\Ex{E_S} 
~\leq~ \sum_{u \in V(H')} \Ex{\deg(u) \cdot \indicator{u \in S}} 
&~=~ \sum_{u \in V(H')} \Ex{\deg(u) ~\mid~ u \in S} \cdot \Pr{u \in S} \\
&~\leq~ \sum_{u \in V(H')} \Ex{\deg(u) ~\mid~ u \in S} \cdot \exp\inparen{-D/4} \\
&~\leq~ \sum_{u \in V(H')} \inparen{D+k\gamma} \cdot \exp\inparen{-D/4} \\
&~\leq~ (n \cdot n_0) \cdot 2D \cdot \exp\inparen{-D/4} \mper
\end{align*}
The penultimate inequality uses the independence of the hyperedges in the generation process, which gives
$\Ex{\deg_{H'}(u) \mid \deg_{H'}(u) \geq D} ~\leq~ D + \Ex{\deg_{H'}(u)}$. From our choice of the
parameter $D$, we get that $\Ex{E_S} \leq \eps^2 \cdot \gamma \cdot n = \eps^2 \cdot m$. Thus, the
number of edges deleted is at most $\eps \cdot m$ with probability at least $1-\eps$.
\end{proof}
The following lemma shows that the expected number of small cycles in random hypergraph is small.
\begin{restatable}{replemma}{cyclecount}\label{lem:csp:cycle-count}
Let $H \sim \Hyp$ be a random hypergraph and for $l \geq 2$, let $Z_{l}(H)$ denote the number of
cycles of length at most $l$ in $H$. For $m, n$ and $k$ such that $k^2 \cdot (m/n) > 1$, we have
  \[
    \Ex{H \sim \Hyp}{Z_{l}(H)} \le \inparen{k^2 \cdot \frac{m}{n}}^{2l} \mper
  \]
\end{restatable}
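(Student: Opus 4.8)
The plan is a standard first-moment calculation. Write $Z_l(H)=\sum_{j=2}^{l}Y_j(H)$, where $Y_j(H)$ counts the cycles of length \emph{exactly} $j$, so that $\Ex{H\sim\Hyp}{Z_l(H)}=\sum_{j=2}^{l}\Ex{H\sim\Hyp}{Y_j(H)}$ and it suffices to show $\Ex{H\sim\Hyp}{Y_j(H)}\le\inparen{k^2\cdot\tfrac mn}^{j}$ for each $2\le j\le l$.

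To bound $\Ex{H\sim\Hyp}{Y_j(H)}$ I would union-bound over all \emph{potential} length-$j$ cycles. Such an object is specified by (i) an ordered tuple $(e_1,\dots,e_j)$ of $j$ distinct edges among the $m$ generated ones (at most $m^j$ choices), and (ii) for each $t$, an ordered pair of distinct coordinates $(a_t,b_t)\in[k]^2$ recording which vertex of $e_t$ is to play the role of the incoming junction $v_t$ and which the outgoing junction $v_{t+1}$ (at most $k(k-1)\le k^2$ per edge, hence $\le k^{2j}$ overall). A draw $H\sim\Hyp$ turns this data into an actual cycle only if the $j$ \emph{junction equalities} hold: for every $t$ (indices read cyclically) the vertex occupying coordinate $b_t$ of $e_t$ must coincide with the vertex occupying coordinate $a_{t+1}$ of $e_{t+1}$. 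Hence $\Ex{H\sim\Hyp}{Y_j(H)}$ is at most $m^j k^{2j}$ times the largest such probability, the maximum being over the combinatorial data in (i)--(ii).

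The heart of the argument is that this probability is at most $n^{-j}$. Condition on the $j$ types drawn for $e_1,\dots,e_j$; then every coordinate of each of these edges is an independent uniform vertex of its prescribed part of size $n$. The $t$-th junction equality involves exactly two of these variables --- coordinate $b_t$ of $e_t$ and coordinate $a_{t+1}$ of $e_{t+1}$ --- and since $a_t\ne b_t$, edge $e_t$ supplies coordinate $a_t$ to the $(t-1)$-th equality and coordinate $b_t$ to the $t$-th, which are distinct coordinates. Thus the $2j$ vertex-variables appearing across the $j$ equalities are pairwise distinct, so, conditionally on the types, the $j$ equality events are mutually independent; each has probability $\tfrac1n$ if its two coordinates lie in the same part and $0$ otherwise, hence at most $\tfrac1n$. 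Consequently all $j$ equalities hold with probability at most $n^{-j}$, and averaging over the types preserves this, so $\Ex{H\sim\Hyp}{Y_j(H)}\le m^jk^{2j}n^{-j}=\inparen{k^2\cdot\tfrac mn}^{j}$. Summing over $2\le j\le l$ and using the hypothesis $k^2\cdot\tfrac mn>1$ to bound the resulting geometric series by its top term (up to a factor absorbed into doubling the exponent) gives $\Ex{H\sim\Hyp}{Z_l(H)}\le\sum_{j=2}^{l}\inparen{k^2\tfrac mn}^{j}\le\inparen{k^2\tfrac mn}^{2l}$, as claimed. The only step needing genuine care is the independence bookkeeping above --- verifying that each junction consumes two \emph{fresh} coordinate-variables; discarding the simple-cycle requirement that the $v_t$ be pairwise distinct only loosens the union bound and costs nothing.
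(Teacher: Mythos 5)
Your proof is correct, and it arrives at exactly the same per-length bound $\bigl(k^2\cdot\frac{m}{n}\bigr)^j$ and the same final geometric-series step as the paper's. The one genuine difference is in how the union bound is organized. The paper first contracts the partite vertex set $[n_0]\times[n]$ to $[n]$ (identifying $X_1,\ldots,X_{n_0}$), which reduces to a homogeneous random multi-hypergraph $H'$ independent of the type distribution $\Gamma$; it then unions over $h$-tuples of contracted vertices, bounding the per-pair edge-membership probability by $mk^2/n^2$. You instead work with $H$ directly and union over ordered tuples of $j$ distinct edge indices together with a choice of two coordinate slots per edge, then prove the per-junction probability is $\leq 1/n$ by conditioning on types and checking that the $2j$ coordinate-variables involved are distinct, hence the junction events are mutually independent. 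Your route avoids the contraction step and makes the independence structure fully explicit, which is a nice bonus; the paper's contraction buys a cleaner model (erasing $\Gamma$ entirely) at the cost of that one auxiliary reduction. Both proofs share the same minor looseness at the very end: the inequality $\sum_{j=2}^l r^j \leq r^{2l}$ with $r = k^2 m/n$ only needs the weak hypothesis $r > 1$ in the lemma statement, but really uses that $r$ is comfortably above $1$, which holds in the regime where the lemma is applied ($m = \gamma n$ with $\gamma > 1$ and $k \geq 2$, so $r \geq 4$).
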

\begin{proof}
  Let the vertices of $H$ correspond to the set $[n_0]\times [n]$.
  Suppose we contract the set of $[n_0]\times\{j\}$ vertices into a single
  vertex $j \in [n]$ to get a random multi-hypergraph $H'$ on vertex set $[n]$.
  An equivalent way to view the sampling to $H'$ is: for each $i\in [m]$, the
  $i$-th hyperedge $e_i$ of $H'$ is sampled by independently sampling $k$ vertices (with
  replacement) uniformly at random from $[n]$. Note
  that the sampling of $H'$ is independent of $\Gamma$ in the definition of
  $\Hyp$.
  Clearly, a cycle of length at most $l$ in $H$ produces a cycle of length at most $l$ in
  $H'$. Hence, it suffices to bound the expected number of cycles in $H'$

  Given any pair $(u',v')$ of vertices of $H'$, for $u' \ne v'$, the probability of the
  pair $(u',v')$ belonging together in some hyperedge of $H'$ is at most $\frac{mk^2}{n^2}$.
  Consider a given $h$-tuple  of vertices $\bfu=(u_{i_1},\cdots u_{i_h})$. Note that we
  require that hyperedges participating in a cycle be distinct. So, the probability that $\bfu$
  is part of a cycle in $H'$, \ie there exists distinct hyperedges $e_j\in H'$ for $j\in [h]$
  such that $u_{i_j},u_{i_{j+1}}\in e_j$ for $j\in [h-1]$, and $u_{i_1},u_{i_h}\in e_{h}$
  is at most $\inparen{\frac{mk^2}{n^2}}^{h}$. As a result, expected number of
  cycles of length $h$ in $H'$ is bounded above by:
  \[
     \binom{n}{h} \inparen{\frac{mk^2}{n^2}}^{h} \le n^h \inparen{\frac{mk^2}{n^2}}^{h} =
    \inparen{k^2\cdot\frac{m}{n}}^{h}
  \]
  From the geometric form of the bound, it follows that expected number of cycles of length
  at most $l$ in $H'$ is at most $\frac{\inparen{k^2\cdot\frac{m}{n}}^{l+1}}{
    \inparen{k^2\cdot\frac{m}{n}}-1}<\inparen{k^2\cdot\frac{m}{n}}^{2l}\mper$
\end{proof}
Using the above lemma, it is easy to show that one can remove all small cycles in a random
hypergraph by deleting only a few hyperedges.
\begin{corollary}\label{cor:csp:large-girth}
  Let $H \sim \Hyp$ be a random hypergraph with $m = \gamma \cdot n$ for $\gamma>1$ and $k\ge 2$.
  Then, there exists $\delta=\delta(\gamma) > 0$ such that with probability $1-n^{-1/6}$, all cycles
  of length at most $\delta \cdot \log n$ in $H$ can be removed by deleting at most $n^{2/3}$ hyperedges.
\end{corollary}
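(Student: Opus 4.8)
The plan is to combine the cycle-count bound from \cref{lem:csp:cycle-count} with Markov's inequality, then choose $\delta$ small enough that the tail is controlled. First I would apply \cref{lem:csp:cycle-count} with $l = \delta \cdot \log n$: since $m = \gamma \cdot n$ with $\gamma > 1$ and $k \geq 2$, we have $k^2 \cdot (m/n) = k^2 \gamma > 1$, so the hypothesis of the lemma is satisfied, and
\[
  \Ex{H \sim \Hyp}{Z_{\delta \log n}(H)} ~\le~ \inparen{k^2 \gamma}^{2 \delta \log n} ~=~ n^{2 \delta \log(k^2\gamma)}.
\]
Now choose $\delta = \delta(\gamma) > 0$ (depending also on $k$, which is a fixed constant here) small enough that $2\delta \log(k^2 \gamma) \le 1/2$, i.e.\ $\delta \le \frac{1}{4 \log(k^2\gamma)}$. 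Then $\Ex{Z_{\delta \log n}(H)} \le n^{1/2}$.

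Next I would apply Markov's inequality: $\Pr{Z_{\delta \log n}(H) \ge n^{2/3}} \le n^{1/2} / n^{2/3} = n^{-1/6}$. Hence with probability at least $1 - n^{-1/6}$, the hypergraph $H$ contains at most $n^{2/3}$ cycles of length at most $\delta \log n$. Conditioned on this event, I would remove these cycles greedily: for each such cycle, delete one of its hyperedges. Deleting a single hyperedge destroys every cycle passing through it, so after at most $n^{2/3}$ deletions no cycle of length $\le \delta \log n$ remains. (Formally: let $\mathcal{Z}$ be the set of cycles of length at most $\delta\log n$; pick for each $\calC \in \mathcal{Z}$ an arbitrary hyperedge $e_\calC \in \calC$; the edge set $\{e_\calC : \calC \in \mathcal{Z}\}$ has size at most $|\mathcal{Z}| \le n^{2/3}$, and removing it kills every short cycle since each such cycle contained one of the removed edges.) This yields the claimed conclusion.

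I do not expect a serious obstacle here; this is a routine first-moment-plus-Markov argument. The only point requiring a little care is the bookkeeping on $\delta$: one must make sure the exponent $2\delta\log(k^2\gamma)$ lands strictly below $2/3$ (taking it $\le 1/2$ is comfortably enough) so that Markov gives the stated $n^{-1/6}$ failure probability, and one should note that $k$ is treated as a fixed constant so that $\delta$ may depend on it as well as on $\gamma$. One might also remark that a short cycle in $H$ uses distinct hyperedges by definition, so deleting one hyperedge per cycle is well-defined and the bound on the number of deleted edges is immediate.
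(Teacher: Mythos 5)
Your proof is correct and follows the same route as the paper: apply \cref{lem:csp:cycle-count} with $l = \delta\log n$, choose $\delta = \delta(\gamma)$ small enough that the expected number of short cycles is at most $\sqrt{n}$, and then apply Markov's inequality to get the $n^{-1/6}$ failure probability. The only difference is that you make the choice of $\delta$ explicit and spell out the one-deleted-edge-per-cycle argument, both of which the paper leaves implicit.
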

\begin{proof}
  As above, let $Z_l$ denote the number of cycles of length at most $l$. With the choice of $m,n,$ and $k$,
  we have $k^2\cdot\frac{m}{n}\ge 2$. By~\cref{lem:csp:cycle-count}, 
  $\Ex{Z_l} \leq \inparen{k^2 \cdot \frac{m}{n}}^{2l}$. Since $m = \gamma \cdot n$, there exists a
  $\ttg = \delta \cdot \log n$ such that  $\Ex{Z_l} \leq \sqrt{n}$. By Markov's inequality,
  $\Pr{Z_l \geq n^{2/3}} \leq n^{-1/6}$. Thus, with probability $1-n^{-1/6}$, one can remove
  all cycles of length at most $\delta \cdot \log n$ by deleting at most $n^{2/3}$ edges.
\end{proof}
One can also extend the analysis in \cite{AroraBLT06} to show that the hypergraphs are locally
sparse, \ie the number of hyperedges contained in a small set of vertices is small. For a hypergraph $H$
and a set $S \subseteq \Vtx(H)$, we use $E(S)$ to denote the edges contained in the set $S$. 
\begin{definition}
We say that $S \subseteq V(H)$ is $\eta$-sparse if $\abs{E(S)} \leq \frac{\abs{S}}{k-1-\eta}$. We
call a $k$-uniform hypergraph $H$ on $N$ vertices to be {\deffont $(\tau,\eta)$-sparse} if all
subsets $S\subset \Vtx(H), \abs{S}\le \tau \cdot \abs{\Vtx(H)}$, $S$ is $\eta$-sparse. 
We call $H$ to be $\eta$-sparse if it is $(1,\eta)$-sparse, \ie all subsets of vertices of $H$ are
sparse.
\end{definition}
We note here that while this notion of sparsity is a generalization of that considered
in~\cite{AroraBLT06}, it is also identical to the notions of expansion considered in works in proof
complexity (see \eg~\cite{BS01}) and later in works on integrality gaps~\cite{AroraAT05,
BenabbasGMT12,BarakCK15}.  We prove that random hypergraphs generated with our model are locally sparse:
\begin{restatable}{replemma}{locallysparse}\label{lem:csp:locally-sparse}
  Let $\eta < 1/4$ and $m=\gamma\cdot n$ for $\gamma > 1$. Then
  for $\tau \le \frac{1}{n_0} \cdot \inparen{\frac{1}{e \cdot k^{3k} \cdot
        \gamma}}^{1/\eta}$ the following holds:
  \[
    \Pr{H \sim \Hyp}{H~\text{is not}~(\tau,\eta)\text{-sparse}}~\leq~3 \cdot
    \inparen{\frac{k^{3k} \cdot \gamma}{n^{\eta/4}}}^{1/k} \mper
  \]
\end{restatable}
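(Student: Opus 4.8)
The plan is to do a standard first-moment (union bound) calculation over all ``small'' vertex subsets that violate sparsity, exploiting the fact that in the random hypergraph model $\Hyp$ each edge independently lands inside a fixed set $S$ with probability governed by $(|S|/n)^{\text{(number of parts touched)}}$, and that in the worst case this is at most $(|S|/n)^{k}$ when the type $\Gamma$ is spread across $k$ distinct parts — but we need to be a bit careful, since $\Gamma$ may force several of the $k$ coordinates into the same part, making it easier for an edge to fall inside $S$. First I would fix a candidate bad set $S$ with $|S| = s \le \tau N$ (where $N = n_0 n$) and parametrize it by how many vertices it has in each part, $s_j := |S \cap X_j|$, so $\sum_j s_j = s$. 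For a single random edge of type $(i_1,\dots,i_k)$, the probability it lies entirely inside $S$ is $\prod_{\ell=1}^k (s_{i_\ell}/n)$, and averaging over the type distribution $\Gamma$ gives $p_S := \Ex{(i_1,\dots,i_k)\sim\Gamma}{\prod_\ell (s_{i_\ell}/n)}$. Since each $s_j \le s$, we get the crude bound $p_S \le (s/n)^{k}$ only when all $i_\ell$ are distinct; in general $p_S \le (s/n)^{a}$ where $a$ is the minimum number of distinct parts in the support of any type, so I would instead just use $p_S \le \max_j (s_j/n)\cdot (s/n)^{k-1}\le (s/n)\cdot(s/n)^{k-1}=(s/n)^k$ — wait, that's wrong when a type repeats a part. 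The safe universal bound is $p_S \le (s/n)^{\lceil k/?\rceil}$; to avoid this subtlety I would simply bound each factor $s_{i_\ell}/n \le s/n$ and keep at least one factor honest, but the cleanest route is: every edge uses $k$ slots, and the probability all $k$ sampled vertices land in $S$ is at most $(s/n)$ raised to the number of \emph{distinct indices among $i_1,\dots,i_k$}. Since $\Gamma$ is arbitrary this could be as small as $1$ distinct index, giving only $p_S \le s/n$, which is too weak. The resolution — and this is the crux — is that we do \emph{not} need to bound it for arbitrary $\Gamma$: the statement must implicitly assume (or it follows from the construction) that $\Gamma$ is such that each type has all $k$ coordinates in distinct parts (the gap instance is built from a fixed template where each constraint's $k$ variables come from $k$ distinct blocks). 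Under that assumption $p_S \le (s/n)^k \cdot (n/s)^{?}$ — more precisely, writing $\sigma := s/n$, $p_S \le \sigma^k \cdot \big(\frac{n}{s}\big)^{0}$, i.e. $p_S \le \prod$ of $k$ factors each $\le \sigma$, but we can do better: $p_S \le \sigma^{k-1}\cdot \max_j(s_j/n)$ and summing the right way over the $\binom{n}{s_1}\cdots$ choices.

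Granting the ``distinct parts'' property, the main computation is: $S$ fails to be $\eta$-sparse iff $|E(S)| > \frac{s}{k-1-\eta}$. Since $|E(S)| \sim \mathrm{Bin}(m, p_S)$ with $p_S \le \sigma^k$ where $\sigma=s/n$ (using $\prod (s_{i_\ell}/n) \le \prod(s/n) = \sigma^k$ for distinct-part types — actually we can even use $\prod(s_{i_\ell}/n)$ and sum over the multinomial, but $\sigma^k$ suffices), the probability that a fixed $S$ with $|S|=s$ has $|E(S)| \ge b := \lceil s/(k-1-\eta)\rceil$ is at most $\binom{m}{b} p_S^{\,b} \le \big(\frac{em\sigma^k}{b}\big)^b$. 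Now union-bound over the at most $\binom{N}{s} \le (eN/s)^s$ sets of size $s$. Plugging $\sigma = s/n$, $b \approx s/(k-1)$, $m = \gamma n$, $N = n_0 n$, the per-size bound becomes roughly $\Big(\frac{e n_0 n}{s}\Big)^s \cdot \Big(\frac{e\gamma n (s/n)^k}{s/(k-1)}\Big)^{s/(k-1)} = \Big(\frac{e n_0 n}{s}\Big)^s \cdot \Big(e\gamma(k-1)(s/n)^{k-1}\Big)^{s/(k-1)}$, and collecting the powers of $(s/n)$ and the constants one sees this is at most $\big(C(k,\gamma,n_0)\cdot (s/n)^{\eta}\big)^{s}$ for a suitable constant $C = e k^{3k}\gamma$-ish after bounding $(k-1)\le k$, $e\gamma(k-1)\le ek\gamma$, etc., since $(k-1) - 1 - \frac{k-1-1}{k-1-\eta}\ge \eta$ when $\eta < 1/4$ (this is the inequality the hypothesis $\eta < 1/4$ is there to guarantee, via $k \ge 2$). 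Then for $s/n \le \tau n_0 = (e k^{3k}\gamma)^{-1/\eta}$ each term is at most $(s/N)^{\Omega(s)}$-type decaying, and summing the geometric-like series over $s$ from $1$ (or $2$) up to $\tau N$ yields the claimed bound $3\big(\frac{k^{3k}\gamma}{n^{\eta/4}}\big)^{1/k}$, where the dominant contribution is the smallest $s$ (namely $s = k$, the smallest set that can contain an edge), giving the $n^{-\eta/(4k)}$-type factor after noting $(k/n)^{\eta}$ raised to a power $\Theta(1)$.

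Concretely, the steps in order: (1) reduce to the event that some $S$ with $k \le |S| \le \tau N$ has $|E(S)| \ge |S|/(k-1-\eta)$; (2) for fixed size $s$, bound the number of such sets by $(eN/s)^s$ and the inclusion probability for a fixed set by $\binom{m}{b}(s/n)^{kb}$ with $b = \lceil s/(k-1-\eta)\rceil$; (3) combine and simplify, using $\eta < 1/4$ and $k \ge 2$ to extract a net exponent of $(s/n)^{\eta}$ (up to the constant $e k^{3k}\gamma$); (4) sum the resulting series over $s$, observing it is dominated by the $s = \Theta(1)$ term, and read off the stated tail bound. I expect the main obstacle to be step (3) — pinning down the exact constant $k^{3k}$ and verifying that the exponent arithmetic genuinely yields $(s/n)^{\eta}$ rather than some smaller power, which is where the hypotheses $\eta < 1/4$ and $\gamma > 1$ get used — together with the (perhaps only implicitly stated) requirement on $\Gamma$ that makes the inclusion probability as small as $(s/n)^k$ rather than merely $(s/n)$; if $\Gamma$ is genuinely arbitrary one must instead carry around the minimum-distinct-parts quantity of $\Gamma$, but for the template-based instances used in this chapter each type has $k$ distinct coordinates and the clean bound applies.
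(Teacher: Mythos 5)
You have the right overall shape: a first-moment union bound over all size-$s$ subsets, bounding the probability that a fixed $S$ contains too many edges by a binomial tail $\binom{m}{b}p_S^b$ and the number of candidate sets by $\binom{N}{s}$. But the digression about $\Gamma$ that derails your step (2) is a false alarm, and your proposed fix would needlessly restrict the statement. In the model $\Hyp$ each of the $k$ coordinates of an edge is sampled \emph{independently} from its part, so conditioned on a type $(i_1,\dots,i_k)$ the probability the edge lands in $S$ is exactly $\prod_{\ell}(s_{i_\ell}/n)$; since $s_j = |S\cap X_j| \le s$ for every $j$, this is at most $(s/n)^k$ no matter how many of the $i_\ell$ repeat --- a repeated part just contributes the same factor, still bounded by $s/n$, more than once. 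Your first bound was correct; the subsequent remarks that ``that's wrong when a type repeats a part'' and that $p_S$ is only $(s/n)$ raised to the number of distinct indices are both mistaken, and ``granting the distinct-parts property'' is an assumption neither needed nor present in the lemma.

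The paper handles this bookkeeping differently, by a device you did not try: as in the cycle-counting lemma, contract $[n_0]\times\{j\}$ to the single vertex $j\in[n]$. A bad $S\subseteq\Vtx(H)$ projects to a bad $S'\subseteq[n]$ of size at most $\tau n_0\cdot n$, and the contracted $H'$ is a random multi-hypergraph whose edges are $k$ i.i.d.\ uniform samples from $[n]$ --- independent of $\Gamma$ and of the $n_0$-partite structure altogether. This simultaneously dispatches the type bookkeeping, explains the $1/n_0$ in the hypothesis on $\tau$ (it becomes $\tau' = \tau n_0$), and replaces $p_S$ by the per-multiset bound $k!\,m/n^k$. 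Your direct route works too, but you then have to carry the extra $n_0$ through the $\binom{N}{s}$ count by hand.

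Finally, the exponent arithmetic in your sketch is not quite right: after simplification the per-size term has the form $\bigl(k^{3k}\gamma\,(h/n)^\eta\bigr)^{h/(k-1-\eta)}$, with outer exponent $h/(k-1-\eta)$ rather than $h$, and that difference propagates into the final $1/k$ power. Also, $\eta < 1/4$ is used not to extract the net exponent $\eta$ but when summing over sizes: the paper splits the sum at $h = n^{\eta/(2k)}$, where above the cutoff the hypothesis on $\tau$ drives the base below $1/e$ (so the tail is negligible), and below it the bound $\eta < 1/4$ (together with $k\ge2$) is what makes the small-$h$ contribution collapse to the stated $n^{-\eta/(4k)}$ factor. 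Your intuition that the small sets dominate is right, but the split is where the hypotheses actually get spent.
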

 We note that we will require the sparsity $\eta$ to be $O_{k,\gamma}(1/\log n)$. This gives
 sparsity only for sublinear size sets, as compared to sets of size $\Omega(n)$ in previous works
 where $\eta$ is a constant. 
For the proof of the lemma, we follow an approach similar to that of~\cref{lem:csp:cycle-count}: we collapse
the vertices of $H$ of the form $[n_0]\times\{j\}$ to vertex $j\in [n]$ to construct $H'$, and thus reducing
the problem to random multi-hypergraph form a random multipartite hypergraph.
The rest proof of the lemma is along the lines of several known proofs \cite{AroraAT05,
  BenabbasGMT12}. 
\begin{proof}
  As in the proof of~\cref{lem:csp:cycle-count}, given a random hypergraph $H$,
  we construct a hypergraph $H'$ ( by contracting all the vertices in
  $[n_0]\times\{j\}$ to $j\in [n]$ ).

  Consider a subset of vertices $S \subseteq \Vtx(H)$ and let $S' \subseteq \Vtx(H')$ be the
  corresponding contracted set in $H'$. Since each edge in $H$ corresponds to an edge in $H'$
  (counting multiplicities), we have
\[
\abs{E(S)} ~\geq~ \frac{\abs{S}}{k-1-\eta} 
\quad  \Rightarrow \quad 
\abs{E(S')} ~\geq~ \frac{\abs{S}}{k-1-\eta} ~\geq~ \frac{\abs{S'}}{k-1-\eta} \mper
\]
Thus, it suffices to show that $H'$ is $(\tau',\eta)$-sparse for $\tau' = \tau \cdot n_0$, since
$\abs{S'} \leq \tau \cdot N = (\tau \cdot n_0) \cdot n$.
Given any multiset in $[n]^k$, the probability that it corresponds to an edge in $H'$ is at most 
$(k!) \cdot (m/n^k) $. Thus, the probability that there exists a set $T$ of size at most $\tau'
\cdot n$, containing at least $\abs{T}/(k-1-\eta)$ edges (counting multiplicities) is at most
\[
\sum_{h=1}^{\tau' \cdot n} \binom{n}{h} \cdot \binom{h^k}{r} \cdot \inparen{\frac{k! \cdot
    m}{n^k}}^r \mcom
\]
where $r = \frac{h}{k-1-\eta}$. Note that  we also need to consider $h=1$ as edges in $H'$
correspond to multisets of size $k$, and so may not have all distinct vertices. Simplifying the
above using $\binom{a}{b} \leq \inparen{\frac{a \cdot e}{b}}^b$ and $k! \leq k^k$ gives
\begin{align*}
\sum_{h=1}^{\tau' \cdot n} \binom{n}{h} \cdot \binom{h^k}{r} \cdot \inparen{\frac{k! \cdot
    m}{n^k}}^r
&~\leq~
\sum_{h=1}^{\tau' \cdot n} \inparen{\frac{n \cdot e}{h}}^h \cdot \inparen{\frac{h^k \cdot e}{r}}^r \cdot \inparen{\frac{k^k \cdot
    m}{n^k}}^r \\
&~=~
\sum_{h=1}^{\tau' \cdot n} \inparen{e^{k-\eta} \cdot (k-1-\eta) \cdot k^k \cdot \gamma \cdot
  \inparen{\frac{h}{n}}^{\eta}}^{h/(k-1-\eta)} \\
&~\leq~
\sum_{h=1}^{\tau' \cdot n} \inparen{k^{3k} \cdot \gamma \cdot
  \inparen{\frac{h}{n}}^{\eta}}^{h/(k-1-\eta)} \\
\end{align*}
Let $\theta = \eta/(2k)$. We divide the above summation in two parts and first consider
\begin{align*}
\sum_{h=n^{\theta}}^{\tau' \cdot n} \inparen{k^{3k} \cdot \gamma \cdot
  \inparen{\frac{h}{n}}^{\eta}}^{h/(k-1-\eta)}
&~\leq~
\sum_{h=n^{\theta}}^{\tau' \cdot n} \inparen{k^{3k} \cdot \gamma \cdot
  \inparen{\tau'}^{\eta}}^{n^{\theta}/(k-1-\eta)} \\
&~\leq~
2 \cdot \exp\inparen{- \frac{n^{\theta}}{k}} \\
&~\leq~
2 \cdot \frac{k}{n^{\theta}}
\mcom
\end{align*}
for $\tau' \leq \inparen{e \cdot k^{3k} \cdot \gamma}^{-1/\eta}$.
Considering the first half of the summation, we get
\begin{align*}
\sum_{h=1}^{n^{\theta}} \inparen{k^{3k} \cdot \gamma \cdot
  \inparen{\frac{h}{n}}^{\eta}}^{h/(k-1-\eta)}
&~\leq~
n^{\theta} \cdot \inparen{\frac{k^{3k} \cdot \gamma}{n^{(1-\theta) \cdot \eta}}}^{1/k} \\
&~\leq~
\inparen{\frac{k^{3k} \cdot \gamma}{n^{\eta/4}}}^{1/k} 
~=~
k^3 \cdot \gamma^{1/k} \cdot n^{-\theta/2} \mper
\end{align*}
Combining the two bounds gives that the probability is at most
~$3k^3 \cdot \gamma^{1/k} \cdot n^{-\theta/2}$, which equals the desired bound.
\end{proof}

Charikar \etal~\cite{CMM07Metric} prove an analogue of~\cref{thm:csp:locally-l2} for metrics defined on
locally-sparse graphs. In fact, they  use a consequence of sparsity, which they call $\ell$-path
decomposability. To this end, we define the \emph{incidence graph}\footnote{This is the same notion
  as the constraint-variable graph considered in various works on lower bounds for CSPs.} 
associated with a hypergraph, on which we will apply their result.
\begin{definition}
Let $H = (V(H),E(H))$ be a $k$-uniform hypergraph. We define its {\deffont incidence graph} as the
bipartite graph $G_H$ defined on vertex sets $V(H)$ and $E(H)$, and edge set $\calE$ defined as
\[
\calE ~\defeq~ \inbraces{(v,e) ~\mid~ v \in V(H), ~e \in E(H), ~v \in e} \mper
\]
\end{definition}
Note that for any $u,v \in V(H)$, we have $d_{G_H}(u,v) = 2 \cdot d_H(u,v)$.
We prove that for a locally sparse hypergraph $H$, its incidence graph $G_H$ is also locally sparse.
\begin{restatable}{replemma}{sparsityvc}\label{lem:csp:sparsity-variable-constraint}
  Let $H$ be a $k$-uniform $(\tau,\eta)$-sparse hypergraph on $N$ vertices with $m=\gamma\cdot n$
  hyperedges. Then the incidence graph $G_H$ is $(\tau',\eta')$ sparse for $\tau' =
  \nfrac{\tau}{k\cdot(1+\gamma)}$ and $\eta' = \nfrac{\eta}{(1+\eta)}$.
\end{restatable}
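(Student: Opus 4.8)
The plan is to unpack the definitions and argue directly. Recall that $G_H$ is an ordinary graph on vertex set $\Vtx(H)\sqcup\Edg(H)$, so $\abs{\Vtx(G_H)}=N+m$, and ``$G_H$ is $(\tau',\eta')$-sparse'' means every $T\subseteq\Vtx(G_H)$ with $\abs T\le\tau'(N+m)$ satisfies $\abs{E(T)}\le\frac{\abs T}{2-1-\eta'}=\frac{\abs T}{1-\eta'}$; since $\eta'=\eta/(1+\eta)$ gives $1-\eta'=1/(1+\eta)$, the target bound is simply $\abs{E(T)}\le(1+\eta)\abs T$. Fix such a $T$ and split it along the bipartition: let $S:=T\cap\Vtx(H)$ and $F:=T\cap\Edg(H)$, so $\abs T=\abs S+\abs F$. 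Because $G_H$ is bipartite between vertices and hyperedges of $H$, an edge of $G_H$ with both endpoints in $T$ is exactly an incidence pair $(v,e)$ with $v\in S$, $e\in F$, $v\in e$; hence $\abs{E(T)}=\sum_{e\in F}\abs{e\cap S}$, and the whole problem reduces to bounding this sum by $(1+\eta)(\abs S+\abs F)$.

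To that end I would introduce the ``fattened'' set $S':=S\cup\bigcup_{e\in F}e\subseteq\Vtx(H)$. Every $e\in F$ is then contained in $S'$, so $F\subseteq E(S')$, and once we have checked $\abs{S'}\le\tau N$ (see below) the $(\tau,\eta)$-sparsity of $H$ gives $\abs F\le\abs{E(S')}\le\frac{\abs{S'}}{k-1-\eta}$, i.e.\ $\abs{S'}\ge(k-1-\eta)\abs F$ (the sparsity notion is only non-trivial for $\eta<k-1$, which we assume, so $k-1-\eta>0$). On the other hand, writing $\abs{e\cap S}=k-\abs{e\setminus S}$ (using $k$-uniformity) and noting $\sum_{e\in F}\abs{e\setminus S}\ge\bigl\lvert\bigcup_{e\in F}(e\setminus S)\bigr\rvert=\abs{S'}-\abs S$, we obtain
\begin{align*}
  \abs{E(T)}~=~\sum_{e\in F}\abs{e\cap S}
  &~=~k\abs F-\sum_{e\in F}\abs{e\setminus S}
  ~\le~k\abs F-\bigl(\abs{S'}-\abs S\bigr)\\
  &~\le~k\abs F-(k-1-\eta)\abs F+\abs S
  ~=~\abs S+(1+\eta)\abs F~\le~(1+\eta)\abs T,
\end{align*}
which is exactly the desired edge bound in $G_H$.

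It remains to verify the size hypothesis needed to invoke sparsity of $H$, namely $\abs{S'}\le\tau N$. Crudely, $\abs{S'}\le\abs S+\sum_{e\in F}\abs e=\abs S+k\abs F\le k(\abs S+\abs F)=k\abs T\le k\tau'(N+m)=\frac{\tau}{1+\gamma}(N+m)$, and since $m=\gamma n\le\gamma N$ (recall $N=n_0 n\ge n$) we have $N+m\le(1+\gamma)N$, so $\abs{S'}\le\tau N$ as required. This computation is precisely what forces the parameters $\tau'=\tau/(k(1+\gamma))$ and $\eta'=\eta/(1+\eta)$: the factor $k$ absorbs the worst case $\abs{S'}\le k\abs T$ of the fattening step, and the factor $1+\gamma$ converts a fraction of $\abs{\Vtx(G_H)}=N+m$ into a fraction of $\abs{\Vtx(H)}=N$. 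The only place where anything could go wrong is this fattening $S\mapsto S'$ overshooting the sparsity threshold of $H$; everything else is bookkeeping, and the choice of $\tau'$ is calibrated exactly to prevent that.
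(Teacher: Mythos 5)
Your proof is correct and follows essentially the same line as the paper's: form a "fattened" vertex set in $H$ from the chosen subset of $G_H$, invoke the $(\tau,\eta)$-sparsity of $H$ on it, and close with the counting inequality $\sum_{e\in F}\abs{e\setminus S}\ge\abs{S'}-\abs S$. The only difference in presentation is that the paper first passes to a densest subgraph of $G_H$ (which has no isolated vertices, so that $V'\subseteq\bigcup_{e\in E'}e$), whereas you work with an arbitrary $T$ directly and sidestep the isolated-vertex issue by explicitly including $S$ in $S'$; this is a modest simplification, and the rest of the bookkeeping and parameter choices coincide.
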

\begin{proof}
  Let $\tau'=\nfrac{\tau}{k\cdot(1+\gamma)}$ and let $G_H$ be the incidence graph with
  $N+m = (1+\gamma) \cdot N$ vertices. Let $G'$ be is the densest subgraph of $G_H$, among all subgraphs
  of size at most $\tau' \cdot (N+m)$. 
Let the vertex set of $G'$ be $V' \cup E'$ where $V' \subseteq V(H)$ and $E' \subseteq E(H)$, and
let the edge-set be $\calE'$. There cannot be any isolated vertices in $G'$ since removing those will only
increase the density.

Let $S \subseteq V(H)$ be the set of all vertices contained in all edges in $E'$ \ie 
$S ~\defeq~ \inbraces{v \in V(H) ~\mid~ \exists e \in E'~\text{s.t.}~ v \in e}$. Note that $V'
\subseteq S$, since there are no isolated vertices,  
and $E' \subseteq E(S)$, where $E(S)$ denotes the set of hyperedges contained in $S$.

By our choice of parameters, $\abs{S} \leq k \cdot \abs{E'} \leq k \cdot \tau' \cdot (N+m) \leq \tau
\cdot N$.
Thus, using the sparsity of $H$, we have
\[
\abs{E'} ~\leq~ \abs{E(S)} ~\leq~ \frac{\abs{S}}{k-1-\eta} \mper
\]
Also, since each hyperedge of $E'$ can include at most $k$ vertices in $S$, and since each edge in
$\calE'$ is incident on a vertex in $V'$, we have
\[
\abs{S} - \abs{V'} ~\leq~ k \cdot \abs{E'} - \abs{\calE'} \mper
\]
Combining the two inequalities gives
\[
(k-1-\eta) \cdot \abs{E'} ~\leq~ \abs{V'} + k \cdot \abs{E'} - \abs{\calE'}
\quad \implies \quad
\abs{\calE'} ~\leq~ (1+\eta) \cdot \abs{E'} + \abs{V'}  \mper
\]
Hence, we get that $\abs{\calE'} \leq \frac{\abs{V'} + \abs{E'}}{(1-\eta')}$ for $\eta' = \frac{\eta}{(1+\eta)}$.
\end{proof}
Charikar \etal~\cite{CMM07Metric} defined the following structural property of a graph.
\begin{definition}[\cite{CMM07Metric}]\label{def:csp:l-path-decmpsble} 
  A graph $G$ is {\sf $\ell$-path decomposable} if every $2$-connected subgraph $G'$ of $G$,
  such that $G'$ is not an edge, contains a path of length $\ell$ such that every vertex of the
  path has degree at most $2$ in $G'$.
\end{definition}
The above property was also implicitly used by Arora \etal~\cite{AroraBLT06}, who proved the
following (see Lemma 2.12 in~\cite{AroraBLT06}):
\begin{lemma}\label{lem:csp:path-decomposable}
  Let $\ell>0$ be an integer and $0< \eta < \frac{1}{3\ell-1} < 1$.
  Let $G$ be a $\eta$-sparse graph with girth $\girth>\ell$.
  Then $G$ is $\ell$-path decomposable.
\end{lemma}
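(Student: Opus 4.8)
The plan is to argue by contraposition: assuming $G$ is \emph{not} $\ell$-path decomposable, I will exhibit a vertex set $S$ witnessing that $G$ is not $\eta$-sparse, i.e.\ with $\abs{E(S)} > \abs{S}/(1-\eta)$ (this is the $k=2$ case of the sparsity condition, since $k-1-\eta = 1-\eta$ for graphs). So fix a $2$-connected subgraph $G' \subseteq G$ that is not a single edge and that contains no path of length $\ell$ all of whose vertices have degree $\le 2$ in $G'$; the goal is to show that already $S = \Vtx(G')$ is too dense.

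Since $G'$ is $2$-connected and not an edge, it has at least three vertices and minimum degree $\ge 2$. If no vertex of $G'$ has degree $\ge 3$, then $G'$ is a single cycle, and the hypothesis $\girth > \ell$ forces its length to be $\ge \ell+1$, so $G'$ contains a path of length $\ell$ with all vertices of degree $2$ — contradicting the choice of $G'$. Hence the set $W$ of vertices of $G'$ of degree $\ge 3$ is nonempty. I then suppress the degree-$2$ vertices: let $H'$ be the multigraph on $W$ in which each maximal ``bond'' — a path of $G'$ whose internal vertices all have degree $2$ — becomes an edge. This $H'$ is loopless (a loop at $w$ would force $w$ to be a cut vertex of $G'$, contradicting $2$-connectivity), and every vertex of $H'$ has degree $\ge 3$, so $\abs{E(H')} \ge \tfrac32\abs{W}$.

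The core is an edge/vertex count on $G'$. Put $p = \abs{W}$ and $q = \abs{E(H')}$. Each edge of $G'$ lies on a unique bond, and a bond of length $L$ contributes $L$ edges and $L-1$ internal (degree-$2$) vertices to $G'$; summing over bonds gives $\abs{E(G')} - \abs{\Vtx(G')} = q - p$ and $\abs{E(G')} = q + (\text{total \# internal vertices})$. The hypothesis that $G'$ has no length-$\ell$ path of degree-$\le 2$ vertices bounds the number of internal vertices of any single bond by $O(\ell)$ (the internal vertices of a bond form exactly such a path), so $\abs{E(G')} \le O(\ell)\cdot q$; combining with $q - p \ge q/3$ (from $\abs{E(H')}\ge \tfrac32\abs{W}$) gives $\bigl(\abs{E(G')}-\abs{\Vtx(G')}\bigr)\big/\abs{E(G')}\ge 1/O(\ell)$. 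Taking $S = \Vtx(G')$ and using $\abs{E_G(S)} \ge \abs{E(G')}$, this forces $\abs{E(S)} > \abs{S}/(1-\eta)$ once $\eta$ is below the corresponding $1/O(\ell)$ threshold; running this accounting with the explicit constants, exactly as in Lemma~2.12 of~\cite{AroraBLT06}, yields the stated bound $\eta < \tfrac{1}{3\ell-1}$.

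I expect the only genuinely fussy step to be that last constant-chasing, specifically pinning down the exact cutoff for the maximum length of a bond in a non-decomposable $G'$ — whether it is $\ell-1$, $\ell$, or $\ell+1$ edges — since this depends on reading Definition~\ref{def:csp:l-path-decmpsble} carefully (whether the two degree-$\ge 3$ endpoints of a bond are counted toward a ``path of length $\ell$''), and on invoking $\delta(H') \ge 3$ and the girth hypothesis tightly rather than crudely. Everything else — reducing to a $2$-connected subgraph (built into the definition), the cycle base case, and the linear accounting — should be routine.
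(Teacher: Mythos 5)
Your decomposition-into-bonds argument is the standard one, and it is essentially the argument from Lemma~2.12 of~\cite{AroraBLT06}; note that the paper itself does not prove this lemma --- it states it with a citation --- so your reconstruction is a real contribution, not a restatement. All the structural steps are correct: the reduction to $2$-connected $G'$, the cycle base case via girth $>\ell$, looplessness of $H'$ from $2$-connectivity, the minimum-degree-$3$ bound $\abs{E(H')}\ge\tfrac32\abs{W}$, and the identity $\abs{E(G')}-\abs{\Vtx(G')}=q-p$.

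One concrete caution about the ``fussy step'' you flag: with this paper's exact conventions the arithmetic does not land on $\tfrac{1}{3\ell-1}$. Under \cref{def:csp:l-path-decmpsble} a path of length $\ell$ has $\ell+1$ vertices, all required to have degree $\le 2$; the interior of a bond of $L$ edges has only $L-1$ such vertices, forming a path of length $L-2$. So non-decomposability forces $L\le\ell+1$, whence $\abs{E(G')}\le(\ell+1)q$. Combined with $q-p\ge q/3$ and the sparsity form $\abs{E(S)}\le\abs{S}/(1-\eta)$ (i.e.\ $(\abs{E(S)}-\abs{S})/\abs{E(S)}\le\eta$), this gives non-sparsity at $S=\Vtx(G')$ only when $\eta<\tfrac{1}{3(\ell+1)}$, a \emph{stronger} hypothesis than the stated $\eta<\tfrac{1}{3\ell-1}$. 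The stated constant arises under the closely related but not identical conventions in~\cite{AroraBLT06} (where ``length $\ell$'' counts vertices and sparsity is $\abs{E(S)}\le(1+\eta)\abs{S}$; then $L\le\ell$, $\abs{\Vtx(G')}\le q(\ell-\tfrac13)$, and $(m-n)/n\ge\tfrac{1}{3\ell-1}$). So you should either (i) adopt the~\cite{AroraBLT06} conventions explicitly, or (ii) state the lemma with the weaker threshold $\eta<\tfrac{1}{3(\ell+1)}$; either suffices for the downstream application, which invokes the lemma only for $\ell\le 1/(4\eta)$, comfortably inside both thresholds. This is worth pinning down rather than gesturing at, because the as-written numerator/denominator do not reproduce $3\ell-1$.
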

Recall that we defined the metrics $\dmu$ and $\rho_\mu$ on $H$ as (for $u\ne v$) :
\[
    d^H_\mu(u,v) ~\coloneqq~ 1-(1-\mu)^{2\cdot d_H(u,v)}
     \qquad \text{ and } \qquad \rho^H_\mu(u,v) ~\coloneqq~
     \sqrt{\frac{2\cdot d^H_\mu(u,v)+ \mu}{1+\mu}}\mcom
\]
For a graph $G$, we define the following two metrics, for $u\ne v$:
\[
  d_\mu^G(u,v)~\coloneqq~1-{(-1)}^{d_G(u,v)}{(1-\mu)}^{d_G(u,v)}
  \quad \text{ and }  \quad\rho_\mu^G(u,v)~\coloneqq~\sqrt{\frac{2\cdot d_\mu^G(u,v)+ \mu}{1+\mu}}\mper
\]
We note that if $H$ is a hypergraph and $G_H$ is its incidence graph, then the metrics 
$d_\mu^{G_H}$ and $\rho_\mu^{G_H}$ restricted to $V(H)$, coincide with the metrics $d_\mu$ and
$\rho_\mu$ defined on $H$.
Charikar \etal~proved the following theorem (see Theorem $5.2$) in~\cite{CMM09SACSP}.
\begin{theorem}[\cite{CMM09SACSP}]\label{thm:csp:l2-graph-embedding}
  Let $G$ be a graph on $n'$ vertices with maximum degree $D$. Let
  $t< \sqrt{n'}$ and $\ell > 0$ be such that for $t'= D^{\ell+1}\cdot t$, every
  subgraph of $G$ on at most $t'$ vertices is $\ell$-path decomposable. Also, let $\mu$, $t$ and
  $\ell$ satisfy the relation ${(1-\mu)}^{\ell/9} \leq \frac{\mu}{2(t+1)}$.
Then for every subset $S$ of at most $t$ vertices there exists
  a mapping $\psi_S$ from $S$ to \emph{unit sphere} in $\ell_2$ such that all $u,v \in S$:
  \[
    \norm{\psi_S(u)-\psi_S(v)}_2~=~\rho_\mu^G(u,v) \mper
  \]
\end{theorem}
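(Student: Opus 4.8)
Put $\rho \defeq \mu - 1 \in (-1,0)$. Since $d_\mu^G(u,v) = 1 - (-1)^{d_G(u,v)}(1-\mu)^{d_G(u,v)} = 1 - \rho^{\,d_G(u,v)}$, a map $\psi_S$ into the unit sphere satisfies $\|\psi_S(u) - \psi_S(v)\|_2 = \rho_\mu^G(u,v)$ for all $u,v \in S$ precisely when its Gram matrix $M$ (indexed by $S$) has $M_{uu} = 1$ and $M_{uv} = \bigl(\mu + 2\rho^{\,d_G(u,v)}\bigr)/\bigl(2(1+\mu)\bigr)$ for $u \ne v$. Writing $N$ for the $S\times S$ matrix with $N_{uv} = \rho^{\,d_G(u,v)}$ (so $N_{uu} = 1$), $I$ for the identity, and $J$ for the all-ones matrix on $S$,
\[
  M \;=\; \frac{\mu}{2(1+\mu)}\,(J + I) \;+\; \frac{1}{1+\mu}\,N ,
\]
and since $J + I \succeq 0$ with positive coefficients, it suffices to prove $N \succeq 0$; then any factorization $M = \Psi^\top\Psi$ furnishes $\psi_S$ with $\|\psi_S(u)\|_2^2 = M_{uu} = 1$ and the prescribed pairwise distances. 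So everything reduces to showing the $S\times S$ matrix of signed powers $(\mu-1)^{d_G(u,v)}$ is PSD. This is genuinely where the hypotheses bite: on a triangle $N$ has eigenvalue $1 + 2\rho = 2\mu - 1 < 0$ once $\mu < \tfrac12$, whereas on any path $N$ is PSD for every $|\rho| < 1$.

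\textbf{Localization.} The matrix $N$ depends only on the distances $d_G(u,v)$ for $u,v \in S$. Let $G_0$ be the subgraph of $G$ induced on the ball of radius $\ell$ about $S$. As $\deg_G \le D$ we have $|V(G_0)| \le |S|\,(1 + D + \cdots + D^{\ell}) \le D^{\ell+1} t = t'$, so $G_0$ is $\ell$-path decomposable; moreover any shortest $u$–$v$ path with $u,v \in S$ and $d_G(u,v) \le 2\ell$ is contained in $G_0$ (every vertex of it is within distance $d_G(u,v)/2 \le \ell$ of $\{u,v\}$), hence $d_{G_0}(u,v) = d_G(u,v)$ for all such pairs.

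\textbf{Exact spherical embedding of $G_0$.} I would now construct, exactly, unit vectors $\{g(w)\}_{w \in V(G_0)}$ with $\langle g(w), g(w')\rangle = \rho^{\,d_{G_0}(w,w')}$, by induction following an open ear decomposition of the $2$-connected blocks of $G_0$, maintaining at each stage an exact embedding of the $d_{G_0}$-metric on the current subgraph. On a tree this is the standard propagation construction: fix a root, let $g(\mathrm{root})$ be a fresh unit vector, and for each edge from parent to child set $g(\mathrm{child}) = \rho\, g(\mathrm{parent}) + \sqrt{1-\rho^2}\, e$ with $e$ a fresh unit vector orthogonal to everything placed so far; this yields $\langle g(w), g(w')\rangle = \rho^{\,d(w,w')}$ on a tree for every $|\rho| < 1$, with no constraint on $\mu$. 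A $2$-connected block that is not an edge contains, by $\ell$-path decomposability, an induced path of length $\ell$ whose interior vertices all have degree $2$, and the substantive inductive step is to splice such a long degree-$2$ ear between its two already-embedded endpoints while preserving every inner product. This is exactly where the relation ${(1-\mu)}^{\ell/9} \le \frac{\mu}{2(t+1)}$ is consumed: the only obstruction is the ``wrap-around'' inner product between the two endpoints read around the ear, whose deviation from the value forced by the rest of the embedding is $O\!\bigl((1-\mu)^{\ell}\bigr)$, and this deficit is reconcilable by allocating a single fresh reserved coordinate of appropriately small magnitude on the ear's interior, leaving all other Gram entries intact.

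\textbf{Far pairs, conclusion, and the main obstacle.} For $u,v \in S$ with $d_G(u,v) > 2\ell$ (including $u,v$ in different components of $G_0$, with the convention $\rho^{\infty} = 0$), the target entry $\rho^{\,d_G(u,v)}$ and the entry $\rho^{\,d_{G_0}(u,v)}$ produced above differ by at most $2(1-\mu)^{2\ell}$ in absolute value, so $N = \widetilde N + E$ with $\widetilde N$ the (PSD) Gram matrix of $\{g(u)\}_{u \in S}$ and $\|E\| \le 2|S|\,(1-\mu)^{2\ell} \le 2(t+1)(1-\mu)^{2\ell}$; combining the lower bound on $\lambda_{\min}(\widetilde N)$ supplied by the reserved-coordinate structure of the construction with ${(1-\mu)}^{\ell/9} \le \frac{\mu}{2(t+1)}$ makes this perturbation small enough that $N \succeq 0$, and the first paragraph then yields $\psi_S$ on the unit sphere. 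I expect the main obstacle to be the splicing step above — producing an \emph{exact} spherical embedding of an $\ell$-path decomposable graph, and in particular inserting a long degree-$2$ ear between two placed endpoints without disturbing the other Gram entries while keeping the recursion distance-faithful — since that is where the quantitative trade-off between $\mu$ and $\ell$ lives; extracting the accompanying least-eigenvalue bound used in the far-pairs correction is the secondary difficulty.
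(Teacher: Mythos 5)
First, a point of comparison: the paper does not prove this statement at all --- it is quoted as Theorem 5.2 of \cite{CMM09SACSP} and used as a black box --- so there is no in-paper proof to measure your attempt against. Judged on its own, your opening reduction is correct: a valid $\psi_S$ exists iff the matrix $M = \tfrac{\mu}{2(1+\mu)}(J+I) + \tfrac{1}{1+\mu}N$ with $N_{uv} = (\mu-1)^{d_G(u,v)}$ is PSD, and your localization of $d_G$ to the radius-$\ell$ ball around $S$ is fine.

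The genuine gap is in the very next sentence, where you discard the term $\tfrac{\mu}{2(1+\mu)}I$ and declare that "it suffices to prove $N \succeq 0$." That slack term is the entire reason the metric $\rho_\mu$ carries the additive ``$+\mu$'' in its definition, and throwing it away forces you into proving something false: an \emph{exact} spherical embedding of all of $G_0$ with inner products $\rho^{\,d_{G_0}(w,w')}$, i.e.\ positive semidefiniteness of $(\rho^{\,d(u,v)})$ on the full vertex set of an $\ell$-path decomposable graph. This already fails on an odd cycle: on $C_7$ with $\rho=-0.9$ the all-ones eigenvalue of $(\rho^{\,d(u,v)})$ equals $1+2(\rho+\rho^2+\rho^3)<0$, and $C_7$ is $\ell$-path decomposable for every $\ell\le 6$. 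Structurally, your ear-splicing step is over-determined: the two endpoints of a long degree-$2$ ear already have their inner product $\rho^{\,d'}$ fixed by the previously embedded part of the graph, the ear forces a second incompatible value read around it, and no modification confined to the ear's interior can reconcile the two without moving an endpoint and hence disturbing all other Gram entries. The $O((1-\mu)^{\ell})$ discrepancy cannot be removed, only \emph{absorbed} --- and the hypothesis $(1-\mu)^{\ell/9}\le\frac{\mu}{2(t+1)}$ exists precisely so that $N$ restricted to $S$ differs from a genuinely PSD (tree-like) matrix by an error of spectral norm at most $|S|\cdot(1-\mu)^{\Omega(\ell)}\le\mu/2$, which the discarded $\tfrac{\mu}{2(1+\mu)}I$ then soaks up. Two further symptoms of the same misstep: the hypothesis couples $\mu$ with $t=|S|$ rather than with $|V(G_0)|\le D^{\ell+1}t$, so it cannot be ``consumed'' in a step that exactly embeds all of $G_0$; and your far-pair correction requires a quantitative lower bound on $\lambda_{\min}(\widetilde N)$ that your construction never supplies --- the slack you threw away in the first paragraph is exactly that lower bound.
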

We use this theorem to prove the main theorem of the section.
\begin{proofof}{of~\cref{thm:csp:locally-l2}}
  Let $H'\sim \Hyp$ with $m=\gamma\cdot n$ hyperedges and $N=n_0\cdot n$
  vertices. Given $\eps>0$, from~\cref{lem:csp:low-degree} we have that with
  high probability at least $1-\eps/2$, there exists $H_1$ such that 
  the maximum degree of $H_1$ is at most $D=100\cdot\log\inparen{\frac{2n_0}{\eps}}
  \cdot k \cdot \gamma$ with $\abs{\Edg(H')\setminus\Edg(H_1)}\le (\eps/2)\cdot m$.

Using~\cref{cor:csp:large-girth} we also have that there exists $\delta > 0$, such that with probability
at least $1-\eps/4$ (for large enough $n$) $H'$ has a sub-hypergraph $H_2$ with $\ttg \geq
\delta\cdot\log n$ and $\abs{\Edg(H') \setminus\Edg(H_2)}\le (\eps/4) \cdot m$.
By~\cref{lem:csp:locally-sparse}, there exists $\eta = \Omega_{n_0,k,\gamma,\eps}(1/(\log n))$ such
that $H'$ is $(\tau,\eta)$-sparse with probability at least $1-\eps/4$, for $\tau \geq n^{-1/4}$.
  
Hence with probability $1-\eps$, we have that $H = (V(H'), E(H_1) \cap E(H_2))$ satisfies:
  \begin{itemize}
    \item Degree of $H$ is bounded above by $D$.
    \item $H$ is $(\tau,\eta)$-sparse (for $\tau \geq n^{-1/4}$ and $\eta =
      \Omega_{n_0,k,\gamma,\eps}(1/(\log n)$).
    \item Girth of $H$ is at least $\ttg > \delta\cdot\log n$.
    \item $\abs{\Edg(H')\setminus\Edg(H)}\le \eps\cdot m$.
  \end{itemize}
We now show that the metric $\rho^H_{\mu}$ is locally $\ell_2$ embeddable.

Let $G=G_H$ be the incidence graph for the hypergraph $H$. Note that $N \leq \abs{\Vtx(G)} \leq 
  N\cdot (1+\gamma)$ and degree of $G$ is also bounded by $D$. 
  Since a cycle in $G$ is also a cycle in $H$, the girth of $G$ is at also least $\girth \geq \delta
  \cdot \log n$.

  By~\cref{lem:csp:sparsity-variable-constraint}, we have $G$ is $(\frac{\tau}{k (1+\gamma)},
  \frac{\eta}{1+\eta})$-sparse. 
By~\cref{lem:csp:path-decomposable}, any subgraph of $G$ on at most $\frac{\tau}{k(1+\gamma)}
\cdot (N+m)$ vertices is $\ell$-path decomposable for any $\ell \leq \min\{\girth,
  1/(4\eta)\}$. 
Since $D = 100 \cdot k\gamma \cdot \log(2n_0/\eps)$, there exists $\ell_0 =
\Omega_{k,\gamma,n_0,\eps}(\log n)$ such that $D^{\ell_0+1} \leq n^{1/6}$. We choose $\ell =
\min\inbraces{\girth, 1/(4\eta), \ell_0}$.

Let $\mu_0$ be the smallest $\mu$ such that $\exp\inparen{-\mu\ell/9} \leq \frac{\mu}{2(t+1)}$ (note
that $\frac{1}{\mu} \cdot \exp\inparen{-\mu\ell/9}$ is decreasing in $\mu$). 
Since we must have $\mu \geq 1/\ell$, there exists a $\mu_0$ satisfying
\[
\mu_0 ~\leq~ \frac{9}{\ell} \cdot \inparen{\ln(2(t+1)) + \ln \ell} \mper
\]
From our choice of $\ell$, there exist constants $c = c(k,\gamma,n_0,\eps)$ and $\theta =
\theta(k,\gamma,n_0,\eps) < 1/2$ such that $\mu_0 \leq c \cdot \frac{\log t + \log \log n}{\log n}
< 1$ when $t \leq n^{\theta}$. 
Then, for any $\mu \in [\mu_0,1)$, we have $(1-\mu)^{\ell/9} \leq \exp(-\mu\ell/9) \leq
\frac{\mu}{2(t+1)}$.

We can now apply~\cref{thm:csp:l2-graph-embedding} to construct the embedding. 
Given any subset $S$ of $\Vtx(H)$ of size at most $t\le n^{\theta}$, note that $S$ is also a subset
of $\Vtx(G)$. Moreover, we have $t \leq n^{\theta} \leq (N+m)^{1/2}$. Also, we have $t \cdot
D^{\ell+1} \leq n^{1/2} \cdot n^{1/6} = n^{2/3} \leq \frac{\tau}{k(\gamma+1)} \cdot (N+m)$. Thus,
any subgraph of $G$ on $t \cdot D^{\ell+1}$ vertices is $\ell$-path decomposable. 

Thus, when $\mu \geq \mu_0$, by~\cref{thm:csp:l2-graph-embedding} 
there exists a mapping $\psi_S$ from $S$ to the unit sphere, such that for all $u,v \in S$, we have
\[
\norm{\psi_S(u)-\psi_S(v)}_2~=~\rho^G_{\mu}(u,v)~=~\rho^H_{\mu}(u,v) \mcom
\]
where the last equality uses the fact that for all $u,v \in \Vtx(H)$, $\rho^H_{\mu}(u,v) =
\rho^G_{\mu}(u,v)$ since $d_{G}(u,v) = 2 \cdot d_H(u,v)$.
\end{proofof}

\section{Decompositions of Hypergraphs from Local Geometry}\label{sec:csp:decomposition}
We will construct the Sherali-Adams solution by partitioning the given subset of vertices in to trees,
and then creating a natural distribution over satisfying assignments on trees. We define below the
kind of partitions we need.
\begin{definition}\label{def:csp:partioning-scheme}
Let $X$ be a finite set. For a set $S$, let $\calP_S$ denote a distribution over partitions of $S$. For
$T \subseteq S$, let $\calP_{S|T}$ be the distribution over partitions of $T$ obtained by restricting the
partitions in $\calP_S$ to the set $T$.  We say that a collection of distributions
$\inbraces{\calP_S}_{\abs{S} \leq t}$ forms a {\deffont consistent partitioning scheme of order} $t$, if 
\[
\forall S \subseteq X, \abs{S} \leq t ~\text{and}~ \forall T \subseteq S 
\qquad \calP_T = \calP_{S|T} \mper
\]
\end{definition}
In addition to being consistent as described above, we also require the distributions to have small
probability of cutting the hyperedges for the hypergraphs corresponding to our CSP instances.
We define this property below.
\begin{definition}\label{def:csp:sparse-partitioning}
Let $H = (V,E)$ be a $k$-uniform hypergraph. Let $\inbraces{\calP_S}_{\abs{S} \leq t}$ be a consistent
partitioning scheme of order $t$ for the vertex set $V$, with $t \geq k$. We say the scheme
$\inbraces{\calP_S}_{\abs{S} \leq t}$ is $\eps$-{\deffont sparse} for $H$ if 
\[
\forall e \in E \qquad \Pr{P \sim \calP_e}{P \neq \inbraces{e}} ~\leq~ \eps \mper
\]
\end{definition}
In this section, we will prove that the hypergraphs arising from random CSP instances
admit sparse and consistent partitioning schemes. Recall that for a hypergraph $H$, we define
(\cref{def:csp:metric-dmu})  the
metrics $d^H_{\mu}$ and $\rho^H_{\mu}$ as:
\[
  d^H_\mu(u,v)~\coloneqq~1-{(1-\mu)}^{2\cdot d_H(u,v)}\qquad \text{ and }
  \qquad \rho^H_\mu(u,v)~\coloneqq~\sqrt{\frac{2\cdot d^H_\mu(u,v)+ \mu}{1+\mu}}\mcom
\]
\begin{lemma}\label{lem:csp:partitioning}
Let $H = (V,E)$ be $k$-uniform hypergraph and let $d_{\mu}$ be the metric as defined
above. Let $H$ be such that for all sets $S \subseteq V$ with $\abs{S} \leq t$, the metric induced
on $\rho_{\mu}$ on $S$ is isometrically embeddable into $\ell_2$. Then, there exists 
$\eps \leq 10k \cdot \sqrt{\mu \cdot t}$ and $\Delta_H=O(1/\mu)$ such that  $H$ admits an
$\eps$-sparse consistent partitioning scheme of order $t$, with each partition consisting of clusters of diameter
at most $\Delta_H$ in $H$.
\end{lemma}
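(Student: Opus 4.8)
The plan is to ``step out of the hypergraph'' and partition externally, in $\ell_2$. Although the full metric $\rho^H_\mu$ on $V$ need not be $\ell_2$-embeddable, its restriction to any $S$ with $\abs{S}\le t$ is --- this is exactly the hypothesis --- so I would pull back to each such $S$ a random partition manufactured inside Euclidean space. The tool is the separating decomposition of Charikar \etal~\cite{CharikarCCGP98}: for a finite $P\subset\ell_2$ whose affine span has dimension $m$ and any $\Delta'>0$, there is a distribution $\mathcal{R}(P,\Delta')$ over partitions of $P$ into clusters of $\ell_2$-diameter at most $\Delta'$ under which each pair $x,y\in P$ lands in distinct clusters with probability at most $O(\sqrt m)\cdot\lVert x-y\rVert_2/\Delta'$. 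Two features will do all the work: the scheme is \emph{rigid-motion invariant} (the partition is obtained by intersecting $P$ with a fixed bounded-diameter partition of $\ell_2$ that has first been subjected to a uniformly random rotation and translation, so its law depends only on the \emph{metric} induced on $P$), and it is \emph{restriction consistent} (the cluster of a point is a function of that point and the scheme's internal randomness alone, so for $P'\subseteq P$ the restriction of $\mathcal{R}(P,\Delta')$ to $P'$ is distributed as $\mathcal{R}(P',\Delta')$). I would fix $\Delta'=1$ and, for each $S$ with $\abs{S}\le t$, pick any isometric embedding $\psi_S\colon(S,\rho^H_\mu|_S)\to\ell_2$ and let $\calP_S$ be the law of the partition $\inbraces{\psi_S^{-1}(B):B\in Q}$ of $S$, where $Q\sim\mathcal{R}(\psi_S(S),1)$. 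By rigid-motion invariance $\calP_S$ does not depend on the choice of $\psi_S$: it is a function of the finite metric space $(S,\rho^H_\mu|_S)$ alone.

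Consistency is then the crux, and it comes for free. Take $T\subseteq S$ with $\abs{S}\le t$ and realise both with the same $\psi_S$. Restricting a partition $Q$ of $\psi_S(S)$ to $\psi_S(T)$ and pulling back along $\psi_S$ yields the same partition of $T$ as first pulling $Q$ back and then restricting to $T$; hence $\calP_{S|T}$ is the law of $\inbraces{(\psi_S|_T)^{-1}(B'):B'\in Q'}$, where $Q'$ is the restriction of $Q$ to $\psi_S(T)$. By restriction consistency $Q'\sim\mathcal{R}(\psi_S(T),1)$, and since $\psi_S|_T$ is an isometric embedding of $(T,\rho^H_\mu|_T)$, the well-definedness of the previous paragraph identifies this law with $\calP_T$. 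Thus $\calP_{S|T}=\calP_T$, so $\inbraces{\calP_S}_{\abs{S}\le t}$ is a consistent partitioning scheme of order $t$.

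The two quantitative bounds are routine. For the diameter, since $\psi_S$ is an isometry, every cluster of a partition drawn from $\calP_S$ has $\rho^H_\mu$-diameter at most $1$, and $\rho^H_\mu(u,v)\le 1$ forces $2d^H_\mu(u,v)+\mu\le 1+\mu$, so $(1-\mu)^{2d_H(u,v)}=1-d^H_\mu(u,v)\ge\tfrac{1-\mu}{2}>0$; taking logarithms and using $\lvert\ln(1-\mu)\rvert\ge\mu$ gives $d_H(u,v)\le \tfrac12+\tfrac{\ln 2}{2\mu}$, so $\Delta_H:=\ceil{\tfrac12+\tfrac{\ln 2}{2\mu}}=O(1/\mu)$ works. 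For sparsity, fix an edge $e=\inbraces{v_1,\dots,v_k}$; any two of its vertices lie in a common hyperedge, so $d_H(v_i,v_j)=1$ and $\rho^H_\mu(v_i,v_j)=\sqrt{(2(1-(1-\mu)^2)+\mu)/(1+\mu)}\le\sqrt{5\mu}$. The metric $(e,\rho^H_\mu|_e)$ embeds isometrically into an affine subspace of $\ell_2$ of dimension at most $\abs{e}-1\le t$, so by the separation bound and a union bound over the $k-1$ pairs $\inbraces{v_1,v_j}$,
\begin{align*}
  \Pr{P\sim\calP_e}{P\neq\inbraces{e}}
  &\le \sum_{j=2}^{k}\Pr{P\sim\calP_e}{v_1,v_j\text{ in distinct blocks}} \\
  &\le (k-1)\cdot O(\sqrt t)\cdot\sqrt{5\mu} = O\!\bigl(k\sqrt{\mu t}\bigr),
\end{align*}
and, tracking the absolute constant in the bound of \cite{CharikarCCGP98}, the right-hand side is at most $10k\sqrt{\mu t}$; this is the claimed $\eps$.

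The only real content sits in the two structural properties of the Euclidean decomposition --- rigid-motion invariance and restriction consistency --- together with the fact that its separation probability is governed by the \emph{square root} of the ambient dimension (so that, since a metric on at most $t$ points lives in dimension $<t$, only a $\sqrt t$ is lost, not a $t$). The step to get right is to verify that the particular decomposition one uses genuinely has all three properties at once: an oblivious, metric-only partition of $\ell_2$ with $O(\sqrt{\dim})$ boundary density. Once that is in hand, well-definedness of $\calP_S$, the identity $\calP_{S|T}=\calP_T$, and the elementary conversions between $\rho^H_\mu$- and $d_H$-distances are all automatic. It is precisely this ``partition the ambient $\ell_2$, not the instance'' move --- in place of rounding a high-value SDP solution or propagating noise along long paths --- that makes $\inbraces{\calP_S}_{\abs S\le t}$ globally consistent, and hence is what pushes the construction to $\Omega(\log n/\log\log n)$ levels rather than $\Omega(\log\log n)$.
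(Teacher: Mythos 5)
Your proof is correct and takes essentially the same route as the paper: pull back the separating decomposition of Charikar \etal~\cite{CharikarCGGP98} along an isometric $\ell_2$-embedding of each small set, deduce consistency from the invariance of the decomposition together with the rigidity of $\ell_2$-embeddings, and then read off the diameter and sparsity bounds from the $O(\sqrt{d})$ separation probability. The paper fixes $\Delta = 1/2$ where you take $\Delta' = 1$, and it modifies the CCGP procedure to sample the carving centers from a canonical ball $B(0, R_0 + \Delta/2)$ so that the law is invariant under unitary maps fixing the origin (this suffices because the embeddings produced by the local-embeddability theorem all land in the unit sphere, so two isometric embeddings of the same set are related by a unitary, not just a general rigid motion); your assertion of full rigid-motion invariance is slightly stronger than what that particular modification literally gives, but the intended well-definedness argument is the same as in the paper's Claim on $\ell_2$-consistency.
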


We use the following result of Charikar \etal~\cite{CharikarCGGP98} which shows that
low-dimensional metrics have good \emph{separating decompositions} with bounded diameter, \ie~decompositions
which have a small probability of separating points at a small distance. 
\begin{theorem}[\cite{CharikarCGGP98}]\label{thm:csp:l2-decomposition}
Let $W$ be a finite
  collection of points in $\R^d$ and let $\Delta > 0$ be given. 
  Then there exists a distribution $\calP$ over partitions of $W$
  such that
\begin{itemize}
\item[-] $\forall P \in Supp(\calP)$, each cluster in $P$ has $\ell_2$ diameter at most $\Delta$.
\item[-]  For all $x,y \in W$
\[
\Pr{P \sim \calP}{P ~\text{separates}~ x ~\text{and}~ y} ~\leq~ 2\sqrt{d} \cdot \frac{\norm{x-y}_2}{\Delta} \mper
\]
\end{itemize}
\end{theorem}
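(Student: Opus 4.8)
The plan is to obtain all the distributions $\calP_S$ at once from a single low-diameter decomposition of Euclidean space, carried over through the local isometric embeddings granted by the hypothesis, and then to verify consistency, the diameter bound, and sparsity in that order. First I would set things up. Fix the order $t$ (with $t\ge k$, as in~\cref{def:csp:sparse-partitioning}). For each $S\subseteq V$ with $\abs{S}\le t$, the hypothesis supplies an isometric embedding of $(S,\rho_\mu)$ into $\ell_2$; since an $\abs{S}$-point Euclidean metric lies in an affine subspace of dimension $\abs{S}-1\le t-1$, I may take it to be a map $\phi_S\colon S\to\RR^t$ with $\|\phi_S(u)-\phi_S(v)\|_2=\rho_\mu(u,v)$. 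Next I would fix once and for all a random partition $\Pi$ of $\RR^t$ with three properties: (i) every block of $\Pi$ has $\ell_2$-diameter at most $\Delta:=1$; (ii) $\Pr[\Pi\text{ separates }x\text{ and }y]\le 2\sqrt{t}\,\|x-y\|_2$ for all $x,y\in\RR^t$; and (iii) the law of $\Pi$ is invariant under every rigid motion (rotation followed by translation) of $\RR^t$. Such a $\Pi$ can be obtained from~\cref{thm:csp:l2-decomposition} by applying it to an increasingly fine net in a growing ball and passing to a weak limit (or directly from a marked isotropic Poisson process of ball centers), and then composing with an independent Haar-random rotation and random translation to enforce (iii). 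Finally I would define $\calP_S$ to be the law of the partition $\phi_S^{-1}(\Pi|_{\phi_S(S)})$ of $S$, where $\Pi|_{W}$ denotes the partition of $W$ induced by $\Pi$; that is, $u,v\in S$ share a block precisely when $\phi_S(u),\phi_S(v)$ lie in one block of $\Pi$.

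The heart of the argument, and the only step that is not a short computation, is consistency: $\calP_{S\mid T}=\calP_T$ for $T\subseteq S$. This is exactly where the obvious approach of running~\cref{thm:csp:l2-decomposition} separately on each finite set $\phi_S(S)$ breaks down, because restricting a CKR-type partition of a finite set to a subset is \emph{not} the same as partitioning that subset: a point lying outside the subset can merge two of its points into one block. I would get around this using the two remaining features of $\Pi$. Because $\Pi$ partitions all of $\RR^t$, restriction commutes with restriction, so $\calP_{S\mid T}$ is the law of $\phi_S^{-1}(\Pi|_{\phi_S(T)})$ read off on $T$. And because any two isometric embeddings of the finite metric $(T,\rho_\mu)$ into $\RR^t$ have congruent images, they differ by a rigid motion of $\RR^t$; hence $\phi_S|_T=g\circ\phi_T$ for some rigid motion $g$, and therefore $\phi_S^{-1}(\Pi|_{\phi_S(T)})=\phi_T^{-1}\bigl(g^{-1}(\Pi)\,\big|_{\phi_T(T)}\bigr)$. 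By (iii), $g^{-1}(\Pi)$ has the same law as $\Pi$, so this random partition of $T$ has the same law as $\phi_T^{-1}(\Pi|_{\phi_T(T)})$, which is $\calP_T$. (Equivalently, one can keep $\Pi$ non-invariant but randomize each $\phi_S$ by a single shared Haar-random rigid motion; the bookkeeping is the same.)

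The diameter bound and sparsity are then routine. For the diameter: if $u,v$ share a block of some partition in the support of $\calP_S$, then $\phi_S(u),\phi_S(v)$ share a block of $\Pi$, so $\rho_\mu(u,v)\le\Delta=1$; unwinding the definitions of $\rho_\mu$ and $d^H_\mu$ gives $2\,d^H_\mu(u,v)+\mu\le 1+\mu$, i.e. $d^H_\mu(u,v)\le\tfrac12$, i.e. $(1-\mu)^{2d_H(u,v)}\ge\tfrac12$, and using $-\ln(1-\mu)\ge\mu$ this yields $d_H(u,v)\le\tfrac{\ln 2}{2\mu}$. Thus every block has $H$-diameter at most $\Delta_H:=\ceil{\tfrac{\ln 2}{2\mu}}=O(1/\mu)$. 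For sparsity: fix an edge $e$, so $\abs{e}=k\le t$ and every $u,v\in e$ have $d_H(u,v)=1$, whence $\rho_\mu(u,v)^2=\tfrac{\mu(5-2\mu)}{1+\mu}<5\mu$; feeding this through (ii) via $\phi_e$, $\Pr{P\sim\calP_e}{P\text{ separates }u\text{ and }v}\le 2\sqrt{t}\cdot\sqrt{5\mu}=2\sqrt5\,\sqrt{\mu t}$. Fixing a vertex $v_1\in e$ and taking a union bound over the other $k-1$ vertices, $\Pr{P\sim\calP_e}{P\ne\inbraces{e}}\le(k-1)\cdot 2\sqrt5\,\sqrt{\mu t}\le 10k\sqrt{\mu t}$, which is the required $\eps$.

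The main obstacle, then, is not any one of the three displayed properties in isolation but the need to arrange consistency: the key realization is that one must never decompose an embedded set $\phi_S(S)$ in isolation, but instead push one rigid-motion-invariant decomposition of the ambient space through all the embeddings simultaneously, so that the two ambiguities — which superset $S$ happens to sit inside, and which isometric copy of $(S,\rho_\mu)$ one picked — both wash out. The one genuinely technical (though standard) loose end is constructing the ambient invariant decomposition $\Pi$ and checking that it inherits (i)–(ii) from~\cref{thm:csp:l2-decomposition} and satisfies (iii).
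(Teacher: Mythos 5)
You have not proved the statement in question. Theorem~\ref{thm:csp:l2-decomposition} is the Charikar--Chekuri--Goel--Guha--Plotkin separating decomposition itself: given a finite $W\subset\R^d$ and $\Delta>0$, one must \emph{construct} a distribution over partitions of $W$ whose clusters have $\ell_2$-diameter at most $\Delta$ and which separates any pair $x,y$ with probability at most $2\sqrt{d}\cdot\lVert x-y\rVert_2/\Delta$. Your writeup instead takes this theorem as a black box (``Such a $\Pi$ can be obtained from Theorem~\ref{thm:csp:l2-decomposition} by applying it to an increasingly fine net\dots'') and uses it to derive the consistent, $\eps$-sparse partitioning scheme of Lemma~\ref{lem:csp:partitioning}. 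As a proof of the theorem it is therefore circular, and as a piece of mathematics it is aimed at a different (downstream) claim. Nowhere do you establish the two properties actually asserted: the diameter bound for a concrete random partition of $W$, and in particular the $2\sqrt{d}\cdot\lVert x-y\rVert_2/\Delta$ separation probability, which is the entire content of the theorem.

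A correct proof would exhibit the random ball-carving procedure (as the paper recalls in the proof of Claim~\ref{claim:csp:l2-consistency}): while uncovered points remain, sample a center $x$ uniformly from a $\Delta/2$-neighborhood of $W$ and output $B(x,\Delta/2)$ intersected with the uncovered points as a cluster. The diameter bound is immediate since each cluster sits inside a ball of radius $\Delta/2$. The separation bound requires the volume argument: conditioned on a freshly drawn center touching $\{x,y\}$ for the first time, the probability it captures exactly one of them is $\mathrm{vol}\bigl(B(x,\Delta/2)\,\triangle\,B(y,\Delta/2)\bigr)/\mathrm{vol}\bigl(B(x,\Delta/2)\cup B(y,\Delta/2)\bigr)$, and one must show this ratio is at most $2\sqrt{d}\cdot\lVert x-y\rVert_2/\Delta$, which is where the dimension dependence $\sqrt{d}$ enters. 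None of this analysis appears in your proposal. (Your consistency argument via a rigid-motion-invariant partition of all of $\R^t$ is a reasonable alternative route to Lemma~\ref{lem:csp:partitioning} and Claim~\ref{claim:csp:l2-consistency} --- the paper instead couples the sampled centers inside a fixed ball $B(0,R_0+\Delta/2)$ --- but that is not the statement you were asked to prove.)
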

We also need the observation that the partitions produced by the above theorem are consistent,
assuming the set $S$ considered above lie in a fixed bounded set (using a trivial modification of
the procedure in~\cite{CharikarCGGP98}). For the sequel, we use $B(x,\delta)$ to denote the $\ell_2$
ball around $x$ of radius $\delta$ and $B_H(u,r)$ to denote a ball of radius $r$ 
around a vertex $u \in V(H)$. Thus,
\[
  B(x,\delta)~\defeq~\inbraces{y~\mid~\norm{x-y}_2 \leq \delta}
  \qquad \text{and} \qquad B_H(u,r)~\defeq~\inbraces{v \in V~\mid~d_H(u,v) \leq r} \mper
\]
The balls $B(S,\delta)$ and $B_H(S,r)$ are defined similarly.
\begin{claim}\label{claim:csp:l2-consistency}
Let $S$ and $T$ be sets such that $T \subseteq S$. Let $W_S = \inbraces{w_u}_{u \in S}$ and
$W_T = \inbraces{w_u'}_{u \in T}$ be $\ell_2$-embeddings of $S$ and $T$  
satisfying  $\phi(W_T) \subseteq W_S \subseteq  B(0,R_0) \subset
\R^d$, for some unitary transformation $\phi$ and $R_0 > 0$. 
Let $\calP_S$ and $\calP_T$ be distributions over partitions of $S$ and $T$ respectively, 
induced by partitions on $W_S$ and $W_T$ as given  by~\cref{thm:csp:l2-decomposition}. Then
\[
\calP_{S|T}~=~\calP_T \mper
\]
\end{claim}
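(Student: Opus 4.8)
The plan is to exhibit the separating decomposition of \cref{thm:csp:l2-decomposition} as an \emph{oblivious, rotation-invariant} procedure, after which the consistency claim will follow almost for free. First I would recall that the (modified) construction of \cite{CharikarCCGGP98} does not partition the finite point set $W$ directly, but rather partitions the whole ambient ball: sample a radius $r$ uniformly from $[\Delta/4,\Delta/2]$ and an i.i.d.\ sequence of centres $c_1,c_2,\dots$ uniformly from a sufficiently large origin-centred ball $\overline B(0,R)\supseteq\overline B(0,R_0)$, process the centres in order, and when processing $c_i$ form a new cluster consisting of all as-yet-unassigned points of $B(c_i,r)$ (if any). Call the resulting random partition of $\overline B(0,R_0)$ by $\calQ$; almost surely every point is eventually assigned, since for $\|x\|\le R_0$ the ball $B(x,\Delta/4)$ has positive volume inside $\overline B(0,R)$. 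The point is that for every finite $W\subseteq\overline B(0,R_0)$ the distribution $\calP$ of \cref{thm:csp:l2-decomposition} is precisely the law of the induced partition $\calQ|_W$ (two points of $W$ lie in one block iff they lie in the same part of $\calQ$), because the radius and the centres are chosen without reference to $W$. The only two features I will use are: (i) \emph{obliviousness}: $\calP=\mathrm{law}(\calQ|_W)$ for a random partition $\calQ$ generated independently of $W$; and (ii) \emph{unitary invariance}: for every orthogonal $\phi$ of $\R^d$ (which fixes $0$, hence preserves $\overline B(0,R)$ and $\overline B(0,R_0)$), the partition $\phi^{-1}(\calQ)\defeq\{\phi^{-1}(A):A\in\calQ\}$ has the same law as $\calQ$ — immediate because $(\phi^{-1}(c_i))_i$ is again i.i.d.\ uniform on $\overline B(0,R)$ and $B(\phi^{-1}(c_i),r)=\phi^{-1}(B(c_i,r))$.

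Granting this, I would then just unwind definitions. Viewing $W_S$ and $W_T$ inside the common $\R^d$ of the statement, we may take $\phi$ to match up indices, i.e.\ $\phi(w_u')=w_u$ for every $u\in T$; note that then $W_T=\phi^{-1}(\phi(W_T))\subseteq\phi^{-1}(\overline B(0,R_0))=\overline B(0,R_0)$, so the ambient partition covers $W_T$ as well. Sample $\calQ$ once, and identify partitions of the index set $S$ (resp.\ $T$) with partitions of $W_S$ (resp.\ $W_T$) via $u\mapsto w_u$ (resp.\ $u\mapsto w_u'$), which is legitimate since $\rho_\mu(u,v)>0$ for $u\neq v$ makes these maps injective. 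By definition $\calP_S$ is the law of the partition of $S$ in which $u$ and $v$ share a block iff $w_u$ and $w_v$ lie in the same part of $\calQ$, and $\calP_{S|T}$ is the law of its restriction to $T$: the partition of $T$ in which $u$ and $v$ share a block iff $w_u,w_v$ lie in the same part of $\calQ$; equivalently (applying $\phi$) iff $\phi(w_u'),\phi(w_v')$ lie in the same part of $\calQ$; equivalently iff $w_u',w_v'$ lie in the same part of $\phi^{-1}(\calQ)$. Thus $\calP_{S|T}$ equals the law of the partition induced on $W_T$ by $\phi^{-1}(\calQ)$; by (ii) this coincides with the law of the partition induced on $W_T$ by $\calQ$, which by (i) is exactly $\calP_T$. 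This yields $\calP_{S|T}=\calP_T$, as required.

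The one step that genuinely needs care — the ``trivial modification'' alluded to in the statement — is verifying that this oblivious, rotation-invariant realization still meets the guarantees of \cref{thm:csp:l2-decomposition}: each cluster has $\ell_2$-diameter at most $2r\le\Delta$, and one must check that drawing centres i.i.d.\ uniformly from the enlarged ball $\overline B(0,R)$ (rather than from $W$ itself, as in the original procedure) does not degrade the separation bound $2\sqrt d\,\|x-y\|_2/\Delta$. This should follow because the probability that the first ball to capture either of $x,y$ captures exactly one of them is controlled by the same Euclidean volume-ratio estimate as in the original analysis, the enlargement of the centre ball affecting only negligible boundary contributions. Everything else — the extension of $\phi$ to a unitary of the common $\R^d$, and almost-sure coverage of bounded regions by the infinite centre sequence — is routine, and I expect no further obstacle.
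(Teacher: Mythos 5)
Your argument is correct and follows essentially the same route as the paper's own proof: both modify the Charikar~et~al.\ procedure so that the random centres are drawn from a ball around the origin rather than around the point set, and both combine the resulting obliviousness (you package it as an ambient random partition $\calQ$ and restrict; the paper phrases it as a coupling of the runs on $W_S$ and on $\phi(W_T)$) with rotation invariance of the uniform measure on the origin-centred ball. The only small imprecisions are inessential: the original procedure samples centres from $B(W,\Delta/2)$ rather than from $W$ itself, and the enlargement of the centre ball leaves the separation probability \emph{exactly} unchanged (the conditional distribution of the first covering centre, given that it lands in $B(w_u,\Delta/2)\cup B(w_v,\Delta/2)$, is still uniform there), rather than up to ``negligible boundary contributions.''
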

\begin{proof}
The claim follows simply by considering (a trivial modification of) the algorithm of~\cite{CharikarCGGP98}.
For a given set $W$ and a parameter $\Delta$, they produce a partition using
the following procedure:
\begin{itemize}
\item Let $W' = W$.
\item Repeat until $W' = \emptyset$
\begin{itemize}
\item Pick a random point $x$ in $B(W,\Delta/2)$ according to the Haar measure. Let $C_x =
  B(x,\Delta/2) \cap W'$. 
\item If $C_x \neq \emptyset$, set $W' = W' \setminus C_x$. Output $C_x$ as a cluster in the partition.
\end{itemize}
\end{itemize}
The work~\cite{CharikarCGGP98} shows that the above procedure produces a distribution over partitions
satisfying the conditions in~\cref{thm:csp:l2-decomposition}. We simply modify the procedure to sample a
random point $x$ in $B(0,R_0 + \Delta/2)$ instead of $B(S,\Delta/2)$. 
This does not affect the separation probability of any two points, since the only non-empty clusters
are still produced by the points in $B(S,\Delta/2)$. 

Let $P$ be a partition of $S$ produced by the above procedure when applied to the point set $W_S$,
and let $P'$ be a random partition produced when applied to the point set $\phi(W_T)$. 
It is easy to see from the above procedure that the distribution $\calP_T$ 
is invariant under a unitary transformation of $W_T$. 
By coupling the random choice of a
point in $B(0,R_0 + \Delta/2)$ chosen at each step in the procedures applied to $W_S$ and $\phi(W_T)
\subseteq W_S$, we get that $P(T) = P'$ \ie the partition $P$ restricted to
$T$ equals $P'$. 
Thus, we get $\calP_{S|T} = \calP_T$.
\end{proof}

We can use the above to prove~\cref{lem:csp:partitioning}.
\begin{proofof}{of~\cref{lem:csp:partitioning}}
Given a set $S$, let $W_S$ be an $\ell_2$ embedding of the metric $\rho_{\mu}$ restricted to
$S$. Since, $|S| \leq t$, we can assume $W_S \in \R^t$. We apply partitioning procedure of
Charikar \etal~from~\cref{thm:csp:l2-decomposition} with $\Delta = 1/2$. From the definition of the metric
$\rho^H_{\mu}$, we get that there exists a $\Delta_H = O(1/\mu)$ such that $\rho^H_{u,v} \leq 
1/2~\implies~d_H(u,v) \leq \Delta_H$. Moreover, for $u,v$ contained in an edge $e$, we have that
$\rho_{\mu}(u,v) \leq \sqrt{5\mu}$ and hence the probability that $u$ and $v$ are separated is
at most $10\sqrt{\mu \cdot t}$. Thus, the probability that any vertex in $e$ is separated from $u$ is at
most $10k \cdot \sqrt{\mu \cdot t}$.

Finally, for any $S \subseteq T$, if $W_S$ and $W_T$ denote the corresponding $\ell_2$ embeddings,
by the rigidity of $\ell_2$ we have that for $\phi(W_T) \subseteq W_S$ for some unitary
transformation $\phi$. Thus, by~\cref{claim:csp:l2-consistency}, we get that this is a consistent
partitioning scheme of order $t$.
\end{proofof}

\section{The Sherali-Adams Integrality Gaps Construction}\label{sec:csp:sagaps}
\subsection{Integrality Gaps from the Basic LP}
Recall that the basic LP relaxation for \maxkcspq($f$) as given in~\cref{fig:csp:basic-lp}.
In this section, we will prove\cref{thm:csp:main}. We recall the statement below.
\maintheoremcsp*
Let $\Phi_0$ be a $(c,s)$ integrality gap instance for the basic LP relaxation for
\maxkcspq($f$) with $n_0$ variables and $m_0$ constraints. We use it to construct a new integrality
gap instance $\Phi$. The construction is similar to the gap instances
constructed by Khot \etal~\cite{KhotTW14} discussed in the next section. However, we describe this
construction first since it's simpler. The procedure for constructing the instance $\Phi$ is
described in~\cref{fig:csp:gap-instance}.
\begin{figure}[htb]
\hrule
\vline
\hspace{10 pt}
\begin{minipage}[t]{0.95\linewidth}
\vspace{10 pt}
{
\underline{\textsf{Given}}: A $(c,s)$ gap instance $\Phi_0$ on $n_0$ variables, for the basic LP.

\smallskip

\underline{\textsf{Output}}: An instance $\Phi$ with $N = n \cdot n_0$ variables and $m$ constraints. 

\medskip

The variables are divided into $n_0$ sets $X_1, \ldots, X_{n_0}$, one for each variable in
$\Phi_0$. We generate $m$ constraints  independently at random as follows:
\begin{enumerate}
\item Sample a random constraint $C_0 \sim \Phi_0$. Let $S_{C_0} = \inbraces{i_1, \ldots, i_k}
  \subseteq [n_0]$  denote the set of variables in this constraint.
\item For each $j \in [k]$, sample a random variable $x_{i_j} \in X_{i_j}$. 
\item Add the constraint $f((x_{i_1}, \ldots, x_{i_k}) + b_{C_0})$ to the instance $\Phi$.
\end{enumerate}
}
\smallskip
\end{minipage}
\hfill\vline
\hrule
\caption{Construction of the gap instance $\Phi$}\label{fig:csp:gap-instance}
\end{figure}

\subsubsection*{Soundness}
We first prove that no assignment satisfies more than $s+\eps$ fraction of constraints for the
above instance.
\begin{lemma}\label{lem:csp:basic-lp-soundness}
For every $\eps > 0$, there exists $\gamma = \gamma(\eps, n_0, q)$ such that for an instance $\Phi$
generated by choosing at least $\gamma \cdot n$ constraints independently at random as above, we
have with probability $1 - \exp\inparen{-\Omega(n)}$, $\opt(\Phi) < s+\eps$.
\end{lemma}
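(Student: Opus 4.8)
The plan is a first-moment (union bound) argument over all integral assignments; the only step that is not completely routine is controlling the expected satisfaction probability of a single generated constraint by $\opt(\Phi_0)$, rather than merely by $\lpopt(\Phi_0)$.

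Fix an assignment $\sigma\colon V(\Phi)\to\qary$, and for each $i\in[n_0]$ let $\mu^\sigma_i\in\Delta_\qary$ be the empirical distribution of $\sigma$ on $X_i$, i.e. $\mu^\sigma_i(a)=\tfrac1n\cardin{\inbraces{x\in X_i:\sigma(x)=a}}$. Since each constraint of $\Phi$ is produced by first sampling $C_0\sim\Phi_0$ (with variable index set $S_{C_0}$ and twist $b_{C_0}$) and then sampling independent uniform representatives $x_{i_j}\in X_{i_j}$, the probability that a single generated constraint is satisfied by $\sigma$ is
\[
  p(\sigma)\;=\;\Ex{C_0\sim\Phi_0,\;a_i\sim\mu^\sigma_i\ \mathrm{indep.}}{f(a+b_{C_0})}\;=\;g\inparen{\mu^\sigma_1,\ldots,\mu^\sigma_{n_0}},
\]
where $g(\nu_1,\ldots,\nu_{n_0})\defeq\Ex{C_0\sim\Phi_0}{\sum_{\alpha\in\qary^{S_{C_0}}}f(\alpha+b_{C_0})\prod_{i\in S_{C_0}}\nu_i(\alpha_i)}$ is exactly the objective of the basic LP (\cref{fig:csp:basic-lp}) evaluated at the product pseudo-solution $\vartwo{S_C}{\alpha}=\prod_{i\in S_C}\nu_i(\alpha(i))$. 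The point I would exploit is that $g$ is affine-linear in each argument $\nu_i$ separately — every monomial contains each $\nu_i$ to the first power or not at all — hence multi-affine on $\Delta_\qary^{\,n_0}$, so its maximum is attained at a vertex of the product polytope, i.e. when every $\nu_i$ is a point mass; evaluating there simply recovers an integral objective value of $\Phi_0$. Therefore $p(\sigma)=g(\mu^\sigma)\le\opt(\Phi_0)<s$ for every $\sigma$.

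Conditioned on $\sigma$, the constraints of $\Phi$ are generated independently, so $\sat(\sigma)\cdot m$ is a sum of $m\ge\gamma n$ independent Bernoulli variables of mean $p(\sigma)\le s$; a Chernoff--Hoeffding bound then gives $\Pr{\sat(\sigma)\ge s+\eps}\le\exp(-2\eps^2 m)\le\exp(-2\eps^2\gamma n)$. Taking a union bound over the $q^{N}=q^{\,n_0 n}$ assignments yields
\[
  \Pr{\exists\,\sigma\colon\sat(\sigma)\ge s+\eps}\;\le\;q^{\,n_0 n}\cdot\exp(-2\eps^2\gamma n),
\]
which is $\exp(-\Omega(n))$ once $\gamma=\gamma(\eps,n_0,q)$ is taken larger than, say, $n_0\ln q/\eps^2$; this establishes $\opt(\Phi)<s+\eps$ with probability $1-\exp(-\Omega(n))$. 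I do not expect a genuine obstacle here: the single subtlety is the bound $p(\sigma)\le\opt(\Phi_0)$, where one must invoke the multi-affineness of $g$ to push its maximum onto integral assignments rather than merely appealing to LP feasibility (which would only give the useless estimate $g(\mu^\sigma)\le\lpopt(\Phi_0)\ge c$). Everything else is a Chernoff bound together with a union bound, the choice of $\gamma$ absorbing the assignment-space entropy $n_0\ln q$.
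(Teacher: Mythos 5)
Your proof is correct and essentially matches the paper's. The paper's one "non-routine" step is phrased probabilistically: after writing the per-constraint satisfaction probability as $\Ex{C_0\sim\Phi_0}\Ex{Z_1,\ldots,Z_{n_0}}[f(Z_{C_0}+b_{C_0})]$, it simply observes that the $Z_i$'s define a random assignment to $\Phi_0$, so the inner expectation over assignments is at most $\opt(\Phi_0)<s$; your multi-affineness/vertex-of-polytope argument is an equivalent (slightly more elaborate) way to say the same thing, and the Chernoff bound, union bound, and choice of $\gamma$ are identical.
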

\begin{proof}
Fix an assignment $\sigma \in \qary^{N}$. We will first consider $\Ex{\sat_{\Phi}\inparen{\sigma}}$ for a randomly
generated $\Phi$ as above.
\begin{align*}
\Ex{\Phi}{\sat_{\Phi}\inparen{\sigma}} 
&~=~ \expop_{C_0 \in {\Phi_0}} \expop_{x_{i_1} \in X_{i_1}} \cdots
                            \expop_{x_{i_k} \in X_{i_k}}[f(\sigma(x_{i_1})+b_{i_1}, \ldots,
                            \sigma(x_{i_k})+b_{i_k})] \\
                            &~=~ \expop_{C_0 \in {\Phi_0}} \expop_{Z_1,\ldots Z_{n_0}}[f(Z_{C_0} + b_{C_0})]\mcom
\end{align*}
where for each $i \in [n_0]$, $Z_i$ is an independent random variable with the distribution
\[
\Pr{Z_i = b} ~\defeq~ \Ex{x \in X_i}{\indicator{\sigma(x) = b}} \mcom
\]
and $Z_{C_0}$ denotes the collection of variables in the constraint $C_0$, \ie $Z_{C_0} =
\inbraces{Z_i}_{i \in S_{C_0}}$.
Thus, the random variables $Z_1, \ldots, Z_{n_0}$ define a random assignment to the variables in
$\Phi_0$, which gives, for any $\sigma$
\[
\Ex{\Phi}{\sat_{\Phi}\inparen{\sigma}} ~=~ \expop_{C_0 \in {\Phi_0}} \expop_{Z_1,\ldots
Z_{n_0}}[f(Z_{C_0} + b_{C_0})] ~<~ s \mper
\]
Consider a randomly added constraint $C$ to the instance $\Phi$. We have that 
\[
\Pr{C(\sigma) = 1} ~=~ \Ex{\Phi}{\sat_{\Phi}(\sigma)} ~<~ s \mcom 
\]
for any fixed $\sigma$ over random choice of the constraint $C$. Thus, for an instance $\Phi$ with
$m$ independently and randomly generated constraints, we have
\begin{align*} 
\Pr{\Phi}{\sat_{\Phi}(\sigma) ~\geq~ s + \eps}
&~\leq~
\Pr{\Phi}{\sat_{\Phi}(\sigma) 
~\geq~ \Ex{\Phi}{\sat_{\Phi}(\sigma)} + \eps} \\
&~=~ \Pr{\Phi}{\Ex{C \in \Phi}{\indicator{C(\sigma)=1}} ~\geq~ \Ex{\Phi}{\sat_{\Phi}(\sigma)} +
  \eps} \\
&~\leq~ \exp\inparen{-\Omega(\eps^2 \cdot m)} \mper
\end{align*}
Taking a union bound over all assignments, we get
\[
\Pr{\Phi}{\exists \sigma ~~\sat_{\Phi}(\sigma) ~\geq~ s + \eps}
~\leq~ q^{n \cdot n_0} \cdot \exp\inparen{-\eps^2 \cdot m} \mcom
\]
which is at most $\exp\inparen{-\Omega(n)}$ for $m = O(((\log q)/\eps^2)\cdot n \cdot n_0)$.
\end{proof}

\subsubsection*{Completeness}
To prove the completeness, we first observe that the instance $\Phi$ as constructed above is also a
gap instance for the basic LP. We will then ``boost'' this hardness to many levels of the
Sherali-Adams hierarchy.
\begin{lemma}\label{lem:csp:basic-lp-solution}
For every $\eps > 0$, there exists $\gamma = \gamma(\eps)$ such that for an instance $\Phi$
generated by choosing at least $\gamma \cdot n$ constraints independently at random as above, with probability $1 - \exp\inparen{-\Omega(n)}$ there exist distributions
$\treedist_{S_C}$ over $\qary^{S_C}$ for each $C \in \Phi$, and distributions $\treedist_{i}$ over $\qary$
for each variable $x_i \in [n \cdot n_0]$, satisfying
\begin{itemize}
\item[-] For all $C \in \Phi$ and all $i \in S_C$, $\treedist_{S_C | \{i\}} = \dist_i$.
\item[-] The distributions satisfy  $\ExpOp_{C \in \Phi} \Ex{\alpha \sim \treedist_{S_C}}{f(\alpha +
    b_C)} ~\geq~ c - \frac{\eps}{10}$.
\end{itemize}
\end{lemma}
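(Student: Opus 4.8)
The key observation is that, as far as the basic LP is concerned, $\Phi$ is nothing more than many noisy copies of the template $\Phi_0$: every constraint of $\Phi$ is a shifted copy of some constraint of $\Phi_0$, and the only basic-LP consistency requirements are between a constraint's distribution and its \emph{singleton} marginals. So the plan is to lift a feasible basic LP solution of $\Phi_0$ to $\Phi$ blockwise, and then argue that the objective concentrates. Fix, by the hypothesis that $\Phi_0$ is a $(c,s)$ integrality gap instance for the basic LP, a feasible basic LP solution of $\Phi_0$ of value at least $c$; as explained in \cref{sec:csp:relaxations} this consists of distributions $\dzero_{S_{C_0}}$ on $\qary^{S_{C_0}}$ for the constraints $C_0 \in \Phi_0$ and distributions $\dzero_j$ on $\qary$ for the variables $j \in [n_0]$, satisfying $\dzero_{S_{C_0}|\{j\}} = \dzero_j$ for all $C_0$ and all $j \in S_{C_0}$, with $\ExpOp_{C_0 \in \Phi_0} \Ex{\alpha \sim \dzero_{S_{C_0}}}{f(\alpha + b_{C_0})} \geq c$.

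\textbf{Constructing the lifted solution.} For each variable $x_i$ lying in block $X_j$, set $\treedist_i \defeq \dzero_j$. For each constraint $C \in \Phi$, recall it was produced from a template $C_0 \in \Phi_0$ with $S_{C_0} = \{j_1, \ldots, j_k\}$ and shift $b_{C_0}$, by picking one variable from each of the blocks $X_{j_1}, \ldots, X_{j_k}$; this yields a natural bijection between $S_C$ and $S_{C_0}$, and we define $\treedist_{S_C}$ to be the pullback of $\dzero_{S_{C_0}}$ along that bijection (with $b_C = b_{C_0}$). The singleton-consistency condition then holds \emph{deterministically}: if $x_i \in S_C$ corresponds to $j \in S_{C_0}$, then $\treedist_{S_C|\{x_i\}}$ is the marginal of $\dzero_{S_{C_0}}$ on $\{j\}$, which is $\dzero_j = \treedist_i$; crucially this depends only on the block $X_j$, not on which constraint $C$ touches $x_i$, so $\treedist_i$ is consistently defined across all constraints. (If $\Phi_0$ has a constraint that repeats a variable one carries the multiset of positions through the same argument; for notational ease one may as well assume the variables of each constraint of $\Phi_0$ are distinct.)

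\textbf{The objective and concentration.} For a constraint $C$ of $\Phi$ coming from template $C_0$, the quantity $\Ex{\alpha \sim \treedist_{S_C}}{f(\alpha + b_C)}$ equals $g(C_0) \defeq \Ex{\alpha \sim \dzero_{S_{C_0}}}{f(\alpha + b_{C_0})} \in [0,1]$, which depends only on $C_0$. Hence for the realized instance $\Phi$ with $m$ constraints drawn i.i.d.\ from the generative process, the LP value $\ExpOp_{C \in \Phi} \Ex{\alpha \sim \treedist_{S_C}}{f(\alpha + b_C)}$ is an average of $m$ i.i.d.\ copies of the $[0,1]$-valued random variable $g(C_0)$ with $C_0 \sim \Phi_0$, whose common expectation $\ExpOp_{C_0 \in \Phi_0} g(C_0)$ is at least $c$. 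By Hoeffding's inequality this average is at least $c - \eps/10$ except with probability $\exp\inparen{-\Omega(\eps^2 m)}$, and choosing $m = \gamma n$ with $\gamma = \gamma(\eps)$ large enough (in fact any fixed $\gamma > 0$ works here) makes the failure probability $\exp\inparen{-\Omega(n)}$.

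\textbf{Where the difficulty lies.} There is no genuine obstacle in this step: it is elementary precisely because the basic LP only constrains constraints and their singleton marginals, so consistency comes for free and only a routine large-deviation bound is needed. The real work — turning this basic-LP solution into a bona fide Sherali-Adams solution for $\Omega\inparen{\frac{\log N}{\log\log N}}$ levels, using the local geometry of the random hypergraph and the consistent partitioning schemes of \cref{sec:csp:decomposition} — is carried out in the boosting argument that follows this lemma.
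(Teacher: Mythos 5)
Your proof is correct and takes essentially the same route as the paper's: lift the fixed basic LP solution of $\Phi_0$ blockwise (constraint distributions by pullback along the sampling correspondence, singleton distributions by block), observe consistency holds deterministically by construction, and conclude with a Chernoff/Hoeffding bound on the average of the i.i.d.\ per-constraint objectives. The only cosmetic difference is that you note any fixed $\gamma>0$ suffices (with the $\eps$-dependence absorbed into the hidden constant of $\Omega(n)$), whereas the paper makes the $\gamma = O(1/\eps^2)$ dependence explicit; both are fine.
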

\begin{proof}
For each $C_0 \in \Phi_0$ and each $j \in [n_0]$, let $\dzero_{S_{C_0}}$ and $\dzero_{j}$ denote the
basic LP solution satisfying
\[
\dzero_{S_{C_0} | j} ~=~ \dzero_{j} ~~\forall C_0 \in \Phi_0 ~\forall j \in S_{C_0}
\qquad
\text{and}
\qquad
\ExpOp_{C_0 \in \Phi_0} \Ex{\alpha \sim \dzero_{S_{C_0}}}{f(\alpha + b_{C_0})} ~\geq~ c \mper
\]
Each constraint $C \in \Phi$ is sampled according to some constraint $C_0 \in \Phi_0$, and we take
$\treedist_{S_C} \defeq \dzero_{S_{C_0}}$ for the corresponding contraint $C_0 \in \Phi_0$. Also,
each variable $x_i$ for $i \in [n_0 \cdot n]$, belongs to one of the sets $X_j$ for $j \in
[n_0]$, and we take $\treedist_{i} \defeq \dzero_{j}$ for the corresponding $j \in [n_0]$.

The consistency of the distributions follows immediately from the construction of the instance
$\Phi$. Let $C \in \Phi$ be any constraint and let $C_0$ be the corresponding constraint in
$\Phi_0$. If $S_{C_0} = (j_1, \ldots, j_k)$, then $S_C = (i_1, \ldots, i_k)$ where each $i_r \in
\{j_r\} \times [n]$ for all $r \in [k]$. Thus, for any $r \in [k]$,
\[
\treedist_{S_C | i_r} ~=~ \dzero_{S_{C_0} | j_r} ~=~ \dzero_{j_r} ~=~ \treedist_{i_r} \mper
\]
To bound the objective value, we again consider its expectation over a randomly generated instance
$\Phi$. Let $C$ be a random constraint added to $\Phi$. Then, if we define $\treedist_{S_C}$ as
above for this constraint, we have
\[
\ExpOp_{C} \Ex{\alpha \in \treedist_{S_C}}{f(\alpha + b_C)} ~=~ \ExpOp_{C_0 \in \Phi_0}\Ex{\alpha
  \sim \dzero}{f(\alpha + b_{C_0})} ~\geq~ c \mper
\]
Thus, the expected contribution of each constraint is at least $c$. The probability that the average
of $m$ constraints deviates by at least $\eps/10$ from the expectation, is at most
$\exp\inparen{- \Omega(\eps^2 \cdot m)}$. 
There exists $\gamma = O(1/\eps^2)$ such that for $m \geq
\gamma \cdot n$, the probability is at most $\exp(- \Omega(n))$.
\end{proof}
To construct local distributions for the Sherali-Adams hierarchy, we will consider (a slight
modification) the hypergraph $H$ corresponding to the instance $\Phi$. We first show that
distributions on hyperedges of this hypergraph can be consistently
propagated in a tree, provided they agree on intersecting vertices. 

For a set $U \subseteq \Vtx(H)$ in a hypergraph
$H$, recall that $\cl(U)$ includes all paths of lengths at most 1 between any two vertices in
$U$. Thus, $E(\cl(U)) = \inbraces{e \in E~\mid~\abs{e \cap U} \geq 2}$.
Note that~\cref{lem:csp:basic-lp-solution} implies that hyperedges forming a tree in $H$ satisfy the
hypothesis of~\cref{lem:csp:tree-propagation} below. 
\begin{lemma}\label{lem:csp:tree-propagation}
Let $H=(V,E)$ be a $k$-uniform hypergraph. Let $U \subseteq V$ and let the set of hyperedges
$E(\cl(U))$ form a tree.  For each $e \in E(\cl(U))$, 
let $\treedist_e$ be a distribution on $[q]^e$ such that for any $u \in U$ and $e_1, e_2 \in E(\cl(U))$ such
that $e_1 \cap e_2 = \{u\}$, we have $\treedist_{e_1|u} = \treedist_{e_2|u} = \treedist_u$. 
Then, 
\begin{itemize}
\item[-] there exists a distribution $\treedist_U$ on $[q]^U$ such that $\treedist_{U | e \cap U} =
  \treedist_{e | e \cap U}$ for all $e \in E(U)$. 
\item[-] If $U' \subseteq U$ is such that the hyperedges in $E(\cl(U'))$ form a subtree of $E(\cl(U))$, then 
$\treedist_{U | U'} = \treedist_{U'}$.
\end{itemize}
\end{lemma}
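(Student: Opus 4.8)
The plan is to realize $\treedist_U$ as the law produced by the usual top-down sampling along the hypertree $E(\cl(U))$, and then establish the two bullets by an induction over the tree and a leaf-stripping computation. Since $E(\cl(U))$ forms a tree, pick one of its hyperedges as a root and list the hyperedges as $e_0,e_1,\dots,e_r$ so that for each $i\ge 1$ the edge $e_i$ meets $e_0\cup\cdots\cup e_{i-1}$ in exactly one vertex $u_i$, lying in a unique ``parent'' edge $e_{p(i)}$ with $p(i)<i$. Such an ordering exists, and the intersection really is a single vertex, precisely because the $e_j$ form a hypertree: no two of them can share two vertices (that would be a $2$-cycle) and there is no hypergraph cycle, so the ``intersection graph'' of $E(\cl(U))$ is a tree. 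Now sample $X\in[q]^{V(\cl(U))}$, where $V(\cl(U))=\bigcup_{e\in E(\cl(U))}e$, by first drawing $X|_{e_0}\sim\treedist_{e_0}$ and then, for $i=1,\dots,r$, drawing $X|_{e_i\setminus\{u_i\}}$ from $\treedist_{e_i}$ conditioned on the already-fixed value $X|_{u_i}$ (legitimate since $u_i\in e_{p(i)}$). The conditioning is consistent because for tree-adjacent edges the hypothesis forces the $u_i$-marginal of $\treedist_{e_i}$ to equal $\treedist_{u_i}$, which is also the $u_i$-marginal of $\treedist_{e_{p(i)}}$. I would then let $\treedist_U$ be the marginal of the law of $X$ on $U$ (if some vertex of $U$ lies on no edge of $E(\cl(U))$, attach an independent copy of its $\treedist_u$; such vertices are inert in everything that follows, and I suppress them).

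For the first bullet I would prove by induction on $i$ that the marginal of the law of $X$ on $e_i$ equals $\treedist_{e_i}$: the base case is the definition, and for $i\ge 1$ the inductive hypothesis applied to $e_{p(i)}$ gives that $X|_{u_i}$ has law $\treedist_{u_i}$, while by construction $X|_{e_i\setminus\{u_i\}}$ given $X|_{u_i}$ is a draw from $\treedist_{e_i}$ conditioned on $u_i$; multiplying, the joint law on $e_i$ is $\treedist_{e_i}$. Any $e\in E(U)$ satisfies $e\subseteq U$ and is one of the $e_i$, so restricting gives $\treedist_{U\mid e}=\treedist_{e_i}=\treedist_{e\mid e\cap U}$, as required.

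For the second bullet, suppose $E(\cl(U'))$ is a subtree of $E(\cl(U))$. I would choose the root $e_0$ above to lie inside $E(\cl(U'))$. A connected subtree of a rooted tree that contains the root is ancestor-closed (the unique root-to-$v$ path must lie inside it), so whenever $e_i\in E(\cl(U'))$ we also have $e_{p(i)}\in E(\cl(U'))$; hence the ordering $e_0,e_1,\dots$ restricts to a valid top-down ordering of the subtree, and the construction run on $U'$ produces exactly this restricted process. Moreover, for any $e_i\notin E(\cl(U'))$ the \emph{fresh} vertices $e_i\setminus\{u_i\}$ avoid $V(\cl(U'))$: a common vertex with some $e_j\in E(\cl(U'))$ would make $e_i,e_j$ tree-adjacent, and ancestor-closure forbids $e_i$ from being $e_j$'s parent, so that common vertex must be the attachment vertex $u_i$ itself. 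Writing the law of $X$ as $\treedist_{e_0}(x_{e_0})\prod_{i\ge 1}\treedist_{e_i\mid u_i}(x_{e_i\setminus\{u_i\}}\mid x_{u_i})$ and summing out the fresh vertices of the edges outside $E(\cl(U'))$ one hypertree-leaf at a time (each such sum equals $1$) leaves precisely the product defining the $U'$-construction. Thus the marginal on $V(\cl(U'))$ of the law of $X$ coincides with the law produced for $U'$, and restricting once more to $U'\subseteq V(\cl(U'))$ gives $\treedist_{U\mid U'}=\treedist_{U'}$.

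The real content is the hypertree bookkeeping, and that is where I expect the only friction: one must carefully use that each edge attaches along a single vertex (so ``conditioning on the parent'' is conditioning on one coordinate, and the stated compatibility of hyperedge marginals on shared vertices is exactly what makes the propagation well posed), and, for the second bullet, that rooting inside $E(\cl(U'))$ makes the subtree ancestor-closed and hence makes the fresh vertices of the outside edges disjoint from $V(\cl(U'))$, which is what lets the marginalization telescope leaf by leaf. The inductions and the sums themselves are routine, and the degenerate case of $U$-vertices lying on no hyperedge of $E(\cl(U))$ is dispatched by the independent-attachment remark above.
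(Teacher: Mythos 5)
Your construction---top-down propagation along a rooted ordering of the hypertree, with the compatibility of shared-vertex marginals making each conditioning step well posed---and your treatment of the second bullet via rerooting inside $E(\cl(U'))$, ancestor-closure, and leaf-by-leaf marginalization are exactly the paper's argument. The one point to make explicit, since you choose the root as a function of $U'$, is that the resulting law is independent of the traversal; this is immediate once you expand each conditional factor as $\treedist_{e_i}(x_{e_i})/\treedist_{u_i}(x_{u_i})$ and collect to get the order-free product $\prod_{e}\treedist_e(\alpha(e))\big/\prod_{u}\treedist_u(\alpha(u))^{\deg(u)-1}$, which is the formula the paper records.
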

\begin{proof}
We define the distribution by starting with an arbitrary hyperedge and traversing the tree in an
arbitrary order. Let $e_1, \ldots, e_r$ be a traversal of the hyperedges in $E(\cl(U))$ such that for all $i$,
$\abs{\inparen{\cup_{j<i}e_j} \cap e_i} = 1$. Let $U_0 = \cup_{j<i}e_j$ be the set of vertices for
which we have already sampled an assignment and let $e_i$ be the next hyperedge in the traversal, with $u$
being the unique vertex in $e_i \cap U_0$. We sample an assignment to the vertices in
$e$, conditioned on the value for the vertex $u$. Formally, we extend the distribution $\treedist_{U_0}$
to $U_0 \cup e$ by taking, for any $\alpha \in \qary^{U_0 \cup e}$
\[
\treedist_{U_0 \cup e} (\alpha)
~=~
\treedist_{U_0}(\alpha(U_0)) \cdot
\frac{\treedist_e(\alpha(e))}{\treedist_{e|u}(\alpha(u))} 
~=~
\treedist_{U_0}(\alpha(U_0)) \cdot
\frac{\treedist_e(\alpha(e))}{\treedist_{u}(\alpha(u))} 
\mper
\]
The above process defines a distribution $\treedist_{\cl(U)}$ on $\cl(U)$, with
\[
\treedist_{\cl(U)}(\alpha) ~=~ \frac{\prod_{e \in E(U)} \treedist_e(\alpha(e))}{ \prod_{u \in \cl(U)}
  \inparen{\treedist_{u}(\alpha(u))}^{\deg(u)-1}} \mper
\]
In the above expression, we use $\deg(u)$ to denote the degree of vertex $u$ in tree formed by the
hyperedges in $E(\cl(U))$ \ie $\deg(u) = \abs{\inbraces{e \in E(\cl(U)) ~|~ u \in e}}$. We then define the
distribution $\treedist_U$ as the marginalized distribution $\treedist_{\cl(U) | U}$ \ie
\[
  \treedist_U(\alpha)~=~\sum_{\substack{\beta \in \qary^{\cl(U)} \\ \beta(U) = \alpha}}
\treedist_{\cl(U)}(\beta) \mper
\]
Note that the distribution $\treedist_{\cl(U)}$ and hence also the distribution $\treedist_U$ are independent
of the order in which we traverse the hyperedges in $E(\cl(U))$. 
Also, since the above process samples each
hyperedge according to the distribution $\treedist_e$, we have that for any $e \in E(U)$,
$\treedist_{\cl(U) | e} =
\treedist_e$. Thus, also for any $e \in E(U)$, $\treedist_{U | e \cap U} = \treedist_{e | e \cap U}$.

Let $U' \subseteq U$ be any set such that $E(\cl(U'))$ forms a subtree of $E(\cl(U))$. Then there
exists a traversal $e_1, \ldots, e_r$, and $i \in [r]$ such that $e_j \in E(\cl(U')) ~\forall j \leq i$ and $e_j
\notin E(\cl(U')) ~\forall j > i$. However, the distribution defined by the partial traversal $e_1,
\ldots, e_i$ is precisely $\treedist_{\cl(U')}$. Thus, we get that $\treedist_{\cl(U) | \cl(U')} =
\treedist_{\cl(U')}$ which implies $\treedist_{U | U'} = \treedist_{U'}$.
\end{proof}
We can now prove the completeness for our construction using consistent decompositions.
\begin{lemma}\label{lem:csp:basic-lp-completeness}
Let $\eps > 0$ and let $\Phi$ be a random instance of \maxkcspq($f$) generated by choosing $\gamma
\cdot n$ constraints independently at random as above. Then, there is a $t =
\Omega_{\eps,k,n_0}\inparen{\frac{\log n}{\log \log n}}$, such that
with probability $1 - \eps$ over the choice of $\Phi$, there exist distributions
$\inbraces{\dist_S}_{\abs{S} \leq t}$ satisfying:
\begin{itemize}
\item[-] For all $S \subseteq V$ with $\abs{S} \leq t$, $\dist_S$ is a distribution on $\qary^S$.
\item[-] For all $T \subseteq S \subseteq V$ with $\abs{S} \leq t$, $\dist_{S | T} = \dist_T$.
\item[-] The distributions satisfy
\[
\ExpOp_{C \in \Phi} ~\Ex{\alpha_C \sim \dist_{S_C}}{f(\alpha_C + b_C)} ~\geq~ c - \eps \mper
\]
\end{itemize}
\end{lemma}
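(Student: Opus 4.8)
The plan is to assemble the Sherali--Adams solution from three ingredients already in hand: the basic-LP distributions $\treedist_{S_C},\treedist_i$ of \cref{lem:csp:basic-lp-solution}, the consistent partitioning scheme of \cref{lem:csp:partitioning}, and the tree-propagation procedure of \cref{lem:csp:tree-propagation}. First I would fix parameters. Apply \cref{thm:csp:locally-l2} with its error set to $\eps/4$ to pass (with probability $\ge 1-\eps/4$) to a subhypergraph $H\subseteq H'$ with girth $\ttg\ge\delta\log n$ and $\abs{\Edg(H')\setminus\Edg(H)}\le(\eps/4)m$ on which $\rho^H_\mu$ restricted to any $S$ with $\abs S\le t$ embeds isometrically into the unit sphere of $\ell_2$ as soon as $\mu\ge c_0\cdot\frac{\log t+\log\log n}{\log n}$; simultaneously invoke \cref{lem:csp:basic-lp-solution}, which holds with probability $1-\exp(-\Omega(n))$ provided $\gamma$ is a large enough constant. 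Then choose $t=c_1\cdot\frac{\log n}{\log\log n}$ with $c_1=c_1(\eps,k,n_0)$ small and $\mu=\Theta(\eps^2/(k^2 t))$, so that (i) $\mu\ge c_0\cdot\frac{\log t+\log\log n}{\log n}$ (since $\log t=\Theta(\log\log n)$), (ii) $10k\sqrt{\mu t}\le\eps/4$, and (iii) $\Delta_H=O(1/\mu)=O(\log n/\log\log n)=\littleoh(\ttg)$. With these $\mu,t$, \cref{lem:csp:partitioning} furnishes a consistent partitioning scheme $\inbraces{\calP_S}_{\abs S\le t}$ of order $t$ that is $\eps_2$-sparse for $H$ with $\eps_2\le 10k\sqrt{\mu t}\le\eps/4$ and whose clusters have $d_H$-diameter at most $\Delta_H$.

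I would then define $\dist_S$ for $\abs S\le t$ as follows: sample $P=\inbraces{C_1,\dots,C_\ell}\sim\calP_S$; for each cluster $C_j$ set $U_j\defeq\cl_{\Delta_H}(C_j)$; produce $\treedist_{U_j}$ by \cref{lem:csp:tree-propagation}; put $\treedist_{C_j}\defeq\treedist_{U_j|C_j}$, $\treedist^S_P\defeq\prod_{j}\treedist_{C_j}$ (a genuine product distribution since the $C_j$ partition $S$), and $\dist_S\defeq\Ex{P\sim\calP_S}{\treedist^S_P}$. What makes \cref{lem:csp:tree-propagation} applicable is that $\cl(U_j)=U_j$ and $E(\cl(U_j))$ is a tree: since every pair of vertices of $C_j$ lies at distance $\le\Delta_H<\ttg/3$, shortest paths among them are unique and their union is a shortest-path tree $T$ with $\Vtx(T)=U_j$, and any hyperedge meeting $U_j$ in two vertices must already lie on $T$, for otherwise it would close a cycle of length at most $2\Delta_H+3<\ttg$. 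The singleton-consistency hypothesis of \cref{lem:csp:tree-propagation} is exactly the agreement $\treedist_{S_C|\{i\}}=\treedist_i$ from \cref{lem:csp:basic-lp-solution}, since every hyperedge of $T$ is a constraint of $\Phi$. For the objective, fix a constraint $C$ whose hyperedge lies in $H$ and take $S=S_C$: with probability $\ge 1-\eps_2$ the $k$-set $S_C$ is not split by $\calP_{S_C}$, in which case $U_1=S_C$, $E(\cl(U_1))=\inbraces{S_C}$, and $\treedist^{S_C}_P=\treedist_{S_C}$; bounding the split event and the at most $(\eps/4)m$ removed constraints from below by $0$ yields $\Ex{C\in\Phi}{\Ex{\alpha\sim\dist_{S_C}}{f(\alpha+b_C)}}\ge(1-\eps_2)(c-\frac{\eps}{10}-\frac{\eps}{4})\ge c-\eps$.

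The remaining, and I expect hardest, point is the exact consistency $\dist_{S|T}=\dist_T$ for $T\subseteq S$. The consistency of the partitioning scheme---which follows from \cref{claim:csp:l2-consistency} once one notes that an $\ell_2$ embedding of a submetric is a restriction of the embedding of the full metric up to a unitary map, as used in \cref{lem:csp:partitioning}---gives $\calP_{S|T}=\calP_T$, so it suffices to prove that for a fixed partition $P=\inbraces{C_j}$ of $S$ one has $\treedist^S_P\restrict{T}=\treedist^T_{P|T}$; as the $C_j$ partition $S$ this reduces to the cluster-wise identity $\treedist_{C_j}\restrict{C_j\cap T}=\treedist_{C_j\cap T}$. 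To establish it I would line up the nested closures $C_j\cap T\subseteq\cl_{\Delta_H}(C_j\cap T)\subseteq\cl_{\Delta_H}(C_j)=U_j$: by monotonicity of $\cl_{\Delta_H}$ and of $\cl$, $E(\cl(\cl_{\Delta_H}(C_j\cap T)))$ is a connected subgraph of the tree $E(\cl(U_j))$, hence a subtree, so the second part of \cref{lem:csp:tree-propagation} gives $\treedist_{U_j|\cl_{\Delta_H}(C_j\cap T)}=\treedist_{\cl_{\Delta_H}(C_j\cap T)}$; marginalizing both sides to $C_j\cap T$ yields precisely $\treedist_{U_j|C_j\cap T}=\treedist_{C_j\cap T}$, i.e.\ $\treedist_{C_j}\restrict{C_j\cap T}=\treedist_{C_j\cap T}$. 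Averaging over $P$ and using $\calP_{S|T}=\calP_T$ then gives $\dist_{S|T}=\Ex{P}{\treedist^T_{P|T}}=\dist_T$. The three displayed properties therefore hold with total failure probability at most $\eps/4+\exp(-\Omega(n))\le\eps$ for large $n$; note that the logarithmic girth of $H$ is simultaneously what forces the cluster closures to be trees and what makes $\Delta_H=\littleoh(\ttg)$, while the clash between needing $\mu t$ a small constant and needing $\mu\ge\Omega(\log\log n/\log n)$ is what limits $t$ to $\Theta(\log n/\log\log n)$.
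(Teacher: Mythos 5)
Your proposal reproduces the paper's proof essentially step for step: you invoke \cref{thm:csp:locally-l2} and \cref{lem:csp:basic-lp-solution} to set up the high-girth subhypergraph and base distributions, choose $\mu$ and $t$ in the same balance (up to constants), build $\dist_S$ by sampling a partition from the scheme of \cref{lem:csp:partitioning} and propagating independently on each cluster's closure via \cref{lem:csp:tree-propagation}, and establish consistency through the combination of $\calP_{S|T}=\calP_T$ with the subtree-marginal property, then close with the same three-term accounting for the objective. The approach, the key lemmas used, and the structure of the consistency argument match the paper's; only small constants (e.g.\ the cycle-length bound $2\Delta_H+3$ versus the paper's $2\Delta_H+1$) and the exact arithmetic in the objective bound differ cosmetically.
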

\begin{proof}
By~\cref{thm:csp:locally-l2}, we know that there exists $\delta$ such that with probability
$1-\eps/4$, after removing a set of constraints $C_B$ of size at most $(\eps/4)\cdot m$, we can assume
that the remaining instance has girth at least $\girth = \delta \cdot \log n$. Also, there exists
$\theta, c > 0$ such that for all $t \leq n^{\theta}$, the metric $\rho^H_{\mu}$ restricted to any
set $S$ of size at most $t$ embeds isometrically into the unit sphere in $\ell_2$, for all $\mu \geq
c \cdot \frac{\log t + \log \log n}{\log n}$.

We choose $\mu = 2c \cdot \frac{\log \log n}{\log n}$
and $t = \frac{\eps^2}{400 k^2} \cdot \frac{1}{\mu}$ so that
\[
  \mu~\geq~c \cdot \frac{\log t + \log \log n}{\log n} \quad \text{and} \quad \sqrt{\mu \cdot
  t}~\leq~\frac{\eps}{20k} \mper
\]
Thus, by~\cref{lem:csp:partitioning}, $H$ admits an $(\eps/2)$-sparse partitioning scheme of order $t$ with
each cluster in the partition having diameter at most $\Delta_H = O(1/\mu)$. Let
$\inbraces{\calP_S}_{\abs{S} \leq t}$ denote this partitioning scheme.

Given a set $S$, the distribution $\dist_S$ is a convex combination of several distributions $\dist_{S,P}$,
corresponding to different partitions $P$ sampled from $\calP_S$. We describe the distribution
$\dist_S$ by giving the procedure to sample an $\alpha \in \qary^S$. Given the set $S$ with $\abs{S}
\leq t$:
\begin{itemize}
\item[-] Sample a partition $P = (U_1, \ldots, U_r)$ from the distribution $\calP_S$.
\item[-] For each set $U_i$, consider the set $\component{U_i}$ obtained by including the vertices
  contained in all the hyperedges in the shortest path between all $u,v \in U_i$. Note that since $U_i$
  has diameter at most $\Delta_H$ in $H$, $\component{U_i}$ is connected and in fact $\component{U}
  = \cl_{\Delta_H}(U)$. 
Also, since each vertex in an included path is within distance at most $\Delta_H/2$ of an
end-point, and $U_i$ has diameter at most $\Delta_H$, we know that the diameter of $\component{U_i}$
is at most $2 \cdot \Delta_H$. Hence, $\component{U_i}$ is a tree.
Finally, we must have $\cl(\component{U_i})~=~\component{U_i}$ since any additional path of length
1 would create a cycle of length at most $2\cdot \Delta_H + 1$.

Thus, by~\cref{lem:csp:basic-lp-solution} and~\cref{lem:csp:tree-propagation} (with probability at least
$1-\eps/4$) there exists a 
distribution $\treedist_{\component{U_i}}$ for each $U_i$, satisfying $\treedist_{\component{U_i} |
  e} = \treedist_{e}$ for all $e \in E\inparen{\component{U_i}}$. Here, $\treedist_{e}$ are the
distributions given by~\cref{lem:csp:basic-lp-solution}, which form a solution to the basic LP for
$\Phi$, with value at least $c - \eps/4$. For each $U_i$, define the distribution
\[
\dist'_{U_i}~\defeq~\treedist_{\component{U_i} | U_i} \mper
\]
\item[-] Sample $\alpha \in \qary^S$ according to the distribution
\[
\dist_{S,P} ~\defeq~ \dist'_{U_1} \times \cdots \times \dist'_{U_r} \mper
\]
\end{itemize}
Thus, we have
\[
\dist_S ~:=~ \Ex{P = (U_1, \ldots, U_r) \sim \calP_S}{\prod_{i=1}^r \dist'_{U_i}} \mcom
\]
where the distributions $\dist'_{U_i}$ are defined as above. 

We first prove the distributions are consistent on intersections \ie\ $\dist_{S | T} = \dist_T$ for
any $T \subseteq S$. Note that by~\cref{lem:csp:partitioning}, the distributions $\calP_S$ and $\calP_T$
satisfy $\calP_{S|T} = \calP_T$. Each partition $(U_1, \ldots, U_r)$ also produces a partition
$T$. For ease of notation, we assume that the first (say) $r'$ clusters have non-empty intersection
with $S$. Let $V_i = U_i \cap T$ for $1 \leq i \leq r'$ ($V_i = \emptyset$ for $i > r'$). Then, we have
\begin{align*}
  \dist_{S|T}~=~\Ex{P = (U_1, \ldots, U_r) \sim \calP_S}{\prod_{i=1}^r \dist'_{U_i | V_i}} 
  &~=~\Ex{P = (U_1, \ldots, U_r) \sim \calP_S}{\prod_{i=1}^{r'} \treedist_{\component{U_i} | V_i}} \\
  &~=~\Ex{P = (U_1, \ldots, U_r) \sim \calP_S}{\prod_{i=1}^{r'} \treedist_{\component{V_i} | V_i}} \\
  &~=~\Ex{P' = (V_1, \ldots, V_{r'}) \sim \calP_T}{\prod_{i=1}^{r'} \treedist_{\component{V_i} | V_i}} 
\end{align*}
The second to last equality above uses the fact that $\component{V_i}$ is a subtree of
$\component{U_i}$ and thus $\treedist_{\component{U_i} | \component{V_i}} =
\treedist_{\component{V_i}}$ by~\cref{lem:csp:tree-propagation}. The last equality uses the fact that
$\calP_{S|T} = \calP_{T}$ by~\cref{lem:csp:partitioning}.

We now argue that the LP solution corresponding to the above distributions
$\inbraces{\dist_S}_{\abs{S} \leq t}$ has value at least $c - \eps$. Recall that the value of the LP
solution is given by
\[
\ExpOp_{C \in \Phi} ~\Ex{\alpha \sim \dist_{S_C}}{f(\alpha + b_C)} \mper
\]
Consider any constraint $C$ in $\Phi$, with the corresponding set of variables $S_C$ and the
corresponding hyperedge $e$. When defining the distribution $\dist_{S_C}$, we will partition $S_C$
according to the distribution $\calP_{S_C}$. By~\cref{lem:csp:partitioning} and our choice of parameters
\[
\Pr{P \sim \calP_{S_C}}{P \neq \{S_C\}} ~\leq~ 10k \cdot \sqrt{\mu \cdot t} ~\leq~ \frac{\eps}{2} \mper 
\]
For a constraint set which is not in the deleted set $C_B$, 
if the hyperedge $e$ corresponding to the constraint $C$ is not split by a partition $P$
sampled according to
$\calP_{S_C}$, then by~\cref{lem:csp:tree-propagation} $\dist_{S_C, P} = \treedist_{S_C}$. Here,
$\treedist_{S_C}$ is the distribution given by~\cref{lem:csp:basic-lp-solution}. Since $f$ is Boolean,
we have that for $C \notin C_B$,
\[
\Ex{\alpha \sim \dist_{S_C}}{f(\alpha + b_C)} 
~\geq~
\Ex{\alpha \sim \treedist_{S_C}}{f(\alpha + b_C)} - \frac{\eps}{2} \mper
\]
Using~\cref{lem:csp:basic-lp-solution} again, we get
\begin{align*}
\ExpOp_{C \sim \Phi} ~\Ex{\alpha \sim \dist_{S_C}}{f(\alpha + b_C)} 
&~\geq~
\Ex{C \sim \Phi}{\inparen{1-\indicator{C \in C_B}} \cdot \inparen{\Ex{\alpha \sim \treedist_{S_C}}{f(\alpha + b_C)} - \frac{\eps}{2}}} \\
&~\geq~
\ExpOp_{C \sim \Phi} ~\Ex{\alpha \sim \treedist_{S_C}}{f(\alpha + b_C)} - \frac{\eps}{2} -  \Ex{C
  \sim \Phi}{\indicator{C \in C_B}} \\
&~\geq~
c - \frac{\eps}{4} - \frac{\eps}{2} - \frac{\eps}{4} \\
&~\geq~ 
c - \eps \mcom
\end{align*}
where the penultimate inequality uses the fact that the fraction of constraints in the initially deleted set
$C_B$ is at most $\eps/4$ (for large enough $n$).
\end{proof}

\subsection{Integrality Gaps for Resistant Predicates}
Let $f:\xspace \B^k\to \B$ be a boolean predicate and
let $\rho(f)=\frac{f^{-1}(1)}{2^k}$ be the fractions of satisfying assignments to $f$. Then
$f$ is approximation resistant if it is hard to distinguish the \maxcsp instances on $f$ between which are
at least $1-\littleoh(1)$ satisfiable vs which are at most $\rho(f)+\littleoh(1)$ satisfiable. 

In \cite{KhotTW14} the authors introduce the notion of {\sf vanishing measure} 
(on a polytope defined by $f$) and use it to characterize a variant of  approximation resistance,
called strong approximation resistance, assuming the Unique Games conjecture. 
They also show gave a \emph{weaker} notion of vanishing measures, which they used to characterize
strong approximation resistance for LP hierarchies. In particular, they proved that when the
condition in their characterization is satisfied, there exists a 
$(1-\littleoh(1),\rho(f)+\littleoh(1))$ integrality gap for $\bigoh(\log\log n)$ levels of
Sherali-Adams hierarchy for predicates $f$. 
Here, we show that using~\cref{thm:csp:main}, their result can be simplified and strengthened\footnote{The LP
  integrality gap result of Khot \etal~is in fact slightly stronger than stated
  above. They show that LP value is at least $1-o(1)$ while there is no integer solution achieving a
  value outside the range $[\rho(f)-o(1), \rho(f)+o(1)]$. It is easy to see that the same also holds
  for the instance constructed here.} 
to $O\inparen{\frac{\log n}{\log \log n}}$ levels.

Let us first recall some useful notation defined by Khot \etal~\cite{KhotTW14} before we define the
notion of vanishing measure:
\begin{definition}\label{def:csp:vanishing-measure-lp-1}
  For a predicate $f:\xspace \B^k\to \B$, let $\calC(f)$ be the convex polytope
  of \emph{first} moments (biases) of distributions supported on satisfying assignments of
  $f$ \ie\
  \[
    \calC(f)~\defeq~\inbraces{ \zeta \in \R^k~\mid~\forall i \in [k],~\zeta_i = \Ex{\alpha \sim
    \nu}{{(-1)}^{\alpha_i}},~\supp(\nu) \subseteq f^{-1}(1) } \mper
  \]
  For a measure $\Lambda$ on $\calC(f)$, $S \subseteq [k]$, $b \in \B^S$ and permutation $\pi: S \to S$,
  let $\Lambda_{S,\pi,b}$ denote the induced measure on $\R^S$ by considering vectors with coordinates
  $\inbraces{ {(-1)}^{b_{\pi(i)}} \cdot \zeta_{\pi(i)}}_{i \in S}$, where $\zeta \sim \Lambda$.
\end{definition}
We recall below the definition of vanishing measure for LPs from~\cite{KhotTW14} (see
Definition $1.3$):
\begin{definition}\label{def:csp:vanishing-measure-lp-2}
  A measure $\Lambda$ on $\calC(f)$ is called {\deffont vanishing} (for LPs) if for
  every $1\le t\le k$, the following signed measure
  \[
    \ExpOp_{\abs{S}=t}~\ExpOp_{\pi:\xspace S\to S}~\Ex{b\in \B^t}{\inparen{\prod_{i=1}^t {(-1)}^{b_i}}\cdot
    \hat{f}(S)\cdot \Lambda_{S,\pi,b}}
  \]
  is identically $0$. We say $f$ has a vanishing measure if there exists a vanishing measure $\Lambda$ on $\calC(f)$.
\end{definition}
In particular, they prove the following theorem:
\begin{theorem}\label{thm:csp:ktw-SA-gap}
  Let $f : \B^k \to \B$ be a $k$-ary boolean predicate that has a vanishing measure. Then for every $\eps>0$,
  there is a constant $c_\eps > 0$ such that for infinitely may $N \in\N$, there exists an
  instance $\Phi$ of $\maxkcsp(f)$ on $N$ variables satisfying the following:
  \begin{itemize}
    \item $\opt(\Phi)~\leq~\rho(f) + \eps$.
    \item The optimum for the LP relaxation given by $c_\eps\cdot \log\log N$ levels of Sherali-Adams hierarchy
      has $\sdpopt(\Phi) \geq 1-\bigoh(k\cdot \sqrt \eps)$.
  \end{itemize}
\end{theorem}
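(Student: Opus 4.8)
The plan is to reduce \cref{thm:csp:ktw-SA-gap} to \cref{thm:csp:main} by producing, from a vanishing measure, an integrality gap instance for the \emph{basic} LP. Concretely: a vanishing measure $\Lambda$ for $f$ yields an instance $\Phi_0$ of $\maxkcsp(f)$ with $\opt(\Phi_0) < \rho(f)+\eps$ whose basic LP optimum is at least $1-O(k\sqrt{\eps})$; feeding $\Phi_0$ into \cref{thm:csp:main} (with a slightly smaller accuracy parameter) then gives, for infinitely many $N$, a $\bigl(1-O(k\sqrt{\eps}),\,\rho(f)+\eps\bigr)$ integrality gap on $N$ variables for $\Omega_\eps\bigl(\tfrac{\log N}{\log\log N}\bigr)$ levels of Sherali--Adams, which already exceeds the claimed $c_\eps\log\log N$. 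So all the real work is in the single step ``vanishing measure $\Rightarrow$ basic LP gap'': this is the construction of \cite{KhotTW14} used at only \emph{one} level of local distributions (constraint scopes and singletons), which is what both simplifies the argument and, via \cref{thm:csp:main}, upgrades the number of levels.

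For $\Phi_0$ I would take a random instance on $n_0$ variables with $m_0=\gamma_0 n_0$ constraints, $\gamma_0=\gamma_0(k,\eps)$ a large constant, each constraint a uniformly random $k$-tuple of variables together with a uniformly random shift $b\in\B^k$. Soundness is the computation already carried out in the proof of \cref{lem:csp:basic-lp-soundness}: a uniform shift makes the restriction of any fixed assignment to a constraint's scope uniform on $\B^k$, so the constraint is satisfied with probability exactly $\rho(f)$; a Chernoff bound together with a union bound over the $2^{n_0}$ assignments gives $\opt(\Phi_0)<\rho(f)+\eps$ with probability $1-e^{-\Omega(n_0)}$, provided $\gamma_0\gtrsim 1/\eps^2$.

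The core step builds the basic LP solution for $\Phi_0$ from $\Lambda$. For each variable $v$ I would draw a bias $\beta_v$ from the marginal distribution induced by $\Lambda$ (a coordinate of a $\Lambda$-random point of $\calC(f)$, with a uniformly random sign), and let $\treedist_v$ be the distribution on $\B$ with first moment $\beta_v$. For a constraint $C$ with scope $v_1,\dots,v_k$, positions identified with $[k]$ by a uniform bijection $\pi$, and shift $b_C$, the target is the distribution supported on $f^{-1}(1)-b_C$ with first moments $\zeta^C:=\bigl((-1)^{b_{C,i}}\beta_{v_{\pi^{-1}(i)}}\bigr)_{i\in[k]}$: whenever $\zeta^C\in\calC(f)$ such a distribution exists, is supported on satisfying assignments of $C$, and has exactly the prescribed singleton marginals. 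The point of the vanishing-measure identity (\cref{def:csp:vanishing-measure-lp-2}) is precisely that the product distribution of the independently drawn $\beta_v$'s lands inside $\calC(f)$ except on an event of probability $O(k\sqrt{\eps})$ over the choice of scope, shift and $\pi$ — and $\gamma_0$ being large is what lets a variable's $\approx k\gamma_0$ incident constraints concentrate its ``empirical bias'' at scale $O(1/\sqrt{\gamma_0})$, so that taking $\beta_v$ to be this empirical average creates only $O(\eps/\sqrt k)$ slack; on the bad event we fall back to the uniform distribution on $\B^{S_C}$ (value $\rho(f)$) and absorb the small mismatch into $\treedist_v$. Setting $\treedist_{S_C}$ to be this distribution for every $C$, all the constraints $\treedist_{S_C\mid\{v\}}=\treedist_v$ hold and $\Ex{C}{\Ex{\alpha\sim\treedist_{S_C}}{f(\alpha+b_C)}}\ge 1-O(k\sqrt{\eps})$, so the basic LP optimum of $\Phi_0$ is $\ge 1-O(k\sqrt{\eps})$. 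Crucially, since the basic LP imposes only \emph{singleton} consistency, we never have to reconcile order-$\ge 2$ marginals across overlapping constraints; that coupling is exactly what forces the $\log\log n$ ceiling in \cite{KhotTW14, delaVegaK07}, and here it is replaced by the clean lifting in \cref{thm:csp:main}.

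Assembling, both events hold simultaneously for a random $\Phi_0$ with positive probability, so $\Phi_0$ is a genuine $(1-O(k\sqrt{\eps}),\rho(f)+\eps)$ basic LP gap, and \cref{thm:csp:main} completes the proof after rescaling $\eps$. I expect the marginal-realizability inside the core step to be the main obstacle: one must check that the vanishing-measure condition is exactly the algebraic statement needed for the product of the requested single-variable biases to lie in $\calC(f)$, and control the residual fluctuation across a variable's incident constraints tightly enough that uniformizing costs only $O(k\sqrt{\eps})$. This is a direct Fourier-analytic and concentration estimate, essentially a restatement of the relevant computation in \cite{KhotTW14} restricted to level one.
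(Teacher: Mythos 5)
Your high-level plan --- produce a $(1-O(k\sqrt{\eps}),\,\rho(f)+\eps)$ basic LP gap instance from a vanishing measure and feed it into \cref{thm:csp:main} --- is exactly the route the paper takes (for the stronger \cref{cor:csp:ktw-SA-gap-improved}; \cref{thm:csp:ktw-SA-gap} itself is cited from~\cite{KhotTW14}). Your soundness step is also fine, and in fact cleaner than the cited one: with uniformly random scopes \emph{and} uniformly random shifts, every fixed assignment satisfies a random constraint with probability exactly $\rho(f)$, so Chernoff plus a union bound gives $\opt(\Phi_0)<\rho(f)+\eps$.

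The gap is in the completeness step, and it is real. You draw $\beta_v$ i.i.d.\ from the one-coordinate marginal of $\Lambda$ (with a random sign) and then assert, attributing it to \cref{def:csp:vanishing-measure-lp-2}, that for a random constraint the resulting tuple $\zeta^C=\bigl((-1)^{b_{C,i}}\beta_{v_{\pi^{-1}(i)}}\bigr)_i$ lies in $\calC(f)$ except with probability $O(k\sqrt{\eps})$. This does not follow, and is false in general. $\Lambda$ is a measure \emph{on} $\calC(f)$, so a single draw $\zeta\sim\Lambda$ lies in $\calC(f)$; but $\zeta^C$ is distributed as a product of one-dimensional marginals of $\Lambda$, which is a different measure and typically places no mass on $\calC(f)$ when $\calC(f)$ is a proper (lower-dimensional, or non-box) sub-polytope of $[-1,1]^k$. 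The vanishing-measure identity is a Fourier--moment identity pairing $\hat f$ with signed marginals of $\Lambda$; in \cite{KhotTW14} it is used to prove \emph{soundness} of their structured instance, and it says nothing about product-of-marginals membership in $\calC(f)$. There is also a secondary problem with the fallback: if a variable $v$ lies in one ``good'' constraint whose local distribution gives $v$ bias $\pm\beta_v$ and in one ``bad'' constraint assigned the uniform distribution (bias $0$), the singleton consistency constraints of the basic LP are violated unless $\beta_v=0$, so you cannot quietly absorb the bad events into $\treedist_v$.

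The construction that makes this go through (\cref{fig:csp:ktw-instance} and the proof of \cref{cor:csp:ktw-SA-gap-improved}) is ``constraint-first'' rather than ``variable-first'': sample $\zeta\sim\Lambda$, and choose the $j$-th literal from a pre-allocated bias class indexed by the interval of $[0,1]$ containing $|\zeta_j|$, with the sign of the literal determined by $\operatorname{sgn}(\zeta_j)$. All variables in a class are assigned the \emph{same} bias $t_j$ (the right endpoint of $I_j$), so singleton marginals are automatically consistent; and for each constraint $C$ one replaces the distribution on $f^{-1}(1)$ with biases $\zeta(C)\in\calC_\delta(f)$ by a nearby distribution with biases exactly equal to the rounded $t_{i_j}$ via \cref{claim:csp:bias-rounding}, paying only $O(k\eps/\delta)$ in $\ell_1$. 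The discretisation into $O(1/\eps)$ bias classes is precisely what replaces your (failed) ``concentration of $\beta_v$ to the constraint's desired bias'' step; without it there is nothing forcing the per-variable bias to match the per-constraint target.
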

Combining this with our~\cref{thm:csp:main} already gives us the following stronger result:
\begin{corollary}~\label{cor:csp:ktw-SA-gap-improved}
  Let $f : \B^k \to \B$ be a $k$-ary boolean predicate that has a vanishing measure. Then for every $\eps>0$,
  there is a constant $c_\eps > 0$ such that for infinitely may $N \in\N$, there exists an
  instance $\Phi$ of $\maxkcsp(f)$ on $N$ variables satisfying the following:
  \begin{itemize}
    \item All integral assignment of $\Phi$ satisfies at most $\rho(f) +  \eps$ fraction of
      constraints.
    \item The LP relaxation given by  $c_\eps\cdot \frac{\log N}{\log\log N}$ levels of
      Sherali-Adams hierarchy has $\sdpopt(\Phi) \geq 1- O(k\sqrt{\eps})$.
  \end{itemize}
\end{corollary}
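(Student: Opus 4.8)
The plan is to obtain \cref{cor:csp:ktw-SA-gap-improved} by chaining the Khot--Tulsiani--Worah style construction of \cref{thm:csp:ktw-SA-gap} with our lifting result \cref{thm:csp:main}. The former produces an integrality gap instance that is already strong enough to serve as the fixed seed $\Phi_0$ required by the latter, and the latter amplifies the number of Sherali--Adams levels from $\Theta(\log\log N)$ up to $\Theta(\log N/\log\log N)$.

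Concretely, fix $\eps>0$. First I would apply \cref{thm:csp:ktw-SA-gap} with parameter $\eps/2$, obtaining for infinitely many $N$ instances $\Psi_N$ of $\maxkcsp(f)$ with $\opt(\Psi_N)\le \rho(f)+\eps/2$ whose optimum for the Sherali--Adams relaxation at level $c_{\eps/2}\cdot\log\log N$ is at least $1-O(k\sqrt{\eps})$. Such an instance is automatically a basic-LP integrality gap instance: as observed in \cref{sec:csp:relaxations}, every basic-LP constraint is implied by level-$k$ Sherali--Adams, and higher levels are only tighter, so for all $N$ large enough that $c_{\eps/2}\log\log N\ge k$ the basic-LP optimum of $\Psi_N$ is at least its level-$(c_{\eps/2}\log\log N)$ Sherali--Adams optimum, hence at least $1-O(k\sqrt{\eps})$. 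Fixing one such $N_0$ and setting $\Phi_0:=\Psi_{N_0}$ yields a fixed $(1-O(k\sqrt{\eps}),\,\rho(f)+\eps/2)$ integrality gap instance for the basic LP.

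Finally I would feed $\Phi_0$ into \cref{thm:csp:main} with error parameter $\eps/2$, which produces, for infinitely many $N$, instances of size $N$ that are $(1-O(k\sqrt{\eps})-\eps/2,\ \rho(f)+\eps/2+\eps/2)$ integrality gap instances for the Sherali--Adams relaxation at level $c_\eps\cdot\frac{\log N}{\log\log N}$ for a suitable constant $c_\eps>0$. Since $k\ge 1$ and $\eps/2\le\sqrt{\eps}$ for $\eps\in(0,1)$, the completeness bound absorbs into $1-O(k\sqrt{\eps})$, while the soundness is exactly $\rho(f)+\eps$; this is the statement of the corollary, and the two-sided version noted in the footnote survives both steps unchanged, since neither \cref{thm:csp:ktw-SA-gap} nor the soundness argument behind \cref{lem:csp:basic-lp-soundness} depends on whether one asks for $\opt\le s$ or for $\opt\in[\rho(f)-\eps,\rho(f)+\eps]$. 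There is no real obstacle here; the only points requiring care are the direction of the hierarchy (a large value for many Sherali--Adams levels forces a large value for the weaker basic LP, not conversely) and tracking the two additive $O(\eps)$-type losses so that their sum stays within the claimed completeness and soundness slacks.
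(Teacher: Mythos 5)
Your proposal is correct, and it is exactly the route the paper gestures at with the one-sentence justification immediately preceding the corollary (``Combining this with our~\cref{thm:csp:main} already gives us the following stronger result''). The chaining is sound: the basic LP constraints form a subset of the level-$t$ Sherali--Adams constraints for $t\ge k$, so a high $t$-level SA value forces a high basic LP value, and you correctly obtain a $(1-O(k\sqrt{\eps}),\,\rho(f)+O(\eps))$ basic LP gap from a single large-enough $\Psi_{N_0}$ and feed it to \cref{thm:csp:main}.

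Where you diverge from the paper is in the detailed proof that follows. The paper explicitly notes that applying \cref{thm:csp:main} only requires a basic LP gap, which is a much weaker input than the $\Theta(\log\log N)$-level gap asserted by \cref{thm:csp:ktw-SA-gap}, and then gives a short self-contained argument exhibiting the basic LP solution directly: for each literal set the bias to the rounded endpoint of its interval, and for each constraint $C$ use \cref{claim:csp:bias-rounding} to perturb the generating distribution $\nu(C)$ into a nearby $\nu'(C)$ whose biases match, incurring only $O(k\eps/\delta)$ loss. This bypasses the machinery used to prove \cref{thm:csp:ktw-SA-gap} (noisy tree propagation, approximate-to-exact consistency correction, etc.) and is the ``simpler proof'' promised in the discussion around \cref{thm:csp:ktw}. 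Both approaches are valid; yours is cleaner as a logical reduction but uses the strictly stronger \cref{thm:csp:ktw-SA-gap} as a black box, whereas the paper's route is more economical because it never needs the $O(\log\log N)$-level construction at all.

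One minor point of hygiene: the definition of a $(c,s)$-integrality gap instance uses a strict inequality $\opt(\Phi)<s$, while \cref{thm:csp:ktw-SA-gap} gives $\opt\le\rho(f)+\eps$. Applying the theorem with parameter $\eps/4$ rather than $\eps/2$ (so that $\opt\le\rho(f)+\eps/4<\rho(f)+\eps/2$) closes this crack without affecting the final bounds; otherwise your epsilon bookkeeping and the final absorption of $\eps/2$ into $O(k\sqrt{\eps})$ is correct.
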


However, note that to apply~\cref{thm:csp:main}, one only needs a gap for the basic LP, which is much
weaker requirement than the $O(\log \log N)$-level gap given by~\cref{thm:csp:ktw-SA-gap}. 
We observe below that the gap for the basic LP follows very simply from the construction by
Khot \etal~\cite{KhotTW14}. One can then directly use this gap for applying~\cref{thm:csp:main} instead
of going through~\cref{thm:csp:ktw-SA-gap}.

Khot \etal~\cite{KhotTW14} use the probabilistic construction given in~\cref{fig:csp:ktw-instance}, 
for a given $\eps > 0$.
The construction actually requires $\Lambda$ to be a vanishing measure over the polytope
$\calC_{\delta}(f) \defeq (1-\delta) \cdot \calC(f)$, for $\delta = \sqrt{\eps}$. However, since
$\calC_{\delta}(f)$ is simply a scaling of $\calC(f)$, a vanishing measure over $\calC(f)$
also gives a vanishing measure over $\calC_{\delta}(f)$. Note that each $\zeta_0 \in \calC(f)$
corresponds to a distribution $\nu_0$ supported in $f^{-1}(1)$. For each $\zeta \in \calC_{\delta}$,
let $\zeta_0 = \frac{1}{1-\delta} \cdot \zeta$ be the point in $\calC(f)$ with distribution
$\nu_0$. Then the distribution $\nu = (1-\delta) \cdot \nu_0 + \delta \cdot U_k$ (where $U_k$
denotes the uniform distribution on $\B^k$) satisfies $\forall i \in [k] \Ex{\alpha \sim
  \nu}{(-1)^{\alpha_i}} = \zeta_i$.
\begin{figure}[!ht]
  \hrule
  \vline
  \hspace{10 pt}
  \begin{minipage}[t]{0.95\linewidth}{
    \vspace{10 pt}
    Let $n_0=\lceil \frac{1}{\eps}\rceil$. Partition the interval
    $[0,1]$ into $n_0+1$ disjoint intervals $I_0,I_1,\ldots ,I_{n_0}$ where $I_0=\{0\}$ and $I_i=\left(
    \nfrac{i-1}{n_0},\nfrac{i}{n_0}\right]$ for $1\le i\le n_0$. For each interval $I_i$, let $X_i$
  be a collection of $n$ variables (disjoint from all $X_j$ for $j \neq i$).

\smallskip
Generate $m$ constraints independently according to the following procedure:
    \begin{itemize}
      \item Sample $\zeta\sim \Lambda$.
      \item For each $j\in [k]$, let $i_j$ be the index of the interval which contains
        $\abs{\zeta(j)}$. Sample uniformly a variable $y_j$ from the set $X_j$.
      \item If $\zeta(j)<0$, then negate $y_j$. If $\zeta(j)=0$, then negate $y_j$ w.p.
        $\frac{1}{2}$.
      \item Introduce the constraint $f$ on the sampled $k$ tuple of literals.
    \end{itemize}
    \smallskip
  }
  \end{minipage}
  \hfill
  \vline
  \hrule
  \caption{Sherali-Adams integrality gap instance for vanishing measure}\label{fig:csp:ktw-instance}
\end{figure}

They show for a sufficiently large constant $\gamma$, an instance $\Phi$ with $m = \gamma \cdot n$
constraints satisfies with high probability, that for all assignments $\sigma$,
$\abs{\sat_{\Phi}(\sigma) - \rho(f)} \leq \eps$ (see Lemma $4.4$ in~\cite{KhotTW14}). The proof is
similar to that of~\cref{lem:csp:basic-lp-soundness}.

Additionally, we need the following claim from~\cite{KhotTW14} (see Claim $4.7$ there), which allows
one to ``round''
coordinates of the vectors $\zeta \in \calC_{\delta}(f)$ to the end-points of the intervals $I_0,
\ldots, I_{n_0}$. This ensures that any two variables in the same collection $X_i$ have the same
bias.  The proof of the claim follows simply from a hybrid argument.
\begin{restatable}{repclaim}{biasclaim}\label{claim:csp:bias-rounding}
Let $\zeta \in \calC_{\delta}(f)$ and let $\nu$ be the corresponding distribution supported in $f^{-1}(1)$
such that for all $i \in [k]$, we have $\zeta_i = \Ex{\alpha \sim \nu}{(-1)^{\alpha_i}}$. Let $t_1, \ldots,
t_k \in [0,1]$ be such that for all $i \in [k]$, $\abs{t_i - \abs{\zeta_i}} \leq \eps$ for
$\eps<\delta/2$. 
Then there exists a distribution $\nu'$ on $\B^k$ such that
\[
  \norm{\nu - \nu'}_1~=~O(k \cdot (\eps/\delta)) \qquad \text{and} \qquad \forall i \in [k],~\Ex{\alpha
  \sim \nu'}{ {(-1)}^{\alpha_i}}~=~\sgn(\zeta_i)\cdot t_{i} \mper
\]
\end{restatable}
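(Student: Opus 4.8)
The plan is to use the mixture representation of $\nu$ recorded just above the claim, namely $\nu = (1-\delta)\cdot\nu_0 + \delta\cdot U_k$ with $\nu_0$ supported on $f^{-1}(1)$ and having bias vector $\zeta_0 = \frac{1}{1-\delta}\,\zeta \in \calC(f)$, and to build $\nu'$ by leaving the $\nu_0$-part untouched while reallocating only the uniform mass $\delta\cdot U_k$ to a suitable \emph{product} distribution. Concretely, for each $i \in [k]$ set $p_i \defeq \frac{1}{\delta}\bigl(\sgn(\zeta_i)\,t_i - \zeta_i\bigr)$, let $\mu_i$ be the distribution on $\B$ with $\Ex{b \sim \mu_i}{(-1)^{b}} = p_i$, write $U_1$ for the uniform distribution on $\B$, and put $\nu' \defeq (1-\delta)\cdot\nu_0 + \delta\cdot\bigl(\mu_1\otimes\cdots\otimes\mu_k\bigr)$.

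First I would check that $\nu'$ is well defined, i.e.\ that each $p_i$ is a legal bias in $[-1,1]$. Since $\sgn(\zeta_i)\,t_i - \zeta_i = \sgn(\zeta_i)\bigl(t_i - \abs{\zeta_i}\bigr)$, we get $\abs{p_i} = \abs{t_i - \abs{\zeta_i}}/\delta \le \eps/\delta < \frac12$, the last inequality being exactly the hypothesis $\eps < \delta/2$; this is the only place that hypothesis is used. Next I would verify the first moments: by linearity of expectation, for every $i$ we have $\Ex{\alpha \sim \nu'}{(-1)^{\alpha_i}} = (1-\delta)(\zeta_0)_i + \delta\,p_i = \zeta_i + \bigl(\sgn(\zeta_i)\,t_i - \zeta_i\bigr) = \sgn(\zeta_i)\,t_i$, as required.

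It remains to bound the $\ell_1$-distance, which is where the hybrid argument the paper alludes to enters. Since $\nu$ and $\nu'$ share their $(1-\delta)\,\nu_0$ summand, $\norm{\nu - \nu'}_1 = \delta\cdot\norm{U_k - \mu_1\otimes\cdots\otimes\mu_k}_1$. Interpolating between $U_k = U_1^{\otimes k}$ and $\mu_1\otimes\cdots\otimes\mu_k$ by switching one tensor factor at a time produces hybrids in which consecutive terms differ in a single coordinate; using the standard fact that the $\ell_1$-distance between two product distributions differing in exactly one coordinate equals the $\ell_1$-distance of that coordinate's marginals, which here is $\norm{\mu_j - U_1}_1 = \abs{p_j}$, a telescoping sum gives $\norm{U_k - \mu_1\otimes\cdots\otimes\mu_k}_1 \le \sum_{j=1}^{k}\abs{p_j}$. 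Hence $\norm{\nu - \nu'}_1 \le \delta\sum_{j=1}^{k}\abs{p_j} = \sum_{j=1}^{k}\abs{t_j - \abs{\zeta_j}} \le k\eps$, which is certainly $O\bigl(k\cdot(\eps/\delta)\bigr)$.

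I do not expect a genuine obstacle here: the one conceptual point is recognizing that the $\delta$-weight uniform component of $\nu$ is exactly the budget available for correcting the biases, and that $\eps < \delta/2$ is precisely what keeps each correction $p_i$ inside $[-1,1]$. The only routine care required is the $\ell_1$ bookkeeping in the hybrid step, i.e.\ that replacing one marginal of a product distribution costs, in $\ell_1$, exactly the change made to that marginal.
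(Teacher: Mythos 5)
Your proof is correct, and it takes a genuinely different route from the paper's. The paper does not appeal to the decomposition $\nu = (1-\delta)\nu_0 + \delta\,U_k$ at all inside the proof: it runs a hybrid $\nu = \bnu_0, \bnu_1, \ldots, \bnu_k = \nu'$ directly on the given $\nu$, at step $j$ mixing in a mass $\tau_j$ of an auxiliary distribution $D_j$ (which fixes the $j$-th bit to $\sgn(r_j - \zeta_j)$ and samples the other bits independently with the current biases), then solves $r_j = (1-\tau_j)\zeta_j + \tau_j\sgn(r_j - \zeta_j)$ to find $\tau_j = O(\eps/\delta)$, and sums the per-step $\ell_1$ changes $\norm{\bnu_j - \bnu_{j-1}}_1 \le 2\tau_j$ by triangle inequality. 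That argument works for \emph{any} distribution $\nu$ with bias vector $\zeta$; the hypothesis $\zeta \in \calC_\delta(f)$ is used only to guarantee the denominator $\abs{\sgn(r_j-\zeta_j) - r_j} \ge \delta/2$. Your version instead leaves the $(1-\delta)\nu_0$ summand untouched and reallocates only the $\delta\,U_k$ component into a product distribution, so the hybrid happens inside a product and the bookkeeping is exact rather than per-step $O(\cdot)$; this buys the cleaner (and slightly stronger) bound $\norm{\nu-\nu'}_1 \le k\eps$. The trade-off is that your proof is tied to the specific mixture form of $\nu$ discussed above the claim; it is the form used in the application, so the proof serves its purpose, but it would not establish the claim for an arbitrary $\nu$ with the given first moments. (Relatedly, the phrase "supported in $f^{-1}(1)$" in the claim's statement sits oddly with the $\nu$ actually used, which contains a $\delta\,U_k$ component and so is not supported in $f^{-1}(1)$; your reading of that $\nu$ matches the surrounding text and the subsequent application, whereas the paper's proof sidesteps the question by not using the support at all.)
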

\begin{proof}
Let $r_j = \sgn(\zeta_j) \cdot t_{j}$ be the desired bias of the $j^{th}$ coordinate. Then,
$\abs{\zeta(j) - r_j} \leq \eps$ for all $j \in [k]$ 
We construct a sequence of distributions $\nu_0,\ldots,\nu_k$ such that $\nu_0 = \nu$ and
$\nu_k = \nu'$. In $\bnu_j$, the biases are $(r_1,\ldots,r_j,\zeta_{j+1},\ldots,\zeta_k)$.

The biases in $\nu_0$ satisfy the above by definition. 
We obtain $\bnu_{j}$ from $\bnu_{j-1}$ as, 
\[
\nu_{j} = (1-\tau_j) \cdot \nu_{j-1} + \tau_j \cdot D_j \mcom
\]
where $D_j$ is the distribution in which all bits, except for the $j^{th}$ one, are set
independently according to their biases in $\bnu_{j-1}$. For the $j^{th}$ bit, we set it to
$\sgn(r_j-\zeta_j)$ (if $r_j-\zeta(j) = 0$, we can simply proceed with $\bnu_j = \bnu_{j-1}$). 
The biases for all except for the $j^{th}$ bit are unchanged. For the $j^{th}$ bit, the bias now
becomes $r_j$ if
\[
r_j = (1-\tau_j) \cdot \zeta_j + \tau_j \cdot \sgn(r_j-\zeta_j)
~\Longrightarrow~
\tau_j \cdot (\sgn(r_j-\zeta_j) - r_j) = (1-\tau_j) \cdot (r_j - \zeta_j) \mper
\]
Since $\zeta \in \calC_{\delta}(f)$, we know that 
$\abs{\sgn(r_j-\zeta(j)) - r_j} \geq \delta/2$. Also, $\abs{r_j - \zeta(j))} \leq \eps$ by
assumption. Thus, we can choose $\tau_j = O(\eps/\delta)$ which gives that $\norm{\bnu_{j} -
  \bnu_{j-1}}_1 = O(\eps/\delta)$. The final bound then follows by triangle inequality. 
\end{proof}
We can now use the above to give a simplified proof of~\cref{cor:csp:ktw-SA-gap-improved}.
\begin{proofof}{of\cref{cor:csp:ktw-SA-gap-improved}}
  Here we exhibit a solution of the basic LP~\cref{fig:csp:basic-lp} for the instance given
  in~\cref{fig:csp:ktw-instance}.
  For each variable $y_{j}$ coming from the set $X_j$ for $j\in\{0,1,\ldots,n_0\}$, we set
  the bias $t_{j}$ of the variable to be the rightmost point of the interval $I_j$ \ie 
  set $\vartwo{y_j}{-1}= \frac12 \cdot \inparen{1-\frac{i}{n_0}}$ and  $\vartwo{y_j}{1}=\frac12 \cdot
  \inparen{1+\frac{i}{n_0}}$. 
  
  For each constraint $C$ of the form $f(y_{i_1} + b_{1}, \ldots, y_{i_k}+b_{k})$, 
  let $\zeta(C) \in \calC_{\delta}(f)$ be the point used to generate it,
  and let $\nu(C)$ denote the corresponding distribution on $\B^k$. By~\cref{claim:csp:bias-rounding},
  there exists a distribution $\nu'(C)$ such that $\norm{\nu(C) - \nu'(C)}_1 = O(k\eps/\delta)$ and
  such that the biases of the \emph{literals} satisfy $\Ex{\alpha \sim \nu'(C)}{(-1)^{\alpha_j}} =
  \sgn(\zeta_j) \cdot t_{i_j}$, where $t_{i_j}$ denotes the bias for the interval to which $y_{i_j}$
  belongs. When $t_{i_j} \neq 0$, we negate a variable only when $\sgn(\zeta_j) < 0$. Thus, we have
  $\Ex{\alpha \sim \nu'(C)}{(-1)^{\alpha_j + b_j}} = t_{i_j}$, which is consistent with the bias given by the
  singleton variables $\vartwo{y_{i_j}}{1}$ and $\vartwo{y_{i_j}}{-1}$. We thus define the local
  distribution on the set $S_C$ as $\dist_{S_C}(\alpha) = (\nu'(C))(\alpha+b_C)$.

  For all $C \in \Phi$, since $\zeta(C) \in \calC_{\delta}(f)$, we have that $\Ex{\alpha \sim
    \nu(C)}{f(\alpha)} \geq 1 - \delta$. Also, since $\norm{\nu(C)-\nu'(C)}_1 =
  O(k\eps/\delta)$, we get that $\Ex{\alpha \sim \nu'(C)}{f(\alpha)} \geq 1 - \delta -
  O(k\eps/\delta)$. Thus, we have for all $C \in \Phi$, $\Ex{\alpha \sim \dist_{S_C}}{f(\alpha+b_C)} \geq 1
  - \delta - O(k\eps/\delta)$. Taking $\delta = \sqrt{\eps}$ proves the claim.
\end{proofof}

\subsection{Lower-bounds for LP Extended Formulations}
A connection between LP integrality gaps for the Sheral-Adams hierarchy, and lower bounds on the
size of LP extended formulations, was first established by Chan \etal~\cite{ChanLRS13} and later
improved by Kothari \etal~\cite{KothariMR16}. In~\cite{KothariMR16}, the authors proved the
following:
\begin{theorem}[\cite{KothariMR16}, Theorem$1.2$]\label{thm:csp:KothariMR16}
  There exist constants $0 < h < H$ such that the following holds. Consider a function
  $f:\xspace\mathbb{N}\to\mathbb{N}$. Suppose that the $f(n)$-level Sherali-Adams relaxation
  for a CSP cannot achieve a $(c,s)$-approximation on instances on $n$ variables. Then, no LP
  extended formulation (of the original LP) of size at most $n^{h\cdot f(n)}$ can achieve a
  $(c,s)$-approximation for the CSP on $n^H$ variables.
\end{theorem}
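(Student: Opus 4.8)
The plan is to reprove this the way Kothari, Meka and Raghavendra \cite{KothariMR16} do (refining Chan, Lee, Raghavendra and Steurer \cite{ChanLRS13}): through Yannakakis's nonnegative-factorization theorem. Fix the CSP and the target pair $(c,s)$. First I would attach to the ``$(c,s)$-approximation problem on $N$-variable instances'' a slack matrix $M = M^{N}_{c,s}$, whose rows are indexed by the linear inequalities coming from instances $\Phi$ with $\mathrm{OPT}(\Phi) < s$ (each contributing a constraint of the form ``objective $\le$ something''), whose columns are indexed by candidate assignments of the ``easy'' instances, and whose entries are the corresponding nonnegative slacks $c - \mathrm{val}(\cdot)$. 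By the Yannakakis theorem in its polyhedral-pair form, an LP extended formulation of size $r$ that achieves a $(c,s)$-approximation on $N$-variable instances yields a nonnegative factorization of $M$ of inner dimension $r$; equivalently it writes (the relevant part of) $M$ as a conical combination of $r$ combinatorial rectangles over the assignment hypercube. So it suffices to lower bound the nonnegative rank of $M$.

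Second, I would use that instances of a fixed CSP on the large variable set are built by planting copies of the predicate. After the standard projection and symmetrization steps this reduces the analysis of the nonnegative rank of $M$ to that of a canonical ``pattern matrix'' depending only on the predicate $f$ and on $(c,s)$; a rank-$r$ factorization still presents this matrix as a conical sum of $r$ rectangles $R_1,\dots,R_r$ over a product space of sub-assignments.

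Third — and this is where the real work lies and the step I expect to be the main obstacle — comes the ``rectangles are approximately juntas'' lemma: a conical combination of $r$ rectangles over a product space, after a random restriction that leaves only a small sub-instance free, is with high probability $\varepsilon$-close in $L_1$ to a $d$-junta, with $d$ bounded polynomially in $\log r$, in $1/\varepsilon$, and in the restriction rate. Its proof requires (i) that a single restricted rectangle concentrates essentially all of its mass on a bounded set of ``influential'' coordinates, shown by a second-moment/entropy estimate together with an anticoncentration bound preventing too many rectangles from being simultaneously barely determined, and (ii) that only about $\log r$ rectangles carry non-negligible weight after averaging, so the union of their junta-supports stays small. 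A $d$-junta over sub-assignments is, essentially by definition, a feasible point of the level-$d$ Sherali-Adams polytope, that is, a genuine local distribution on $d$ variables. To reach $\tfrac{\log n}{\log\log n}$-type degrees one may need to iterate this over the levels, shrinking the effective rectangle count at each round.

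Finally I would chase the constants. Given a size-$r$ LP that $(c,s)$-approximates the CSP on $N = n^{H}$ variables, restricting the factorization of $M$ to a random sub-instance on $\approx n$ variables produces, by the (iterated) junta lemma, an $\varepsilon$-approximate level-$d$ Sherali-Adams solution on $n$ variables whose degree $d$ is controlled by $\log r$ (and by $H$ and $\varepsilon$); choosing $H$ large enough for the sub-sampling and $h$ small enough that $r \le n^{h f(n)}$ forces $d \le f(n)$ after absorbing the polynomial, we obtain a level-$f(n)$ Sherali-Adams relaxation that $(c,s)$-approximates $n$-variable instances. The $\varepsilon$-loss only perturbs the objective by $O(\varepsilon)$, which a fixed slack between $c$ and $s$ swallows (or one keeps the clean $(c,s)$ statement at the cost of a slightly worse constant). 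Contrapositively: a level-$f(n)$ Sherali-Adams $(c,s)$-gap on $n$-variable instances rules out LP extended formulations of size $\le n^{h f(n)}$ achieving a $(c,s)$-approximation on $n^{H}$-variable instances, which is the assertion. The delicate parts of a full write-up are the proof of the junta lemma itself and this final accounting of $h$ and $H$ — in particular making sure the $\varepsilon$-loss does not silently weaken $(c,s)$ to $(c-\varepsilon, s+\varepsilon)$.
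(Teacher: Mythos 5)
The paper does not prove this theorem; it is imported verbatim from Kothari--Meka--Raghavendra \cite{KothariMR16} and used as a black box (combined with \cref{thm:csp:main} to yield \cref{cor:csp:size-lower-bound}). So there is no ``paper's own proof'' to compare against. Your sketch is a reasonable reconstruction of the \cite{KothariMR16} argument at the architectural level: Yannakakis's factorization theorem to pass from extended formulations to nonnegative rank of a slack matrix, symmetrization to a pattern matrix, random restriction plus a ``low nonnegative rank implies approximable by low-degree nonnegative (conical) juntas'' lemma, and then reading a junta off as a Sherali--Adams pseudo-distribution. The piece you single out as the main obstacle --- the junta lemma, with the degree controlled by $\mathrm{poly}(\log r, 1/\varepsilon)$ and the restriction rate --- is indeed the technical heart of \cite{KothariMR16} and is what lets the degree climb to $\Theta(\log n / \log\log n)$ rather than $O(\log\log n)$.

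The one point you raise but do not resolve is exactly the one that separates \cite{KothariMR16} from \cite{ChanLRS13}: the theorem as stated here is a \emph{clean} $(c,s)$-to-$(c,s)$ transfer, with no $\varepsilon$-degradation of the approximation pair, whereas the naive route you describe only gives $(c-\varepsilon, s+\varepsilon)$. Your closing sentence correctly flags this as a hazard, but a complete write-up would have to show how the $\varepsilon$ is absorbed into the choice of $h$ and $H$ (or into the definition of the slack matrix) rather than into $c$ and $s$; as it stands the sketch would at best prove the weaker $(c-\varepsilon,s+\varepsilon)$ variant. Since the surrounding paper itself only needs the $(c-\varepsilon,s+\varepsilon)$ conclusion (that is what \cref{cor:csp:size-lower-bound} asserts), this gap is harmless for the paper's purposes, but it does mean your proposal does not yet establish the theorem exactly as stated.
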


Combining~\cref{thm:csp:main} with~\cref{thm:csp:KothariMR16} yields (with $f(N)=c_\eps\cdot\frac{\log N}{\log \log N}$):
\sizecorollary*

\newcommand{\iprod}{\innerprod}
\def\PExc{\ProbabilityRender{\widetilde{\mathbb{E}}_C}}
\def\PEx{\ProbabilityRender{\widetilde{\mathbb{E}}}}
\newcommand{\xtensor}[1]{x^{\otimes #1}}
\newcommand{\Cp}{\mathds{C}}
\newcommand{\fmaxd}[1]{#1_{\max}}
\chapter{Optimizing Polynomials Over Sphere}
\section{Introduction and Context}
In this chapter, we study the problem of optimizing homogeneous polynomials over the unit sphere. Formally, given an
$n$-variate degree-$d$ homogeneous polynomial $f$, the goal is to compute
\begin{equation}\label{eq:poly:problem}
  \ftwo{f}~\defeq~\sup_{\|x\| = 1} \abs{f(x)} 
\end{equation}
When $f$ is a homogeneous polynomial of degree 2, this problem is equivalent computing the spectral norm
of an associated symmetric matrix $M_f$. For higher degree $d$, it defines a natural higher-order
analogue of the eigenvalue problem for matrices.  The problem also provides an important testing ground
for the development of new spectral and semidefinite programming (SDP) techniques, and techniques
developed in the context of this problem have had applications to various other constrained
settings~\cite{HLZ10, Laurent09, Lasserre09}.

Besides being a natural and fundamental problem in its own right, it has connections to widely studied
questions in many other areas. In quantum information theory~\cite{BH13, BKS14}, the problem of computing
the optimal success probability of a protocol for Quantum Merlin-Arthur games can be thought of as
optimizing certain classes of polynomials over the unit sphere.  The problem of estimating the $2
\rightarrow 4$ norm of an operator, which is equivalent to optimizing certain homogeneous degree-4
polynomials over the sphere, is known to be closely related to the Small Set Expansion Hypothesis (SSEH)
and the Unique Games Conjecture (UGC)~\cite{BBHKSZ12, BKS14}.  The polynomial optimization problem is
also very relevant for natural extensions of spectral problems, such as low-rank decomposition and PCA,
to the case of tensors~\cite{BKS15,GM15, MR14, HSS15}.  Frieze and Kannan~\cite{FK08} (see
also~\cite{BV09}) also established a connection between the problem of approximating the spectral norm of
a tensor (or equivalently, computing $\ftwo{f}$ for a polynomial $f$), and finding planted cliques in
random graphs.



The problem of polynomial optimization has been studied\footnote{In certain cases, the problem studied is
not to maximize $\abs{f}$, but just $f(x)$. While the two problems are equivalent for homogeneous
polynomials of odd degree, some subtle issues arise when considering polynomials of even degree. We
compare the two notions in the full version~\cite{BGGLT17}.} over various  compact sets~\cite{Lasserre09,
deKlerk08}, and is natural to ask how well polynomial time algorithms can {\em approximate} the optimum
value over a given compact set (see~\cite{deKlerk08} for a survey).
While the maximum of a degree-$d$ polynomial over the simplex admits a PTAS for every fixed $d$~\cite{deKLP06},
the problem of optimizing even a degree $3$ polynomial over the hypercube does not
admit any approximation better than $2^{ {(\log n)}^{1-\eps}}$ (for arbitrary $\eps > 0$) 
assuming NP cannot be solved in time $2^{ {(\log n)}^{O(1)}}$~\cite{HV04}. 

The approximability of polynomial optimization on the sphere is
poorly understood in comparison. It is known that the  maximum of a degree-$d$ polynomial 
can be approximated within a factor of $n^{d/2-1}$ in polynomial time~\cite{HLZ10, So11}. 
On the hardness side, Nesterov~\cite{Nesterov03} gave a reduction from Maximum Independent Set to
optimizing a homogeneous cubic polynomial over $\SSS^{n-1}$. Formally, given a graph $G$, there
exists a homogeneous cubic polynomial $f(G)$ such that 
$\sqrt{1 - \frac{1}{\alpha(G)}} = \max_{\|x\|=1} f(x)$. Combined with the hardness of Maximum
Independent Set~\cite{Hastad96}, this rules out an FPTAS for optimization over the unit sphere. 
Assuming the Exponential Time Hypothesis, Barak \etal~\cite{BBHKSZ12} proved that computing $2
\rightarrow 4$ norm of a matrix, a special case when $f$ is a degree-$4$ homogeneous polynomial, is
hard to approximate within a factor $\exp(\log^{1/2-\epsilon}(n))$ for any $\epsilon > 0$.

Optimization over $\SSS^{n-1}$ has been given much attention in the optimization community, where
for a fixed number of variables $n$ and degree $d$ of the polynomial, it is known that the estimates
produced by $q$ levels a certain hierarchy of SDPs (Sum of Squares) get arbitrarily close to the
true optimal  solution as $q$ increases (see~\cite{Lasserre09} for various applications). We refer
the reader to the recent work of Doherty and Wehner~\cite{DW12} and de Klerk, Laurent, and
Sun~\cite{dKMS14} and references therein for more information on convergence results.
These algorithms run in time $n^{O(q)}$, which is polynomial for constant $q$.
Unfortunately, known convergence results often give a non-trivial
bound only when the $q$ is linear in $n$. 

In computer science, much attention has been given to the sub-exponential runtime regime (i.e.  $q \ll
n$) since many of the target applications such as SSE, QMA and refuting random CSPs are of considerable
interest in this regime.  In addition to the polytime $n^{d/2-1}$-approximation for general
polynomials~\cite{HLZ10, So11}, approximation guarantees have been proved for several special cases
including $2 \rightarrow q$ norms~\cite{BBHKSZ12}, polynomials with non-negative
coefficients~\cite{BKS14}, some polynomials that arise in quantum information theory~\cite{BKS17,BH13},
and random polynomials~\cite{RRS16, BGL16}. Hence, there is considerable interest in tightly
characterizing the approximation guarantee achievable using sub-exponential time.

In this chapter, we develop general techniques to design and analyze algorithms for polynomial optimization
over the sphere. The sphere constraint is one of the simplest constraints for polynomial optimization and
thus is a good test-bed for techniques. Indeed, we believe these techniques will also be useful in
understanding polynomial optimization for other constrained settings.

In addition to giving an analysis the problem for arbitrary polynomials, these techniques can also be
adapted to take advantage of the structure of the input polynomial, yielding better approximations for
several special cases such as polynomials with non-negative coefficients, and sparse polynomials.
Previous polynomial time algorithms for polynomial optimization work by reducing the problem to diameter
estimation in convex bodies~\cite{So11} and seem unable to utilize structural information about the
(class of) input polynomials. Development of a method which can use such information was stated as an
open problem by Khot and Naor~\cite{KN08} (in the context of $\ell_{\infty}$ optimization). 
%
%

Our approximation guarantees are with respect to the optimum of the well-studied Lasserre/sum-of-squares
(SoS) semidefinite programming relaxation. Such SDPs are the most natural tool to bound the optima of
polynomial optimization problems, and our results shed light on the efficacy of higher levels of the SoS
hierarchy to deliver better approximations to the optimum. We discuss the SoS connection in~\cref{sec:poly:sos},
but first turn to stating our approximation guarantees.

\subsection{Our Algorithmic Results}
For a homogeneous polynomial $h$ of even degree $q$, a matrix $M_h \in \Re^{ {[n]}^{q/2}\times  {[n]}^{q/2}}$
is called  a matrix representation of $h$ if ${(x^{\otimes q/2})}^T \cdot M_h \cdot
x^{\otimes q/2} = h(x)~\forall x \in \R^n$. Next we define the quantity, 
\begin{equation}\label{eq:poly:def-of-Lambda}
  \hssos{h}~\defeq~\inf \inbraces{\sup_{\|z\|_2 = 1} z^T M_h~z~\given~M \text{ is a representation of } h} \mper 
\end{equation}
%
Let $\fmax{h}$ denote $\sup_{\|x\|=1} h(x)$. Clearly, $\fmax{h}\leq
\hssos{h}$, i.e. $\hssos{h}$ is a relaxation of $\fmax{h}$. However,
this does not imply that $\hssos{h}$ is a relaxation of $\ftwo{h}$,
since it can be the case that $\fmax{h}\neq \ftwo{h}$. To remedy this,
one can instead consider $\sqrt{\hssos{h^2}}$ which is a relaxation of
$\ftwo{h}$, since $\fmax{(h^2)}=\ftwo{h^2}$.  More generally, for a degree-$d$ homogeneous
polynomial $f$ and an integer $q$ divisible by $2d$, we have the upper estimate
\[
     \ftwo{f}~\leq \hssos{f^{q/d}}^{d/q} 
\]

The following result shows that $\hssos{f^{q/d}}^{d/q}$ approximates $\ftwo{f}$ within polynomial
factors, and also  gives an algorithm to approximate $\ftwo{f}$ with respect to the upper bound
$\hssos{f^{q/d}}^{d/q}$. In the statements below and the rest of this section, $O_d(\cdot)$ and
$\Omega_d(\cdot)$ notations hide $2^{O(d)}$ factors. Our algorithmic results are as follows:

\begin{theorem}\label{thm:poly:results-list}
Let $f$ be an $n$-variate homogeneous polynomial of degree-$d$, and let $q\leq n$ be an 
integer divisible by $2d$. Then,
\begin{align*}
&\text{Arbitrary $f$:} \\
&\pth{\hssos{f^{q/d}}}^{d/q}~\leq~O_d\inparen{\inparen{n/q}^{d/2 - 1}} \cdot \ftwo{f} \\ &\\
&\text{$f$ with Non-neg. Coefficients:} \\
&\pth{\hscsos{C}{f^{q/d}}}^{d/q}~\leq~O_d\inparen{\inparen{n/q}^{d/4 - 1/2}} \cdot \ftwo{f} \\ &\\
&\text{$f$ with Sparsity $m$:} \\
&\pth{\hssos{f^{q/d}}}^{d/q}~\leq~O_d\inparen{\sqrt{m/q}} \cdot \ftwo{f}. 
\end{align*} 
(where $\hscsos{C}{\cdot}$ is a related efficiently computable quantity)

Furthermore, there is a deterministic algorithm that runs in $n^{O(q)}$ time and returns $x$ such that 
\[|f(x)|\geq \frac{\hssos{f^{q/d}}^{d/q}}{O_d(c(n,d,q))}\] 
where $c(n,d,q)$ is ${(n/q)}^{d/2 - 1}$, ${(n/q)}^{d/4 - 1/2}$ 
and $\sqrt{m/q}$ respectively, for each of the above cases (the inequality uses $\hscsos{C}{\cdot}$ 
in the case of polynomials with non-negative coefficients). 
\end{theorem}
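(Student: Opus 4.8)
The plan is to prove the algorithmic claim first and read off the three approximation inequalities from it. The excerpt already records the easy estimates $\ftwo{f}\le\hssos{f^{q/d}}^{d/q}$ and $\ftwo{f}\le\hscsos{C}{f^{q/d}}^{d/q}$, so once the algorithm outputs a unit vector $x$ with $|f(x)|\ge\hssos{f^{q/d}}^{d/q}/O_d(c(n,d,q))$ the inequality $|f(x)|\le\ftwo{f}$ gives $\hssos{f^{q/d}}^{d/q}\le O_d(c(n,d,q))\cdot\ftwo{f}$, and similarly in the other cases. Note that $\hssos{\cdot}$ and $\hscsos{C}{\cdot}$ are themselves computable in time $n^{O(q)}$ by semidefinite programming, but the rounding algorithm will not solve any SDP; these quantities enter only in the correctness analysis.

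First I would reduce optimizing $f$ over the sphere to optimizing a \emph{multilinear} form. Passing from the coefficient tensor of $f$ to the associated symmetric $d$-linear form $F(x^{(1)},\dots,x^{(d)})$ by the usual decoupling identity would cost a factor $d^d$; instead I would prove a \emph{weak decoupling} lemma bounding $\hssos{f^{q/d}}$ by $2^{O(d)}$ times the corresponding multilinear quantity, the point being that decoupling is lossy pointwise but essentially lossless at the level of the matrix-representation relaxation. This $2^{O(d)}$ factor is exactly what the $O_d(\cdot)$ notation absorbs.

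The heart of the argument is an iterative \emph{folding} procedure for the multilinear problem. Viewing the symmetric form as a degree-$d$ tensor, each step takes a suitable matrix flattening of a symmetric power of the current tensor, reads off a top singular vector, and uses it to contract away one block of variables, producing a multilinear form of degree one smaller; after $d-2$ steps one is left with a bilinear form, optimized exactly by its leading singular value. The level parameter $q$ enters because each folding is carried out on the $(q/d)$-fold symmetric power — this is the object whose relaxation value is $\hssos{f^{q/d}}$, and handling it is what forces the $n^{O(q)}$ running time — and the extra moments let one show that the mass of each flattening is effectively spread over dimension $\approx n/q$ instead of $n$, so that a single folding step loses only $\sqrt{n/q}$ rather than $\sqrt{n}$. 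Multiplying the $d-2$ losses (together with the $2^{O(d)}$ from decoupling) yields the $O_d((n/q)^{d/2-1})$ guarantee. The quantitative folding lemma is where I expect the real difficulty: one must show simultaneously that one folding step costs at most $\sqrt{n/q}$ and that the $\hssos{\cdot}$-value of the folded lower-degree polynomial is still a valid, not-too-lossy upper bound on the original optimum, i.e. that folding interacts correctly with taking powers and with the relaxation.

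The two special cases change only the rounding used inside each folding step. When $f$ has non-negative coefficients, after restricting to the non-negative orthant one can symmetrize a folding step using absolute values, which halves the exponent of the per-step loss and gives $(n/q)^{d/4-1/2}$; this rounding is compatible with the tighter cone relaxation, which is why $\hscsos{C}{\cdot}$ rather than $\hssos{\cdot}$ is the right comparison quantity (and $\ftwo{f}\le\hscsos{C}{f^{q/d}}^{d/q}$ still holds). When $f$ has only $m$ monomials its coefficient tensor is supported on an $m$-dimensional subspace, every flattening encountered has rank at most $m$, and the intermediate forms stay inside this support; running the iteration with $m$ in place of $n$ then collapses the $d-2$ per-step losses into a single $\sqrt{m/q}$ factor. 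In all three cases the algorithm only computes top singular vectors of explicit matrices of size $n^{O(q)}$ built from the coefficients of $f$, so it is deterministic and runs in time $n^{O(q)}$.
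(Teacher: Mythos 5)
Your proposal is a genuinely different route from the paper's --- it is essentially the Khot--Naor/So iterative diameter reduction --- and it has two gaps I do not see how to close. The first you flag yourself: each contraction step must preserve an upper bound on $\hssos{f^{q/d}}^{d/q}$, not merely produce a good point relative to $\ftwo{f}$, and you offer no mechanism for propagating the relaxation value through $d-2$ contractions. The paper avoids iteration entirely. It writes $g:=f^{q/d}$ in one shot as $g=\sum_{\alpha}x^{2\alpha}\multif{2\alpha}(x)$ with each $\multif{2\alpha}$ multilinear in the \emph{same} $n$ variables (\cref{lem:poly:split:gen:mult}), proves $\hssos{g}\le(1+q/2)\max_\alpha\hssos{\multif{2\alpha}}/|\orbit{\alpha}|$ via a block-diagonal matrix representation (\cref{lem:poly:gen:mult:sp}), and proves the matching lower bound $\ftwo{g}\ge2^{-O(q)}\ftwo{\multif{2\alpha}}/|\orbit{\alpha}|$ by a first-moment argument over complex roots of unity and Bernoulli random variables (\cref{lem:poly:gen:mult:2}); nothing is contracted step by step.

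The second gap is your account of where the $q$-savings come from. You claim ``the mass of each flattening is effectively spread over dimension $\approx n/q$,'' but that is not the mechanism and you do not substantiate it. The actual $1/q$ factor comes from the permutation symmetry of the SoS-symmetric representation: for a degree-$q$ multilinear $g$, every nonzero coefficient has orbit size $q!$, so the Gershgorin bound reads $\|M_g\|\le n^{q/2}\max_\beta|g_\beta|/q!$ while $\ftwo{g}\ge q^{-q/2}\max_\beta|g_\beta|$, yielding $\hssos{g}/\ftwo{g}\le2^{O(q)}(n/q)^{q/2}$ (\cref{thm:poly:gen:mult}); the orbit count $q!$ is the entire gain, and your scheme produces nothing analogous. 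Two smaller problems: in the sparse case the relevant sparsity is that of $f^{q/d}$, namely $\multichoose{m}{q/d}$, not $m$, and the $\sqrt{m/q}$ bound comes from a direct Frobenius-norm computation rather than flattening rank (\cref{lem:poly:sparse:mult}, \cref{thm:poly:sparse:q}); and what the paper calls ``folding'' is not iterative contraction but a one-shot rewrite $h(x)=\sum_\beta\fold{h}{\beta}(x)\,x^\beta$ whose folds are quadratics, used to shave one power of $(n/q)$ by exploiting that SoS is exact on products of quadratics (\cref{thm:poly:gen:d/2-1}).
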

\begin{remark}
  Interestingly, our deterministic algorithms only involve computing the maximum eigenvectors of
  $n^{O(q)}$ different matrices in $\Re^{n\times n}$. We actually don't require computing
  $\hssos{f^{q/d}}^{d/q}$, even though this quantity can also be computed in $n^{O(q)}$ time by the
  sum-of-squares SDP (see~\cref{sec:poly:sos}). The quantity $\hssos{f^{q/d}}^{d/q}$ is only used in the
  analysis.
\end{remark}
\begin{remark}
  If~$m=n^{\,\rho \cdot d}$~for~$\rho < 1/3$, then for all $q \leq n^{1-\rho}$, the
  $\sqrt{m/q}$-approximation for sparse polynomials is better than the ${(n/q)}^{d/2 - 1}$ arbitrary
  polynomial approximation. 
\end{remark}
\begin{remark}
  In cases where $\ftwo{f}=\fmax{f}$ (such as when $d$ is odd or $f$ has non-negative coefficients), the
  above result holds whenever $q$ is even and divisible by $d$, instead of $2d$.
\end{remark}

A key technical ingredient en route establishing the above results is a
method to reduce the problem for arbitrary polynomials to a list of
\emph{multilinear} polynomial problems (over the same variable set). We
believe this to be of independent interest, and describe its context
and abstract its consequence (\cref{thm:poly:intro:gen:multi}) next.

Let $M_g$ be a matrix representation of a degree-$q$ homogeneous
polynomial $g$, and let $K=(I,J)\in [n]^{q/2}\times [n]^{q/2}$ have
all distinct elements. Observe that there are $q!$ distinct entries of
$M_g$ including $K$ across which, one can arbitrarily assign values
and maintain the property of representing $g$, as long as the sum
across all $q!$ entries remains the same (specifically, this is the
set of all permutations of $K$). In general for $K'=(I',J')\in
[n]^{q/2}\times [n]^{q/2}$, we define the orbit of $K'$ denoted by
$\orbit{K'}$, as the set of permutations of $K'$, i.e. the number of
entries to which 'mass' from $M_g[I',J']$ can be moved while still
representing $g$.

As $q$ increases, the orbit sizes of the entries increase, and to show
better bounds on $\hssos{f^{q/d}}$, one must exploit these additional
"degrees of freedom" in representations of $f^{q/d}$. However, a big
obstacle is that the orbit sizes of different entries can range
anywhere from $1$ to $q!$,~ two extremal examples being $((1,\dots
1),(1,\dots 1))$ and $((1,\dots q/2),(q/2+1,\dots q))$. This makes it
hard to exploit the additional freedom afforded by growing
$q$. Observe that if $g$ were multilinear, all matrix entries
corresponding to non-zero coefficients have a span of $q!$ and indeed
it turns out to be easier to analyze the approximation factor in the
multilinear case as a function of $q$ since the representations of $g$
can be highly symmetrized. However, we are still faced with the
problem of $f^{q/d}$ being highly non-multilinear.  The natural
symmetrization strategies that work well for multilinear polynomials
fail on general polynomials, which motivates the following
result:
\begin{theorem}[Informal]\label{thm:poly:intro:gen:multi}
    For even $q$, let $g(x)$ be a degree-$q$ homogeneous polynomial. Then there exist multilinear
    polynomials $g_1(x), \dots , g_m(x)$ of degree at most $q$, such that 
    \[
        \frac{\hssos{g}}{\ftwo{g}}~\leq~2^{O(q)}\cdot 
        \max_{i\in [m]} \frac{\hssos{g_i}}{\ftwo{g_i}} 
    \]
    and $m=q^{O(q)}$. 
\end{theorem}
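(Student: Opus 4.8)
The plan is to decompose $g$ by the \emph{type} of its monomials, decouple each single-type piece into multilinear polynomials, and recombine using subadditivity of $\hssos{\cdot}$. For a partition $\lambda=(\lambda_1\ge\cdots\ge\lambda_\ell)$ of $q$, let $g_\lambda$ collect the monomials of $g$ whose sorted exponent vector equals $\lambda$, so that $g=\sum_{\lambda\vdash q}g_\lambda$ and the number of types is $p(q)=2^{O(\sqrt q)}$. Since a sum of matrix representations of the $g_\lambda$ is a representation of $g$ and $\lambda_{\max}(\cdot)$ is subadditive, $\hssos{g}\le\sum_\lambda\hssos{g_\lambda}\le p(q)\cdot\max_\lambda\hssos{g_\lambda}$; hence it suffices to handle each type separately and absorb $p(q)$ into the final exponential factor.

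Fix a type $\lambda$ with $\ell$ parts and let $B_{g_\lambda}$ be the symmetric $q$-linear form with $B_{g_\lambda}(x,\dots,x)=g_\lambda(x)$ (polarization). Decoupling $g_\lambda$ — replacing the $\lambda_r$ occurrences of each repeated variable by $\lambda_r$ fresh variables drawn from fresh blocks — turns every monomial into a squarefree monomial over $q$ blocks of $n$ variables, i.e. produces a genuine \emph{multilinear} polynomial. Grouping the $q$ slots into $\ell$ blocks in one of the $q^{O(q)}$ possible ways (a template $\pi$) and collecting the corresponding entries of $B_{g_\lambda}$ yields such a multilinear polynomial $g_{\lambda,\pi}$ of degree $\le q$, normalized so that its diagonal restriction satisfies $g_{\lambda,\pi}(x,\dots,x)=g_\lambda(x)$. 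Running over all $\lambda$ and $\pi$, these are the polynomials $g_1,\dots,g_m$, and $m=p(q)\cdot q^{O(q)}=q^{O(q)}$. Two facts are needed about this passage: (i) $\hssos{g_\lambda}\le 2^{O(q)}\max_\pi\hssos{g_{\lambda,\pi}}$, since a representation of $g_{\lambda,\pi}$ restricts along the diagonal $x^{(1)}=\dots=x^{(q)}$ to a representation of $g_\lambda$ with no larger $\lambda_{\max}$ (up to the rescaling coming from substituting a unit vector into each of the $q$ slots); and (ii) $\ftwo{g_{\lambda,\pi}}\le 2^{O(q)}\ftwo{g}$. Granting these,
\[
\hssos{g}~\le~\sum_\lambda\hssos{g_\lambda}~\le~2^{O(q)}\max_i\hssos{g_i}~=~2^{O(q)}\max_i\ftwo{g_i}\cdot\frac{\hssos{g_i}}{\ftwo{g_i}}~\le~2^{O(q)}\,\ftwo{g}\cdot\max_i\frac{\hssos{g_i}}{\ftwo{g_i}},
\]
which is the claimed bound. (The combinatorial proliferation of templates $\pi$ must be confined to the count $m$ rather than to the loss factor; keeping the overall loss exponential in $q$ rather than the naive $q^{O(q)}$ requires using the sharp real-Hilbert-space polarization constant $q^q/q!=2^{O(q)}$ and being careful with these normalizations.)

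The main obstacle is fact (ii): the optimum of a multilinear polynomial produced by decoupling must be controlled by the optimum of $g$ itself, losing only an exponential-in-$q$ factor. I would factor it through (a) $\ftwo{g_{\lambda,\pi}}\le 2^{O(q)}\ftwo{g_\lambda}$, a \emph{forward} decoupling/polarization inequality whose relevant constant is at most $q^q/q!=2^{O(q)}$, and (b) $\ftwo{g_\lambda}\le 2^{O(q)}\ftwo{g}$, controlling one type-component's spherical maximum by the whole polynomial's. For (b) I would induct on the number of parts of $\lambda$, coarsest first: the coarsest component $g_{(q)}=\sum_i g_{q e_i}x_i^q$ has $\ftwo{g_{(q)}}=\max_i|g(e_i)|\le\ftwo{g}$; the ``uses exactly $j$ distinct variables'' component of $g$ can, pointwise, be written as a bounded-weight Lagrange interpolation of the sub-sampled averages $\E_S\!\left[g(x\odot\mathbf{1}_S)\right]$ taken at a few prescribed densities of the random coordinate set $S$ — each such average has sup-norm at most $\ftwo{g}$, being a mixture of restrictions of $g$ to coordinate subspheres, and the interpolation weights sum to $2^{O(q)}$ — and finer components are then peeled off by the triangle inequality. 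A subtlety I would watch is the separation of two types with the \emph{same} number of parts, which amounts to disentangling, inside a fixed coordinate support, polynomials in at most $q$ variables with distinct exponent multisets; this costs a factor that is universal in $q$ and, crucially, independent of $n$.
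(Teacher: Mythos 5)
The central step in your proposal — decoupling $g_\lambda$ by introducing fresh variable blocks so that each monomial becomes squarefree over $q$ blocks of $n$ variables — produces multilinear polynomials in roughly $nq$ variables, not in the original $n$ variables $x$. This is precisely the classical polarization route that the paper explicitly rules out: the statement requires the $g_i$ to be multilinear ``over the same variable set'' as $g$, and the paragraph following the theorem explains why, namely that passing to $nq$ variables ``would completely nullify any advantage obtained from the increased degrees of freedom.'' Concretely, the theorem is applied by feeding each $g_i$ into the multilinear bound $\hssos{g_i}/\ftwo{g_i}\le 2^{O(q)}\,(N/q)^{q/2}$ where $N$ is the number of variables of $g_i$. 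With $N=n$ one gets $(n/q)^{q/2}$; with $N=nq$ one gets $n^{q/2}$, a loss of $q^{q/2}$ which is exactly the $q$-dependent gain the whole argument is trying to extract. So even though your bookkeeping does close formally (the $q^{q/2}$ loss from restricting the moment matrix along the diagonal cancels against the $q^{q/2}/q!$ gain in the forward polarization of $\ftwo{\cdot}$, leaving a net $q^q/q!=2^{O(q)}$), the resulting $g_i$'s cannot be substituted into the downstream multilinear estimate without destroying the improvement. This is not a presentation issue but the core technical obstacle the section is built to overcome.

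The paper's route keeps all multilinear pieces over $x\in\R^n$ by using the decomposition $g=\sum_{\alpha}x^{2\alpha}\,G_{2\alpha}(x)$ in which each $G_{2\alpha}$ is already multilinear in the original variables; no fresh blocks are introduced. The ``missing degrees of freedom'' are recovered not by polarization but by two matched estimates carrying the same $|\orbit{\alpha}|$ factor: a block-diagonal representation averaging $M_{G_{2\alpha}}$ over the orbit of $\alpha$ gives $\hssos{g}\le (1+q/2)\max_\alpha\hssos{G_{2\alpha}}/|\orbit{\alpha}|$, and a random complex substitution (independent roots of unity of orders $2\alpha_i+1$, a Bernoulli thinning, and Chebyshev's extremal polynomial inequality) gives $\ftwo{g}\ge \ftwo{G_{2\alpha}}/(2^{O(q)}|\orbit{\alpha}|)$ for every $\alpha$. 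The $|\orbit{\alpha}|$ factors — which are $q^{\Theta(q)}$ in the worst case — cancel between the two, leaving a $2^{O(q)}$ loss. Your fact (ii.b), $\ftwo{g_\lambda}\le 2^{O(q)}\ftwo{g}$, is the analogue of the second estimate, and you acknowledge the hardest sub-step (separating two types with the same number of parts) without proving it; this is exactly where the paper's roots-of-unity construction does real work, since density-based subsampling/Lagrange interpolation distinguishes only the number of distinct variables, not the multiplicity pattern. So you have two genuine gaps: the variable blow-up from polarization, which makes the produced $g_i$ unusable for the theorem's intended application, and the unproved type-separation lemma that would replace $\cref{lem:poly:gen:mult:2}$.
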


By combining \cref{thm:poly:intro:gen:multi} (or an appropriate
generalization) with the appropriate analysis of the multilinear
polynomials induced by $f^{q/d}$, we obtain the aforementioned results
for various classes of polynomials. 

\paragraph{Weak decoupling lemmas.} 
A common approach for reducing to the multilinear case is through more general 
``decoupling'' or ``polarization'' lemmas, which also have a variety of
applications in functional analysis and probability~\cite{de2012decoupling}. However, such methods
increase the number of variables to $nq$, which would completely nullify any advantage obtained from
the increased degrees of freedom. This is because the approximation 
obtained would be of the form $(\#vars/q)^{d/2-1} = n^{d/2-1}$.

Our proof of \cref{thm:poly:intro:gen:multi} (and its generalizations)  requires only a decoupling with
somewhat weaker properties than given by the above lemmas. However, we need it to be very efficient
in the number of variables.
In analogy with ``weak regularity lemmas'' in combinatorics, which trade structural control for
complexity of the approximating object, 
we call these results  ``weak decoupling lemmas'' 
(see \cref{sec:poly:decoupling}). 
They provide a milder form of decoupling but only increase the number of 
variables to $2n$ (independently of $q$).

We believe these could be more generally applicable; in
particular to other constrained settings of polynomial optimization as
well as in the design of sub-exponential algorithms. Our techniques might also be able to yield a
full tradeoff between the number of variables and quality of decoupling.

\subsection{Connection to sum-of-squares hierarchy}\label{sec:poly:sos}
The \emph{Sum of Squares Hierarchy} (SoS) is one of the canonical and well-studied approaches to
attack polynomial optimization problems. Algorithms based on this framework are parametrized by
the degree or level $q$ of the SoS relaxation.
 For the case of optimization of a homogeneous polynomial $h$ of even degree $q$ (with some matrix
 representation $M_h$) over the unit sphere, the level $q$ SoS relaxes the non-convex program of
 maximizing $(x^{\otimes q/2})^T \cdot M_h \cdot x^{\otimes q/2} = h(x)$ over $x \in \R^n$ with
 $\|x\|_2=1$, to the semidefinite program of maximizing $\Tr{M^T_h X}$ over all positive
 semidefinite matrices $X \in \R^{[n]^{q/2} \times [n]^{q/2}}$ with $\Tr{X}=1$. (This is a
 relaxation because $X = x^{\otimes q/2}  (x^{\otimes q/2})^T$ is psd with $\Tr{X} = \|x\|_2^q$.)

It is well known (see for instance~\cite{Laurent09}) that the quantity
$\hssos{h}$ from~\cref{eq:poly:def-of-Lambda} is the dual value of this
SoS relaxation. Further, strong duality holds for the case of
optimization on the sphere and therefore $\hssos{h}$ equals the
optimum of the SoS SDP and can be computed in time $n^{O(q)}$. (See
the full version~\cite{BGGLT17} for more detailed SoS preliminaries.) 
In light of this, our results from \cref{thm:poly:results-list} can also 
be  viewed as a convergence analysis of the SoS hierarchy for 
optimization over the sphere, as a function of the number of levels $q$. 
Such results are of significant interest in the optimization community, and have been studied for
example in~\cite{DW12, dKMS14} (see \cref{sec:poly:related} for a comparison of results). 

\medskip \noindent \textbf{SoS Lower Bounds.} While the approximation
factors in our upper bounds of \cref{thm:poly:results-list} are modest, there
is evidence to suggest that this is inherent.


When $h$ is a degree-$q$ polynomial with \emph{random} i.i.d $\pm 1$
coefficients, it was shown in \cite{BGL16} that there is a 
constant $c$ such that w.h.p.  
$\Bigl( \frac{n}{q^{c+o(1)}}
\Bigr)^{q/4} \le \hssos{h} \le \Bigl( \frac{n}{q^{c-o(1)}}
\Bigr)^{q/4}$.  On the other hand, $\ftwo{h} \le O(\sqrt{nq \log q})$
w.h.p. Thus, the ratio between $\hssos{h}$ and $\ftwo{h}$ can be as
large as $\Omega_q(n^{q/4-1/2})$.

Hopkins \etal~\cite{HKPRSS16} recently proved that
degree-$d$ polynomials with random coefficients achieve a degree-$q$
SoS gap of roughly ${(n/q^{O(1)})}^{d/4 - 1/2}$ (provided
$q>n^{\epsilon}$ for some constant $\epsilon>0$). This is also a lower
bound on the ratio between $\hssos{f^{q/d}}^{d/q}$ and $\ftwo{f}$ for
the case of \emph{arbitrary} polynomials.  Note that this lower bound
is roughly square root of our upper bound from~\cref{thm:poly:results-list}.
Curiously, our upper bound for the case of polynomials with
non-negative coefficients essentially matches this lower bound for
random polynomials.

\medskip \noindent \textbf{Non-Negative Coefficient Polynomials.}
In this chapter, we give a new lower bound construction for the case of
non-negative polynomials,
To the best of our knowledge, the only previous lower
bound for this problem, was known through Nesterov's reduction~\cite{deKlerk08},
which only rules out a PTAS. We give the following polynomially large lower bound. 
The gap applies for random polynomials associated with a novel distribution 
of $4$-uniform hypergraphs, and is analyzed using subgraph counts in a
random graph.
\begin{theorem}\label{thm:poly:nnc-lowerbound}
  There exists an $n$ variate degree-4 homogeneous polynomial $f$ with non-negative coefficients such that
  \[
    \ftwo{f}~\le~{(\log n)}^{O(1)}\qquad \text{and} \qquad
    \hssos{f}~\ge~\tilde{\Omega}(n^{1/6}) \mper
  \]
\end{theorem}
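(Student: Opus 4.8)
The plan is to produce one explicit random construction and to certify the two sides of the gap separately: an $(\log n)^{O(1)}$ upper bound on $\|f\|_{\infty}$ coming from local sparsity of a random graph, and a $\tilde{\Omega}(n^{1/6})$ lower bound on $\hssos{f}$ coming from an explicitly constructed degree-$4$ pseudo-expectation. Recall from \cref{sec:poly:sos} that $\hssos{f}$ equals the value of the degree-$4$ sum-of-squares relaxation of $\sup_{\|x\|=1}f(x)$, i.e. $\hssos{f} = \max\{\,\tilde{\mathbb{E}}[f] : \tilde{\mathbb{E}}\text{ a degree-}4\text{ pseudo-expectation},\ \tilde{\mathbb{E}}[\|x\|^4]=1,\ \tilde{\mathbb{E}}\succeq 0\,\}$, so it suffices to exhibit one feasible $\tilde{\mathbb{E}}$ with $\tilde{\mathbb{E}}[f]\ge\tilde{\Omega}(n^{1/6})$ together with a matching bound $\|f\|_{\infty}\le(\log n)^{O(1)}$.

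For the construction I would fix a small pattern graph $F$ on $4$ vertices (say $2K_2$, or $C_4$), take a random graph $G\sim G(n,p)$ with $p$ to be chosen, and let $f=f_G$ be the homogeneous degree-$4$ polynomial in which the coefficient of $x_ix_jx_kx_l$ (distinct $i,j,k,l$) is the number of copies of $F$ on $\{i,j,k,l\}$ present in $G$; the coefficients are non-negative integers, so this is a legitimate instance. Since the coefficients are non-negative, $\|f_G\|_{\infty}=\max_{x\ge 0,\ \|x\|_2=1}f_G(x)$. I would bound the latter by dyadically bucketing the coordinates of a maximizer: a bucket of entries of magnitude $\approx 2^{-a}$ has at most $O(2^{2a})$ coordinates, and the total contribution of any quadruple of buckets is at most (number of copies of $F$ inside their union) times $2^{-(\text{sum of the four exponents})}$. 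The needed combinatorial input is a local-sparsity statement --- w.h.p.\ every vertex set of size $s$ spans at most $(\log n)^{O(1)}\cdot\max\bigl(s,\;p^{|E(F)|}s^{4}\bigr)$ copies of $F$ --- which follows from a union bound over subsets together with Chernoff/Markov bounds on subgraph counts, exactly in the spirit of \cref{lem:csp:locally-sparse} and \cref{lem:csp:cycle-count}. Summing the dyadic contributions yields $\|f_G\|_{\infty}\le(\log n)^{O(1)}$.

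For the lower bound on $\hssos{f_G}$ I would construct the degree-$4$ pseudo-moment matrix directly, as the ``spherical'' pseudo-expectation (the degree-$4$ moments of the uniform distribution on $\SSS^{n-1}$, which is exactly PSD) plus a small perturbation, where the perturbation is a low-degree polynomial in the adjacency matrix $A$ of $G$ --- tuned so that the pseudo-moment $\tilde{\mathbb{E}}[x_ix_jx_kx_l]$ is positively correlated with the event that $\{i,j,k,l\}$ carries a copy of $F$. Such a perturbation is most cleanly obtained by pseudo-calibration against a planted model in which the edges of $G$ are mildly biased along a hidden direction. Two things then have to be checked. First, the objective: $\tilde{\mathbb{E}}[f_G]$ collapses to a weighted count of copies of $F$ in $G$, which concentrates around its expectation; one chooses $p$ (and the perturbation strength) so that the feasibility normalization $\tilde{\mathbb{E}}[\|x\|^4]=1$ holds and the surviving value is $\tilde{\Omega}(n^{1/6})$. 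Second, and this is the heart of the matter, $\tilde{\mathbb{E}}\succeq 0$: since the spherical part is only PSD on the symmetric subspace and its nonzero eigenvalues are tiny, the admissible perturbation is very small, and establishing positivity amounts to bounding the operator norm of an explicit random matrix indexed by pairs $(i,j)$ whose entries are signed linear combinations of subgraph counts (copies of $F$, and of $F$ with one or two distinguished vertices) in $G$. I would do this by the trace/moment method: expand $\mathbb{E}[\mathrm{Tr}(M_{\mathrm{err}}^{2\ell})]$ as a sum over closed length-$2\ell$ walks in the pair-graph weighted by expectations of products of centered subgraph counts, and bound it by grouping walks according to their number of distinct vertices and edges --- once more a subgraph-counting estimate in $G(n,p)$, now in the fluctuation regime.

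The step I expect to be the main obstacle is precisely this last operator-norm bound. Subgraph-count matrix entries are heavy-tailed and strongly correlated across overlapping index pairs, so neither matrix Bernstein nor a crude $\varepsilon$-net suffices; the moment-method bookkeeping --- tracking how copies of $F$ glue along a walk, and how the choice of $p$ propagates into the final exponent --- is where the value $1/6$ (as opposed to the $1/2$ of the algorithmic upper bound in \cref{thm:poly:results-list}) is forced. A secondary but necessary check is that the perturbed $\tilde{\mathbb{E}}$ really lies in the degree-$4$ pseudo-moment subspace (it must define a functional annihilating the ideal generated by $\|x\|^2-1$ and be invariant under the $S_4$-symmetry of monomials); this is arranged by symmetrizing the perturbation over $S_4$ and over the normalization ideal before carrying out the PSD analysis.
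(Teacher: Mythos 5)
Your high-level blueprint (random-graph polynomial with nonnegative coefficients, dyadic bucketing for $\ftwo{f}$, explicit pseudo-moment matrix for $\hssos{f}$, trace method to control a spectral error) is the right shape, but both of the two hard steps are carried out differently in the paper, and in both places the route you sketch has a genuine obstruction. For the soundness bound, the paper takes $F=K_4$ (not $2K_2$ or $C_4$; the $4$-clique structure is what makes the exponents line up on both sides), reduces $\ftwo{f}$ to bounding $|\cliques_G(Z_1,\dots,Z_4)|/\sqrt{|Z_1||Z_2||Z_3||Z_4|}$ over disjoint $Z_i$ via the dyadic bucketing you describe, and then explicitly observes that the union-bound/local-sparsity strategy you propose is only adequate when the $|Z_i|$ are comparable. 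When the bucket sizes are very unbalanced (say $Z_1,Z_2,Z_3$ tiny and $Z_4$ large), the ``$s$ vertices span $\lesssim \max(s, p^6 s^4)$ copies'' estimate gives a contribution that, after the $2^{-\sum a_i}$ weighting, is polynomially larger than $\mathrm{polylog}$. The paper replaces the union bound by a hand-tailored combinatorial argument: a Chernoff-based control of common-neighbourhood sizes and codegrees, a Cauchy--Schwarz bound on the number of shattered triangles $\triangles_G(S_1,S_2,S_3)$ in terms of $|E(S_1,S_2)|$, and a two-case analysis depending on whether $|E(S_1,S_2)|$ is in the ``random'' regime $n_1 n_2 p$ or the ``sparse'' regime $n_1+n_2$. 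That asymmetric analysis is the content you would have to supply to close this half.

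For the completeness bound the gap is more fundamental. You propose $\tilde{\mathbb{E}}$ = spherical moments plus a small pseudo-calibration-style perturbation, and you yourself flag that the spherical part has only tiny nonzero eigenvalues, so the admissible perturbation is small; this is exactly why the route fails. With $p=n^{-1/3}$, the spherical pseudo-expectation gives $\tilde{\mathbb{E}}[f]=O(1)$ (the clique count is $\Theta(n^2)$ and the degree-$4$ spherical moments are $\Theta(1/n^2)$), so to reach $n^{1/6}$ the perturbation must \emph{dominate}, not be small. What the paper actually does is abandon the spherical base entirely and put essentially all the mass on the edge block: with $\sfA$ the $0/1$ SoS-symmetric representation of $24f$ and $E'$ the set of ordered edges of $G$, it takes $\sfM \propto \sfA + |\lambda_{\min}(\sfA)|\,(I_{E'}+Q_{E'})$, where $I_{E'}$ is the diagonal restriction to $E'$ and $Q_{E'}$ is a small explicit correction needed to make the shifted matrix SoS-symmetric again. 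PSDness is then immediate from the shift by $|\lambda_{\min}|$ and from $Q_{E'}$ being diagonally dominant, and the whole argument boils down to a single estimate $|\lambda_{\min}(\sfA)|=\tilde O(n^{3/2}p^4)$, proved by the trace method applied to the $\binom{[n]}{2}\times\binom{[n]}{2}$ matrix $\hA$ with its rank-one main term removed. Your instinct to use the trace method is right, but the matrix you would be applying it to (a fluctuation matrix around spherical moments) is not the one that makes the exponent $1/6$ come out; the paper's $\hN_E=\hA-p^4\hJ_E$ is.
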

For larger degree $t$, we prove an $n^{\Omega(t)}$ gap between $\ftwo{h}$ and a 
quantity $\fsp{h}$ that is closely related to $\hssos{h}$. Specifically, $\fsp{h}$ is defined by 
replacing the largest eigenvalue of matrix representations $M_h$ of $h$ in~\cref{eq:poly:def-of-Lambda}
by the \emph{spectral norm} $\norm{2}{M_h}$. (See~\cite{BGGLT17} for a formal definition.) Note
that $\fsp{h} \ge \max \{ \hssos{h}, \hssos{-h} \}$.  Like
$\hssos{\cdot}$,
$\fsp{\cdot}$ suggests a natural hierarchy of relaxations for the problem of approximating $\ftwo{h}$, 
obtained by computing $\|h^{q/t}\|_{sp}^{t/q}$ as the $q$-th level of the hierarchy. 

We prove a lower bound of $n^{q/24}/\inparen{q\cdot \log n}^{O(q)}$ on $\|f^{q/4}\|_{sp}$ where 
$f$ is as in~\cref{thm:poly:nnc-lowerbound}.  This not only gives $\fsp{\cdot}$ gaps for the degree-$q$
optimization problem on polynomials with non-negative coefficients, but also an 
$n^{1/6}/{(q \log n)}^{O(1)}$ gap on higher levels of the aforementioned $\fsp{\cdot}$ hierarchy for
optimizing degree-$4$ polynomials with non-negative coefficients. Formally we show: 
\begin{theorem}\label{thm:poly:intro:fsplb}
    Let $g:=f^{q/4}$ where $f$ is the degree-$4$ polynomial as in~\cref{thm:poly:nnc-lowerbound}.  Then
    \[
      \frac{\fsp{g}}{\ftwo{g}}~\geq~\frac{n^{q/24}}{ {(q \log n)}^{O(q)}}\mper
    \]
\end{theorem}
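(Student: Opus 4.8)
The statement follows by combining two separate estimates for $g=f^{q/4}$, so the plan is to prove each in turn and then divide. The first is an \emph{upper} bound $\ftwo{g}\le(\log n)^{O(q)}$, the second a \emph{lower} bound $\fsp{g}\ge n^{q/24}/(q\log n)^{O(q)}$; dividing gives $\fsp{g}/\ftwo{g}\ge n^{q/24}/(q\log n)^{O(q)}$, as claimed. The upper bound is routine: since $q$ is divisible by $4$, we have $\ftwo{g}=\sup_{\|x\|=1}\abs{f(x)^{q/4}}=\bigl(\sup_{\|x\|=1}\abs{f(x)}\bigr)^{q/4}=\ftwo{f}^{\,q/4}$, and \cref{thm:poly:nnc-lowerbound} bounds this by $\bigl((\log n)^{O(1)}\bigr)^{q/4}=(\log n)^{O(q)}$. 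Hence the entire content of the theorem is the lower bound on $\fsp{g}=\fsp{f^{q/4}}$ --- exactly the quantity announced in the paragraph preceding the statement.

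For that lower bound I would use spectral/nuclear-norm duality together with the degree-$4$ certificate behind \cref{thm:poly:nnc-lowerbound}. Recall $\fsp{h}=\inf_{M_h}\lVert M_h\rVert_{2\to 2}$, the infimum over matrix representations of $h$, and that for any nonzero \emph{SoS-symmetric} matrix $Y$ (one whose $(I,J)$ entry depends only on the multiset $I\uplus J$) one has $\fsp{h}\ge\langle M_h,Y\rangle/\|Y\|_*$, the right-hand side being independent of which representation $M_h$ is used, since two representations of a fixed polynomial differ by a matrix orthogonal to the symmetric subspace. By strong SoS duality (\cref{sec:poly:sos}), the bound $\hssos{f}\ge\tilde{\Omega}(n^{1/6})$ of \cref{thm:poly:nnc-lowerbound} is witnessed by an SoS-symmetric PSD pseudo-moment matrix $X_0$ with $\Tr{X_0}=1$ and $\langle M_f,X_0\rangle\ge\tilde{\Omega}(n^{1/6})$. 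Fixing the symmetric representation $M_f$ of $f$, the natural degree-$q$ test object is $Y\defeq\tfrac{1}{q!}\sum_{\pi\in S_q}\pi\cdot X_0^{\otimes q/4}$, the SoS-symmetrization of $X_0^{\otimes q/4}$ over permutations $\pi$ of the $q$ tensor legs. Expanding $\langle M_f^{\otimes q/4},Y\rangle=\tfrac{1}{q!}\sum_\pi\langle\pi\cdot X_0^{\otimes q/4},M_f^{\otimes q/4}\rangle$, the permutations $\pi$ that respect the grouping of the $q$ legs into $q/4$ blocks of four fix $X_0^{\otimes q/4}$ and hence contribute $\langle X_0,M_f\rangle^{q/4}\ge\bigl(\tilde{\Omega}(n^{1/6})\bigr)^{q/4}=n^{q/24}/(\log n)^{O(q)}$ each; since there are at least $(q/4)!$ of them, their total contribution is at least $n^{q/24}/(q\log n)^{O(q)}$. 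This is the intended main term.

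The hard part --- and the place where the random construction is indispensable --- is to show that everything else costs at most another $(q\log n)^{O(q)}$ factor: that $\|Y\|_*\le(q\log n)^{O(q)}$ and that the remaining (``misaligned'') permutations do not cancel the main term by more than this. The loss is genuine, not cosmetic: the sum over $S_q$ includes permutations that move legs across the row/column partition of $X_0^{\otimes q/4}$, i.e.\ partial transposes, and for a generic PSD matrix a partial transpose can fail to be PSD and can inflate the nuclear norm by $n^{\Theta(q)}$, which would be fatal. The way out is to exploit the specific structure of $X_0$, which arises from the random $4$-uniform hypergraph (built from a random graph $G$) of \cref{thm:poly:nnc-lowerbound}: for this $X_0$ the quantities $\|\pi\cdot X_0^{\otimes q/4}\|_*$ and the off-block inner products $\langle\pi\cdot X_0^{\otimes q/4},M_f^{\otimes q/4}\rangle$ can be rewritten as counts of the small subgraphs formed when $q/4$ copies of the defining structure are glued along $\pi$. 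The same subgraph-count estimates that give $\ftwo{f}\le(\log n)^{O(1)}$ in \cref{thm:poly:nnc-lowerbound} show each such count is polylogarithmic, so summing over the $q^{O(q)}$ permutation patterns loses only a $(q\log n)^{O(q)}$ factor. Carrying out this matching of permutations to glued graphs, and bounding each term by the corresponding subgraph count, is the step I expect to be the most involved.

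Putting the pieces together, $\fsp{g}\ge\langle M_f^{\otimes q/4},Y\rangle/\|Y\|_*\ge n^{q/24}/(q\log n)^{O(q)}$, and since $\ftwo{g}\le(\log n)^{O(q)}$ we obtain $\fsp{g}/\ftwo{g}\ge n^{q/24}/(q\log n)^{O(q)}$.
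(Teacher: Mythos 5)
Your route is genuinely different from the paper's, and the paper's choice of dual certificate is noticeably slicker. Both you and the paper lower-bound $\fsp{g}$ by exhibiting an SoS-symmetric test matrix $Z$ with $\norm{S_1}{Z}=1$ and computing $\iprod{M_g}{Z}$; the difference is entirely in the choice of $Z$. You take $Z\propto (X_0^{\otimes q/4})^S$, where $X_0$ is the degree-$4$ pseudo-moment matrix of \cref{thm:poly:nnc-lowerbound}, which commits you to two tasks: showing the symmetrized inner product $\iprod{M_f^{\otimes q/4}}{(X_0^{\otimes q/4})^S}$ retains the main term $\iprod{M_f}{X_0}^{q/4}$ (your cancellation worry), and bounding $\norm{S_1}{(X_0^{\otimes q/4})^S}$, which needs the Tetris theorem (\cref{thm:poly:lift:stable:sp:lb}) applied to $X_0$ with its stability hypotheses $\norm{S_1}{X_0},\norm{S_1}{(X_0)_{3,1}}\le 1$ separately verified. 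The paper instead takes $Z\propto M_g$ itself, i.e.\ \cref{lem:poly:simple:sp:lb}: $\fsp{g}\ge \norm{F}{M_g}^2/\norm{S_1}{M_g}$. This is the tailored choice for $\fsp{\cdot}$: the numerator $\norm{F}{M_g}^2$ is manifestly nonnegative (and, since $M_f\ge 0$ entrywise, keeping only block-preserving permutations already gives $\norm{F}{M_g}^2 \ge q^{-O(q)}\norm{F}{M_f}^{q/2}=n^{q/2}/q^{O(q)}$), and the Tetris theorem is applied to the explicit coefficient matrix $M_f$, whose stability constants $\norm{S_1}{M_f}=\tilde O(n^{11/6})$ and $\norm{S_1}{(M_f)_{1,3}}=\tilde O(n^{3/2})$ fall out of the spectral estimates in \cref{lem:poly:eigenvalue}. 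As a side note, the cancellation you flag in your numerator does not in fact arise: the matrix built in \cref{claim:poly:fix} is entrywise nonnegative, so every $\pi$-term in the symmetrized sum is nonnegative. What your approach would buy is a template that also lifts $\hssos{\cdot}$ gaps when the extra PSD hypotheses of \cref{cor:poly:lift:stable:sos:lb} hold; what the paper's choice buys is that, for $\fsp{\cdot}$, you never need the pseudo-moment certificate $X_0$ at all.
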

Our lower bound on $\|f^{q/4}\|_{sp}$  is based on a general tool that allows one to ``lift''
level-$4$ $\fsp{\cdot}$ gaps, that meet one additional condition, to higher levels.  While we derive
final results only for the weaker relaxation $\fsp{\cdot}$, 
the underlying structural result  can be used to lift SoS lower 
bounds (i.e. gaps for $\hssos{\cdot}$) as well, provided the SoS solution matrix $X$ satisfies
PSD-ness of two other matrices of appropriately related shapes to $X$ (see full version~\cite{BGGLT17})
--- this inspired us to name our tool ``Tetris theorem.''
Recently, the insightful pseudo-calibration approach \cite{BHKKMP16} has provided a recipe to give
higher level SoS lower bounds for certain \emph{average-case} problems.  We believe our lifting
result might similarly be useful in the context of \emph{worst-case} problems, where in order to get
higher degree lower bounds, it suffices to give  lower bounds for constant degree SoS with some
additional structural properties.

\subsection{Related Previous and Recent Works}\label{sec:poly:related}
Polynomial optimization is a vast area with several previous results. Below, we collect the results most
relevant for comparison with the ones in this chapter, grouped by the class of polynomials. Please see the
excellent monographs~\cite{Laurent09,Lasserre09} for more.

\medskip \noindent \textbf{Arbitrary Polynomials.}
For general homogeneous polynomials of degree-$d$, a polytime 
$O_d\inparen{n^{d/2-1}}$-approximation was given by He \etal~\cite{HLZ10}, which was improved to
$O_d\inparen{ {(n/\log n)}^{d/2-1}}$ by So~\cite{So11}.
The convergence of SDP hierarchies for polynomial optimization was
analyzed by Doherty and Wehner~\cite{DW12}. However, their result only
applies to relaxations given by $\Omega(n)$ levels of the SoS
hierarchy (Theorem 7.1 in~\cite{DW12}).
Thus, our results can be seen as giving an interpolation between the
polynomial time algorithms obtained by~\cite{HLZ10,So11} and the
exponential time algorithms given by $\Omega(n)$ levels of SoS,
although the bounds obtained by~\cite{DW12} are tighter (by a factor
of $2^{O(d)}$) for $q = \Omega(n)$ levels.

For the case of arbitrary polynomials, we believe a tradeoff between
running time and approximation quality similar to ours can also be
obtained by considering the tradeoffs for the results of Brieden \etal~\cite{BGKKLS01} used
by So~\cite{So11}. However, to the best of our
knowledge, this is not published.  In particular, So uses the
techniques of Khot and Naor~\cite{KN08} to reduce degree-$d$
polynomial optimization to $d-2$ instances of the problem of
optimizing the $\ell_2$ diameter of a convex body.  This is solved by~\cite{BGKKLS01},
who give an $O({(n/k)}^{1/2})$ approximation in time
$2^k \cdot n^{O(1)}$. We believe this can be combined with proof of
So, to yield a $O_d\inparen{ {(n/q)}^{d/2-1}}$ approximation in time $2^{q}$.
We note here that the method of Khot and Naor~\cite{KN08} cannot be
improved further (up to polylog) for the case $d = 3$ (see~\cref{sec:poly:query-lower-bound}).
Our results for the case of arbitrary polynomials show that similar
bounds can also be obtained by a very generic algorithm given by the
SoS hierarchy. Moreover, the general techniques developed here are
versatile and demonstrably applicable to various other cases (like
polynomials with non-negative coefficients, sparse polynomials,
worst-case sparse PCA) where no alternate proofs are available. The
techniques of~\cite{KN08,So11} are oblivious to the structure in the
polynomials, and it appears to be unlikely that similar results can be
obtained by using diameter estimation techniques.

\medskip \noindent \textbf{Polynomials with Non-negative
  Coefficients.}  The case of polynomials with non-negative
coefficients was considered by Barak, Kelner, and Steurer~\cite{BKS14}
who  proved that the relaxation obtained by $\Omega(d^3 \cdot \log n /
\eps^2)$ levels of the SoS hierarchy provides an $\eps \cdot
\|f\|_{BKS}$ additive approximation to the quantity $\ftwo{f}$. Here, the parameter we denote by
$\norm{BKS}{f}$ corresponds to a relaxation for $\ftwo{f}$ that is weaker than the one given by
$\fsp{f}$.\footnote{Specifically, $\norm{BKS}{f}$ minimizes the spectral norm over a smaller set of
  matrix representations of $f$ than $\fsp{f}$ which allows all matrix representations.}
Their results can be phrased as showing that a relaxation obtained by
$q$ levels of the SoS hierarchy gives an approximation ratio of
\[
    1 + \pth{\frac{d^3 \cdot \log n}{q}}^{1/2} \cdot \frac{\|f\|_{BKS}}{\ftwo{f}} \mper
\]
Motivated by connections to quantum information theory, they were
interested in the special case where $\|f\|_{BKS}/\ftwo{f}$ is bounded
by a constant. However, this result does not imply strong
multiplicative approximations outside this special case since, in
general, $\|f\|_{BKS}$ and $\ftwo{f}$ can be far apart. In particular,
we are able to establish that there exist polynomials $f$ with
non-neg. coefficients such that $\|f\|_{BKS}/\ftwo{f}\geq
n^{d/24}$. Moreover, we conjecture that the worst-case gap between
$\|f\|_{BKS}$ and $\ftwo{f}$ for polynomials with
non-neg. coefficients are as large as
$\widetilde{\Omega}_d((n/d)^{d/4-1/2})$ (note that the conjectured
$(n/d)^{d/4-1/2}$ gap for non-negative coefficient polynomials is
realizable using arbitrary polynomials, i.e. it was established in
\cite{BGL16} that polynomials with i.i.d. $\pm 1$ coefficients achieve
this gap w.h.p.).

Our results show that $q$ levels of SOS gives an $(n/q)^{d/4-1/2}$
approximation to $\ftwo{f}$ which has a better dependence on $q$ and
consequently, converges to a constant factor approximation after
$\Omega(n)$ levels.


\medskip \noindent \textbf{$2$-to-$4$ norm.}  It was proved in
\cite{BKS14} that for any matrix $A$, $q$ levels of the
SoS hierarchy approximates $\|A\|^4_{2\rightarrow 4} =
\ftwo{\|Ax\|_4^4}$ (i.e. the fourth power of the $2$-to-$4$-norm)
within a factor of
\[
    1 + \pth{\frac{\log n}{q}}^{1/2} \cdot 
    \frac{\|A\|^2_{2\rightarrow 2}\|A\|^2_{2\rightarrow \infty}}{\|A\|^4_{2\rightarrow 4}} \mper
\]
Brandao and Harrow \cite{BH15} also gave a nets based algorithm with
runtime $2^q$ that achieves the same approximation as above. Here
again, the cases of interest were those matrices for which
$\|A\|^2_{2\rightarrow 2}\|A\|^2_{2\rightarrow \infty}$ and
$\|A\|^4_{2\rightarrow 4}$ are at most constant apart.

We would like to bring attention to an open problem in this line of
work.  It is not hard to show that for an $m\times n$ matrix $A$ with
i.i.d.  Gaussian entries, $\|A\|^2_{2\rightarrow 2} = \Theta(m + n)$,
$\|A\|^2_{2\rightarrow \infty} = \Theta(n)$, and
$\|A\|^2_{2\rightarrow 4} = \Theta(m+n^2)$ which implies the worst
case approximation factor achieved above is $\Omega(n/\sqrt{q})$ when
we take $m=\Omega(n^2)$.

Our result for arbitrary polynomials of degree-$4$, achieves an
approximation factor of $O(n/q)$ after $q$ levels of SoS which implies
that the current best known approximation $2$-to-$4$ norm is oblivious
to the structure of the $2$-to-$4$ polynomial and seems to suggest
that this problem can be better understood for arbitrary tall
matrices. For instance, can one get a $\sqrt{m}/q$ approximation for
$(m\times n)$ matrices (note that \cite{BH15} already implies a
$\sqrt{m/q}$-approximation for all $m$, and our result implies a
$\sqrt{m}/q$-approximation when $m=\Omega(n^2)$).

\medskip \noindent \textbf{Random Polynomials.}  For the case when
$f$ is a degree-$d$ homogeneous polynomial with i.i.d. random $\pm 1$
coefficients \cite{BGL16,RRS16} showed that degree-$q$ SoS certifies
an upper bound on $\ftwo{f}$ that is with high probability at most
$\widetilde{O}((n/q)^{d/4 - 1/2})\cdot \ftwo{f}$.  Curiously, this matches our approximation
guarantee for the case of \emph{arbitrary} polynomials with non-negative
coefficients. This problem was also studied for the case of
sparse random polynomials in \cite{RRS16} motivated by applications to refuting random
CSPs.

\section{Overview of Proofs and Techniques}\label{section:poly:overview}

In the interest of clarity, we shall present all techniques for the special case where $f$ is an
arbitrary degree-$4$ homogeneous polynomial. We shall further assume that $\ftwo{f} = \fmax{f}$ just so
that $\hssos{f}$ is a relaxation of $\ftwo{f}$. Summarily, the goal of this section is to give an
overview of an $O(n/q)$-approximation of $\ftwo{f}$, \ie\ 
\[
    \hssos{f^{q/4}}^{4/q} \leq O(n/q)\cdot \ftwo{f}.
\]
Many of the high level ideas remain the same when considering higher degree polynomials and special
classes like polynomials with non-negative coefficients, or sparse polynomials.

\subsection{Warmup: $(n^2/q^2)$-Approximation}
We begin with seeing how to analyze constant levels of the $\hssos{\cdot}$ relaxation and will then
move onto higher levels in the next section.  The level-$4$ relaxation actually achieves an
$n$-approximation, however we will start with $n^2$ as a warmup and cover the $n$-approximation a
few sections later. 
\bigskip

\subsubsection{$n^2$-Approximation using level-$4$ relaxation}\label{section:poly:warmup:n2}
We shall establish that $\hssos{f}\leq O(n^2)\cdot \ftwo{f}$. 
Let $M_f$ be the SoS-symmetric representation of $f$, let 
$x_{i_1}x_{i_2}x_{i_3}x_{i_4}$ be the monomial whose coefficient 
in $f$ has the maximum magnitude, and let $B$ be the magnitude 
of this coefficient. Now by Gershgorin circle theorem, 
we have $\hssos{f} \leq \|M_f\|_2\leq  n^2 \cdot B$. 

It remains to establish $\ftwo{f} = \Omega(B)$. To this end, 
define the decoupled polynomial $\mathcal{F}(x,y,z,t) := 
{(x\otimes y)}^T\cdot M_f \cdot (z\otimes t)$ and define the decoupled 
two-norm as
\[
  \ftwo{\mathcal{F}}~:=~\sup_{\|x\|,\|y\|,\|z\|,\|t\| = 1} 
  \mathcal{F}(x,y,z,t).
\] 
It is well known that $\ftwo{f} = \Theta(\ftwo{\mathcal{F}})$ 
(see \cref{lem:poly:decoupled:lower:bound:pre}). Thus, we have, 
\begin{align*}
  &\ftwo{f}~=~\Omega(\ftwo{\mathcal{F}})~\geq~\Omega\inparen{|\mathcal{F}(e_{i_1},e_{i_2},e_{i_3},e_{i_4})|} \\
  &~=~\Omega(B)~=~\Omega \inparen{\hssos{f} / n^2} \mper
\end{align*}

\noindent
In order to better analyze $\hssos{f^{q/4}}^{4/q}$ we will need 
to introduce some new techniques.
\bigskip

\subsubsection{$(n^2/q^2)$-Approximation Assuming \cref{thm:poly:intro:gen:multi}}
We will next show that $\hssos{f^{q/4}}^{4/q} \leq O(n^2/q^2)\cdot\ftwo{f}$ 
(for $q$ divisible by $4$). In fact, one can show something stronger, namely that for every 
homogeneous polynomial $g$ of degree-$q$, $\hssos{g}\leq 2^{O(q)}\cdot {(n/q)}^{q/2}\cdot \ftwo{g}$ 
which clearly implies the above claim (also note that for the target $O(n^2/q^2)$-approximation 
to $\ftwo{f}$, losses of $2^{O(q)}$ in the estimate of $\ftwo{g}$ are negligible, while factors of 
the order $q^{\Omega(q)}$ are crucial). 

Given the additional freedom in choice of representation (due to the polynomial having higher degree), 
a first instinct would be to completely symmetrize, \ie{} take the SoS-symmetric representation of 
$g$, and indeed this works for multilinear $g$ (see~\cite{BGGLT17} for details). 

However, the above approach of taking the SoS-symmetric representation breaks down when 
the polynomial is non-multilinear. To circumvent this issue, we employ \cref{thm:poly:intro:gen:multi} 
which on combining with the aforementioned multilinear polynomial result, yields that 
for every homogeneous polynomial $g$ of degree-$q$, $\hssos{g}\leq {(n/q)}^{q/2}\cdot \ftwo{g}$. 
The proofs of \cref{thm:poly:intro:gen:multi} and it's generalizations (that will be required for the $n/q$ 
approximation), are quite non-trivial and are the most technically involved sections of our upper
bound results. We shall next give an outline of the proof of \cref{thm:poly:intro:gen:multi}.

\subsubsection{Reduction to Optimization of Multilinear Polynomials}\label{sec:poly:decoupling}
One of the main techniques we develop in this work, is a way of reducing the optimization problem
for general polynomials to that of multilinear polynomials, which \emph{does not increase the 
number of variables}. 
While general techniques for reduction to the multilinear case have been widely used in the 
literature~\cite{KN08, HLZ10, So11} (known commonly as decoupling/polarization techniques), 
these reduce the problem to optimizing a multilinear polynomial in $n \cdot d$ variables (when 
the given polynomial $h$ is of degree $d$). Below is one example:
\begin{lemma}[\cite{HLZ10}]\label{lem:poly:decoupled:lower:bound:pre}
  Let $\mathcal{A}$ be a SoS-symmetric $d$-tensor and let $h(x):= \iprod{\mathcal{A}}{x^{\otimes d}}$.
  Then  $\|h\|_2~\geq~2^{-O(d)}\cdot \max_{\|x^i\|=1} \iprod{\mathcal{A}}{x^1\otimes \dots \otimes x^d}$.
\end{lemma}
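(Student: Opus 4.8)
The plan is to run the classical polarization (symmetrization) argument, using only the crude bound $\bigl\|\sum_i\varepsilon_i x^i\bigr\|_2\le d$ for unit vectors $x^i$; since the statement only demands control up to a $2^{O(d)}$ factor, this suffices. Fix unit vectors $x^1,\dots,x^d\in\RR^n$, let $\varepsilon=(\varepsilon_1,\dots,\varepsilon_d)$ be i.i.d.\ uniform signs in $\{\pm1\}$, and set $z(\varepsilon):=\sum_{i=1}^d\varepsilon_i x^i$.

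First I would establish the polarization identity
\[
  \E_{\varepsilon}\!\left[\varepsilon_1\cdots\varepsilon_d\cdot h\bigl(z(\varepsilon)\bigr)\right]~=~d!\cdot\iprod{\mathcal{A}}{x^1\otimes\cdots\otimes x^d}\mper
\]
Expanding $h(z(\varepsilon))=\iprod{\mathcal{A}}{z(\varepsilon)^{\otimes d}}=\sum_{(i_1,\dots,i_d)\in[d]^d}\varepsilon_{i_1}\cdots\varepsilon_{i_d}\,\iprod{\mathcal{A}}{x^{i_1}\otimes\cdots\otimes x^{i_d}}$ and multiplying through by $\varepsilon_1\cdots\varepsilon_d$, the term indexed by $(i_1,\dots,i_d)$ survives the expectation only if every value in $[d]$ occurs an \emph{odd} number of times among $i_1,\dots,i_d$; since these $d$ values must account for total multiplicity $d$, each value must occur exactly once, i.e.\ $(i_1,\dots,i_d)$ is a permutation of $[d]$. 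Recalling that a SoS-symmetric tensor is invariant under all permutations of its modes, each of the $d!$ surviving terms equals $\iprod{\mathcal{A}}{x^1\otimes\cdots\otimes x^d}$, which yields the identity.

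Next I would bound the left-hand side. Since each $\|x^i\|_2=1$ we have $\|z(\varepsilon)\|_2\le d$ for every sign pattern, and since $h$ is homogeneous of degree $d$, $\abs{h(z(\varepsilon))}\le\|h\|_2\cdot\|z(\varepsilon)\|_2^{\,d}\le d^d\cdot\|h\|_2$; hence the absolute value of the left-hand side is at most $d^d\,\|h\|_2$. Combining this with the identity and with $d!\ge(d/e)^d$ gives $\abs{\iprod{\mathcal{A}}{x^1\otimes\cdots\otimes x^d}}\le(d^d/d!)\,\|h\|_2\le e^d\,\|h\|_2$. Taking the supremum over unit $x^i$ and rearranging yields $\|h\|_2\ge e^{-d}\max_{\|x^i\|=1}\iprod{\mathcal{A}}{x^1\otimes\cdots\otimes x^d}=2^{-O(d)}\max_{\|x^i\|=1}\iprod{\mathcal{A}}{x^1\otimes\cdots\otimes x^d}$, as claimed.

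There is no real obstacle here — the argument is short and standard. The two points requiring attention are the bookkeeping of which sign-products survive the expectation (which is exactly where full symmetry of $\mathcal{A}$ is used), and the minor temptation to sharpen $\|z(\varepsilon)\|_2\le d$ to a $\sqrt{d}$-type bound via $\E_{\varepsilon}\|z(\varepsilon)\|_2^2=d$: since $d/2\ge1$, Jensen's inequality runs the wrong way for $\E_{\varepsilon}\|z(\varepsilon)\|_2^{\,d}$, and in any case the crude bound already produces the $2^{O(d)}$ loss that the statement allows. If an explicit near-optimal point were needed (rather than just the inequality) for the algorithmic applications, one would instead fix a sign pattern $\varepsilon$ realizing $\varepsilon_1\cdots\varepsilon_d\,h(z(\varepsilon))\ge d!\,\iprod{\mathcal{A}}{x^1\otimes\cdots\otimes x^d}$ and return $z(\varepsilon)/\|z(\varepsilon)\|_2$.
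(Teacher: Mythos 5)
Your proof is correct and matches the paper's approach: the paper states the polarization identity (its Lemma \cite{HLZ10}, $\Ex{\prod_i\xi_i\,\iprod{\mathcal{A}}{(\sum_i\xi_i x^i)^{\otimes d}}}=d!\,\iprod{\mathcal{A}}{x^1\otimes\cdots\otimes x^d}$) and then asserts the lower bound as an immediate consequence, which is precisely the bound $\|z(\varepsilon)\|_2\le d$ plus homogeneity plus $d!\ge(d/e)^d$ that you spelled out. Your careful parity accounting and the remark about why Jensen would not improve the constant are both accurate.
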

Since we are interested in the improvement in approximation obtained by considering $f^{q/4}$ 
for a large $q$, applying these would yield a multilinear polynomial in $n \cdot q$ variables. 
For our analysis, this increase in 
variables exactly cancels the advantage we obtain by considering $f^{q/4}$ instead of $f$ 
(\ie\ the advantage obtained by using $q$ levels of the SoS hierarchy).

We can uniquely represent a homogeneous polynomial $g$ of degree $q$ as
\begin{align}\label{eq:poly:decomposition}
  g(x) &~=~\sum_{\abs{\alpha} \leq q/2} x^{2\alpha}  \cdot G_{2\alpha}(x) \nonumber\\
  &~=~\sum_{r=0}^{q/2} \sum_{\abs{\alpha} = r} x^{2\alpha}  \cdot G_{2\alpha}(x) \nonumber\\
  &~=:~\sum_{r=0}^{q/2} g_r(x) \mcom 
\end{align}
where each $G_{2\alpha}$ is a multilinear polynomial and $g_r(x) \defeq \sum_{\abs{\alpha} = r}
x^{2\alpha}  \cdot G_{2\alpha}(x)$.
We reduce the problem to optimizing $\ftwo{G_{2\alpha}}$ for each of the polynomials
$G_{2\alpha}$. More formally, we show that 
\begin{equation}\label[ineq]{ineq:poly:do:rtm}
    \frac{\hssos{g}}{\ftwo{g}}~\leq~\max_{\alpha\in \udmindex{q/2}} 
    \frac{\hssos{G_{2\alpha}}}{\|G_{2\alpha}\|_2}\cdot 2^{O(q)}
\end{equation}
As a simple and immediate example of its applicability, \cref{ineq:poly:do:rtm} provides a simple proof 
of a polytime constant factor approximation for optimization over the simplex (actually this case is
known to admit a PTAS~\cite{deKLP06, dKLS15}). Indeed, observe that a simplex optimization problem for a 
degree-$q/2$ polynomial in the variable vector $y$ can be reduced to a sphere optimization 
by substituting $y_i = x_i^2$. Now since every variable present in a monomial has even degree 
in that monomial, each $G_{2\alpha}$ is constant, which implies a constant factor approximation 
(dependent on $q$) on applying \cref{ineq:poly:do:rtm}.  

Returning to our overview of the proof, note that given representations of each of the polynomials
$G_{2\alpha}$, each of the polynomials $g_r$ can be represented as a block-diagonal matrix with one
block corresponding to each $\alpha$. Combining this with triangle inequality and the fact that the
maximum eigenvalue of a block-diagonal matrix is equal to the maximum eigenvalue of one of the
blocks, gives the following inequality:
\begin{equation}\label[ineq]{ineq:poly:weak:sos:split}
  \hssos{g}~\leq~(1+q/2)\cdot \max_{\alpha\in \udmindex{q/2}} \hssos{G_{2\alpha}}. 
\end{equation}
We can further strengthen \cref{ineq:poly:weak:sos:split} by averaging the ``best'' representation of 
$G_{2\alpha}$ over $|\orbit{\alpha}|$ diagonal-blocks which all correspond to $x^{2\alpha}$.
We show (see~\cite{BGGLT17})
\begin{equation}\label[ineq]{ineq:poly:sos:split}
\hssos{g}~\leq~(1+q/2)\cdot \max_{\alpha\in \udmindex{q/2}} \frac{\hssos{G_{2\alpha}}}{|\orbit{\alpha}|} \mper
\end{equation}
Since $|\orbit{\alpha}|$ can be as large as $q^{\Omega(q)}$, the above strengthening is crucial.  We then
prove the following inequality, which shows that the decomposition in \cref{eq:poly:decomposition} not
only gives a block-diagonal decomposition for matrix representations of $g$, but can in fact be thought
of as a ``decomposition'' of the \emph{tensor} corresponding to $g$ (with regard to computing
$\ftwo{g}$). We show that
\begin{equation}
\label{ineq:poly:do:mult:2}
    \ftwo{g}~\geq~2^{-O(q)}\cdot \max_{\alpha\in \udmindex{q/2}} 
    \frac{\|G_{2\alpha}\|_2}{|\orbit{\alpha}|} \mper
\end{equation}
The above inequality together with \cref{ineq:poly:sos:split}, implies \cref{ineq:poly:do:rtm}. 

\bigskip

\subsubsection{A new weak decoupling lemma}\label{sec:poly:weak-decoupling}
Recall that the expansion of $g(x)$ in \cref{eq:poly:decomposition}, contains the term 
$x^{2\alpha}\cdot G_{2\alpha}(x)$. 
The key part of proving the bound in \cref{ineq:poly:do:mult:2} is to show the following ``weak
decoupling'' result for $x^{2\alpha}$ and $G_{2\alpha}$.
%
%
\begin{align*}
  \forall \alpha \qquad  \ftwo{g}
  &~\geq~\max_{\|y\|=\|x\|=1}~y^{2\alpha}\cdot G_{2\alpha}(x)\cdot 2^{-O(q)} \\
  &~=~\max_{\|y\|=1}~y^{2\alpha}\cdot \ftwo{G_{2\alpha}}\cdot 2^{-O(q)}.
\end{align*} 
%
%
%
%
%

The proof of \cref{ineq:poly:do:mult:2} can then be completed by considering the unit vector 
$y:= \sqrt{\alpha}/\sqrt{|\alpha|}$, i.e. $y:= \sum_{i\in [n]}
\frac{\sqrt{\alpha_i}}{\sqrt{|\alpha|}}\cdot e_i$. A careful calculation shows that 
$y^{2\alpha} \geq 2^{-O(q)}/\cardin{\orbit{\alpha}}$ which finishes the proof.

The primary difficulty in establishing the above decoupling is the possibility of cancellations.  To see
this, let $x^*$ be the vector realizing $\|G_{2\alpha}\|_2$ and substitute $z = (x^* + y)$ into $g$.
Clearly, $y^{2\alpha}\cdot G_{2\alpha}(x^*)$ is a term in the expansion of $g(z)$, however there is no
guarantee that the other terms in the expansion don't cancel out this value. 
To fix this our proof relies on multiple delicate applications of the 
first-moment method, \ie\ we consider a complex vector random variable $Z(x^*,y)$ that is a 
function of $x^*$ and $y$, and argue about $\Ex{|g(Z)|}$. 

\smallskip \noindent \textbf{The extremal case of $\alpha = 0^n$.} We first consider the extremal case of $\alpha = 0^n$,
where we define $y^{2\alpha}=1$.
This amounts to showing that  for every homogeneous polynomial $h$ of degree $t$, $\ftwo{h}\geq
\ftwo{h_m}\cdot 2^{-O(t)}$ where $h_m$ is the restriction of  $h$ to its multilinear monomials.

Given the optimizer $x^*$ of $\ftwo{h_m}$, let $z$ be a random vector such that each $Z_i = x_i^*$
with probability $p$ and $Z_i = 0$ otherwise. Then, $\Ex{h(Z)}$ is a \emph{univariate} degree-$t$
polynomial in $p$ with the coefficient of $p^t$ equal to $h_m(x^*)$. An application of Chebyshev's extremal polynomial inequality then gives 
that there exists a value of the probability $p$ such that
\begin{align*}
  \ftwo{h} &~\geq~\Ex{\abs{h(Z)}}~\geq~\abs{\Ex{h(Z)}}~\geq~2^{-O(t)} \cdot \abs{h_m(x^*)} \\
  &~=~2^{-O(t)} \cdot \ftwo{h_m}\mper
\end{align*}

For the case of general $\alpha$, we first pass to the \emph{complex version} of $\ftwo{g}$ defined as
\[
  \ftwo{g}^c~\defeq~\sup_{z \in \CC^n, \|z\| = 1} \abs{g(z)} \mper
\]
We use another averaging argument together with an application of the polarization lemma
(\cref{lem:poly:decoupled:lower:bound:pre}) to show that we do not lose much by considering
$\ftwo{g}^c$. In particular, $\ftwo{g}~\leq~\ftwo{g}^c~\leq~2^{O(q)} \cdot \ftwo{g}$.

\smallskip\noindent \textbf{The extremal case of $g = g_r$.}
In this case, the problem reduces to showing that 
for all $\alpha\in \degmindex{r}$ and for all $y \in \SSS^{n-1}$, 
\[
    \|g_r\|^c_2~\geq~y^{2\alpha} \cdot \ftwo{G_{2\alpha}} \cdot 2^{-O(q)}.
\]
Fix any $\alpha\in \degmindex{r}$, and let $\omega\in \CC^n$ be a complex vector random variable, 
such that $\omega_i$ is an independent and uniformly random $(2\alpha_i+1)$-th root of unity. 
Let $\Xi$ be a random $(q-2r+1)$-th root of unity, and let $x^*$ be the optimizer of 
$\|G_{2\alpha}\|_2$. Let $Z := \omega\circ y + \Xi\cdot x^*$, where $\omega\circ y$ denotes the
coordinate-wise product.  Observe that for any 
$\alpha',\gamma$ such that $|\alpha'|=r,~|\gamma|=q-2r,~\gamma\leq \one$, 
\[
  \Ex{\prod_{i}\omega_i \cdot \Xi \cdot Z^{2\alpha' +\gamma}}~= 
  \begin{cases}
    y^{2\alpha}\cdot {(x^*)}^{\gamma}
    &\text{if~}\alpha' = \alpha \\
    0 &\text{otherwise}
  \end{cases}
\]
By linearity, this implies 
$\Ex{\prod_{i}\omega_i \cdot \Xi \cdot g_r(Z)}  = y^{2\alpha}\cdot G_{2\alpha}(x^*)$.
The claim then follows by noting that 
\begin{align*}
    \|g_r\|^c_2 
    &~\geq~\Ex{|g_r(Z)|} 
    =~\Ex{\cardin{\prod_{i}\omega_i \cdot \Xi \cdot g_r(Z)}} \\
    &\geq~\cardin{\,\Ex{\prod_{i}\omega_i \cdot \Xi \cdot g_r(Z)}} 
    \geq~y^{2\alpha}\cdot \ftwo{G_{2\alpha}}.
\end{align*}
\noindent \textbf{The general case.} The two special cases considered here correspond to the cases when we
need to extract a specific $g_r$ (for $r = 0$), and when we need to extract a fixed $\alpha$ from a
given $g_r$. The argument for the general case uses a combination of the arguments for both these cases.
Moreover, to get an $O(n/q)$ approximation, we also need versions of such decoupling
lemmas for folded polynomials to take advantage of ``easy substructures'' as described next.

\subsection{Exploiting Easy Substructures via Folding and Improved Approximations }
To obtain the desired $n/q$-approximation to $\ftwo{f}$, we need to use the fact that the problem of 
optimizing quadratic polynomials can be solved in polynomial time, and moreover that SoS 
captures this. More generally, in this section we consider the problem of getting improved 
approximations when a polynomial contains ``easy substructures''. It is not hard to obtain improved 
guarantees when considering constant levels of SoS. The second main technical contribution of our work 
is in giving sufficient conditions under which higher levels of SoS improve on the approximation of 
constant levels of SoS, when considering the optimization problem over polynomials containing 
``easy substructures''. 

As a warmup, we shall begin with seeing how to exploit easy substructures at constant levels 
by considering the example of degree-$4$ polynomials that trivially ``contain'' quadratics. 

\bigskip 

\subsubsection{$n$-Approximation using Degree-$4$ SoS}\label{sec:poly:warmup:n}

Given a degree-$4$ homogeneous polynomial $f$ (assume $f$ is multilinear for simplicity), we 
consider a degree-$(2,2)$ folded polynomial $h$, whose unfolding yields $f$, chosen so that 
$\max_{\|y\|=1}\ftwo{h(y)} = \Theta(\ftwo{f})$ (recall that an evaluation of a folded 
polynomial returns a polynomial, i.e., for a fixed $y$, $h(y)$ is a quadratic polynomial in the 
indeterminate $x$). Such a $h$ always exists and is not hard to find based on the SoS-symmetric
representation of $f$. Also recall, 
\[
\qquad\qquad 
    h(x) = \sum_{|\beta|=2,\,\beta\leq \one} \fold{h}{\beta}(x) \cdot x^{\beta} \mcom
\]
where each $\fold{h}{\beta}$ is a quadratic polynomial (the aforementioned phrase 
``easy substructures'' is referencing the folds: $\fold{h}{\beta}$ which are easy to optimize). 
Now by assumption we have, 
\[
    \ftwo{f}
    \geq \max_{|\beta|=2,\,\beta\leq \one} \|h(\beta/\sqrt{2})\|_2
    =\max_{|\beta|=2,\,\beta\leq \one} \|\fold{h}{\beta}\|_2/2.
\]
We then apply the block-matrix generalization of 
Gershgorin circle theorem to the SoS-symmetric matrix representation of $f$ to show that 
\begin{align*}
  \hssos{f} &~\leq~\fsp{f}~\leq~n \cdot  \max_{|\beta|=2,\,\beta\leq \one} \fsp{\fold{h}{\beta}}\\
  &~=~n\cdot \max_{|\beta|=2,\,\beta\leq \one} \|\fold{h}{\beta}\|_2 \mcom
\end{align*} 
where the last step uses the fact that $\fold{h}{\beta}$ is a quadratic, and $\fsp{\cdot}$ is a tight
relaxation of $\ftwo{\cdot}$ for quadratics.  This yields the desired $n$-approximation. 
%

\bigskip

\subsubsection{$n/q$-approximation using Degree-$q$ SoS}

Following the cue of the $n^2/q^2$-approximation, we derive the desired $n/q$ bound by proving a 
folded-polynomial analogue of every claim in the previous section (including the multilinear 
reduction tools), a notable difference being that when we consider a power $f^{q/4}$ of $f$, we 
need to consider degree-$(q - 2q/4, 2q/4)$ folded polynomials, since we want to use the fact that 
any {\bf {\em product of $q/4$ quadratic polynomials}} is ``easy'' for SoS (in contrast to
\cref{sec:poly:warmup:n} where we only used the fact quadratic polynomials are easy for SoS). 
We now state an abstraction of the general approach we use to leverage the tractability of the folds.

\medskip \noindent \textbf{Conditions for Exploiting ``Easy Substructures'' at Higher Levels of SoS.} 
Let $d:= d_1 + d_2$ and $f:= \unfold{h}$ where $h$ is a degree-$(d_1,d_2)$ folded polynomial that
satisfies 
\[\sup_{\|y\|=1}\ftwo{h(y)} = \Theta_d(\ftwo{f}) \mper
\] 
Implicit in our work (see~\cite{BGGLT17}), is the following theorem 
we believe to be of independent interest: 
\begin{theorem}\label{thm:poly:easy-substructure}
    Let $h,f$ be as above, and let 
    \[
        \Gamma
        :=~\min \inbraces{\frac{\hssos{p}}{\ftwo{p}}
        \sep{p(x)\in \textrm{span}\inparen{\fold{h}{\beta}\sep{\beta\in \degmindex{d_2}}}} 
        }.
    \]
    Then for any $q$ divisible by $2d$, 
    \[\hssos{f^{q/d}}^{d/q} 
        \leq~O_d\inparen{\Gamma \cdot {(n/q)}^{d_1/2}}\cdot \ftwo{f}.\] 
\end{theorem}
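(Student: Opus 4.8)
The plan is to run the template of the $n/q$-approximation from \cref{sec:poly:warmup:n}, but applied to the power $g := f^{q/d}$ and carried out in folded form. Since $2d\mid q$, the exponent $q/d$ is even, so $\ftwo{g} = \ftwo{f}^{\,q/d}$, and it suffices to prove $\hssos{g}\leq 2^{O(q)}\cdot\Gamma^{q/d}\cdot{(n/q)}^{q d_1/(2d)}\cdot\ftwo{g}$ and raise both sides to the power $d/q$ (each $2^{O(q)}$ then becomes an $O_d(1)$). First I would lift $g$ to the degree-$(q d_1/d,\,q d_2/d)$ folded polynomial $\tilde g := h^{q/d}$, whose unfolding is $g$: since every block-matrix representation of $\tilde g$ unfolds to a matrix representation of $g$, one has $\hssos{g}\leq\hssos{\tilde g}$ (in the folded sense of \cref{eq:poly:def-of-Lambda}). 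On the other side, the hypothesis $\sup_{\|y\|=1}\ftwo{h(y)} = \Theta_d(\ftwo{f})$ together with the polarization lemma \cref{lem:poly:decoupled:lower:bound:pre} gives that $\sup_{\|y\|=1}\ftwo{\tilde g(y)} = \sup_{\|y\|=1}\ftwo{h(y)}^{q/d}$ agrees with $\ftwo{g}$ up to a $2^{O(q)}$ factor. So it is enough to bound $\hssos{\tilde g}$ against $\sup_{\|y\|=1}\ftwo{\tilde g(y)}$.

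\textbf{Folded multilinear reduction.} Next I would expand $\tilde g$ in its fold variable as in \cref{eq:poly:decomposition}, $\tilde g(x;y) = \sum_{r}\sum_{|\alpha|=r} y^{2\alpha}\cdot G_{2\alpha}(x;y)$, with each $G_{2\alpha}$ multilinear in the fold variable; because $\tilde g = h^{q/d}$, every coefficient polynomial of each $G_{2\alpha}$ is a sum of products of $q/d$ of the folds $\fold{h}{\beta}$ of $h$, i.e. of the ``easy substructures''. Proving folded analogues of \cref{ineq:poly:sos:split} and \cref{ineq:poly:do:mult:2} — a block-diagonal decomposition over $\alpha$ followed by averaging each chosen representation over the $\cardin{\orbit{\alpha}}$ diagonal blocks that realize $y^{2\alpha}$ on the one hand, and the complex--random-vector / first-moment weak-decoupling argument of \cref{sec:poly:weak-decoupling} adapted to folded polynomials on the other — yields
\[
  \hssos{\tilde g}~\leq~\inparen{1+\tfrac{q d_1}{2d}}\cdot\max_{\alpha}\ \frac{\hssos{G_{2\alpha}}}{\cardin{\orbit{\alpha}}}
  \qquad\text{and}\qquad
  \sup_{\|y\|=1}\ftwo{\tilde g(y)}~\geq~2^{-O(q)}\cdot\max_{\alpha}\ \frac{\sup_{\|y\|=1}\ftwo{G_{2\alpha}(y)}}{\cardin{\orbit{\alpha}}}.
\]
Dividing, the orbit factors cancel and $\hssos{g}/\ftwo{g}\leq 2^{O(q)}\cdot\max_\alpha\ \hssos{G_{2\alpha}}\big/\sup_{\|y\|=1}\ftwo{G_{2\alpha}(y)}$.

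\textbf{Controlling one block via $\Gamma$.} It then remains to bound, for each fixed $\alpha$, the single ratio $\hssos{G_{2\alpha}}\big/\sup_{\|y\|=1}\ftwo{G_{2\alpha}(y)}$. Applying the block-matrix Gershgorin estimate to the SoS-symmetric representation of $G_{2\alpha}$ — organized so that the blocks are matrix representations of products of $q/d$ folds, exactly as the factor $n$ is produced in the degree-$4$ warmup of \cref{sec:poly:warmup:n}, and again symmetrizing over orbits before summing block rows to replace $n^{q d_1/(2d)}$ by ${(n/q)}^{q d_1/(2d)}$ — reduces this ratio to ${(n/q)}^{q d_1/(2d)}\cdot 2^{O(q)}$ times the worst ratio $\hssos{p}/\ftwo{p}$ over blocks $p$, each of which is (a symmetrization of) a product $p_1\cdots p_{q/d}$ of $q/d$ elements of $\mathrm{span}\inparen{\fold{h}{\beta}}$. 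A weak-decoupling estimate for such products, namely $\hssos{p_1\cdots p_{q/d}}\leq\prod_j\hssos{p_j}$ and $\ftwo{p_1\cdots p_{q/d}}\geq 2^{-O(q)}\prod_j\ftwo{p_j}$, together with the definition of $\Gamma$ then bounds every such ratio by $2^{O(q)}\cdot\Gamma^{q/d}$, completing the chain.

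\textbf{Main obstacle.} I expect the crux to be the folded weak-decoupling step: transporting the arguments of \cref{sec:poly:decoupling}--\cref{sec:poly:weak-decoupling} to folded polynomials without letting the number of variables grow with $q$, and running the first-moment calculation so that it simultaneously isolates the correct power $y^{2\alpha}$ of the fold variable and the correct coefficient polynomial of $G_{2\alpha}(y)$ — this forces one to combine the two ``extremal cases'' ($\alpha=0^n$ and $g=g_r$) of the weak-decoupling lemma rather than use either in isolation. A secondary difficulty is the orbit-size bookkeeping across the two nested block decompositions (along the fold variable, and then inside each $G_{2\alpha}$): the $\cardin{\orbit{\alpha}}$ factors must cancel exactly, not merely up to $\poly(n)$ slack, for the final exponent to come out as $d_1/2$.
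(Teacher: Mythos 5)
Your setup is sound and follows the paper's template through the folded multilinear reduction and the orbit bookkeeping, but the final step --- ``Controlling one block via $\Gamma$'' --- has a genuine gap, and it is precisely the place where the paper's implicit argument takes a different path.

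You propose to isolate a single block $\fold{(S_{2\alpha})}{\gamma}$, note that it is (a symmetrization of) a product $p_1\cdots p_{q/d}$ of elements of $\mathrm{span}(\fold{h}{\beta})$, and bound its SoS-to-optimum ratio by $2^{O(q)}\Gamma^{q/d}$ via the pair of estimates $\hssos{p_1\cdots p_{q/d}}\leq \prod_j\hssos{p_j}$ and $\ftwo{p_1\cdots p_{q/d}}\geq 2^{-O(q)}\prod_j\ftwo{p_j}$. The second estimate is false for distinct $p_j$: taking $p_j = x_j^2$ gives $\ftwo{p_1\cdots p_k} = k^{-k}$ while $\prod_j\ftwo{p_j}=1$, a loss of $k^{-k}$, not $2^{-O(k)}$, and nothing in the structure of $\fold{(S_{2\alpha})}{\gamma}$ forces the factors to share a maximizer. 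The first estimate is also not established for $\hssos{\cdot}$ --- Kronecker products give sub-multiplicativity of $\fsp{\cdot}$ (since $\|M_1\otimes M_2\|_2 = \|M_1\|_2\|M_2\|_2$), but $\lambda_{\max}$ of a Kronecker product can exceed the product of the $\lambda_{\max}$'s when negative eigenvalues dominate. As a result, the chain does not close.

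What the paper actually does (in the proofs of \cref{thm:poly:gen:d/2-1} and \cref{thm:poly:nnc:d/4-1/2}, of which \cref{thm:poly:easy-substructure} is an abstraction) is invoke the easy-substructure property exactly once, and not on a block: in the weak-decoupling chain, after passing to the complex variant and plugging in a random $z$, one is staring at $\ftwo{{h(z)}^{q/d}(x)}$, which in the $\Gamma=1$ case equals $\fsp{{h(z)}^{q/d}(x)}$ (``SOS exact on powered quadratics''). This is the unique place the fold structure is exploited, and the object is a $(q/d)$-th power of the \emph{single} polynomial $h(z)\in\mathrm{span}(\fold{h}{\beta})$, so $\ftwo{{h(z)}^{q/d}} = \ftwo{h(z)}^{q/d}$ \emph{exactly} (same maximizer) and $\fsp{{h(z)}^{q/d}}\leq\fsp{h(z)}^{q/d}$ via the Kronecker representation, with no $2^{-O(q)}$ slack and no supermultiplicativity over distinct factors ever needed. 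Replacing the equality by $\ftwo{{h(z)}^{q/d}}\geq \fsp{{h(z)}^{q/d}}/\Gamma^{q/d}$ then propagates a single factor $\Gamma^{q/d}$ through the chain, which becomes $\Gamma$ after taking $d/q$-th roots. The block quantity $\fsp{\fold{(S_{2\alpha})}{\gamma}}$ then appears with matching orbit and degree prefactors in both the Gershgorin upper bound on $\hssos{s}$ and in the conclusion of the weak-decoupling lower bound on $\ftwo{f}^{q/d}$, and cancels --- it is never separately compared to $\Gamma$. To repair your argument you would need to abandon the attempt to control each block by $\Gamma$ directly and instead thread the $\Gamma$ factor through the weak-decoupling step at the level of ${h(z)}^{q/d}$.
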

In other words, if degree-$d_2$ SoS gives a good approximation for every polynomial in 
the subspace spanned by the folds of $h$, then higher levels of SoS give an improving 
approximation that exploits this. In this work, we only apply the above with $\Gamma =1$, where exact optimization is easy for the space spanned by the folds.

While we focused on general polynomials for the overview, let us remark that in the case of polynomials
with non-negative coefficients, the approximation factor in \cref{thm:poly:easy-substructure} becomes
$O_d\inparen{\delta\cdot {(n/q)}^{d_1/4}}$.

\bigskip

\subsection{Lower Bounds for Polynomials with Non-negative Coefficients}

\subsubsection{Degree-4 Lower Bound for Polynomials with Non-Negative Coefficients}
We discuss some of the important ideas from the proof of~\cref{thm:poly:nnc-lowerbound}.
The lower bound proved by a subset of the authors in~\cite{BGL16} 
proves a large ratio $\frac{\hssos{f}}{\| f \|_2}$ by considering a random polynomial $f$ where each
coefficient of $f$ is an independent (Gaussian) random variable with bounded variance. 
The most natural adaptation of the above strategy to degree-$4$ polynomials with non-negative
coefficients is to consider a random polynomial $f$ where each coefficient  $f_{\alpha}$ is
independently sampled such that $f_{\alpha} = 1$ with probability $p$ and $f_{\alpha} = 0$ with
probability $1-p$. However, this construction fails for every choice of $p$.
If we let $\sfA \in \RR^{ {[n]}^2 \times {[n]}^2}$ be the natural matrix representation of $f$ (\ie~each
coefficient $f_{\alpha}$ is distributed uniformly among the corresponding entries of $\sfA$),
the Perron-Frobenius theorem shows that $\norm{2}{\sfA}$ is less than the maximum row sum
$\max(\tilde{O}(n^2 p), 1)$ of $\sfM$, which is also an upper bound on $\hssos{f}$.
However, we can match this bound by (within constant factors) choosing $x = (\frac{1}{\sqrt{n}},
\dots, \frac{1}{\sqrt{n}})$ when $p \geq 1/n^2$. Also, when $p < 1/n^2$, we can take any $\alpha$
with $f_{\alpha} = 1$ and set $x_i = 1/2$ for all $i$ with $\alpha_i > 0$, which achieves a value of $1/16$.

We introduce another natural distribution of random non-negative polynomials that bypasses this problem. 
Let $G = (V, E)$ be a random graph drawn from the distribution $G_{n, p}$ (where we choose $p = n^{-1/3}$
and $V = [n]$). Let $\cliques \subseteq \binom{V}{4}$ be the set of $4$-cliques in $G$. 
The polynomial $f$ is defined as 
\[
  f(x_1, \dots, x_n) := \sum_{\{ i_1, i_2, i_3, i_4 \} \in \cliques} x_{i_1} x_{i_2} x_{i_3} x_{i_4}. 
\]
Instead of trying $\Theta(n^4)$ $p$-biased random bits, we use $\Theta(n^2)$ of them. This limited
independence bypasses the problem above, since the rows of $\sfA$ now have significantly different
row sums:
 $\Theta(n^2 p)$ rows that correspond to an edge of $G$ will have row sum $\Theta(n^2 p^5)$, and all
 other rows will be zeros. 
Since these $n^2 p$ rows (edges) are chosen independently of $\binom{[n]}{2}$, they still reveal
little information that can be exploited to find an $n$-dimensional vector $x$ with large
$f(x)$. However, the proof requires a careful analysis of the trace method (to bound the spectral
norm of an ``error'' matrix).

It is simple to prove that $\|f\|_{sp} \geq \Omega\inparen{\sqrt{n^2p^5}} = \Omega(n^{1/6})$ by
considering the Frobenius norm of the $n^2p \times n^2p$ principal submatrix, over any matrix
representation (indeed, $\sfA$ is the minimizer). To prove
$\hssos{f} \geq \tilde{\Omega}(n^{1/6})$, we construct a moment matrix $\sfM$ that is SoS-symmetric,
positive semidefinite, and has a large $\langle \sfM, \sfA \rangle$ (see the dual form of
$\hssos{f}$ in~\cite{BGGLT17}). It turns out that the $n^2 p
\times n^2 p$ submatrix of $\sfA$ shares spectral properties of the adjacency matrix of a random
graph $G_{n^2p, p^4}$, and taking $\sfM := c_1\sfA + c_2 \sfI$ for some identity-like matrix $\sfI$
proves $\hssos{f} \geq \tilde{\Omega}(n^{1/6})$. An application of the trace method is needed to
bound $c_2$.

To upper bound $\| f \|_2$, we first observe that $\| f \|_2$ is the same as the following natural combinatorial problem up to an $O(\log^4 n)$ factor: find four sets $S_1, S_2, S_3, S_4 \subseteq V$ that maximize
\[
\frac{\abs{\cliques_G(S_1, S_2, S_3, S_4)}}{\sqrt{|S_1||S_2||S_3||S_4|}}
\]
where $|\cliques_G (S_1, S_2, S_3, S_4)|$ is the number of $4$-cliques $\{ v_1, \dots, v_4 \}$ of $G$ with $v_i \in S_i$ for $i = 1, \dots, 4$. 
The number of copies of a fixed subgraph $H$ in $G_{n, p}$, including its tail bound, has been actively
studied in probabilistic combinatorics~\cite{Vu01, KV04, JOR04, Chatterjee12, DK12a, DK12b, LZ16}, though
we are interested in bounding the $4$-clique density of {\em every} $4$-tuple of subsets simultaneously.
The previous results give a strong enough tail bound for a union bound, to prove that the optimal value
of the problem is $O(n^2p^6 \cdot \log^{O(1)} n)$ when $\abs{S_1} = \cdots = \abs{S_4}$, but this strategy
inherently does not work when the set sizes become significantly different. 
However, we give a different analysis for the above asymmetric case, showing that the optimum is still no
more than $O(n^2p^6 \cdot \log^{O(1)} n)$.

\bigskip

\subsubsection{Lifting Stable Degree-$4$ Lower Bounds}
For a degree-$t$ ($t$ even) homogeneous polynomial $f$, note that 
$\max \{|\hssos{f}|,|\hssos{-f}|\}$ is a relaxation of $\ftwo{f}$. 
$\fsp{f}$ is a slightly weaker (but still quite natural) relaxation 
of $\ftwo{f}$ given by 
\begin{align*}
  &\fsp{f} \defeq \\
  &\inf \inbraces{\norm{2}{M}~\given~M~\text{is a matrix representation of}~f} \mper
\end{align*}
As in the case of $\hssos{f}$, for a (say) degree-$4$ polynomial $f$, $\fsp{f^{q/4}}^{4/q}$ gives a
hierarchy of relaxations for $\ftwo{f}$, for increasing values of $q$.

We give an overview of a general method of ``lifting''
certain ``stable'' low degree gaps for $\fsp{\cdot}$ to gaps for higher levels with 
at most $q^{O(1)}$ loss in the gap. While we state our techniques for lifting 
degree-$4$ gaps, all the ideas are readily generalized to higher levels. 
We start with the observation that the dual of $\fsp{f}$ is given by the following ``nuclear norm''
program.
Here $\sfM_f$ the canonical matrix representation of $f$, and $\norm{S_1}{\sfX}$ is the Schatten
$1$-norm (nuclear norm) of $X$, which is the sum of its singular values.
\begin{align*}
  \mathsf{maximize}~\qquad \qquad \qquad \iprod{\sfM_f}{\sfX}  \\
  \mathsf{subject~to:} \qquad \qquad \quad \norm{S_1}{\sfX} = 1  \\
  {\sfX}~\text{is SoS symmetric}
\end{align*}
Now let $\sfX$ be a solution realizing a gap of $\delta$ between $\fsp{f}$ and 
$\ftwo{f}$. We shall next see how assuming reasonable conditions on $\sfX$ and $M_f$, 
one can show that $\|f^{q/4}\|_{sp}/\|f^{q/4}\|_2$ is at least $\delta^{q/4}/q^{O(q)}$. 

In order to give a gap for the program corresponding to $\fsp{f^{q/4}}$, a natural choice for a solution
is the symmetrized version of the matrix $\sfX^{\otimes q/4}$ normalized by its Schatten-$1$ norm \ie\ for
$Y = \sfX^{\otimes q/4}$, we take
\begin{align*}
  \sfZ &~:= Y^S/\norm{S_1}{Y^S} \\
  &~\text{where}~Y^S[K] = \Ex{\pi \in \Sym_{q}}{Y[\pi(K)]}~\forall K \in {[n]}^q \mper
\end{align*}
To get a good gap, we are now left with showing that $\norm{S_1}{Y^S}$ is not 
too large. Note that symmetrization can 
drastically change the spectrum of a matrix as for different permutations $\pi$, the matrices
 $Y^{\pi}[K] \defeq Y[\pi(K)]$ can have very different ranks (while $\norm{F}{Y} = \norm{F}{Y^{\pi}}$).
In particular, symmetrization can increase the maximum 
eigenvalue of a matrix by polynomial factors, and thus one must carefully 
count the number of such large eigenvalues in order to get a good upper bound 
on $\norm{S_1}{Y^S}$. Such an upper bound is a consequence of a 
structural result about $Y^S$ that we believe to be of independent interest. 

To state the result, we will first need some notation. 
For a matrix $M\in\Re^{{[n]}^2\times {[n]}^2}$ let $T\in \RR^{{[n]}^4}$ denote 
the tensor given by, $T[i_1,i_2,i_3,i_4] = M[(i_1,i_2),(i_3,i_4)]$. 
Also for any non-negative integers $x,y$ satisfying $x+y = 4$, let 
$M_{x,y}\in \Re^{ {[n]}^{x}\times {[n]}^{y}}$ denote the (rectangular) matrix given by, 
$M[(i_1,\dots ,i_x),(j_1,\dots j_y)] = T[i_1,\dots ,i_x,j_1,\dots j_y]$. 
Let $M\in \Re^{ {[n]}^2\times {[n]}^2}$ be a degree-$4$ SoS-Symmetric matrix, 
let $M_A:= M_{1,3}\otimes M_{4,0}\otimes M_{1,3}$, let $M_B := M_{1,3}\otimes M_{3,1}$, 
let $M_C:=M$ and let $M_D:= \Vector{M}\Vector{M}^T = M_{0,4}\otimes M_{4,0}$. 

We show (see~\cite{BGGLT17}) that ${(M^{\otimes q/4})}^S$ can be written as the 
sum of $2^{O(q)}$ terms of the form: 
\[
    C(a,b,c,d)\cdot P \cdot (M_A^{\otimes a}\otimes M_B^{\otimes b}\otimes M_C^{\otimes c}
    \otimes M_D^{\otimes d}) \cdot P
\]
where $12a+8b+4c+8d = q$,  $P$ is a matrix with spectral norm $1$ and $C(a,b,c,d) = 2^{O(q)}$. 
This implies that 
controlling the spectrum of $M_A,M_B,M$ and $M_D$  
is sufficient to control on the spectrum of ${(M^{\otimes q/4})}^{S}$. 

Using this result with $M:= \sfX$, we are able to establish that 
if $\sfX$ satisfies the additional condition of 
$\norm{S_1}{\sfX_{1,3}}\leq 1$ (note that we already know 
$\norm{S_1}{X}\leq 1$), then $\norm{S_1}{Y^S} = 2^{O(q)}$. 
Thus $\sfZ$ realizes a $\iprod{M_f^{\otimes q/4}}{Y^S}/2^{O(q)}$ gap 
for $\fsp{f^{q/4}}$.
On composing this result with the degree-$4$ gap from the previous section, we 
obtain an $\fsp{\cdot}$ gap of $n^{q/24}/\inparen{q\cdot \log n}^{O(q)}$ for degree-$q$ polynomials with 
non-neg. coefficients. We also show the $q$-th level $\fsp{\cdot}$ gap for 
degree-$4$ polynomial with non-neg. coefficients is $\tilde{\Omega}(n^{1/6})/q^{O(1)}$. 

\smallskip
Even though we only derive results for the weaker relaxation $\fsp{\cdot}$, 
the structural result above can be used to lift ``stable'' low-degree SoS lower 
bounds as well (\ie\ gaps for $\hssos{\cdot}$), albeit with a stricter notion of stability (see~\cite{BGGLT17}).
However, the problem of finding 
such stable SoS lower bounds remains open. 

There are by now quite a few results giving near-tight lower bounds on the performance of higher level
SoS relaxations for \emph{average-case} problems~\cite{BHKKMP16,KMOW17,HKPRSS16}.  However, there are few
examples in the literature of matching SoS upper/lower bounds on \emph{worst-case} problems. We believe
our lifting result might be especially useful in such contexts, where in order to get higher degree lower
bounds, it suffices to give stable lower bounds for constant degree SoS.

\section{Preliminaries on the SoS Hierarchy for Polynomials}\label{sec:poly:prelims}

\subsection{Pseudoexpectations and Moment Matrices}\label{sec:poly:sos:prelims}
Let ${\Re[x]}_{\leq q}$ be the vector space of polynomials with real coefficients in variables $x =
(x_1, \ldots, x_n)$, of degree at most $q$. For an even integer $q$, the degree-$q$
pseudo-expectation operator is a linear operator $\PEx : {\Re[x]}_{\leq q} \mapsto \Re$ such that 
\smallskip
\begin{enumerate}
  \item $\PEx{1} = 1$ for the constant polynomial $1$.
  \item $\PEx{p_1 + p_2} = \PEx{p_1} + \PEx{p_2}$ for any polynomials $p_1, p_2 \in {\Re[x]}_{\leq q}$. 
  \item $\PEx{p^2} \geq 0$ for any polynomial $p \in {\Re[x]}_{\leq q/2}$. 
\end{enumerate}
The pseudo-expectation operator $\PEx$ can be described by a \defnt{moment matrix} $\hM \in
\RR^{\udmindex{q/2} \times \udmindex{q/2}}$ defined as $\hM[\alpha,\beta] = \PEx{x^{\alpha+\beta}}$
for $\alpha,\beta \in \udmindex{q/2}$. 

For each fixed $t \leq q/2$, we can also consider the principal minor of $\hM$ indexed by
$\alpha,\beta \in \degmindex{t}$. This also defines a matrix $M \in \R^{ {[n]}^t \times {[n]}^{t}}$ with
$M[I,J] = \PEx{x^{\mi{I}+\mi{J}}}$. Note that this new matrix $M$ satisfies $M[I,J] = M[K,L]$
whenever $\mi{I}+\mi{J} = \mi{K} + \mi{L}$. Recall that a matrix in $\R^{ {[n]}^t \times {[n]}^t}$ with
this symmetry is said to be \defnt{SoS-symmetric}.

We will use the following facts about the operator $\PEx$ given by the SoS hierarchy.
\begin{claim}[Pseudo-Cauchy-Schwarz~\cite{BKS14}]
  $\PEx{p_1p_2}\leq {(\PEx{p_1^2}\PEx{p_2^2})}^{1/2}$ for any $p_1,p_2$ of degree at most $q/2$. 
\end{claim}

\subsubsection{Constrained Pseudoexpectations}
For a system of polynomial constraints $C = \inbraces{f_1 = 0, \ldots, f_m = 0, g_1 \geq 0, \ldots,
  g_r \geq 0}$, we say $\PEx_C$ is a pseudoexpectation operator respecting $C$, if in addition to the
above conditions, it also satisfies
\begin{enumerate}
\item $\PExc{p \cdot f_i} = 0$,  $\forall i \in [m]$ and $\forall p$ such that $\deg(p \cdot f_i)
  \leq q$.
\item $\PExc{p^2 \cdot \prod_{i \in S} g_i} \geq 0$, $\forall S \subseteq [r]$ and $\forall p$
  such that $\deg(p^2 \cdot \prod_{i \in S} g_i) \leq q$.
\end{enumerate}
It is well-known that such constrained pseudoexpectation operators can be described as solutions to
semidefinite programs of size $n^{O(q)}$~\cite{BS14, Laurent09}. This hierarchy of semidefinite
programs for increasing $q$ is known as the SoS hierarchy.

\subsection{Matrix Representations of Polynomials and Relaxations}\label{sec:poly:hscsos}
For a homogeneous polynomial $f$ of even  degree $d$, we say a matrix $M \in \Re^{ {[n]}^{d/2}\times
{[n]}^{d/2}}$ is a degree-$d$ matrix representation of $f$ if for all $x$, $f(x) = {(x^{\otimes
d/2})}^{T} \cdot M \cdot x^{\otimes d/2}$. Recall that we consider the semidefinite program for
optimizing the quantity $\hssos{f}$, which is a relaxation for $\ftwo{f}$ when $f \geq 0$.
Let $\sfM_f \in \R^{n^{d/2} \times n^{d/2}}$ denote the unique SoS-symmetric matrix representation
of $f$. The~\cref{fig:poly:Lambda} gives the primal and dual forms of the relaxation computing
$\hssos{f}$.
It is easy to check that strong duality holds in this case, since the solution
$\PExc{x^{\alpha}} = {(1/\sqrt{n})}^{\abs{\alpha}}$ for all $\alpha \in \udmindex{d}$,
is strictly feasible and in the relative interior of the domain. 
Thus, the objective values of the two programs are equal.
\begin{figure}[htb]
  \begin{tabular}{|c|}
    \hline
    \begin{minipage}[t]{0.94\textwidth}
      \smallskip
      \underline{\textsf{Primal}}
      \[
        \hssos{f}~\defeq~\inf \inbraces{\sup_{\|z\| = 1} z^T M z~\given~M \in \sym{n^{d/2}},~{(x^{\otimes d/2})}^T
        \cdot M \cdot x^{\otimes d/2} = f(x)~\forall x \in \R^n} 
      \]
      \smallskip
    \end{minipage} \\
    \hline
    \begin{minipage}{0.95\textwidth}
      \begin{tabular}{l|l}
        \begin{minipage}[t]{0.47 \textwidth}
          \smallskip
          \underline{\textsf{Dual I}}
          \begin{align*}
            \mathsf{maximize}~\qquad \qquad \qquad \iprod{\sfM_f}{\sfX}  \\
            \mathsf{subject~to:} \qquad \qquad \quad \Tr{\sfX} = 1 \\
            {\sfX}~\text{is SoS symmetric} \\
            \sfX \succeq 0
          \end{align*}
        \end{minipage}
        &
        \begin{minipage}[t]{0.47 \textwidth}
          \smallskip
          \underline{\textsf{Dual II}}
          \begin{align*}
            \mathsf{maximize} \qquad \qquad \qquad \qquad \quad \PExc{f}  \\
            \mathsf{subject~to:}  \qquad \qquad \tilde{\mathbb{E}}_C~\text{is a degree-$d$} \\
            \text{pseudoexpectation} \\
            \tilde{\mathbb{E}}_C~\text{respects}~C \equiv \inbraces{\norm{2}{x}^d = 1}\\
          \end{align*}
        \end{minipage}  \\
      \end{tabular}
    \end{minipage} \\
    \hline
  \end{tabular}
  \caption{Primal and dual forms for the relaxation computing $\hssos{f}$}\label{fig:poly:Lambda}
\end{figure}
We will also consider a weaker relaxation of $\ftwo{f}$, which we denote by $\fsp{f}$. A somewhat
weaker version of this was used as the reference value in the work of~\cite{BKS14}. The~\cref{fig:poly:fsp}
gives the primal and dual forms of this relaxation.
\begin{figure}[htb]
  \begin{tabular}{|c|}
    \hline
    \begin{minipage}{0.94\textwidth}
      \smallskip
      \underline{\textsf{Primal}}
      \[
        \fsp{f}~\defeq~\inf \inbraces{ \norm{2}{M}~\given~M \in \sym{n^{d/2}},~{(x^{\otimes d/2})}^T \cdot 
        M \cdot x^{\otimes d/2} = f(x)~\forall x \in \R^n}
      \]
      \smallskip
    \end{minipage} \\
    \hline
    \begin{minipage}{0.94\textwidth}
      \smallskip
      \underline{\textsf{Dual}}
      \begin{align*}
        \mathsf{maximize}~\qquad \qquad \qquad \iprod{\sfM_f}{\sfX}  \\
        \mathsf{subject~to:} \qquad \qquad \quad \norm{S_1}{\sfX} = 1 \\
        {\sfX}~\text{is SoS symmetric}
      \end{align*}
    \end{minipage} \\
    \hline
  \end{tabular}
  \caption{Primal and dual forms for the relaxation computing $\lVert f\rVert_{\mathrm{sp}}$}\label{fig:poly:fsp}
\end{figure}
%


%
We will also need to consider constraint sets $C = \inbraces{\norm{2}{x}^2 = 1, x^{\beta_1} \geq 0,
  \ldots, x^{\beta_m} \geq 0}$. We refer to the non-negativity constraints here as \defnt{moment
  non-negativity constraints}.
When considering the maximum of $\PExc{f}$, for constraint sets $C$ containing moments
non-negativity constraints in addition to $\norm{2}{x}^2=1$, we refer to the optimum value as
$\hscsos{C}{f}$. Note that the maximum is still taken over degree-$d$ pseudoexpectations.
Also, strong duality still holds in this case since $\PExc{x^{\alpha}} =
{(1/\sqrt{n})}^{\abs{\alpha}}$ is still a strictly feasible solution.

\subsubsection{Properties of Relaxations Obtained from Constrained Pseudoexpectations}
We use the following claim, which is an easy consequence of the fact that the sum-of-squares
algorithm can produce a certificate of optimality (see~\cite{OZ13}). 
In particular, if $\max_{\PExc} \PExc{f} =
\hscsos{C}{f}$ for a degree-$q_1$ pseudoexpectation operator respecting $C$ containing
$\norm{2}{x}^2=1$ and moment non-negativity constraints for $\beta_1, \ldots, \beta_m$, then for every
$\lambda > \hscsos{C}{f}$, we have that $\lambda - f$ can be certified to be positive by 
showing that $\lambda - f \in \Sigma^{q_1}_C$. Here $\Sigma^{(q_1)}_C$ is the set of all expressions
of the form
\[
  \lambda - f~=~\sum_j p_j \cdot \pth{\norm{2}{x}^2-1} + \sum_{S \subseteq [m]} h_S(x) \cdot \prod_{i
  \in S}{x^{\beta_i}} \mcom
\]  
where each $h_S$ is a sum of squares of polynomials and the degree of each term is at most $q_1$.
\begin{lemma}\label{lem:poly:sos:replace}
  Let $\hscsos{C}{f}$ denote the maximum of $\PExc{f}$ over all degree-$d$ pseudoexpectation operators
  respecting $C$. Then, for a pseudoexpectation operator of degree $d'$ (respecting $C$) and a
  polynomial $p$ of degree at most $(d'-d)/2$, we have that 
  \[
    \PExc{p^2 \cdot f}~\leq~\PExc{p^2} \cdot \hscsos{C}{f} \mper
  \]
\end{lemma}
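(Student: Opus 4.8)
The plan is to invoke the sum-of-squares certificate of optimality for the constrained relaxation, exactly as flagged in the discussion preceding the lemma. First I would fix an arbitrary real $\lambda > \hscsos{C}{f}$. Since $\hscsos{C}{f}$ is by definition the maximum of $\PExc{f}$ over all degree-$d$ pseudoexpectation operators respecting $C$, this maximum is attained by some degree-$d$ operator respecting $C$, and so the certificate-of-optimality property of the SoS hierarchy (see~\cite{OZ13}) gives $\lambda - f \in \Sigma^{(d)}_C$; that is, one can write
\[
  \lambda - f ~=~ \sum_j p_j \cdot \pth{\norm{2}{x}^2 - 1} ~+~ \sum_{S \subseteq [m]} h_S(x)\cdot \prod_{i\in S} x^{\beta_i},
\]
where each $h_S$ is a sum of squares of polynomials and every summand on the right has degree at most $d$.

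Next I would multiply this identity through by $p^2$, where $p$ is the given polynomial with $\deg p \le (d'-d)/2$. Each ideal summand becomes $(p^2 p_j)\pth{\norm{2}{x}^2-1}$, of degree at most $(d'-d)+d = d'$, and each remaining summand becomes $(p^2 h_S)\prod_{i\in S}x^{\beta_i}$, where $p^2 h_S = \sum_\ell (p\,q_{S,\ell})^2$ is still a sum of squares and of degree at most $d'$. I would then apply the degree-$d'$ operator $\PExc$: by linearity the expression splits, the condition $\PExc{g\cdot(\norm{2}{x}^2-1)} = 0$ kills every ideal term (their degrees are within $d'$), and the condition $\PExc{g^2 \prod_{i\in S}x^{\beta_i}} \ge 0$ makes every remaining term nonnegative. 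This yields $\PExc{p^2(\lambda - f)} \ge 0$, which rearranges by linearity to $\PExc{p^2 f} \le \lambda\cdot\PExc{p^2}$.

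To conclude I would observe that $\deg p \le (d'-d)/2 \le d'/2$, so $\PExc{p^2} \ge 0$ by the PSD-ness axiom for a degree-$d'$ pseudoexpectation; hence $\lambda \mapsto \lambda\cdot\PExc{p^2}$ is nondecreasing in $\lambda$, and letting $\lambda$ decrease to $\hscsos{C}{f}$ in the inequality $\PExc{p^2 f} \le \lambda\cdot\PExc{p^2}$ gives $\PExc{p^2 f} \le \hscsos{C}{f}\cdot\PExc{p^2}$, as required. The only point requiring care is the degree bookkeeping: at each step one must check that the polynomials produced stay within degree $d'$, so that $\PExc$ is both defined on them and subject to its defining (in)equalities there — this is precisely where the hypothesis $\deg p \le (d'-d)/2$ enters, and once it is tracked correctly there is no further obstacle.
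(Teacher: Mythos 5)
Your proof is correct and is essentially the paper's own argument: take $\lambda > \hscsos{C}{f}$, use the SoS certificate $\lambda - f \in \Sigma^{(d)}_C$, multiply by $p^2$, apply $\PExc$ term by term (ideal terms vanish, SOS-times-moment terms are nonnegative by the degree bookkeeping), and let $\lambda$ decrease to $\hscsos{C}{f}$ using $\PExc{p^2}\ge 0$. The only cosmetic slip is the phrase "this maximum is attained, and so the certificate exists" — existence of the certificate follows from strong duality / the structure of the SoS cone (as cited from~\cite{OZ13}), not from attainment of the primal maximum — but this does not affect the argument.
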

\begin{proof}
As described above, for any $\lambda > \hscsos{C}{f}$, we can write $\lambda - f = g$ for $g \in
\Sigma^{(d)}_C$. Since the degree of each term in $p^2 \cdot g$ is at most $d'$, we have by the
properties of pseudoexpectation operators (of degree $d'$) that
\[
  \lambda \cdot \PExc{p^2} - \PExc{p^2 \cdot f}~=~\PExc{p^2 \cdot (\lambda - f)}~=~\PExc{p^2
  \cdot g}~\geq~0 \mper
\] 
\end{proof}

The following monotonicity claim for non-negative coefficient polynomials will come in handy in later sections. 

\begin{lemma}\label{lem:poly:sos:nnc:monotonicity}
  Let $C$ be a system of polynomial constraints containing $\{\forall \beta\in \degmindex{t},
  x^{\beta}\geq 0\}$. Then for any non-negative coefficient polynomials $f$ and $g$ of degree $t$, and
  such that $f\geq g$ (coefficient-wise, i.e. $f-g$ has non-negative coefficients), we have
  $\hscsos{C}{f}\geq \hscsos{C}{g}$. 
\end{lemma}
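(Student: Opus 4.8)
The plan is to reduce the inequality to the single claim that $\PExc{h}\ge 0$ for $h:=f-g$ and \emph{every} pseudoexpectation operator respecting $C$. Indeed, by linearity $\PExc{f}=\PExc{g}+\PExc{h}$, so once we know $\PExc{h}\ge 0$ for all $\PExc$ respecting $C$, we have $\PExc{f}\ge\PExc{g}$ for every such $\PExc$, and taking the supremum over all pseudoexpectations respecting $C$ on both sides yields $\hscsos{C}{f}\ge\hscsos{C}{g}$ directly — note that we do not even need the optima to be attained for this step.

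It remains to show $\PExc{h}\ge 0$. Since $f$ and $g$ are homogeneous of degree $t$ and $f\ge g$ coefficient-wise, the polynomial $h=f-g$ is homogeneous of degree $t$ with non-negative coefficients; write $h=\sum_{\beta\in\degmindex{t}}h_\beta\,x^\beta$ with each $h_\beta\ge 0$. For each $\beta\in\degmindex{t}$, the constraint $x^\beta\ge 0$ is one of the moment non-negativity constraints contained in $C$, so the defining property of a pseudoexpectation respecting $C$ — namely $\PExc{p^2\cdot\prod_{i\in S}g_i}\ge 0$ whenever this product has degree at most that of $\PExc$ — applied with $p\equiv 1$ and $S$ the singleton corresponding to the constraint $x^\beta\ge 0$ gives $\PExc{x^\beta}\ge 0$ (this is the only place we use $t\le d$, so that $x^\beta$ has degree within the admissible range). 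Summing over $\beta$ with the non-negative weights $h_\beta$ gives $\PExc{h}=\sum_{\beta}h_\beta\,\PExc{x^\beta}\ge 0$, which completes the argument.

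\textbf{On the main difficulty.} There is essentially no substantive obstacle here; the statement follows immediately from the axioms of constrained pseudoexpectations once one observes that the monomials of $f-g$ are exactly the objects whose non-negativity is imposed by $C$. The only points requiring a line of care are the degree bookkeeping ($t\le d$, so that every product invoked has degree at most that of the pseudoexpectation), and the remark that homogeneity of $f$ and $g$ places every monomial of $h=f-g$ in $\degmindex{t}$, matching the index set of the moment non-negativity constraints assumed in $C$ — if one instead wanted a non-homogeneous version of the lemma, one would assume $C$ contains $x^\beta\ge 0$ for all $\beta\in\udmindex{t}$ and run the identical argument.
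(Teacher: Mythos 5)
Your proof is correct and follows essentially the same route as the paper's: use linearity to reduce to $\PExc{f-g}\ge 0$ and obtain that from the moment non-negativity constraints in $C$. The only cosmetic differences are that you spell out the per-monomial application of the constraint axiom (the paper just cites "moment non-negativity and linearity") and you take suprema on both sides rather than selecting a maximizing pseudoexpectation, so you avoid any reliance on attainment — both minor refinements of the same argument.
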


\begin{proof}
  For any pseudo-expectation operator $\PExc$ respecting $C$, we have $\PExc{f-g}\geq 0$ because of
  the moment non-negativity constraints and by linearity. 
    
  So let $\PExc$ be a pseudo-expectation operator realizing $\hscsos{C}{g}$. Then we have, 
  \[
    \hscsos{C}{f}\geq \PExc{f} = \PExc{g}+\PExc{f-g} = \hscsos{C}{g} + \PExc{f-g}\geq 0.
  \] 
\end{proof}

%
%
 
 \subsection{An additional operation on folded polynomials}
We define the following operation (and it's folded counterpart) which in the case of 
a multilinear polynomial corresponds (up to scaling) to the sum of a row of the SOS 
symmetric  matrix representation of the polynomial.
This will be useful in our result for non-negative polynomials.
\begin{definition}[Collapse]
  Let $f \in \pr{d}$ be a polynomial. The $k$-\defnt{collapse} of $f$, denoted as $\collapse{k}{f}$ is
  the degree $d-k$ polynomial $g$ given by, 
  \[
    g(x) = \sum_{\gamma \in \degmindex{d-k}} g_{\gamma} \cdot x^{\gamma} \quad \text{where} \quad
    g_{\gamma} = \sum_{\alpha \in \degmindex{k}} f_{\gamma+\alpha} \mper
  \]
  For a degree-$(d_1,d_2)$ folded polynomial $f$, we define $\collapse{k}{f}$ similarly as the
  degree-$(d_1-k,d_2)$ folded polynomial $g$ given by, 
  \[
    g(x) = \sum_{\gamma \in \degmindex{d_1-k}} \fold{g}{\gamma}(x) \cdot x^{\gamma} \quad \text{where}
    \quad \fold{g}{\gamma} = \sum_{\alpha \in \degmindex{k}} \fold{f}{\gamma+\alpha} \mper
  \]
\end{definition}

\section[Intermediate Results for Degree d (general) Polynomials and NNC Polynomials]{Intermediate Results for Polynomials in $\pr{d}$ and $\npr{d}$}\label{sec:poly:results}

\subsection{Reduction to Multilinear Polynomials}\label{sec:poly:gen-to-multi}

\begin{lemma}\label{lem:poly:split:gen:mult}
  Any homogeneous $n$-variate degree-$d$ polynomial $f(x)$ has a unique representation of the form 
  \[
    \sum_{\mathclap{\alpha \in \udmindex{d / 2}}}\multif{2\alpha}(x)\cdot x^{2 \alpha}
  \] 
  where for any $\alpha \in \udmindex{d / 2}$, $\multif{2\alpha}$ is a homogeneous multilinear  
  degree-$(d-2|\alpha|)$ polynomial. 
\end{lemma}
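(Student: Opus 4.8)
The plan is to build the decomposition monomial by monomial using the parity of exponent vectors, and then to read off uniqueness from the linear independence of monomials.

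First I would set up the elementary bookkeeping. For a multi-index $\beta \in \degmindex{d}$ let $\pi(\beta) \in \udbmindex{d}$ denote its coordinatewise parity, $\pi(\beta)_i := \beta_i \bmod 2$, so that $\beta = 2\alpha_\beta + \pi(\beta)$ with $\alpha_\beta := (\beta - \pi(\beta))/2 \in \mindex$. The key point is that $x^{\pi(\beta)}$ is exactly the ``multilinear part'' of $x^\beta$, and that $|\alpha_\beta| = (d - |\pi(\beta)|)/2 \le d/2$, so $\alpha_\beta \in \udmindex{d/2}$, with $d - 2|\alpha_\beta| = |\pi(\beta)| \ge 0$ (in particular this degree is a nonnegative integer, which for odd $d$ means it is odd, never negative).

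For existence I would then write $f = \sum_{\beta \in \degmindex{d}} f_\beta\, x^\beta$ and simply group the monomials according to the value of $\alpha_\beta$, which yields
\[
  f(x) ~=~ \sum_{\alpha \in \udmindex{d/2}} x^{2\alpha} \sum_{\substack{\gamma \le \one \\ |\gamma| = d - 2|\alpha|}} f_{2\alpha + \gamma}\, x^{\gamma} ~=~ \sum_{\alpha \in \udmindex{d/2}} \multif{2\alpha}(x)\cdot x^{2\alpha} ,
\]
where $\multif{2\alpha}(x) := \sum_{\gamma \le \one,\ |\gamma| = d - 2|\alpha|} f_{2\alpha+\gamma}\, x^{\gamma}$ is, by construction, multilinear and homogeneous of degree $d - 2|\alpha|$. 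For uniqueness the crucial observation is that the map $(\alpha,\gamma) \mapsto 2\alpha + \gamma$, restricted to pairs with $\gamma \le \one$ and $|\gamma| = d-2|\alpha|$, is a bijection onto $\degmindex{d}$: from $\beta = 2\alpha+\gamma$ one recovers $\gamma = \pi(\beta)$ because $\gamma$ is a $0/1$ vector, and then $\alpha = (\beta-\gamma)/2$. Hence if $\sum_{\alpha} \multif{2\alpha}(x)\, x^{2\alpha} = 0$ for some family of multilinear homogeneous $\multif{2\alpha}$ of the stated degrees, then expanding into monomials and comparing the coefficient of each $x^\beta$ forces every coefficient of every $\multif{2\alpha}$ to vanish; applying this to the difference of two such representations gives uniqueness.

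I do not expect any genuine obstacle in this lemma. The only points that need a little care are (i) verifying that the index set is exactly $\udmindex{d/2}$ and that $d - 2|\alpha|$ is a nonnegative integer, so that each $\multif{2\alpha}$ is a legitimate homogeneous polynomial (the zero polynomial being allowed), and (ii) the single fact that the coordinatewise parity vector of $\beta$ coincides with the exponent vector of the multilinear part of $x^\beta$ — this is what simultaneously makes the grouping well-defined and makes $(\alpha,\gamma)\mapsto 2\alpha+\gamma$ invertible, and hence yields both existence and uniqueness.
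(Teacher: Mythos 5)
Your proof is correct and is the natural argument: decompose each monomial $x^\beta$ into its squarefree part $x^{\pi(\beta)}$ and its even part $x^{2\alpha_\beta}$ by coordinatewise parity, then observe that $(\alpha,\gamma)\mapsto 2\alpha+\gamma$ (with $\gamma\le\one$) is a bijection onto $\degmindex{d}$, which gives existence and uniqueness at once. The paper states this lemma without proof, treating it as immediate, and your argument is exactly the one that is being taken for granted.
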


We would like to approximate $\ftwo{f}$ by individually approximating $\ftwo{\multif{2\alpha}}$ for each
multilinear polynomial $\multif{2\alpha}$. This section will establish the soundness of this goal. 

\subsubsection{Upper Bounding $\hssos{f}$ in terms of $\hssos{\multif{2\alpha}}$}

We first bound $\hssos{f}$ in terms of $\max_{\alpha \in \udmindex{d / 2}} \hssos{\multif{2\alpha}}$.
The basic intuition is that any matrix $M_f$ such that $\inparen{\xtensor{(d/2)}}^T \cdot M_f \cdot
\xtensor{(d/2)}$ for all $x$ (called a \defnt{matrix representation} of $f$) can be written as a sum of
matrices $M_{t,f}$ for each $t\leq d/2$, each of which is block-diagonal matrix with blocks corresponding
to matrix representations of the polynomials $M_{\multif{2\alpha}}$ for each $\alpha$ with $\abs{\alpha}
= 2t$.

\begin{lemma}\label{lem:poly:gen:mult:sp}
  Consider any homogeneous $n$-variate degree-$d$ polynomial $f(x)$. We have,
  \[
    \hssos{f}~\leq~\max_{\alpha \in \udmindex{d/2}} \frac{\hssos{\multif{2\alpha}}}{|\orbit{\alpha}|}\,(1+d/2) 
  \]
\end{lemma}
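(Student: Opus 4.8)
The plan is to exhibit, for some representation, a matrix representation of $f$ with small largest eigenvalue by gluing together \emph{block-diagonal} representations of the homogeneous layers of the decomposition in \cref{lem:poly:split:gen:mult}. First I would group the terms of that decomposition by the degree of the even part: write $f = \sum_{r=0}^{d/2} g_r$ with $g_r(x) := \sum_{|\alpha| = r} \multif{2\alpha}(x)\cdot x^{2\alpha}$. Each $g_r$ (and $f$) is homogeneous of degree $d$, hence admits matrix representations in $\Re^{[n]^{d/2}\times[n]^{d/2}}$. Since $\hssos{\cdot}$ is the infimum over representations $M$ of $\sup_{\|z\|=1} z^T M z$, and summing representations of the $g_r$ yields one of $f$ while $M\mapsto\sup_{\|z\|=1}z^TMz$ is subadditive, we obtain
\[
  \hssos{f} ~\leq~ \sum_{r=0}^{d/2}\hssos{g_r} ~\leq~ (1+d/2)\cdot\max_{0\leq r\leq d/2}\hssos{g_r}\mper
\]
It then remains to prove $\hssos{g_r}\leq \max_{|\alpha|=r}\hssos{\multif{2\alpha}}/|\orbit{\alpha}|$ for each fixed $r$.

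For the layer bound, set $s := d/2 - r$ and identify $[n]^{d/2}$ with $[n]^r\times[n]^s$, so that a representation $M\in\Re^{[n]^{d/2}\times[n]^{d/2}}$ becomes a block matrix with blocks $M_{P,Q} := M[(P,\cdot),(Q,\cdot)]\in\Re^{[n]^s\times[n]^s}$ indexed by $P,Q\in[n]^r$. An elementary reindexing computation shows that if $M$ is block-diagonal, $M = \bigoplus_{P\in[n]^r} M_P$, then $(x^{\otimes d/2})^T M\, x^{\otimes d/2} = \sum_{P} x^{2\mi{P}}\cdot\bigl((x^{\otimes s})^T M_P\, x^{\otimes s}\bigr)$, and $\sup_{\|z\|=1} z^T M z = \max_P \sup_{\|w\|=1} w^T M_P w$. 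For each multi-index $\alpha$ with $|\alpha| = r$, fix a representation $N_\alpha$ of the degree-$(d-2r)$ multilinear polynomial $\multif{2\alpha}$ with $\sup_{\|w\|=1} w^T N_\alpha w \leq \hssos{\multif{2\alpha}} + \varepsilon$, and — using that exactly $|\orbit{\alpha}|$ tuples $P\in[n]^r$ have multi-index $\mi{P}=\alpha$ — set $M_P := N_{\mi{P}}/|\orbit{\mi{P}}|$ for every $P$. Then $\bigoplus_P M_P$ represents $g_r$ (for each $\alpha$ the $|\orbit{\alpha}|$ diagonal blocks with $\mi{P}=\alpha$ jointly contribute $x^{2\alpha}\cdot\multif{2\alpha}(x)$), and its top eigenvalue is $\max_{|\alpha|=r}(\hssos{\multif{2\alpha}}+\varepsilon)/|\orbit{\alpha}|$. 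Letting $\varepsilon\to 0$ gives the layer bound, and combining with the first display proves the lemma.

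The only conceptual point is the averaging step: distributing the single representation $N_\alpha$ uniformly over all $|\orbit{\alpha}|$ diagonal blocks $P$ with $\mi{P}=\alpha$ is precisely what converts $\hssos{\multif{2\alpha}}$ into $\hssos{\multif{2\alpha}}/|\orbit{\alpha}|$; placing $N_\alpha$ in a single block recovers only the weaker bound without the orbit factor. I do not expect a genuine obstacle here — the remaining work is bookkeeping: verifying the tensor-reindexing identity for block-diagonal $M$, confirming that $|\orbit{\alpha}|$ is exactly the number of $r$-tuples with multi-index $\alpha$ (so that the $|\orbit{\alpha}|$ scaled copies of $N_{\alpha}$ sum back to a representation of $\multif{2\alpha}\cdot x^{2\alpha}$), and noting that every homogeneous $g_r$ has a representation so that the subadditivity step is legitimate.
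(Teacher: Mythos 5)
Your proof is correct and takes essentially the same approach as the paper: both distribute the optimal representation of $\multif{2\alpha}$ uniformly across the $|\orbit{\alpha}|$ diagonal blocks indexed by $P$ with $\mi{P}=\alpha$, use that the top eigenvalue of a block-diagonal matrix is the max over blocks, and combine the $1+d/2$ layers via a triangle inequality. Your split into an explicit subadditivity step ($\hssos{f}\leq\sum_r\hssos{g_r}$) followed by a per-layer bound is just a reorganization of the paper's single-matrix construction $M_f=\sum_t M_{(t,f)}$; if anything, working with $\sup_{\|z\|=1}z^TMz$ throughout (as you do) is marginally cleaner than the paper's use of the spectral norm $\|M\|$, which is a slight over-relaxation of the quantity actually needed.
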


\begin{proof}
  We shall start by constructing an appropriate matrix representation $M_f$ of $f$ that will give us the
  desired upper bound on $\hssos{f}$. To this end, for any $\alpha \in \udmindex{d/2}$, let
  $M_{\multif{2\alpha}}$ be the matrix representation of $\multif{2\alpha}$ realizing
  $\hssos{\multif{2\alpha}}$. For any $0\leq t\leq d/2$, we define $M_{(t,f)}$ so that for any $\alpha
  \in \degmindex{t}$ and $I\in \orbit{\alpha}$, $M_{(t,f)}[I,I] :=
  M_{\multif{2\alpha}}/|\orbit{\alpha}|$, and $M_{(t,f)}$ is zero everywhere else. Now let $M_f :=
  \sum_{t\in [d/2]} M_{(t,f)}$. As for validity of $M_f$ as a representation of $f$ we have,
  \begin{align*}
    \iprod{M_f}{x^{\otimes d/2}{(x^{\otimes d/2})}^T}
    &= \sum_{0\leq t\leq \frac{d}{2}}\iprod{M_{(t,f)}}{x^{\otimes d/2}{(x^{\otimes d/2})}^T} \\
    &= \sum_{\alpha \in \udmindex{d/2}}\sum_{I\in \orbit{\alpha}}\iprod{M_{(|\alpha|,f)}[I,I]}
    {x^{\otimes (d/2-|\alpha|)}{(x^{\otimes (d/2-|\alpha|)})i}^T}x^{2\alpha} \\
    &= 
    \sum_{\alpha \in \udmindex{d/2}}\sum_{I\in \orbit{\alpha}}\frac{1}{|\orbit{\alpha}|}
    \iprod{M_{\multif{2\alpha}}}{x^{\otimes (d/2-|\alpha|)}{(x^{\otimes (d/2-|\alpha|)})}^T}x^{2\alpha} \\
    &= \sum_{\alpha \in \udmindex{d/2}} x^{2\alpha}\cdot \iprod{M_{\multif{2\alpha}}}
    {x^{\otimes (d/2-|\alpha|)} {(x^{\otimes (d/2-|\alpha|)})}^T}\\
    &= \sum_{\alpha \in \udmindex{d/2}} \multif{2\alpha}(x) x^{2\alpha} \\
    &= f(x)
  \end{align*}
	
  Now observe that $M_{(t,f)}$ is a block-diagonal matrix (up to simultaneous permutation of its rows
  and columns). Thus, we have $\|M_{(t,f)}\|\leq \max_{\alpha \in
  \degmindex{t}}\|M_{\multif{2\alpha}}\|/|\orbit{\alpha}|$. Thus on applying triangle inequality, we
  obtain $\|M_{f}\|\leq \max\limits_{\alpha \in \udmindex{d/2}}(1+d/2)\,\|M_{\multif{2\alpha}}\|/|\orbit{\alpha}|$
\end{proof}

\subsubsection[Lower Bounding Polynomials in terms of Multilinear Components (NNC)]{Lower Bounding
$\ftwo{f}$ in terms of $\ftwo{\multif{2\alpha}}$ (NNC)}
We first bound $\ftwo{f}$ in terms of $\max_{\alpha \in \udmindex{d / 2}} \ftwo{\multif{2\alpha}}$,
when every coefficient of $f$ is non-negative.  If $x^*$ is the optimizer of $\multif{2\alpha}$, then it
is easy to see that $x^* \geq 0$.  Setting $y =  x^* + \frac{\sqrt{\alpha}}{|\alpha|}$ ensures that $\| y
\|_2 \leq 2$ and $f(y)$ is large, since $f(y)$ recovers a significant fraction (up to a $2^{O(d)}\cdot
|\orbit{\alpha}|$ factor) of $\multif{2\alpha}(x^*)$. 

\begin{lemma}\label{lem:poly:nnc:mult:2}
  Let $f(x)$ be a homogeneous $n$-variate degree-$d$ polynomial 
  with non-negative coefficients. Consider any $\alpha\in \udmindex{d/2}$. Then 
  \[
    \ftwo{f}~\geq~\frac{\ftwo{\multif{2\alpha}}}{2^{O(d)}\,|\orbit{\alpha}|}.
  \]
\end{lemma}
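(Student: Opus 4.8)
The plan is to exhibit an explicit near-optimal test vector for $f$ built from the optimizer of $\multif{2\alpha}$, and to use non-negativity of the coefficients to rule out cancellations. First I would let $x^* \in \SSS^{n-1}$ be a unit vector realizing $\ftwo{\multif{2\alpha}} = \multif{2\alpha}(x^*)$ (WLOG the max, not $|\cdot|$, since $\multif{2\alpha}$ has non-negative coefficients, so we may also take $x^* \geq 0$ coordinatewise by replacing each $x^*_i$ with $|x^*_i|$). Set
\[
  y ~:=~ x^* + v, \qquad \text{where} \qquad v ~:=~ \frac{\sqrt{\alpha}}{\sqrt{\abs{\alpha}}} ~=~ \sum_{i \in [n]} \frac{\sqrt{\alpha_i}}{\sqrt{\abs{\alpha}}} \cdot e_i \mper
\]
Then $\norm{v}_2 = 1$, so $\norm{y}_2 \leq 2$, and since $f$ is homogeneous of degree $d$, $\ftwo{f} \geq f(y)/2^d$. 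The key step is to lower bound $f(y)$ by a single monomial-type term in its expansion. Writing $f = \sum_{\beta \in \udmindex{d/2}} \multif{2\beta}(x) \cdot x^{2\beta}$ and expanding $f(y) = f(x^*+v)$, every monomial in the expansion has a non-negative coefficient (product of a non-negative coefficient of $f$ with non-negative powers of the non-negative entries of $x^*$ and $v$), so $f(y) \geq (\text{any single term})$. I would isolate the contribution of the term $\multif{2\alpha}(x) \cdot x^{2\alpha}$ under the substitution $x \mapsto x^* + v$: among the $2^d$-ish cross terms, pick the one that assigns the $x^{2\alpha}$ part entirely to $v$ and the $\multif{2\alpha}$ part entirely to $x^*$, giving a contribution of exactly $v^{2\alpha} \cdot \multif{2\alpha}(x^*)$. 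Hence $f(y) \geq v^{2\alpha} \cdot \multif{2\alpha}(x^*) = v^{2\alpha} \cdot \ftwo{\multif{2\alpha}}$.

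It remains to bound $v^{2\alpha}$ from below. Here $v^{2\alpha} = \prod_i (\sqrt{\alpha_i}/\sqrt{\abs{\alpha}})^{2\alpha_i} = \abs{\alpha}^{-\abs{\alpha}} \prod_i \alpha_i^{\alpha_i}$. I would compare this to the multinomial coefficient $\binom{\abs{\alpha}}{\alpha} = \abs{\alpha}!/\prod_i \alpha_i!$, which equals the orbit size $\cardin{\orbit{\alpha}}$ (number of distinct permutations of the index tuple corresponding to $2\alpha$, up to the standard normalization used in the paper — and in any case $\cardin{\orbit{\alpha}} \leq \abs{\alpha}! \leq (2d)!$ since $\abs{\alpha} \leq d/2$ forces $2\abs{\alpha} \leq d$). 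By Stirling's approximation, $\abs{\alpha}^{\abs{\alpha}} / \prod_i \alpha_i^{\alpha_i} \leq e^{\abs{\alpha}} \cdot \binom{\abs{\alpha}}{\alpha} \cdot \mathrm{poly}(\abs{\alpha})$, so $v^{2\alpha} \geq e^{-\abs{\alpha}} / (\mathrm{poly}(\abs{\alpha}) \cdot \cardin{\orbit{\alpha}}) \geq 2^{-O(d)}/\cardin{\orbit{\alpha}}$. Combining,
\[
  \ftwo{f} ~\geq~ \frac{f(y)}{2^d} ~\geq~ \frac{v^{2\alpha} \cdot \ftwo{\multif{2\alpha}}}{2^d} ~\geq~ \frac{\ftwo{\multif{2\alpha}}}{2^{O(d)} \cdot \cardin{\orbit{\alpha}}} \mcom
\]
as desired.

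The main obstacle I anticipate is bookkeeping the multinomial/orbit-size comparison cleanly — in particular, making sure the normalization of $\cardin{\orbit{\alpha}}$ used in Lemma \ref{lem:poly:gen:mult:sp} matches the one appearing here, and verifying the Stirling estimate $\prod_i \alpha_i^{\alpha_i} \geq \abs{\alpha}^{\abs{\alpha}} \cdot e^{-O(\abs{\alpha})}/\cardin{\orbit{\alpha}}$ with the correct constants; this is routine but needs care. A secondary point to be careful about: one must check that $x^* \geq 0$ can indeed be assumed (true because $\multif{2\alpha}$ has non-negative coefficients, so $\multif{2\alpha}(|x^*|) \geq \multif{2\alpha}(x^*)$), which is exactly where the non-negative-coefficient hypothesis is used — without it, the single-term lower bound $f(y) \geq v^{2\alpha}\multif{2\alpha}(x^*)$ fails because of possible cancellations, which is why the general case needs the far more delicate first-moment arguments sketched in \cref{sec:poly:weak-decoupling}.
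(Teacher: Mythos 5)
Your proposal is correct and takes essentially the same route as the paper: choose $y = x^* + \sqrt{\alpha}/\sqrt{|\alpha|}$, use non-negativity to lower bound $f(y) \geq v^{2\alpha}\cdot \multif{2\alpha}(x^*)$, and then estimate $v^{2\alpha} = |\alpha|^{-|\alpha|}\prod_i \alpha_i^{\alpha_i} \geq 2^{-O(d)}/|\orbit{\alpha}|$ via Stirling. The paper packages the extraction step slightly more cleanly via coordinatewise monotonicity ($f(y^*) \geq \multif{2\alpha}(y^*)\cdot(y^*)^{2\alpha}$ with $y^* \geq v$ and $y^* \geq x^*_\alpha$ entrywise) rather than your isolate-one-cross-term phrasing, but these are the same inequality.
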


\begin{proof}
  Consider any $0\leq t\leq d/2$, and any $\alpha\in \degmindex{t}$.  Let $x^*_{\alpha} :=
  \mathrm{argmax} \ftwo{\multif{2\alpha}}$ (note $x^*_{\alpha}$ must be non-negative). Let \[y^* :=
  x^*_\alpha~+~\frac{\sqrt{\alpha}}{\sqrt{t}} \] and let $x^* := y^*/\|y^*\|$. The second term is a
  unit vector since $\norm{2}{\sqrt{\alpha}}^2 = t$.  Thus, $\|y^*\| = \Theta(1)$ since $y^*$ is the sum
  of two unit vectors. This implies $f(x^*)\geq f(y^*)/2^{O(d)}$. Now we have, 
  \begin{align*}
    f(y^*) &= \sum_{\beta \in \udmindex{d/2}} \multif{2\beta}(y^*) \cdot {(y^*)}^{2\beta}
    &&(\text{by~\cref{lem:poly:split:gen:mult}}) \\
    &\geq \multif{2\alpha}(y^*)\cdot {(y^*)}^{2\alpha} &&(\text{by non-negativity of coefficients}) \\
    &\geq \multif{2\alpha}(y^*)~\frac{1}{t^{t}} \prod_{\ell\inv\supp{\alpha}} \alpha_\ell^{\alpha_\ell}
    &&(y^* \geq \frac{\sqrt{\alpha}}{\sqrt{t}}~\mbox{entry-wise})\\
    &\geq \multif{2\alpha}(y^*)~\frac{1}{2^{O(t)}\,t!} \prod_{\ell\in \supp{\alpha}}
    \alpha_\ell^{\alpha_\ell}  \\
    &\geq \multif{2\alpha}(y^*)~\frac{\prod_{\ell\in \supp{\alpha}}\alpha_\ell!}{2^{O(t)}\,t!} \\
    &\geq \multif{2\alpha}(y^*)~\frac{1}{2^{O(t)}\,|\orbit{\alpha}|} \\
    &\geq \multif{2\alpha}(x^*)~\frac{1}{2^{O(t)}\,|\orbit{\alpha}|} &&(y^*~\text{is entry-wise at
    least}~x^*) \\
    &= \frac{\ftwo{\multif{2\alpha}}}{2^{O(t)}\,|\orbit{\alpha}|}\mper
  \end{align*}
  This completes the proof. 
\end{proof}

\begin{theorem}\label{thm:poly:sos:nnc:mult}
  Consider any homogeneous $n$-variate degree-$d$ polynomial $f(x)$ with non-negative coefficients. Then 
  \[
    \frac{ \hssos{f} }{\ftwo{f}}~\leq~2^{O(d)}\max_{\alpha \in \udmindex{d/2}}
    \frac{\hssos{\multif{2\alpha}}}{\ftwo{\multif{2\alpha}}} \mper
  \]
\end{theorem}

\begin{proof}
  Combining~\cref{lem:poly:gen:mult:sp} with~\cref{lem:poly:nnc:mult:2} yields the claim. 
\end{proof}

We will next generalize the~\cref{thm:poly:sos:nnc:mult} by proving a more general version of
the~\cref{lem:poly:nnc:mult:2}. 

\subsubsection[Lower Bounding Polynomial in terms of Multilinear Components (General Case)]{Lower Bounding
  $\ftwo{f}$ in terms of $\ftwo{\multif{2\alpha}}$ (General  Case)}
We lower bound $\ftwo{f}$ in terms of $\ftwo{\multif{2\alpha}}$ for all polynomials. 
We will first recollect and establish some polynomial identities that will be used in the proof of the
generalized version of the~\cref{lem:poly:nnc:mult:2} (\ie~\cref{lem:poly:gen:mult:2}). 

\paragraph{Polynomial Identities}
\begin{lemma}[Chebyshev's Extremal Polynomial Inequality]\label{lem:poly:chebyshev}
  Let $p(x)$ be a univariate degree-$d$ polynomial and let $c_d$ be 
  it's leading coefficient. Then we have, $\max_{x\in [0,1]} |p(x)| \geq 2|c_d|/4^{d}$.
\end{lemma}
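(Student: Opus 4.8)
The plan is to reduce the statement to the classical extremal property of Chebyshev polynomials on $[-1,1]$ via an affine change of variables, and then to prove that extremal property by the standard equioscillation (sign‑counting) argument. Note first that the case $d = 0$ is vacuous in our application (the degree of the relevant polynomial is always positive), so I henceforth take $d \ge 1$; then $c_d \ne 0$ by definition of the degree.

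\textbf{Step 1 (rescaling $[0,1] \to [-1,1]$).} I would set $\tilde p(t) \defeq p\!\left(\tfrac{1+t}{2}\right)$, a polynomial of degree $d$ in $t$. Since $t \mapsto \tfrac{1+t}{2}$ is a bijection of $[-1,1]$ onto $[0,1]$, we have $\max_{t \in [-1,1]} |\tilde p(t)| = \max_{x \in [0,1]} |p(x)|$. Expanding $x^d = (1+t)^d/2^d$, the coefficient of $t^d$ in $\tilde p$ is $c_d/2^d$, so $q(t) \defeq (2^d/c_d)\,\tilde p(t)$ is \emph{monic} of degree $d$. Hence it suffices to prove the purely classical fact that every monic degree‑$d$ polynomial $q$ has $\max_{t \in [-1,1]} |q(t)| \ge 2^{1-d}$: substituting back gives $\max_{[0,1]}|p| = \max_{[-1,1]}|\tilde p| \ge (|c_d|/2^d)\cdot 2^{1-d} = 2|c_d|/4^d$, exactly the claimed bound.

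\textbf{Step 2 (Chebyshev extremal bound).} I would recall the Chebyshev polynomial $T_d$ defined by $T_d(\cos\theta) = \cos(d\theta)$, with the standard facts (consequences of the recursion $T_{d+1} = 2xT_d - T_{d-1}$ and $\cos(k\pi) = (-1)^k$): $T_d$ has leading coefficient $2^{d-1}$, $|T_d| \le 1$ on $[-1,1]$, and $T_d(t_k) = (-1)^k$ at the $d+1$ nodes $t_k = \cos(k\pi/d)$ with $1 = t_0 > t_1 > \cdots > t_d = -1$. Suppose for contradiction that some monic $q$ of degree $d$ satisfies $\max_{[-1,1]}|q| < 2^{1-d}$, and set $r \defeq 2^{1-d}T_d - q$. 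Both $2^{1-d}T_d$ and $q$ are monic of degree $d$, so $\deg r \le d-1$. At each node, $2^{1-d}T_d(t_k) = 2^{1-d}(-1)^k$ has absolute value $2^{1-d} > |q(t_k)|$, so $r(t_k)$ has the sign of $(-1)^k$; thus $r$ changes sign on each of the $d$ intervals $(t_k, t_{k-1})$ and has at least $d$ roots, forcing $r \equiv 0$, which contradicts $r(t_0) > 0$. This establishes $\max_{[-1,1]}|q| \ge 2^{1-d}$, and with Step 1 the lemma follows.

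\textbf{Main obstacle.} There is no deep obstacle here — the result is classical. The only points that need care are the bookkeeping of the constant $2^d$ and the exponents through the affine substitution in Step 1, and the justification of the strict sign alternation of $r$ at the nodes $t_k$ in Step 2 (this is precisely where the strict hypothesis $\max|q| < 2^{1-d}$ is used, and without strictness the root count could drop). I would also briefly justify the two cited facts about $T_d$ (leading coefficient and equioscillation) rather than leaving them fully black‑box.
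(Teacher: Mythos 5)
Your proposal is correct, and it matches the standard textbook treatment of this fact. The paper itself states Lemma~\ref{lem:poly:chebyshev} as a known classical result without supplying a proof, so there is no in-paper argument to compare against; your self-contained derivation is exactly what one would fill in. The bookkeeping in Step~1 is right: under $x = (1+t)/2$, the $t^d$-coefficient of $\tilde p$ is $c_d/2^d$, and combining with the classical bound $\max_{[-1,1]}|q| \ge 2^{1-d}$ for monic $q$ yields $(|c_d|/2^d)\cdot 2^{1-d} = 2|c_d|/4^d$. The equioscillation argument in Step~2 is the standard one and is carried out correctly, including the observation that strictness of the contradiction hypothesis is what forces $r(t_k)$ to have the sign of $(-1)^k$ at all $d+1$ nodes, giving $d$ roots for a polynomial of degree at most $d-1$.
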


\begin{lemma}[\cite{HLZ10}]\label{sem:poly:decoupling}
  Let $x^1,x^2,\dots x^d \in \Re^{n}$ be arbitrary, let $\mathcal{A}\in \Re^{{[n]}^{d}}$ be a
  SoS-symmetric $d$-tensor, and let $\xi_1,\dots ,\xi_d$ be independent Rademacher random variables.
  Then 
  \[
    \Ex{\prod_{i\in [d]}\xi_i~\iprod{\mathcal{A}}{{(\xi_1 x^1 + \cdots + \xi_d x^d)}^{\otimes d}}}
    = d!\,\iprod{\mathcal{A}}{x^1\otimes \dots \otimes x^d}.
  \]
\end{lemma}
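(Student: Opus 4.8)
Here the statement to prove is the polarization/Rademacher decoupling identity
\[
\Ex{\prod_{i\in [d]}\xi_i~\iprod{\mathcal{A}}{{(\xi_1 x^1 + \cdots + \xi_d x^d)}^{\otimes d}}}
~=~ d!\,\iprod{\mathcal{A}}{x^1\otimes \dots \otimes x^d},
\]
which is classical (it is recorded in~\cite{HLZ10}). The plan is to expand the tensor power, take the expectation over the Rademacher signs monomial by monomial, identify which monomials survive, and then collapse the surviving terms using SoS-symmetry of $\mathcal{A}$.

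First I would set $X := \xi_1 x^1 + \cdots + \xi_d x^d$ and use multilinearity of the contraction to write
\[
\iprod{\mathcal{A}}{X^{\otimes d}} ~=~ \sum_{(i_1,\dots,i_d)\in [d]^d} \xi_{i_1}\cdots\xi_{i_d}\,\iprod{\mathcal{A}}{x^{i_1}\otimes\cdots\otimes x^{i_d}},
\]
so that $\prod_{i\in[d]}\xi_i \cdot \iprod{\mathcal{A}}{X^{\otimes d}}$ is the sum over tuples $(i_1,\dots,i_d)$ of $\bigl(\prod_{j\in[d]}\xi_j\bigr)\bigl(\prod_{\ell\in[d]}\xi_{i_\ell}\bigr)$ times the corresponding contraction. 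The next step is the parity bookkeeping: since the $\xi_j$ are independent Rademachers, $\Ex{\bigl(\prod_j\xi_j\bigr)\bigl(\prod_\ell\xi_{i_\ell}\bigr)}$ equals $1$ if every $j\in[d]$ occurs an even number of times in the combined multiset $[d]\cup\{i_1,\dots,i_d\}$, and $0$ otherwise. The factor $\prod_j\xi_j$ already supplies one copy of each $\xi_j$, so survival forces each $j\in[d]$ to appear an \emph{odd} number of times — hence at least once — in $\{i_1,\dots,i_d\}$; as this multiset has exactly $d$ elements drawn from $[d]$, each element appears exactly once. Thus only permutations of $[d]$ contribute, each with coefficient $1$, giving
\[
\Ex{\prod_{i\in[d]}\xi_i\,\iprod{\mathcal{A}}{X^{\otimes d}}} ~=~ \sum_{\pi\in S_d}\iprod{\mathcal{A}}{x^{\pi(1)}\otimes\cdots\otimes x^{\pi(d)}}.
\]

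Finally I would invoke SoS-symmetry: by definition $\mathcal{A}[i_1,\dots,i_d]$ depends only on the multiset $\{i_1,\dots,i_d\}$, i.e.\ $\mathcal{A}$ is invariant under permuting its $d$ coordinates. Expanding $\iprod{\mathcal{A}}{x^{\pi(1)}\otimes\cdots\otimes x^{\pi(d)}} = \sum_{(j_1,\dots,j_d)} \mathcal{A}[j_1,\dots,j_d]\,x^{\pi(1)}_{j_1}\cdots x^{\pi(d)}_{j_d}$ and reindexing the summation variables through $\pi$, this coordinate invariance shows the contraction equals $\iprod{\mathcal{A}}{x^1\otimes\cdots\otimes x^d}$ for every $\pi\in S_d$; summing over the $|S_d|=d!$ permutations yields the claim. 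I do not expect a real obstacle here — the only places requiring care are the parity argument (that "odd occurrences inside a size-$d$ multiset over $[d]$" forces a permutation) and being precise that "SoS-symmetric" for a $d$-tensor means full invariance under coordinate permutations, which is exactly what makes all $d!$ permuted contractions coincide. I would also remark in passing that running the same computation with the factor $\prod_i\xi_i$ removed instead selects tuples in which every index occurs an even number of times, which is the variant underlying~\cref{lem:poly:decoupled:lower:bound:pre}.
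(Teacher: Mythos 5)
Your proof is correct. The paper does not supply a proof of this lemma (it cites~\cite{HLZ10} directly), but your argument is the standard polarization/parity computation: expand $X^{\otimes d}$ multilinearly, observe that independence of the Rademacher signs kills every monomial except those where each $j\in[d]$ appears an even number of times in $\{j\}\cup\{i_1,\dots,i_d\}$ (forcing $(i_1,\dots,i_d)$ to be a permutation of $[d]$), and then use permutation-invariance of the SoS-symmetric tensor to collapse the $d!$ surviving contractions to a single one. One minor caveat: your closing aside that the ``no-$\prod_i\xi_i$'' variant is ``the variant underlying'' \cref{lem:poly:decoupled:lower:bound:pre} is not quite right — that lemma is a direct corollary of the identity you just proved (combine it with $\bigl|\iprod{\mathcal{A}}{X^{\otimes d}}\bigr|\le\|X\|^d\,\ftwo{f}$ and $\|X\|\le d$), not of any even-occurrence selection — but this parenthetical has no bearing on the correctness of the main argument.
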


This lemma implies:

\begin{lemma}[\cite{HLZ10}]\label{lem:poly:decoupled:lower:bound}
  Let $\mathcal{A}$ be a SoS-symmetric $d$-tensor and let $f(x):= \iprod{\mathcal{A}}{x^{\otimes d}}$. Then  
  \[
    \ftwo{f}~\geq~\frac{1}{2^{O(d)}}\max_{\|x^i\|=1}\iprod{\mathcal{A}}{x^1\otimes \dots \otimes x^d}\mper
  \]
\end{lemma}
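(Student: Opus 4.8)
The plan is to derive the bound directly from the decoupling identity of \cref{sem:poly:decoupling}. First, since the product of unit spheres $(\SSS^{n-1})^d$ is compact and $\iprod{\mathcal{A}}{x^1 \otimes \cdots \otimes x^d}$ is continuous, I would fix unit vectors $x^1, \ldots, x^d$ attaining $V := \max_{\|x^i\|_2 = 1} \iprod{\mathcal{A}}{x^1 \otimes \cdots \otimes x^d}$. Note that $V \geq 0$, since replacing $x^1$ by $-x^1$ negates the inner product; in fact $V = \max_{\|x^i\|_2 = 1} \abs{\iprod{\mathcal{A}}{x^1 \otimes \cdots \otimes x^d}}$, so it suffices to lower bound $\ftwo{f}$ by $2^{-O(d)} \cdot V$.

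Next I would invoke \cref{sem:poly:decoupling} with these $x^i$ and independent Rademacher variables $\xi_1, \ldots, \xi_d$, which gives $d! \cdot V = \mathbb{E}\left[ \prod_{i \in [d]} \xi_i \cdot \iprod{\mathcal{A}}{(\xi_1 x^1 + \cdots + \xi_d x^d)^{\otimes d}} \right]$. Since $\iprod{\mathcal{A}}{w^{\otimes d}} = f(w)$ for every $w$ and $\abs{\prod_i \xi_i} = 1$, bounding the expectation by the maximum of the integrand in absolute value yields $d! \cdot V \leq \max_{\xi \in \on^d} \abs{f(\xi_1 x^1 + \cdots + \xi_d x^d)}$.

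Then I would use homogeneity to convert this into a statement about $\ftwo{f}$. For $w = \xi_1 x^1 + \cdots + \xi_d x^d$ we have $\|w\|_2 \leq d$ by the triangle inequality (each $\|x^i\|_2 = 1$), and $f$ is homogeneous of degree $d$, so $\abs{f(w)} = \|w\|_2^d \cdot \abs{f(w / \|w\|_2)} \leq d^d \cdot \ftwo{f}$ (and the bound is trivial when $w = 0$). Combining the two displays gives $d! \cdot V \leq d^d \cdot \ftwo{f}$, i.e. $\ftwo{f} \geq (d! / d^d) \cdot V$. Finally, Stirling's bound $d! \geq (d/e)^d$ gives $d!/d^d \geq e^{-d} = 2^{-O(d)}$, which is exactly the claimed inequality.

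The argument is essentially routine once \cref{sem:poly:decoupling} is in hand; the only points requiring a little care are the homogeneity/norm-bound step and the observation that the sign-flip symmetry of $\iprod{\mathcal{A}}{x^1 \otimes \cdots \otimes x^d}$ lets us move freely between the signed maximum and the absolute-value maximum. I do not anticipate a genuine obstacle — all of the real work has already been done in establishing the decoupling identity.
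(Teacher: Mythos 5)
Your proof is correct and is exactly the derivation the paper intends: the paper states Lemma~\ref{lem:poly:decoupled:lower:bound} immediately after \cref{sem:poly:decoupling} with the one-line remark ``This lemma implies,'' and your argument — bounding the expectation by the worst-case $|f(\xi_1 x^1 + \cdots + \xi_d x^d)|$, applying homogeneity with $\|w\|_2 \le d$ from the triangle inequality, and finishing with Stirling — fills in that implication in the standard way. The sign-flip remark and the $w=0$ edge case are handled correctly, so there is no gap.
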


\begin{lemma}\label{lem:poly:complex:to:real}
  Let $f$ be an $n$-variate degree-$d$ homogeneous polynomial. Let $\ftwo{f}^c := 
  \max\limits_{\substack{z\in \Cp^{n}\\ \|z\|=1}}  |f(z)|$, then 
  \[
    \frac{\ftwo{f}^c}{2^{O(d)}} \leq \ftwo{f} \leq \ftwo{f}^c\mper
  \]
\end{lemma}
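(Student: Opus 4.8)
\textbf{Proof proposal for \cref{lem:poly:complex:to:real}.}
The plan is to reduce the complex optimum to the real one using the polarization/decoupling lemma already available to us. The inequality $\ftwo{f} \le \ftwo{f}^c$ is immediate, since every real unit vector is in particular a complex unit vector. For the reverse direction, I would first fix the unique SoS-symmetric $d$-tensor $\mathcal{A}$ with $f(w) = \iprod{\mathcal{A}}{w^{\otimes d}}$, write an arbitrary complex unit vector as $z = x + iy$ with $x,y \in \Re^n$ and $\|x\|^2 + \|y\|^2 = 1$, and expand $(x+iy)^{\otimes d}$. Using the symmetry of $\mathcal{A}$, all of the $\binom{d}{k}$ tensor products with exactly $k$ factors equal to $iy$ contribute the same value, which gives the identity
\[
  f(x+iy) ~=~ \sum_{k=0}^{d} i^{k}\binom{d}{k}\, \iprod{\mathcal{A}}{x^{\otimes (d-k)} \otimes y^{\otimes k}} \mper
\]

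Next I would bound each term. If $x = 0$ or $y = 0$ the corresponding term vanishes or reduces to $f$ evaluated at a real vector of norm $\le 1$, so assume $x,y \neq 0$. Taking $x^1 = \cdots = x^{d-k} = x/\|x\|$ and $x^{d-k+1} = \cdots = x^d = y/\|y\|$ in \cref{lem:poly:decoupled:lower:bound} yields
\[
  \bigl| \iprod{\mathcal{A}}{x^{\otimes (d-k)} \otimes y^{\otimes k}} \bigr|
  ~\le~ \|x\|^{d-k}\|y\|^{k} \cdot \max_{\|x^i\|=1} \bigl| \iprod{\mathcal{A}}{x^1 \otimes \cdots \otimes x^d} \bigr|
  ~\le~ \|x\|^{d-k}\|y\|^{k} \cdot 2^{O(d)}\, \ftwo{f} \mper
\]
Summing over $k$ via the binomial theorem and using $\|x\| + \|y\| \le \sqrt{2(\|x\|^2+\|y\|^2)} = \sqrt 2$, I get
\[
  |f(x+iy)| ~\le~ 2^{O(d)}\,\ftwo{f}\cdot (\|x\|+\|y\|)^{d} ~\le~ 2^{O(d)}\,\ftwo{f} \mper
\]
Taking the supremum over complex unit vectors $z$ then gives $\ftwo{f}^c \le 2^{O(d)}\,\ftwo{f}$, which is exactly the claimed bound $\ftwo{f}^c/2^{O(d)} \le \ftwo{f}$.

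I do not anticipate a genuine obstacle here: the argument is essentially mechanical once \cref{lem:poly:decoupled:lower:bound} is invoked. The only point meriting a little care is the passage from the partially-symmetric quantity $\iprod{\mathcal{A}}{x^{\otimes(d-k)} \otimes y^{\otimes k}}$ to the fully decoupled quantity $\max_{\|x^i\|=1}\iprod{\mathcal{A}}{x^1\otimes\cdots\otimes x^d}$ controlled by the lemma, which amounts to pulling the norms $\|x\|$ and $\|y\|$ out of the multilinear form and disposing of the degenerate cases $x = 0$ or $y = 0$ separately.
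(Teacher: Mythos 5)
Your proposal is correct and uses the same essential mechanism as the paper: expand $f(x+iy)$ into real multilinear terms in $x,y$ and invoke \cref{lem:poly:decoupled:lower:bound} to control each such term by $2^{O(d)}\ftwo{f}$. The only difference is cosmetic — you bound every term individually and sum via the binomial theorem and $\|x\|+\|y\|\le\sqrt{2}$, whereas the paper separates $f(z^*)$ into real and imaginary parts, deduces by averaging that at least one of the $2^d$ terms has magnitude $\ge \ftwo{f}^c/2^{O(d)}$, and applies the decoupling lemma to that single term; both give the same $2^{O(d)}$ factor.
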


\begin{proof}
  Let $\mathcal{A}$ be the SoS-symmetric tensor representing $f$. Let $z^* = a^* + ib^*$ be the complex
  unit vector realizing $f(z^*) = \ftwo{f}^c$. Then we have,
  \begin{align*}
    f(z^*) 
    &= \iprod{\mathcal{A}}{(z^*)^{\otimes d}} \\
    &= \iprod{\mathcal{A}}{(a^*+ib^*)^{\otimes d}} \\
    &= \sum_{c^1,\dots c^d \in \{a^*,ib^*\}} \iprod{\mathcal{A}}{\bigotimes_{j\in [d]}c^j} \\
    \Rightarrow	\real{f(z^*)}
    &=\sum_{\substack{c^1,\dots c^d \in \{a^*,b^*\}, \\|\{j|c^j = b^*\}|\%4 = 0}}
    \iprod{\mathcal{A}}{\bigotimes_{j\in [d]}c^j}-
    \sum_{\substack{c^1,\dots c^d \in \{a^*,b^*\}, \\|\{j|c^j = b^*\}|\%4 = 2}}
    \iprod{\mathcal{A}}{\bigotimes_{j\in [d]}c^j}, \\ 
    \im{f(z^*)}	&= \sum_{\substack{c^1,\dots c^d \in \{a^*,b^*\}, \\|\{j|c^j = b^*\}|\%4 = 1}}
    \iprod{\mathcal{A}}{\bigotimes_{j\in [d]}c^j} -
    \sum_{\substack{c^1,\dots c^d \in \{a^*,b^*\}, \\|\{j|c^j = b^*\}|\%4 = 3}} 
    \iprod{\mathcal{A}}{\bigotimes_{j\in [d]}c^j}
  \end{align*}
  which implies that there exists $c^1, \dots ,c^d\in \{a^*,b^*\}$ such that
  $|\iprod{\mathcal{A}}{\bigotimes_{j\in [d]}c^j}| \geq \ftwo{f}^c/2^{O(d)}$. Lastly,
  applying~\cref{lem:poly:decoupled:lower:bound} implies the claim. 
\end{proof}

\paragraph{Some Probability Facts}
\begin{lemma}\label{lem:poly:bernoulli:moments}
  Let $X_1, \ldots X_k$ be i.i.d. $\mathrm{Bernoulli}(p)$ random variables. Then for any $t_1,\ldots
  ,t_k\in \mathds{N}$,\\
  $\Ex{X_1^{t_1}\dots X_k^{t_k}} = p^k$. 
\end{lemma}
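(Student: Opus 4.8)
The plan is to exploit the single structural fact that a Bernoulli random variable is idempotent under exponentiation: since each $X_i$ takes values in $\{0,1\}$, we have $X_i^{t} = X_i$ for every integer $t \geq 1$. Here $\mathds{N}$ is understood to be the positive integers, so each exponent $t_i$ is at least $1$, and therefore $X_i^{t_i} = X_i$ holds pointwise (almost surely). This collapses the product $X_1^{t_1}\cdots X_k^{t_k}$ to the simple product $X_1 \cdots X_k$ with no dependence on the $t_i$ at all.

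From there, the only remaining ingredient is independence. I would write
\[
  \Ex{X_1^{t_1}\cdots X_k^{t_k}}~=~\Ex{X_1 \cdots X_k}~=~\prod_{i=1}^k \Ex{X_i}~=~p^k \mcom
\]
where the middle equality uses that $X_1, \ldots, X_k$ are independent (so the expectation of the product is the product of expectations) and the last equality uses $\Ex{X_i} = \Pr{X_i = 1} = p$ for a $\mathrm{Bernoulli}(p)$ variable.

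There is essentially no obstacle here; the one point worth flagging is the convention on $\mathds{N}$. If $0$ were allowed as an exponent, the identity $X_i^{0} = 1$ would drop that factor and the answer would instead be $p^{|\{i : t_i \geq 1\}|}$; the statement as written implicitly assumes all $t_i \geq 1$, which is exactly the regime in which the lemma is invoked later (the exponents arising there are $2\alpha_i + 1 \geq 1$ and $q - 2r + 1 \geq 1$). So the proof is a two-line argument: reduce $X_i^{t_i}$ to $X_i$ by idempotence, then factor by independence.
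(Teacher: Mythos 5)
Your proof is correct and is the natural one-line argument: idempotence of Bernoulli variables under positive integer powers reduces $X_i^{t_i}$ to $X_i$, and then independence factors the expectation. The paper states this lemma without proof (it is treated as routine), so there is nothing to compare against; your flag about the convention $\mathds{N} = \{1,2,\dots\}$ is apt, and indeed that is the regime in which the lemma is invoked --- in the proof of Lemma~\ref{lem:poly:gen:mult:2}, the Bernoulli factors $b_i^{\gamma_i}$ with $\gamma_i = 0$ are trivially $1$ and drop out, so the lemma is effectively applied only to the $k = |\supp(\gamma)|$ variables with exponent exactly $1$.
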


\begin{lemma}\label{lem:poly:rand:rtou:moments}
  Let $\zeta$ be a uniformly random $p$-th root of unity. Then for any $t\in [p-1]$, $\Ex{\zeta^t} = 0$.
  Also, clearly $\Ex{\zeta^p} = 1$. 
\end{lemma}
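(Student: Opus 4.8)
The plan is to invoke the standard finite geometric-series (equivalently, orthogonality-of-characters) identity for roots of unity. First I would fix a primitive $p$-th root of unity $\omega := e^{2\pi i/p}$, so that the $p$ distinct $p$-th roots of unity are exactly $\omega^0, \omega^1, \dots, \omega^{p-1}$, and $\zeta$ takes each of these values with probability $1/p$. Then directly from the definition of expectation,
\[
  \Ex{\zeta^t} ~=~ \frac{1}{p}\sum_{j=0}^{p-1} \omega^{jt} ~=~ \frac{1}{p}\sum_{j=0}^{p-1} {\inparen{\omega^t}}^j \mper
\]

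For $t \in [p-1]$, i.e. $1 \le t \le p-1$, we have $p \nmid t$, and since $\omega$ is primitive this forces $\omega^t \neq 1$. Hence the geometric sum collapses: $\sum_{j=0}^{p-1} {(\omega^t)}^j = \frac{{(\omega^t)}^p - 1}{\omega^t - 1}$, and the numerator vanishes because ${(\omega^t)}^p = {(\omega^p)}^t = 1$. Therefore $\Ex{\zeta^t} = 0$. For $t = p$, every $p$-th root of unity satisfies $\zeta^p = 1$ by definition, so $\Ex{\zeta^p} = 1$ trivially. There is no genuine obstacle in this argument; the only point requiring any care is the observation that $\omega^t \ne 1$, which is exactly where the hypothesis $t \in [p-1]$ is used (and the identity indeed fails at $t = 0$ and $t = p$, consistent with the stated exceptions). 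This lemma, together with \cref{lem:poly:bernoulli:moments}, supplies the moment computations needed when averaging over random roots of unity in the weak decoupling arguments.
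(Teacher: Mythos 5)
Your proof is correct and uses the standard geometric-series (orthogonality of characters) identity, which is exactly the expected argument; the paper itself states this lemma as a basic fact without supplying a proof. The one point worth emphasizing — that the hypothesis $1 \le t \le p-1$ is used precisely to guarantee $\omega^t \ne 1$ so the geometric sum formula applies and its numerator vanishes — is spelled out clearly in your write-up, so nothing is missing.
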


We finally lower bound $\ftwo{f}$ in terms of $\multif{2\alpha}$.  Fix $\alpha \in \udmindex{d/2}$ and,
let $x^*$ be the optimizer of $\multif{2\alpha}$.  Setting \,$y =  x^* +
\frac{\sqrt{\alpha}}{|\alpha|}$\, as in the non-negative coefficient case does not work since terms from
$\multif{2\beta}$ may be negative.  We bypass this issue by first lower bounding $\ftwo{f}^c$ in terms
of $\multif{2\alpha}$ and using~\cref{lem:poly:complex:to:real}.  For $\ftwo{f}^c$, we use random roots of
unity and Bernoulli random variables, together with~\cref{lem:poly:chebyshev}, to extract nonzero contribution
only from the monomials that are multiples of $x^{\alpha}$ times multilinear parts. 

\begin{lemma}\label{lem:poly:gen:mult:2}
  Let $f(x)$ be a homogeneous $n$-variate degree-$d$ polynomial. Then for any $\alpha\in \udmindex{d/2}$, 
  \[
    \ftwo{f}~\geq~\frac{\ftwo{\multif{2\alpha}}}{2^{O(d)}\,|\orbit{\alpha}|}\mper
  \]
\end{lemma}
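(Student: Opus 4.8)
The plan is to follow the two-special-case strategy sketched in \cref{section:poly:overview}. Since \cref{lem:poly:complex:to:real} gives $\ftwo{f} \ge 2^{-O(d)}\,\ftwo{f}^c$, where $\ftwo{f}^c := \sup_{z \in \CC^n,\,\|z\|=1}|f(z)|$, it suffices to prove $\ftwo{f}^c \ge \ftwo{\multif{2\alpha}}/(2^{O(d)}\,\cardin{\orbit{\alpha}})$. Fix $\alpha \in \udmindex{d/2}$ with $r := |\alpha|$; if $\multif{2\alpha} \equiv 0$ there is nothing to prove, so let $x^* \in \SSS^{n-1}$ realize $|\multif{2\alpha}(x^*)| = \ftwo{\multif{2\alpha}}$ and set $y := \sqrt{\alpha}/\sqrt{r}$, a unit vector supported on $\supp{\alpha}$ (when $r = 0$ we read $y^{2\alpha} = 1$ and the construction below degenerates to the pure Bernoulli--Chebyshev argument of the $\alpha = 0^n$ case, with $\cardin{\orbit{0^n}} = 1$).

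For a parameter $p \in [0,1]$, draw independently: roots of unity $\omega_1, \dots, \omega_n$ with $\omega_i$ uniform over the $(2\alpha_i+1)$-th roots of unity (so $\omega_i \equiv 1$ when $\alpha_i = 0$); a uniform $(d-2r+1)$-th root of unity $\Xi$; and a mask $B \in \{0,1\}^n$ with i.i.d.\ $\mathrm{Bernoulli}(p)$ coordinates. Set $Z := \omega \circ y + \Xi \cdot (B \circ x^*)$, so that $\|Z\| \le \|y\| + \|x^*\| = 2$ and hence $|f(Z)| \le 2^d\,\ftwo{f}^c$ by homogeneity. The core computation is to evaluate $P(p) := \Ex{\prod_{i} \omega_i \cdot \Xi \cdot f(Z)}$. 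Expanding $f(Z)$ and using \cref{lem:poly:rand:rtou:moments} on $\prod_i \omega_i$ forces, coordinate by coordinate, the $y$-degree in coordinate $i$ to be exactly $2\alpha_i$ (the divisibility constraint placed on the exponent by $\omega_i$, together with the total degree $d$, rules out larger multiples), while \cref{lem:poly:rand:rtou:moments} applied to $\Xi$ forces the residual $x^*$-degree to be exactly $d - 2r$; what survives are precisely the monomials of $\multif{2\alpha}(x)\cdot x^{2\alpha}$. By \cref{lem:poly:bernoulli:moments}, $P$ is then a univariate polynomial in $p$ of degree at most $d - 2r$ whose $p^{d-2r}$-coefficient equals $y^{2\alpha}\,\multif{2\alpha}(x^*)$ up to a per-monomial factor $\prod_{i:\gamma_i=1}(2\alpha_i+1)$ of binomial coefficients, each such product being $2^{O(d)}$ by AM--GM (since the relevant exponents sum to at most $d$); a further averaging, or an application of the polarization lemma \cref{lem:poly:decoupled:lower:bound} to replace $\multif{2\alpha}(x^*)$ by a genuinely decoupled multilinear form, removes the reweighting ambiguity in the signed case and yields a clean bound $|P(p)|$-top-coefficient $\ge 2^{-O(d)}\,y^{2\alpha}\,\ftwo{\multif{2\alpha}}$. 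Applying \cref{lem:poly:chebyshev} to $P$ now produces some $p$ with $|P(p)| \ge 2^{-O(d)}\,y^{2\alpha}\,\ftwo{\multif{2\alpha}}$; since $|P(p)| \le \Ex{|f(Z)|} \le 2^d\,\ftwo{f}^c$, we conclude $\ftwo{f}^c \ge 2^{-O(d)}\,y^{2\alpha}\,\ftwo{\multif{2\alpha}}$.

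Finally, $y^{2\alpha} = r^{-r}\prod_{\ell \in \supp{\alpha}}\alpha_\ell^{\alpha_\ell}$, and the elementary chain of inequalities already carried out in the proof of \cref{lem:poly:nnc:mult:2} gives $y^{2\alpha} \ge 1/(2^{O(r)}\,\cardin{\orbit{\alpha}})$. Combining this with the previous paragraph and invoking \cref{lem:poly:complex:to:real} once more yields $\ftwo{f} \ge \ftwo{\multif{2\alpha}}/(2^{O(d)}\,\cardin{\orbit{\alpha}})$, as claimed.

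I expect the main obstacle to be the core computation for $P(p)$. One must verify that monomials $x^{2\beta+\gamma}$ of $f$ with $\beta \neq \alpha$ genuinely vanish in expectation — when some $\beta_i > \alpha_i$, this rests on combining the root-of-unity divisibility constraint from $\omega_i$, the global $\Xi$-constraint, and the Bernoulli degree count, rather than on any single clean fact — and that the $p^{d-2r}$-coefficient is a $2^{O(d)}$-bounded multiple of $\ftwo{\multif{2\alpha}}$ with no sign-induced cancellations. The two pure cases $r = 0$ (Bernoulli and Chebyshev alone) and $g = g_r$ (roots of unity alone) are each sketched in \cref{section:poly:overview}; the content of the lemma is the careful amalgamation of these two bookkeeping arguments, and making the binomial-coefficient corrections interact cleanly with the reduction to a decoupled multilinear form is the technical crux.
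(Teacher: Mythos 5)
Your overall strategy matches the paper's: pass to the complex norm via \cref{lem:poly:complex:to:real}, set up a random vector built from roots of unity and a Bernoulli mask, extract the relevant monomials using \cref{lem:poly:rand:rtou:moments} and \cref{lem:poly:bernoulli:moments}, and apply Chebyshev's extremal inequality. But there is one concrete discrepancy between your $Z$ and the paper's $z$ that opens a genuine gap, and you have correctly sniffed out where it lives.

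The issue is the per-monomial factor $\prod_{i:\gamma_i=1}(2\alpha_i+1)$ coming from the binomial coefficient $\binom{2\alpha_i+\gamma_i}{\gamma_i} = 2\alpha_i+1$ when $\gamma_i=1$. Because this factor depends on $\gamma$, the top ($p^{d-2|\alpha|}$-)coefficient of your $P(p)$ is not $y^{2\alpha}\cdot\multif{2\alpha}(x^*)$ up to a uniform scalar, but rather $y^{2\alpha}\cdot\sum_{\gamma}\bigl(\prod_{i:\gamma_i=1}(2\alpha_i+1)\bigr)\,f_{2\alpha+\gamma}\,(x^*)^\gamma$ — a \emph{reweighted} evaluation of $\multif{2\alpha}$. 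For polynomials with arbitrary signs this reweighted sum can be much smaller than, or even identically zero compared to, $\multif{2\alpha}(x^*)$: the bound $\prod_{i:\gamma_i=1}(2\alpha_i+1)\le 2^{O(d)}$ is true but irrelevant, since the problem is $\gamma$-dependence causing cancellation, not magnitude. You flag this (``removes the reweighting ambiguity in the signed case'') and propose ``further averaging'' or the polarization lemma \cref{lem:poly:decoupled:lower:bound} as a fix, but neither is justified and neither works as stated: polarization decouples the $d$ factors of each monomial into distinct variable copies, but it does nothing to equalize the $\gamma$-dependent weights, so the reweighted evaluation remains reweighted after decoupling.

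The paper's actual fix is small but essential: in its $z$, the $\bx$-part of coordinate $i$ carries a factor $1/(2\alpha_i+1)$, so $z_i = \Xi\cdot b_i\cdot\bx_i/(2\alpha_i+1) + \zeta_i\sqrt{\alpha_i}/\sqrt{t}$. This normalization exactly cancels the binomial coefficient $\binom{2\alpha_i+1}{1}=2\alpha_i+1$, so every $\gamma$ receives weight $1$ and the $p^k$-coefficient of the paper's $P(p)$ is cleanly $\multif{2\alpha}(\bx)\cdot\prod_{i\in\supp{\alpha}}\alpha_i^{\alpha_i}/t^{\alpha_i}$, with no $\gamma$-dependent correction and no cancellation risk. (The normalization also keeps $\|z\|=O(1)$, since $1/(2\alpha_i+1)\le 1$.) Everything else in your proposal — the passage to $\ftwo{f}^c$, the separate handling of the $r=0$ case, the final bound $y^{2\alpha}\ge 2^{-O(d)}/\cardin{\orbit{\alpha}}$ — matches the paper; the missing ingredient is the coordinatewise $1/(2\alpha+\one)$ rescaling of the $x^*$-part of the random vector.
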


\begin{proof}
  Fix any $\alpha\in \udmindex{d/2}$, let $t:=|\alpha|$ and let $k:= d-2t$.  For any $i\in [n]$, let
  $\zeta_i$ be an independent and uniformly randomly chosen $(2\alpha_{i}+1)$-th root of unity, and let
  $\Xi$ be an independent and uniformly randomly chosen $(k+1)$-th root of unity. 
		
  Let $\bx := \mathrm{argmax} \ftwo{\multif{2\alpha}}$. Let $p\in [0,1]$ be a parameter to be fixed later,
  let $b_1,\dots ,b_n$ be i.i.d.  $\mathrm{Bernoulli}(p)$ random variables, let $\zeta:= (\zeta_1,\dots
  ,\zeta_n),~b:= (b_1,\dots ,b_n)$ and finally let  
  \[
    z~:=~\Xi\cdot b\circ \frac{1}{2\alpha + \one}\circ \bx~+~\frac{\zeta\circ
    \sqrt{\alpha}}{\sqrt{t}}\mper
  \] 
  Since $\sum_{\ell\in \supp{\alpha}} \alpha_\ell = t$ and roots of unity have 
  magnitude one, $z$ has length $O(1)$. Now consider any fixed $\gamma\in \degbmindex{k}$. 
  We have, 
  \begin{align*}\label{eq:poly:monomial:exp}	    
    &~\Ex{z^{2\alpha + \gamma}\cdot \Xi\cdot \prod_{i\in [n]} \zeta_{i}} \\
    &=\text{coefficient of~}\Xi^{k}\cdot \prod_{i\in [n]} \zeta_{i}^{2\alpha_i}
    \text{~in~}\Ex{z^{2\alpha+\gamma}}\qquad\qquad\text{(by~\cref{lem:poly:rand:rtou:moments})} \\
    &=\text{coefficient of~}\Xi^{k}\cdot \prod_{i\in [n]} \zeta_{i}^{2\alpha_i}\text{~in~}
    \Ex{\prod_{i\in [n]} \inparen{\zeta_i\cdot \frac{\sqrt{\alpha_i}}{\sqrt{t}}
    + \Xi\cdot \frac{b_i\cdot \bx_i}{2\alpha_i+1}}^{2\alpha_i+\gamma_i}} \\
    &=\prod_{i\in [n]}
    \text{~coefficient of~}\Xi^{\gamma_i}\cdot \zeta_{i}^{2\alpha_i} 
    \text{~in~}\Ex{\inparen{\zeta_i\cdot \frac{\sqrt{\alpha_i}}{\sqrt{t}}
    + \Xi\cdot \frac{b_i\cdot \bx_i}{2\alpha_i+1}}^{2\alpha_i+\gamma_i}}
    \qquad (\text{since}~\gamma\in \degbmindex{k}) \\
    &= p^{k}\cdot \prod_{\mathclap{i\in \supp{\alpha}}} \,\frac{\alpha_i^{\alpha_i}}{t^{\alpha_i}} \cdot 
    \bx_i^{\gamma_i} \qquad (\text{by~\cref{lem:poly:bernoulli:moments}}) \\
    &= p^{k}\cdot \bx^{\gamma} \cdot \prod_{\mathclap{i\in \supp{\alpha}}} \,\frac{\alpha_i^{\alpha_i}}{t^{\alpha_i}} 
  \end{align*}
  Thus we have, 
  \begin{align*}
    &~\Ex{f(z)\cdot \Xi\cdot 
    \prod_{i\in [n]} \zeta_{i}} \\
    &=~\sum_{\mathclap{\beta \in \degmindex{d}}} f_\beta \cdot\Ex{z^{\beta}\cdot \Xi\cdot 
    \prod_{i\in [n]} \zeta_{i}} \\ 
    &=~\sum_{\mathclap{\substack{\beta \in \degmindex{d} \\ \beta \geq 2\alpha}}}\,f_\beta \cdot
    \Ex{z^{\beta}\cdot \Xi\cdot \prod_{i\in [n]} \zeta_{i}} 
    &&(\text{by~\ref{lem:poly:rand:rtou:moments}}) \\
    &=~\sum_{\mathclap{\gamma \in \degbmindex{k}}}\,f_{2\alpha +\gamma} \cdot 
    \Ex{z^{2\alpha +\gamma}\cdot \Xi\cdot\prod_{i\in [n]} \zeta_{i}}~+~\sum_{\mathclap{\substack{
    \gamma \in \degmindex{k} \\ \gamma \not\leq \one}}}\,f_{2\alpha +\gamma} \cdot 
    \Ex{z^{2\alpha +\gamma}\cdot \Xi\cdot\prod_{i\in [n]} \zeta_{i}} \\
    &=~ 
		\sum_{
		\mathclap{
		\gamma \in \degbmindex{k} 
		}
		}
		\,f_{2\alpha +\gamma} \cdot 
		\Ex{z^{2\alpha +\gamma}\cdot \Xi\cdot 
	    \prod_{i\in [n]} \zeta_{i}} 
		~+~ 
		r(p) 
		&&\text{(by~\cref{lem:poly:bernoulli:moments})} \\
		& \text{where } r(p) \text{ is some univariate polynomial in $p$, s.t. } \mathrm{deg}(r)<k \\
		&=~ 
		\sum_{
		\mathclap{
		\gamma \in \degbmindex{k} 
		}
		}
		\,f_{2\alpha +\gamma} \cdot 
		p^{k}\cdot \bx^{\gamma} \cdot 
	    \prod_{\mathclap{i\in \supp{\alpha}}} \,\frac{\alpha_i^{\alpha_i}}{t^{\alpha_i}} 
	    ~~+~~ 
	    r(p)\\
	    &=~p^{k}\cdot \multif{2\alpha}(\bx)\cdot 
	    \prod_{\mathclap{i\in \supp{\alpha}}} \,\frac{\alpha_i^{\alpha_i}}{t^{\alpha_i}} 
	    +r(p)&&(\text{where } \mathrm{deg}(r)<k)
	\end{align*}
	
	Lastly we have, 
	\begin{align*}
		\ftwo{f}
		&\geq 
		\ftwo{f}^c\cdot 2^{-O(d)} 
		&&\text{by~\cref{lem:poly:complex:to:real}} \\
		&\geq 
		\max_{p\in [0,1]}
		\Ex{|f(z)|}\cdot 2^{-O(d)} 
		&&(\|z\|=O(1)) \\
		&= 
		\max_{p\in [0,1]}
		\Ex{\cardin{f(z)\cdot \Xi\cdot 
	    \prod_{i\in [n]} \zeta_{i}}}\cdot 2^{-O(d)}  \\
		&\geq 
		\max_{p\in [0,1]}
		\cardin{
		\Ex{f(z)\cdot \Xi\cdot 
	    \prod_{i\in [n]} \zeta_{i}}
		}
		\cdot 2^{-O(d)} \\
		&\geq 
		|\multif{2\alpha}(\bx)|\cdot 
	    \prod_{\mathclap{i\in \supp{\alpha}}} \,\frac{\alpha_i^{\alpha_i}}{t^{\alpha_i}} 
		\cdot 2^{-O(d)} 
		&&(\text{by Chebyshev:~\cref{lem:poly:chebyshev}}) \\
		&= 
		\ftwo{\multif{2\alpha}} \cdot 
	    \prod_{\mathclap{i\in \supp{\alpha}}} \,\frac{\alpha_i^{\alpha_i}}{t^{\alpha_i}} 
		\cdot 2^{-O(d)} \\
		&\geq 
		\frac{\ftwo{\multif{2\alpha}}}{|\orbit{\alpha}|}\cdot 2^{-O(d)} 
	\end{align*}
	This completes the proof. 
\end{proof}

In fact, the proof of~\cref{lem:poly:gen:mult:2} yields a more general result: 
\begin{lemma}[Weak Decoupling]\label{lem:poly:weak:decoupling}
    Let $f(x)$ be a homogeneous $n$-variate degree-$d$ polynomial.  Then for any $\alpha\in
    \udmindex{d/2}$ and any unit vector $y$,
    \[
      \ftwo{f}~\geq~y^{2\alpha}\cdot \ftwo{\multif{2\alpha}}\cdot 2^{-O(d)}\mper
    \]
\end{lemma}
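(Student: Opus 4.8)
The statement is the general form of \cref{lem:poly:gen:mult:2}: that lemma is the special case $y=\sqrt{\alpha}/\sqrt{t}$ (with $t=|\alpha|$), chosen there only so that the resulting factor $\prod_{i\in\supp{\alpha}}\alpha_i^{\alpha_i}/t^{\alpha_i}$ could be bounded below by $1/\cardin{\orbit{\alpha}}\cdot 2^{-O(d)}$. So the plan is to re-run the proof of \cref{lem:poly:gen:mult:2} verbatim, keeping the second summand of the random test vector symbolic as $\zeta\circ y$, and to check that the only properties of $y$ used are $\|y\|=1$ and $y^{2\alpha}\ge 0$.

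\textbf{Key steps.} Set $t:=|\alpha|$, $k:=d-2t$, and let $\bx$ be a unit vector with $\cardin{\multif{2\alpha}(\bx)}=\ftwo{\multif{2\alpha}}$. For each $i\in[n]$ let $\zeta_i$ be an independent uniformly random $(2\alpha_i+1)$-th root of unity, let $\Xi$ be an independent uniformly random $(k+1)$-th root of unity, and let $b_1,\dots,b_n$ be i.i.d.\ $\mathrm{Bernoulli}(p)$ for a parameter $p\in[0,1]$. Define the complex random vector
\[
  z~:=~\Xi\cdot b\circ \frac{1}{2\alpha+\one}\circ \bx~+~\zeta\circ y\mcom
\]
and observe $\|z\|\le \|\zeta\circ y\|+\|\Xi\cdot b\circ \tfrac{1}{2\alpha+\one}\circ\bx\|\le \|y\|+\|\bx\|=2$, since each $\zeta_i,\Xi$ have modulus $1$ and each $b_i\in\{0,1\}$. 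Next, exactly as in \cref{lem:poly:gen:mult:2}, extract the coefficient of $\Xi^{k}\prod_i\zeta_i^{2\alpha_i}$ using \cref{lem:poly:rand:rtou:moments} and evaluate the Bernoulli moments via \cref{lem:poly:bernoulli:moments}: for $\gamma\in\degbmindex{k}$ one gets $\Ex{z^{2\alpha+\gamma}\cdot \Xi\cdot\prod_{i}\zeta_i}=p^{k}\cdot y^{2\alpha}\cdot \bx^{\gamma}$ (the factor $\prod_{i\in\supp{\alpha}}\alpha_i^{\alpha_i}/t^{\alpha_i}$ of the special case is now simply $y^{2\alpha}$); for $\beta=2\alpha+\gamma$ with $\gamma\not\le\one$ the same extraction yields a monomial in $p$ of degree $<k$ (strictly fewer than $k$ coordinates carry a $b$-factor); and for $\beta\not\ge2\alpha$ the root-of-unity averaging kills the term. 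Summing over the coefficients of $f$ and invoking \cref{lem:poly:split:gen:mult} to identify $\sum_{\gamma\le\one,\,|\gamma|=k}f_{2\alpha+\gamma}\,\bx^{\gamma}=\multif{2\alpha}(\bx)$ gives
\[
  \Ex{f(z)\cdot\Xi\cdot\prod_{i\in[n]}\zeta_i}~=~p^{k}\cdot y^{2\alpha}\cdot\multif{2\alpha}(\bx)~+~r(p)\mcom
\]
for some univariate polynomial $r$ with $\deg r<k$.

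\textbf{Conclusion.} Chain the inequalities of \cref{lem:poly:gen:mult:2}: by \cref{lem:poly:complex:to:real} we have $\ftwo{f}\ge 2^{-O(d)}\ftwo{f}^c$; since $\|z\|=O(1)$ and $\cardin{\Xi\prod_i\zeta_i}=1$, $\ftwo{f}^c\ge 2^{-O(d)}\max_{p\in[0,1]}\Ex{\cardin{f(z)}}=2^{-O(d)}\max_{p\in[0,1]}\Ex{\cardin{f(z)\cdot\Xi\cdot\prod_i\zeta_i}}\ge 2^{-O(d)}\max_{p\in[0,1]}\cardin{\Ex{f(z)\cdot\Xi\cdot\prod_i\zeta_i}}$; and Chebyshev's extremal polynomial inequality (\cref{lem:poly:chebyshev}) applied to the degree-$k$ polynomial above lower-bounds the last quantity by $2\cdot 4^{-k}\cdot\cardin{y^{2\alpha}\cdot\multif{2\alpha}(\bx)}$. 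Using $y^{2\alpha}\ge 0$, $4^{k}\le 4^{d}=2^{O(d)}$, and $\cardin{\multif{2\alpha}(\bx)}=\ftwo{\multif{2\alpha}}$, this gives $\ftwo{f}\ge y^{2\alpha}\cdot\ftwo{\multif{2\alpha}}\cdot2^{-O(d)}$.

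\textbf{Main obstacle.} There is no genuinely new difficulty; the only thing to verify carefully is the monomial-moment bookkeeping — that with an arbitrary unit vector $y$ in place of $\sqrt{\alpha}/\sqrt{t}$, the combined root-of-unity and Bernoulli averaging still isolates precisely the multilinear part $\multif{2\alpha}$ as the coefficient of $p^{k}$, relegating every non-multilinear ($\gamma\not\le\one$) and out-of-support ($\beta\not\ge2\alpha$) contribution to the sub-leading remainder $r(p)$ or to zero. This is structurally identical to the corresponding step of \cref{lem:poly:gen:mult:2}, which never used the specific form of the second summand beyond its unit norm, so one simply copies that analysis.
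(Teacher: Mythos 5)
Your proof is correct and takes exactly the route the paper intends: the paper states \cref{lem:poly:weak:decoupling} with no separate proof, remarking only that the argument of \cref{lem:poly:gen:mult:2} "yields a more general result," and you have carried out that generalization by re-running the moment extraction with $z := \Xi\cdot b\circ\frac{1}{2\alpha+\one}\circ\bx + \zeta\circ y$ in place of the specific choice $y=\sqrt{\alpha}/\sqrt{t}$, correctly tracking that the coefficient of $p^k$ becomes $y^{2\alpha}\cdot\multif{2\alpha}(\bx)$ and that only $\|y\|_2=1$ and $y^{2\alpha}\ge 0$ are used downstream. The one point worth making explicit (which you wave at but do not spell out) is that with arbitrary $y$ the coordinates $i$ with $\alpha_i=0$ may now have $y_i\ne 0$, so the per-coordinate factorization of the $\Xi$-coefficient is no longer automatic from the $\zeta_i$-constraints alone and instead needs the total-degree constraint $1+\sum_i m_i\equiv 0\pmod{k+1}$ to force $m_i=\gamma_i$ for every $i$ — but this does hold, so the conclusion stands.
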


We are finally able to establish the multilinear reduction result that is the focus of this section. 

\begin{theorem}\label{thm:poly:sos:gen:mult}
  Let $f(x)$ be a homogeneous $n$-variate degree-$d$ (for even $d$) polynomial. Then 
  \[
    \frac{\hssos{f}}{\ftwo{f}}~\leq~2^{O(d)}\max_{\alpha \in \udmindex{d/2}} 
    \frac{\hssos{\multif{2\alpha}}}{\ftwo{\multif{2\alpha}}} \mper
  \]
\end{theorem}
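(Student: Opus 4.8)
The plan is simply to combine the two one-sided bounds already in hand: the upper bound on $\hssos{f}$ from \cref{lem:poly:gen:mult:sp} and the lower bound on $\ftwo{f}$ from \cref{lem:poly:gen:mult:2}. This mirrors the proof of \cref{thm:poly:sos:nnc:mult}, but now the non-negativity hypothesis is dropped: it was used only in \cref{lem:poly:nnc:mult:2}, whose role is played here by the fully general weak decoupling result \cref{lem:poly:gen:mult:2}.

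Concretely, I would first let $\alpha^* \in \udmindex{d/2}$ be the index attaining the maximum on the right-hand side of \cref{lem:poly:gen:mult:sp}, so that $\hssos{f} \le (1+d/2)\,\hssos{\multif{2\alpha^*}}/\abs{\orbit{\alpha^*}}$. Then I would invoke \cref{lem:poly:gen:mult:2} with this \emph{same} $\alpha^*$ to obtain $\ftwo{f} \ge \ftwo{\multif{2\alpha^*}}/\inparen{2^{O(d)}\abs{\orbit{\alpha^*}}}$. Dividing the first inequality by the second, the two occurrences of $\abs{\orbit{\alpha^*}}$ cancel, leaving
\[
\frac{\hssos{f}}{\ftwo{f}} \le (1+d/2)\cdot 2^{O(d)} \cdot \frac{\hssos{\multif{2\alpha^*}}}{\ftwo{\multif{2\alpha^*}}} \le 2^{O(d)} \max_{\alpha \in \udmindex{d/2}} \frac{\hssos{\multif{2\alpha}}}{\ftwo{\multif{2\alpha}}},
\]
where the last step absorbs the polynomial factor $(1+d/2)$ into $2^{O(d)}$ and bounds the ratio at $\alpha^*$ by the maximum over all $\alpha$.

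There is essentially no obstacle in this final assembly; the substantive content lives in \cref{lem:poly:gen:mult:sp} (the block-diagonal matrix-representation argument) and \cref{lem:poly:gen:mult:2} (complexification via \cref{lem:poly:complex:to:real}, random roots of unity, and Chebyshev's extremal polynomial inequality). The one point that must be handled with care is that the orbit-size normalization $\abs{\orbit{\alpha}}$ sits in the denominator of \emph{both} bounds for the same $\alpha$, so these factors have to cancel exactly. This is precisely why \cref{lem:poly:gen:mult:sp} is stated with the $\abs{\orbit{\alpha}}$ denominator (rather than in the weaker form $\hssos{f} \le (1+d/2)\max_\alpha \hssos{\multif{2\alpha}}$) and why \cref{lem:poly:gen:mult:2} carries the matching $\abs{\orbit{\alpha}}$ loss — were these normalizations mismatched, one would be left with an extra factor of $\abs{\orbit{\alpha}}$ which can be as large as $d^{\Omega(d)}$, destroying the $2^{O(d)}$ guarantee.
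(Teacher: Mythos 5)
Your proposal is correct and matches the paper's proof, which is stated tersely as "Combining \cref{lem:poly:gen:mult:sp} with \cref{lem:poly:gen:mult:2} yields the claim." Your expanded account — fixing the maximizer $\alpha^*$, applying both lemmas at that same index so the $\abs{\orbit{\alpha^*}}$ factors cancel, and absorbing $(1+d/2)$ into $2^{O(d)}$ — is precisely how the intended combination works, and your remark on why the orbit normalizations must match is the right thing to be careful about.
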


\begin{proof}
  Combining~\cref{lem:poly:gen:mult:sp} with~\cref{lem:poly:gen:mult:2} yields the claim. 
\end{proof}

\subsection{${(n/q)}^{d/4}$-Approximation for Non-negative Coefficient Polynomials}

\begin{theorem}\label{thm:poly:nnc:mult}
  Consider any homogeneous multilinear $n$-variate degree-$d$ 
  polynomial $f(x)$ with non-negative coefficients. We have, 
  \[
    \frac{\hssos{f}}{\ftwo{f}}~\leq~2^{O(d)}~\frac{n^{d/4}}{d^{d/4}}\mper
  \]
\end{theorem}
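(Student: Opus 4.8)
The plan is to sandwich both $\hssos{f}$ and $\ftwo{f}$ between multiples of one combinatorial parameter of $f$. For $S\subseteq[n]$ with $|S|=d/2$, let $\ell(S):=\sum_{T}f_{S\cup T}$ be the sum of the coefficients of all multilinear monomials of $f$ whose support contains $S$, the sum ranging over $(d/2)$-subsets $T$ of $[n]\setminus S$; set $W:=\max_{|S|=d/2}\ell(S)$. I would establish
\[
\hssos{f}~\le~\tfrac{(d/2)!}{d!}\,W \qquad\text{and}\qquad \ftwo{f}~\ge~2^{-d/4}\,(dn)^{-d/4}\,W ,
\]
and then combine them, using $(d/2)!/d!\le(2/d)^{d/2}$ (each of the $d/2$ factors of $d!/(d/2)!$ exceeds $d/2$), to conclude $\hssos{f}/\ftwo{f}\le 2^{3d/4}(n/d)^{d/4}=2^{O(d)}(n/d)^{d/4}$. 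Throughout I would use that for a polynomial with non-negative coefficients $\ftwo{f}=\fmax{f}=\sup_{\|x\|=1}f(x)$ and that the supremum is attained at some $x\ge0$.

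\textbf{Upper bound.} Let $\sfM_f$ be the SoS-symmetric matrix representation of $f$, so that $\hssos{f}\le\sup_{\|z\|=1}z^{T}\sfM_f z=\lambda_{\max}(\sfM_f)$. Since $f$ is multilinear, $\sfM_f[I,J]=0$ unless $I$ and $J$ are $(d/2)$-tuples of \emph{distinct} indices with disjoint supports, in which case $\sfM_f[I,J]=f_{\alpha(I)+\alpha(J)}/d!$, because a fixed multilinear degree-$d$ monomial arises from exactly $d!$ ordered pairs $(I,J)$. In particular $\sfM_f$ has zero diagonal and non-negative entries. For a row indexed by $I$ whose coordinates form a set $S$, the row sum is $\sum_{J}\sfM_f[I,J]=\tfrac{(d/2)!}{d!}\ell(S)$ (each admissible complement $T$ occurs in $(d/2)!$ orderings), so Gershgorin's circle theorem — with all discs centred at the zero diagonal — would give $\lambda_{\max}(\sfM_f)\le\max_I\sum_J\sfM_f[I,J]=\tfrac{(d/2)!}{d!}W$.

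\textbf{Lower bound.} Fix $S^{*}$ with $\ell(S^{*})=W$, set $n':=n-d/2$, and for a parameter $a\in(0,1)$ with $b:=\sqrt{1-a^{2}}$ consider the unit vector $x:=\tfrac{a}{\sqrt{d/2}}\sum_{i\in S^{*}}e_i+\tfrac{b}{\sqrt{n'}}\sum_{i\in[n]\setminus S^{*}}e_i$. Because every coefficient of $f$ is non-negative, $f(x)$ is at least the total contribution of the monomials with support $S^{*}\cup T$ for $T\subseteq[n]\setminus S^{*}$, $|T|=d/2$; each such monomial evaluates at $x$ to $\bigl(\tfrac{a}{\sqrt{d/2}}\bigr)^{d/2}\bigl(\tfrac{b}{\sqrt{n'}}\bigr)^{d/2}$, whence
\[
f(x)~\ge~\frac{a^{d/2}b^{d/2}}{(d/2)^{d/4}(n')^{d/4}}\,\ell(S^{*})~\ge~\frac{a^{d/2}b^{d/2}}{(d/2)^{d/4}}\cdot\frac{W}{n^{d/4}}
\]
using $n'\le n$. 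Taking $a^{2}=b^{2}=1/2$ makes $a^{d/2}b^{d/2}=2^{-d/2}$, and since $(d/2)^{d/4}=2^{-d/4}d^{d/4}$ this gives $\ftwo{f}\ge f(x)\ge 2^{-d/4}(dn)^{-d/4}W$.

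\textbf{Main obstacle.} The substance of the argument is the choice of test vector, and the point to get right is why the exponent is $d/4$ and not $d/2$: we spend $d/2$ of the degrees of $x$ on the heavy link $S^{*}$, which costs only a $(d/2)^{d/4}$-type factor, and the other $d/2$ degrees on a uniform spread over the $\approx n$ remaining coordinates, which is where the $n^{d/4}$ enters; a single globally uniform vector only certifies the weaker $(n/d)^{d/2}$ bound of the warm-up. Non-negativity of the coefficients is exactly what lets this go through without the cancellation control (random roots of unity, Chebyshev's extremal inequality) needed for general coefficients in \cref{lem:poly:gen:mult:2} — every monomial dropped in the lower bound only helps. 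The sole bookkeeping point is that $n\ge d$, which is automatic since a nonzero multilinear degree-$d$ polynomial needs at least $d$ variables, ensuring $n'>0$ and that the displays above are well defined.
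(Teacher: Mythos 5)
Your proposal is correct and follows essentially the same strategy as the paper's proof. The paper bounds $\hssos{f}$ by the maximum row sum of the SoS-symmetric matrix via Perron--Frobenius (which, since $\sfM_f$ is non-negative with zero diagonal, is the same bound Gershgorin gives), and certifies $\ftwo{f}$ from below with a ``spike plus uniform background'' test vector $\one/\sqrt{n}+\tfrac{1}{\sqrt{d/2}}\sum_{i\in I^*}e_i$ normalized a posteriori; your exactly-normalized vector with mass split as $a/\sqrt{d/2}$ on $S^*$ and $b/\sqrt{n'}$ off it, and your tracking of the row-sum parameter $W=\max_S\ell(S)$ rather than $S_{I^*}$, are cosmetic variants that land on the same bound up to $2^{O(d)}$.
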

\begin{proof}
	Let $\sfM_f$ be the SoS-symmetric matrix representation of $f$. 
	Let $I^*=(i_1,\dots ,i_{d/2})\in [n]^{d/2}$ be the multi-index of 
	any row of $\sfM_f$ with maximum row sum. Let $S_I$ for $I\in 
	[n]^{d/2}$, denote the sum of the row $I$ of $\sfM_f$. By Perron-Frobenius
	theorem, $\|\sfM_f\|\leq S_{I^*}$. Thus, $\hssos{f}
	\leq S_{I^*}$. \medskip
	
	We next proceed to bound $\ftwo{f}$ from below. To this end, let 
	$x^* := y^*/\|y^*\|$ where, 
	\[
		y^* := \frac{\one}{\sqrt{n}} ~+~ 
		\frac{1}{\sqrt{d/2}}\sum_{i\in I^*} e_{i}
	\]
	Since $f$ is multilinear, $I^*$ has all distinct elements, and so 
	the second term in the definition of $y^*$ is of unit length. 
	Thus $\|y^*\| = \Theta(1)$, which implies that 
	$\ftwo{f}\geq f(x^*)\geq f(y^*)/2^{O(d)}$. Now we have, 
	\begin{align*}
		f(y^*) 
		&= 
		((y^*)^{\otimes d/2})^T \sfM_f\,(y^*)^{\otimes d/2} \\
		&\geq 
		\sum_{I\in \orbit{I^*}} \frac{1}{(nd)^{d/4}}~
		e^T_{I(1)}\otimes \dots 
		\otimes e^T_{I(d/2)} \,\sfM_f\,\one^{\otimes d/2} 
		&&(\text{by non-negativity of } \sfM_f)\\
		&= 
		\sum_{I\in \orbit{I^*}} \frac{1}{(nd)^{d/4}}~
		e^T_{I}\,\sfM_f\,\one ~(\in \Re^{[n]^{d/2}}) \\
		&=
		\sum_{I\in \orbit{I^*}} 
		\frac{S_{I}}{(nd)^{d/4}} \\
		&= 
		\sum_{I\in \orbit{I^*}} 
		\frac{S_{I^*}}{(nd)^{d/4}}
		&&(\text{by SoS-symmetry of } \sfM_f) \\
		&=
		\frac{(d/2)!S_{I^*}}{(nd)^{d/4}} 
		&&(|\orbit{I^*}|=(d/2)!\text{ by multilinearity of } f)\\
		&\geq 
		\frac{d^{d/4}S_{I^*}}{n^{d/4}\,2^{O(d)}} ~
		\geq 
		\frac{d^{d/4}\hssos{f}}{n^{d/4}\,2^{O(d)}}.
	\end{align*}
	This completes the proof. 
\end{proof}

\begin{theorem}\label{thm:poly:nnc:d/4}
	Let $f(x)$ be a homogeneous $n$-variate degree-$d$ 
	polynomial  with non-negative coefficients. 
	Then for any even $q$ such that $d$ divides $q$, 
	\[
		\frac{(\hssos{f^{q/d}})^{d/q}}{\ftwo{f}} ~\leq~ 
		2^{O(d)}~\frac{n^{d/4}}{q^{d/4}}.
	\]
\end{theorem}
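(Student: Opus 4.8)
The plan is to deduce this from the multilinear case. By~\cref{thm:poly:nnc:mult} we already know the bound $\hssos{p}/\ftwo{p} \leq 2^{O(d)}(n/d)^{d/4}$ for \emph{multilinear} non-negative-coefficient polynomials, and by~\cref{thm:poly:sos:nnc:mult} the general non-negative-coefficient case reduces, up to a $2^{O(\deg)}$ loss, to the multilinear components appearing in the decomposition of~\cref{lem:poly:split:gen:mult}. So I would apply both of these to the single polynomial $g := f^{q/d}$, which is homogeneous of degree $q$, and then take the $(d/q)$-th power at the very end.

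First I would record that $g = f^{q/d}$ has non-negative coefficients, since a product of polynomials with non-negative coefficients does; and, writing $g = \sum_{\alpha \in \udmindex{q/2}} \multif{2\alpha}(x)\, x^{2\alpha}$ as in~\cref{lem:poly:split:gen:mult}, reading off coefficients shows that each multilinear component $\multif{2\alpha}$, of degree $q - 2\abs{\alpha}$, also has non-negative coefficients. Then~\cref{thm:poly:sos:nnc:mult} gives
\[
  \frac{\hssos{g}}{\ftwo{g}} ~\leq~ 2^{O(q)} \max_{\alpha \in \udmindex{q/2}} \frac{\hssos{\multif{2\alpha}}}{\ftwo{\multif{2\alpha}}},
\]
where the maximum is over those $\alpha$ with $\multif{2\alpha} \not\equiv 0$. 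Applying~\cref{thm:poly:nnc:mult} to each such multilinear $\multif{2\alpha}$ (the degree-$0$ components contribute ratio $1$), and writing $d' := q - 2\abs{\alpha} \in \{0, 2, 4, \dots, q\}$, yields
\[
  \frac{\hssos{\multif{2\alpha}}}{\ftwo{\multif{2\alpha}}} ~\leq~ 2^{O(d')} \inparen{\frac{n}{d'}}^{d'/4} ~\leq~ 2^{O(q)} \max_{0 \leq d' \leq q} \inparen{\frac{n}{d'}}^{d'/4},
\]
with the convention $(n/0)^{0} = 1$.

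Next I would carry out the elementary optimization of $d' \mapsto (n/d')^{d'/4} = \exp((d'/4)(\ln n - \ln d'))$ on $[0,q]$: its derivative in $d'$ is $(1/4)(\ln n - \ln d' - 1)$, so it is increasing for $d' \leq n/e$. Using the hypothesis $q \leq n$: if $q \leq n/e$ the maximum over $[0,q]$ is attained at $d' = q$ and equals $(n/q)^{q/4}$; if $q > n/e$ the maximum of $(n/d')^{d'/4}$ over all $d' \geq 1$ is $e^{n/(4e)} = 2^{O(n)} = 2^{O(q)}$ (since then $n < eq$), while $(n/q)^{q/4} \geq 1$ (since $q \leq n$). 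In either case $\max_{0 \leq d' \leq q}(n/d')^{d'/4} \leq 2^{O(q)}(n/q)^{q/4}$, and combining with the two displays above gives $\hssos{g}/\ftwo{g} \leq 2^{O(q)}(n/q)^{q/4}$.

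Finally I would raise both sides to the power $d/q$ and use $\ftwo{g} = \ftwo{f^{q/d}} = \ftwo{f}^{q/d}$. This identity holds because $f$, and hence $f^{q/d}$, has non-negative coefficients: from $\abs{f(x)} \leq f(\abs{x})$ one gets $\ftwo{f} = \fmax{f}$ and $\abs{\fmin{f}} \leq \fmax{f}$, whence $\fmax{f^{q/d}} = (\fmax{f})^{q/d} = \ftwo{f}^{q/d}$ regardless of the parity of $q/d$. Substituting gives
\[
  \pth{\hssos{f^{q/d}}}^{d/q} ~\leq~ 2^{O(d)} \inparen{\frac{n}{q}}^{d/4} \ftwo{f},
\]
which is exactly the claim. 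The only step carrying content beyond assembling~\cref{thm:poly:sos:nnc:mult},~\cref{thm:poly:nnc:mult} and~\cref{lem:poly:split:gen:mult} is the maximization over $d'$ together with the attendant degenerate regimes (constant components $\multif{2\alpha}$, and $q$ within a constant factor of $n$); I expect that to be the main, albeit minor, obstacle.
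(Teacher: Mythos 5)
Your proof is correct and takes essentially the same route as the paper: the paper's proof of this theorem is the single sentence ``Applying~\cref{thm:poly:sos:nnc:mult} to $f^{q/d}$ and combining this with~\cref{thm:poly:nnc:mult} yields the claim,'' and your argument simply supplies the details left implicit there --- that non-negativity of coefficients passes to $f^{q/d}$ and to each $\multif{2\alpha}$, that $\max_{0 \leq d' \leq q}(n/d')^{d'/4}$ is absorbed by $2^{O(q)}(n/q)^{q/4}$ when $q\leq n$, and that $\ftwo{f^{q/d}}=\ftwo{f}^{q/d}$ for non-negative-coefficient $f$.
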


\begin{proof}
	Applying~\cref{thm:poly:sos:nnc:mult} to $f^{q/d}$ and combining 
	this with~\cref{thm:poly:nnc:mult} yields the claim. 
\end{proof}

\subsection{${(n/q)}^{d/2}$-Approximation for General Polynomials}

\begin{theorem}\label{thm:poly:gen:mult}
  Consider any homogeneous multilinear $n$-variate degree-$d$ (for even $d$)	polynomial $f(x)$. We have, 
  \[
    \frac{\hssos{f}}{\ftwo{f}}~\leq~2^{O(d)}~\frac{n^{d/2}}{d^{d/2}}\mper
  \]
\end{theorem}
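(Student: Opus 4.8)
The plan is to prove both directions directly, using multilinearity to dispense entirely with the decoupling machinery of the general case. First I would take $M := \sfM_f$, the SoS-symmetric matrix representation of $f$, bound $\hssos{f} \leq \|\sfM_f\|_2$, and then bound $\|\sfM_f\|_2$ by the maximum absolute row sum via Gershgorin's circle theorem. For the lower bound on $\ftwo{f}$ I would simply evaluate $f$ at an explicit unit vector that recovers its dominant monomial \emph{exactly}: because $f$ is multilinear of degree $d$, restricting attention to the $d$ variables appearing in a degree-$d$ monomial kills every other monomial, so — unlike the arbitrary-polynomial warmup in \cref{section:poly:warmup:n2}, where one pays a decoupling factor — there is no cancellation to control. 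This is precisely where multilinearity buys the improvement over the $n^d$-type bound.

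In detail: let $B := \max_{\alpha}|f_\alpha|$ and fix $\alpha^\ast$ with $|f_{\alpha^\ast}| = B$; set $S := \supp{\alpha^\ast}$, so $|S| = d$ (if $d > n$ then $f \equiv 0$ and the claim is trivial). For the upper bound, note a nonzero entry $\sfM_f[I,J]$ forces $\alpha(I)+\alpha(J) \leq \one$, i.e. $I$ and $J$ are $d/2$-tuples of distinct indices with disjoint supports; hence a fixed row $I$ has at most $n^{d/2}$ nonzero entries. By SoS-symmetry each such entry equals $f_{\alpha(I)+\alpha(J)}$ divided by the number of index pairs carrying that exponent vector, which for a multilinear degree-$d$ exponent is exactly $|\orbit{(I,J)}| = d!$ (the $d$ distinct coordinates can be permuted and split into two ordered halves in $d!$ ways). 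Thus $|\sfM_f[I,J]| \leq B/d!$, every row sum of $\sfM_f$ is at most $n^{d/2}\cdot B/d!$, and Gershgorin gives $\hssos{f} \leq \|\sfM_f\|_2 \leq n^{d/2}\,B/d!$.

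For the lower bound, set $x^\ast := \tfrac{1}{\sqrt{d}}\sum_{i\in S} e_i$, which is a unit vector. For any monomial $x^\beta$ appearing in $f$ we have $|\beta| = d$ and $\beta \leq \one$; if $\supp{\beta}\not\subseteq S$ then $(x^\ast)^\beta = 0$, while if $\supp{\beta}\subseteq S$ then $|\supp{\beta}| = d = |S|$ forces $\beta = \alpha^\ast$. Hence $f(x^\ast) = f_{\alpha^\ast}\cdot d^{-d/2}$, so $\ftwo{f} \geq |f(x^\ast)| = B\,d^{-d/2}$. Combining the two estimates,
\[
  \frac{\hssos{f}}{\ftwo{f}} ~\leq~ \frac{n^{d/2}\,B/d!}{B\,d^{-d/2}} ~=~ \frac{n^{d/2}\,d^{d/2}}{d!} ~\leq~ \frac{n^{d/2}\,d^{d/2}\,e^d}{d^{d}} ~=~ e^d\cdot\frac{n^{d/2}}{d^{d/2}} ~=~ 2^{O(d)}\cdot\frac{n^{d/2}}{d^{d/2}},
\]
where the third step uses Stirling's bound $d! \geq (d/e)^d$.

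There is no substantial obstacle here; the two places that need care are the combinatorial count showing that the orbit of a multilinear index pair has size exactly $d!$ (so the SoS-symmetric entries are genuinely as small as $B/d!$, which is the source of the $d^{d/2}$ saving), and the arithmetic converting $d^{d/2}/d!$ into $2^{O(d)}/d^{d/2}$ via Stirling. It is worth contrasting this with \cref{thm:poly:nnc:mult}: there non-negativity lets one also exploit the vector $\one/\sqrt n$ and capture a global sum of row sums, improving the exponent to $d/4$, whereas in the general case possible cancellations restrict us to extracting a single monomial, leaving the exponent at $d/2$.
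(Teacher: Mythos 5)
Your proposal is correct and follows essentially the same route as the paper's proof: bound $\hssos{f}\le\|\sfM_f\|_2$ via Gershgorin using $|\sfM_f[I,J]|\le B/d!$ (orbit size $d!$ from multilinearity), lower-bound $\ftwo{f}$ by evaluating at the normalized indicator of the support of a maximum-magnitude monomial, and close with Stirling. You supply more of the bookkeeping — the disjoint-support observation, the exact orbit count, and the explicit argument that multilinearity kills all competing monomials at $x^\ast$ — but the underlying argument is identical.
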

\begin{proof}
Let $\sfM_f$ be the SoS-symmetric matrix representation of $f$, i.e. 
\[
\sfM_f[I,J] = \frac{f_{\alpha(I)+\alpha(J)}}{\abs{\orbit{\alpha(I)+\alpha(J)}}}.
\] 
By the Gershgorin circle theorem, we can bound
$\norm{2}{\sfM_f}$, and hence $\hssos{f}$ by $n^{d/2} \cdot (\max_{\beta}\abs{f_{\beta}}/d!)$.
Here, we use the multilinearity of $f$.
On the other hand for a multilinear polynomial, 
using $x = \beta/\sqrt{\abs{\beta}}$ (where $|\beta|=d$ by multilinearity),
gives $\ftwo{f} \geq d^{-d/2} \cdot \abs{f_{\beta}}$. Thus, we easily get
\[
\hssos{f} ~\leq~ \frac{d^{d/2}}{d!} \cdot n^{d/2} \cdot \ftwo{f} ~=~ 2^{O(d)}~\frac{n^{d/2}}{d^{d/2}} \mper
\]
\end{proof}

\begin{theorem}\label{thm:poly:gen:d/2}
  Let $f(x)$ be a homogeneous $n$-variate degree-$d$ 
  polynomial, and assume that $2d$ divides $q$. Then  
  \[
    \frac{(\hssos{f^{q/d}})^{d/q}}{\ftwo{f}}~\leq~2^{O(d)}~\frac{n^{d/2}}{q^{d/2}}\mper
	\]
\end{theorem}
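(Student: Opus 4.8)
The plan is to reduce the statement about $f^{q/d}$ to the multilinear reduction theorem already established, and then combine with the multilinear bound just proved. The theorem to prove is
\[
  \frac{(\hssos{f^{q/d}})^{d/q}}{\ftwo{f}}~\leq~2^{O(d)}\cdot\frac{n^{d/2}}{q^{d/2}}
\]
for $f$ a homogeneous degree-$d$ polynomial with $2d \mid q$. First I would set $g := f^{q/d}$, which is a homogeneous polynomial of degree $q$. Observe that $\ftwo{g} = \ftwo{f}^{q/d}$ when $\ftwo{f} = \fmax{f}$, but to handle the general (even $d$) case cleanly one should instead note $\ftwo{f^{q/d}}^{d/q} = \ftwo{f}$ directly from the definition $\ftwo{h} = \sup_{\|x\|=1}|h(x)|$ since $|f(x)^{q/d}| = |f(x)|^{q/d}$ and $x\mapsto x^{d/q}$ is monotone on $[0,\infty)$. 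So it suffices to bound $\hssos{g}/\ftwo{g}$ by $2^{O(q)} (n/q)^{q/2}$, which on taking $d/q$-th powers gives the claimed $2^{O(d)}(n/q)^{d/2}$ (noting $2^{O(q)\cdot d/q} = 2^{O(d)}$).

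Next I would apply \cref{thm:poly:sos:gen:mult} to the degree-$q$ polynomial $g$: this gives
\[
  \frac{\hssos{g}}{\ftwo{g}}~\leq~2^{O(q)}\max_{\alpha\in\udmindex{q/2}}\frac{\hssos{\multif{2\alpha}}}{\ftwo{\multif{2\alpha}}},
\]
where each $\multif{2\alpha}$ is a homogeneous \emph{multilinear} polynomial of degree $q - 2|\alpha| \leq q$ in the variables $x$ (there are $n$ of them). Then for each such multilinear $\multif{2\alpha}$ of degree $d' := q-2|\alpha|$, I would invoke \cref{thm:poly:gen:mult}, which yields
\[
  \frac{\hssos{\multif{2\alpha}}}{\ftwo{\multif{2\alpha}}}~\leq~2^{O(d')}\cdot\frac{n^{d'/2}}{(d')^{d'/2}}~\leq~2^{O(q)}\cdot\frac{n^{q/2}}{q^{q/2}},
\]
where the last inequality uses that $(n/m)^{m/2}$, for $m \le q \le n$, is maximized at $m = q$ (since $n\geq q$ makes the function increasing in $m$) — this is the same monotonicity observation used in the remarks after \cref{thm:poly:results-list}. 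Combining the two displays gives $\hssos{g}/\ftwo{g} \leq 2^{O(q)}(n/q)^{q/2}$, and raising to the $d/q$ power finishes the proof.

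The only mild subtlety — and the step I would be most careful about — is the monotonicity claim $n^{d'/2}/(d')^{d'/2} \leq n^{q/2}/q^{q/2}$ for $d' \leq q \leq n$: one should check that $\varphi(m) = \tfrac m2(\log n - \log m)$ is increasing on $[1,n]$, which holds since $\varphi'(m) = \tfrac12(\log n - \log m - 1) \geq 0$ exactly when $m \leq n/e$, so strictly speaking one needs $q \leq n/e$; for $q$ in the range $(n/e, n]$ the ratio $n^{q/2}/q^{q/2}$ is bounded by a constant anyway and $n^{d'/2}/(d')^{d'/2} \le n^{q/2} \le e^{q/2}$ is absorbed into the $2^{O(q)}$, so the bound survives in all cases (possibly after adjusting the implicit constant in $O(q)$). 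The $2^{O(d)}$ versus $2^{O(q)}$ bookkeeping is routine: every $2^{O(q)}$ factor, once raised to the $d/q$ power, becomes $2^{O(d)}$, so no loss is incurred. Everything else is a direct chaining of the three already-proved theorems.
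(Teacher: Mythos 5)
Your argument is exactly the paper's one-line argument: set $g=f^{q/d}$, apply \cref{thm:poly:sos:gen:mult} to $g$, apply \cref{thm:poly:gen:mult} to each multilinear component $\multif{2\alpha}$ of degree $d' = q-2\abs{\alpha}$, and use $\ftwo{g}^{d/q}=\ftwo{f}$. You correctly isolate the one detail the paper suppresses, namely that $d'$ may be strictly less than $q$, so one must check $(n/d')^{d'/2} \le 2^{O(q)}(n/q)^{q/2}$ (with $q\le n$ as in \cref{thm:poly:results-list}). Your monotonicity analysis is the right repair, except that the written chain $n^{d'/2}/(d')^{d'/2}\le n^{q/2}\le e^{q/2}$ for the regime $q\in(n/e,n]$ fails at its second step, since $n^{q/2}\le e^{q/2}$ would require $n\le e$. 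The correct argument in that regime is that $\max_{1\le m\le n}(n/m)^{m/2}=e^{n/(2e)}<e^{q/2}=2^{O(q)}$ (using $n<eq$) together with $(n/q)^{q/2}\ge 1$, which gives the desired $2^{O(q)}$ absorption; your conclusion therefore stands after this small fix.
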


\begin{proof}
	Applying~\cref{thm:poly:sos:gen:mult} to $f^{q/d}$ and combining 
	this with~\cref{thm:poly:gen:mult} yields the claim. 
\end{proof}

\subsection{$\sqrt{m/q}$-Approximation for $m$-sparse polynomials}

\begin{lemma}\label{lem:poly:sparse:mult}
  Consider any homogeneous multilinear $n$-variate degree-$d$ (for even $d$)
  polynomial $f(x)$ with $m$ non-zero coefficients. We have, 
  \[
    \frac{\hssos{f}}{\ftwo{f}}~\leq~2^{O(d)}\sqrt{m}\mper
  \]
\end{lemma}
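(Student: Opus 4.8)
The goal is to show $\hssos{f} \leq 2^{O(d)}\sqrt{m}\cdot \ftwo{f}$ for a multilinear degree-$d$ polynomial $f$ with at most $m$ nonzero coefficients. The plan is to mimic the structure of the proofs of \cref{thm:poly:gen:mult} and \cref{thm:poly:nnc:mult}: bound $\hssos{f}$ from above by the norm of the SoS-symmetric matrix representation $\sfM_f$, and bound $\ftwo{f}$ from below by evaluating $f$ at a carefully chosen test vector. Since $\hssos{f} \leq \norm{2}{\sfM_f} \leq \norm{F}{\sfM_f}$ (Frobenius norm), and $\sfM_f$ has entries given by $\sfM_f[I,J] = f_{\alpha(I)+\alpha(J)}/\abs{\orbit{\alpha(I)+\alpha(J)}}$, I would first estimate $\norm{F}{\sfM_f}^2$. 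Each nonzero coefficient $f_\beta$ with $\abs{\beta}=d$ (multilinear, so $\beta \leq \one$) is spread across $\abs{\orbit{\beta}} = d!$ entries of $\sfM_f$, each of magnitude $\abs{f_\beta}/d!$, so its contribution to $\norm{F}{\sfM_f}^2$ is $d! \cdot (\abs{f_\beta}/d!)^2 = \abs{f_\beta}^2/d!$. Summing over the at most $m$ nonzero coefficients gives $\norm{F}{\sfM_f}^2 \leq m \cdot \max_\beta \abs{f_\beta}^2 / d!$, hence $\hssos{f} \leq \sqrt{m/d!}\cdot \max_\beta \abs{f_\beta}$.

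**Lower-bounding $\ftwo{f}$.** For the matching lower bound I would use the single monomial $x^{\beta^*}$ achieving the maximum coefficient magnitude $B := \max_\beta\abs{f_\beta} = \abs{f_{\beta^*}}$, with $\abs{\beta^*} = d$. As in \cref{thm:poly:gen:mult}, taking $x = \beta^*/\sqrt{d}$ (a unit vector, since $\beta^*$ is $0/1$-valued with $d$ ones) gives $x^{\beta^*} = d^{-d/2}$, but this only isolates one monomial's contribution up to sign cancellations from other monomials sharing support. The cleaner route is the decoupling machinery already in the excerpt: by \cref{lem:poly:decoupled:lower:bound} applied to the SoS-symmetric tensor $\mathcal{A}$ representing $f$, we have $\ftwo{f} \geq 2^{-O(d)} \abs{\iprod{\mathcal{A}}{e_{i_1}\otimes\cdots\otimes e_{i_d}}}$ where $\beta^* = e_{i_1}+\cdots+e_{i_d}$ (distinct indices by multilinearity), and $\iprod{\mathcal{A}}{e_{i_1}\otimes\cdots\otimes e_{i_d}} = f_{\beta^*}/d! \cdot d! = $ (the appropriate normalization) $= \Theta(B/d!)$ up to the multinomial factor. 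Combining, $\ftwo{f} \geq 2^{-O(d)} B / d!$... but this loses a factor of $d!$, not $\sqrt{d!}$. So instead I would combine $\hssos{f} \leq \sqrt{m/d!}\cdot B$ with $\ftwo{f} \geq 2^{-O(d)}\cdot B/\sqrt{d!}$; the latter bound should follow from the decoupling lemma because $\ftwo{f}$ relates to the \emph{multilinear} value $\max_{\|x^i\|=1}\iprod{\mathcal{A}}{x^1\otimes\cdots\otimes x^d}$, and by choosing each $x^i = e_{i_i}$ one picks out the entry of $\mathcal{A}$ of magnitude $B/d!$; then \cref{lem:poly:decoupled:lower:bound} gives $\ftwo{f}\geq 2^{-O(d)}\cdot B/d!$ — which still gives only $\hssos{f}/\ftwo{f} \leq 2^{O(d)}\sqrt{m \cdot d!}$. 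The resolution must be that $\hssos{f} \leq \norm{2}{\sfM_f}$, and for a matrix supported on $m\cdot d!$ entries each of size $B/d!$, the \emph{spectral} norm (not Frobenius) can be bounded more cleverly: $\norm{2}{\sfM_f} \leq \sqrt{\norm{1}{\sfM_f}\cdot\norm{\infty}{\sfM_f}}$ (max column-sum times max row-sum), and each row has at most $m$ nonzero entries of size $B/d!$ (since fixing $I$, the nonzero $\sfM_f[I,J]$ correspond to $J$ with $\alpha(I)+\alpha(J)$ a monomial of $f$, of which there are $\leq m$), giving $\norm{2}{\sfM_f} \leq m\cdot B/d!$; that's worse.

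**The right combination.** The correct approach is: bound $\hssos{f} \leq \norm{F}{\sfM_f} = \sqrt{\sum_{\beta: f_\beta\neq 0}\abs{f_\beta}^2/\abs{\orbit{\beta}}} \leq \sqrt{m}\cdot\max_\beta(\abs{f_\beta}/\sqrt{d!})$, and separately bound $\ftwo{f} \geq 2^{-O(d)}\max_\beta\abs{f_\beta}/\sqrt{d!}$ — but this second bound is \emph{false} in general (consider $f$ a single monomial: $\ftwo{f} = d^{-d/2}\abs{f_\beta}$, which is much smaller than $\abs{f_\beta}/\sqrt{d!}$ for large $d$ since $d^{d/2} \gg \sqrt{d!}$). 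So actually the honest statement is $\ftwo{f}\geq 2^{-O(d)}\abs{f_\beta}/d^{d/2}$ via the monomial test, and we'd get $\hssos{f}/\ftwo{f}\leq 2^{O(d)}\sqrt{m}\cdot d^{d/2}/\sqrt{d!} = 2^{O(d)}\sqrt{m}$ since $d^{d/2}/\sqrt{d!} = 2^{O(d)}$ by Stirling. That's it. So the plan is: \textbf{Step 1}, show $\hssos{f}\leq \norm{F}{\sfM_f}$ and compute $\norm{F}{\sfM_f} \leq \sqrt{m/d!}\cdot\max_\beta\abs{f_\beta}$ using that each coefficient occupies $d!$ equal entries; \textbf{Step 2}, show $\ftwo{f}\geq 2^{-O(d)}\cdot d^{-d/2}\cdot\max_\beta\abs{f_\beta}$ by plugging $x=\beta^*/\sqrt d$ into the decoupled polynomial $\mathcal{F}$ (using \cref{lem:poly:decoupled:lower:bound} and $\ftwo{f}=\Theta(\ftwo{\mathcal{F}})\cdot 2^{-O(d)}$ to handle cancellations among monomials sharing variables); \textbf{Step 3}, divide and simplify $d^{d/2}/\sqrt{d!} = 2^{O(d)}$ via Stirling. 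The main obstacle is Step 2: controlling cancellation among the (possibly many) monomials of $f$ that share the support of $\beta^*$; this is exactly what the decoupling lemma \cref{lem:poly:decoupled:lower:bound} is designed to handle, since after decoupling, the monomial $e_{i_1}\otimes\cdots\otimes e_{i_d}$ sits in a single tensor entry with no interference, and the lemma certifies $\ftwo{\mathcal{A}}$ dominates that entry up to $2^{-O(d)}$.
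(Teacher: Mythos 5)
Your Step 1 (bounding $\hssos{f} \leq \norm{F}{\sfM_f} \leq \sqrt{m/d!}\cdot\max_\beta\abs{f_\beta}$) and your final Stirling computation in Step 3 match the paper exactly. The gap is in Step 2, where you claim the lower bound $\ftwo{f}\geq 2^{-O(d)}\,d^{-d/2}\max_\beta|f_\beta|$ but propose to establish it via the decoupling lemma to ``handle cancellations among monomials sharing the support of $\beta^*$.'' Two things are off here. First, the cancellation you are worried about cannot occur: since $f$ is multilinear and homogeneous of degree $d$, every monomial of $f$ uses exactly $d$ distinct variables, so when you evaluate at the unit vector $x=\beta^*/\sqrt{d}$ (supported on exactly the $d$ coordinates of $\beta^*$), the \emph{only} monomial of $f$ that survives is $x^{\beta^*}$ itself. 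Thus $f(\beta^*/\sqrt{d})=f_{\beta^*}\,d^{-d/2}$ on the nose, giving $\ftwo{f}\geq d^{-d/2}\abs{f_{\beta^*}}$ with no $2^{-O(d)}$ loss and no need for decoupling. Second, if you actually routed through \cref{lem:poly:decoupled:lower:bound} as you describe — choosing $x^i=e_{i_i}$ to isolate the entry of the SoS-symmetric tensor $\mathcal{A}$ at index $(i_1,\dots,i_d)$ — that entry equals $f_{\beta^*}/d!$, so the decoupling lemma would only give $\ftwo{f}\geq 2^{-O(d)}\abs{f_{\beta^*}}/d!$. That is weaker than the direct bound by roughly a factor of $(d/e)^{d/2}$, and substituting it into your Step 1 estimate would yield $\hssos{f}/\ftwo{f}\leq 2^{O(d)}\sqrt{m}\cdot\sqrt{d!}$, which is \emph{not} $2^{O(d)}\sqrt{m}$. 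So the decoupling route, as laid out, does not prove the lemma. Replacing it with the one-line observation that no cancellation occurs recovers the paper's argument.
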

\begin{proof}
  Let $\sfM_f$ be the SoS-symmetric matrix representation of $f$, \ie\
  \[
    \sfM_f[I,J] = \frac{f_{\alpha(I)+\alpha(J)}}{\abs{\orbit{\alpha(I)+\alpha(J)}}}.
  \] 
Now $\hssos{f}\leq \|\sfM_f\| \leq \|\sfM_f\|_F$. Thus, we have:
\begin{align*}
    \|\sfM_f\|^2_F 
    &= 
    \sum_{I,J\in [n]^{d/2}} \sfM_f[I,J]^2 \\
    &= 
    \sum_{\beta \in \degbmindex{d}} 
    \frac{f_{\beta}^{2}}{\abs{\orbit{\beta}}} \\
    &=  
    \sum_{\beta \in \degbmindex{d}} 
    \frac{f_{\beta}^{2}}{d!} \\
    &\leq 
    \frac{m}{d!}\cdot \max_{\beta}\abs{f_{\beta}}
\end{align*}

On the other hand, since $f$ is multilinear, using $x = \beta/\sqrt{\abs{\beta}}$ (where $|\beta|=d$ by
multilinearity), implies $\ftwo{f} \geq d^{-d/2} \cdot \abs{f_{\beta}}$ for any $\beta$.  This implies
the claim. 
\end{proof}

\begin{theorem}\label{thm:poly:sparse:q}
	Let $f(x)$ be a homogeneous $n$-variate degree-$d$ 
	polynomial with $m$ non-zero coefficients, and assume that $2d$ divides $q$. Then
	\[
		\frac{(\hssos{f^{q/d}})^{d/q}}{\ftwo{f}} ~\leq~ 
		2^{O(d)}\sqrt{m/q}.
	\]
\end{theorem}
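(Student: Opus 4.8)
The plan is to mirror the proofs of \cref{thm:poly:gen:d/2} and \cref{thm:poly:nnc:d/4}: apply the multilinear reduction \cref{thm:poly:sos:gen:mult} to the degree-$q$ polynomial $g := f^{q/d}$ and then combine with the sparse multilinear estimate \cref{lem:poly:sparse:mult}. Since $2d$ divides $q$, the exponent $q/d$ is a positive (even) integer, so $g$ is homogeneous of even degree $q$ with $g \geq 0$, and $\ftwo{g} = \sup_{\|x\|=1}\abs{f(x)}^{q/d} = \ftwo{f}^{q/d}$. Thus \cref{thm:poly:sos:gen:mult} applied to $g$ gives
\[
  \frac{\hssos{f^{q/d}}}{\ftwo{f}^{q/d}} ~\leq~ 2^{O(q)}\,\max_{\alpha \in \udmindex{q/2}} \frac{\hssos{\multif{2\alpha}}}{\ftwo{\multif{2\alpha}}},
\]
where $\multif{2\alpha}$ are the multilinear components of $f^{q/d}$ from \cref{lem:poly:split:gen:mult} and the maximum is over $\alpha$ for which $\multif{2\alpha}$ is not identically zero.

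The one new ingredient I would need is an upper bound on the sparsity of each $\multif{2\alpha}$, and here the key point is that $f^{q/d}$, although it could a priori expand into $\approx m^{q/d}$ terms, actually has far fewer \emph{distinct} monomials. Writing $f = \sum_{i=1}^{m} c_i\, x^{\beta_i}$, every monomial of $f^{q/d}$ has exponent vector $\beta_{i_1} + \cdots + \beta_{i_{q/d}}$ for some size-$(q/d)$ multiset $\{i_1, \dots, i_{q/d}\} \subseteq [m]$; hence $f^{q/d}$ has at most $\binom{m + q/d - 1}{q/d}$ non-zero coefficients, and since by \cref{lem:poly:split:gen:mult} the supports of the $\multif{2\alpha}$ partition the support of $f^{q/d}$, the same bound holds for each component. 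Each $\multif{2\alpha}$ being multilinear and homogeneous of even degree at most $q$, \cref{lem:poly:sparse:mult} then gives $\hssos{\multif{2\alpha}}/\ftwo{\multif{2\alpha}} \leq 2^{O(q)}\sqrt{\binom{m + q/d - 1}{q/d}}$.

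Combining the above and taking $(d/q)$-th powers (absorbing $2^{O(q)}$ into $2^{O(d)}$ after the root) yields
\[
  \bigl(\hssos{f^{q/d}}\bigr)^{d/q} ~\leq~ 2^{O(d)}\,\binom{m + q/d - 1}{q/d}^{d/(2q)}\cdot \ftwo{f}.
\]
The final step is the routine estimate of the binomial factor: with $k := q/d$ and $\binom{a}{b} \leq (ea/b)^b$, one has $\binom{m+k-1}{k} \leq (2em/k)^k$ in the regime $m \geq k$, whence $\binom{m+k-1}{k}^{1/(2k)} \leq \sqrt{2ed}\cdot\sqrt{m/q} = 2^{O(d)}\sqrt{m/q}$ (in the degenerate regime $m < k$ the factor is $O(1)$ and there is nothing to prove). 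This gives the claimed $\bigl(\hssos{f^{q/d}}\bigr)^{d/q} \leq 2^{O(d)}\sqrt{m/q}\cdot\ftwo{f}$.

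The main point to get right — and the only place the argument differs from a mechanical invocation of \cref{thm:poly:sos:gen:mult} and \cref{lem:poly:sparse:mult} — is the sparsity bound for $f^{q/d}$: one must use the multiset count $\binom{m+q/d-1}{q/d}$ rather than $m^{q/d}$, since distinct multisets of monomials of $f$ can produce the same monomial of $f^{q/d}$. The factor $(q/d)!$ hidden in that binomial coefficient is exactly what turns the naive $\sqrt{m}$ into $\sqrt{m/q}$ after the $(d/q)$-th root. I expect no other difficulties.
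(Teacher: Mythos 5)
Your proposal is correct and follows the same route as the paper's own proof: apply the multilinear reduction (\cref{thm:poly:sos:gen:mult}) to $g = f^{q/d}$, bound each multilinear component via \cref{lem:poly:sparse:mult}, and observe that the sparsity of $f^{q/d}$ is at most $\multichoose{m}{q/d} = \binom{m+q/d-1}{q/d}$, which after the $(d/q)$-th root gives the $2^{O(d)}\sqrt{m/q}$ factor. You are slightly more explicit than the paper about why each $\multif{2\alpha}$ inherits the sparsity bound (bijection of its monomials with a subset of those of $f^{q/d}$) and about the final binomial estimate, but the argument is the same.
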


\begin{proof}
	Combining~\cref{thm:poly:sos:gen:mult} and~\cref{lem:poly:sparse:mult}, yields that for any degree-$q$ 
	homogeneous polynomial $g$ with sparsity $\overbar{m}$, we have 
	\[
          \frac{(\hssos{g})}{\ftwo{g}}~\leq~2^{O(q)}\sqrt{\overbar{m}}.
	\]
	Lastly, taking $g = f^{q/d}$ and observing that the sparsity of $g$ is at most 
	$\multichoose{m}{q/d}$ implies the claim. 
\end{proof}

\section{Improved Approximations via Folding}\label{sec:poly:folding}
Recall that we call a folded polynomial multilinear if all its monomials are multilinear. In particular,
there's no restriction on the folds of the polynomial. 

\begin{lemma}[Folded Analogue of~\cref{lem:poly:split:gen:mult}]\label{lem:poly:fold:split:gen:mult}
	Let $\FPR{d_1}{d_2} \ni f(x) :=
	\sum_{\beta\in \degmindex{d_1}} \fold{f}{\beta}(x)\cdot x^\beta$ be a 
	$(d_1,d_2)$-folded polynomial.
	$f$ can be written as 
	\[
		\sum_{\alpha\in \udmindex{d_1\!/2}} 
		\multif{2\alpha}(x) \cdot x^{2\alpha}
	\] 
	where for any $\alpha\in \udmindex{d_1\!/2}$, $\multif{2\alpha}(x)$ is a 
	multilinear $(d_1-2|\alpha|,d_2)$-folded polynomial. 
\end{lemma}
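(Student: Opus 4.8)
The plan is to imitate the proof of \cref{lem:poly:split:gen:mult}, performing the ``extract the even powers'' decomposition only on the outer degree-$d_1$ monomial and carrying each fold $\fold{f}{\beta}$ along unchanged as a formal coefficient. This works precisely because a \emph{multilinear} folded polynomial is only required to have squarefree outer monomials, with no constraint whatsoever on the folds, so the inner degree-$d_2$ part never needs to be touched.

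First I would record the elementary combinatorial fact underlying both \cref{lem:poly:split:gen:mult} and its folded version: every $\beta \in \degmindex{d_1}$ has a unique decomposition $\beta = 2\alpha + \gamma$ with $\alpha \in \NN^n$ and $\gamma \in \{0,1\}^n$, namely $\alpha_i = \lfloor \beta_i / 2 \rfloor$ and $\gamma_i = \beta_i \bmod 2$; moreover $|\alpha| \le d_1/2$ and $|\gamma| = d_1 - 2|\alpha|$. Using this, for each $\alpha \in \udmindex{d_1/2}$ I would define $\multif{2\alpha}$ to be the multilinear $(d_1 - 2|\alpha|, d_2)$-folded polynomial whose fold at $\gamma \in \degbmindex{d_1 - 2|\alpha|}$ equals $\fold{f}{2\alpha + \gamma}$ (well-defined since $|2\alpha + \gamma| = d_1$), and declare these to be the desired $\multif{2\alpha}$'s.

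The remaining steps are routine verification. Multiplying the $(d_1 - 2|\alpha|, d_2)$-folded polynomial $\multif{2\alpha}$ by the outer monomial $x^{2\alpha}$ yields a $(d_1, d_2)$-folded polynomial whose fold at $\beta$ is $\fold{(\multif{2\alpha})}{\beta - 2\alpha}$ when $\beta - 2\alpha \in \degbmindex{d_1 - 2|\alpha|}$ and is $0$ otherwise; summing over $\alpha$ and fixing $\beta \in \degmindex{d_1}$, the only surviving term is the unique $\alpha$ with $\beta - 2\alpha \le \one$, and it contributes exactly $\fold{f}{\beta}$. This proves $f = \sum_{\alpha \in \udmindex{d_1/2}} \multif{2\alpha}(x)\, x^{2\alpha}$ as folded polynomials. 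Uniqueness is the same computation run backwards: for any such representation, comparing folds at $\beta = 2\alpha + \gamma$ as $\gamma$ ranges over $\degbmindex{d_1 - 2|\alpha|}$ forces every fold of every $\multif{2\alpha}$ to equal the value above.

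I do not expect a genuine obstacle in this lemma; the only points deserving a moment's care are pinning down the convention for multiplying a folded polynomial by a monomial (so that the bidegree and the multilinearity of the outer part are preserved correctly) and checking that the bijection $\beta \leftrightarrow (\alpha,\gamma)$ restricts to exactly the monomials permitted in a multilinear fold, which is what makes existence and uniqueness fall out simultaneously. One could instead deduce the statement by applying \cref{lem:poly:split:gen:mult} over the coefficient ring $\pr{d_2}$ of polynomials of degree at most $d_2$, but the direct index bookkeeping above is cleaner and self-contained.
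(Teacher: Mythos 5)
Your proof is correct and uses exactly the same construction as the paper: define $\multif{2\alpha}$ by setting $\fold{(\multif{2\alpha})}{\gamma} := \fold{f}{2\alpha+\gamma}$, relying on the unique decomposition $\beta = 2\alpha + \gamma$ with $\gamma \in \{0,1\}^n$. The paper gives only this one-line definition, while you additionally spell out the verification and the uniqueness direction, which the lemma as stated does not require but is a harmless and correct strengthening.
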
 

\begin{proof}
Simply consider the folded polynomial 
	\[
	\multif{2\alpha}(x) = 
	\sum_{\gamma\in \degbmindex{d_1-2|\alpha|}} 
	\fold{(\multif{2\alpha})}{\gamma} \cdot x^{\gamma}
	\]
	where 
	$
		\fold{(\multif{2\alpha})}{\gamma}
		= 
		\fold{f}{2\alpha+\gamma}
	$.
\end{proof}

\subsection{Reduction to Multilinear Folded Polynomials}

Here we will prove a generalized version of~\cref{lem:poly:gen:mult:sp}, 
which is a generalization in two ways; firstly it allows for folds 
instead of just coefficients, and secondly it allows a more 
general set of constraints than just the hypersphere since we will need 
to add some additional non-negativity constraints for the case of non-negative 
coefficient polynomials (so that $\hscsos{C}{}$ satisfies monotonicity 
over NNC polynomials which will come in handy later). 

Recall that $\hscsos{C}{}$ is defined in~\cref{sec:poly:hscsos} and that 
$\ftwo{f}$ and $\hscsos{C}{f}$ for a folded polynomial $f$, are applied 
to the unfolding of $f$. 

\subsubsection{Relating $\hscsos{C}{f}$ to $\hscsos{C}{\multif{2\alpha}}$}
\begin{lemma}[Folded Analogue of~\cref{lem:poly:gen:mult:sp}]\label{lem:poly:fold:gen:mult:sos}
	Let $C$ be a system of polynomial constraints of the form 
	$\{\|x\|_2^2 = 1\}\cup C'$ where $C'$ is a moment non-negativity 
	constraint set. Let $f \in \FPR{d_1}{d_2}$ 
	be a $(d_1,d_2)$-folded polynomial. We have, 
	\[
	\hscsos{C}{f} ~\leq~ 
	\max_{\alpha\in \udmindex{d_1\!/2}} 
	\frac{\hscsos{C}{\multif{2\alpha}}}{|\orbit{\alpha}|}\,(1+d_1/2)
	\]
\end{lemma}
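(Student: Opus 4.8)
The plan is to follow the structure of the proof of \cref{lem:poly:gen:mult:sp}, but since $\hscsos{C}{\cdot}$ is a \emph{constrained} pseudo-expectation quantity (it carries the moment non-negativity constraints in $C'$ and so does not collapse to a bare matrix-norm quantity), I would argue entirely on the pseudo-expectation side, using \cref{lem:poly:sos:replace} as the replacement for the ``spectral norm of a block-diagonal matrix equals the max over its blocks'' step. Concretely: first apply \cref{lem:poly:fold:split:gen:mult} and pass to unfoldings to get the polynomial identity $\unfold{f}(x) = \sum_{\alpha\in\udmindex{d_1/2}} x^{2\alpha}\cdot \unfold{\multif{2\alpha}}(x)$, where, for $\alpha$ with $|\alpha|=t$, $\unfold{\multif{2\alpha}}$ is homogeneous of degree $d-2t$ with $d:=d_1+d_2$. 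By definition $\hscsos{C}{f}$ (which, per the convention recalled just before the statement, means $\hscsos{C}{\unfold{f}}$) is the supremum of $\PExc{\unfold{f}}$ over degree-$d$ pseudo-expectations respecting $C$, so it suffices to bound $\PExc{\unfold{f}}$ for an arbitrary such $\PExc$.

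Next I would estimate term by term. Writing $x^{2\alpha}=(x^\alpha)^2$, the polynomial $p:=x^\alpha$ has degree $|\alpha|=t=(d-(d-2t))/2$, which exactly meets the degree budget required by \cref{lem:poly:sos:replace} (applied with the degree-$d$ operator $\PExc$ and the degree-$(d-2t)$ polynomial $\unfold{\multif{2\alpha}}$), giving $\PExc{x^{2\alpha}\cdot\unfold{\multif{2\alpha}}}\le \PExc{x^{2\alpha}}\cdot\hscsos{C}{\multif{2\alpha}}$, where $\PExc{x^{2\alpha}}=\PExc{(x^\alpha)^2}\ge 0$. Grouping the sum by $t=|\alpha|$ and rewriting each summand as $\bigl(|\orbit{\alpha}|\,\PExc{x^{2\alpha}}\bigr)\cdot\bigl(\hscsos{C}{\multif{2\alpha}}/|\orbit{\alpha}|\bigr)$, I can pull the maximum of the second factor out of the inner sum, leaving $\sum_{\alpha\in\degmindex{t}}|\orbit{\alpha}|\,\PExc{x^{2\alpha}}$. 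The point is then the multinomial identity (using $|\orbit{\alpha}|=\binom{t}{\alpha}$)
\[
  \sum_{\alpha\in\degmindex{t}}|\orbit{\alpha}|\,x^{2\alpha}~=~\Bigl(\sum_{i} x_i^2\Bigr)^{t}~=~\norm{2}{x}^{2t},
\]
together with the fact that $\PExc$ respects $\norm{2}{x}^2=1$ and $2t\le d_1\le d$, which gives $\PExc{\norm{2}{x}^{2t}}=\PExc{1}=1$. Hence the inner sum over $\degmindex{t}$ contributes at most $\max_{\alpha\in\degmindex{t}}\hscsos{C}{\multif{2\alpha}}/|\orbit{\alpha}|$, and summing over the $1+d_1/2$ values of $t$ and taking the supremum over $\PExc$ yields the claimed bound.

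The argument is essentially routine once the two black-box inputs (\cref{lem:poly:sos:replace} and \cref{lem:poly:fold:split:gen:mult}) are in hand; the only points needing care are bookkeeping. First, one must check the degree inequality in \cref{lem:poly:sos:replace} is satisfied with equality for $p=x^\alpha$, so that the ambient degree-$d$ pseudo-expectation is already enough (no extra levels needed). Second, the telescoping $\PExc{(\norm{2}{x}^2)^t}=\PExc{1}$ requires $2t\le d$ so that all multipliers of the $\norm{2}{x}^2-1$ constraint stay within the pseudo-expectation's degree. Third, one must keep straight the convention that $\hscsos{C}{\cdot}$ applied to a folded polynomial means $\hscsos{C}{\unfold{\cdot}}$, so that the right-hand side $\hscsos{C}{\multif{2\alpha}}$ is a supremum over degree-$(d-2|\alpha|)$ pseudo-expectations respecting $C$, consistent with how \cref{lem:poly:sos:replace} is stated. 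The one genuinely non-mechanical observation — the ``hard part'', such as it is — is recognizing that the orbit-size averaging used in the matrix proof of \cref{lem:poly:gen:mult:sp} is, on the pseudo-expectation side, precisely the multinomial expansion of $\norm{2}{x}^{2t}$ evaluated inside $\PExc$; this is what simultaneously cancels the $|\orbit{\alpha}|$ weights and produces the clean factor $1+d_1/2$.
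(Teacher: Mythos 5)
Your proposal is correct and matches the paper's proof essentially step for step: fix an arbitrary degree-$(d_1+d_2)$ pseudo-expectation respecting $C$, decompose via \cref{lem:poly:fold:split:gen:mult}, bound each term by $\PExc{x^{2\alpha}}\cdot\hscsos{C}{\multif{2\alpha}}$ via \cref{lem:poly:sos:replace}, insert the orbit weights and pull out the maximum using $\PExc{x^{2\alpha}}\ge 0$, and collapse each inner sum with the multinomial identity $\sum_{\alpha\in\degmindex{t}}|\orbit{\alpha}|\,x^{2\alpha}=\norm{2}{x}^{2t}$ together with $\PExc{\norm{2}{x}^{2t}}=1$. The degree bookkeeping you verify for \cref{lem:poly:sos:replace} (equality $t=(d-(d-2t))/2$) is exactly the implicit check in the paper's argument.
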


\begin{proof}
	Consider any degree-$(d_1+d_2)$ pseudo-expectation operator $\PExc$. 
	We have, 
	\begin{align*}
		\PExc{f} 
		&= 
		\sum_{\mathclap{\alpha\in \udmindex{d_1\!/2}}} 
		\PExc{\multif{2\alpha}(x)\cdot x^{2\alpha}}
		&&\text{(by~\cref{lem:poly:fold:split:gen:mult})} \\
		&\leq 
		\sum_{\mathclap{\alpha\in \udmindex{d_1\!/2}}} 
		\PExc{x^{2\alpha}} \cdot \hscsos{C}{\multif{2\alpha}}
		&&\text{(by~\cref{lem:poly:sos:replace})} \\
		&= 
		\sum_{0\leq t\leq \frac{d_1}{2}}
		\sum_{\alpha\in \degmindex{t}} 
		\PExc{x^{2\alpha}}\cdot 
		\hscsos{C}{\multif{2\alpha}} \\
		&= 
		\sum_{0\leq t\leq \frac{d_1}{2}}
		\sum_{\alpha\in \degmindex{t}} 
		\PExc{|\orbit{\alpha}|x^{2\alpha}}\cdot 
		\frac{\hscsos{C}{\multif{2\alpha}}}{|\orbit{\alpha}|} \\
		&\leq 
		\sum_{0\leq t\leq \frac{d_1}{2}}
		\sum_{\alpha\in \degmindex{t}} 
		\PExc{|\orbit{\alpha}|x^{2\alpha}}\cdot 
		\max_{\mathclap{\beta\in \udmindex{d_1\!/2}}} \,
		\frac{\hscsos{C}{\multif{2\beta}}}{|\orbit{\beta}|}
		&&(\PExc{x^{2\alpha}}\geq 0) \\
		&= 
		\sum_{0\leq t\leq \frac{d_1}{2}}
		\PExc{
		\sum_{\alpha\in \degmindex{t}} 
		|\orbit{\alpha}|x^{2\alpha}
		}
		\cdot 
		\max_{\mathclap{\beta\in \udmindex{d_1\!/2}}} \,
		\frac{\hscsos{C}{\multif{2\beta}}}{|\orbit{\beta}|} \\
		&= 
		\sum_{0\leq t\leq \frac{d_1}{2}}
		\PExc{\|x\|_2^{2t}}
		\cdot 
		\max_{\beta\in \udmindex{d_1\!/2}} 
		\frac{\hscsos{C}{\multif{2\beta}}}{|\orbit{\beta}|}  \\
		&= 
		\sum_{0\leq t\leq \frac{d_1}{2}} ~
		\max_{\beta\in \udmindex{d_1\!/2}} 
		\frac{\hscsos{C}{\multif{2\beta}}}{|\orbit{\beta}|} \\
		&= ~~
		\max_{\beta\in \udmindex{d_1\!/2}}
		\frac{\hscsos{C}{\multif{2\beta}}}{|\orbit{\beta}|} 
		(1+d_1/2) 
	\end{align*}
\end{proof}

\subsection{Relating Evaluations of $f$ to Evaluations of $\multif{2\alpha}$}

Here we would like to generalize~\cref{lem:poly:nnc:mult:2} and~\cref{lem:poly:gen:mult:2} to 
allow folds, however for technical reasons related to decoupling of the domain of the 
folds from the domain of the monomials of a folded polynomial, we instead generalize 
claims implicit in the proofs of~\cref{lem:poly:nnc:mult:2} and~\cref{lem:poly:gen:mult:2}. 

Let $f \in \FPR{d_1}{d_2}$ be a $(d_1,d_2)$-folded polynomial. Recall that an evaluation 
of a folded polynomial treats the folds as coefficients and only substitutes values 
in the monomials of the folded polynomial. Thus, for any fixed $y\in \Re^{n}$, $f(y)$ 
(sometimes denoted by $(f(y))(x)$ for contextual clarity) is a degree-$d_2$ polynomial 
in $x$, i.e. $f(y)\in \PR{d_2}$. 

\begin{lemma}[Folded Analogue of~\cref{lem:poly:nnc:mult:2}]\label{lem:poly:fold:nnc:mult:eval}
    Let $f \in \NFPR{d_1}{d_2}$ be a $(d_1,d_2)$-folded polynomial whose folds have non-negative 
    coefficients. Then for any $\alpha\in\udmindex{d_1/2}$ and any $y\geq 0$, 
    \[
        \pth{f\pth{y+\frac{\sqrt{\alpha}}{\sqrt{|\alpha|}}}}\!(x) 
        ~\geq~ 
        \frac{(\multif{2\alpha}(y))(x)}{|\orbit{\alpha}|} \cdot 2^{-O(d_1)}
    \]
    where the ordering is coefficient-wise. 
\end{lemma}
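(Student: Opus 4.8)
The plan is to follow the proof of~\cref{lem:poly:nnc:mult:2} almost verbatim, the only real change being that the fold variable $x$ is kept symbolic throughout, so that every comparison is made coefficient-wise between degree-$d_2$ polynomials rather than between real numbers. Write $z := y + \frac{\sqrt{\alpha}}{\sqrt{|\alpha|}}$. If $\alpha = 0^n$ the claim is immediate: then $z=y$, $z^{2\alpha}=1$, $|\orbit{\alpha}|=1$, and by~\cref{lem:poly:fold:split:gen:mult} $(f(y))(x) = \sum_{\beta}(\multif{2\beta}(y))(x)\cdot y^{2\beta} \geq (\multif{0}(y))(x)$ coefficient-wise, since $f$ has non-negative folds and $y\geq 0$ makes every summand non-negative. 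So assume $\alpha\neq 0^n$ and set $t:=|\alpha|\geq 1$; note $t\leq d_1/2$, and since $y\geq 0$ entry-wise, $z \geq \frac{\sqrt{\alpha}}{\sqrt{t}} \geq 0$ entry-wise.

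First I would record the folded decomposition at an evaluation point. By~\cref{lem:poly:fold:split:gen:mult} we have $f(x) = \sum_{\beta\in\udmindex{d_1/2}} \multif{2\beta}(x)\cdot x^{2\beta}$, and unwinding the identity $\fold{(\multif{2\beta})}{\gamma} = \fold{f}{2\beta+\gamma}$ shows that substituting $z$ into the degree-$d_1$ monomial part gives the identity of degree-$d_2$ polynomials in $x$
\[
  \bigl(f(z)\bigr)(x) ~=~ \sum_{\beta\in\udmindex{d_1/2}} \bigl(\multif{2\beta}(z)\bigr)(x)\cdot z^{2\beta}.
\]
Because $f\in\NFPR{d_1}{d_2}$, every $\fold{f}{2\beta+\gamma}$ has non-negative coefficients, and $z\geq 0$, so each $\bigl(\multif{2\beta}(z)\bigr)(x) = \sum_{\gamma}\fold{f}{2\beta+\gamma}(x)\,z^{\gamma}$ has non-negative coefficients and $z^{2\beta}\geq 0$. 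Discarding every term but $\beta=\alpha$ therefore yields, coefficient-wise,
\[
  \bigl(f(z)\bigr)(x) ~\geq~ \bigl(\multif{2\alpha}(z)\bigr)(x)\cdot z^{2\alpha}.
\]

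It then remains to run the two elementary estimates of~\cref{lem:poly:nnc:mult:2}. For the polynomial factor: since $z\geq y\geq 0$ entry-wise and $\multif{2\alpha}$ has non-negative folds, comparing $\sum_\gamma\fold{f}{2\alpha+\gamma}(x)\,z^\gamma$ with $\sum_\gamma\fold{f}{2\alpha+\gamma}(x)\,y^\gamma$ term by term gives $\bigl(\multif{2\alpha}(z)\bigr)(x) \geq \bigl(\multif{2\alpha}(y)\bigr)(x) \geq 0$ coefficient-wise. For the scalar factor: $z^{2\alpha} \geq \prod_i (\sqrt{\alpha_i}/\sqrt{t})^{2\alpha_i} = \bigl(\prod_{i\in\supp{\alpha}}\alpha_i^{\alpha_i}\bigr)/t^{t}$, and using $t^t\leq e^t\,t!$, $\alpha_i^{\alpha_i}\geq\alpha_i!$, $|\orbit{\alpha}| = t!/\prod_{i\in\supp{\alpha}}\alpha_i!$, and $t\leq d_1/2$, this is at least $2^{-O(d_1)}/|\orbit{\alpha}|$. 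Multiplying the two bounds gives the claimed inequality.

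I do not anticipate a genuine obstacle here: the content is a repackaging of~\cref{lem:poly:nnc:mult:2}. The one conceptual point worth flagging — and the reason for stating this ``evaluation-level'' form rather than the cleaner $f(x^\ast+\cdots)$ form used for ordinary polynomials — is that one must \emph{not} substitute the optimizer of $\ftwo{\multif{2\alpha}}$ into $f$; the fold variable stays free, which is exactly the ``decoupling of the fold domain from the monomial domain'' alluded to before the statement and is what makes the coefficient-wise argument go through cleanly. The only care needed is in the bookkeeping of fold indices ($\fold{(\multif{2\beta})}{\gamma} = \fold{f}{2\beta+\gamma}$) and in reading each coefficient-wise inequality at the level of the degree-$d_2$ fold polynomials rather than the degree-$d_1$ monomial part.
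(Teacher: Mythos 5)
Your proof is correct and follows essentially the same route as the paper's — the paper's entire proof is the single line ``Identical to the proof of~\cref{lem:poly:nnc:mult:2},'' and you have spelled out exactly what that means: substitute $z := y + \sqrt{\alpha}/\sqrt{|\alpha|}$, discard all terms but $\beta=\alpha$ in the folded decomposition using non-negativity of folds and $z\geq 0$, lower-bound $z^{2\alpha}$ by $2^{-O(d_1)}/|\orbit{\alpha}|$ exactly as in~\cref{lem:poly:nnc:mult:2}, and compare $\multif{2\alpha}(z)$ with $\multif{2\alpha}(y)$ via $z\geq y\geq 0$, all at the level of coefficient-wise inequalities between degree-$d_2$ polynomials in the free fold variable $x$. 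You also correctly identified why the lemma is stated at the evaluation level rather than with a norm optimizer — that is precisely the ``decoupling of fold domain from monomial domain'' the paper flags immediately before the lemma.
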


\begin{proof}
    Identical to the proof of~\cref{lem:poly:nnc:mult:2}. 
\end{proof}

\begin{lemma}[Folded Analogue of~\cref{lem:poly:gen:mult:2}]\label{lem:poly:fold:gen:mult:eval}
    Let $f \in \FPR{d_1}{d_2}$ be a $(d_1,d_2)$-folded polynomial. 
    Consider any $\alpha\in\udmindex{d_1/2}$ and any $y$, and let 
    \[
        z \quad := \quad
        \Xi \cdot y\circ \frac{1}{2\alpha+\one}\circ b ~~+~ 
		\frac{\sqrt{\alpha}\circ \zeta}{\sqrt{|\alpha|}} 
	\]  
	where $\Xi$ is an independent and uniformly randomly chosen 
	$(d_1-2|\alpha|+1)$-th root of unity, and for any $i\in [n]$,  
	$\zeta_i$ is an independent and uniformly randomly chosen 
	$(2\alpha_i+1)$-th root of unity, and $b_i$ is an 
	independent $\mathrm{Bernoulli}(p)$ random variable ($p$ is an arbitrary  
	parameter in $[0,1]$). Then 
    \[
        \Ex{(f(z))(x)\cdot \Xi\cdot 
	    \prod_{i\in [n]} \zeta_{i}} 
        ~=~ 
        p^{\,d_1-2|\alpha|}\cdot \frac{(\multif{2\alpha}(y))(x)}{|\orbit{\alpha}|}\cdot 2^{-O(d_1)}
        ~+~
        r(p)
    \]
    where $r(p)$ is a univariate polynomial in $p$ with degree less than $d_1 - 2|\alpha|$ 
    (and whose coefficients are in $\PR{d_2}$). 
\end{lemma}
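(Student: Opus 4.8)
The plan is to mirror, essentially verbatim, the computation carried out in the proof of \cref{lem:poly:gen:mult:2}, making only two bookkeeping changes; this is the precise sense in which the NNC folded analogue (\cref{lem:poly:fold:nnc:mult:eval}) is ``identical to'' \cref{lem:poly:nnc:mult:2}. First, since $f \in \FPR{d_1}{d_2}$ has the form $f(y) = \sum_{\beta \in \degmindex{d_1}} \fold{f}{\beta}(x)\cdot y^{\beta}$ and an evaluation of a folded polynomial substitutes values only into the outer (monomial) variable, each fold $\fold{f}{\beta}(x) \in \PR{d_2}$ is a fixed polynomial in the separate indeterminate $x$, and is in particular \emph{deterministic} with respect to the random vector $z$. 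Hence by linearity of expectation,
\[
  \Ex{(f(z))(x)\cdot \Xi\cdot \prod_{i\in [n]} \zeta_{i}}
  ~=~ \sum_{\beta \in \degmindex{d_1}} \fold{f}{\beta}(x)\cdot \Ex{z^{\beta}\cdot \Xi\cdot \prod_{i\in [n]} \zeta_{i}}\mper
\]
Second, we do not fix $\bx$ to be an optimizer: the arbitrary vector $y$ plays the role that $\bx$ played there, so what we need is exactly the polynomial identity that proof establishes along the way, not its subsequent optimization step.

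The core is then the per-$\beta$ expectation $\Ex{z^{\beta}\cdot \Xi\cdot \prod_i \zeta_i}$, where $z = \Xi\cdot y\circ \tfrac{1}{2\alpha+\one}\circ b + \tfrac{\sqrt{\alpha}\circ \zeta}{\sqrt{|\alpha|}}$; write $t := |\alpha|$ and $k := d_1 - 2|\alpha|$. Applying \cref{lem:poly:rand:rtou:moments} coordinatewise (the $\zeta_i$ being $(2\alpha_i+1)$-th roots of unity, $\Xi$ a $(k+1)$-th root of unity) forces $\beta \geq 2\alpha$, so we may write $\beta = 2\alpha + \gamma$ with $\gamma \in \degmindex{k}$. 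For $\gamma \leq \one$, the same matching of root-of-unity exponents together with the Bernoulli moment identity \cref{lem:poly:bernoulli:moments} gives, exactly as in \cref{lem:poly:gen:mult:2},
\[
  \Ex{z^{2\alpha+\gamma}\cdot \Xi\cdot \prod_{i\in[n]} \zeta_i}
  ~=~ p^{k}\cdot y^{\gamma}\cdot \prod_{i\in \supp{\alpha}}\, \frac{\alpha_i^{\alpha_i}}{t^{\alpha_i}}\mcom
\]
while for $\gamma \not\leq \one$ (still with $|\gamma| = k$) strictly fewer than $k$ distinct coordinates $b_i$ acquire a positive exponent, so by \cref{lem:poly:bernoulli:moments} the expectation is a monomial in $p$ of degree strictly less than $k$.

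Reassembling, the multilinear-$\gamma$ terms combine — using that by \cref{lem:poly:fold:split:gen:mult} one has $(\multif{2\alpha}(y))(x) = \sum_{\gamma \in \degbmindex{k}} \fold{f}{2\alpha+\gamma}(x)\cdot y^{\gamma}$ — into $p^{k}\cdot (\multif{2\alpha}(y))(x)\cdot \prod_{i\in \supp{\alpha}} \alpha_i^{\alpha_i}/t^{\alpha_i}$, and the remaining contributions assemble into a univariate $r(p)$ of degree $< k$ whose $p$-coefficients are $\RR$-linear combinations of the folds $\fold{f}{\beta}(x)$ with $|\beta| = d_1$, hence lie in $\PR{d_2}$. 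Finally $\prod_{i\in \supp{\alpha}} \alpha_i^{\alpha_i}/t^{\alpha_i} = \bigl(\prod_i \alpha_i^{\alpha_i}\bigr)/t^{t}$ is sandwiched, via Stirling and $|\orbit{\alpha}| = t!/\prod_i \alpha_i!$, between $2^{-O(d_1)}/|\orbit{\alpha}|$ and $1/|\orbit{\alpha}|$, which yields the displayed identity with its $2^{-O(d_1)}$ factor.

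I expect no genuine obstacle here; the only point needing care — and the reason this is stated separately rather than obtained by directly substituting into \cref{lem:poly:gen:mult:2} — is verifying that the randomization of $z$ lives entirely in the monomial variable and therefore commutes past the folds, so that ``$\fold{f}{\beta}(x)$ as coefficient'' behaves identically to ``$f_\beta$ as scalar coefficient'' throughout. Beyond that, the work is the same exponent-matching bookkeeping already done in \cref{lem:poly:gen:mult:2} (that $\beta \geq 2\alpha$ is forced, and that non-multilinear $\gamma$ contribute only terms of $p$-degree $< k$), now tracked with $\PR{d_2}$-valued rather than scalar coefficients.
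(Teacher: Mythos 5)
Your proposal is correct, and it is exactly the approach the paper takes: the paper's own proof of this lemma is the one-liner ``This follows by going through the proof of~\cref{lem:poly:gen:mult:2} for every fixed $x$,'' and your write-up fleshes out precisely that observation — the randomization in $z$ touches only the outer (monomial) variable of the folded polynomial, so the folds $\fold{f}{\beta}(x)\in\PR{d_2}$ commute past the expectation by linearity, after which the exponent-matching via \cref{lem:poly:rand:rtou:moments} and \cref{lem:poly:bernoulli:moments}, the isolation of the degree-$k$ term in $p$, and the Stirling comparison of $\prod_i\alpha_i^{\alpha_i}/t^{t}$ against $1/|\orbit{\alpha}|$ are verbatim the steps of \cref{lem:poly:gen:mult:2}.
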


\begin{proof}
    This follows by going through the proof of~\cref{lem:poly:gen:mult:2} for every fixed $x$. 
\end{proof}

\subsection{Bounding $\hscsos{C}{}$ of Multilinear Folded Polynomials}

Here we bound $\hscsos{C}{}$ of a multilinear folded polynomial in terms of 
properties of the polynomial that are inspired by treating the folds as coefficients 
and generalizing the coefficient-based approximations for regular (non-folded) polynomials 
from~\cref{thm:poly:gen:mult} and~\cref{thm:poly:nnc:mult}.

\subsubsection{General Folds: Bounding $\hssos{}$ in terms of $\hssos{}$ of the "worst" fold}
Here we will give a folded analogue of the proof of~\cref{thm:poly:gen:mult} wherein 
we used Gershgorin-Circle theorem to bound SOS value in terms of the 
max-magnitude-coefficient. 

\begin{lemma}[Folded Analogue of Gershgorin Circle Bound on Spectral Radius]\label{lem:poly:fold:gen:coefficient:sos}
	For even $d_1,d_2$, let $d=d_1+d_2$, let $f \in \FPR{d_1}{d_2}$ 
	be a multilinear $(d_1,d_2)$-folded polynomial. We have, 
	\[
		\hssos{f} ~\leq~ 
		2^{O(d)}~\frac{n^{d_1/2}}{d_1^{\,d_1}}\,
		\max_{\gamma\in \degbmindex{d_1}}
		\|\fold{f}{\gamma}\|_{sp}.
	\]
\end{lemma}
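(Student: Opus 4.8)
The plan is to build an explicit symmetric matrix representation of $\unfold{f}$ that respects the $(d_1,d_2)$-folded structure, and then bound its spectral norm by the block version of the Gershgorin circle theorem, in exact parallel with the proof of \cref{thm:poly:gen:mult}: the scalar coefficients $f_\beta$ appearing there get replaced here by matrix representations of the folds $\fold{f}{\gamma}$.

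First I would assemble the representation. Write $\unfold{f}(x) = \sum_{\gamma \in \degbmindex{d_1}} \fold{f}{\gamma}(x)\cdot x^{\gamma}$. For each $\gamma$ let $M_{\fold{f}{\gamma}} \in \sym{n^{d_2/2}}$ be a matrix representation of the degree-$d_2$ fold $\fold{f}{\gamma}$ with $\norm{2}{M_{\fold{f}{\gamma}}} = \|\fold{f}{\gamma}\|_{sp}$, and let $P_\gamma \in \sym{n^{d_1/2}}$ be the SoS-symmetric representation of the monomial $x^\gamma$, so $P_\gamma[I,J] = \ind{\alpha(I)+\alpha(J) = \gamma}/|\orbit{\gamma}|$, with $|\orbit{\gamma}| = d_1!$ wherever this is nonzero (by multilinearity of $\gamma$). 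Using the identification $[n]^{d/2} \cong [n]^{d_1/2}\times [n]^{d_2/2}$, so that $\xtensor{d/2} = \xtensor{d_1/2}\otimes \xtensor{d_2/2}$, set
\[
M ~:=~ \sum_{\gamma \in \degbmindex{d_1}} P_\gamma\otimes M_{\fold{f}{\gamma}} ~\in~ \sym{n^{d/2}} .
\]
A one-line check gives $(\xtensor{d/2})^T M\,\xtensor{d/2} = \sum_\gamma x^\gamma\cdot \fold{f}{\gamma}(x) = \unfold{f}(x)$, so $M$ is a valid symmetric representation and hence $\hssos{f} \leq \norm{2}{M}$.

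Next I would partition the rows and columns of $M$ according to the first tensor factor $[n]^{d_1/2}$. The $(I,J)$ block is $M_{(I,J)} = \sum_\gamma P_\gamma[I,J]\, M_{\fold{f}{\gamma}}$; since $\gamma$ ranges over multilinear indices, $\alpha(I)+\alpha(J)=\gamma$ can hold only when $I$ and $J$ separately have distinct coordinates with disjoint supports, in which case $\gamma = \alpha(I)+\alpha(J)$ is the unique solution and $M_{(I,J)} = \tfrac{1}{d_1!}M_{\fold{f}{\alpha(I)+\alpha(J)}}$, all other blocks vanishing. By the block-matrix Gershgorin bound used in \cref{sec:poly:warmup:n} (equivalently: the matrix of block spectral norms is symmetric and entrywise nonnegative, so its operator norm is at most its largest row sum), $\norm{2}{M} \leq \max_I \sum_J \norm{2}{M_{(I,J)}}$. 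For fixed $I$ the number of $J$ producing a nonzero block is at most $n^{d_1/2}$ (ordered $d_1/2$-tuples of coordinates avoiding those of $I$), and each such block satisfies $\norm{2}{M_{(I,J)}} = \tfrac{1}{d_1!}\|\fold{f}{\alpha(I)+\alpha(J)}\|_{sp} \leq \tfrac{1}{d_1!}\max_\gamma \|\fold{f}{\gamma}\|_{sp}$. Hence $\hssos{f} \leq \norm{2}{M} \leq \tfrac{n^{d_1/2}}{d_1!}\max_\gamma\|\fold{f}{\gamma}\|_{sp}$, and the Stirling estimate $d_1! = 2^{-O(d_1)}d_1^{d_1}$ together with $d_1 \leq d$ converts this into the claimed $2^{O(d)}\,(n^{d_1/2}/d_1^{d_1})\max_\gamma\|\fold{f}{\gamma}\|_{sp}$.

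The only step requiring care is bookkeeping: verifying that $M$ is genuinely symmetric and correctly represents $\unfold{f}$ under the chosen index identification, and that the block-norm/Gershgorin estimate applies verbatim to this partition. The single conceptual point is that multilinearity of the outer monomials forces at most $n^{d_1/2}$ nonzero blocks in each block-row --- the exact analogue of the Gershgorin count $n^{d/2}$ for ordinary multilinear polynomials in \cref{thm:poly:gen:mult} --- so I do not expect a genuine obstacle beyond lifting that argument to the block level.
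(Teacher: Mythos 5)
Your proof is correct and essentially the same as the paper's. You build the same block matrix — the $(I,J)$ block equal to $\tfrac{1}{d_1!}M_{\fold{f}{\mi{I}+\mi{J}}}$ whenever $\mi{I}+\mi{J}$ is multilinear, and zero otherwise — only via the somewhat more explicit packaging $M=\sum_\gamma P_\gamma\otimes M_{\fold{f}{\gamma}}$, and then apply block Gershgorin with the same count of at most $n^{d_1/2}$ blocks per block-row; the Stirling conversion from $d_1!$ to $d_1^{d_1}$ is also exactly as in the paper.
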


\begin{proof}
	Since $\hssos{f}\leq \|f\|_{sp}$, it is sufficient to bound 
	$\|f\|_{sp}$. 
	
	Let $M_{\fold{f}{\gamma}}$ be the matrix representation of 
	$\fold{f}{\gamma}$ realizing $\|\fold{f}{\gamma}\|_{sp}$. Let 
	$M_{f}$ be an $[n]^{d_1/2}\times [n]^{d_1/2}$ block matrix 
	with $[n]^{d_2/2}\times [n]^{d_2/2}$ size blocks, where for 
	any $I,J\in [n]^{d_1/2}$ the block of $M_f$ at index $(I,J)$ is 
	defined to be $\frac{1}{d_1!}\cdot M_{\fold{f}{\mi{I}+\mi{J}}}$. 
	Clearly $M_f$ (interpreted as an $[n]^{d/2}\times [n]^{d/2}$) is 
	a matrix representation of the unfolding of $f$ since $f$ is a 
	multilinear folded polynomial. Lastly, applying Block-Gershgorin 
	circle theorem to $M_f$ and upper bounding the sum of spectral norms 
	over a block row by $n^{d_1/2}$ times the max term implies the claim. 
\end{proof}

\subsubsection{Non-Negative Coefficient Folds: Relating SoS Value to the SoS Value of the $d_1/2$-collapse}
Observe that in the case of a multilinear degree-$d$ polynomial, the $d/2$-collapse corresponds (up to scaling) 
to the sum of a row of the SOS symmetric  matrix representation of the polynomial. We will next develop 
a folded analogue of the proof of~\cref{thm:poly:nnc:mult} wherein we employed Perron-Frobenius theorem to 
bound SOS value in terms of the $d/2$-collapse. 

\medskip

The proof here however, is quite a bit more subtle than in the general case above. This is because 
one can apply the block-matrix analogue of Gershgorin theorem (due to Feingold \etal \cite{feingold1962block}) 
to a matrix representation of the folded polynomial (whose spectral norm is an upper bound on $\hssos{}$) in the 
general case. Loosely speaking, this corresponds to bounding $\hssos{f}$ in terms of 
\[
	\max_{\gamma\in\degbmindex{k}} \sum_{\theta\in \degbmindex{k}} \hssos{\fold{f}{\gamma+\theta}}
\]
where $k=d_1/2$. This however is not enough in the nnc case as in order to win the $1/2$ in the exponent, 
one needs to relate $\hscsos{C}{f}$ to 
\[
	\max_{\gamma\in\degbmindex{k}} \hssos{ \sum_{\theta\in \degbmindex{k}} \fold{f}{\gamma+\theta}}.
\]
This however, cannot go through Block-Gershgorin since it is \textbf{not} true that the spectral norm of a non-negative block 
matrix is upper bounded by the max over rows of the spectral norm of the sum of blocks in that row. It instead, can only be upper bounded by the max over rows of the sum of spectral norms of the blocks in that row. 

To get around this issue, we skip the intermediate step of bounding $\hscsos{C}{f}$ by the spectral norm of a matrix 
and instead prove the desired relation directly through the use of pseudoexpectation operators. This involved first 
finding a pseudo-expectation based proof of Gershgorin/Perron-Frobenius bound on spectral radius that generalizes to folded 
polynomials in the right way. 

\begin{lemma}[Folded analogue of Perron-Frobenius Bound on Spectral Radius]\label{lem:poly:fold:nnc:collapse:sos}
	For even $d_1 = 2k$, let $f \in \NFPR{d_1}{d_2}$ be a multilinear 
	$(d_1,d_2)$-folded polynomial whose folds have non-negative 
	coefficients. Let $C$ be the system of polynomial constraints 
	given by $\{\|x\|_2^2 = 1; \forall \beta\in \degmindex{d_2}, 
	x^{\beta}\geq 0\}$. We have, 
	\[
		\hscsos{C}{f} ~\leq~ 
		\max_{\gamma\in \degbmindex{k}}
		\hscsos{C}{\fold{g}{\gamma}}\cdot \frac{1}{k!}
	\]
	where 
	\[
		\fold{g}{\gamma}(x) := 
		\fold{\collapse{k}{f}}{\gamma}
		=
		\sum_{
		\mathclap{
		\substack{
		\theta\leq \one - \gamma \\
		\theta\in \degmindex{k}
		}
		}
		} 
		\fold{f}{\gamma+\theta}(x).
	\]
\end{lemma}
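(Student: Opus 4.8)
The plan is to prove this Perron--Frobenius-type bound directly at the level of pseudoexpectations, which is what lets it survive the passage to folded (and non-multilinear) folds; in particular I would not pass through a matrix representation of $f$, for the reason flagged in the paragraph before the lemma (the spectral norm of a nonnegative block matrix is not controlled by the maximum over block-rows of the spectral norm of the sum of the blocks in that row). Fix a degree-$d$ pseudoexpectation operator $\PExc$ respecting $C$, where $d = d_1 + d_2 = 2k + d_2$, achieving $\hscsos{C}{f} = \PExc{\unfold{f}}$, and write the unfolding as $\unfold{f}(x) = \sum_{\delta \in \degbmindex{2k}} \fold{f}{\delta}(x)\, x^{\delta}$. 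The first step is a symmetric re-bracketing of the outer monomials: each $\delta \in \degbmindex{2k}$ splits as $\delta = \mu + \nu$ with $\mu,\nu \in \degbmindex{k}$ of disjoint support in exactly $\binom{2k}{k}$ ordered ways, so
\[
  \PExc{\unfold{f}} ~=~ \frac{1}{\binom{2k}{k}} \sum_{\substack{\mu,\nu \in \degbmindex{k} \\ \mu + \nu \le \one}} \PExc{\fold{f}{\mu+\nu}(x)\, x^{\mu} x^{\nu}} \mper
\]

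The heart of the argument is a pseudoexpectation AM--GM step replacing the classical $u_i u_j \le \tfrac12(u_i^2 + u_j^2)$ used to prove the row-sum bound on the spectral radius. I would show $\PExc{\fold{f}{\mu+\nu}(x)\,(x^{\mu} - x^{\nu})^2} \ge 0$: expanding $\fold{f}{\mu+\nu}(x) = \sum_{\eta \in \degmindex{d_2}} c_{\eta}\, x^{\eta}$ with all $c_{\eta} \ge 0$ (this is exactly where the non-negative-coefficient hypothesis on the folds is used), each summand $c_{\eta}\,\PExc{x^{\eta}\,(x^{\mu} - x^{\nu})^2}$ is non-negative because $x^{\eta} \ge 0$ is one of the moment-nonnegativity constraints in $C$, $(x^{\mu}-x^{\nu})^2$ is a square of a degree-$k$ polynomial, and the total degree $d_2 + 2k = d$ matches that of $\PExc$. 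Hence $\PExc{\fold{f}{\mu+\nu}(x)\, x^{\mu}x^{\nu}} \le \tfrac12\PExc{\fold{f}{\mu+\nu}(x)\, x^{2\mu}} + \tfrac12\PExc{\fold{f}{\mu+\nu}(x)\, x^{2\nu}}$. Summing over the index set $\{(\mu,\nu): \mu + \nu \le \one\}$, which is symmetric under swapping $\mu$ and $\nu$, the two halves coincide, and regrouping $\sum_{\nu:\,\mu+\nu\le\one} \fold{f}{\mu+\nu} = \fold{g}{\mu}$ gives $\PExc{\unfold{f}} \le \tfrac{1}{\binom{2k}{k}} \sum_{\mu \in \degbmindex{k}} \PExc{x^{2\mu}\, \fold{g}{\mu}(x)}$.

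To close, I would apply \cref{lem:poly:sos:replace} with $p = x^{\mu}$ (degree $k = (d - d_2)/2$, exactly the permitted degree) to get $\PExc{x^{2\mu}\fold{g}{\mu}(x)} = \PExc{(x^{\mu})^2\,\fold{g}{\mu}(x)} \le \PExc{(x^{\mu})^2}\cdot\hscsos{C}{\fold{g}{\mu}} \le \bigl(\max_{\gamma}\hscsos{C}{\fold{g}{\gamma}}\bigr)\,\PExc{x^{2\mu}}$, using $\PExc{x^{2\mu}} = \PExc{(x^{\mu})^2} \ge 0$. It remains to bound $\sum_{\mu \in \degbmindex{k}} \PExc{x^{2\mu}}$: since $\sum_{\mu \in \degbmindex{k}} x^{2\mu}$ is the degree-$k$ elementary symmetric polynomial in $x_1^2,\dots,x_n^2$, the multinomial expansion of $\bigl(\sum_i x_i^2\bigr)^k$ shows $\tfrac{1}{k!}\|x\|^{2k} - \sum_{\mu}x^{2\mu}$ has non-negative coefficients and is a non-negative combination of squares $(x^{\rho})^2$ of degree $2k \le d$; therefore $\sum_{\mu} \PExc{x^{2\mu}} \le \tfrac{1}{k!}\PExc{\|x\|^{2k}} = \tfrac{1}{k!}$, the last equality because $\PExc$ respects $\|x\|^2 = 1$ and $\|x\|^{2k}-1$ is a multiple of $\|x\|^2 - 1$ of degree $2k \le d$. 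Assembling the pieces yields $\hscsos{C}{f} \le \tfrac{1}{k!\binom{2k}{k}}\max_{\gamma}\hscsos{C}{\fold{g}{\gamma}} \le \tfrac{1}{k!}\max_{\gamma}\hscsos{C}{\fold{g}{\gamma}}$; as a consistency check, in the unfolded case $d_2 = 0$ the constant $\tfrac{1}{k!\binom{2k}{k}}$ equals the $(d/2)!/d!$ normalization appearing in the row sums of $\sfM_f$, so this recovers exactly the classical estimate used in \cref{thm:poly:nnc:mult}.

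The step I expect to be the main obstacle is making the AM--GM inequality legitimate in the folded setting: one has to ensure the square $(x^{\mu}-x^{\nu})^2$ combines correctly with the (generally non-multilinear) monomials $x^{\eta}$ of the fold $\fold{f}{\mu+\nu}$, which forces the argument through the moment-nonnegativity constraints $x^{\eta} \ge 0$ rather than through any matrix/block-Gershgorin route. A secondary, purely bookkeeping, concern is keeping every use of the pseudoexpectation axioms and of \cref{lem:poly:sos:replace} within degree $d = d_1 + d_2$, which the computation above does with no slack.
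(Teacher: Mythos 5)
Your argument is correct and follows essentially the same route as the paper: re-bracket the unfolding into a doubly-indexed sum over complementary halves, perform the AM--GM step $\PExc{\fold{f}{\alpha}\,x^{\mu}x^{\nu}}\le\tfrac12\PExc{\fold{f}{\alpha}(x^{2\mu}+x^{2\nu})}$ at the level of pseudoexpectations justified by the moment nonnegativity constraints in $C$, regroup to form the collapse $\fold{g}{\gamma}$, and close with \cref{lem:poly:sos:replace} and the normalization $\PExc{\|x\|_2^{2k}}=1$. The only cosmetic difference is that the paper indexes by ordered tuples $I,J\in[n]^{k}$ rather than by multilinear multi-indices $\mu,\nu\in\degbmindex{k}$, which turns your final inequality $\sum_{\mu}\PExc{x^{2\mu}}\le 1/k!$ into the identity $\sum_{I}\PExc{(x^I)^2}=\PExc{\|x\|_2^{2k}}=1$; both derivations in fact produce the sharper constant $k!/(2k)!$, which the lemma (harmlessly) relaxes to $1/k!$.
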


\begin{proof}	
	Consider any pseudo-expectation operator $\PExc$ of degree 
	at least $d_1+d_2$. Note that since $\PExc$ satisfies 
	$\{\forall \beta\in \degmindex{d_2}, x^{\beta}\geq 0\}$, 
	by linearity $\PExc$ must also satisfy $\{h\geq 0\}$ for any 
	$h\in \NPR{d_2}$ - a fact we will use shortly. 
	
	Since $f$ is a multilinear folded polynomial, $\fold{f}{\alpha}$
	is only defined when $0\leq \alpha\leq \one$. If 
	$\alpha\not\leq \one$, we define $\fold{f}{\alpha}:= 0$
	We have, 
	\begin{align*}
		\PExc{f} 
		&= 
		\sum_{\mathclap{\alpha\in \degbmindex{d_1}}} 
		\PExc{\fold{f}{\alpha}\cdot x^{\alpha}} 
		&&(f \text{ is a multilinear folded polynomial})\\
		&= 
		\sum_{I\in [n]^{k}} 
		\sum_{J\in [n]^{k}}
		\PExc{\fold{f}{\mi{I}+\mi{J}}\cdot x^{I}x^{J}} 
		\cdot \frac{1}{d_1!}
		&&\text{(by multilinearity)} \\
		&\leq 
		\sum_{I\in [n]^{k}} 
		\sum_{J\in [n]^{k}}
		\PExc{\fold{f}{\mi{I}+\mi{J}}\cdot 
		\frac{(x^{I})^2+(x^{J})^2}{2}} 
		\cdot \frac{1}{d_1!}
		&&(\PExc \text{ satisfies } \fold{f}{\alpha} \geq 0) \\
		&= 
		\sum_{I\in [n]^{k}} 
		\sum_{J\in [n]^{k}}
		\PExc{\fold{f}{\mi{I}+\mi{J}}\cdot (x^{I})^2} 
		\cdot \frac{1}{d_1!} \\
		&= 
		\sum_{I\in [n]^{k}} 
		\PExc{(x^{I})^2\cdot 
		\sum_{J\in [n]^{k}}\fold{f}{\mi{I}+\mi{J}} } 
		\cdot \frac{1}{d_1!} \\
		&= 
		\sum_{I\in [n]^{k}} 
		\PExc{(x^{I})^2\cdot 
		\sum_{
		\mathclap{
		\substack{
		\theta\leq \one - \mi{I} \\
		\theta\in \degmindex{k}
		}
		}
		} 
		\fold{f}{\mi{I}+\theta} 
		} 
		\cdot \frac{k!}{d_1!} 
		&&\text{(by multilinearity)} \\
		&= 
		\sum_{I\in [n]^{k}} 
		\PExc{(x^{I})^2\cdot 
		g_{\mi{I}}
		} 
		\cdot \frac{k!}{d_1!} \\
		&\leq 
		\sum_{I\in [n]^{k}} 
		\PExc{(x^{I})^2}
		\cdot \hscsos{C}{\fold{g}{\mi{I}}}
		\cdot \frac{1}{k!}
		&&\text{(by~\cref{lem:poly:sos:replace})}\\
		&\leq 
		\sum_{I\in [n]^{k}} 
		\PExc{(x^{I})^2}
		\cdot \max_{\gamma\in \degbmindex{k}}\hscsos{C}{\fold{g}{\gamma}}
		\cdot \frac{1}{k!}
		&&(\PExc{(x^{I})^2} \geq 0)\\
		&= 
		\PExc{\|x\|_2^{d_1}}
		\cdot \max_{\gamma\in \degbmindex{k}}\hscsos{C}{\fold{g}{\gamma}}
		\cdot \frac{1}{k!} \\
		&= 
		\max_{\gamma\in \degbmindex{k}}\hscsos{C}{\fold{g}{\gamma}}
		\cdot \frac{1}{k!} 
	\end{align*}
\end{proof}

We are finally equipped to prove the main results of this section.

\subsection[Final Approxiation Result for NNC Polynomials]
{$(n/q)^{d/4 - 1/2}$-Approximation for Non-negative Coefficient Polynomials} 

\begin{theorem}\label{thm:poly:nnc:d/4-1/2}
	Consider any $f \in \NPR{d}$ for $d\geq 2$, and any $q$ 
	divisible by $2d$. Let $C$ be the system of 
	polynomial constraints given by $\{\|x\|_2^2 = 1; \forall 
	\beta\in \degmindex{2q/d}, x^{\beta}\geq 0\}$. Then we have, 
	\[
		\frac{\hscsos{C}{f^{q/d}}^{d/q}}{\ftwo{f}} ~\leq~ 
		2^{O(d)}~\frac{n^{d/4-1/2}}{q^{d/4-1/2}}.
	\]
\end{theorem}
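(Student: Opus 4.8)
The plan is to instantiate the ``easy substructure'' principle (\cref{thm:poly:easy-substructure}, in its non-negative-coefficient form) with $d_1 = d-2$, $d_2 = 2$ and $\Gamma = 1$: the folds will be quadratic forms, which constrained SoS handles up to constant factors, so that higher levels improve the plain $(n/q)^{d/4}$ bound of \cref{thm:poly:nnc:d/4} by a factor $(n/q)^{1/2}$. As a first reduction, set $g \defeq f^{q/d}$, a degree-$q$ polynomial with non-negative coefficients. Since $\ftwo{\cdot} = \fmax{\cdot}$ for such polynomials and $\fmax{f^{q/d}} = \fmax{f}^{q/d}$, we have $\ftwo{g} = \ftwo{f}^{q/d}$, so $\hscsos{C}{f^{q/d}}^{d/q}/\ftwo{f} = \pth{\hscsos{C}{g}/\ftwo{g}}^{d/q}$; writing $D_1 \defeq (d-2)q/d$ and $D_2 \defeq 2q/d$, it therefore suffices to show
\[
  \frac{\hscsos{C}{g}}{\ftwo{g}}~\leq~2^{O(q)}\,\pth{n/q}^{D_1/4} \mcom
\]
since $(D_1/4)\cdot(d/q) = (d-2)/4 = d/4 - 1/2$ and the $2^{O(q)}$ loss becomes $2^{O(d)}$ upon raising to the power $d/q$.

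The first real step is to exhibit $g$ as a folded polynomial whose folds are tractable. Exactly as in the degree-$4$ warm-up of \cref{sec:poly:warmup:n}, one reads off from the SoS-symmetric representation of $f$ a degree-$(d-2,2)$ folded polynomial $h_f$ with $\unfold{h_f} = f$, with non-negative-coefficient folds, and with $\sup_{\|y\|=1}\ftwo{h_f(y)} = \Theta_d(\ftwo{f})$. Taking the product $h \defeq h_f^{q/d}$ of $q/d$ copies of $h_f$ gives a degree-$(D_1,D_2)$ folded polynomial with $\unfold{h} = g$, non-negative-coefficient folds, and $\sup_{\|y\|=1}\ftwo{h(y)} = \Theta_q(\ftwo{g})$. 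Moreover, each fold of $h$, and more generally each fold produced by the collapse operator below, is --- using the freedom to reassign mass across orbits in matrix representations of folded polynomials --- expressible as a direct sum of Kronecker products of $q/d$ quadratic forms drawn from the folds of $h_f$; the spectral norm of such a matrix is controlled by products of the factor norms, and since $\fsp{\cdot}$ is tight for quadratics this yields $\hscsos{C}{p} \leq \fsp{p} \leq 2^{O(q)}\,\ftwo{p}$ for every $p$ arising this way (using \cref{lem:poly:sos:nnc:monotonicity} and \cref{lem:poly:sos:replace} to handle the non-negative combinations introduced by the collapse and the moment-nonnegativity constraints of $C$).

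The second step is bookkeeping. Apply the folded multilinear reduction \cref{lem:poly:fold:gen:mult:sos} to bound $\hscsos{C}{h}$ by $(1+D_1/2)\max_{\alpha\in\udmindex{D_1/2}}\hscsos{C}{\multif{2\alpha}}/\abs{\orbit{\alpha}}$, where each $\multif{2\alpha}$ is a multilinear $(D_1-2\abs{\alpha},D_2)$-folded polynomial with non-negative-coefficient folds. For each such $\multif{2\alpha}$, with $k \defeq (D_1-2\abs{\alpha})/2$, the folded Perron--Frobenius bound \cref{lem:poly:fold:nnc:collapse:sos} gives $\hscsos{C}{\multif{2\alpha}} \leq \tfrac{1}{k!}\max_{\gamma\in\degbmindex{k}}\hscsos{C}{\fold{\collapse{k}{\multif{2\alpha}}}{\gamma}}$, and by the easiness just discussed $\hscsos{C}{\fold{\collapse{k}{\multif{2\alpha}}}{\gamma}} \leq 2^{O(q)}\ftwo{\fold{\collapse{k}{\multif{2\alpha}}}{\gamma}}$. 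For the matching lower bound on $\ftwo{g}$, fix the maximizing $\alpha$ and $\gamma$, let $x^*\geq 0$ optimize $\fold{\collapse{k}{\multif{2\alpha}}}{\gamma}$, and evaluate $h$ at the normalization of $\one/\sqrt{n} + \sum_{i\in\gamma}e_i/\sqrt{k} + \sqrt{\alpha}/\sqrt{\abs{\alpha}}$: combining the collapse identity for the SoS-symmetric matrix of $\multif{2\alpha}$ (as in the proof of \cref{thm:poly:nnc:mult}), the folded evaluation lemma \cref{lem:poly:fold:nnc:mult:eval}, and $\sup_{\|y\|=1}\ftwo{h(y)} = \Theta_q(\ftwo{g})$ yields
\[
  \ftwo{g}~\geq~2^{-O(q)}\cdot\frac{1}{\abs{\orbit{\alpha}}}\cdot\frac{1}{(nk)^{k/2}}\cdot\ftwo{\fold{\collapse{k}{\multif{2\alpha}}}{\gamma}} \mper
\]
Dividing the upper bound by the lower bound, the orbit factors $\abs{\orbit{\alpha}}$ and the common $\ftwo{\fold{\collapse{k}{\multif{2\alpha}}}{\gamma}}$ cancel, leaving $2^{O(q)}(nk)^{k/2}/k! = 2^{O(q)}(n/k)^{k/2}$, which since $k \leq D_1/2 = \Theta_d(q)$ (and $q \leq n$) is at most $2^{O(q)}(n/q)^{D_1/4}$, as required.

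The main obstacle is the easiness claim in the second paragraph: one must verify that after forming $h_f^{q/d}$, passing to the multilinear part, and applying the $k$-collapse, the resulting folds truly lie in a subspace on which the \emph{constrained} value $\hscsos{C}{\cdot}$ --- with the moment-nonnegativity constraints of $C$, not merely the sphere constraint --- stays within $2^{O(q)}$ of $\ftwo{\cdot}$. This requires (a) choosing matrix representations of the folds that realize them as direct sums of Kronecker products of PSD-representable quadratic forms, for which the orbit-averaging underlying the $1/\abs{\orbit{\alpha}}$ in \cref{lem:poly:fold:gen:mult:sos} is essential; (b) checking that $\collapse{k}{\cdot}$ and the attendant non-negative combinations preserve this structure; and (c) constructing $h_f$ itself with non-negative-coefficient folds \emph{and} $\sup_{\|y\|=1}\ftwo{h_f(y)} = \Theta_d(\ftwo{f})$ for general $d$ (the degree-$4$ instance is explicit in \cref{sec:poly:warmup:n}, but the general split of the SoS-symmetric matrix of $f$ into a $(d-2,2)$-folded form needs care). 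A secondary, purely arithmetical difficulty is tracking the normalizations of the test vector and the homogeneity rescalings so that the exponents land exactly on $(n/q)^{d/4-1/2}$ after raising to the power $d/q$.
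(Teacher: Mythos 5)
Your outline is essentially the paper's: pass to $g = f^{q/d}$, exhibit a $(d-2,2)$-folded $h_f$ with $\unfold{h_f}=f$ and non-negative folds, take $s := h_f^{q/d}$ (your $h$), apply the folded multilinear reduction (\cref{lem:poly:fold:gen:mult:sos}) and the folded Perron--Frobenius collapse bound (\cref{lem:poly:fold:nnc:collapse:sos}), and lower-bound $\ftwo{f}^{q/d}$ by evaluating at the test vector $z = \one/\sqrt{n} + \gamma/\sqrt{k} + \sqrt{\alpha}/\sqrt{\abs{\alpha}}$; your exponent arithmetic ($(nk)^{k/2}/k!$ versus $(n/q)^{\barq/4}$) also checks out. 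But at the decisive step you diverge from the paper and land on an unresolved difficulty that the paper's argument is specifically engineered to avoid.

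You propose to cancel $\ftwo{\fold{\collapse{k}{S_{2\alpha}}}{\gamma}}$ between the two sides, which forces you to show $\hscsos{C}{\fold{\collapse{k}{S_{2\alpha}}}{\gamma}} \leq 2^{O(q)}\,\ftwo{\fold{\collapse{k}{S_{2\alpha}}}{\gamma}}$ for the collapsed folds --- your ``easiness claim,'' which you honestly flag as the main obstacle and do not close. The collapsed fold is a large nonnegative combination (over $\theta$) of folds of $s$, each of which is a product of $q/d$ quadratics after an orbit-average; justifying the inequality above for such objects via ``reassigning mass across orbits'' would require its own non-trivial argument. The paper sidesteps this entirely by keeping $\hscsos{C}{\cdot}$ on \emph{both} sides of the comparison. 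Its lower bound (eq:poly:nnc:rte) is
\[
  \ftwo{f}^{q/d}~\geq~\frac{2^{-O(q)}\cdot\hscsos{C}{\fold{\collapse{k}{S_{2\alpha}}}{\gamma}}}{k^{k/2}\,n^{k/2}\,\abs{\orbit{\alpha}}}\mcom
\]
not the $\ftwo{}$ version you wrote. The chain is: $\ftwo{f}^{q/d} \geq 2^{-O(q)}\ftwo{(h_f(z))^{q/d}}$ (via $y := a+z$ where $a$ maximizes the quadratic $(h_f(z))(x)$, using only coefficient-wise non-negativity of $f$, not the identity $\sup_{\|y\|=1}\ftwo{h_f(y)}=\Theta_d(\ftwo{f})$ you invoke); then SOS-exactness is invoked \emph{once}, on the genuine power-of-quadratic $(h_f(z))^{q/d}$, to replace $\ftwo{}$ by $\hscsos{C}{}$; then the folded evaluation lemma (\cref{lem:poly:fold:nnc:mult:eval}) gives a \emph{coefficient-wise} domination that, fed through the monotonicity of $\hscsos{C}{\cdot}$ under the moment non-negativity constraints (\cref{lem:poly:sos:nnc:monotonicity}), passes the $\hscsos{C}{}$ value down through $S_{2\alpha}$ to the collapsed fold. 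When this lower bound is divided into the upper bound from \cref{lem:poly:fold:gen:mult:sos} and \cref{lem:poly:fold:nnc:collapse:sos}, the $\hscsos{C}{\fold{\collapse{k}{S_{2\alpha}}}{\gamma}}$ terms cancel identically, so no separate SOS-tightness claim for collapsed folds is ever needed. Your item (c) (existence of $h_f$) is benign --- split the SoS-symmetric tensor of $f$ along a row/column pair as in \cref{sec:poly:warmup:n}; non-negativity of the folds follows from non-negativity of $f$'s coefficients. But (a) and (b) as you framed them are not needed once you reorganize the lower bound around $\hscsos{C}{\cdot}$-monotonicity.
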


\begin{proof}
    Let $h$ be any $(d-2,2)$-folded polynomial whose unfolding yields 
	$f$ and whose folds have non-negative coefficients and 
	let $s$ be the $(\barq,2q/d)$-folded 
	polynomial given by $h^{q/d}$ where $\barq:= (d-2)q/d$. 
	Finally, consider any $\alpha\in \udmindex{\barq/2}$ and let  
	$S_{2\alpha}$ be the multilinear component of $s$ as defined in~\cref{lem:poly:fold:split:gen:mult}. 
	We will establish that for any $\gamma\in \degbmindex{k}$ (where 
	$k:= \barq/2 - |\alpha|$), 
	\begin{equation}\label{eq:poly:nnc:rte}
		\ftwo{f}^{q/d} 
		\geq  
		\frac{2^{-O(q)}\cdot \hscsos{C}{\fold{\collapse{\barq/2 - |\alpha|}{S_{2\alpha}}}{\gamma}}}
		{(\barq/2 - |\alpha|)^{\barq/4 - |\alpha|/2}
		\cdot |\orbit{\alpha}|\cdot n^{\barq/4-|\alpha|/2}}
	\end{equation}
    which on combining with the application of~\cref{lem:poly:fold:gen:mult:sos} 
    to $s$ and its composition with~\cref{lem:poly:fold:nnc:collapse:sos}, yields 
    the claim. To elaborate, we apply~\cref{lem:poly:fold:gen:mult:sos} to $s$ with 
    $d_1 = \barq, d_2=2q/d$ and then for every $\alpha\in \udmindex{\barq/2}$ 
    we apply~\cref{lem:poly:fold:nnc:collapse:sos} with 
    $d_1 = \barq-2|\alpha|, d_2 = 2q/d$, to get 
    \[
        \hscsos{C}{f^{q/d}} 
        =
        \hscsos{C}{s}
        \leq 
        2^{O(q)}\cdot 
        \max_{\alpha\in \udmindex{\barq/2}} ~
        \max_{\gamma\in\degbmindex{\barq/2 - |\alpha|}} 
        \frac{\hscsos{C}{\fold{\collapse{\barq/2 - |\alpha|}{S_{2\alpha}}}{\gamma}}}
        {(\barq/2 - |\alpha|)!\cdot |\orbit{\alpha}|}
    \]
    which on combining with~\cref{eq:poly:nnc:rte} yields the claim. 
    \medskip 

    It remains to establish~\cref{eq:poly:nnc:rte}. 
    So fix any $\alpha, \gamma$ satisfying the above conditions. 
    Let $t:= |\alpha|$ and let $k:= \barq/2 - |\alpha|$. 
    Clearly $\ftwo{f} \geq f(y/\|y\|_2)$ where $y:=a+z$, and 
	\[
		z := 
		\frac{\one}{\sqrt{n}} +
		\frac{\gamma}{\sqrt{k}} +
		\frac{\sqrt{\alpha}}{\sqrt{t}} 
	\]
	and $a$ is the unit vector that maximizes the quadratic polynomial
	\[
		(h(z))(x).
	\]
	Since $\|y\|_2 = O(1)$, $\ftwo{f} \geq f(y)/2^{O(d)}$. 
	Now clearly by non-negativity we have, 
	\begin{align*}
		f(y) 
		\geq 
		(h(z))(a)
		= \ftwo{h(z)}
	\end{align*}
	Thus we have, 
	\begin{align*}
	    \ftwo{f}^{q/d} 
	    &\geq 
            \ftwo{(h(z))(x)}^{q/d} \cdot 2^{-O(q)} \\
            &=\ftwo{ {h(z)}^{q/d}(x)} \cdot 2^{-O(q)} \\
	    &= \hscsos{C}{ {h(z)}^{q/d}(x)} \cdot 2^{-O(q)} 
	    &&(\text{SOS exact on powered quadratics})\\
	    &= \hscsos{C}{s(z)(x)} \cdot 2^{-O(q)} \\
	    &\geq 
	    \hscsos{C}{S_{2\alpha}(\one/\sqrt{n}+\gamma/\sqrt{k})(x)} \cdot 
	    \frac{2^{-O(q)}}{|\orbit{\alpha}|}
	    &&(\text{by~\cref{lem:poly:sos:nnc:monotonicity}
	    and~\cref{lem:poly:fold:nnc:mult:eval}}) \\
	    &\geq 
	    \frac{\hscsos{C}{\fold{\collapse{k}{S_{2\alpha}}}{\gamma}}}{k^{k/2}\cdot n^{k/2}} \cdot 
	    \frac{2^{-O(q)}}{|\orbit{\alpha}|} 
	    &&(\text{by~\cref{lem:poly:sos:nnc:monotonicity}, and} \\
	    &~&& 
	    S_{2\alpha}(\frac{\one}{\sqrt{n}}+\frac{\gamma}{\sqrt{k}})
	    \geq 
	    \fold{\collapse{k}{S_{2\alpha}}}{\gamma} \text{ coefficient-wise}) 
	\end{align*}
	which completes the proof since we've established~\cref{eq:poly:nnc:rte}.
\end{proof}

\subsection[Final Approximation for General Polynomials]{$(n/q)^{d/2 - 1}$-Approximation for General Polynomials}
\begin{theorem}\label{thm:poly:gen:d/2-1}
	Consider any $f \in \NPR{d}$ for $d\geq 2$, and any $q$ divisible 
	by $2d$. Then we have, 
	\[
		\frac{\hssos{f^{q/d}}^{d/q}}{\ftwo{f}} ~\leq~ 
		2^{O(d)}~\frac{n^{d/2-1}}{q^{d/2-1}}.
	\]
\end{theorem}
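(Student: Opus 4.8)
The plan is to mirror the proof of \cref{thm:poly:nnc:d/4-1/2}, using the general‑polynomial folded lemmas (\cref{lem:poly:fold:gen:mult:sos}, \cref{lem:poly:fold:gen:coefficient:sos}, \cref{lem:poly:fold:gen:mult:eval}) in place of the non‑negativity‑specific ones. First write $f=\unfold h$ for a $(d-2,2)$-folded polynomial $h$ obtained by reshaping the SoS‑symmetric matrix representation $\sfM_f$: index rows and columns by $(d-2)/2$ coordinates together with one extra coordinate, so each $n\times n$ block becomes (after symmetrizing and scaling by $1/(d-2)!$) the matrix of a quadratic fold $\fold h\beta$, and $\unfold h=f$. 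The decoupling lemma \cref{lem:poly:decoupled:lower:bound} applied to the $d$-tensor of $\sfM_f$ gives $\ftwo f\ge 2^{-O(d)}\sup_{\|z\|=1}\ftwo{h(z)}$ (the reverse inequality being trivial). Put $\barq:=(d-2)q/d$ (an even integer, since $2d\mid q$) and $s:=h^{q/d}$, a $(\barq,2q/d)$-folded polynomial with $\unfold s=f^{q/d}$ and $s(z)=(h(z))^{q/d}$ for every $z$; for $\alpha\in\udmindex{\barq/2}$ let $S_{2\alpha}$ be the multilinear $(\barq-2|\alpha|,2q/d)$-folded component of $s$ from \cref{lem:poly:fold:split:gen:mult}. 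The reason to take $d_2=2$ is that $s(z)$ is always a power of a \emph{single} quadratic, so the $(q/d)$-th tensor power of the symmetric matrix of $h(z)$ is a matrix representation attaining the spectral bound; hence $\fsp{s(z)}=\ftwo{s(z)}$ (and $=\hssos{s(z)}$ since $q/d$ is even) --- this is the ``SoS exact on powered quadratics'' fact used in \cref{thm:poly:nnc:d/4-1/2}.

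For the upper bound, apply \cref{lem:poly:fold:gen:mult:sos} to $s$ with $C=\{\|x\|_2^2=1\}$ (empty moment‑nonnegativity set, so $\hscsos{C}{\cdot}=\hssos{\cdot}$) and $d_1=\barq$, $d_2=2q/d$, to get $\hssos{f^{q/d}}=\hssos{s}\le(1+\barq/2)\max_{\alpha}\hssos{S_{2\alpha}}/|\orbit{\alpha}|$. Then apply the folded Gershgorin bound \cref{lem:poly:fold:gen:coefficient:sos} to each multilinear folded $S_{2\alpha}$ (with $d_1=\barq-2|\alpha|$, $d_2=2q/d$, both even), giving $\hssos{S_{2\alpha}}\le 2^{O(q)}\,n^{(\barq-2|\alpha|)/2}(\barq-2|\alpha|)^{-(\barq-2|\alpha|)}\max_\gamma\fsp{\fold{(S_{2\alpha})}{\gamma}}$, with $\gamma$ ranging over multilinear indices of degree $\barq-2|\alpha|$. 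Combining,
\[
  \hssos{f^{q/d}}~\le~2^{O(q)}\max_{\alpha,\gamma}\ \frac{n^{(\barq-2|\alpha|)/2}}{(\barq-2|\alpha|)^{\barq-2|\alpha|}\,|\orbit{\alpha}|}\ \fsp{\fold{(S_{2\alpha})}{\gamma}}\,.
\]

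For the lower bound I would establish, for all such $\alpha,\gamma$, the analogue of \cref{eq:poly:nnc:rte}:
\[
  \ftwo f^{q/d}~\ge~\frac{2^{-O(q)}}{(\barq-2|\alpha|)^{(\barq-2|\alpha|)/2}\,|\orbit{\alpha}|}\ \fsp{\fold{(S_{2\alpha})}{\gamma}}\,.
\]
Choose $y:=\gamma/\sqrt{\barq-2|\alpha|}$ (unit norm); multilinearity in the fold‑variable of $S_{2\alpha}$ then gives the polynomial identity $S_{2\alpha}(y)=(\barq-2|\alpha|)^{-(\barq-2|\alpha|)/2}\,\fold{(S_{2\alpha})}{\gamma}$ (in $x$). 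Feed this $y$ into the folded weak‑decoupling identity \cref{lem:poly:fold:gen:mult:eval} for $s$ (with $d_1=\barq$): it exhibits a norm‑$O(1)$ random complex vector $z$ with $\Ex{(s(z))(x)\cdot\Xi\prod_i\zeta_i}=p^{\,\barq-2|\alpha|}\cdot|\orbit{\alpha}|^{-1}(S_{2\alpha}(y))(x)\cdot 2^{-O(q)}+r(p)$, $\deg r<\barq-2|\alpha|$, as an identity of polynomials in $x$. Now extract the leading $p$-coefficient by Chebyshev's inequality \cref{lem:poly:chebyshev}, carried out at the level of \emph{matrix representations}: the leading coefficient is a fixed linear combination, with $2^{O(q)}$-bounded scalar weights over the $O(\barq)$ Chebyshev nodes, of the matrices $\Ex{\text{phase}\cdot M_{s(z)}}$; by sub‑additivity of $\fsp{\cdot}$, of $\Ex{\cdot}$, and of scalar phases, $\fsp{S_{2\alpha}(y)}\le 2^{O(q)}|\orbit{\alpha}|\max_p\Ex{\fsp{s(z)}}$. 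Since $\fsp{s(z)}=\ftwo{s(z)}$ (power of a quadratic) and $\ftwo{s(z)}\le 2^{O(q)}\sup_{\|z'\|=1}\ftwo{h(z')}^{q/d}\le 2^{O(q)}\ftwo f^{q/d}$ by the decoupling above plus homogeneity of $h$ (and \cref{lem:poly:complex:to:real} to return from the complex estimate), we obtain $\fsp{S_{2\alpha}(y)}\le 2^{O(q)}|\orbit{\alpha}|\,\ftwo f^{q/d}$, which is the displayed inequality after substituting the identity for $S_{2\alpha}(y)$. Unlike the NNC proof, we use neither the $\one/\sqrt n$ ``collapse'' summand in $z$ nor \cref{lem:poly:fold:nnc:collapse:sos}: using the cruder Gershgorin/spectral bound directly is exactly why the exponent of $n$ is $(\barq-2|\alpha|)/2$ rather than $(\barq-2|\alpha|)/4$, hence the final exponent $d/2-1$ instead of $d/4-1/2$.

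Dividing the two displays, $\fsp{\fold{(S_{2\alpha})}{\gamma}}$ and $|\orbit{\alpha}|$ cancel and one is left with $\hssos{f^{q/d}}/\ftwo f^{q/d}\le 2^{O(q)}\max_{\alpha}(n/(\barq-2|\alpha|))^{(\barq-2|\alpha|)/2}\le 2^{O(q)}(n/\barq)^{\barq/2}$ (the maximum over $|\alpha|$ is at $|\alpha|=0$, as $\barq\le q\le n$); since $\barq=(d-2)q/d$ and $d/(d-2)\le 3$ for $d\ge 3$, this is $2^{O(q)}(n/q)^{(d-2)q/(2d)}$, and taking $d/q$-th powers gives $2^{O(d)}(n/q)^{d/2-1}$ as claimed. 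I expect the main obstacle to be the lower bound: one must carry the entire weak‑decoupling/Chebyshev extraction through in terms of the \emph{spectral} relaxation $\fsp{\cdot}$ (so that it exactly matches the block‑Gershgorin upper bound), which requires propagating the identity $\fsp{(\cdot)}=\ftwo{(\cdot)}$ for powers of a quadratic through linear operations on matrix representations, handling the complex phases and the complex‑to‑real passage along the way, and verifying that all the spurious combinatorial factors ($|\orbit{\alpha}|$, the powers of $\barq-2|\alpha|$, the $4^{-O(q)}$ Chebyshev loss) are absorbed into $2^{O(q)}$ and telescope. A smaller but necessary point is justifying the construction of $h$ and the two‑sided comparison of $\ftwo f$ with $\sup_{\|z\|=1}\ftwo{h(z)}$.
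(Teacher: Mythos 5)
Your proposal follows the paper's proof of \cref{thm:poly:gen:d/2-1} essentially line by line: the same $(d-2,2)$-folded polynomial $h$ built by reshaping $\sfM_f$, the same two-step upper bound obtained by applying \cref{lem:poly:fold:gen:mult:sos} to $s=h^{q/d}$ and then \cref{lem:poly:fold:gen:coefficient:sos} to each $S_{2\alpha}$, and the same lower bound $\ftwo{f}^{q/d}\ge 2^{-O(q)}\,\fsp{\fold{(S_{2\alpha})}{\gamma}}/\bigl((\barq-2|\alpha|)^{(\barq-2|\alpha|)/2}|\orbit{\alpha}|\bigr)$ via \cref{lem:poly:fold:gen:mult:eval} plus Chebyshev carried out at the level of $\fsp{\cdot}$. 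The only cosmetic difference is in the Chebyshev extraction: you express the leading coefficient as a $2^{O(q)}$-weighted Lagrange combination over $O(q)$ nodes and use subadditivity of $\fsp{\cdot}$, whereas the paper fixes a single extremal $p\in[0,1]$ (from \cref{lem:poly:chebyshev}) and lower-bounds $\max_p\fsp{\cdot}$ directly via a dual/singular-vector witness; both give the same $2^{O(q)}$ loss and the same telescoping when you take $d/q$-th powers, and the obstacles you flag (the complex-to-real passage and propagating spectral-exactness for powers of quadratics through $\Ex{\cdot}$ and phases) are exactly the steps the paper also treats at the same level of detail.
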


\begin{proof}
    Let $h$ be the unique $(d-2,2)$-folded polynomial whose unfolding yields 
	$f$ and such that for any $\beta\in\degmindex{d-2}$, the fold $\fold{h}{\beta}$ of 
	$h$ is equal up to scaling, to the quadratic form	of the corresponding $(n\times n)$ block 
	of the SOS-symmetric matrix representation $\sfM_f$ of $f$. That is, for any 
	$I,J\in [n]^{d/2 -1}$, s.t. $\mi{I}+\mi{J} = \beta$, 
	\[
	    \fold{h}{\beta}(x) = \frac{x^T\sfM_f[I,J] x}{|\orbit{\beta}|}.
	\]
	Let $s$ be the $(\barq,2q/d)$-folded 
	polynomial given by $h^{q/d}$ where $\barq:= (d-2)q/d$.  
	Consider any $\alpha\in \udmindex{\barq/2}$ and 
	$\gamma\in \degbmindex{\barq - 2|\alpha|}$, 
	and let 	$S_{2\alpha}$ be the multilinear component of $s$ as defined 
        in~\cref{lem:poly:fold:split:gen:mult}. Below, we will show
	\begin{equation}\label{eq:poly:gen:rte}
		\ftwo{f}^{q/d} \geq \frac{2^{-O(q)}\cdot \fsp{\fold{(S_{2\alpha})}{\gamma}}}
		{(\barq - 2|\alpha|)^{\barq/2 - |\alpha|}\cdot |\orbit{\alpha}|}
	\end{equation}
	which would complete the proof after applying~\cref{lem:poly:fold:gen:mult:sos} to 
	$s$ and composing the result with~\cref{lem:poly:fold:gen:coefficient:sos}. 
	To elaborate, we apply~\cref{lem:poly:fold:gen:mult:sos} to $s$ with 
    $d_1 = \barq, d_2=2q/d$ and then for every $\alpha\in \udmindex{\barq/2}$ 
    we apply~\cref{lem:poly:fold:gen:coefficient:sos} with 
    $d_1 = \barq-2|\alpha|, d_2 = 2q/d$, to get 
    \[
        \hssos{f^{q/d}} 
        =
        \hssos{s}
        \leq 
        2^{O(q)}\cdot 
        \max_{\alpha\in \udmindex{\barq/2}} ~
        \max_{\gamma\in\degbmindex{\barq - 2|\alpha|}} 
        \frac{\|\fold{(S_{2\alpha})}{\gamma}\|_{sp}}
        {(\barq - 2|\alpha|)^{\barq - 2|\alpha|}\cdot |\orbit{\alpha}|}
    \]
    which on combining with~\cref{eq:poly:gen:rte} yields the claim. 
	\medskip 
	
	Fix any $\alpha, \gamma$ satisfying the above conditions. Let $k:=\barq-2\alpha$. 
	Let $t:= |\alpha|$, and let 
	\[
		z := 
		\Xi \cdot \frac{1}{\sqrt{k}}\cdot \gamma\circ \frac{1}{2\alpha+\one}\circ b ~~+~~ 
		\frac{\sqrt{\alpha}\circ \zeta}{\sqrt{t}} 
	\]  
	$\Xi$ is an independent and uniformly randomly chosen 
	$(k+1)$-th root of unity, and for any $i\in [n]$,  
	$\zeta_i$ is an independent and uniformly randomly chosen 
	$(2\alpha_i+1)$-th root of unity, and for any $i\in [n]$, $b_i$ is an 
	independent $\mathrm{Bernoulli}(p)$ random variable ($p$ is a 
	parameter that will be set later). 
	By~$\cref{lem:poly:decoupled:lower:bound}$ and definition of $h$, we see that for any 
	$y$, $\ftwo{f}^c \geq \|(h(y))(x)\|^c_2$. Thus, we have, 
	\begin{align*}
		\ftwo{f}^{q/d} 
		&= \ftwo{f^{q/d}} \\
		&\geq\ftwo{f^{q/d}}^c_2 \cdot 2^{-O(q)} 
		&&\text{(by~\cref{lem:poly:complex:to:real})} \\
		&\geq 
		\max_{p\in [0,1]}
                \Ex{\ftwo{{h(z)}^{q/d}(x)}} \cdot 2^{-O(q)} 
		&&(\text{by~\cref{lem:poly:decoupled:lower:bound}}) \\
		&=\max_{p\in [0,1]}\Ex{\|h(z)^{q/d}(x)\|_{sp}} \cdot 2^{-O(q)} 
		&&(\text{SOS exact on powered quadratics}) \\
		&=
		\max_{p\in [0,1]}
		\Ex{\|h(z)^{q/d}(x) \cdot \Xi \cdot 
		\prod_{\mathclap{i\in [n]}} \zeta_{i}
		\|_{sp}} \cdot 2^{-O(q)} \\
		&\geq 
		\max_{p\in [0,1]}
		\bigg{\|}\Ex{
		h(z)^{q/d}(x) \cdot 
		\Xi \cdot 
		\prod_{\mathclap{i\in [n]}} \zeta_{i}
		}\bigg{\|}_{sp} \cdot 2^{-O(q)} \\
		&= 
		\max_{p\in [0,1]}
		\bigg{\|}\Ex{
		(s(z))(x) \cdot 
		\Xi \cdot 
		\prod_{\mathclap{i\in [n]}} \zeta_{i}
		}\bigg{\|}_{sp} \cdot 2^{-O(q)} \\
		&= 
		\max_{p\in [0,1]}
		\bigg{\|}
		\,p^{k}\cdot \frac{(S_{2\alpha}(\gamma/\sqrt{k}))(x)}{|\orbit{\alpha}|} +
		r(p)
		\bigg{\|}_{sp} \cdot 2^{-O(q)} 
		&&(\text{by~\cref{lem:poly:fold:gen:mult:eval}, }\mathrm{deg}(r)<k)\\
		&= 
		\max_{p\in [0,1]}
		\bigg{\|}
		\,p^{k}\cdot \frac{\fold{(S_{2\alpha})}{\gamma}(x)}{k^{k/2}\cdot |\orbit{\alpha}|} + 
		r(p)
		\bigg{\|}_{sp} 
		\cdot 2^{-O(q)} \\
		&\geq 
		\frac{\fsp{\fold{(S_{2\alpha})}{\gamma}}}{k^{k/2}\cdot 
		|\orbit{\alpha}|} \cdot 2^{-O(q+k)} 
		&&( \text{Chebyshev Inequality~\cref{lem:poly:chebyshev}}) 
	\end{align*}
	where the last inequality follows by the following argument: one would like to show 
	that there always exists $p\in [0,1]$ such that 
	$\|p^k\cdot h_k(x) + \dots p^0\cdot h_0(x))\|_{sp} \geq \|h_k(x)\|_{sp}\cdot 2^{-O(k)}$. 
	So let $p$ be such that $|p^k\cdot u^T M_k v + \dots p^0\cdot u^T M_0 v| \geq 
	|u^T M_k v|\cdot 2^{-O(k)}$ (such a $p$ exists by Chebyshev inequality) where $M_k$ is the 
	matrix representation of $h_k(x)$ realizing $\|h_k\|_{sp}$ and $u,v$ are the 
	maximum singular vectors of $M_k$. $M_{k-1},\dots ,M_0$ are arbitrary matrix representations of 
	$h_{k-1}, \dots h_0$ respectively. But $p^k\cdot M_k  + \dots p^0\cdot M_0$ is a matrix 
	representation of $p^k\cdot h_k + \dots p^0\cdot h_0$. 
	Thus $\fsp{p^k\cdot h_k + \dots p^0\cdot h_0}\geq |u^T M_k v|/2^{-O(k)} = 
	\fsp{h_k}\cdot 2^{-O(q)}$. 
	
	This completes the proof as we've established~\cref{eq:poly:gen:rte}. 
\end{proof}

\subsection{Algorithms}

It is straightforward to extract algorithms 
from the proofs of~\cref{thm:poly:nnc:d/4-1/2} and~\cref{thm:poly:gen:d/2-1}.

\subsubsection{Non-negative coefficient polynomials}
Let $f$ be a degree-$d$ polynomial with non-negative 
coefficients and let $h$ be a $(d-2,2)$-folded polynomial 
whose unfolding yields $f$. Consider any $q$ divisible by 
$2d$ and let $\barq:= (d-2)q/d$. 
Pick and return the best vector from the set 
\[
	\inbraces{\frac{\one}{\sqrt{n}} \!+\!
		\frac{\sqrt{\alpha}}{\sqrt{|\alpha|}} \!+\!
		\frac{\gamma}{\sqrt{|\gamma|}} + 
		\mathop{\arg\max}~ 
		\bigg{\|}
		h\pth{\frac{\one}{\sqrt{n}} +
		\frac{\sqrt{\alpha}}{\sqrt{|\alpha|}} +
		\frac{\gamma}{\sqrt{|\gamma|}}}\!\!(x)
		\bigg{\|}_2 
		\sep{\alpha\in \udmindex{\barq/2}, 
		\gamma\in \degmindex{\barq/2-|\alpha|}}}
\]

\subsubsection{General Polynomials}
Let $f$ be a degree-$d$ polynomial and let $h$ be the unique $(d-2,2)$-folded polynomial whose unfolding yields $f$ and such that for any $\beta\in\degmindex{d-2}$, the fold $\fold{h}{\beta}$ of 
$h$ is equal up to scaling, to the quadratic form	of the corresponding $(n\times n)$ block 
of the SOS-symmetric matrix representation $\sfM_f$ of $f$. That is, for any 
$I,J\in [n]^{d/2 -1}$, s.t. $\mi{I}+\mi{J} = \beta$, 
\[
    \fold{h}{\beta}(x) = \frac{x^T\sfM_f[I,J] x}{|\orbit{\beta}|}.
\] 
Consider any $q$ divisible by $2d$ and let $\barq:= (d-2)q/d$. 
Let the set $S$ be defined by,  
\[
    S:= 
    \inbraces{
    \Xi \cdot \frac{1}{\sqrt{|\gamma|}}\cdot \gamma\circ \frac{1}{2\alpha+\one}\circ b +
	\frac{\sqrt{\alpha}\circ \zeta}{\sqrt{|\alpha|}} 
	~\sep{
        ~\begin{array}{l}
	\Xi\in \Omega_{k+1},~
	\zeta_i\in \Omega_{2\alpha_i+1},~
	b\in \{0,1\}^n, \\[5 pt]
	\alpha\in \udmindex{\barq/2},~
	\gamma\in \degbmindex{\barq-2|\alpha|} 
         \end{array}
       }
	}
\]
where $\Omega_p$ denotes the set of $p$-th roots of unity. 
Pick and return the best vector from the set 
\[
  \inbraces{c_1\cdot y + c_2\cdot \mathop{\arg\max}\ftwo{(h(y))(x)}\sep{y\in S,~c_1\in[-(d-2),
  (d-2)],~c_2\in [-2,2]}}
\]
Note that one need only search through all roots of unity 
vectors $\zeta$ supported on $\supp{\gamma}$ and all 
$\{0,1\}$-vectors $b$ supported on $\supp{\alpha}$. 
The~\cref{lem:poly:decoupled:lower:bound} can trivially be made constructive in 
time $2^{O(q)}$. Lastly, to go from complexes to reals, the~\cref{lem:poly:complex:to:real}
can trivially be made constructive using $2^{O(d)}$ time. Thus, the algorithm runs in time $n^{O(q)}$.

\section{Oracle Lower Bound}\label{sec:poly:query-lower-bound}

\newcommand{\ball}{\mathbb{B}}
\newcommand{\qqq}{\mathbb{Q}}
\newcommand{\rrr}{\mathbb{R}}
\newcommand{\sss}{\mathbb{S}}
\newcommand{\conv}{\operatorname{conv}}
Khot and Naor~\cite{KN08} observed that the problem of maximizing a polynomial over
unit sphere can be reduced to computing diameter of centrally symmetric convex body.
This observation was also used by So~\cite{So11} later. We recall the reduction here:
For a convex set $K$, let $K^\circ$ denote the polar of $K$, \ie~$K^\circ=\{y\suchthat 
\forall x\in K\innerprod{x,y}\le 1\}$. For a degree-$3$ polynomial $P(x,y,z)$ on $3n$
variables, let $\norm{P}{x}=\norm{sp}{P(x,\cdot,\cdot)}$ where $P(x,\cdot,\cdot)$ is a degree-2
restriction of $P$ with $x$ variables set. Let $\ball_P=\{x\suchthat \norm{P}{x}\le 1\}$.
From the definition of polar and $\norm{sp}{\cdot}$, we have:
\begin{align*}
 \max_{\norm{2}{x},\norm{2}{y},\norm{2}{z}\le 1} P(x,y,z) & = \max_{x\in\ball_2} \norm{P}{x}\\
  & = \max_{x\in\ball_P^\circ} \norm{2}{x}
\end{align*}

For general convex bodies, a lower bound for number of queries with ``weak separation oracle''
for approximating the diameter of the convex body was proved by Brieden \etal~\cite{BGKKLS01} and later
improved by Khot and Naor~\cite{KN08}. We recall the
definition:
\begin{definition}
  For a given a convex body $P$, a {\deffont weak separation oracle} $A$ is an algorithm which on input
  $(x,\eps)$ behaves as following:
  \begin{itemize}
    \item If $x\in A+\eps\ball_2$, $A$ accepts it.
    \item Else $A$ outputs a vector $c\in\qqq^n$ with $\norm{\infty}{c}=1$ such that for all $y$
      such that $y+\eps\ball_2\subset P$ we have $c^T x +\eps\ge c^T y$.
  \end{itemize}
\end{definition}

Let $K_{s,v}$ be the convex set $K^{(n)}_{s,v}=\conv\left(\ball_2^n\cup \{sv,-sv\}\right)$, for unit vector $v$.
Brieden \etal~\cite{BGKKLS01} proved the following theorem:
\begin{theorem}\label{}
  Let $A$ be a randomized algorithm, for every convex set $P$, with access to a weak separation oracle
  for $P$. Let $\calK(n,s)=\{K^{(n)}_{s,u}\}_{u\in\sss^{n-1}_2}\cup \{\ball^n_2\}$. If for every
  $K\in \calK(n,s)$ and $s=\frac{\sqrt n}{\lambda}$, we have:
  \[
    \Pr{A(K)\le \diam(K)\le \frac{\sqrt n}{\lambda}A(K)}\ge \frac{3}{4}
  \]
  where $\diam(K)$ is the diameter of $K$, then $A$ must use at least $\bigoh(\lambda 2^{\lambda^2/2})$
  oracle queries for $\lambda\in[\sqrt 2,\sqrt{n/2}]$.
\end{theorem}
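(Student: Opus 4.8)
The plan is a standard oracle adversary (indistinguishability) argument, in the spirit of Yao's principle.

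\textbf{Step 1: reduce approximate diameter to distinguishing.} On input $\ball_2^n$ the diameter is $2$, while on input $K^{(n)}_{s,u}=\conv(\ball_2^n\cup\{\pm su\})$ with $s=\sqrt n/\lambda$ the diameter is $2s=2\sqrt n/\lambda$, i.e. exactly a factor $\sqrt n/\lambda$ larger (the extreme pair is $su,-su$, and $K^{(n)}_{s,u}\subseteq s\ball_2^n$). Hence an algorithm meeting the stated guarantee must, with probability $\ge 3/4$, output a value $\le 2$ on $\ball_2^n$ and a value $\ge 2$ on $K^{(n)}_{s,u}$; after a routine perturbation of $\lambda$ (which changes the final query bound only by a constant factor) the two output ranges become strictly disjoint, so $A$ yields a distinguisher between $\ball_2^n$ and $K^{(n)}_{s,u}$ with success probability $\ge 3/4$ for every fixed $u$, hence also on average when $u$ is drawn uniformly from $\sss^{n-1}$ (independently of $A$'s internal coins). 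The admissible range $\lambda\in[\sqrt 2,\sqrt{n/2}]$ is exactly what makes $s>1$ (so the spike sticks out of the ball) and keeps the cap estimate below in its useful regime.

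\textbf{Step 2: fix the adversary and the consistency claim.} I would run $A$ against the fixed oracle that answers every query $(x,\eps)$ as a weak separation oracle for $\ball_2^n$ would: accept iff $\|x\|\le 1+\eps$, and otherwise reject and return the canonical separating direction $c_x\parallel x/\|x\|$ (scaled so $\|c_x\|_\infty=1$). With probability $\ge 3/4$ this run outputs a value $\le 2$. The crux is the claim that, for a uniformly random $u\in\sss^{n-1}$, with high probability all of $A$'s (at most $Q$) query answers are simultaneously valid answers for a weak separation oracle of $K^{(n)}_{s,u}$; on that event the whole transcript, and hence $A$'s output, coincides with the execution in which the true body is $K^{(n)}_{s,u}$.

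\textbf{Step 3: the geometric estimate and the union bound.} Since the support function of $K^{(n)}_{s,u}$ in a direction $v$ is $\max\{1,\,s\,|\langle u,v\rangle|\}$, the all-ball response to $(x,\eps)$ can fail to be admissible for $K^{(n)}_{s,u}$ only if $x$ lies in, or $\eps$-close to, the ``spike region'' of $K^{(n)}_{s,u}$ (the part of the body of norm $>1$), and a short computation on the erosion $K^{(n)}_{s,u}\ominus\eps\ball_2$ shows this forces the query direction to satisfy $|\langle u,\,x/\|x\|\rangle|\gtrsim 1/s=\lambda/\sqrt n$, uniformly in $A$'s adaptive choice of $(x,\eps)$. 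By the standard spherical-cap concentration estimate, for any fixed unit vector $\hat x$ the probability over $u$ that $|\langle u,\hat x\rangle|\ge \lambda/\sqrt n$ is at most $2e^{-\lambda^2/2}$, in fact $\lesssim \tfrac1\lambda e^{-\lambda^2/2}$. A union bound over the $\le Q$ queries bounds the probability that the transcript is inconsistent with $K^{(n)}_{s,u}$ by $O\!\big(Q\cdot \tfrac1\lambda e^{-\lambda^2/2}\big)$. If $Q=o(\lambda\,2^{\lambda^2/2})$ this is $\le 1/3$; together with the $\ge 3/4$ output bound in the all-ball run and the distinguisher from Step 1 this forces $A$ to be wrong on at least one of $\ball_2^n$, $K^{(n)}_{s,u}$ with positive probability — a contradiction. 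Hence $Q=\Omega(\lambda\,2^{\lambda^2/2})$.

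\textbf{Main obstacle.} The delicate point is the geometric estimate in Step 3: pinning down precisely when the all-ball oracle's rejection together with the hyperplane $c_x$ remains admissible for the \emph{eroded} spiked body $K^{(n)}_{s,u}\ominus\eps\ball_2$, i.e. controlling the erosion of $K^{(n)}_{s,u}$ near its vertices $\pm su$ and showing that ``learning about the spike'' genuinely requires a query direction $\Omega(\lambda/\sqrt n)$-correlated with the hidden direction $u$, with constants good enough to land the advertised $\lambda\,2^{\lambda^2/2}$ bound. The concentration-of-measure and union-bound steps, and the perturbation remark in Step 1, are routine.
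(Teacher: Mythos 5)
The paper does not prove this theorem; it is quoted from Brieden et al.\ \cite{BGKKLS01}, and the surrounding section only uses it as a black box (the section's own contribution is to realize the hard family $\{K^{(n)}_{s,u}\}_u\cup\{\ball_2^n\}$ as polar bodies $\ball_P^{\circ}$ of a family of cubic polynomials $P$, thereby transferring the query lower bound to the Khot--Naor diameter-based approach to polynomial optimization). So there is no in-paper proof to compare your argument against; I review it on its own merits.

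The structure --- answer every query as a weak separation oracle for $\ball_2^n$ would, and show that for most hidden $u$ the resulting transcript is simultaneously admissible for $K^{(n)}_{s,u}$ --- is the adversary argument of \cite{BGKKLS01}, and your geometric estimate is sound. Writing $\hat{x}=x/\|x\|$: the ball oracle accepts exactly when $\|x\|\le 1+\eps$, and this is forced to disagree with $K^{(n)}_{s,u}$'s oracle only when $x$ is $\eps$-close to the spike region, which requires $|\langle u,\hat{x}\rangle|\gtrsim 1/s$; in the reject case the separator $c_x\parallel x$ satisfies the defining inequality for $K^{(n)}_{s,u}$ whenever $s|\langle u,\hat{x}\rangle|\le 1$, since the support function of $K^{(n)}_{s,u}$ in direction $c_x$ is $\max\{\|c_x\|_2,\, s|\langle c_x,u\rangle|\}$, erosion by $\eps\ball_2$ lowers it by at least $\eps\|c_x\|_2$, and $c_x^T x>(1+\eps)\|c_x\|_2$. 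The spherical-cap tail, the observation that once $A$'s tape and the oracle are fixed the at most $Q$ query directions are deterministic, and a union bound then finish the argument; your $e^{\lambda^2/2}$ is slightly stronger than the quoted $2^{\lambda^2/2}$, an immaterial change of base.

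What is not adequately handled is the boundary issue you flag but do not resolve in Step 1. With the spike length $s=\sqrt{n}/\lambda$ equal to the approximation factor, the output windows $[2\lambda/\sqrt{n},\,2]$ on $\ball_2^n$ and $[2,\,2s]$ on $K^{(n)}_{s,u}$ meet at $2$: an algorithm that outputs exactly $2$ whenever its transcript is compatible with both bodies violates neither guarantee, so indistinguishability alone yields no contradiction. ``Perturbing $\lambda$'' does not repair this, because the theorem ties the spike length and the approximation factor to the same $\lambda$, and both windows slide while still meeting at $2$. One must decouple them --- e.g.\ take spike length $(1+\delta)s$ while demanding only a factor-$s$ approximation, or read the approximation guarantee as a strict inequality --- which costs only a constant factor in the query bound but is necessary for the argument to close.
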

Using $\lambda=\log n$, we get that to get $s=\frac{\sqrt n}{\log n}$ approximation to diameter,
$A$ must use super-polynomial number of queries to the weak separation oracle. We note that this was
later improved to give analogous lower bound on the number of queries for an approximation factor
$\sqrt{\frac{ n}{\log n}}$ by Khot and Naor~\cite{KN08}.

Below, we show that the family of hard convex bodies considered by Brieden \etal~\cite{BGKKLS01}
can be realized as $\{\ball_P^\circ\}_{P\in\calP}$ by a family of polynomials $\calP$ -- which,
in turn, establishes a lower bound of $\bigomega{\frac{\sqrt n}{\log n}}$ on the approximation
for polynomial optimization, achievable using this approach, for the case of $d=3$.
For a unit vector $u\in\sss_2^{n-1}$, let $P_u$ be the polynomial defined as:
\[
P_u(x,y,z)=\sum_{i=1}^n x_iy_iz_1+s\cdot (u^T x) y_nz_n.
\]
A matrix representation of $P_u(x,\cdot,\cdot)$, with rows indexed by $y$ and columns
indexed by $z$ variables is as follows:
\[
  A_u=
  \begin{pmatrix}
    x_1 & 0 & \cdots & 0 & 0\\
    x_2 & 0 & \cdots & 0 & 0\\
    \vdots & \vdots & \ddots & \vdots & \vdots\\
    x_{n-1} & 0 & \cdots & 0 & 0\\
    x_n & 0 & \cdots & 0 & s\cdot (u^T x)\\
  \end{pmatrix} \text { and so, }
  A_u^TA_u=
  \begin{pmatrix}
    \norm{2}{x}^2 & 0 & \ldots & 0 & 0\\
    0 & 0 & \cdots & 0 & 0\\
    \vdots & \vdots & \ddots & \vdots & \vdots\\
    0 & 0 & \cdots & 0 & 0\\
    0 & 0 & \cdots & 0 & s^2\cdot\abs{u^T x}^2\\
  \end{pmatrix}.
\]
This proves: $\norm{P_u}{x}=\norm{sp}{P_u(x,\cdot,\cdot)}=\norm{sp}{A_u}=\max\{\norm{2}{x},s\abs{u^Tx}\}$.

Let $B=\{x\suchthat \norm{2}{x}\le 1\}$ and $C_u=\{x\suchthat s\cdot \abs{u^Tx}\le 1\}$.
We note that, $B^\circ=\{y\in\rrr^n\suchthat
\norm{2}{y}\le 1\}$ 
and, $C_u^\circ=\{\lambda\cdot u\suchthat \lambda\in[-s,s]\}=\conv(\left\{ -s\cdot u,s\cdot u \right\})$.

Next, we observe: $\ball_{P_u}=B\cap C_u$. It follows from De Morgan's law of polars 
that: $\ball_{P_u}^\circ=(B\cap C_u)^\circ=\conv(B^\circ \cup C_u^\circ)=
\conv(\ball_2^n \cup \{-s\cdot u,s\cdot u\})=K_{s,u}^{(n)}$.
Finally, we observe that for the polynomial $P_0=\sum_{i=1}^n x_iy_iz_1$,
we have: $\ball_{P_0}=\ball_2^n$.

Hence, for polynomial $Q\in \calP=\{P_u\}_{u\in \sss_2^{n-1}}\cup \{ P_0 \}$, 
no randomized polynomial
can approximate $\diam{\ball_Q}$ within factor $\frac{\sqrt n}{q}$ 
without using more than $2^{\Omega(q)}$ number of queries. Since
the algorithm of Khot and Naor \cite{KN08} reduces the problem of optimizing polynomial $Q$ to computing
$\diam(\ball_Q)$, $\calP$ shows that their analysis is almost tight.

\section{Constant Level Lower Bounds for Polynomials with Non-negative Coefficients}\label{sec:poly:nnc-lowerbound}
Let $G = (V, E)$ be a random graph drawn from the distribution $G_{n, p}$ for $p \geq n^{-1/3}$.  Let $\cliques \subseteq \binom{V}{4}$ be
the set of $4$-cliques in $G$. 
The polynomial $f$ is defined as 
\[
f(x_1, \dots, x_n) := \sum_{\{ i_1, i_2, i_3, i_4 \} \in \cliques} x_{i_1} x_{i_2} x_{i_3} x_{i_4}. 
\]
Clearly, $f$ is multilinear and every coefficient of $f$ is nonnegative. 
In this section, we prove the following two lemmas that establish a polynomial gap 
between $\ftwo{f}$ and $\hssos{f}$. 
\begin{lemma}[Soundness]\label{lem:poly:cliques}
With probability at least $1-\frac1{n}$ over the choice of the graph $G$, we have $\ftwo{f} \leq n^2 p^6
\cdot \inparen{\log n}^{O(1)}$.
\end{lemma}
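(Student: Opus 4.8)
The proof proceeds in two stages: first turn the analytic optimum $\ftwo{f}$ into the clique-density parameter highlighted in the overview, and then bound that parameter by a union bound over $4$-tuples of vertex subsets, the unbalanced case of which is the real difficulty.

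\textbf{Stage 1: reduction to a clique-density parameter.} Since all coefficients of $f$ are non-negative, $\ftwo{f}=\sup\{f(x): x\ge 0,\ \norm{2}{x}=1\}$. Writing $x=x^{\mathrm{small}}+x^{\mathrm{big}}$ with $x^{\mathrm{small}}_i=x_i\cdot\ind{x_i\le n^{-10}}$ and expanding $f$ (monotone in each coordinate on $x\ge0$), every monomial that uses a coordinate of $x^{\mathrm{small}}$ contributes at most $n^{-10}$, and there are at most $\binom n4<n^4$ monomials, so $f(x)\le f(x^{\mathrm{big}})+n^{-6}$; as $n^2p^6\ge1$ this error is harmless. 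For the surviving coordinates form the $L+1=O(\log n)$ dyadic level sets $V_\ell:=\{i: x_i\in(2^{-\ell-1},2^{-\ell}]\}$. The constraint $\norm{2}{x}^2=1$ forces $|V_\ell|\cdot 2^{-2\ell-2}\le1$, i.e.\ $2^{-\ell}\le 2/\sqrt{|V_\ell|}$, so bounding each monomial $x_{i_1}x_{i_2}x_{i_3}x_{i_4}$ by the product of the dyadic upper bounds of the four buckets containing its variables, and summing over the at most $(L+1)^4$ multisets of buckets (a clique with level-multiset $\{\ell_1,\dots,\ell_4\}$ is counted by $\cliques_G(V_{\ell_1},\dots,V_{\ell_4})$), gives
\[
  \ftwo{f}\ \le\ n^{-6}+16(L+1)^4\cdot\Phi_G,\qquad \Phi_G:=\max_{\emptyset\ne S_1,S_2,S_3,S_4\subseteq V}\frac{|\cliques_G(S_1,S_2,S_3,S_4)|}{\sqrt{|S_1|\,|S_2|\,|S_3|\,|S_4|}}.
\]
Thus it suffices to show $\Phi_G\le n^2p^6(\log n)^{O(1)}$ with probability at least $1-\tfrac1n$.

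\textbf{Stage 2: the balanced case.} Fix sizes $s_1\ge s_2\ge s_3\ge s_4$ and let $X=|\cliques_G(S_1,\dots,S_4)|$; up to a factor $4!$ this is the number of ordered tuples $(v_1,\dots,v_4)\in S_1\times\cdots\times S_4$ spanning a $K_4$, so $X$ is a degree-$6$ polynomial in the i.i.d.\ edge indicators with $\mathbb E X\le 4!\,s_1s_2s_3s_4\,p^6\le 4!\,\sqrt{s_1s_2s_3s_4}\,n^2p^6$. When the $s_i$ are equal up to polylogarithmic factors and not too small, the mean dominates the fluctuations, and the large-deviation estimates for subgraph counts (Vu-type polynomial concentration and upper-tail bounds, \cite{Vu01,KV04,JOR04}) give a tail of the form $\Pr{X\ge(\log n)^{O(1)}\mathbb E X}\le\exp(-\omega(s_1\log n))$; since there are at most $\binom n{s_1}\cdots\binom n{s_4}\le\exp(O(s_1\log n))$ such tuples, a union bound handles all balanced tuples simultaneously with $X/\sqrt{s_1s_2s_3s_4}\le(\log n)^{O(1)}n^2p^6$.

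\textbf{The main obstacle: the unbalanced case.} This is where the work lies: once the $s_i$ differ substantially — in particular once some of them are small or of ``intermediate'' size — the cost $\exp(\Theta(s_1\log n))$ of choosing the sets swamps any additive Chernoff slack, so one cannot afford uniform control of $K_4$-counts inside arbitrary subsets. The line of attack is a peeling/conditioning argument: write
\[
X\ \le\ 4!\!\!\sum_{\substack{v_3\in S_3,\ v_4\in S_4\\ v_3\sim v_4}}\!\! e_G\!\bigl(S_1\cap N(v_3)\cap N(v_4),\ S_2\cap N(v_3)\cap N(v_4)\bigr),
\]
brute-force the small sets vertex-by-vertex (affordable, since $\binom n{s_i}$ is sub-exponential when $s_i$ is small and fixing a vertex drops the degree of the counting problem) while controlling the large sets by graph statistics that \emph{are} available uniformly with only polylog loss because they are maxima over vertices or over all pairs of sets — the uniform degree bound, the uniform codegree bound $|N(v)\cap N(v')|\le np^2(\log n)^{O(1)}$, and the two-sided edge-count bound $e_G(A,B)\le 2|A||B|p+O((|A|+|B|)\log n)$ — routing the six edges of a $K_4$ so that each is charged one factor of $p$. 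The delicate point, and the heart of the argument, is that a naive execution loses powers of $p$ (replacing $|S_i\cap N(v)\cap N(v')|$ by $\min(|S_i|,np^2)$ overshoots when $|S_i|$ is of intermediate size), so the case split must be organized so that every stage either sits in a regime where the relevant neighborhood/edge count genuinely concentrates around its mean (giving the missing $p$-factors) or is small enough to be enumerated; verifying that these two regimes can be glued together across the full range of size profiles $(s_1,s_2,s_3,s_4)$, with all errors kept to $(\log n)^{O(1)}n^2p^6$, completes Stage 2 and hence, with Stage 1, the lemma.
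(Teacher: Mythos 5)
Your Stage~1 reduction and your diagnosis of the difficulty are both on the right track, but the proposal stops short of a proof precisely at the point you yourself call ``the heart of the argument.'' Two issues:

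First, a wrinkle in Stage~1: the paper's clique-density parameter is a maximum over \emph{disjoint} sets $S_1,\dots,S_4$ (shattered cliques), and the subsequent counting lemmas are all stated and proved for disjoint sets. Your dyadic level sets $V_\ell$ are disjoint as a partition of $[n]$, but when you count a clique with level-multiset $\{\ell_1,\dots,\ell_4\}$ via $\cliques_G(V_{\ell_1},\dots,V_{\ell_4})$ you allow repeated buckets, so your $\Phi_G$ ranges over non-disjoint tuples. The paper sidesteps this by additionally splitting each level $Y_j$ at random into four parts $Y_{j,1},\dots,Y_{j,4}$ and drawing the $k$-th set from the $k$-th part; this forces disjointness at only a $4^4$ loss, and is what makes the expectation argument for shattered cliques go through. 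Your version is morally the same but would need this extra randomization (or an explicit argument that overlapping sets do not inflate the maximum by more than a constant).

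Second, and more importantly, Stage~2 is a plan, not a proof. You correctly identify that a union bound over subsets plus upper-tail bounds for subgraph counts handles the balanced case and fails once the $s_i$ spread out, and your peeling decomposition and list of uniform graph statistics (codegree, triple codegree, two-sided edge-count bound) are exactly the right high-probability events. But the concrete step that closes the unbalanced case in the paper is a Cauchy--Schwarz (second-moment) bound on shattered triangles:
\[
  \abs{\triangles_G(S_1,S_2,S_3)} \ \lsim\ \abs{S_3} + \abs{E(S_1,S_2)}\cdot\bigl(np^3\cdot\abs{S_3}\bigr)^{1/2},
\]
proved by counting pairs of shattered triangles sharing an $S_3$-vertex, using the triple-codegree bound, and applying Cauchy--Schwarz to the degree sequence of the auxiliary bipartite graph between $E(S_1,S_2)$ and $S_3$. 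Combined with the two alternatives in the edge-count bound ($\abs{E(S_1,S_2)}\lsim n_1 n_2 p$ or $\lsim n_1+n_2$, for $n_1\le n_2\le n_3\le n_4$), this yields a clean, entirely deterministic two-case analysis that charges one $p$ per $K_4$-edge and lands at $(n_1 n_2 n_3 n_4)^{1/2} n^2 p^6$ with no union bound over subsets at all. Your sketch replaces this with ``the case split must be organized so that \dots verifying that these two regimes can be glued together \dots completes Stage~2,'' which is an acknowledgment of the gap rather than a resolution of it. Without the triangle Cauchy--Schwarz estimate (or an equivalent second-moment device), the peeling formula you wrote still leaves you to bound $\abs{S_1\cap N(v_3)\cap N(v_4)}$ and $\abs{S_2\cap N(v_3)\cap N(v_4)}$ by $\min(\abs{S_i}, np^2(\log n)^{O(1)})$, and, as you note, this loses powers of $p$ exactly when $\abs{S_i}$ is of intermediate size. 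That loss is the problem, and it is not resolved here.
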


\begin{lemma}[Completeness]\label{lem:poly:nncsoslower}
With probability at least $1 - \frac1{n}$ over the choice of the graph $G$, we have
\[
\hssos{f} ~\geq~ \Omega \Bigl(  \frac{n^{1/2} \cdot p}{\log^{2} n} \Bigr)
\] 
when $p \in [n^{-1/3}, n^{-1/4}]$.
\end{lemma}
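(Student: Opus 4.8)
By strong duality for the $\hssos{\cdot}$ relaxation (\cref{fig:poly:Lambda}, Dual~I), it suffices to exhibit a single SoS-symmetric $\sfM \succeq 0$ with $\Tr{\sfM} = 1$ and $\langle \sfM_f,\sfM\rangle \geq \Omega(n^{1/2}p/\log^2 n)$; at $p = n^{-1/3}$ this recovers the $\tilde\Omega(n^{1/6})$ bound behind \cref{thm:poly:nnc-lowerbound}. The first step is to record the structure of the canonical representation $\sfM_f\in\RR^{[n]^2\times[n]^2}$: its entry $\sfM_f[I,J]$ depends only on the combined multiset of the coordinates of $I$ and $J$, equalling a fixed positive constant when those four coordinates are distinct and form a $4$-clique of $G$, and $0$ otherwise. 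Hence $\sfM_f$ is supported on the \emph{edge block} $\Pi_E := \Span\{ e_{(i,j)} : i\ne j,\ \{i,j\}\in E(G)\}$ (every pair inside a $4$-clique is an edge), has zero diagonal, vanishes on the antisymmetric subspace $\{v : Tv = -v\}$ of $\Pi_E$ (where $T : e_{(i,j)}\mapsto e_{(j,i)}$), and on the symmetric subspace of $\Pi_E$ it is, up to a fixed positive scalar, the adjacency matrix of the random graph $H$ on vertex set $E(G)$ in which two disjoint edges are adjacent iff their union is a $4$-clique of $G$ --- conditioned on $E(G)$, each such adjacency holds with probability $p^4$, so $H$ behaves like $G_{|E|,p^4}$. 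Write this restricted matrix as $\Lambda\,uu^T + E_0$ with $\Lambda\geq 0$, $u$ its Perron unit vector, and $E_0$ the remainder. Finally $\|\sfM_f\|_F^2 = \sum_\beta f_\beta^2/|\orbit{\beta}| = \Theta(|\cliques|) = \Theta(n^4p^6)$ with probability $1-o(1/n)$, by standard concentration of $4$-clique counts in $G_{n,p}$ for $p\geq n^{-1/3}$.

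For the dual witness I take $\sfM := \sfM_f + \eta\cdot\sfM_{P_G}$, where $P_G(x) := \sum_{\{i,j\}\in E(G)}(x_i^2 + x_j^2)^2$ is a nonnegative (sum-of-squares) polynomial, $\sfM_{P_G}$ is its canonical SoS-symmetric representation, and $\eta > 0$ will be fixed shortly. Unwinding the definition of $\sfM_{P_G}$ (the coefficient of $x_a^4$ in $P_G$ is $\deg_G(a)$ and of $x_a^2x_b^2$ is $2\cdot\ind{\{a,b\}\in E(G)}$) shows that $\sfM_f$ and $\sfM_{P_G}$ are both block-diagonal for the decomposition $\Pi_E \oplus \Span\{e_{(a,a)} : a\in[n]\} \oplus (\text{rest})$, and that: on $(\text{rest})$ both vanish; on $\Span\{e_{(a,a)}\}$ one has $\sfM_f = 0$ and $\sfM_{P_G} = \diag(\deg_G(\cdot)) + \tfrac13 A_G \succeq 0$ (Gershgorin, since $\deg_G(a)\geq\tfrac13\deg_G(a)$, where $A_G$ is the adjacency matrix of $G$); on the antisymmetric subspace of $\Pi_E$ both vanish; and on the symmetric subspace of $\Pi_E$, $\sfM_{P_G} = \tfrac23 I$ while $\sfM_f = (\text{scalar})\cdot(\Lambda uu^T + E_0)$. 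Since $\Lambda uu^T\succeq 0$ and $E_0 \succeq -\|E_0\|I$, this last block is PSD once $\tfrac23\eta \geq (\text{scalar})\cdot\|E_0\|$; so setting $\eta := C\cdot\|E_0\|$ for a suitable absolute constant $C$ makes $\sfM \succeq 0$, and $\sfM$ is SoS-symmetric as a sum of SoS-symmetric matrices.

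Now compute the objective. The supports of $\sfM_f$ (all-distinct multisets) and of $\sfM_{P_G}$ (multisets of shape $\{a,a,b,b\}$ or $\{a,a,a,a\}$) are disjoint, so $\langle\sfM_f,\sfM_{P_G}\rangle = 0$ and $\Tr{\sfM_f} = 0$, while $\Tr{\sfM_{P_G}} = \sum_a\deg_G(a) + \tfrac13\sum_{a\ne b}\ind{\{a,b\}\in E(G)} = \tfrac83|E| = \Theta(n^2p)$. Hence, normalizing $\sfX := \sfM/\Tr{\sfM}$,
\[
\hssos{f}\ \geq\ \langle\sfM_f,\sfX\rangle\ =\ \frac{\|\sfM_f\|_F^2}{\eta\cdot\Tr{\sfM_{P_G}}}\ =\ \frac{\Theta(n^4p^6)}{\|E_0\|\cdot\Theta(n^2p)}\ =\ \Theta\!\left(\frac{n^2p^5}{\|E_0\|}\right),
\]
so everything reduces to an upper bound on $\|E_0\|$, the spectral norm of the ``random-graph-like'' matrix $\sfM_f$ restricted to the symmetric part of $\Pi_E$ after its rank-one Perron part is removed.

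The main obstacle is exactly this spectral estimate. If the matrix were literally the adjacency matrix of $G_{|E|,p^4}$, Feige--Ofek-type bounds (after the usual treatment of high-degree vertices) would give $\|E_0\| = O(\sqrt{|E|p^4}\cdot\mathrm{polylog}) = O(np^{5/2}\cdot\mathrm{polylog})$; but the $4$-clique indicators are correlated --- two potential $4$-cliques sharing an edge of $G$, or sharing three vertices, reuse ``closing'' edges --- so I would instead establish the (weaker, and sufficient) bound $\|E_0\| = O(n^{3/2}p^4\log^2 n)$ with probability $1-o(1/n)$ by the trace method: expand $\mathbb{E}\bigl[\Tr{E_0^{2\ell}}\bigr]$ as a sum over closed walks in the vertex set $E(G)$, and use the local sparsity of $G$ together with the co-degree estimates available when $p\in[n^{-1/3},n^{-1/4}]$ to show that walks which reuse a closing edge (hence all ``irregular'' contributions) are of lower order, so that the sum is dominated by $(|E|p^4)^\ell\cdot\mathrm{poly}(\ell)$; then take $\ell = \Theta(\log n)$ and apply Markov's inequality. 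Substituting $\|E_0\| = O(n^{3/2}p^4\log^2 n)$ into the displayed inequality gives $\hssos{f} = \Omega\bigl(n^2p^5/(n^{3/2}p^4\log^2 n)\bigr) = \Omega(n^{1/2}p/\log^2 n)$. A union bound over the finitely many events used --- concentration of $|\cliques|$ and of $|E|$, and the spectral bound on $E_0$ --- each failing with probability $o(1/n)$, then yields the stated ``probability at least $1-\tfrac1n$''.
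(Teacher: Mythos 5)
Your approach matches the paper's almost exactly: both exhibit a dual witness of the form $\sfM_f$ plus an $\|E_0\|$-scaled SoS-symmetric positive correction supported on the multi-indices $\{(a,a),(b,b)\}$ and $\{(a,a),(a,a)\}$ with $\{a,b\}\in E(G)$, and both reduce the whole lemma to a trace-method estimate of $\tilde O(n^{3/2}p^4)$ on the bulk spectral norm of the clique matrix restricted to the edge block. The paper's correction (Claim~\ref{claim:poly:fix}) is $I_{E'}+Q_{E'}$, whose associated polynomial $4\sum_{\{i,j\}\in E}x_i^2 x_j^2+\sum_a\deg_G(a)x_a^4$ differs from your $P_G=\sum_{\{i,j\}\in E}(x_i^2+x_j^2)^2$ only in the coefficient of the mixed terms, so the two witnesses are cosmetic variants and your block-wise PSD check is equivalent to the paper's diagonal-dominance argument for $Q_{E'}$. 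The one substantive piece you leave unproven is the spectral bound $\|E_0\|=O(n^{3/2}p^4\log^2 n)$: your plan (expand $\mathbb{E}\Tr{E_0^{2\ell}}$ over closed walks on the vertex set $E(G)$, argue that walks reusing a closing edge contribute lower order, take $\ell=\Theta(\log n)$) is exactly what Lemma~\ref{lem:poly:eigenvalue} does, but making it rigorous is the bulk of the paper's work --- it requires the careful factorization of $\Ex{t}$ by which edges of $E_0$, $E_D$ must appear and the charging argument of Claim~\ref{claim:poly:contribution} classifying the contributing walk-subgraphs by how much they shrink the vertex set --- so while the architecture of your proof is right and matches the paper's, it is incomplete at its technical core.
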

Note that the gap between the two quantities if $\tilde{\Omega}(n^{1/6})$ when $p = n^{-1/3}$, which is the choice we make.

\subsection[Upper on the Clique Polynomial]{Upper Bound on $\ftwo{f}$}

\subsubsection{Reduction to counting shattered cliques}
We say that an ordered $4$-clique $(i_1, \ldots, i_4)$ is {\em shattered} by $4$ disjoint sets $Z_1, \ldots,
Z_4$ if for each $k \in [4]$, $i_k \in Z_k$. Let $Y_{j_1}, \ldots, Y_{j_4}$ be the sets containing
the coordinates $i_1, \ldots, i_4$. Let $\cliques_G$ denote the set of (ordered) 4-cliques in $G$, and let
$\cliques_G(Z_1,Z_2,Z_3,Z_4)$ denote the set of cliques shattered by $Z_1, \ldots, Z_4$.

We reduce the problem of bounding $\ftwo{f}$, to counting shattered 4-cliques.
\begin{claim}\label{claim:poly:shattering}
There exist disjoint sets $Z_1, \ldots, Z_4 \subseteq [n]$ such that
\[
\abs{\cliques_G(Z_1, Z_2, Z_3, Z_4)} ~\geq~ \inparen{\prod_{k=1}^4 \abs{Z_k}}^{1/2} \cdot
O\inparen{ \frac{\ftwo{f}}{(\log n)^4} } \mper 
\]
\end{claim}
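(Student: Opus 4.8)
\textbf{Proof plan for Claim~\ref{claim:poly:shattering}.}
The plan is to discretize a maximizing vector by the dyadic magnitude of its coordinates, locate a ``bucket profile'' carrying a $\Omega(1/\log^4 n)$ fraction of $\ftwo{f}$, and then carve out $Z_1,\ldots,Z_4$ from the relevant buckets via a random refinement. First I would fix a unit vector $x^\ast \ge 0$ with $f(x^\ast) = \ftwo{f}$ (existence by compactness of the sphere and continuity of $f$; nonnegativity of the coefficients of $f$ lets us replace $x^\ast$ by $|x^\ast|$ coordinatewise). If $\ftwo{f}\le 2n^{-6}$ the claim holds trivially with $Z_1=\dots=Z_4=\emptyset$, so assume $\ftwo{f}>2n^{-6}$. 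Set $L:=\lceil 10\log_2 n\rceil = O(\log n)$, let $B_\ell := \{ i : 2^{-\ell} < x^\ast_i \le 2^{-\ell+1}\}$ for $1\le \ell\le L$, and discard the ``light'' coordinates with $x^\ast_i \le 2^{-L}\le n^{-10}$. Any $4$-clique using a light coordinate contributes at most $n^{-10}$ to $f(x^\ast)$ (all coordinates are $\le 1$) and there are at most $n^4$ cliques, so the cliques lying entirely inside $\bigcup_{\ell\le L}B_\ell$ still contribute at least $\ftwo{f}-n^{-6}\ge \ftwo{f}/2$.

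Writing $v_\ell := 2^{-\ell}$, so $v_\ell \le x^\ast_i \le 2v_\ell$ on $B_\ell$, record for each surviving clique the multiset $\tau = (\ell_1\le \ell_2\le \ell_3\le \ell_4)$ of buckets of its four vertices. There are only $\binom{L+3}{4} = O(\log^4 n)$ profiles, so by averaging some profile $\tau^\ast=(\ell^\ast_1,\ldots,\ell^\ast_4)$ is carried by a set of $N^\ast$ cliques with $N^\ast\cdot\prod_{j}v_{\ell^\ast_j} = \Omega\!\inparen{\ftwo{f}/\log^4 n}$; in particular $N^\ast\ge 1$. For each distinct bucket $B_m$ occurring with multiplicity $c_m$ in $\tau^\ast$ we have $|B_m|\ge c_m$ (since $N^\ast\ge 1$), and I would split $B_m$ into $c_m$ parts by assigning each of its vertices an independent uniform label in $[c_m]$, letting these parts serve as the $Z_k$ for the indices $k$ with $\ell^\ast_k = m$. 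The resulting $Z_1,\ldots,Z_4$ are automatically disjoint and satisfy $|Z_k|\le |B_m|$.

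A clique carrying profile $\tau^\ast$ is shattered by $(Z_1,\ldots,Z_4)$ exactly when in every bucket $m$ its $c_m$ vertices receive distinct labels, which happens with probability $\prod_m c_m!/c_m^{c_m} \ge 4!/4^4 =: c_0$ (the product over a partition of $4$ is minimized by the single block $(4)$). Hence the expected number of shattered cliques among the $N^\ast$ is at least $c_0 N^\ast$, and since this count is a nonnegative random variable bounded by $N^\ast$, it is at least $(c_0/2)N^\ast$ with probability at least $c_0/2$; fix such an outcome, giving $\abs{\cliques_G(Z_1,Z_2,Z_3,Z_4)} \ge (c_0/2)N^\ast$. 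Finally, $\|x^\ast\|_2^2=1$ forces $|B_m|v_m^2\le 1$, i.e.\ $v_m\sqrt{|B_m|}\le 1$, for every bucket, so
\[
  \Bigl(\prod_{k=1}^4 |Z_k|\Bigr)^{1/2}
  ~\le~ \Bigl(\prod_m |B_m|^{c_m}\Bigr)^{1/2}
  ~=~ \prod_m \bigl(v_m\sqrt{|B_m|}\bigr)^{c_m}\cdot \prod_m v_m^{-c_m}
  ~\le~ \prod_{j=1}^4 v_{\ell^\ast_j}^{-1}.
\]
Combining, $\abs{\cliques_G(Z_1,Z_2,Z_3,Z_4)} \ge (c_0/2)N^\ast \ge \Omega\!\inparen{\ftwo{f}/\log^4 n}\cdot \bigl(\prod_{j}v_{\ell^\ast_j}\bigr)^{-1} \ge \Omega\!\inparen{\ftwo{f}/\log^4 n}\cdot \bigl(\prod_{k}|Z_k|\bigr)^{1/2}$, which is the claim.

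\textbf{Main obstacle.} The routine part is the dyadic bucketing and pigeonholing; the delicate part is the bookkeeping when buckets coincide in $\tau^\ast$. One must check that carving a single bucket into $c_m$ pieces costs only an absolute constant factor (the quantity $c_0$) in the shattered-clique count while the crude bound $|Z_k|\le |B_m|$ still keeps $\prod_k|Z_k|$ small enough for the final inequality; getting the direction of all these inequalities to line up — lower bound on the clique count, upper bound on the geometric mean of the $|Z_k|$ — is where care is needed.
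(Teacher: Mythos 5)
Your proof is correct, and it reaches the same bound by a genuinely different route than the paper's. The paper normalizes $x^*$ to $y^*:=(x^*+\one/\sqrt{n})/\|x^*+\one/\sqrt{n}\|_2$, which flattens all coordinates above $1/(2\sqrt{n})$ and avoids any truncation; it then splits \emph{every} bucket $Y_j$ into four random parts, samples a random bucket index $r_k$ for each $k$, and computes the expectation of the single random quantity $\ind{\text{shattered}}/\sqrt{|Z_1|\cdots|Z_4|}$ directly, showing it is $\Omega\bigl(y^*_{i_1}\cdots y^*_{i_4}/\log^4 n\bigr)$ per clique; the $\log^4 n$ factor enters from the four independent uniform draws $r_1,\dots,r_4$. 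Your proof instead handles small coordinates by a truncation/counting argument, then \emph{pigeonholes} the surviving mass over the $O(\log^4 n)$ bucket profiles to fix a single good profile $\tau^*$, and only randomizes the splitting of the buckets appearing in $\tau^*$, into $c_m$ rather than $4$ parts; the $\log^4 n$ enters from the profile count, and the probabilistic step is a simple first-moment argument on a fixed clique set. One small point worth noting: your $N^*$ counts unordered cliques while the paper's $\cliques_G(Z_1,\dots,Z_4)$ counts ordered ones, but since the $Z_k$ are disjoint, at most one ordering of a given unordered clique can be shattered, so the two counts coincide and your bound transfers. Both arguments lean on exactly the same geometric fact $|B_m|v_m^2\le 1$; the paper's version is slightly slicker in folding everything into one expectation-of-a-ratio, while yours is more modular and avoids the $y^*$ shift. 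Your final probability step ("$\ge(c_0/2)N^*$ with probability $\ge c_0/2$") is a little heavier than needed---from $\Ex{X}\ge c_0N^*$ with $X\le N^*$ you already get existence of an outcome with $X\ge c_0N^*$---but this does not affect correctness.
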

\begin{proof}
Let $x^* \in \SSS^{n-1}$ be the vector that maximizes $f$. Without loss of generality, 
assume that every coordinate of $x^*$ is nonnegative. 
Let $y^*$ be another unit vector defined as 
\[
y^* ~:=~ \frac{\inparen{x^* + {\one}/{\sqrt{n}}}}{\norm{2}{x^* + {\one}/{\sqrt{n}}}} \mper
\]
Since both $x^*$ and $\frac{\one}{\sqrt{n}}$ are unit vectors, the denominator is at most $2$. 
This implies that $f(y^*) \geq \frac{f(x^*)}{2^4}$, and each coordinate of $y^*$ is at least
$\frac{1}{2\sqrt{n}}$.
For $1 \leq j \leq \log_2 n$, let $Y_j$ be the set 
\[
Y_j~\defeq~\inbraces{i \in [n] ~\mid~ 2^{-j} < y^*_i \leq 2^{-(j - 1)}} \mper
\]
The sets $Y_1, \dots, Y_{\log_2 n}$ partition $[n]$.
Since $1 = \sum_{i \in [n]} y_i^2 > |Y_j| \cdot 2^{-2j}$, we have for each $j$, $|Y_j| \leq 2^{2j}$. 
Let $Z_1, Z_2, Z_3$, and $Z_4$ be pairwise disjoint random subsets of $[n]$ chosen as follows:
\begin{itemize}
\item Randomly partition each $Y_j$ to $Y_{j, 1}, \dots, Y_{j, 4}$ where each element of $Y_j$ is
  put into exactly one of $Y_{j, 1}, \dots, Y_{j, 4}$ uniformly and independently. 
\item Sample $r_1, \ldots, r_4$ independently and randomly from $\{1, \dots, \log_2 n\}$.
\item For $k = 1, \dots, 4$, take $Z_k := Y_{r_k, k}$
\end{itemize}
We use $\parts$ to denote random partitions $\inbraces{\inparen{Y_{j, 1}, \dots, Y_{j, 4}}}_{j \in [\log n]}$ and
$r$ to denote the random choices $r_1, \ldots, r_4$. Note that the events $i_k \in Z_k$ are
independent for different $k$, and that $Z_1, \ldots, Z_4$ are independent given $\parts$. Thus, we have
\begin{align*}
\Ex{\parts, r}{\frac{\ind{(i_1,i_2,i_3,i_4)~\text{is shattered}} }{\sqrt{\abs{Z_1} \abs{Z_2} \abs{Z_3}
  \abs{Z_4}} } }  
&~=~\Ex{\parts}{\prod_{k=1}^4 \Ex{r_k}{\frac{\ind{i_k \in Z_k}}{\sqrt{\abs{Z_k}}} } } \\
&~=~\Ex{\parts}{\prod_{k=1}^4 \Ex{r_k}{\frac{\ind{r_k = j_k} \cdot \ind{i_k \in Y_{j_k,
  k}}}{\sqrt{\abs{Y_{j_k, k}}}} } } \\
&~\geq~\Ex{\parts}{\prod_{k=1}^4 \Ex{r_k}{\frac{\ind{r_k = j_k} \cdot \ind{i_k \in Y_{j_k,
  k}}}{\sqrt{\abs{Y_{j_k}}}} } } \\
&~=~\Ex{\parts}{\prod_{k=1}^4 \inparen{\frac{1}{\log n} \cdot \frac{\ind{i_k \in Y_{j_k,
  k}} }{\sqrt{\abs{Y_{j_k}}} } } } \\
&~=~\Ex{\parts}{\prod_{k=1}^4 \inparen{\frac{1}{\log n} \cdot \frac{\ind{i_k \in Y_{j_k,
  k}} }{\sqrt{\abs{Y_{j_k}}} } } } \\
&~=~\frac{1}{\inparen{4 \log n}^4} \cdot \frac{1}{\sqrt{\abs{Y_{j_1}} \abs{Y_{j_2}} \abs{Y_{j_3}}
  \abs{Y_{j_4}}}} \\
&~\geq~\frac{1}{\inparen{4 \log n}^4} \cdot 2^{j_1 + j_2 + j_3 + j_4} \\
&~\geq~\frac{1}{\inparen{8 \log n}^4} \cdot y^*_{i_1}y^*_{i_2}y^*_{i_3}y^*_{i_4} \mper
\end{align*}
Then, by linearity of expectation,
\begin{align*}
\Ex{\parts, r}{\frac{\abs{\cliques_G(Z_1, Z_2, Z_3, Z_4)}}{\sqrt{\abs{Z_1} \abs{Z_2} \abs{Z_3}
  \abs{Z_4}}} }
&~\geq~\frac{1}{(8 \log n)^4} \cdot \sum_{(i_1,\ldots,i_4) \in \cliques_G}
  y^*_{i_1}y^*_{i_2}y^*_{i_3}y^*_{i_4} \\
&~=~\frac{4!}{(8 \log n)^4} \cdot f\inparen{y^*} \\
&~\geq~\frac{4!}{(16 \log n)^4} \cdot f\inparen{x^*}
~=~\frac{4!}{(16 \log n)^4} \cdot \ftwo{f} \mcom
\end{align*}
which proves the claim. 
\end{proof}
We will show that with high probability, $G$ satisfies the property that every four disjoint sets
$Z_1, \dots, Z_4 \subseteq V$ shatter at most $O\inparen{\sqrt{|Z_1||Z_2||Z_3||Z_4|} \cdot n^2 p^6 \cdot (\log
n)^{O(1)}}$ cliques, proving~\cref{lem:poly:cliques}.   

\subsubsection{Counting edges and triangles}
For a vertex $i \in [n]$, we use $\nbr(i)$ to denote the set of vertices in the graph $G$. For ease
of notation, we use $a \lsim b$  to denote $a \leq b \cdot (\log n)^{O(1)}$.
We first collect some simple consequences of Chernoff bounds. 
\begin{claim}\label{claim:poly:chernoff-counts}
Let $G \sim G_{n, p}$ with $p \geq n^{-1/3}$. Then, with probability $1 - \frac{1}{n}$, we have
\begin{itemize}
\item For all distinct $i_1, i_2 \in [n]$, $\abs{\nbr(i_1) \cap \nbr(i_2)} ~\lsim~ np^2$.
\item For all distinct $i_1, i_2, i_3 \in [n]$, $\abs{\nbr(i_1) \cap \nbr(i_2) \cap \nbr(i_3)} ~\lsim~
  np^3$.
\item For all sets $S_1, S_2 \subseteq [n]$, $\abs{E(S_1,S_2)} ~\lsim~
  \max\inbraces{\abs{S_1}\abs{S_2}p, \abs{S_1} + \abs{S_2}}$.
\end{itemize}
\end{claim}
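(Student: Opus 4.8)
The plan is to observe that each of the three quantities is, for every fixed choice of the vertices or sets involved, a sum of independent Bernoulli random variables (or is stochastically dominated by one), and then to apply a multiplicative Chernoff upper-tail bound followed by a union bound. For the first item, fix distinct $i_1,i_2$; then $\abs{\nbr(i_1)\cap\nbr(i_2)} = \sum_{j\notin\{i_1,i_2\}} \ind{j\sim i_1}\ind{j\sim i_2}$ is a sum of $n-2$ independent $\mathrm{Bernoulli}(p^2)$ variables with mean $\mu\le np^2$, and since $p\ge n^{-1/3}$ we have $np^2\ge n^{1/3}$, which is much larger than $\log n$ for large $n$. The Chernoff bound then gives $\Pr[\abs{\nbr(i_1)\cap\nbr(i_2)}\ge 2np^2]\le \exp(-\Omega(n^{1/3}))$, and a union bound over the $\binom n2$ pairs costs only a polynomial factor, so this event holds for all pairs with probability at least $1-n^{-10}$. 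The second item is identical with $p^2$ replaced by $p^3$ and $n-2$ by $n-3$; the one difference is that $np^3$ can be as small as $1$ (this is exactly where $p\ge n^{-1/3}$ is tight), so instead of deviation-by-a-constant-factor I would use the form $\Pr[X\ge a]\le (e\mu/a)^a$ for $a\ge\mu$, taking $a := C\max\{\mu,\log n\}$ with $C$ a large absolute constant. This makes $(e\mu/a)^a\le n^{-10}$ uniformly over the $\binom n3$ triples (whose count has logarithm $\lesssim\log n$), and since $np^3\ge 1$ we have $a\lsim\max\{np^3,1\}=np^3$, which is the claimed bound $\lsim np^3$.

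For the third item, fix $S_1,S_2$ with $\abs{S_1}=s_1$, $\abs{S_2}=s_2$; then $\abs{E(S_1,S_2)}$ is stochastically dominated by $\mathrm{Bin}(s_1 s_2,p)$, a sum of independent Bernoullis of mean $\mu\le s_1 s_2 p$. The subtlety is that there are up to $2^n\cdot 2^n$ pairs of subsets, so the union bound is far larger than before; I would organize it by the pair of sizes $(s_1,s_2)$, of which there are at most $(n+1)^2$, using that the number of pairs of subsets of those sizes is at most $\binom{n}{s_1}\binom{n}{s_2}\le n^{s_1+s_2}$. It then suffices to show that for each $(s_1,s_2)$ the tail probability is at most $n^{-2(s_1+s_2)}$ at the threshold $a := C\cdot\nu\cdot\log n$, where $\nu := \max\{s_1 s_2 p,\ s_1+s_2\}$ and $C$ is a large absolute constant. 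Since $\mu\le s_1 s_2 p\le\nu\le a$, the bound $\Pr[X\ge a]\le (e\mu/a)^a\le (e/(C\log n))^{a}$ applies, and because $a\ge C(s_1+s_2)\log n$ (using $\nu\ge s_1+s_2$), this is at most $(e/(C\log n))^{C(s_1+s_2)\log n}$, which drops below $n^{-2(s_1+s_2)}$ once $C$ and $n$ are large enough. Summing over the at most $(n+1)^2$ size classes keeps the total failure probability below $n^{-10}$, and on the complementary event $\abs{E(S_1,S_2)}<a\lsim\max\{s_1 s_2 p,\ s_1+s_2\}$ for all $S_1,S_2$, exactly as asserted.

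Finally I would combine the three high-probability events by a union bound, tuning the constants (e.g.\ arranging each failure probability to be at most $\tfrac1{3n}$) to conclude that all three statements hold simultaneously with probability at least $1-\tfrac1n$. The routine parts are the exact constants in the Chernoff exponents and the handling of non-integer thresholds (replace by $\lfloor a\rfloor$); the one place that genuinely needs care is the union bound in item (iii), where one must route it through the size classes and verify that a single polylogarithmic slack $C\log n$ simultaneously beats the $n^{-2(s_1+s_2)}$ target in both regimes of the ``$\max$'', rather than naively union-bounding over all $4^n$ pairs. I expect that to be the main obstacle, though it is more bookkeeping than difficulty.
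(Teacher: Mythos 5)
The paper does not actually give a proof of this claim --- it is presented with the one-line preface ``simple consequences of Chernoff bounds'' and left as an exercise. Your argument is a correct proof and, I would say, essentially the intended one. Items (i) and (ii) are indeed routine: for a fixed pair or triple, the intersection size is a sum of independent Bernoullis with mean $np^2$ resp.\ $np^3$, and the multiplicative Chernoff bound (in the $(e\mu/a)^a$ form when the mean is small, as you correctly flag for item (ii), where $np^3$ can be $\Theta(1)$) beats the polynomial union bound with room to spare; the deduction $\max\{np^3,\log n\}\lsim np^3$ uses $np^3\geq 1$, exactly as you wrote. For item (iii), stratifying the union bound by the size pair $(s_1,s_2)$ is the right move, and your computation is sound: with $\nu=\max\{s_1s_2p,\,s_1+s_2\}$ and threshold $a=C\nu\log n$ you get both $\mu/a\leq 1/(C\log n)$ and $a\geq C(s_1+s_2)\log n$ from the two arms of the $\max$, and $(e/(C\log n))^{C(s_1+s_2)\log n}\leq n^{-2(s_1+s_2)}$ for $C$ an absolute constant and $n$ large, which is exactly what the $n^{s_1+s_2}$-sized union over each class requires.

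One small inaccuracy: after unioning within a size class you obtain failure probability $\leq n^{-(s_1+s_2)}$ for that class, and summing over classes gives $\sum_{s_1+s_2\geq 2}n^{-(s_1+s_2)}=O(n^{-2})$, not ``below $n^{-10}$''. This is still comfortably smaller than the $1/(3n)$ budget you allotted, so the conclusion is unaffected, but the constant claimed is off. You could of course push the per-pair bound to $n^{-12(s_1+s_2)}$ by enlarging $C$ if $n^{-10}$ overall were actually needed. Otherwise the proposal is correct and matches the intent of the paper.
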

We also need the following bound on the number of triangles shattered by three disjoint 
sets $S_1, S_2$ and $S_3$, denoted by $\triangles_G(S_1,S_2,S_3)$. 
As for $4$-cliques, a triangle is said to be shattered if it has exactly 
one vertex in each the sets.
\begin{claim}\label{claim:poly:triangles}
Let $G \sim G_{n, p}$ with $p \geq n^{-1/3}$. Then, with probability $1 - \frac{1}{n}$, for all
disjoint sets $S_1, S_2, S_3 \subseteq [n]$
\[
\abs{\triangles_G(S_1,S_2,S_3)} ~\lsim~ \abs{S_3} + \abs{E(S_1,S_2)} \cdot \inparen{np^3
  \cdot\abs{S_3}}^{1/2} \mper
\] 
\end{claim}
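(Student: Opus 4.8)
The plan is to write $\triangles_G(S_1,S_2,S_3)$ as a sum of independent contributions, one per vertex of $S_3$, apply a Bernstein-type tail bound, and then union bound over all triples of disjoint sets --- the size-dependent bookkeeping in that union bound being the crux. Concretely, condition on the subgraph $G[S_1\cup S_2]$ (equivalently, on $E(S_1,S_2)$ and all of its internal structure), and for $w\in S_3$ set $T_w := \abs{E_G(S_1\cap \nbr(w),\, S_2\cap\nbr(w))}$, so that $\triangles_G(S_1,S_2,S_3)=\sum_{w\in S_3}T_w$. Since $T_w$ depends only on the edges incident to $w$, which all run into the vertex set $S_1\cup S_2$ disjoint from $S_3$, the variables $\inbraces{T_w}_{w\in S_3}$ are mutually independent given $G[S_1\cup S_2]$, with conditional mean $\Ex{T_w}=\abs{E(S_1,S_2)}\cdot p^2$ and hence conditional mean $\abs{S_3}\cdot\abs{E(S_1,S_2)}\cdot p^2$ for the whole sum. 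A one-line computation gives $\abs{S_3}\cdot\abs{E(S_1,S_2)}\cdot p^2 \le \abs{E(S_1,S_2)}\cdot(np^3\abs{S_3})^{1/2}$ whenever $\abs{S_3}\le n$ and $p\le 1$, so the conditional mean already lies below the target; it remains only to control the deviation of $\sum_w T_w$ from it.

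For the deviation I would apply Bernstein's inequality to the independent sum, which needs (i) a pointwise bound $M$ on each $T_w$ and (ii) a bound on $\sum_{w\in S_3}\Ex{T_w^2}$. For (i), applying the bipartite edge-count bound of~\cref{claim:poly:chernoff-counts} to the sets $S_1\cap\nbr(w)$ and $S_2\cap\nbr(w)$ and using the degree bound $\abs{\nbr(w)}\lsim np$ (which holds for all $w$ simultaneously with probability $1-1/n$ by a Chernoff bound) gives $T_w\lsim \min\inbraces{\abs{E(S_1,S_2)},\, (np)^2}$; when $\abs{E(S_1,S_2)}$ is small I would instead use the cruder $T_w\le\abs{E(S_1,S_2)}$. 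For (ii), expanding $\Ex{T_w^2}$ over ordered pairs of edges of $E(S_1,S_2)$ classified by the number of shared endpoints yields $\Ex{T_w^2}\lsim \abs{E(S_1,S_2)}^2 p^4 + \abs{E(S_1,S_2)}\cdot\Delta\cdot p^3 + \abs{E(S_1,S_2)}\cdot p^2$, where $\Delta\lsim\min\inbraces{np,\abs{E(S_1,S_2)}}$ is the maximum degree inside $G[S_1\cup S_2]$. Feeding these into Bernstein with a failure probability chosen small enough to union bound over all triples of disjoint sets of sizes $(s_1,s_2,s_3)$ --- of which there are at most $n^{s_1+s_2+s_3}$, so the union-bound cost in the exponent is $\lsim(s_1+s_2+s_3)\log n$ --- should give, after simplification using $p\ge n^{-1/3}$, that with probability $1-1/n$ every triple obeys $\triangles_G(S_1,S_2,S_3)\lsim\abs{S_3}+\abs{E(S_1,S_2)}\cdot(np^3\abs{S_3})^{1/2}$.

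I expect the main obstacle to be the final verification step: checking that each deviation term produced by Bernstein --- the sub-Gaussian pieces $\bigl(\sum_w\Ex{T_w^2}\cdot(s_1+s_2+s_3)\log n\bigr)^{1/2}$ and the sub-exponential piece $M\cdot(s_1+s_2+s_3)\log n$ --- is dominated by $\abs{S_3}+\abs{E(S_1,S_2)}(np^3\abs{S_3})^{1/2}$ across \emph{all} ranges of $s_1,s_2,s_3$ and of $\abs{E(S_1,S_2)}$. This is a regime analysis: when the set sizes are $n^{\Omega(1)}$ the polylogarithmic slack hidden in $\lsim$ absorbs the $(s_1+s_2+s_3)\log n$ factor; when they are $\Theta(n)$ the concentration is exponentially strong, so failure probability $\exp(-\Omega(n))$ already beats the $2^{3n}$ union bound; and for very small sets (or small $\abs{E(S_1,S_2)}$) the claim is essentially trivial, since $\triangles_G(S_1,S_2,S_3)\le\min\inbraces{\abs{S_3},\,O(np^2)}\cdot\abs{E(S_1,S_2)}$ by the common-neighborhood bound of~\cref{claim:poly:chernoff-counts} and the target dominates. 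Tracking which regime applies --- and using the sharper of $\Delta\le np$ vs.\ $\Delta\le\abs{E(S_1,S_2)}$ and of $M\le(np)^2$ vs.\ $M\le\abs{E(S_1,S_2)}$ in each --- is where essentially all the work lies, and the hypothesis $p\ge n^{-1/3}$ (so that $np^3\ge1$ and $np^2\ge1$) is precisely what makes the resulting inequalities close.
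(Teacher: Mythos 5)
The paper does not use a concentration-plus-union-bound argument for this claim at all, and yours does not close. The paper's proof is deterministic given \cref{claim:poly:chernoff-counts}: it forms the bipartite graph $H$ between $E(S_1,S_2)$ and $S_3$, bounds $\sum_{i_3\in S_3}\binom{d_H(i_3)}{2}$ by $\lsim \abs{E(S_1,S_2)}^2\cdot np^3$ (because two distinct edges of $E(S_1,S_2)$ span at least three vertices, so the triple-common-neighborhood bound applies), and then a single Cauchy--Schwarz application $\Delta^2\le\abs{S_3}\sum_{i_3}d_H(i_3)^2$ together with $\sum d_H^2 = 2\sum\binom{d_H}{2}+\Delta$ yields $\Delta^2-\Delta\abs{S_3}\lsim\abs{S_3}\cdot\abs{E(S_1,S_2)}^2\cdot np^3$, from which the claim follows by solving the quadratic. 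The only probabilistic input is the structural lemma, which needs concentration only for \emph{linear} sums of independent Bernoullis (pairwise/triple common neighborhoods and bipartite edge counts), and those union-bound fine.

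Your Bernstein-plus-union-bound route has a genuine gap: the sub-exponential term does not beat the union bound. The random variable $T_w$ is itself a bilinear count --- the number of edges of $E(S_1,S_2)$ whose both endpoints land in $\nbr(w)$ --- so it has a heavy tail, and its almost-sure bound is necessarily large. Take the critical regime $p=n^{-1/3}$, $\abs{S_1}=\abs{S_2}=\abs{S_3}=\Theta(n)$, $e:=\abs{E(S_1,S_2)}=\Theta(n^2p)=\Theta(n^{5/3})$. Then the conditional mean is $\Theta(\abs{S_3}\,e\,p^2)=\Theta(n^2)$, the target is $\Theta(e(np^3\abs{S_3})^{1/2})=\Theta(n^{13/6})$, so the allowed deviation is $t=\Theta(n^{13/6})$; and $\sigma^2=\sum_w\Ex{T_w^2}\lsim\abs{S_3}\cdot e^2p^4=\Theta(n^3)$, so the sub-Gaussian piece $t^2/\sigma^2=\Theta(n^{4/3})$ indeed dominates the $\Theta(n\log n)$ union-bound cost. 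But the sub-exponential piece fails: the crude bound $M=e=n^{5/3}$ gives $t^2/(Mt)=\Theta(n^{1/2})$, and even after conditioning on the degree bound $\abs{\nbr(w)}\lsim np$ (which does preserve independence across $w$, as you say) the best almost-sure bound is $M\lsim(np)^2=n^{4/3}$, giving $t^2/(Mt)=\Theta(n^{5/6})$. In either case Bernstein yields failure probability $\exp(-n^{\Theta(1)})$ with exponent strictly less than $1$, which cannot beat the $\exp(\Theta(n))$ cost of union-bounding over set triples of linear size. Plugging into the "final verification step'' you flag as the main obstacle confirms it is in fact a wall, not bookkeeping. To rescue the direct route you would need a tail bound on $T_w$ substantially tighter than any single almost-sure truncation can provide (for instance a dedicated higher-moment or Hanson--Wright-style argument for the quadratic form), at which point the paper's "pairs-of-triangles plus Cauchy--Schwarz'' argument is strictly simpler.
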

\begin{proof}
With probability at least $1 - \frac{1}{n}$, $G$ satisfies the conclusion of~\cref{claim:poly:chernoff-counts}.
Fix such a $G$, and consider arbitrary subsets $S_1, S_2, S_3 \subseteq V$. 
Consider the bipartite graph $H$ where the left side vertices correspond to edges in $E(S_1, S_2)$,
the right side vertices correspond to vertices in $S_3$, and there is an edge from $(i_1, i_2) \in
E(S_1, S_2)$ to $i_3 \in S_3$ when both $(i_1, i_3), (i_2, i_3) \in E$.  Clearly,
$\abs{\triangles_G(S_1,S_2,S_3)}$ is equal to the number of edges in $H$. 

Consider two different edges $(i_1, i_2), (i_1', i_2') \in E(S_1, S_2)$. These two edges are
incident on at least $3$ distinct vertices, say $\{i_1,i_2,i_1'\}$. Hence, the number of vertices
$i_3 \in [n]$  that are adjacent to all $\{ i_1, i_2, i_1', i_2' \}$ in $G$ is at most 
$\abs{\nbr(i_1) \cap \nbr(i_2) \cap \nbr(i_1')} \lsim np^3$. This gives that the number of pairs 
triangles sharing a common vertex in $S_3$ is at most $\abs{E(S_1,S_2)}^2 \cdot np^3 (\log
n)^{O(1)}$.

Let $d_H(i_3)$ denote the degree of a vertex $i_3$ in $H$, and let 
$\Delta$ denote the number of shattered triangles. Counting the above pairs of triangles using the
degrees gives
\[
\sum_{i_3 \in S_3} \binom{d_H(i_3)}{2} ~\lsim~ \abs{E(S_1,S_2)}^2 \cdot np^3 \mper
\]
An application of Cauchy-Schwarz gives
\[
\Delta^2 - \Delta \cdot \abs{S_3} ~\lsim~ \abs{S_3} \cdot \abs{E(S_1,S_2)}^2 \cdot np^3 \mcom
\]
which proves the claim.
\end{proof}

\subsubsection{Bounding $4$-clique Density}
Let $G \sim G_{n,p}$ be a graph satisfying the conclusions of~\cref{claim:poly:chernoff-counts} 
and~\cref{claim:poly:triangles}. Let $S_1, \ldots, S_4 \subseteq [n]$ be disjoint sets with sizes $n_1 \leq n_2
\leq n_3 \leq n_4$. We consider two cases:
\begin{itemize}
\item \textbf{Case 1}: $\abs{E(S_1,S_2)} ~\lsim~ n_1n_2p$ \\[2 pt]
Note that each edge $(i_1,i_2)$ can only participate in at most $\abs{\nbr(i_1) \cap \nbr(i_2)}$ triangles,
and each triangle $(i_1,i_2,i_3)$ can only be extended to at most $\abs{\nbr(i_1) \cap \nbr(i_2)
  \cap \nbr(i_3)}$ 4-cliques. Thus,~\cref{claim:poly:chernoff-counts} gives
\[
\abs{\cliques_G(S_1,S_2,S_3,S_4)} ~\lsim~ n_1n_2p \cdot np^2 \cdot np^3 ~\lsim~ (n_1n_2n_3n_4)^{1/2}
\cdot n^2 p^6 \mper
\]
\item \textbf{Case 2}: $\abs{E(S_1,S_2)} ~\lsim~ n_1 + n_2$ \\[2 pt]
The~\cref{claim:poly:triangles} gives
\[
\abs{\Delta_G(S_1,S_2,S_3)} ~\lsim~ n_3 + (n_1+n_2) \cdot \inparen{n_3 \cdot np^3}^{1/2} \mcom
\]
which together with~\cref{claim:poly:chernoff-counts} implies
\[
\abs{\cliques_G(S_1,S_2,S_3,S_4)} ~\lsim~ n_3 \cdot np^3 + (n_1+n_2) \cdot n_3^{1/2} \cdot 
\inparen{np^3}^{3/2} \mper
\]
Considering the first term, we note that
\[
n_3 \cdot np^3 ~\leq~ (n_3 n_4)^{1/2} \cdot n^2p^6 ~\leq~ (n_1 n_2 n_3 n_4)^{1/2} \cdot n^2p^6 \mcom
\]
since $n_3 \leq n_4$ and $np^3 \geq 1$. Similarly, for the second term, we have
\[
(n_1+n_2) \cdot n_3^{1/2} \cdot  \inparen{np^3}^{3/2} 
~\leq~ 2(n_2 n_3 n_4)^{1/2} \cdot \inparen{np^3}^{3/2}
~\leq~ 2 \cdot (n_1 n_2 n_3 n_4)^{1/2} \cdot n^2 p^6 \mper
\]
\end{itemize}
Combined with~\cref{claim:poly:shattering}, this completes the proof of~\cref{lem:poly:cliques}.

\subsection{Lower Bound on $\hssos{f}$}
Recall that given a random graph $G = ([n], E)$ drawn from the distribution $G_{n, p}$, 
the polynomial $f$ is defined as 
\[
f(x_1, \dots, x_n) := \sum_{\{ i_1, i_2, i_3, i_4 \} \in \cliques} x_{i_1} x_{i_2} x_{i_3} x_{i_4},
\]
where $\cliques \subseteq \binom{[n]}{4}$ is the set of $4$-cliques in $G$. 
Let $\sfA \in \RR^{[n]^2 \times [n]^2}$ be the natural matrix representation of $24f$ (corresponding
to ordered copies of cliques) with
\[
\sfA[(i_1, i_2), (i_3, i_4)] ~=~ 
\begin{cases}
1 & \text{if}~ \{i_1, \dots, i_4\} \in \cliques \\[3 pt]
0 & \text{otherwise}
\end{cases}
\]
Let $E' \subseteq {[n]}^2$ be the set of ordered edges \ie~$(i_1, i_2) \in E'$ if and only if
$\inbraces{i_1, i_2} \in E$. Note that $|E'| = 2m$ where $m$ is the number of edges in $G$. 
All nonzero entries of $\sfA$ are contained in the principal submatrix $\sfA_{E'}$, formed by the
rows and columns indexed by $E'$.

\subsubsection[A Simple Lower Bound on the Relaxation]{A simple lower bound on $\fsp{f}$}
We first give a simple proof that $\fsp{f} \geq \sqrt{n^2p^5}$ with high probability. 
\begin{lemma}\label{lem:poly:nncsoslower2}
$\fsp{f} \geq \Omega(\sqrt{n^2p^5}) = \Omega(n^{1/6})$ with high probability. 
\end{lemma}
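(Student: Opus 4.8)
The plan is to prove that \emph{every} matrix representation $M$ of $f$ satisfies $\norm{2}{M} = \Omega(\sqrt{n^2 p^5})$; since $\fsp{f}$ is the infimum of $\norm{2}{M}$ over all such $M$, this proves the lemma. The idea is to pass to the principal submatrix $M_{E'}$ of $M$ indexed by the set $E'\subseteq [n]^2$ of ordered edges of $G$ (the ``$n^2p\times n^2p$ submatrix'' from the overview), lower bound $\norm{F}{M_{E'}}$ using only the fact that $M$ represents $f$, and then apply $\norm{2}{M}\ge \norm{2}{M_{E'}}\ge \norm{F}{M_{E'}}/\sqrt{\abs{E'}}$, where the first inequality is monotonicity of the spectral norm under passing to a principal submatrix and the second uses that $M_{E'}$ has at most $\abs{E'}$ nonzero eigenvalues.

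First I would set up the orbit bookkeeping. Since $f$ is multilinear of degree $4$, a symmetric $M\in\RR^{[n]^2\times[n]^2}$ represents $f$ iff for every $\beta\in\degbmindex{4}$ one has $\sum_{(I,J):\alpha(I)+\alpha(J)=\beta} M[I,J] = f_\beta$. For $\beta$ corresponding to a $4$-clique $K=\inbraces{i_1,i_2,i_3,i_4}$ of $G$, this orbit consists of exactly $4!=24$ entries, $f_\beta = 1$, and --- the key observation --- each of these $24$ entries $(I,J)$ has $I,J\in E'$, because any two vertices of a clique are joined by an edge. By Cauchy--Schwarz, $\sum_{(I,J)\in\orbit{\beta}}M[I,J]^2\ge \big(\sum_{(I,J)\in\orbit{\beta}}M[I,J]\big)^2/24 = 1/24$, and since the orbits of distinct $4$-cliques are disjoint sets of matrix entries, summing over all $4$-cliques yields $\norm{F}{M_{E'}}^2\ge \abs{\cliques}/24$. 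With $\abs{E'}=2m$ (where $m$ is the number of edges of $G$) this gives $\norm{2}{M}\ge \sqrt{\abs{\cliques}/(48m)}$ for every representation $M$, hence $\fsp{f}\ge \sqrt{\abs{\cliques}/(48m)}$.

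It remains to substitute typical values and locate the probabilistic content. A Chernoff bound gives $m\le n^2p$ with probability $1-o(1/n)$. The expected number of $4$-cliques is $\binom{n}{4}p^6=\Theta(n^4p^6)$, which tends to infinity for $p\ge n^{-1/3}$ (well above the $4$-clique threshold $n^{-2/3}$); this is the one step that uses more than an elementary tail bound, and here I would invoke a standard concentration estimate for subgraph counts --- Janson's inequality, or the second moment method, suffices for this one-sided bound --- to conclude $\abs{\cliques}\ge \frac12\,\Ex{\abs{\cliques}}=\Omega(n^4p^6)$ with probability $1-o(1/n)$. Combining the two bounds gives $\fsp{f}\ge \Omega\big(\sqrt{n^4p^6/(n^2p)}\big)=\Omega(\sqrt{n^2p^5})$, which for $p=n^{-1/3}$ equals $\Omega(n^{1/6})$. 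I expect this concentration-of-clique-counts step to be the only real obstacle; everything else is a short deterministic argument valid for all matrix representations simultaneously, which is precisely what is needed to lower bound the infimum $\fsp{f}$.
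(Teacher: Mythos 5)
Your proposal is correct and follows the same route as the paper: restrict to the principal submatrix $M_{E'}$ indexed by ordered edges, lower bound $\norm{F}{M_{E'}}^2$ by $\Omega(\abs{\cliques})$, apply $\norm{2}{M}\ge\norm{2}{M_{E'}}\ge\norm{F}{M_{E'}}/\sqrt{\abs{E'}}$, and use concentration of $\abs{E}$ and $\abs{\cliques}$. The one place you add value is in spelling out the step the paper dismisses with ``it is easy to observe\ldots minimized when $M=\sfA$'': your orbit count of $24$, the observation that all $24$ entries of a clique's orbit live inside $E'\times E'$, the Cauchy--Schwarz bound $\sum_{(I,J)\in\orbit{\beta}}M[I,J]^2\ge f_\beta^2/24$, and the disjointness of orbits across distinct cliques are exactly the content behind that claim, and it is reassuring to see them written out.
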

\begin{proof}
Consider any matrix representation $M$ of $24f$ and its principal submatrix $\sfM_{E'}$. 
It is easy to observe that the Frobenius norm of $M_{E'}$ satisfies $\norm{F}{M_{E'}}^2 \geq
24\abs{\cliques}$, minimized when $M = \sfA$. 
Since $\norm{F}{M_{E'}}^2 \leq \abs{E'} \cdot \norm{2}{A_{E'}}^2$, we have that 
with high probability, 
\[
\norm{2}{A} 
~\geq~ 
\norm{2}{A_{E'}} 
~\geq~ 
\sqrt{\frac{24\abs{\cliques}}{2|E|}} = \Omega\inparen{\frac{\sqrt{n^4 p^6}}{\sqrt{n^2 p}}} 
~=~ 
\Omega\inparen{\sqrt{n^2 p^5}}.  
\]
\end{proof}

\subsubsection{Lower bound for the stronger relaxation computing $\hssos{f}$}
We now prove~\cref{lem:poly:nncsoslower}, which says that $\hssos{f} \geq \frac{n^{1/6}}{\log^{2} n}$
with high probability.
In order to show a lower bound, we look at the dual SDP for computing $\hssos{f}$, which is a
maximization problem over positive semidefinite, SoS-symmetric matrices $\sfM$ with $\Tr{\sfM} =
1$. We exhibit such a matrix $\sfM \in \RR^{[n]^2 \times [n]^2}$ 
for which the value of the objective $\iprod{\sfA}{\sfM}$ is large.


%
For large $\langle \sfA, \sfM \rangle$, one natural attempt is to take $\sfM$ to be $\sfA$ and
modify it to satisfy other conditions. Note that $\sfA$ is already SoS-symmetric. However, 
$\Tr{\sfA}= 0$, which implies that the minimum eigenvalue is negative.

Let $\lambda_{\min}$ be the minimum eigenvalue of $\sfA$, which is also the minimum eigenvalue of
$\sfA_{E'}$. 
Let $I_{E'} \in \RR^{[n]^2 \times [n]^2}$ be such that $I[(i_1, i_2), (i_1, i_2)] = 1$ if $(i_1,
i_2) \in E'$ and all other entries are $0$. Note that $I_{E'}$ is a diagonal matrix with $\Tr{I_{E'}} = 2m$. 
Adding $-\lambda_{\min} \cdot I_{E'}$ to $\sfA$ makes it positive semidefinite, so setting 
\begin{equation}\label{eq:poly:M-first-attemp}
\sfM 
~=~ \frac{\sfA - \lambda_{\min} I_{E'}}{\Tr{\sfA - \lambda_{\min} I_{E'}} } 
~=~  
\frac{\sfA - \lambda_{\min} I_{E'}}{ - 2 m \lambda_{\min} }
~=~
\frac{\sfA + \abs{\lambda_{\min}} \cdot I_{E'}}{ 2 m \cdot \abs{\lambda_{\min}} }
\end{equation}
makes sure that $\sfM$ is positive semidefinite, $\Tr{\sfM} = 1$, and $\iprod{\sfA}{\sfM} =
\frac{12 |\cliques|}{m \cdot \abs{\lambda_{\min}}}$ (each $4$-clique in $\cliques$ contributes $24$). 
Since $\abs{\cliques} = \Theta(n^4 p^6)$ and $m = \Theta(n^2 p)$ with high probability, if
$\abs{\lambda_{\min}} = O(np^{5/2})$, $\iprod{\sfA}{\sfM} = \Theta\inparen{n^2p^{5/2}}$, which is
$\Omega(n^{1/6})$ when $p = \Omega(n^{-1/3})$. 

The $\sfM$ defined in~\cref{eq:poly:M-first-attemp} 
does not directly work since it is not SoS-symmetric. However, the following claim
proves that this issue can be fixed by losing a factor $2$ in $\langle \sfA, \sfM \rangle$. 
\begin{claim}\label{claim:poly:fix}
There exists $\sfM$ such that it is SoS-symmetric, positive semidefinite with $\Tr{\sfM} = 1$, 
and $\langle \sfA, \sfM \rangle \geq \frac{6 |\cliques|}{m \cdot \abs{\lambda_{\min}}}$.
\end{claim}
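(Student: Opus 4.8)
The plan is to write $\sfM$ down explicitly, as a ``fix-up'' of the non-SoS-symmetric matrix of~\eqref{eq:poly:M-first-attemp}. Set $c_1 := \tfrac{1}{2m\abs{\lambda_{\min}}}$ and $c_2 := \tfrac{3}{4m}$, and for every edge $\{a,b\}\in E$ let $\sfI_{ab}$ be the (unique) SoS-symmetric matrix representation of the polynomial $(x_a^2 + x_b^2)^2 = x_a^4 + 2\,x_a^2 x_b^2 + x_b^4$; put $\sfI := \sum_{\{a,b\}\in E}\sfI_{ab}$ and finally $\sfM := \tfrac12\bigl(c_1\sfA + c_2\sfI\bigr)$. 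Since $\sfA$ is already SoS-symmetric and each $\sfI_{ab}$ is a matrix representation of a polynomial, $\sfM$ is SoS-symmetric. The reason to take $(x_a^2+x_b^2)^2$ rather than the ``identity on $E'$'' used in~\eqref{eq:poly:M-first-attemp} is that SoS-symmetrization is forced to spread the mass of the monomial $x_a^2 x_b^2$ over its whole size-$6$ orbit -- which includes the entry $((a,a),(b,b))$ corresponding to $x_a^2\cdot x_b^2$ -- and this spreading destroys positive-semidefiniteness unless the matrix also carries the diagonal mass $x_a^4 + x_b^4$, which is exactly what $(x_a^2+x_b^2)^2$ supplies.

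The trace and objective computations are short. One has $\Tr{\sfA}=0$ (the monomial $x_a x_b x_a x_b$ has a repeated variable, so it is never a $4$-clique monomial), while $\Tr{\sfI_{ab}} = 1 + 1 + 2\cdot\tfrac13 = \tfrac83$, so $\Tr{(c_1\sfA + c_2\sfI)} = c_2\cdot\tfrac{8m}{3} = 2$ and hence $\Tr{\sfM}=1$ -- this is where the ``factor $2$'' of the statement comes from. For the objective, $\iprod{\sfA}{\sfI}=0$ because $\sfA$ is supported on monomials in four distinct variables whereas $\sfI$ is supported on monomials in at most two variables, and $\iprod{\sfA}{\sfA}=\norm{F}{\sfA}^2 = 24\abs{\cliques}$ (each $4$-clique contributes its $4!$ ordered splittings into a row-index and a column-index). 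Therefore $\iprod{\sfA}{\sfM} = \tfrac12 c_1\cdot 24\abs{\cliques} = \tfrac{6\abs{\cliques}}{m\abs{\lambda_{\min}}}$, which is exactly the claimed bound.

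The remaining -- and main -- step is to verify $c_1\sfA + c_2\sfI \succeq 0$. I would decompose $\RR^{[n]^2}$ using the coordinates $s_a := v_{(a,a)}$, the symmetric edge-coordinates $u_{\{a,b\}} := v_{(a,b)} + v_{(b,a)}$, and the antisymmetric edge-coordinates $w_{\{a,b\}} := v_{(a,b)} - v_{(b,a)}$ (coordinates $v_{(a,b)}$ with $\{a,b\}\notin E$ play no role and may be zeroed out). The structural fact that makes everything work is that $\sfA$, being SoS-symmetric and supported only on positions with four \emph{distinct} indices, kills the $w$-directions and the $s$-directions entirely; in the orthonormal basis $\{\tfrac{1}{\sqrt2}(e_{(a,b)} + e_{(b,a)})\}$ of the symmetric edge-subspace one checks $v^{\top}\sfA v = u^{\top}\tilde B u$ with $\tilde B[\{a,b\},\{c,d\}] = \mathbf{1}[\{a,b,c,d\}\in\cliques]$ and $\lambda_{\min}(\tilde B) = \tfrac12\lambda_{\min}$. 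On the $\sfI$-side, a direct expansion gives $v^{\top}\sfI_{ab}v = s_a^2 + s_b^2 + \tfrac23 s_a s_b + \tfrac13\, u_{\{a,b\}}^2$, so $v^{\top}\sfI v = s^{\top}\!\bigl(D_G + \tfrac13 A_G\bigr)s + \tfrac13\sum_{e}u_e^2$, and $D_G + \tfrac13 A_G = \tfrac23 D_G + \tfrac13(D_G + A_G)\succeq 0$ since $D_G$ and the signless Laplacian $D_G + A_G$ are both PSD. Since neither $\sfA$ nor $\sfI$ has any mass on monomials of the form $x_a^3 x_b$, there are no $s$-$u$ cross terms, and altogether $v^{\top}(c_1\sfA + c_2\sfI)v \ge \bigl(\tfrac12 c_1\lambda_{\min} + \tfrac13 c_2\bigr)\sum_e u_e^2 + c_2\, s^{\top}(D_G + \tfrac13 A_G)s$, which is $\ge 0$ precisely because $c_2 = \tfrac32 c_1\abs{\lambda_{\min}}$. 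The obstacle to be careful about is this positivity verification: one must see that the negative spectrum of $\sfA$ sits entirely inside the symmetric edge-subspace (so that a mere scalar multiple of the identity on that subspace dominates it), and that $\sfI$ can be engineered to provide exactly such a multiple there while remaining manifestly PSD on the diagonal directions and putting no weight on clique monomials -- the last condition being what keeps $\iprod{\sfA}{\sfM}$ as large as $\iprod{\sfA}{\sfA}$ up to the factor $2$.
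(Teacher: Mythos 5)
Your proof is correct, and it takes a genuinely different route from the paper's. The paper's correction term is $\abs{\lambda_{\min}}\cdot(I_{E'}+Q_{E'})$, where $Q_{E'}$ places $1$ at positions $((a,a),(b,b))$ for edges $\{a,b\}$ and $\deg_G(a)$ at $((a,a),(a,a))$; PSD-ness of the total is argued by writing it as $[\sfA + \abs{\lambda_{\min}}\,I_{E'}] + \abs{\lambda_{\min}}\,Q_{E'}$ and invoking diagonal dominance for $Q_{E'}$, which is a one-line shortcut. You instead take the SoS-symmetric matrix of the explicit sum of squares $\sum_{\{a,b\}\in E}(x_a^2+x_b^2)^2$, which makes SoS-symmetry automatic (it is by construction the matrix of a polynomial), and you certify PSD-ness by restricting to the diagonal, symmetric-edge, and antisymmetric-edge subspaces, noting the absence of cross-terms, that the negative spectrum of $\sfA$ sits entirely in the symmetric-edge block (with minimum eigenvalue there equal to $\tfrac12\lambda_{\min}$), and that the correction contributes $\tfrac13 c_2$ times the identity on that block plus a manifestly PSD quadratic form on the diagonal block. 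This costs more bookkeeping but yields a more transparent PSD argument, and it also sidesteps a small wrinkle in the paper's construction: as literally defined, $Q_{E'}$ omits the orbit entries $((a,b),(b,a))$ of $x_a^2 x_b^2$, so $I_{E'}+Q_{E'}$ is not quite SoS-symmetric (the fix is routine, but afterwards $Q_{E'}$ alone is no longer diagonally dominant, so the PSD argument needs to be rephrased essentially along the lines you use). Your final $\sfM$ differs entrywise from the paper's, but it carries the same trace and the same objective value $6\abs{\cliques}/(m\abs{\lambda_{\min}})$, losing the same factor of $2$ relative to the naive non-SoS-symmetric matrix of~\cref{eq:poly:M-first-attemp}.
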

\begin{proof}
Let $Q_{E'} \in \RR^{[n]^2 \times [n]^2}$ be the matrix such that
\begin{itemize}
\item For $(i_1, i_2) \in E'$, $Q_{E'}[(i_1, i_1), (i_2, i_2)] = Q_{E'}[(i_2, i_2), (i_1, i_1)] = 1$. 
\item For $i \in [n]$, $Q_{E'}[(i, i), (i, i)] = \deg_G(i)$, where $\deg_G(i)$ denotes the degree of $i$ in $G$.  
\item All other entries are $0$. 
\end{itemize}
We claim that $I_{E'} + Q_{E'}$ is SoS-symmetric: $(I_{E'} + Q_{E'})[(i_1, i_2), (i_3, i_4)]$ has a
nonzero entry if and only if $i_1 = i_2 = i_3 = i_4$ or two different numbers $j_1, j_2$ appear
exactly twice and $(j_1, j_2) \in E'$ (in this case $(I_{E'} + Q_{E'})[(i_1, i_2), (i_3, i_4)] =
1$). 
Since $\sfA$ is SoS-symmetric, so $\sfA + \abs{\lambda_{\min}} \cdot (I_{E'} + Q_{E'})$ is also
SoS-symmetric. 

It is easy to see that $Q_{E'}$ is diagonally dominant, and hence positive semidefinite. 
Since we already argued that $\sfA + \abs{\lambda_{\min}} \cdot I_{E'}$ is positive semidefinite,
$\sfA + \abs{\lambda_{\min}} \cdot (I_{E'} + Q_{E'})$ is also positive semidefinite. Also,
$\Tr{Q_{E'}} = \sum_{i \in [n]} \deg_G(i) = 2m$. 
Thus, we take
\[
\sfM ~=~ 
\frac{\sfA + \abs{\lambda_{\min}} \cdot (I_{E'} + Q_{E'})}{\Tr{\sfA + \abs{\lambda_{\min}} \cdot
    (I_{E'} + Q_{E'})}
 }
 ~=~  
\frac{\sfA + \abs{\lambda_{\min}} \cdot I_{E'}}{ 4 m \cdot \abs{\lambda_{\min}} } \mper
\]
By the above arguments, we have that $\sfM$ that is PSD, SoS-symmetric with $\Tr{\sfM} = 1$, and
\[
\langle \sfA, \sfM \rangle = \frac{6 |\cliques|}{m \cdot \abs{\lambda_{\min}}}
\]
as desired. 
\end{proof}

It only remains to bound $\lambda_{\min}$, which is the minimum eigenvalue of $\sfA$ and
$\sfA_{E'}$. For $p$ in the range $[n^{-1/3}, n^{-1/4}]$, we will show  a bound of
$\tilde{O}(n^{3/2}p^4)$ below, which when combined with the above claim, completes the proof
of~\cref{lem:poly:nncsoslower}.

\subsubsection{Bounding the smallest eigenvalue via the trace method}
Our estimate $\abs{\lambda_{\min}} = O(np^{5/2})$ is based on the following
observation: $\sfA_{E'}$ is a $2m \times 2m$ random matrix where each row and column is expected to
have $\Theta(n^2 p^5)$ ones (the expected number of $4$-cliques an edge participates in). An adjacency
matrix of a random graph with average degree $d$ has a minimum eigenvalue $- \Theta(\sqrt{d})$,
hence the estimate $\abs{\lambda_{\min}} = O(n p^{5/2})$. Even though $\sfA_{E'}$ is not sampled
from a typical random graph model (and even $E'$ is a random variable), we will be able to prove the
following weaker estimate, which suffices for our purposes.
\begin{lemma}\label{lem:poly:eigenvalue}
With high probability over the choice of the graph $G$, we have
\[
\abs{\lambda_{\min}} ~=~ 
\begin{cases}
\tilde{O}\inparen{n^{3/2} \cdot p^4} & \text{for}~ p \in \insquare{n^{-1/3}, n^{-1/4}} \\[5 pt]
\tilde{O}\inparen{n^{5/3} \cdot p^{14/3}} & \text{for}~ p \in \insquare{n^{-1/4}, 1/2} 
\end{cases}
\]
\end{lemma}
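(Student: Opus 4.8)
Since every nonzero entry of $\sfA$ lies in the principal block $\sfA_{E'}$, we have $\lambda_{\min}(\sfA)=\lambda_{\min}(\sfA_{E'})$, and \cref{lem:poly:nncsoslower2} already gives $\lambda_{\min}<0$ with high probability, so it suffices to upper bound $-\lambda_{\min}(\sfA_{E'})$. For \emph{any} positive semidefinite matrix $P$ supported on $E'\times E'$, Weyl's inequality gives $\lambda_{\min}(\sfA_{E'})\ge\lambda_{\min}(\sfA_{E'}-P)\ge-\norm{2}{\sfA_{E'}-P}$, so the plan is to produce one such $P$ for which $\norm{2}{\sfA_{E'}-P}$ is small. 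The Perron eigenvector of the (roughly $\bar d$-regular, $\bar d=\Theta(n^2p^5)$) matrix $\sfA_{E'}$ is close to $\mathbf 1_{E'}$, which motivates taking $P:=q\cdot\mathbf 1_{E'}\mathbf 1_{E'}^{\!T}$ with $q:=\bar d/|E'|=\Theta(p^4)$; one first conditions on the probability-$(1-o(1))$ event (via the Chernoff-type estimates of \cref{claim:poly:chernoff-counts}) that all row sums of $\sfA_{E'}$ lie within a polylog factor of $n^2p^5$ and that $|E'|=\Theta(n^2p)$. Then
\[
\big(-\lambda_{\min}(\sfA_{E'})\big)^{2k}~\le~\norm{2}{\sfA_{E'}-P}^{2k}~\le~\Ex{\tr\big((\sfA_{E'}-P)^{2k}\big)}
\]
for an integer $k$ to be taken as $\Theta(\log n)$ at the end; the resulting $\tilde O(\cdot)$ absorbs the slack $|E'|^{1/2k}=n^{o(1)}$ and the combinatorial overhead $k^{O(k)}$.

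\textbf{Expansion into weighted closed walks in $G$.} Expanding the trace, $\tr\big((\sfA_{E'}-P)^{2k}\big)=\sum_{e_0,\dots,e_{2k-1}\in E'}\prod_{t\in\ZZ/2k\ZZ}\big(\sfA_{E'}[e_t,e_{t+1}]-q\big)$. Opening each factor, a term of this product is indexed by a subset $S\subseteq\ZZ/2k\ZZ$ of \emph{clique steps} and contributes $(-q)^{2k-|S|}\prod_{t\in S}\sfA_{E'}[e_t,e_{t+1}]$; for $t\in S$ the four endpoints $V(e_t)\cup V(e_{t+1})$ must be distinct and span a $K_4$ of $G$, i.e.\ six prescribed edges of $G$ must be present. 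Grouping closed walks by their combinatorial \emph{shape} (the walk pattern, the choice of $S$, and the coincidence pattern among the indices) and taking the expectation over $G$, the contribution of a shape is $\pm\,q^{2k-|S|}p^{f}$ times the number of valid vertex labelings, at most $n^{v}$, where $v$ is the number of distinct vertices and $f$ the number of distinct $G$-edges demanded by the clique steps. Hence $\Ex{\tr\big((\sfA_{E'}-P)^{2k}\big)}\le\sum_{\text{shapes}}n^{v}\,p^{f}\,q^{\,2k-|S|}$. As in the classical trace method, the role of the $-q$ corrections is to cancel shapes that contain a clique step whose $K_4$ is freshly and singly used; what survives are the "even" shapes, in which the $K_4$'s of the clique steps overlap enough that every demanded $G$-edge is carried by at least two of them (up to a compensating correction). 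This already forces $f\le 3|S|$ (each $K_4$ has six edges) and $v\le 2|S|+O(1)$ (a clique step is disjoint from its predecessor, so adds $0$ or $2$ new vertices).

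\textbf{The combinatorial estimate and the two regimes.} The core of the argument is to upgrade these crude inequalities to a uniform bound $n^{v}p^{f}q^{2k-|S|}\le\rho^{2k}$ over surviving shapes, with $\rho=\tilde O(n^{3/2}p^4)$ when $np^4\le 1$ and $\rho=\tilde O(n^{5/3}p^{14/3})$ when $np^4\ge 1$. This amounts to analyzing how efficiently a closed walk whose demanded-edge graph is a union of $K_4$'s, with no cheaply removable edge, can explore $v$ fresh vertices while demanding only $f$ fresh edges; here the local-sparsity bounds of \cref{claim:poly:chernoff-counts} and \cref{claim:poly:triangles} (the number of common neighbours of two or three vertices is at most $np^2$ and $np^3$, and the shattered-triangle count) are used to rule out or charge the dense sub-configurations --- pairs of $K_4$'s sharing a triangle, embedded $K_5$'s, and so on --- that would otherwise make $f$ too small relative to $v$. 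The quantity $np^4$, the expected number of common neighbours of a fixed $4$-set, governs precisely how costly such clusters are: for $p\le n^{-1/4}$ they are rare enough that the optimal exchange of a fresh vertex for fresh edges yields $\rho=\tilde O(n^{3/2}p^4)$, whereas for $p\ge n^{-1/4}$ they are unavoidable and one only gets $\rho=\tilde O(n^{5/3}p^{14/3})$. Summing the at most $k^{O(k)}$ shapes, taking $2k$-th roots with $k=\Theta(\log n)$, and combining with the conditioning from the first step yields the two stated bounds on $|\lambda_{\min}|$ (and, via \cref{claim:poly:fix}, completes the proof of \cref{lem:poly:nncsoslower}).

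\textbf{Main obstacle.} The delicate step is the combinatorial analysis in the last paragraph: making the cancellation of non-"even" shapes rigorous in this \emph{correlated} model (one clique step touches six $G$-edges and distinct clique steps routinely share edges, so the classical independent-entry cancellation must be replaced by a careful bookkeeping of which $G$-edges are left unpaired), and then carrying out the optimization over the surviving shapes tightly enough to produce the stated exponents --- in particular pinning down the $np^4=1$ threshold and verifying that the dense-cluster contributions are genuinely controlled by the common-neighbourhood estimates rather than dominating the sum.
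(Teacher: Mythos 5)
Your proposal identifies the right high-level framework --- subtract a rank-one PSD matrix and bound the remainder by the trace method with $k=\Theta(\log n)$ --- which is the same framework the paper uses. However there are two genuine gaps.

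\textbf{First gap: the choice of centering matrix creates a dependence problem you don't resolve.} You center by $P=q\cdot\mathbf 1_{E'}\mathbf 1_{E'}^T$ with $q:=\bar d/|E'|$, where $\bar d$ and $E'$ are both random (functions of $G$). Conditioning on a high-probability event where $q=\Theta(p^4)$ does not by itself make $\Ex{\tr((\sfA_{E'}-P)^{2k})}$ tractable: the expansion $\sum_{e_0,\dots,e_{2k-1}\in E'}\prod_t(\sfA_{E'}[e_t,e_{t+1}]-q)$ has both the index set $E'$, the matrix entries, and the scalar $q$ random, so the clean product structure needed for the "cancellation of freshly-singly-used cliques" is not available after conditioning. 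The paper sidesteps this entirely by working over \emph{all} pairs $\binom{[n]}{2}$ (not just $E'$) with the \emph{deterministic} centering $p^4\hJ_E$, giving $\hN_E$ whose entry on a pair of potential edges is exactly the conditional deviation $\ind{\text{clique}}-p^4$ given both edges present; this restores independence across sufficiently separated clique steps and is what makes the vanishing-expectation criterion (for each $j\in D$ either $|E_j|\le 3$ or $E_j$ meets some other $E_{j'}$) clean to state and use. Your "every demanded $G$-edge is carried by at least two clique steps" is stronger than and not equivalent to this criterion, and the phrase "up to a compensating correction" glosses over precisely the bookkeeping that has to be made precise.

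\textbf{Second, decisive gap: the combinatorial estimate is asserted, not proved.} The heart of the lemma is establishing, uniformly over surviving shapes, a bound of the form $n^{v}p^{f}\cdot(p^4)^{2k-|S|}\le\rho^{2k}$, and deriving the two exponents (and the $np^4=1$ threshold) from it. You state this bound, offer a plausibility argument invoking the common-neighbourhood counts from \cref{claim:poly:chernoff-counts} and \cref{claim:poly:triangles}, and then explicitly flag carrying it out as "the delicate step" and "main obstacle." But that step \emph{is} the proof: the paper devotes a separate charging argument (classifying each walk step by how many fresh vertices it introduces and whether it is a clique step, then having type-$1$, type-$0$, type-$(-1)$ steps "compensate" type-$2$ steps under case analysis) to show $q(H)\le 2+b(4(r-|D|)+|E_0|+|E_D|)+cr$ for the appropriate piecewise-linear $c(b)$, and it is from aggregating the linear constraints of that case analysis that the exponents $3/2,\,4$ and $5/3,\,14/3$ with breakpoint $b=1/4$ emerge. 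Nothing in your crude inequalities $f\le 3|S|$, $v\le 2|S|+O(1)$ is tight enough to recover those exponents, and the local-sparsity facts from \cref{claim:poly:chernoff-counts} are estimates about a \emph{fixed} graph, not the per-shape deterministic counting inequality the trace method needs. The intuition about $np^4$ governing the regime transition is correct, but intuition is not a substitute for the charging argument.

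So: right framework, but the centering needs to be replaced by the deterministic one (or the dependence issue otherwise resolved), and the core shape-counting lemma --- which is the actual content of \cref{lem:poly:eigenvalue} --- is missing.
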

%
\begin{proof}
Instead of $\sfA_{E'}$, we directly study $\sfA$ to bound $\lambda_{\min}$. 
For simplicity, we consider the following matrix $\hA$, where each row and column is indexed by an
unordered pair $\{i, j\} \in \binom{[n]}{2}$, and $\hA[\{i_1, i_2\}, \{i_3, i_4\}] = 1$ if and only
if $i_1, i_2, i_3, i_4$ form a $4$-clique.
$\sfA$ has only zero entries in the rows or columns indexed by $(i, i)$ for all $i \in [n]$, 
and for two pairs $i_1 \neq i_2$ and $i_3 \neq i_4$, we have 
\begin{align*}
\hA[\{i_1, i_2\}, \{i_3, i_4\}] ~\defeq~ &\frac{1}{4} \cdot \left\{ \sfA[(i_1, i_2), (i_3, i_4)] + 
\sfA[(i_1, i_2), (i_4, i_3)] \right\}  \\ 
+~ &\frac14 \cdot \left\{\sfA[(i_2, i_1), (i_3, i_4)] + 
\sfA[(i_2, i_1), (i_4, i_3)] \right\} \mper
\end{align*}
Therefore, $\abs{\lambda_{\min}\inparen{\sfA}} \leq 4\cdot \abs{\lambda_{\min}\inparen{\hA}}$
and it suffices to bound the minimum eigenvalue of $\hA$. 
We consider the matrix $\hN_E := \hA - p^{4} \cdot \hJ_E$, where $\hJ_E \in \RR^{\binom{[n]}{2}
  \times \binom{[n]}{2}}$ is such that
\[
\hJ_E[\{i_1, i_2\}, \{i_3, i_4\}] ~=~ 
\begin{cases}
1 & \text{if}~ \{i_1, i_2\}, \{i_3, i_4\} \in E \\
0 & \text{otherwise}
\end{cases} \mper
\]
Since $\hJ_E$ is a rank-$1$ matrix with a positive eigenvalue, the minimum eigenvalues of $\hA$ and
$\hN_E$ are the same. In summary, $\hN_E$ is the following matrix. 
\[
\hN_E[\{i_1, i_2\}, \{i_3, i_4\}] ~=~ 
\begin{cases}
1 - p^4 & \text{if}~\{i_1, i_2, i_3, i_4\} \in \cliques \\
-p^4 &  \text{if}~\{i_1, i_2, i_3, i_4\} \notin \cliques ~\text{but}~ \{i_1,i_2\}, \{i_3,i_4\} \in E
\\
0 & \text{otherwise}
\end{cases}
\]
We use the trace method to bound 
$\norm{2}{\hN_{E}}$, based on the observation that for every even $r \in \NN$, 
$\norm{2}{\hN_{E}} \leq \inparen{\Tr{(\hN_E)^r}}^{1/r}$.
Fix an even $r \in \NN$. The expected value of the trace can be represented as
\[
\Ex{\Tr{(\hN_E)^r}}
~=~ 
\Ex{\sum_{I^1, \dots, I^r \in \binom{[n]}{2}} \prod_{k = 1}^r \hN_E[I^k, I^{k + 1}]}
~=~ 
\sum_{I^1, \dots, I^r \in \binom{[n]}{2}} \Ex{ \prod_{k = 1}^r \hN_E[I^k, I^{k + 1}] }
\]
where each $I^j  = \inbraces{i^j_1, i^j_2} \in \binom{[n]}{2}$ is an edge of the complete graph on $n$ vertices (call it a {\em potential edge}) and $I^{r + 1} := I^1$. 

Fix $r$ potential edges $I^1, \dots, I^r$, let $t := \prod_{k = 1}^r \hN_E[I^k, I^{k + 1}]$, and
consider $\E[t]$. 
Let $E_0 := \{ I^1, \dots, I^r \}$ be the set of distinct edges represented by $I^1, \dots, I^r$. 
Note that the expected value is $0$ if one of $I^j$ does not become an edge. 
Therefore, $\E[t] = p^{|E_0|} \cdot \Ex{t ~|~ E_0 \subseteq E}$. 

Let $D \subseteq [r]$ be the set of $j \in [r]$ such that all four vertices in $I^j$ and $I^{j+1}$
are distinct \ie
\[
D ~\defeq~ \inbraces{j \in [r] ~\mid~ \left|\inbraces{i^j_1, i^j_2, i^{j+1}_1, i^{j+1}_2} \right| = 4} \mper
\]
For $j \in [r] \setminus D$, $\{ i^j_1, i^j_2, i^{j+1}_1, i^{j+1}_2 \}$ cannot form a $4$-clique, so
given that $I^j, I^{j + 1} \in E$, we have $\hN_E[I^j, I^{j + 1}] = -p^4$. For $j \in D$, let 
$E_j := \inbraces{ 
\inbraces{i^j_1, i^{j + 1}_1},
\inbraces{i^j_1, i^{j + 1}_2},
\inbraces{i^j_2, i^{j + 1}_1},
\inbraces{i^j_2, i^{j + 1}_2}
} \setminus E_0$ 
be the set of edges in the $4$-clique created by $\inbraces{ i^j_1, i^j_2, i^{j+1}_1, i^{j+1}_2 }$ except
ones in $E_0$. 
Then
\[
\Ex{t} 
~=~ 
p^{|E_0|} \cdot \Ex{t | E_0 \subseteq E} 
~=~ 
p^{|E_0|} \cdot (-p^4)^{r - |D|} \cdot \Ex{\prod_{k \in D} \hN_E[I^k, I^{k + 1}] \, \mid \, E_0 \subseteq E }.
\]
Suppose there exists $j \in D$ such that $\abs{E_j} = 4$ and $E_j \cap (\cup_{j' \in D \setminus \{
  j \}} E_{j'}) = \emptyset$. Then, given that $E_0 \subseteq E$, $\hN_E[I^j, I^{j + 1}]$ is
independent of all $\inbraces{ \hN_E[I^k, I^{k + 1}] }_{k \in D \setminus \{ j \} }$, and 
\[
\Ex{\hN_E[I^j, I^{j + 1}] | E_0 \subseteq E} ~=~ p^4 (1 - p^4) + (1 - p^4)(- p^4) ~=~ 0 \mper
\] 
Therefore, $\E[t] = 0$ unless for all $j \in D$, either $|E_j| \leq 3$ or there exists $j' \in D
\setminus \{ j \}$ with $E_j \cap E_{j'} \neq \emptyset$. 

Let $E_D := \bigcup_{j \in D} E_j$. Note that $E_0$ and $E_D$ completely determines $t$. $\E[t]$ can
be written as 
\begin{align*}
&\, p^{|E_0|} \cdot (-p^4)^{r - |D|} \cdot 
\Ex{\prod_{k \in D} \hN_E[I^k, I^{k + 1}] \, \mid \, E_0 \subseteq E } \\
= \, & \,  p^{|E_0|} \cdot (-p^4)^{r - |D|} \cdot \sum_{F \subseteq E_D} \inparen{ p^{|F|} (1 -
       p)^{|E_D| - |F|} \cdot \Ex{\prod_{k \in D} \hN_E[I^k, I^{k + 1}] \, | \, E_0 \subseteq E,
       E_D \cap E = F } } \\
= \, &\,  p^{|E_0|} \cdot (-p^4)^{r - |D|} \cdot \sum_{F \subseteq E_D} \inparen{ p^{|F|} (1 -
       p)^{|E_D| - |F|} \cdot (1 - p^4)^{|D| - a(F)} (-p^4)^{a(F)} },
\end{align*}
where $a(F)$ denotes the number of $j \in D$ with $E_j \not \subseteq F$. 
Since $E_D \subseteq F \cup \inparen{\bigcup_{j : E_j \not \subseteq F} E_j) }$ and $4a(F) + |F|
\geq |E_D|$, we have
\begin{align*}
\Ex{t}
&~=~ 
p^{|E_0|} \cdot (-p^4)^{r - |D|} \cdot \sum_{F \subseteq E_D} \inparen{ p^{|F|} (1 -
       p)^{|E_D| - |F|} \cdot (1 - p^4)^{|D| - a(F)} (-p^4)^{a(F)} } \\
&~\leq~
p^{|E_0|} \cdot (p^4)^{r - |D|} \cdot 2^{|E_D|} \cdot p^{|E_D|} \\[3 pt]
&~\leq~ 
2^{4r} \cdot p^{4(r - D) + |E_0| + |E_D|} \mper
\end{align*}

We now count the number of terms which contribute to the sum.
Fix a graph $H$ with $r$ labelled edges $I^1, \dots, I^r$ (possibly repeated) and $q := q(H)$
vertices, without any isolated vertex (so $q \leq 2r$). There are at most $\binom{q}{2}^r \leq
(2r)^{2r}$ such graphs. Then $I^1, \dots, I^r$, as edges in $\binom{[n]}{2}$, are determined by a
map $V_H \to [n]$. There are at most $n^q$ such mappings. 

Let $E_0 := E_0(H), D := D(H), E_j := E_j(H), E' := E'(H)$ be defined as before. Note that $E_0$ is
set the edges of $H$.  As observed before, the contribution from $H$ is $0$ if there exists $j \in
D$ such that $|E_j| = 4$ and $E_j$ is disjoint from $\{ E_{j'} \}_{j' \in D \setminus \{ j \}}$. Let
$\calH$ be the set of $H$ that has nonzero contribution. 
Then,
\begin{align*}
\Ex{\Tr{\inparen{\hN_E}^r}} 
& ~=~ 
\sum_{I^1, \dots, I^r \in \binom{[n]}{2}} \Ex{ \prod_{k = 1}^r \hN_E[I^k, I^{k + 1}] } \\
&~\leq~ 
\sum_{H \in \calH} n^{q(H)} \cdot 2^{4r} \cdot p^{4(r - D(H)) + |E_0(H)| + |E_D(H)|} \\
& ~\leq~ 
(2r)^{2r}  \cdot \max_{H \in \calH} \inparen{ n^{q(H)} 2^{4r} \cdot p^{4(r - D(H)) + |E_0(H)| + |E_D(H)|} } \\
& ~\leq~ 
(8r)^{2r}   \cdot \max_{H \in \calH} \inparen{ n^{q(H)} p^{4(r - D(H)) + |E_0(H)| + |E_D(H)|} }
\end{align*}
We will prove the following bound on the maximum contribution of any $H \in \calH$.
\begin{claim}\label{claim:poly:contribution}
Let $\calH$ be defined as above. Then, for all $H \in \calH$, we have
\[
n^{q(H)} p^{4(r - D(H)) + |E_0(H)| + |E_D(H)|} ~\leq~ n^2 \cdot B_p^r \mcom
\]
where
\[
B_p
~=~ 
\begin{cases}
n^{3/2} \cdot p^{4} & \text{for}~ p \in \insquare{n^{-1/3}, n^{-1/4}} \\[5 pt]
n^{5/3} \cdot p^{14/3} & \text{for}~ p \in \insquare{n^{-1/4}, 1/2} 
\end{cases} \mper
\]
\end{claim}
Using the above claim, we can bound $\Ex{\Tr{\inparen{\hN_E}^r}}$ as 
\begin{align*}
\Ex{\Tr{\inparen{\hN_E}^r}}
& ~\leq~ (8r)^{2r}   \cdot \max_{H \in \calH} \big( n^{q(H)} p^{4(r - D(H)) + |E_0(H)| + |E_D(H)|}
 \big) \\
& ~\leq~ (8r)^{2r} \cdot n^2 \cdot B_p^r \mcom
\end{align*}
where $B_p$ is given by~\cref{claim:poly:contribution} for different ranges of $p$. By Markov's
inequality, we get that with probability $1 - \frac{1}{n}$, we have $\Tr{\inparen{\hN_E}^r} \leq
(8r)^{2r} \cdot n^3 \cdot B_p^r$, which gives
\[
\norm{2}{\hN_E} ~\leq~ \inparen{8r}^2 \cdot B_p \cdot n^{3/r} \mper
\]
Choosing $r = \Theta(\log n)$ then proves the lemma.
%
%
\end{proof}

It remains to prove~\cref{claim:poly:contribution}.

\subsubsection{Analyzing Contributing Subgraphs}
Recall that  graphs $H \in \calH$ were constructed from edges $\{I^1, \ldots, I^r\}$, with edge
$I^j$ consisting of vertices $\{i_1^j, i_2^j\}$. Also, we define $q(H) = \abs{V(H)}$.
Moreover, we defined the following sets for graph $H$
\begin{align*}
  E_0(H) &~\defeq~\{I^1, \ldots, I^r\}  \quad\text{(counting only distinct edges)}\\
  D(H) &~\defeq~\inbraces{j \in [r] \suchthat \abs{\inbraces{i_1^j, i_2^j, i_1^{j+1}, i_2^{j+1}}} = 4}\\
  E_j(H) &~\defeq~\inbraces{\{i_1^j, i_1^{j+1}\}, \{i_1^j, i_2^{j+1}\}, \{i_2^j, i_1^{j+1}\},\{i_2^j, i_2^{j+1}\}}
  \setminus E_0(H) \\
  E_D(H) &~\defeq~\bigcup_{j \in D} E_j(H)\mper
\end{align*}
Moreover, the graph $H$ is in $\calH$ only if for every $j \in D$, either $\abs{E_j(H)} \leq 3$ or
there exists $j' \in D\setminus\{j\}$ such that $E_j(H) \cap E_{j'}(H) \neq \emptyset$. 
The~\cref{claim:poly:contribution} then follows from the following combinatorial claim (taking $b =
\log(1/p)/\log n$).
\begin{claim}
Any graph $H \in \calH$ satisfies, for all $b \in [0,1/3]$
\[
q(H) ~\leq~  2 +  b \cdot \inparen{4(r - \abs{D(H)}) +  |E_0(H)| + |E_D(H)|} + c \cdot r \mcom
\]
where $c = 5/3 - 14b/3$ ~for $b \in [0,1/4]$ and $c = 3/2 - 4b$ ~for $b \in [1/4, 1/3]$.
\end{claim}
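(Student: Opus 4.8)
The plan is to prove the combinatorial bound on $q(H)$ by analyzing the structure of the closed walk that generated $H$. Recall that $H$ is built from the sequence of potential edges $I^1,\dots,I^r$ (cyclically, with $I^{r+1}=I^1$), and that $q(H)=\abs{V(H)}$ where $V(H)$ is the set of vertices appearing in these edges (no isolated vertices). The key quantities are: $\abs{E_0(H)}$, the number of distinct edges among $I^1,\dots,I^r$; $D(H)$, the set of "clique-forming" steps where $I^j$ and $I^{j+1}$ are vertex-disjoint; and $E_D(H)$, the set of extra edges forced into $E$ by the clique constraints at the steps in $D$. I want to bound $q(H)$ linearly in these quantities plus a term linear in $r$, with coefficients depending on the parameter $b=\log(1/p)/\log n$.

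**Main approach: a discharging / incremental-construction argument.** First I would traverse the cyclic sequence $I^1, I^2, \dots, I^r$ and count, step by step, how many \emph{new} vertices appear. A step $j\notin D(H)$ (so $I^j$ and $I^{j+1}$ share at least one vertex) can introduce at most one new vertex, and more importantly such steps cannot contribute to $E_D$. A step $j\in D(H)$ introduces at most two new vertices but also potentially creates up to four new edges in $E_j(H)$. The crucial structural fact from the definition of $\calH$ is that every $j\in D$ has \emph{either} $\abs{E_j(H)}\le 3$ \emph{or} shares an edge of $E_j$ with some other $E_{j'}$. I would split $D$ into $D_{\le 3}$ (those with $\abs{E_j}\le 3$, meaning $\{i_1^j,i_2^j,i_1^{j+1},i_2^{j+1}\}$ already spans some edge present in $E_0\cup\bigcup E_{j'}$) and $D_{\mathrm{coll}}$ (the rest, which must collide with another $E_{j'}$). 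Each of these conditions limits how many genuinely new vertices can be introduced while the graph remains "tight" — roughly, a new vertex is only free when enough new edges accompany it. I would encode this as an inequality of the form $q(H) \le 2 + \sum_j (\text{new vertices at step } j)$ and then charge the new-vertex count at each step against a convex combination of "edge budget" ($b$ times new edges from $E_0$ or $E_D$) and a flat "$r$ budget" ($c$ per step), with the coefficients $b$ and $c$ tuned so the charging balances in the worst case. The two regimes $b\in[0,1/4]$ and $b\in[1/4,1/3]$ (with $c=5/3-14b/3$ and $c=3/2-4b$ respectively) correspond to which worst-case local configuration dominates — I would identify the extremal repeated pattern (likely alternating between a fixed edge and fresh vertices forming $4$-cliques, versus denser overlapping patterns) and check the inequality is tight exactly there.

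**Carrying it out.** Concretely, after fixing the step decomposition, I would set up the accounting: let $n_{\mathrm{new}}(j)$ be the number of vertices of $I^j$ not appearing in $I^1,\dots,I^{j-1}$, so $q(H)=\sum_{j=1}^r n_{\mathrm{new}}(j)$ up to the $+2$ slack for the starting edge $I^1$. For $j\notin D$, $n_{\mathrm{new}}(j)\le 1$. For $j\in D$, $n_{\mathrm{new}}(j)\le 2$, but I claim that when $j$ introduces two new vertices and lies in $D$, the clique condition forces a "return" structure that can be amortized: either $E_j$ contributes $\ge 1$ (in fact the new edges incident to those two new vertices) to $\abs{E_0(H)}+\abs{E_D(H)}$, or the collision with $E_{j'}$ forces a later step to introduce fewer new vertices. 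Summing the local inequality $n_{\mathrm{new}}(j) \le c + b\cdot(\text{new } E_0/E_D\text{-edges charged to } j) + (\text{extra from } 4(r-\abs D)\text{ for } j\notin D)$ over all $j$, and matching the total against the right-hand side $2 + b(4(r-\abs D)+\abs{E_0}+\abs{E_D})+cr$, yields the claim.

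**The main obstacle.** The hard part will be the amortization at clique-forming steps where two new vertices appear — making precise how the two clique conditions ($\abs{E_j}\le 3$ versus collision with some $E_{j'}$) each "pay for" the second new vertex, and verifying that double-counting of edges shared between different $E_j$'s does not break the bound (since $E_D$ counts each edge once, a step in $D_{\mathrm{coll}}$ whose $E_j$ overlaps another $E_{j'}$ contributes \emph{less} to $\abs{E_D}$ than a naive per-step count would suggest, which is exactly why one needs the flat $c\cdot r$ term to absorb the slack). Getting the constants $c=5/3-14b/3$ and $c=3/2-4b$ to come out correctly requires identifying the exact worst-case gadget in each range of $b$ and checking the local inequality is saturated there; I expect this to reduce to a small finite case analysis over the possible overlap patterns of $I^j, I^{j+1}$ with the already-constructed graph, but pinning down that these are the only extremal cases is where the care is needed.
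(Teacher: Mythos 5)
Your high-level plan --- classify steps by how many new vertices they introduce, charge each new vertex against an edge budget ($b$ per new edge) plus a flat per-step budget ($c$ per step), and let the $\calH$ condition force amortization at the expensive steps --- is the right approach and matches the paper. But the proposal leaves out the key structural lemma that makes the charging actually close, and some of what you do say points in a slightly wrong direction.

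First, your split of $D$ into $D_{\le 3}$ and $D_{\mathrm{coll}}$ is not the decomposition the argument runs on. The paper classifies each \emph{index} $j \in [r]$ into Type $-1$ (if $I^j \cap I^{j-1} \neq \emptyset$, i.e.\ $j-1 \notin D$) or Type $k \in \{0,1,2\}$ (if $j-1\in D$ and $I^j$ introduces exactly $k$ new vertices). The $\calH$ membership condition ($|E_j|\le 3$ or $E_j$ collides with some $E_{j'}$) is used only once, to extract the real workhorse: \emph{every Type-$2$ index $j$ has a ``compensator'' --- the smallest $j' > j$ with $I^{j'}\cap I^j \neq \emptyset$ --- and this $j'$ is necessarily not Type $2$}. (Both vertices of $I^j$ are new at step $j$, so if no later step ever touched them you would have $|E_{j-1}|=4$ and $E_{j-1}$ disjoint from every other $E_{j'}$, contradicting $H \in \calH$.) Your proposal gestures at ``a later step introducing fewer new vertices'' but never pins down this precise compensation map, and without it the charging has no place to send the deficit of a two-new-vertex step.

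Second, the constants come from a finite case analysis on compensation patterns, not from a single extremal gadget: each compensator can serve at most two Type-$2$ indices, and one must tabulate the local vertex/edge counts for ``Type $1$ compensates one Type $2$'', ``Type $0$ compensates one'', ``Type $0$ compensates two'' (which itself splits into two sub-cases, one of which requires a further structural observation that the compensating edge must already be present in the graph), ``Type $-1$ compensates one'', and ``Type $-1$ compensates two''. Each pattern yields a linear constraint of the form $\alpha b + \beta c \ge \gamma$; the value $c = \max\{3/2 - 4b,\, 1 - 5b/2,\, 4/3 - 11b/3,\, 5/3 - 14b/3\}$ is the pointwise max of these, and the two formulas in the claim are just which constraint binds on each range of $b$. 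Your plan defers exactly this analysis (``I expect this to reduce to a small finite case analysis''), but this is where the entire content of the claim lives --- in particular the sub-case of a Type-$0$ index compensating two Type-$2$ indices with $e_{j'}=0$ requires a non-obvious argument that a fourth index also participates, and without it the constant $5/3 - 14b/3$ would not emerge. As written, the proposal is a correct skeleton missing the load-bearing lemma and the case analysis that determine the constants.
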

%
\begin{proof}
Fix a graph $H \in \calH$. 
Let $j = 1, \dots, r$, let $V_j := \inbraces{ i^{j}_1, i^{j}_2 }_{1 \leq j \leq r}$ (i.e., the set
of vertices covered by $I^1, \dots, I^j$). 
For each $j = 2, \dots, r$, let $v_j := |V_j| - |V_{j - 1}|$ and classify the index $j$ to one of
the following types.
\begin{itemize}
\item Type $-1$: $I^j \cap I^{j - 1} \neq \emptyset$ (equivalently, $j-1 \notin D$).
\item Type $k$ ($0 \leq k \leq 2$): $I^j$ and $I^{j - 1}$ are disjoint, and $v_j = k$ (i.e., adding
  $I^j$ introduces $k$ new vertices). 
\end{itemize}
Let $T_k$ ($-1 \leq k \leq 2$) be the set of indices of Type $k$, and let $t_k := |T_k|$. 
The number of vertices $q$ is bounded by 
\[
q 
~\leq~ 
2 + 1 \cdot t_{-1} + 0 \cdot t_0 + 1 \cdot t_1 + 2 \cdot t_2 
~=~ 
2 + t_{-1} + t_1 + 2t_2.
\]

Let $H_j$ be the graph with $V_j$ as vertices and edges
\[
E(H_j) ~=~ \inbraces{I^1, \ldots, I^j} \bigcup \inparen{\bigcup_{k \in D \cap [j-1]} E_k} \mper
\]
%
%
For $j = 2, \dots, r$, let $e_j = \abs{E(H_j)} - \abs{E(H_{j-1})}$. 
For an index $j \in T_2$, adding two vertices ${i^j_1, i^j_2}$ introduces at least $5$ edges in
$H_j$ compared to $H_{j - 1}$ (i.e., six edges in the $4$-clique on $\inbraces{i^{j-1}_1, i^{j-1}_2,
  i^j_1, i^j_2 }$ except $I^{j-1}$), so $e_j \geq 5$. Similarly, we get $e_j \geq 3$ for each $j \in T_1$. 

The lemma is proved via the following charging argument. For each index $j = 2, \dots, r$, we get
value $b$ for each edge in $H_j \setminus H_{j - 1}$ and get value $c$ for the new index. 
If $j \in T_{-1}$, we get an additional value of $4b$. We give this value to vertices in $V_{j}
\setminus V_{j-1}$.
If we do not give more value than we get and each vertex in $V(H) \setminus V_1$ 
gets more than $1$, this means 
\[
q - 2 ~\leq~ b \cdot \inparen{|E_0| + |E_D| + 4(r - \abs{D(H)})} + c \cdot {r},
\]
proving the claim. 
For example, if $j$ is an index of Type $1$, it gets a value at least $3b+c$ and needs to give value
$1$, such a charging can be done if $3b+c \geq 1$. Similarly, a type 0 vertex does not need to give
any value and has a surplus. We will choose parameters so that each $j$ of types $-1$, $1$ or $0$ 
can distribute the value to vertices added in $V_j \setminus V_{j-1}$.
However, if $j$ is an index of Type $2$, it needs to distribute the value it gets ($5b+c$) to two
vertices, and we will allow it to be ``compensated'' by vertices of other types, which may have a
surplus.


Consider an index $j \in T_2$. The fact that $j \in T_2$ guarantees that earlier edges $I^1, \dots,
I^{j - 1}$ are all vertex disjoint from $I^j$. 
If later edges $I^{j+1}, \dots, I^{r}$ are all vertex disjoint from $I^j$, then $|E_{j-1}| = 4$ and
$E_{j-1}$ is disjoint from $\{ E_{j'} \}_{j' \in D \setminus \{ j-1 \}}$, and this means that $H
\not \in \calH$. 
Thus, there exists $j' > j$ such that $I^{j'}$ and $I^j$ share a vertex. Take the smallest $j' >
j$, and say that $j'$ {\em compensates} $j$. Note that $j' \notin T_2$.

We will allow a type 1 index to compensate at most one of type 2 index, and a type -1 or 0 index to
compensate at most two of type 2 indices. We consider below the constraints implied by each kind of
compensation.


\begin{enumerate}
\item \emph{One Type $1$ index $j'$ compensates one Type $2$ index $j$} \\[3 pt]
$v_{j'} + v_{j} = 3$ and $e_{j'} + e_{j} \geq 8$ ($5$ from $e_j$ and $3$ from
  $e_{j'}$). This is possible if $8b + 2c \geq 3$.
%
%
\item \emph{One Type $0$ index $j'$ compensates one Type $2$ index $j$} \\[3 pt]
$v_{j'} + v_{j} = 2$ and $e_{j'} + e_{j} \geq 5$ ($5$ from $e_j$).
This is possible if $5b + 2c \geq 2$.
\item \emph{One Type $0$ index $j'$ compensates two Type $2$ indices $j_1$ and $j_2$ (say $j_1 < j_2$)}. \\[3 pt]
There are two cases.
\begin{enumerate}
\item $e_{j'} + e_{j_1} + e_{j_2} \geq 11$: 
$v_{j'} + v_{j_1} + v_{j_2} = 4$. 
This is possible if $11b + 3c \geq 4$.
%
\item $e_{j'} + e_{j_1} + e_{j_2} = 10$: since $e_{j_1}, e_{j_2} \geq 5$, 
this means that $e_{j'} = 0$.

First, we note that since $j_1$ is a type 2 index and $j'$ is the smallest index $j$ such that
$I^{j_1} \cap I_{j} \neq \emptyset$, in the graph $H_{j'-1}$, vertices in $I^{j_1}$ only have edges
to vertices in $I^{j_1 - 1}$ and $I^{j_1 + 1}$. Similarly, vertices in $I^{j_2}$ only have edges
to vertices in $I^{j_2 - 1}$ and $I^{j_2 + 1}$.

Since $I^{j'}$ shares one vertex each with $I^{j_1}$ and $I^{j_2}$, let $I^{j'} =
\{i^{j'}_1, i^{j'}_2\}$ with $i^{j'}_1 \in I^{j_1}$ and $i^{j'}_2 \in I^{j_2}$.
Since $e_{j'} = 0$ means that $I^{j'} = \{i^{j'}_1, i^{j'}_2\}$ was in $H_{j' - 1}$. However,
this is an edge between vertices in $I^{j_1}$ and $I^{j_2}$. By the above argument, this is only
possible if $j_2 = j_1 + 1$. Also, since $j'$ is type 0 and $I^{j'}$ shares a vertex with $I^{j_2}$,
we must have $j' > j_2 + 1$ (otherwise $j'$ would be type -1).

Consider $I^{j' - 1}$, which are vertex disjoint from both $I^{j_1}$ and $I^{j_2}$. If $I^{j' - 1}
\neq I^{j_1 - 1}$, at least one edge between $I^{j' - 1}$ and $I^{j}$ was not in $H_{j' - 1}$,
contradicting the assumption $e_{j'} = 0$. Therefore, $I^{j' - 1} = I^{j_1 - 1}$. For the same
reason, $I^{j' - 1} = I^{j_2 + 1}$. Thus, in particular, we have $I^{j_2+1} = I^{j_1-1}$. Thus,
$j_2+1$ is also a type 0 index. Moreover, it cannot compensate any previous index, since any such
index would already be compensated by $j_1-1$.

In this case we consider that $I^{j_2 + 1}$ and $I^{j'}$ jointly compensate $j_1$ and $j_2$. 
$v_{j'} + v_{j_2+1} + v_{j_1} + v_{j_2} = 4$ and $e_{j_2+1} + e_{j'} + e_{j_1} + e_{j_2} \geq 10$. 
Compensation is possible if $10b + 4c \geq 4$.
\end{enumerate}
\item \emph{One Type $-1$ index $j'$ compensates one Type $2$ index $j$}. \\[3 pt]
$v_{j'} + v_{j} \leq 3$ and $e_{j'} + e_{j} \geq 5$  ($5$ from $e_j$). Compensation is possible if
$5b + 4b + 2c \geq 2$. 
\item \emph{One Type $-1$ index $j'$ compensates two Type $2$ indices $j_1$ and $j_2$} \\[3 pt]  
We have $v_{j'} + v_{j_1} + v_{j_2} \leq 5$ and $e_{j'} + e_{j_1} + e_{j_2} \geq 10$.
Compensation is possible if $10b + 4b + 3c \geq 5$.
\end{enumerate}
Each index $j$ of Type $2$ is compensated by exactly one other index $j'$. We also require indices
of types $1$ and $-1$ which do not compensate any other index, to have value at least 1 (to account
for the one vertex added). This is true if $3b+c \geq 1$ and $4b+c \geq 1$.

Aggregating the above conditions (and discarding the redundant ones), we take
\[
c = \max \inbraces{\frac32 - 4b, ~1 - \frac{5b}{2}, ~\frac43 - \frac{11b}{3}, ~\frac53 - \frac{14b}{3} }
\]
It is easy to check that the maximum is attained by $c = 5/3 - 14b/3$ when $b \in [0,1/4]$ and $c =
3/2 - 4b$ when $b \in [1/4, 1/3]$.
\end{proof}


\section[Lifting Spectral Lower Bounds to Higher Levels]{Lifting $\fsp{\cdot}$ Lower Bounds to Higher
Levels}\label{sec:poly:fsp-lowerbound}
For a matrix $B\in\Re^{[n]^{q/2}\times [n]^{q/2}}$, let 
$B^S$ denote the matrix obtained by symmetrizing $B$, i.e. 
for any $I,J\in [n]^{q/2}$, 
\[
B^S[I,J] ~~
:= ~~
\frac{1}{|\orbit{\mi{I}+\mi{J}}|}
\cdot ~~~
\sum_{
\mathclap{
\substack{
\mi{I'}+\mi{J'} \\
= \mi{I}+\mi{J}
}
}
} ~
B[I',J'] 
\]
Equivalently, $B^S$ can be defined as follows: 
\[
    B^S = \frac{1}{q!}\cdot \sum_{\pi \in \Sym_{q}} B^{\pi}
\]
where for any $K\in [n]^{q}$, $B^{\pi}[K] := B[\pi(K)]$. 

\medskip

For a matrix $M\in\Re^{[n]^2\times [n]^2}$ let $T\in \RR^{[n]^4}$ denote 
the tensor given by, $T[i_1,i_2,i_3,i_4] = M[(i_1,i_2),(i_3,i_4)]$. 
Also for any non-negative integers $x,y$ satisfying $x+y = 4$, let 
$M_{x,y}\in \Re^{[n]^{x}\times [n]^{y}}$ denote the matrix given by, 
$M[(i_1,\dots ,i_x),(j_1,\dots j_y)] = T[i_1,\dots ,i_x,j_1,\dots j_y]$. 
We will use the following result that we prove in~\cref{sec:poly:lifting:stable:lbs}. 

\begin{theorem}[Lifting ``Stable'' Spectral Lower Bounds]\label{thm:poly:lift:stable:sp:lb}
    Let $M\in \Re^{[n]^2 \times [n]^2}$ be a degree-$4$-SOS-symmetric matrix satisfying  
    \[
        \norm{S_1}{M},~ 
        \norm{S_1}{M_{3,1}}
        \leq 1.
    \]
    Then for any $q$ divisible by $4$,
    \[
    	\norm{S_1}{
        \pth{M^{\otimes q/4}}^{S}
        }
        = 2^{O(q)}
    \]
\end{theorem}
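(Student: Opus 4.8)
The goal is to bound the Schatten-1 norm of the symmetrization of $M^{\otimes q/4}$, where $M$ is a degree-4 SoS-symmetric matrix with $\norm{S_1}{M} \le 1$ and $\norm{S_1}{M_{3,1}} \le 1$. The key structural fact, announced in the overview (the ``Tetris theorem''), is that $(M^{\otimes q/4})^S$ can be written as a sum of $2^{O(q)}$ terms of the form
\[
    C(a,b,c,d)\cdot P \cdot \bigl(M_A^{\otimes a}\otimes M_B^{\otimes b}\otimes M_C^{\otimes c}\otimes M_D^{\otimes d}\bigr) \cdot P,
\]
where $12a+8b+4c+8d = q$, $P$ has spectral norm $1$, $C(a,b,c,d) = 2^{O(q)}$, and $M_A := M_{1,3}\otimes M_{4,0}\otimes M_{1,3}$, $M_B := M_{1,3}\otimes M_{3,1}$, $M_C := M$, $M_D := \Vector{M}\Vector{M}^T$. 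So the first step is to invoke this decomposition (proved separately in~\cref{sec:poly:lifting:stable:lbs}) and reduce the problem to controlling $\norm{S_1}{\cdot}$ of each such term. Since $\norm{S_1}{PXP} \le \norm{S_1}{X}$ for $\norm{2}{P}\le 1$, and $\norm{S_1}{X\otimes Y} = \norm{S_1}{X}\cdot\norm{S_1}{Y}$, it suffices to bound $\norm{S_1}{M_A}$, $\norm{S_1}{M_B}$, $\norm{S_1}{M_C}$, $\norm{S_1}{M_D}$ and then combine multiplicatively, remembering that the number of blocks $a+b+c+d$ is $O(q)$ so any constant-per-block loss is absorbed into $2^{O(q)}$.

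**Bounding the four matrices.** The heart of the argument is showing each of the four building-block matrices has bounded Schatten-1 norm under the two hypotheses. For $M_C = M$ this is immediate: $\norm{S_1}{M}\le 1$ by assumption. For $M_D = \Vector{M}\Vector{M}^T$, this is a rank-one matrix, so $\norm{S_1}{M_D} = \norm{2}{M_D} = \norm{2}{\Vector{M}}^2 = \norm{F}{M}^2 \le \norm{S_1}{M}^2 \le 1$. For $M_B = M_{1,3}\otimes M_{3,1}$, note $M_{1,3}$ and $M_{3,1}$ are transposes of one another (both reshapings of the same SoS-symmetric tensor $T$), so $\norm{S_1}{M_{1,3}} = \norm{S_1}{M_{3,1}} \le 1$ by hypothesis, giving $\norm{S_1}{M_B} \le 1$. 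For $M_A = M_{1,3}\otimes M_{4,0}\otimes M_{1,3}$, we again have $\norm{S_1}{M_{1,3}}\le 1$; for $M_{4,0}$, which is the flattening of $T$ into an $n^4\times 1$ column vector (equivalently $\Vector{M}^T$ reshaped), it is a vector so its only singular value is $\norm{2}{M_{4,0}} = \norm{F}{M} \le \norm{S_1}{M} \le 1$, hence $\norm{S_1}{M_{4,0}}\le 1$ and $\norm{S_1}{M_A}\le 1$. (I should double-check against the conventions in~\cref{sec:poly:lifting:stable:lbs} exactly which flattening each subscript denotes, but in each case the matrix is either a reshaping of $M$ itself, a column/row vector version of $M$, or a tensor product thereof, all controlled by $\norm{S_1}{M}$ and $\norm{S_1}{M_{3,1}}$.)

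**Assembling the bound.** Putting these together: each term in the Tetris decomposition has
\[
\norm{S_1}{C(a,b,c,d)\, P\,(M_A^{\otimes a}\otimes M_B^{\otimes b}\otimes M_C^{\otimes c}\otimes M_D^{\otimes d})\, P} \le C(a,b,c,d)\cdot 1 = 2^{O(q)},
\]
using that each of the at most $O(q)$ tensor factors contributes a Schatten-1 norm at most $1$. Summing over the $2^{O(q)}$ terms and using triangle inequality for $\norm{S_1}{\cdot}$ gives $\norm{S_1}{(M^{\otimes q/4})^S} \le 2^{O(q)}\cdot 2^{O(q)} = 2^{O(q)}$, which is the claim.

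**Main obstacle.** The genuinely hard part is not this assembly — it is the Tetris decomposition itself, i.e., proving that $(M^{\otimes q/4})^S$ decomposes into $2^{O(q)}$ PSD-sandwiched tensor products of exactly those four shapes with the exponent constraint $12a+8b+4c+8d=q$. That proof (deferred to~\cref{sec:poly:lifting:stable:lbs}) requires a careful combinatorial analysis of which permutations $\pi\in\Sym_q$ produce which ``shape'' of matrix when applied to $M^{\otimes q/4}$, grouping the $q!$ permutations into orbit classes indexed by the combinatorial data $(a,b,c,d)$ and bounding both the number of classes and the multiplicity $C(a,b,c,d)$ within each. Once that structural theorem is in hand, the present statement follows by the short norm-tracking argument above; the only subtlety to watch is ensuring the hypothesis $\norm{S_1}{M_{3,1}}\le 1$ (rather than merely $\norm{S_1}{M}\le 1$) is exactly what is needed to control $M_A$ and $M_B$, which it is, since those are the only two block types involving the ``unbalanced'' $1\!:\!3$ reshaping.
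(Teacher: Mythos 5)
Your proof is correct and follows essentially the same route as the paper's: invoke the Tetris decomposition, show each of $M_A, M_B, M_C, M_D$ has Schatten-$1$ norm at most $1$ via the two hypotheses (using $\norm{S_1}{M_{0,4}} = \norm{F}{M} \leq \norm{S_1}{M}$ for the vector reshapings, $\norm{S_1}{M_{1,3}} = \norm{S_1}{M_{3,1}}$ by transposition, and multiplicativity of $\norm{S_1}{\cdot}$ under Kronecker product), and conclude by the triangle inequality. The only cosmetic difference is that you work with the overview's phrasing (a single $P$ of spectral norm $1$ and $2^{O(q)}$ summands) while the paper's argument unpacks the Tetris identity into individual permutation matrices $P_{\sigma_1}, P_{\sigma_2}$ summed over $\Sym_{q/2}^2$ and absorbs the $((q/2)!)^2$ term count against the $1/q!$ normalization; both accountings give $2^{O(q)}$.
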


\subsection[Gap Between Spectral Relaxation and Optimal for NNC Polynomials]{Gap Between $\fsp{\cdot}$ and $\ftwo{\cdot}$
for Non-Neg. Coefficient Polynomials}

\begin{lemma}\label{lem:poly:simple:sp:lb}
    Consider any homogeneous polynomial $g$ of even degree-$t$ 
    and let $M_g\in \Re^{[n]^{t/2}\times [n]^{t/2}}$ be its SoS-symmetric matrix 
    representation. Then $\fsp{g}\geq \norm{F}{M_g}^2/\norm{S_1}{M_g}$. 
\end{lemma}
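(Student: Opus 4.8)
The natural route is through the dual characterization of $\fsp{g}$ recorded in~\cref{fig:poly:fsp}. Recall that $\fsp{g}$ is the value of the primal program $\inf\inbraces{\norm{2}{M} : M \text{ a matrix representation of } g}$, whose dual is $\max\inbraces{\iprod{M_g}{\sfX} : \norm{S_1}{\sfX} = 1,\ \sfX \text{ SoS-symmetric}}$. By weak duality, every feasible dual solution lower-bounds $\fsp{g}$. The plan is simply to plug in the explicit dual solution $\sfX := M_g/\norm{S_1}{M_g}$ (well-defined since $M_g \neq 0$ when $g \neq 0$; the statement is trivial otherwise). This $\sfX$ is SoS-symmetric because $M_g$ is, and it satisfies $\norm{S_1}{\sfX} = 1$; its objective value is $\iprod{M_g}{\sfX} = \norm{F}{M_g}^2/\norm{S_1}{M_g}$, which is exactly the claimed bound.

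\textbf{A self-contained variant.} Should one prefer to avoid invoking duality of~\cref{fig:poly:fsp}, the same argument can be carried out directly. Fix any matrix representation $M$ of $g$. The first step is to observe that $\iprod{M}{M_g} = \norm{F}{M_g}^2$, independently of the choice of $M$: SoS-symmetrization $N \mapsto N^S$ is the orthogonal projection (with respect to the Frobenius inner product) onto the subspace of SoS-symmetric matrices, so $\iprod{M}{M_g} = \iprod{M}{(M_g)^S} = \iprod{M^S}{M_g}$, and $M^S = M_g$ by uniqueness of the SoS-symmetric matrix representation of $g$ (\cref{sec:poly:sos:prelims}). The second step is the matrix Hölder (trace-duality) inequality $\abs{\iprod{A}{B}} \leq \norm{2}{A}\cdot\norm{S_1}{B}$ for the spectral/nuclear norm pair, applied with $A = M$, $B = M_g$:
\[
\norm{F}{M_g}^2 ~=~ \iprod{M}{M_g} ~\leq~ \norm{2}{M}\cdot\norm{S_1}{M_g}.
\]
Hence $\norm{2}{M} \geq \norm{F}{M_g}^2/\norm{S_1}{M_g}$ for every representation $M$, and taking the infimum over $M$ gives $\fsp{g} \geq \norm{F}{M_g}^2/\norm{S_1}{M_g}$.

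\textbf{Main obstacle.} There is no substantive obstacle here; the only points requiring a line of justification are (i) the representation-independence of $\iprod{M}{M_g}$, i.e.\ that $M \mapsto M^S$ is a self-adjoint idempotent with $M^S = M_g$, which is immediate from the averaging-over-permutations definition of $S$ and the uniqueness of the SoS-symmetric representation; and (ii) the standard duality inequality between the Schatten-$\infty$ (spectral) and Schatten-$1$ (nuclear) norms. Both are routine, so the lemma follows in a few lines.
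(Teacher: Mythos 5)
Your main argument is exactly the paper's proof: the paper invokes strong duality for the program in~\cref{fig:poly:fsp} and substitutes $X := M_g/\norm{S_1}{M_g}$, obtaining $\iprod{X}{M_g} = \norm{F}{M_g}^2/\norm{S_1}{M_g}$ as a dual lower bound. Your self-contained variant (projection onto SoS-symmetric matrices plus Schatten-$\infty$/Schatten-$1$ Hölder) is also correct and just unpacks the same duality fact by hand, so it adds robustness but no new idea.
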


\begin{proof}
    We know by strong duality, that 
    \[
        \fsp{g} = 
        \max\inbraces{
        \iprod{X}{M_g} 
        \sep{
        \norm{S_1}{X}=1,~~
        X\text{ is SoS-Symmetric},~~
        X\in \Re^{[n]^{t/2}\times [n]^{t/2}}
        }
        }.
    \]
    The claim follows by substituting $X:= M_g/\norm{S_1}{M_g}$. 
\end{proof}

\begin{theorem}
    For any $q$ divisible by $4$ and $f$ as defined in~\cref{sec:poly:nnc-lowerbound}, we have that 
    w.h.p. \[\frac{\fsp{f^{q/4}}}{\ftwo{f^{q/4}}} ~\geq~ 
    \frac{n^{q/24}}{(q \log n)^{O(q)}}.\]
\end{theorem}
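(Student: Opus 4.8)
The plan is to lift the degree-$4$ lower bound of \cref{lem:poly:nncsoslower} --- more precisely, the dual solution underlying \cref{claim:poly:fix} --- to level $q$ using the ``Tetris'' lifting result \cref{thm:poly:lift:stable:sp:lb}, and to bound the denominator via the soundness estimate \cref{lem:poly:cliques}. Throughout I would fix $p = n^{-1/3}$ and condition on the intersection of the high-probability events that $|\cliques| = \Theta(n^2)$, $m = |E| = \Theta(n^{5/3})$, every $\deg_G(i) = \Theta(np)$, the conclusion of \cref{lem:poly:cliques} (so $\ftwo{f}\le(\log n)^{O(1)}$, whence $\ftwo{f^{q/4}} = \ftwo{f}^{q/4}\le(\log n)^{O(q)}$ since $f^{q/4}(x) = f(x)^{q/4}$), and the conclusion of \cref{lem:poly:eigenvalue}, which for this $p$ reads $|\lambda_{\min}(\sfA)|\le\mu$ with $\mu := n^{1/6}(\log n)^{O(1)}$.

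\textbf{Step 1: a stable degree-$4$ dual solution.} Following \cref{claim:poly:fix}, I would take $\sfX := \bigl(\sfA + \mu(I_{E'}+Q_{E'})\bigr)/\Tr{\sfA + \mu(I_{E'}+Q_{E'})}$. Since $\mu\ge|\lambda_{\min}(\sfA)|$, the matrix $\sfA+\mu(I_{E'}+Q_{E'})$ is SoS-symmetric, positive semidefinite and has trace $4m\mu$, so $\sfX$ is SoS-symmetric, PSD and $\norm{S_1}{\sfX} = \Tr{\sfX} = 1$. As $\sfM_f$ is supported on entries whose index multiset has four distinct elements while $I_{E'}+Q_{E'}$ is supported on entries with a repeated index, $\iprod{\sfM_f}{I_{E'}+Q_{E'}} = 0$, so $\iprod{\sfM_f}{\sfX} = \iprod{\sfM_f}{\sfA}/(4m\mu) = |\cliques|/(4m\mu) = \tilde{\Omega}(n^{1/6})$. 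To verify the stability hypothesis of \cref{thm:poly:lift:stable:sp:lb} I would bound $\norm{F}{\sfX}$: from $\norm{F}{\sfA}^2 = 24|\cliques| = \Theta(n^2)$, $\norm{F}{I_{E'}}^2 = 2m$ and $\norm{F}{Q_{E'}}^2 = 2m + \sum_i\deg_G(i)^2 = \tilde{O}(n^{7/3})$ one gets $\norm{F}{\sfX}\le\bigl(\Theta(n)+\mu\cdot\tilde{O}(n^{7/6})\bigr)/(4m\mu) = \tilde{O}(n^{-1/2})$; since $\sfX_{3,1}\in\Re^{[n]^3\times[n]}$ has rank at most $n$ and the same Frobenius norm as $\sfX$, $\norm{S_1}{\sfX_{3,1}}\le\sqrt{n}\cdot\norm{F}{\sfX} = (\log n)^{O(1)}$. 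Rescaling $\sfX$ by this polylogarithmic factor, I may assume $\norm{S_1}{\sfX}\le 1$ and $\norm{S_1}{\sfX_{3,1}}\le 1$, with still $\iprod{\sfM_f}{\sfX}\ge\delta$ for some $\delta = \tilde{\Omega}(n^{1/6})$.

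\textbf{Step 2: lift, then evaluate the lifted objective.} Applying \cref{thm:poly:lift:stable:sp:lb} to $M := \sfX$ yields $\norm{S_1}{(\sfX^{\otimes q/4})^S} = 2^{O(q)}$, so $Z := (\sfX^{\otimes q/4})^S/\norm{S_1}{(\sfX^{\otimes q/4})^S}$ is SoS-symmetric with $\norm{S_1}{Z} = 1$ and hence feasible for the dual of $\fsp{f^{q/4}}$ (\cref{fig:poly:fsp}). Writing $\sfM_{f^{q/4}}$ for the SoS-symmetric representation of $f^{q/4}$ and using that symmetrization is self-adjoint and fixes $\sfM_{f^{q/4}}$,
\[
\fsp{f^{q/4}} ~\ge~ \iprod{\sfM_{f^{q/4}}}{Z} ~=~ \frac{\iprod{\sfM_{f^{q/4}}}{(\sfX^{\otimes q/4})^S}}{2^{O(q)}} ~=~ \frac{\iprod{\sfM_{f^{q/4}}}{\sfX^{\otimes q/4}}}{2^{O(q)}}.
\]
Writing $I = (I_1,\dots,I_{q/4})$ with each $I_k\in[n]^2$, the numerator equals $\sum_{I,J}\frac{(f^{q/4})_{\mi I+\mi J}}{\cardin{\orbit{\mi I+\mi J}}}\prod_k\sfX[I_k,J_k]$. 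Using $(f^{q/4})_{\mi I+\mi J}\ge\prod_k f_{\mi{I_k}+\mi{J_k}}$ (non-negative coefficients), the identity $\cardin{\orbit{\beta}} = q!/\prod_i\beta_i!$ together with $\prod_i(\sum_k a_{k,i})!\ge\prod_k\prod_i a_{k,i}!$ to get $\cardin{\orbit{\mi I+\mi J}}\le(q!/24^{q/4})\prod_k\cardin{\orbit{\mi{I_k}+\mi{J_k}}}$, and the fact that all entries of $\sfX$ are non-negative, one concludes termwise that
\[
\iprod{\sfM_{f^{q/4}}}{\sfX^{\otimes q/4}} ~\ge~ \frac{24^{q/4}}{q!}\sum_{I,J}\prod_k\frac{f_{\mi{I_k}+\mi{J_k}}}{\cardin{\orbit{\mi{I_k}+\mi{J_k}}}}\sfX[I_k,J_k] ~=~ \frac{24^{q/4}}{q!}\iprod{\sfM_f}{\sfX}^{q/4} ~\ge~ \frac{24^{q/4}}{q!}\delta^{q/4}.
\]
Chaining the bounds and using $\delta = \tilde{\Omega}(n^{1/6})$ and $q!\le q^q$ gives $\fsp{f^{q/4}}\ge\frac{24^{q/4}\delta^{q/4}}{q!\,2^{O(q)}}\ge\frac{n^{q/24}}{(q\log n)^{O(q)}}$, and dividing by $\ftwo{f^{q/4}}\le(\log n)^{O(q)}$ yields the claimed ratio.

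\textbf{Main obstacle.} The delicate point is verifying the stability hypothesis $\norm{S_1}{\sfX_{3,1}}\le 1$: this requires the explicit form of the dual solution of \cref{claim:poly:fix} together with the estimates $|\lambda_{\min}(\sfA)| = \tilde{O}(n^{1/6})$ (\cref{lem:poly:eigenvalue}) and $\sum_i\deg_G(i)^2 = \tilde{O}(n^{7/3})$, and since the resulting bound is only polylogarithmic, the rescaling of $\sfX$ is essential --- affordable only because the overall loss after lifting is $(\log n)^{O(q)}$, absorbed into $(q\log n)^{O(q)}$. The remaining ingredients --- feasibility of $Z$, self-adjointness of symmetrization, the elementary orbit inequality, and the concentration estimates being conditioned on --- are routine.
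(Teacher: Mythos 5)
Your proof is correct, but it takes a genuinely different route from the paper's, even though both funnel through the Tetris lifting theorem (\cref{thm:poly:lift:stable:sp:lb}). You apply the lifting theorem to the \emph{level-4 dual SDP solution} $\sfX$ constructed in \cref{claim:poly:fix} --- which is the strategy the paper's overview advertises --- and lower-bound the objective $\iprod{\sfM_{f^{q/4}}}{Z}$ of the lifted dual feasible point $Z$ by the termwise orbit inequality $\cardin{\orbit{\beta_1+\cdots+\beta_{q/4}}}\le (q!/24^{q/4})\prod_k\cardin{\orbit{\beta_k}}$, using non-negativity of $f$ and of the entries of $\sfX$ to sum it. The paper's actual proof instead applies the lifting theorem to the normalized \emph{coefficient matrix} $M_f/\tilde{O}(n^{11/6})$, bounds $\norm{S_1}{M_g}$ for $M_g := (M_f^{\otimes q/4})^S$, and then invokes \cref{lem:poly:simple:sp:lb} ($\fsp{g}\ge\norm{F}{M_g}^2/\norm{S_1}{M_g}$), which amounts to plugging the dual solution $M_g/\norm{S_1}{M_g}$ --- not a lift of $\sfX$ --- into the dual. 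Your route is more modular (it would run verbatim from any stable level-4 dual solution with a good objective) and entirely avoids the paper's separate clique-counting estimate for $\norm{F}{M_g}^2 \gtrsim n^{q/2}/q^{O(q)}$; the paper's route is shorter because it sidesteps the orbit inequality and the non-negativity bookkeeping. Both approaches need the same spectral inputs: $\abs{\lambda_{\min}(\sfA)}=\tilde{O}(n^{1/6})$ from \cref{lem:poly:eigenvalue}, a rank-$\le n$ Cauchy--Schwarz bound on the $(3,1)$-flattening, and the Schatten-1 estimate $\tilde{O}(n^{11/6})$ for the relevant level-4 matrix. The one item in your write-up worth flagging is the cross-term $\iprod{I_{E'}}{Q_{E'}}=0$ needed to reduce $\norm{F}{\sfX}$ cleanly; it holds because $I_{E'}$ is supported on indices $((i_1,i_2),(i_1,i_2))$ with $i_1\ne i_2$ while $Q_{E'}$ is supported on indices of the form $((i,i),(j,j))$, but you used triangle inequality anyway, so this is harmless.
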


\begin{proof}
    Let $f$ be the degree-$4$ homogeneous polynomial as defined in~\cref{sec:poly:nnc-lowerbound} and 
    let $M=M_f$ be its SoS-symmetric matrix representation. 
    Let $g:= f^{q/4}$ and let $M_g$ be its SoS-symmetric matrix representation. 
    Thus $M_g = (M^{\otimes q/4})^{S}$ and it is easily verified that 
    w.h.p., $\norm{F}{M}^2 \ge \widetilde{\Omega}(n^4p^6) = \widetilde{O}(n^2)$ and also  
    $\norm{F}{M_g}^2 \ge \widetilde{\Omega}((n^4p^6)^{q/4}/q^{O(q)}) = \widetilde{\Omega}(n^{q/2}/q^{O(q)})$.
   
    It remains to estimate $\norm{S_1}{M_g}$ so that we may apply~\cref{lem:poly:simple:sp:lb}. 
    Implicit, in the proof of~\cref{lem:poly:eigenvalue}, is that w.h.p. $M$ has one eigenvalue of 
    magnitude $O(n^2p) = O(n^{5/3})$ and at most $O(n^2p) = O(n^{5/3})$ eigenvalues of 
    magnitude $\widetilde{O}(n^{3/2} p^4) = \widetilde{O}(n^{1/6})$. Thus,
    $\norm{S_1}{M} = \widetilde{O}(n^{11/6})$ w.h.p. 
    Now we have that $\norm{S_1}{M_{1,3}} \leq \sqrt{n}\cdot \norm{F}{M_{1,3}} = 
    \sqrt{n}\cdot \norm{F}{M} = \widetilde{O}(n^{3/2})$ w.h.p. Thus, on applying~\cref{thm:poly:lift:stable:sp:lb}
    to $M/\widetilde{O}(n^{11/6})$, we get that 
    $\norm{S_1}{M_g}/\widetilde{O}(n^{11q/24}) \leq 2^{O(q)}$ w.h.p. 
    
    Thus, applying~\cref{lem:poly:simple:sp:lb} yields the claim. 
\end{proof}

\subsection{Tetris Theorem}
    Let $M\in \Re^{[n]^2\times [n]^2}$ be a degree-$4$ SoS-Symmetric matrix, 
    let $M_A:= M_{3,1}\otimes M_{0,4}\otimes M_{3,1}$, let $M_B := M_{3,1}\otimes M_{1,3}$, 
    let $M_C:=M$ and let $M_D:= \Vector{M}\Vector{M}^T = M_{0,4}\otimes M_{4,0}$. 
    For any permutation $\pi\in \Sym_{q/2}$ let $\overline{\pi}\in \Sym_{n^{q/2}}$ denote 
    the permutation that maps any $i\in [n]^{q/2}$ to $\pi(i)$. Also let 
    $P_{\pi} \in \Re^{[n]^{q/2} \times [n]^{q/2}}$ denote the row-permutation 
    matrix induced by the permutation $\overline{\pi}$. Let $P:=\sum_{\pi\in \Sym_{q/2}} 
    P_{\pi}$. Let $\mathcal{R}(a,b,c,d) := {(c!^{\,2} \,2!^{\,2c})~(b!\,(2a+b)!\,3!^{\,2a+2b})
    ~(d!\,(a+d)!\,4!^{\,a+2d})}$.
    Define 
    \begin{align*}
        \tetmat(a,b,c,d) &:= 
        \frac{M_A^{\otimes a}\otimes M_B^{\otimes b}\otimes M_C^{\otimes c}\otimes 
        M_D^{\otimes d}}{\mathcal{R}(a,b,c,d)}\\
        \overline{\tetmat}(a,b,c,d) &:= 
        \frac{(M_A^T)^{\otimes a}\otimes 
        M_B^{\otimes b}\otimes M_C^{\otimes c}\otimes M_D^{\otimes d}}{\mathcal{R}(a,b,c,d)}\\
        S^\tetmat &:= \inbraces{P\cdot \tetmat(a,b,c,d) \cdot P^T\sep{12a+8b+4c+8d=q}}~\bigcup \\
        &\phantom{:=\,\,}
        \inbraces{P \cdot \overline{\tetmat}(a,b,c,d) \cdot P^T\sep{12a+8b+4c+8d = q}} \mper
    \end{align*}
    
\begin{theorem}\label{thm:poly:tetris}
    Let $M\in \Re^{[n]^2 \times [n]^2}$ be a degree-$4$-SOS-symmetric 
    matrix. Then 
    \begin{align}\label{eq:poly:tetris:permute}
        (q/4)!\cdot 4!^{\,q/4}\cdot 
        \sum\limits_{\tetmat\in S^{\mathfrak{M}}} \tetmat 
        \quad = \quad 
        \sum\limits_{\pi \in \Sym_q} 
        \pth{M^{\otimes q/4}}^{\pi}
        \quad = \quad 
        q!\cdot 
        \pth{M^{\otimes q/4}}^{S}
    \end{align}
\end{theorem}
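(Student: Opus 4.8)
The plan is to prove the first equality in \cref{eq:poly:tetris:permute}; the second is immediate from the definition $\pth{M^{\otimes q/4}}^{S}=\frac{1}{q!}\sum_{\pi\in\Sym_q}\pth{M^{\otimes q/4}}^{\pi}$. \textbf{Step 1 (from permutations to set partitions).} Since $M$ is degree-$4$ SoS-symmetric, its $4$-tensor $T$ with $T[i_1,i_2,i_3,i_4]=M[(i_1,i_2),(i_3,i_4)]$ is \emph{fully symmetric}. Viewing $M^{\otimes q/4}$ as a tensor on $[n]^q$ (first $q/2$ coordinates indexing rows, last $q/2$ indexing columns), $M^{\otimes q/4}[x_1,\dots,x_q]=\prod_{k=1}^{q/4}T[x_{2k-1},x_{2k},x_{q/2+2k-1},x_{q/2+2k}]$, so full symmetry of $T$ forces $\pth{M^{\otimes q/4}}^{\pi}$ to depend only on the partition $\Pi(\pi)$ of $[q]$ into the $q/4$ four-element blocks $\inbraces{\pi(2k-1),\pi(2k),\pi(q/2+2k-1),\pi(q/2+2k)}$. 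The map $\pi\mapsto\Pi(\pi)$ is exactly $(q/4)!\,(4!)^{q/4}$-to-one: order the $q/4$ blocks onto the $q/4$ ``slot-blocks'', then biject each block's four elements onto its four slots. Hence $\sum_{\pi\in\Sym_q}\pth{M^{\otimes q/4}}^{\pi}=(q/4)!\,(4!)^{q/4}\sum_{\Pi}\pth{M^{\otimes q/4}}^{\Pi}$, where $\Pi$ runs over partitions of $[q]$ into $4$-blocks and $\pth{M^{\otimes q/4}}^{\Pi}[x]:=\prod_{B\in\Pi}T[x|_{B}]$, and it remains to prove $\sum_{\Pi}\pth{M^{\otimes q/4}}^{\Pi}=\sum_{\tetmat\in S^{\tetmat}}\tetmat$.

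\textbf{Step 2 (classifying and grouping blocks).} For a block $B$ of $\Pi$ put $r(B)=\abs{B\cap\inbraces{1,\dots,q/2}}\in\inbraces{0,1,2,3,4}$; since $\sum_{B\in\Pi}r(B)=q/2=2\abs{\Pi}$, the multiset $\inbraces{r(B)}_{B\in\Pi}$ is ``balanced'' (sum equal to twice its size), and in $\pth{M^{\otimes q/4}}^{\Pi}\in\Re^{[n]^{q/2}\times[n]^{q/2}}$ block $B$ appears as a slice $M_{r(B),\,4-r(B)}$ of $T$. The first lemma I would prove is that a balanced multiset over $\inbraces{0,\dots,4}$ that admits no splitting into two nonempty balanced submultisets is exactly one of $\inbraces{2}$, $\inbraces{1,3}$, $\inbraces{0,4}$, $\inbraces{0,3,3}$, $\inbraces{1,1,4}$ --- substitute $w\mapsto w-2\in\inbraces{-2,\dots,2}$, use that minimal zero-sum multisets over a bounded alphabet have bounded size, and finish with a finite case check. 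The corresponding slice-products, up to permuting tensor factors (a freedom absorbed later by $P$, using symmetry of $T$), are $M_C=M$, $M_B=M_{3,1}\otimes M_{1,3}$, $M_D=M_{0,4}\otimes M_{4,0}$, $M_A=M_{3,1}\otimes M_{0,4}\otimes M_{3,1}$, and $M_A^{T}=M_{1,3}\otimes M_{4,0}\otimes M_{1,3}$. Thus every $\Pi$ admits a grouping of its blocks into such groups, and the second lemma I would prove is that this grouping is \emph{unique} once all type-$A$ groups are required to carry the same orientation --- the only other possibility mixes an $M_A$-group with an $M_A^{T}$-group, which corresponds to no element of $S^{\tetmat}$ since each $\tetmat(a,b,c,d)$ and $\overline{\tetmat}(a,b,c,d)$ uses a single orientation.

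\textbf{Step 3 (matching each type to a $\tetmat$-term).} Fix $a,b,c,d\ge 0$ with $12a+8b+4c+8d=q$. Conjugating the canonical arrangement $M_A^{\otimes a}\otimes M_B^{\otimes b}\otimes M_C^{\otimes c}\otimes M_D^{\otimes d}$ by $P=\sum_{\sigma\in\Sym_{q/2}}P_\sigma$ on the left and by $P^{T}=P$ on the right ranges over all placements of the $q/2$ row positions and, independently, of the $q/2$ column positions; by Step 2 this produces precisely the matrices $\pth{M^{\otimes q/4}}^{\Pi}$ for partitions $\Pi$ whose uniform-orientation grouping has $a$ groups of type $M_A$, $b$ of type $M_B$, $c$ of type $M_C$, $d$ of type $M_D$. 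The denominator $\mathcal{R}(a,b,c,d)$ is designed to be the number of left$/$right placement pairs fixing a given labelled arrangement --- combining the permutations of the $a$ (resp.\ $b,c,d$) identical group-blocks, the swap of the two $M_{3,1}$-slices inside each $M_A$-group, and the residual in-group freedom left by the symmetry of $T$ --- so that $P\cdot\tetmat(a,b,c,d)\cdot P^{T}$ equals the sum of $\pth{M^{\otimes q/4}}^{\Pi}$ over exactly those $\Pi$, each with multiplicity $1$. The same computation with $M_A$ replaced by $M_A^{T}$ gives $P\cdot\overline{\tetmat}(a,b,c,d)\cdot P^{T}$ and handles the opposite orientation (for $a=0$ the two coincide, and the \emph{set} $S^{\tetmat}$ records it once, matching the unique orientation-free grouping). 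Summing over all $(a,b,c,d)$ with $12a+8b+4c+8d=q$ yields $\sum_{\Pi}\pth{M^{\otimes q/4}}^{\Pi}=\sum_{\tetmat\in S^{\tetmat}}\tetmat$, completing the proof.

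The main obstacle I expect is the exact bookkeeping in Step 3: showing that the stabilizer of a canonical arrangement under independent left$/$right coordinate relabelings has size precisely $\mathcal{R}(a,b,c,d)=\pth{c!^{\,2}\,2!^{\,2c}}\pth{b!\,(2a+b)!\,3!^{\,2a+2b}}\pth{d!\,(a+d)!\,4!^{\,a+2d}}$, so that each partition is generated with multiplicity exactly one and none is generated by two different $(a,b,c,d)$'s. This needs the three symmetry sources above to be disentangled and checked not to interact, together with the uniqueness statement of Step 2, which is exactly what prevents double counting across distinct $\tetmat$-terms (and which, in turn, rests on the fact that a mixed $M_A$/$M_A^{T}$ grouping is never the only option for any $\Pi$).
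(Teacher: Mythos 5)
Your proposal is, in substance, the same argument the paper gives, just translated out of the hypergraphical-matrix formalism: both reduce $\sum_{\pi\in\Sym_q}(M^{\otimes q/4})^\pi$ to a sum over unordered partitions of $[q]$ into $4$-blocks (with the same $(q/4)!\,4!^{\,q/4}$ multiplicity), classify each partition by the shape multiset of its blocks into a unique type $(a,b,c,d)$ and an orientation, and then show that $P\,\tetmat(a,b,c,d)\,P^{T}$ recovers each partition of that type with unit multiplicity. Your ``unique grouping'' phrasing in Step 2 overstates things slightly --- the decomposition of a given partition's blocks into minimal groups is not unique (the $r$-multiset $\{1,1,3,3\}$ has two $\{1,3\}$-pairings), but the \emph{type} $(a,b,c,d)$ is, which is all Step 3 needs --- so this is cosmetic.

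The obstacle you flag in Step 3, however, is real and worth resolving carefully, because the bookkeeping in $\mathcal{R}$ does not come out to what is printed. For the $c$ copies of $M_C=M_{2,2}$ the stabilizer contribution is $c!\cdot 2!^{2c}$, not $c!^{2}\cdot 2!^{2c}$: each $(2,2)$-edge ties its row-pair to its column-pair, so the $c$ blocks can only be permuted by a \emph{common} permutation of $\Sym_{c}$ acting on rows and columns simultaneously. This is unlike the $(3,1)$- and $(1,3)$-edges, which have distinct shapes and hence earn independent factors $(2a+b)!$ and $b!$. Concretely at $q=8$, $(a,b,c,d)=(0,0,2,0)$: the orbit of $H_{2,2}\uplus H_{2,2}$ under $\Sym_4\times\Sym_4$ has size $3\cdot 3\cdot 2=18$, so the stabilizer has order $576/18=32=2!\cdot 2!^{4}$, whereas $\mathcal{R}(0,0,2,0)=2!^{2}\cdot 2!^{4}=64$; with that value, $(q/4)!\,4!^{\,q/4}\,P\tetmat(0,0,2,0)P^T$ recovers each type-$(0,0,2,0)$ partition only $576$ times rather than the required $1152$, and the identity fails for generic SoS-symmetric $M$. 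So when you work out ``the main obstacle,'' expect to find that $\mathcal{R}$ should carry a single $c!$, not $c!^{2}$; with that correction, your Steps 1--3 close the argument exactly as the paper does.
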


We shall prove this claim in~\cref{sec:poly:tetris:proof} after first exploring its 
consequences.

\subsection{Lifting Stable Degree-$4$ Lower Bounds}\label{sec:poly:lifting:stable:lbs}

\begin{theorem}[Lifting ``Stable'' Spectral Lower Bounds: Restatement
  of~\cref{thm:poly:lift:stable:sp:lb}]\label{thm:poly:nuclear:after:lift}
  Let $M\in \Re^{ {[n]}^2 \times {[n]}^2}$ be a degree-$4$-SOS-symmetric matrix satisfying  
    \[
        \norm{S_1}{M},~\norm{S_1}{M_{3,1}}
        \leq 1.
    \]
    Then for any $q$ divisible by $4$,
    \[
    	\norm{S_1}{
        \pth{M^{\otimes q/4}}^{S}
        }
        = 2^{O(q)}
    \]
\end{theorem}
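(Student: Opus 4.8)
The plan is to invoke the Tetris Theorem (\cref{thm:poly:tetris}) to expand $\pth{M^{\otimes q/4}}^{S}$ into a manageable sum, bound the Schatten-$1$ norm of each summand using the hypotheses $\norm{S_1}{M}\le 1$ and $\norm{S_1}{M_{3,1}}\le 1$, and then check that the resulting super-exponential prefactors cancel down to $2^{O(q)}$. By \cref{thm:poly:tetris},
\[
  \norm{S_1}{\pth{M^{\otimes q/4}}^{S}} ~\leq~ \frac{(q/4)!\cdot 4!^{\,q/4}}{q!}\cdot \sum_{\tetmat\in S^{\mathfrak{M}}}\norm{S_1}{\tetmat}\mcom
\]
and every $\tetmat\in S^{\mathfrak{M}}$ has the form $P\cdot\tetmat(a,b,c,d)\cdot P^{T}$ or $P\cdot\overline{\tetmat}(a,b,c,d)\cdot P^{T}$ for some nonnegative $(a,b,c,d)$ with $12a+8b+4c+8d=q$. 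Since there are only $O(q^{3})$ such tuples, it suffices to bound each $\norm{S_1}{\tetmat}$ by $2^{O(q)}$.

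First I would dispose of the conjugation by $P=\sum_{\pi\in\Sym_{q/2}}P_{\pi}$: by the triangle inequality over the $(q/2)!$ permutation matrices, $\norm{2}{P}\le (q/2)!$, so $\norm{S_1}{PXP^{T}}\le\bigl((q/2)!\bigr)^{2}\norm{S_1}{X}$ for any $X$. Next I would estimate $\norm{S_1}{\tetmat(a,b,c,d)}=\norm{S_1}{M_{A}}^{a}\norm{S_1}{M_{B}}^{b}\norm{S_1}{M_{C}}^{c}\norm{S_1}{M_{D}}^{d}/\mathcal{R}(a,b,c,d)$, using multiplicativity of the Schatten-$1$ norm under tensor products. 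Each building block has Schatten-$1$ norm at most $1$: clearly $\norm{S_1}{M_{C}}=\norm{S_1}{M}\le 1$; since $M$ is degree-$4$-SoS-symmetric its underlying $4$-tensor is fully symmetric, so $M_{1,3}=M_{3,1}^{T}$ and hence $\norm{S_1}{M_{B}}=\norm{S_1}{M_{3,1}}^{2}\le 1$; the matrix $M_{0,4}$ is a single row, so $\norm{S_1}{M_{0,4}}=\norm{F}{M_{0,4}}=\norm{F}{M}\le\norm{S_1}{M}\le 1$, which gives $\norm{S_1}{M_{A}}=\norm{S_1}{M_{3,1}}^{2}\norm{S_1}{M_{0,4}}\le 1$; and $M_{D}=\Vector{M}\Vector{M}^{T}$ has rank one with $\norm{S_1}{M_{D}}=\norm{F}{M}^{2}\le 1$. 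The same bounds hold for $\overline{\tetmat}$ since $\norm{S_1}{M_{A}^{T}}=\norm{S_1}{M_{A}}$. Consequently $\norm{S_1}{\tetmat}\le \bigl((q/2)!\bigr)^{2}/\mathcal{R}(a,b,c,d)$.

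The step I expect to be the main obstacle is the final cancellation of factorials, since on its face
\[
  \norm{S_1}{\pth{M^{\otimes q/4}}^{S}} ~\leq~ \frac{(q/4)!\cdot 4!^{\,q/4}\cdot\bigl((q/2)!\bigr)^{2}}{q!}\cdot\sum_{12a+8b+4c+8d=q}\frac{2}{\mathcal{R}(a,b,c,d)}\mcom
\]
and both $(q/4)!$ and $\bigl((q/2)!\bigr)^{2}$ are $q^{\Theta(q)}$. The resolution is that $\mathcal{R}(a,b,c,d)$ is itself a large product of factorials: dropping the nonnegative exponential factors, $\mathcal{R}(a,b,c,d)\ge c!^{2}\,b!\,(2a+b)!\,d!\,(a+d)!$, and the six factorial arguments $c,c,b,2a+b,d,a+d$ sum to $3a+2b+2c+2d=q/4+c$ (using $12a+8b+4c+8d=q$). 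By the multinomial inequality $S!/(m_{1}!\cdots m_{6}!)\le 6^{S}$, this product is at least $(q/4+c)!/6^{\,q/4+c}\ge (q/4)!/6^{\,q/2}$ (as $c\le q/4$). Plugging this in cancels the $(q/4)!$ exactly, and since $\bigl((q/2)!\bigr)^{2}/q!=1/\binom{q}{q/2}\le 1$, what remains is $O(q^{3})\cdot 4!^{\,q/4}\cdot 6^{\,q/2}=2^{O(q)}$, as claimed. I expect the only delicate points to be double-checking the tensor-product and transpose identifications of the $M_{x,y}$ blocks under SoS-symmetry, and verifying that every super-exponential factorial is correctly accounted for in this cancellation.
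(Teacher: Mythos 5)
Your proposal is correct and follows essentially the same route as the paper: expand via the Tetris theorem, bound each building block $M_A,M_B,M_C,M_D$ by Schatten-$1$ multiplicativity and the hypotheses, and then verify that the factorial bookkeeping collapses to $2^{O(q)}$. Two minor presentational differences: the paper works with the ``unpacked'' form of the Tetris identity (the double sum over $\sigma_1,\sigma_2\in\Sym_{q/2}$, using that conjugation by a single permutation matrix preserves Schatten-$1$ exactly), whereas you keep $P=\sum_\pi P_\pi$ packaged and bound $\norm{S_1}{PXP^T}\le\norm{2}{P}^2\norm{S_1}{X}$ --- the resulting $\bigl((q/2)!\bigr)^2$ factor is identical in both cases; and your explicit multinomial argument for why the $(q/4)!$, $\bigl((q/2)!\bigr)^2$ and $\mathcal{R}(a,b,c,d)$ cancel down to $2^{O(q)}$ is more detailed than the paper's terse remark (the paper asserts $(q/4)!\,4!^{q/4}/\mathcal{R}\le 2^{O(q)}$ via a four-factor multinomial inequality and then waves at ``triangle inequality over the $O_q(q^q)$ terms,'' leaving the reader to divide by the $q!$ on the left), but both arguments land in the same place.
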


\begin{proof}
	Implicit in the proof of~\cref{thm:poly:tetris} is the following:
	\begin{align}\label{eq:poly:tri:ineq}
		&\quad\quad q!\cdot \pth{M^{\otimes q/4}}^{S}
		\nonumber\\
		&=
		\sum_{12a+8b+4c+8d=q}~
        \frac{(q/4)!\cdot 4!^{\,q/4}}{\mathcal{R}(a,b,c,d)}
        ~\sum\limits_{\sigma_1,\sigma_2 \in \Sym_{q/2}}~ 
        \pth{P_{\sigma_1}^T
        \pth{
        M_A^{\otimes a}\otimes M_B^{\otimes b}\otimes M_C^{\otimes c}\otimes M_D^{\otimes d}
        }
        P_{\sigma_2}}
        \nonumber\\
        &+
        \sum_{12a+8b+4c+8d=q}~
        \frac{(q/4)!\cdot 4!^{\,q/4}}{\mathcal{R}(a,b,c,d)}
        ~\sum\limits_{\sigma_1,\sigma_2 \in \Sym_{q/2}}~ 
        \pth{P_{\sigma_1}^T
        \pth{
        (M_A^T)^{\otimes a}\otimes M_B^{\otimes b}\otimes M_C^{\otimes c}\otimes M_D^{\otimes d}
        }
        P_{\sigma_2}}
	\end{align}
	
	First note that $(q/4)!\cdot 4!^{\,q/4}/\mathcal{R}(a,b,c,d) 
	\leq 2^{O(q)}$ since for any integers $i,j,k,l$ 
	\[(i+j+k+l)!/(i!\cdot j!\cdot k!\cdot l!) 
	\leq 4^{i+j+k+l}.\] 
	
	Next note that $\norm{S_1}{M_{0,4}} = \norm{F}{M} \leq \norm{S_1}{M} \leq 1$. 
	Combining this with the fact that $\norm{S_1}{M_{1,3}}, \norm{S_1}{M}\leq 1$, 
	we get that $\norm{S_1}{M_A}, \norm{S_1}{M_B}, \norm{S_1}{M_C}, 
	\norm{S_1}{M_D}\leq 1$, since $\norm{S_1}{X\otimes Y} = 
	\norm{S_1}{X}\cdot \norm{S_1}{Y}$ for any (possibly rectangular) 
	matrices $X$ and $Y$. 
	Further note that Schatten $1$-norm is invariant to multiplication by a permutation 
	matrix. Thus, the claim follows by applying triangle inequality to the $O_q(q^{q})$ 
	terms in~\cref{eq:poly:tri:ineq}. 
\end{proof}

%

\begin{corollary}[Lifting ``Stable'' SoS Lower Bounds]\label{cor:poly:lift:stable:sos:lb}
    Let $M\in \Re^{[n]^2 \times [n]^2}$ be a degree-$4$-SOS-symmetric matrix satisfying  
    \[
        M \succeq 0, \quad\quad
        M_A := M_{3,1}\otimes M_{0,4}\otimes M_{3,1} \succeq 0, \quad\text{ and }\quad 
        M_B := M_{3,1}\otimes M_{1,3} \succeq 0.
    \]
    Then for any $q$ divisible by $4$,
    \[
        \pth{M^{\otimes q/4}}^{S} \succeq 0
    \]
    (i.e. $\pth{M^{\otimes q/4}}^{S}$ is a degree-$q$ SOS moment matrix).
\end{corollary}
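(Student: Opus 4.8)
The plan is to read off the corollary directly from the Tetris theorem (\cref{thm:poly:tetris}), which has already absorbed all of the combinatorial work. Recall that \cref{thm:poly:tetris} gives
\[
  q!\cdot \pth{M^{\otimes q/4}}^{S} ~=~ (q/4)!\cdot 4!^{\,q/4}\cdot \sum_{\tetmat\in S^{\tetmat}}\tetmat \mcom
\]
and that every element of $S^{\tetmat}$ has the form $P\cdot \tetmat(a,b,c,d)\cdot P^T$ or $P\cdot \overline{\tetmat}(a,b,c,d)\cdot P^T$ with $12a+8b+4c+8d=q$, where $\tetmat(a,b,c,d)$ and $\overline{\tetmat}(a,b,c,d)$ are the stated Kronecker products of $M_A$ (resp.\ $M_A^T$), $M_B$, $M_C=M$ and $M_D=\Vector{M}\Vector{M}^T$, rescaled by the positive constant $\mathcal{R}(a,b,c,d)$. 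Since a positive linear combination of PSD matrices is PSD, it suffices to show that each individual summand $\tetmat$ is symmetric PSD, and then rescale by the positive scalar $(q/4)!\,4!^{\,q/4}/q!$.

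First I would record three elementary facts: (i) for an arbitrary real matrix $P$ and a symmetric PSD matrix $X$ of compatible shape, $PXP^T$ is symmetric PSD, since $(PXP^T)^T = PXP^T$ and $v^T(PXP^T)v = (P^Tv)^T X (P^Tv)\ge 0$ — note this uses nothing about $P$ (in particular $P=\sum_\pi P_\pi$ need not be a permutation matrix); (ii) a Kronecker product of symmetric PSD matrices is symmetric PSD, and a positive rescaling preserves this; and (iii) a degree-$4$ SoS-symmetric matrix is symmetric, and any PSD matrix is symmetric, so the hypothesis $M_A\succeq 0$ forces $M_A^T=M_A$ and hence $\overline{\tetmat}(a,b,c,d)=\tetmat(a,b,c,d)$.

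Next I would verify that the four building blocks are all PSD under the hypotheses: $M_C=M\succeq 0$, $M_A\succeq 0$, and $M_B\succeq 0$ are assumed, while $M_D=\Vector{M}\Vector{M}^T$ is a rank-one outer product, hence automatically PSD. By (ii), $M_A^{\otimes a}\otimes M_B^{\otimes b}\otimes M_C^{\otimes c}\otimes M_D^{\otimes d}\succeq 0$ for every valid tuple $(a,b,c,d)$ and is symmetric since each factor is; dividing by $\mathcal{R}(a,b,c,d)>0$ keeps it symmetric PSD, so $\tetmat(a,b,c,d)$ (and by (iii) also $\overline{\tetmat}(a,b,c,d)$) is symmetric PSD. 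By (i), conjugating by $P$ preserves symmetric PSD-ness, so every element of $S^{\tetmat}$ is symmetric PSD. Summing and rescaling by $(q/4)!\,4!^{\,q/4}/q!>0$ then yields $\pth{M^{\otimes q/4}}^{S}\succeq 0$. Finally, $\pth{M^{\otimes q/4}}^{S}$ is SoS-symmetric by construction (it is the symmetrization $B^S$ of $B=M^{\otimes q/4}$), and a PSD SoS-symmetric matrix is exactly the moment matrix of a degree-$q$ pseudoexpectation, which gives the parenthetical claim.

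There is essentially no obstacle here beyond \cref{thm:poly:tetris} itself. The only two points to handle carefully are that $P$ is not a permutation matrix, so one must use the generic identity $v^T PXP^T v=(P^Tv)^T X(P^Tv)\ge 0$ rather than any orthogonality of $P$; and that the hypothesis list is precisely calibrated to make all four Kronecker factors PSD at once — $M_D$ for free, $M_C$ being $M$, and $M_A,M_B$ being exactly the two ``shape-shifted'' matrices whose PSD-ness does \emph{not} follow from $M\succeq 0$ alone — so one should check that no other matrix shape occurs in $S^{\tetmat}$, which is immediate from the definition of $\tetmat(a,b,c,d)$ and $\overline{\tetmat}(a,b,c,d)$.
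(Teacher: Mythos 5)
Your proposal is correct and takes essentially the same route as the paper, which simply observes that \cref{thm:poly:tetris} implies the claim because $M_A, M_B, M_C, M_D$ are PSD and PSD matrices are closed under transpose, Kronecker product, scaling, conjugation, and addition. You have merely spelled out the details that the paper leaves implicit (in particular, that $M_D$ is a rank-one outer product and hence PSD, and that conjugation by the non-orthogonal $P=\sum_\pi P_\pi$ still preserves PSD-ness via $v^TPXP^Tv=(P^Tv)^TX(P^Tv)\ge 0$).
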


\begin{proof}
    Observe that~\cref{thm:poly:tetris} implies the 
    claim since $M_A,M_B,M_C,$ and $M_D$ are PSD and the set of PSD 
    matrices is closed under transpose, Kronecker product, scaling, 
    conjugation, and addition.     
\end{proof}

\subsection{Proof of Tetris Theorem}\label{sec:poly:tetris:proof}

We start with defining a ~\emph{hypergraphical matrix} which will allow a more 
intuitive paraphrasing of~\cref{thm:poly:tetris}. By now, this is an important 
formalism in the context of SoS, and closely-related objects have been defined in 
several works, including \cite{DM15}, \cite{RRS16}, \cite{BHKKMP16}. 

\subsubsection{Hypergraphical Matrix}

\begin{definition}\label{def:poly:template-hypergraph}
    For symbolic sets $L=\{\ell_1,\dots \ell_{q_1}\},
    R=\{r_1,\dots r_{q_2}\}$, a $d$-uniform \emphi{template-hypergraph} represented by 
    $(L,R,E)$, is a $d$-uniform hypergraph on vertex set $L\uplus R$ with $E$ 
    being the set of hyperedges. 
    
    For $I=(i_1, \dots i_{q_1}) [n]^{q_1}, J=(j_1, \dots j_{q_2})\in [n]^{q_2}$, 
    we also define a related object called \emphi{edge-set instantiation} 
    (and denoted by $E(I,J)$) as the set of size-$d$ multisets induced by $E$ 
    on substituting $\ell_t = i_t$ and $r_t=j_t$. 
\end{definition}

\paragraph{Remark.}
There is a subtle distinction between $E$ and $E(I,J)$ above, in that $E$ is a set of 
$d$-sets and $E(I,J)$ is a set of size-$d$ multisets (i.e. $e\in E(I,J)$ 
can have repeated elements). 

\begin{definition}\label{def:poly:hypergraphical-matrix}
    Given an SoS-symmetric order-$d$ tensor $\sfT$  and a $d$-uniform template-hypergraph 
    $H=(L,R,E)$ with $|L|=q_1,|R|=q_2$, we define the $d$-uniform degree-$(q_1,q_2)$ 
    hypergraphical matrix $\hgm{\sfT}{H}$ as 
    \[
        \hgm{\sfT}{H}[I,J] 
        = 
        \prod_{e\in E(I,J)} \sfT[e]
    \]    
    for any $I\in [n]^{q_1}, J\in [n]^{q_2}$. 
\end{definition}

In order to represent~\cref{thm:poly:tetris} in the language of hypergraphical matrices, 
we first show how to represent $M^{\otimes q/4}$ and 
$M_A^{\otimes a}\otimes M_B^{\otimes b}\otimes M_C^{\otimes c}\otimes M_D^{\otimes d}$ 
in this language. 

\subsubsection{Kronecker Products of Hypergraphical Matrices}\label{sec:poly:kron:prod}

We begin with the observation that the kronecker product of hypergraphical matrices 
yields another hypergraphical matrix (corresponding to the "disjoint-union" of the 
template-hypergraphs). 

\begin{definition}\label{def:poly:th:disjoint:union}
    let $H=(L,R,E)$, 
    $H'=(L',R',E')$ be template-hypergraphs with 
    $|L|=q_1, |R|=q_2,|L'|=q_3,|R'|=q_4$. Let 
    $\overbar{H}=(\overbar{L},\overbar{R},\overbar{E})$ be a template-hypergraph 
    with $|\overbar{L}|=q_1+q_3, |\overbar{R}|=q_2+q_4$, where 
    $\overbar{\ell}_t = \ell_t$ ~if $t\in [q_1]$, 
    $\overbar{\ell}_t = \ell'_t$ ~if $t\in [q_1+1,q_1+q_3]$, 
    $\overbar{r}_t = r_t$ ~if $t\in [q_2]$, 
    $\overbar{r}_t = r'_t$ ~if $t\in [q_2+1,q_2+q_4]$, and 
    $\overbar{E}=E\uplus E'$. We call $\overbar{H}$ the \emphi{disjoint-union} 
    of $H$ and $H'$, which we denote by $H\uplus H'$. 
\end{definition}

\begin{observation}\label{obs:poly:hgm:kron:prod}
    Let $\sfT$ be an SOS-symmetric order-$d$ tensor and let $H=(L,R,E)$, 
    $H'=(L',R',E')$ be template-hypergraphs. Then, 
    \[
        \hgm{\sfT}{H} \otimes \hgm{\sfT}{H'}
        =
        \hgm{\sfT}{H\uplus H'}
    \]
\end{observation}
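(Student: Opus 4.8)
The plan is a direct verification by unwinding the three definitions involved: the Kronecker product of matrices, the disjoint-union $H\uplus H'$ of template-hypergraphs (\cref{def:poly:th:disjoint:union}), and the hypergraphical matrix (\cref{def:poly:hypergraphical-matrix}). First I would fix notation: write $|L|=q_1$, $|R|=q_2$, $|L'|=q_3$, $|R'|=q_4$, so $\hgm{\sfT}{H}$ has shape $(q_1,q_2)$, $\hgm{\sfT}{H'}$ has shape $(q_3,q_4)$, and $\hgm{\sfT}{H\uplus H'}$ has shape $(q_1+q_3,\,q_2+q_4)$. An index of this last matrix is a pair $(\overbar I,\overbar J)$ with $\overbar I\in[n]^{q_1+q_3}$, $\overbar J\in[n]^{q_2+q_4}$, and I would write $\overbar I=(I,I')$, $\overbar J=(J,J')$ where $I\in[n]^{q_1}$, $I'\in[n]^{q_3}$, $J\in[n]^{q_2}$, $J'\in[n]^{q_4}$, matching exactly the way the Kronecker product indexes its entries.

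The one substantive step is to show that, under this splitting, the edge-set instantiation of the disjoint union factors as a disjoint union of the two individual instantiations,
\[
\overbar E(\overbar I,\overbar J)~=~E(I,J)~\uplus~E'(I',J').
\]
By \cref{def:poly:th:disjoint:union}, $\overbar E=E\uplus E'$, and the relabelling of vertices is such that every hyperedge of $\overbar E$ inherited from $E$ involves only the vertices $\overbar\ell_t=\ell_t$ for $t\in[q_1]$ and $\overbar r_t=r_t$ for $t\in[q_2]$; hence substituting $\overbar\ell_t=\overbar I_t$ and $\overbar r_t=\overbar J_t$ produces exactly the same size-$d$ multiset as substituting $\ell_t=I_t$, $r_t=J_t$ in the corresponding hyperedge of $E$. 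Symmetrically, every hyperedge inherited from $E'$ involves only $\overbar\ell_{q_1+s}=\ell'_s$ and $\overbar r_{q_2+s}=r'_s$, and substituting from $\overbar I,\overbar J$ is the same as substituting $I',J'$ into $E'$. Summing over all hyperedges gives the claimed identity of multisets-of-multisets.

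Given this, the conclusion is immediate: for any such $\overbar I,\overbar J$,
\[
\hgm{\sfT}{H\uplus H'}[\overbar I,\overbar J]=\prod_{e\in\overbar E(\overbar I,\overbar J)}\sfT[e]=\Bigl(\prod_{e\in E(I,J)}\sfT[e]\Bigr)\Bigl(\prod_{e\in E'(I',J')}\sfT[e]\Bigr)=\hgm{\sfT}{H}[I,J]\cdot\hgm{\sfT}{H'}[I',J'],
\]
and the right-hand side is by definition the $(\overbar I,\overbar J)$-entry of $\hgm{\sfT}{H}\otimes\hgm{\sfT}{H'}$. I would remark that SoS-symmetry of $\sfT$ is used only to make $\sfT[e]$ well-defined for a multiset $e$ — which is needed for all three hypergraphical matrices to be defined at all — and no further property of $\sfT$ enters. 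There is no real obstacle here: the only thing requiring care is the index bookkeeping in $H\uplus H'$, i.e. verifying that the vertex relabelling $\overbar\ell,\overbar r$ is precisely compatible with concatenating the index tuples, which is exactly what the displayed identity for $\overbar E(\overbar I,\overbar J)$ records.
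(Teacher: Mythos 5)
Your proof is correct, and it supplies exactly the definition-unwinding the paper leaves implicit when it states this as an unproved Observation: the paper's remark following the observation (that $\uplus$ on template-hypergraphs does not commute because $\otimes$ does not) already presupposes precisely the index-splitting correspondence you spell out. Your identification of where SoS-symmetry enters — only to make $\sfT[e]$ well-defined on multisets — is also accurate.
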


\paragraph{Remark.}
Note that the disjoint-union operation on template-hypergraphs does 
not commute, i.e. $\hgm{\sfT}{H\uplus H'}\neq \hgm{\sfT}{H'\uplus H}$ 
(since kronecker-product does not commute). 
\bigskip

\noindent
Now consider a degree-$4$ SoS-symmetric matrix $M$ 
(as in the statment of~\cref{thm:poly:tetris}) and let $\sfT$ be the SoS-symmetric 
tensor corresponding to $M$. Then for any $x+y=4$ we have that 
$M_{x,y} = \hgm{\sfT}{H_{x,y}}$, where $H_{x,y}=(L,R,E)$ is the template-hypergraph 
satisfying $L=\{\ell_1,\dots \ell_x\},R=\{r_1,\dots r_y\}$ and 
$E=\{\{\ell_1,\dots \ell_x,r_1,\dots r_y\}\}$. Combining this observation 
with~\cref{obs:poly:hgm:kron:prod} yields that 
$M_A = \hgm{\sfT}{H_A},~ M_B = \hgm{\sfT}{H_B},~ 
M_C = \hgm{\sfT}{H_C},~ M_D = \hgm{\sfT}{H_D}$, where 
$H_A := H_{3,1}\uplus H_{0,4}\uplus H_{3,1}$,~$H_B := H_{3,1}\uplus H_{1,3}$,~ 
$H_C := H_{2,2}$,~and $H_D := H_{4,0}\uplus H_{0,4}$. 
Lastly, another application of~\cref{obs:poly:hgm:kron:prod} to the above, yields 
\begin{observation}\label{obs:poly:hgm:kron:prod:tetris}
For a template-hypergraph $H$, let $H^{\uplus t}$ denote $\biguplus_{g\in [t]} H$. 
Then, 
\begin{enumerate}[label=(\arabic*)]
        \item $M^{\otimes q/4} = \hgm{\sfT}{H_{2,2}^{\uplus q/4}}$.
        
        \item $M_A^{\otimes a}\otimes M_B^{\otimes b}\otimes 
        M_C^{\otimes c}\otimes M_D^{\otimes d} 
        = \hgm{\sfT}{H(a,b,c,d)}$ where 
        \[
        H(a,b,c,d)~:=H_A^{\uplus a}
        \uplus
        H_B^{\uplus b}
        \uplus
        H_C^{\uplus c}
        \uplus
        H_D^{\uplus d}
        \]
    \end{enumerate}
\end{observation}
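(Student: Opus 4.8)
The plan is to derive the Observation directly from \cref{obs:poly:hgm:kron:prod}, together with the identifications recorded just above the statement: $M_{x,y} = \hgm{\sfT}{H_{x,y}}$ for $x+y=4$, and $M_A = \hgm{\sfT}{H_A}$, $M_B = \hgm{\sfT}{H_B}$, $M_C = \hgm{\sfT}{H_C}$, $M_D = \hgm{\sfT}{H_D}$, where $\sfT$ denotes the SoS-symmetric order-$4$ tensor associated to $M$. In particular $M = M_{2,2} = \hgm{\sfT}{H_{2,2}}$. The only substantive point is to promote \cref{obs:poly:hgm:kron:prod}, which concerns the Kronecker product of \emph{two} hypergraphical matrices, to an arbitrary finite number of factors.

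First I would observe that the disjoint-union operation of \cref{def:poly:th:disjoint:union} is associative at the level of template-hypergraphs: relabeling the symbolic vertices $\ell_1,\ell_2,\dots$ and $r_1,r_2,\dots$ so that the symbols of the $i$-th summand occupy the appropriate contiguous block (in the fixed left-to-right order of the summands) produces the same hypergraph whether one brackets as $(H\uplus H')\uplus H''$ or as $H\uplus(H'\uplus H'')$. This mirrors associativity of the Kronecker product, and hence $H_{2,2}^{\uplus q/4}$, $H_A^{\uplus a}$, and $H(a,b,c,d) = H_A^{\uplus a}\uplus H_B^{\uplus b}\uplus H_C^{\uplus c}\uplus H_D^{\uplus d}$ are unambiguously defined. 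An induction on the number of factors then gives the multi-factor version of \cref{obs:poly:hgm:kron:prod}: the base case is \cref{obs:poly:hgm:kron:prod} itself, and the inductive step is $\hgm{\sfT}{H_1}\otimes\cdots\otimes\hgm{\sfT}{H_t} = \bigl(\hgm{\sfT}{H_1}\otimes\cdots\otimes\hgm{\sfT}{H_{t-1}}\bigr)\otimes\hgm{\sfT}{H_t} = \hgm{\sfT}{H_1\uplus\cdots\uplus H_{t-1}}\otimes\hgm{\sfT}{H_t} = \hgm{\sfT}{H_1\uplus\cdots\uplus H_t}$, using associativity of $\otimes$ and then \cref{obs:poly:hgm:kron:prod} on the last two factors. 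Part (1) of the Observation is now immediate: $M^{\otimes q/4} = \hgm{\sfT}{H_{2,2}}^{\otimes q/4} = \hgm{\sfT}{H_{2,2}^{\uplus q/4}}$. Part (2) follows by applying the multi-factor identity to the $a+b+c+d$ matrices $\hgm{\sfT}{H_A},\dots,\hgm{\sfT}{H_A},\hgm{\sfT}{H_B},\dots,\hgm{\sfT}{H_D}$ taken in that order, giving $M_A^{\otimes a}\otimes M_B^{\otimes b}\otimes M_C^{\otimes c}\otimes M_D^{\otimes d} = \hgm{\sfT}{H(a,b,c,d)}$.

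The main obstacle is purely bookkeeping rather than conceptual: one must keep the symbolic vertex sets straight under iterated disjoint union, and in particular confirm that the $L$-parts and $R$-parts of the symbol sets concatenate in the correct order so that the block structure of the iterated Kronecker product lines up edge-for-edge with the edge-set of the combined template-hypergraph. Since \cref{obs:poly:hgm:kron:prod} already encapsulates exactly this matching for two factors, no new idea is required; the Observation is a notational repackaging that will let \cref{thm:poly:tetris} be stated and manipulated cleanly in the language of hypergraphical matrices.
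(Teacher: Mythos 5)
Your proposal is correct and follows essentially the same route as the paper: the paper proves the observation by iterating \cref{obs:poly:hgm:kron:prod} on top of the identifications $M_{x,y}=\hgm{\sfT}{H_{x,y}}$, $M_A=\hgm{\sfT}{H_A}$, and so on, leaving the multi-factor extension and the associativity of $\uplus$ implicit (``another application of~\cref{obs:poly:hgm:kron:prod} \dots yields''). You have simply spelled out the induction and the relabeling bookkeeping that the paper takes for granted, which is a faithful and slightly more careful rendering of the same argument.
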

For technical reasons we also define the following related template-hypergraph: 
\[
    \overbar{H}(a,b,c,d) 
    ~:= 
    \overbar{H}_A^{\uplus a}
    \uplus
    H_B^{\uplus b}
    \uplus
    H_C^{\uplus c}
    \uplus
    H_D^{\uplus d},
\]
where $\overline{H}_A$ is the template-hypergraph whose 
corresponding hypergraphical matrix is $M_A^{T}$. 
\smallskip 

To finish paraphrasing~\cref{thm:poly:tetris}, we are left with studying the effect 
of permutations on hypergraphical matrices - which is the content of the 
following section. 

\subsubsection{Hypergraphical Matrices under Permutation}
Recall that for any matrix $B\in \Re^{[n]^{q/2}\times [n]^{q/2}}$ and $\pi\in \Sym_q$, 
$B^{\pi}$ is the matrix satisfying $B^{\pi}[K] := B[\pi(K)]$ where 
$K\in [n]^{q/2}\times [n]^{q/2}$. 

Also recall that for any permutation $\sigma\in \Sym_{q/2}$, $\overline{\sigma}\in 
\Sym_{n^{q/2}}$ denotes the permutation that maps any $i\in [n]^{q/2}$ to $\sigma(i)$ 
and also that $P_{\sigma} \in \Re^{[n]^{q/2} \times [n]^{q/2}}$ denotes the 
$[n]^{q/2}\times [n]^{q/2}$ row-permutation matrix induced by the permutation 
$\overline{\sigma}$. 

We next define a permuted template hypergraph in order to capture how permuting a 
hypergraphical matrix (in the senses above) can be seen 
as permutations of the vertex set of the hypergraph. 

\begin{definition}[Permuted Template-Hypergraph]
    For any $\pi\in \Sym_{q}$ (even $q$), and a $d$-uniform template-hypergraph 
    $H=(L,R,E)$ with $|L|=|R|=q/2$, 
    let $H^{\pi} = (L',R',E)$ denote the template-hypergraph obtained by setting 
    $\ell'_t := k_t$~ and $r_t = k_{t+q/2}$~ for $t\in [q/2]$, where 
    $K=(k_1, \dots k_q) = \pi(L\oplus R)$. 
    
    Similarly, for any $\sigma_1,\sigma_2\in \Sym_{q/2}$, let 
    $H^{\sigma_1,\sigma_2} = (L',R',E)$ denote the 
    template-hypergraph obtained by setting $\ell'_t := \sigma_1(\ell_t)$~ and 
    $r'_t = \sigma_2(r_t)$. 
\end{definition}

\noindent
We then straightforwardly obtain 
\begin{observation}\label{obs:poly:hgm:permutation}
    For any $\pi\in \Sym_{q}$, $\sigma_1,\sigma_2\in \Sym_{q/2}$, SoS-symmetric 
    order-$d$ tensor $\sfT$ and any $d$-uniform template-hypergraph $H$, 
    \begin{enumerate}[label=(\arabic*)]
        \item $\pth{\hgm{\sfT}{H}}^{\pi} = \hgm{\sfT}{H^{\pi}}$
        
        \smallskip
        
        \item $P_{\sigma_1}\cdot \hgm{\sfT}{H} \cdot P^T_{\sigma_2} = 
        \hgm{\sfT}{H^{\sigma_1,\sigma_2}}$
    \end{enumerate}
\end{observation}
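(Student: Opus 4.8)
The plan is to prove both identities by a direct unwinding of the relevant definitions — the definition of the hypergraphical matrix $\hgm{\sfT}{\cdot}$, of the action $B \mapsto B^{\pi}$ on $[n]^{q/2}\times[n]^{q/2}$-matrices, and of the permuted template-hypergraphs $H^{\pi}$ and $H^{\sigma_1,\sigma_2}$ — and then observing that each side of each identity is literally the same product $\prod_{e} \sfT[e]$ taken over the same family of instantiated hyperedges. As a preliminary remark I would note that, since $\sfT$ is SoS-symmetric (hence invariant under reordering of its $d$ indices), the value $\sfT[e]$ is well-defined on a size-$d$ \emph{multiset} $e$, so that $\hgm{\sfT}{H}[I,J]=\prod_{e\in E(I,J)}\sfT[e]$ is meaningful; and, crucially, that passing from $H$ to $H^{\pi}$ or to $H^{\sigma_1,\sigma_2}$ only relabels the symbolic vertices and leaves the hyperedge set $E$ itself unchanged, so there is a canonical bijection between the abstract vertices (and hence the edges) of $H$ and those of the permuted hypergraph.

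For part (1), I would fix $K=(k_1,\dots,k_q)\in[n]^q$, view it as $(I,J)$ via $i_t=k_t$, $j_t=k_{t+q/2}$, and expand both sides. The left side is $(\hgm{\sfT}{H})^{\pi}[K]=\hgm{\sfT}{H}[\pi(K)]=\prod_{e}\sfT[e]$ over the edge-set instantiation of $H$ obtained by feeding the symbolic sequence $L\oplus R=(s_1,\dots,s_q)$ the entries of $\pi(K)$ in order. The right side is $\hgm{\sfT}{H^{\pi}}[K]=\prod_{e}\sfT[e]$ over the instantiation of $H^{\pi}$, whose labels $\ell'_t=s_{\pi(t)}$ and $r'_t=s_{\pi(t+q/2)}$ receive the values $k_t$ and $k_{t+q/2}$. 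A one-line check of conventions shows that in both expansions an abstract vertex $s_u$ is assigned the same coordinate of $K$; therefore, edge by edge, the instantiated multisets coincide, and the two products — over the same (deduplicated) set of instantiated edges — are equal. Part (2) I would then obtain as the special case of (1) in which $\pi$ is the block permutation $\sigma_1\times\sigma_2\in\Sym_{q/2}\times\Sym_{q/2}\subseteq\Sym_q$: directly from the definitions $H^{\sigma_1\times\sigma_2}=H^{\sigma_1,\sigma_2}$ (the first $q/2$ symbolic positions permuted by $\sigma_1$, the last $q/2$ by $\sigma_2$), and permuting the index set $[n]^q$ by $\sigma_1\times\sigma_2$ is exactly the operation $B\mapsto P_{\sigma_1}BP_{\sigma_2}^{T}$ of permuting row-multi-indices by $\overline{\sigma_1}$ and column-multi-indices by $\overline{\sigma_2}$; substituting into (1) yields $P_{\sigma_1}\hgm{\sfT}{H}P_{\sigma_2}^{T}=\hgm{\sfT}{H^{\sigma_1,\sigma_2}}$. (One could equally well repeat the definition-chase of part (1) verbatim with $\sigma_1,\sigma_2$ in place of $\pi$.)

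I do not expect a genuine obstacle here — the paper itself flags this as "straightforwardly obtained" — and the single place where an error could creep in is the bookkeeping of permutation conventions: whether a permutation acting on a tuple reorders entries by $\pi$ or by $\pi^{-1}$, and the analogous orientation choice in the definition of the row-permutation matrix $P_{\sigma}$. Once these are pinned down consistently with the conventions used earlier in the paper, every step above is mechanical rewriting with no analytic or combinatorial content. I would therefore present the argument with $E(I,J)$ written out explicitly as the substitution $\ell_t\mapsto i_t,\ r_t\mapsto j_t$ applied to each $e\in E$, so that the equality of the two instantiations reads as nothing more than the associativity of "relabel the vertices, then substitute values."
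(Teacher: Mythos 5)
Your proof is correct; the paper gives no explicit argument for this observation (it is introduced with ``we then straightforwardly obtain''), and the definition-chase you describe --- tracking which coordinate of $K$ is assigned to each symbolic vertex on either side of each identity and observing that the assignments coincide, so the instantiated hyperedge collections and hence the products $\prod_e \sfT[e]$ agree --- is precisely the intended reasoning, with the reduction of part~(2) to part~(1) via the block permutation $\sigma_1\times\sigma_2$ being a clean way to package it. The only genuine caveat is the one you already flag, namely that the orientation conventions for $B^{\pi}$, $P_{\sigma}$, $H^{\pi}$, and $H^{\sigma_1,\sigma_2}$ must be pinned down mutually consistently with the paper's earlier usage; your parenthetical worry about ``deduplicated'' edge instantiations is a non-issue, since whether $E(I,J)$ is read as a set or a multiset, both sides of each identity produce the same collection of instantiated hyperedges and the products are therefore equal in either reading.
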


Thus, to prove~\cref{thm:poly:tetris}, it remains to establish 
\begin{align}\label{eq:poly:tetris:hypergraphical}
    &\frac{1}{4!^{q/4}\cdot (q/4)!}\cdot 
    \sum_{\pi\in \Sym_{q}} \hgm{\sfT}{(H_{2,2}^{\uplus q/4})^{\pi}} 
    \nonumber\\
    =& 
    \sum_{12a+8b+4c+8d=q}~
    \frac{1}{\mathcal{R}(a,b,c,d)}\cdot 
    \sum_{\sigma_1,\sigma_2\in\Sym_{q/2}} 
    \hgm{\sfT}{H(a,b,c,d)^{\sigma_1,\sigma_2}} ~~+ 
    \nonumber\\
    ~&
    \sum_{12a+8b+4c+8d=q}~
    \frac{1}{\mathcal{R}(a,b,c,d)}\cdot 
    \sum_{\sigma_1,\sigma_2\in\Sym_{q/2}} 
    \hgm{\sfT}{\overbar{H}(a,b,c,d)^{\sigma_1,\sigma_2}}
\end{align}
We will establish this in the next section by comparing the template-hypergraphs 
generated (with multiplicities) in the LHS with those generated in the RHS.

\subsubsection{Proof of~\cref{eq:poly:tetris:hypergraphical}}
We start with some definitions to track the template-hypergraphs generated 
in the LHS and RHS of~\cref{eq:poly:tetris:hypergraphical}. For any 
$12a+8b+4c+8d=q$, let 
\begin{align*}
    \mathcal{F}(a,b,c,d) 
    &:= 
    \inbraces{
    H(a,b,c,d)^{\sigma_1,\sigma_2}
    \sep{\sigma_1,\sigma_2\in \Sym_{q/2}}
    } 
    \\ 
    \overbar{\mathcal{F}}(a,b,c,d) 
    &:= 
    \inbraces{
    \overbar{H}(a,b,c,d)^{\sigma_1,\sigma_2}
    \sep{\sigma_1,\sigma_2\in \Sym_{q/2}}
    } 
    \\
    \mathcal{F}
    &:= 
    \inbraces{
    (H_{2,2}^{\uplus q/4})^{\pi}
    \sep{\pi\in \Sym_{q}}
    } 
\end{align*}

\noindent
Firstly, it is easily verified that whenever 
$(a,b,c,d)\neq (a',b',c',d')$, 
$\mathcal{F}(a,b,c,d)\cap \mathcal{F}(a',b',c',d') = \phi$, and that 
$\mathcal{F}(a,b,c,d)\cap \overbar{\mathcal{F}}(a,b,c,d) = \phi$. 
It is also easily verified that for any $12a+8b+4c+8d=q$, 
and any $H\in \mathcal{F}(a,b,c,d)$, 
\[
    \mathcal{R}(a,b,c,d) 
    ~=~ 
    \cardin{\inbraces{
    (\sigma_1,\sigma_2)\in \Sym_{q/2}^{2}
    \sep{H(a,b,c,d)^{\sigma_1,\sigma_2} = H}
    }}
\]
and for any $H\in \mathcal{F}$,
\[
    4!^{q/4}\cdot (q/4)!
    ~=~ 
    \cardin{\inbraces{
    \pi\in \Sym_{q}
    \sep{(H_{2,2}^{\uplus q/4})^{\pi} = H}
    }}.
\]
Thus in order to prove~\cref{eq:poly:tetris:hypergraphical}, it is sufficient 
to establish that 
\begin{equation}\label{eq:poly:tetris:th}
    \mathcal{F} 
    ~ = \qquad
    \biguplus_{\mathclap{12a+8b+4c+8d=q}} ~~
    (\mathcal{F}(a,b,c,d) \uplus \overbar{\mathcal{F}}(a,b,c,d))
\end{equation}
It is sufficient to establish that 
\begin{equation}\label{eq:poly:tetris:th:subset}
    \mathcal{F} 
    ~ \subseteq \qquad
    \biguplus_{\mathclap{12a+8b+4c+8d=q}} ~~
    (\mathcal{F}(a,b,c,d) \uplus \overbar{\mathcal{F}}(a,b,c,d))
\end{equation}
since the other direction is straightforward. To this end, 
consider any $H=(L,R,E)\in \mathcal{F}$, and for any $x+y=4$, define 
\[
    s_{x,y} := \cardin{\inbraces{e\in E\sep{|e\cap L|=x,~|e\cap R|=y}}}.
\]
Now clearly $H\in \mathcal{F}(a,b,c,d)$ iff 
\begin{equation}\label{eq:poly:satisfy:new}
    s_{0,4}=a+d,~ s_{3,1}=2a+b,~ s_{1,3}=b,~ s_{2,2}=c,~ s_{4,0}=d.
\end{equation} 
and $H\in \overbar{\mathcal{F}}(a,b,c,d)$ iff 
\begin{equation}\label{eq:poly:satisfy2:new}
    s_{4,0}=a+d,~ s_{1,3}=2a+b,~ s_{3,1}=b,~ s_{2,2}=c,~ s_{0,4}=d.
\end{equation} 
Thus we need only find $12a'+8b'+4c'+8d' = q$, such that~\cref{eq:poly:satisfy:new} 
or~\cref{eq:poly:satisfy2:new} is satisfied. 

We will assume w.l.o.g. that $s_{0,4}\geq s_{4,0}$ and show that one can satisfy~\cref{eq:poly:satisfy:new},
since if $s_{(0,4)} < s_{(4,0)}$, an identical argument 
allows one to show that~\cref{eq:poly:satisfy2:new} is satisfiable. 
So let $d' = s_{4,0}$, $c'=s_{2,2}$, $b'=s_{1,3}$ and $a' = (s_{3,1}-s_{1,3})/2$. 
Since $H\in \mathcal{F}$, it must be true that $4s_{4,0}+3s_{3,1}+2s_{2,2} + s_{(1,3)} 
= q/2$. Thus, $12a'+8b'+4c'+8d' = 8s_{4,0}+6s_{3,1}+4s_{2,2} + 2s_{1,3} = q$ as 
desired. We will next see that $(a',b',c',d')$ and $s_{x,y}$ satisfy~\cref{eq:poly:satisfy:new}.
We have by construction that $s_{4,0} = d'$, $s_{2,2} = c'$, 
$s_{1,3}=b'$ and $s_{3,1} = 2a'+b'$. It remains to show that 
$s_{0,4} = a'+d'$. Now we know that $4s_{4,0}+3s_{3,1}+2s_{2,2} + s_{1,3} = q/2$ 
and $4s_{0,4}+3s_{1,3}+2s_{2,2} + s_{3,1} = q/2$. Subtracting the two equations yields 
$s_{0,4}-s_{4,0} = (s_{3,1}-s_{1,3})/2$. This implies $a'+d' = s_{4,0}+
(s_{3,1}-s_{1,3})/2 = s_{0,4}$, and furthermore it implies that $a'$ is non-negative 
since we assumed $s_{0,4}\geq s_{4,0}$. So~\cref{eq:poly:satisfy:new} is satisfied. 
Thus we have established~\cref{eq:poly:tetris:th:subset}, which completes the proof of~\cref{thm:poly:tetris}.

\chapter{Matrix Norm: Hardness and Approximation}

\newcommand{\alfplain}[2]{\widetilde{f}_{#1,\, #2}}
\newcommand{\fplain}[3]{\widetilde{f}_{#1,\, #2}(#3)}
\newcommand{\fparen}[3]{\widetilde{f}_{#1,\, #2}\inparen{#3}}
\newcommand{\fin}[3]{\widetilde{f}^{-1}_{#1,\, #2}(#3)}
\newcommand{\alfin}[2]{\widetilde{f}^{-1}_{#1,\, #2}}
\newcommand{\finabs}[3]{\widetilde{h}_{#1,\, #2}(#3)}
\newcommand{\alfinabs}[2]{\widetilde{h}_{#1,\, #2}}
\newcommand{\hin}[3]{\widetilde{h}^{-1}_{#1,\, #2}(#3)}
\newcommand{\alhin}[2]{\widetilde{h}^{-1}_{#1,\, #2}}
\newcommand{\nf}[3]{f_{#1,\, #2}(#3)}
\newcommand{\alnf}[2]{f_{#1,\, #2}}
\newcommand{\nfext}[3]{f^{+}_{#1,\, #2}(#3)}
\newcommand{\alnfext}[2]{f^{+}_{#1,\, #2}}
\newcommand{\nfin}[3]{f^{-1}_{#1,\, #2}(#3)}
\newcommand{\alnfin}[2]{f^{-1}_{#1,\, #2}}
\newcommand{\nh}[3]{h_{#1,\, #2}(#3)}
\newcommand{\alnh}[2]{h_{#1,\, #2}}
\newcommand{\nhin}[3]{h^{-1}_{#1,\, #2}(#3)}
\newcommand{\alnhin}[2]{h^{-1}_{#1,\, #2}}

\newcommand{\absolute}[1]{\mathrm{abs}\inparen{#1}}

\newcommand{\kfc}[1]{\widetilde{f}_{#1}}
\newcommand{\knfc}[1]{f_{#1}}
\newcommand{\kfinc}[1]{\widetilde{f}^{-1}_{#1}}
\newcommand{\ngc}{f^{\,-1}_{k}}

\newcommand{\fa}[1]{\widetilde{f}^{\,(a)}(#1)}
\newcommand{\fb}[1]{\widetilde{f}^{\,(b)}(#1)}
\newcommand{\fwild}[2]{\widetilde{f}^{\,(#1)}(#2)}
\newcommand{\alfa}{\widetilde{f}^{\,(a)}}
\newcommand{\alfb}{\widetilde{f}^{\,(b)}}
\newcommand{\alfwild}[1]{\widetilde{f}^{\,(#1)}}
\newcommand{\hfac}[1]{\widehat{f}^{\,\,(a)}_{#1}}
\newcommand{\hfbc}[1]{\widehat{f}^{\,\,(b)}_{#1}}
\newcommand{\hfwildc}[2]{\widehat{f}^{\,\,(#1)}_{#2}}

\newcommand{\intfull}[1]{\mathrm{I}(#1)}
\newcommand{\intone}[1]{\mathrm{I}_{1}(#1)}
\newcommand{\inttwo}[1]{\mathrm{I}_{2}(#1)}

\newcommand{\betaterm}{\mathrm{B}\inparen{\frac{1-b}{2},1+\frac{b}{2}}}

\newcommand{\confHG}{{}_{1}F_{1}}
\newcommand{\hypergeometric}{{}_{2}F_{1}}
\newcommand{\hgp}[1]{#1\cdot \hypergeometric\!\inparen{\frac{1-a}{2},\frac{1-b}{2}\,;\,\frac{3}{2}\,;\,{#1}^2}}
\newcommand{\RisingFactorial}[1]{(#1)_{k}}

\newcommand{\baru}{\overline{u}}
\newcommand{\barv}{\overline{v}}
\newcommand{\tildeu}{\widetilde{u}}
\newcommand{\tildev}{\widetilde{v}}
\newcommand{\tildeU}{\widetilde{U}}
\newcommand{\tildeV}{\widetilde{V}}

\newcommand{\barx}{\overline{x}}
\newcommand{\bary}{\overline{y}}

\newcommand{\tildex}{\widetilde{x}}
\newcommand{\tildey}{\widetilde{y}}

\newcommand{\hatU}{\widehat{U}}
\newcommand{\hatV}{\widehat{V}}
\newcommand{\hatu}{\widehat{u}}
\newcommand{\hatv}{\widehat{v}}

\newcommand{\seriesLeft}[2]{S_L(#1,#2)}
\newcommand{\seriesRight}[2]{S_R(#1,#2)}

\newcommand{\gvec}{\bfg}

\newcommand{\holderdual}[2]{\Psi_{\!#1}(#2)}

\newcommand{\id}{\mathrm{I}}

\newcommand{\CP}[1]{\mathsf{CP}(#1)}
\newcommand{\CPD}[1]{\mathrm{DP}(#1)}

\newcommand{\HD}{\emph{\Holder Dual Rounding}~}

\newcommand{\noise}[1]{T_{\rho} \,{#1}}

\newcommand{\factorConst}[1]{\Phi(#1)}
\newcommand{\threeFactorConst}[1]{\Phi_{3}(#1)}

\newcommand{\apxConst}[2]{\alpha(#1,#2)}
\newcommand{\factorSpConst}[2]{\Phi(#1,#2)}
\newcommand{\threeFactorSpConst}[2]{\Phi_{3}(#1,#2)}

\newcommand{\sqnormX}{\mathcal{F}_{X}}
\newcommand{\sqnormY}{\mathcal{F}_{Y}}
\newcommand{\sqrtnormX}{\sqrt{\mathcal{F}}_{X}}
\newcommand{\sqrtnormY}{\sqrt{\mathcal{F}}_{Y}}

\newcommand{\bbX}{\mathbb{X}}
\newcommand{\bbY}{\mathbb{Y}}
\newcommand{\bbW}{\mathbb{W}}

\newcommand{\support}[2]{\xi_{#1}(#2)}

\newcommand{\bars}{\overline{s}}
\newcommand{\bart}{\overline{t}}

\newcommand{\CVGP}{\emph{CVGP}~}

\newcommand{\Diag}[1]{D_{#1}}

\newcommand{\Norm}[1]{\left\lVert#1\right\rVert}

\newcommand{\mydot}[2]{\ensuremath{\left\langle #1\,, #2 \right\rangle}}
\newcommand{\ip}[1]{\left\langle #1 \right\rangle}

\newcommand{\emysmalldot}[2]{\ensuremath{\langle #1, #2 \rangle}_E} 
\newcommand{\cmysmalldot}[2]{\ensuremath{\langle #1, #2 \rangle}_C} 

\newcommand{\eip}[1]{\left\langle #1 \right\rangle} 

\def\bfx {{\bf x}}
\def\bfy {{\bf y}}
\def\bfz {{\bf z}}
\def\bfu{{\bf u}}
\def\bfv{{\bf v}}
\def\bfw{{\bf w}}
\def\bfa{{\bf a}}
\def\bfb{{\bf b}}
\def\bfc{{\bf c}}
\def\bfg{{\bf g}}

\newcommand{\zero}{\mathbf 0}
\newcommand{\zeroone}{{0/1}\xspace}
\newcommand{\minusoneone}{{-1/1}\xspace}
\newcommand{\pmone}{\ensuremath{\inbraces{-1,1}}\xspace}
\newcommand{\yes}{{\sf Yes}\xspace}
\newcommand{\no}{{\sf No}\xspace}

\renewcommand{\gauss}[2]{{\mathcal{N}\inparen{#1, #2}}}
\renewcommand{\gaussian}[2]{{\mathcal{N}\inparen{#1, #2}}}

\newcommand{\Kwapien}{Kwapie\'n\xspace}
\newcommand{\Maurey}{Maurey\xspace}
\newcommand{\Pisier}{Pisier\xspace}

\section{Preliminaries and Notation}\label{sec:p_to_q_prelims}
\subsection{Matrix Norms}
For a vector $x\in\R^n$, throughout this chapter we will use $x(i)$ to denote its $i$-th coordinate. 
For $p \in [1, \infty)$, we define $\cnorm{p}{\cdot}$ to denote the counting $p$-norm and $\enorm{p}{\cdot}$
to denote the expectation $p$-norm; \ie for a vector $x\in\R^n$, 
\[
    \cnorm{p}{x} := \inparen{\sum_{i\in [n]} |x(i)|^{p}}^{1/p}
 \quad \mbox{ and } \quad 
    \enorm{p}{x} := \Ex{i\sim [n]}{|x(i)|^p}^{1/p} =\inparen{\frac1n \cdot \sum_{i\in [n]} |x(i)|^{p}}^{1/p}.
\]
Clearly $\cnorm{p}{x} = \enorm{p}{x}\cdot n^{1/p}$. 
For $p = \infty$, we define $\cnorm{\infty}{x} = \enorm{\infty}{x} := \max_{i \in [n]} |x(i)|$. 
We will use $p^*$ to 
denote the `dual' of $p$, i.e. $p^* = p/(p-1)$. 
Unless stated otherwise, we usually work with $\cnorm{p}{\cdot}$. 
We also define inner product $\cip{x,y}$ to denote the inner product under the counting measure unless stated
otherwise; \ie for two vectors $x, y \in \R^n$,
$\cip{x, y} := \sum_{i\in [n]} x(i)y(i)$.

Next, we record a well-known fact about $p$-norms that is used 
in establishing many duality statements.  
\begin{observation}\label{p-norm:dual}
    For any $p\in [1,\infty]$, 
$\cnorm{p}{x} = \sup_{\cnorm{p^*}{y}=1} \,\mysmalldot{y}{x}$.
\end{observation}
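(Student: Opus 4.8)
The plan is to prove the two inequalities separately: the bound $\cnorm{p}{x} \geq \sup_{\cnorm{p^*}{y}=1} \mysmalldot{y}{x}$ follows immediately from \Holder's inequality, while for the reverse bound I would exhibit an explicit feasible $y$ attaining the value $\cnorm{p}{x}$. Since both sides vanish when $x = 0$, I would assume $x \neq 0$ throughout, so that the normalizations below are well defined.

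For the easy direction, \Holder's inequality gives $\mysmalldot{y}{x} \leq \cnorm{p^*}{y}\cdot\cnorm{p}{x}$ for every $y \in \R^n$, so restricting to $\cnorm{p^*}{y} = 1$ and taking the supremum yields $\sup_{\cnorm{p^*}{y}=1}\mysmalldot{y}{x} \leq \cnorm{p}{x}$. For the reverse, I would split into three cases according to whether $p$ is finite and strictly bigger than $1$, equal to $1$, or equal to $\infty$. When $1 < p < \infty$, set $y(i) := \sgn(x(i))\cdot|x(i)|^{p-1}/\cnorm{p}{x}^{p-1}$; using the identity $(p-1)p^* = p$, a short computation shows $\cnorm{p^*}{y}^{p^*} = \cnorm{p}{x}^{-p}\sum_i |x(i)|^{p} = 1$ and $\mysmalldot{y}{x} = \cnorm{p}{x}^{-(p-1)}\sum_i |x(i)|^{p} = \cnorm{p}{x}$. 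When $p = 1$, take $y(i) := \sgn(x(i))$, so that $\cnorm{\infty}{y} = 1$ and $\mysmalldot{y}{x} = \cnorm{1}{x}$. When $p = \infty$, pick an index $i^*$ with $|x(i^*)| = \cnorm{\infty}{x}$ and take $y := \sgn(x(i^*))\,e_{i^*}$, so that $\cnorm{1}{y} = 1$ and $\mysmalldot{y}{x} = \cnorm{\infty}{x}$. In all three cases $y$ is feasible and achieves $\mysmalldot{y}{x} = \cnorm{p}{x}$, which gives the reverse inequality and in fact shows the supremum is attained.

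There is no substantial obstacle in this argument; it is the classical $\ell_p$–$\ell_{p^*}$ duality. The only points needing mild care are the boundary exponents $p \in \{1,\infty\}$, where the interior formula $y(i) \propto \sgn(x(i))|x(i)|^{p-1}$ degenerates and must be replaced by the two special constructions above, and keeping the trivial case $x = 0$ separate so that the denominators $\cnorm{p}{x}^{p-1}$ make sense.
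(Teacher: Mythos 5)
The paper states this observation without proof, remarking only that it is a well-known fact about $p$-norms; so there is no paper proof to compare against. Your argument is the standard and correct proof of $\ell_p$--$\ell_{p^*}$ duality: the easy direction via \Holder's inequality and the reverse direction by exhibiting the explicit extremizer, with the boundary cases $p \in \{1,\infty\}$ handled separately as you do. The computations check out (in particular the use of $(p-1)p^* = p$ and the observation that $x \neq 0$ guarantees the normalizations and the feasibility of the special constructions at $p=1$ and $p=\infty$). This correctly fills in a proof the paper omits.
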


We now formally state two related quantities studied in this chapter:
\begin{restatable}{repdefinition}{definitionptoq}\label{def:p_to_q_norm}
  For $p,q\in [1,\infty]$, the \onorm{p}{q} problem is defined as follows:
  \begin{align*}
    \norm{p}{q}{A}=\cnorm{p}{q}{A}&\defeq\sup\frac{\cnorm{q}{Ax}}{\cnorm{p}{x}}\\
    \enorm{p}{q}{A}&\defeq\sup\frac{\enorm{q}{Ax}}{\enorm{p}{x}}\mcom
  \end{align*}
  for an $m\times n$ matrix $A$. 
\end{restatable}

Combining~\cref{p-norm:dual} and~\cref{def:p_to_q_norm}, we see that
the original Grothendieck problem is precisely \groth{\infty}{\infty}. And a natural generalization then
is as follows:
\begin{definition}
    For $p,q\in [1,\infty]$, we define a  generalization of the Grothendieck problem, namely \groth{p}{q}, as the 
    problem of computing 
    \[
        \sup_{\cnorm{p}{y}=1} \,\sup_{\cnorm{q}{x}=1} \mysmalldot{y}{Ax}
    \]
    given an $m\times n$ matrix $A$. 
\end{definition}

\subsection{Preliminaries for Hardness Results}
The symmetry of the hardness results in~\cref{thm:mainthmptoq} is explained by the following observation:
\begin{restatable}{repobservation}{observationptoqsym}\label{onorm:groth}
  For any $p,q\in [1,\infty]$ and any matrix $A$,
  \[
    \cnorm{p}{q}{A} =\sup_{\cnorm{q^*}{y}=1}\,\sup_{\cnorm{p}{x}=1}\mysmalldot{y}{Ax} = \cnorm{q^*}{p^*}{A^T}.
  \]
\end{restatable}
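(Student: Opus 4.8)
The plan is to derive both equalities from two elementary facts already in hand: the scalar duality $\cnorm{p}{x} = \sup_{\cnorm{p^*}{y}=1}\mysmalldot{y}{x}$ recorded in~\cref{p-norm:dual}, and the adjoint identity $\mysmalldot{y}{Ax} = \mysmalldot{A^T y}{x}$, which is immediate from writing out both sides coordinatewise. No genuine obstacle arises; the argument is three lines of rewriting, and the only points deserving a word of care are trivial.

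First I would unfold~\cref{def:p_to_q_norm}. Since $\cnorm{q}{A(tx)}/\cnorm{p}{tx} = \cnorm{q}{Ax}/\cnorm{p}{x}$ for every nonzero scalar $t$, the supremum may be taken over $x$ with $\cnorm{p}{x}=1$, so $\cnorm{p}{q}{A} = \sup_{\cnorm{p}{x}=1}\cnorm{q}{Ax}$ (the degenerate case $A=0$ makes all three quantities vanish, so assume $A\neq 0$). Applying~\cref{p-norm:dual} to the vector $Ax$ gives $\cnorm{q}{Ax} = \sup_{\cnorm{q^*}{y}=1}\mysmalldot{y}{Ax}$, and substituting yields $\cnorm{p}{q}{A} = \sup_{\cnorm{p}{x}=1}\sup_{\cnorm{q^*}{y}=1}\mysmalldot{y}{Ax}$. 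The two suprema commute, since each ranges over a fixed set not depending on the other variable; this establishes the first equality in the statement.

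For the second equality I would keep the middle expression but write the suprema in the order $\sup_{\cnorm{q^*}{y}=1}\sup_{\cnorm{p}{x}=1}$, replace $\mysmalldot{y}{Ax}$ by $\mysmalldot{A^T y}{x}$, and apply~\cref{p-norm:dual} to the inner supremum over $x$ to get $\sup_{\cnorm{p}{x}=1}\mysmalldot{A^T y}{x} = \cnorm{p^*}{A^T y}$. What remains, $\sup_{\cnorm{q^*}{y}=1}\cnorm{p^*}{A^T y}$, is exactly $\cnorm{q^*}{p^*}{A^T}$ by the normalized form of~\cref{def:p_to_q_norm} applied to $A^T$ with source norm $q^*$ and target norm $p^*$. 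The remaining bookkeeping is only to note that $p^*$ and $q^*$ are well defined for the endpoint values via $1^* = \infty$ and $\infty^* = 1$, and that~\cref{p-norm:dual} is valid throughout $[1,\infty]$, so the whole chain goes through uniformly in $p,q$.
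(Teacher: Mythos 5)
Your proof is correct and follows essentially the same route as the paper's: unfold the definition, apply the $\ell_p$-duality of~\cref{p-norm:dual} to $\cnorm{q}{Ax}$, swap the two suprema, pass to $A^T$ via the adjoint identity, and re-fold using duality. The paper compresses these steps into a single chain of equalities, but the content and the order of the moves are identical.
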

\begin{proof}
Using $\mysmalldot{y}{Ax} = \mysmalldot{x}{A^Ty}$, 
\begin{align*}
        \cnorm{p}{q}{A}
&=        \sup_{\cnorm{p}{x}=1}\cnorm{q}{Ax}
=         \sup_{\cnorm{p}{x}=1}\,\sup_{\cnorm{q^*}{y}=1}\mysmalldot{y}{Ax} 
=        \sup_{\cnorm{q^*}{y}=1}\,\sup_{\cnorm{p}{x}=1}\mysmalldot{y}{Ax} \\
&=    
        \sup_{\cnorm{p}{x}=1}\,\sup_{\cnorm{q^*}{y}=1}\mysmalldot{x}{A^Ty} 
        =
        \sup_{\cnorm{q^*}{y}=1}\cnorm{p^*}{A^Ty} 
	=
        \cnorm{q^*}{p^*}{A^T}\mper 
\tag*{\qedhere}
    \end{align*}
\end{proof}

The following observation will be useful for composing hardness maps for \onorm{p}{2} and \onorm{2}{q} 
to get \onorm{p}{q} hardness for when $p>q$ and $p\geq 2 \geq q$. 
\begin{observation}\label{composition:soundness}
    For any $p,q,r\in [1,\infty]$ and any matrices $B, C$, 
\[
        \cnorm{p}{q}{BC} 
        =
        \sup_x \frac{\cnorm{q}{BCx}}{\cnorm{p}{x}}
        \leq 
        \sup_x \frac{\cnorm{r}{q}{B} \cnorm{r}{Cx}}{\cnorm{p}{x}} 
\leq 
        \cnorm{r}{q}{B} \cnorm{p}{r}{C}.
\]
\end{observation}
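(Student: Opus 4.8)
The statement is an immediate consequence of the definitions of the operator norms involved, so the plan is short. First I would unfold the leftmost equality: by \cref{def:p_to_q_norm}, $\cnorm{p}{q}{BC} = \sup_{x \neq 0} \cnorm{q}{BCx} / \cnorm{p}{x}$, which is just the definition and requires nothing. The first inequality is the only place where anything happens: for every vector $z$ in the domain of $B$, the definition of $\cnorm{r}{q}{B}$ as a supremum of $\cnorm{q}{Bz}/\cnorm{r}{z}$ gives the pointwise bound $\cnorm{q}{Bz} \le \cnorm{r}{q}{B}\cdot \cnorm{r}{z}$ (trivially true when $z = 0$, in which case both sides vanish). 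Applying this with $z = Cx$ yields $\cnorm{q}{BCx} \le \cnorm{r}{q}{B}\cdot \cnorm{r}{Cx}$ for every $x$, and taking the supremum over $x \neq 0$ of both sides divided by $\cnorm{p}{x}$ gives the first inequality.

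For the second inequality I would simply factor the constant $\cnorm{r}{q}{B}$ (which is independent of $x$) out of the supremum, so that $\sup_{x\neq 0}\, \cnorm{r}{q}{B}\,\cnorm{r}{Cx}/\cnorm{p}{x} = \cnorm{r}{q}{B}\cdot \sup_{x\neq 0} \cnorm{r}{Cx}/\cnorm{p}{x}$, and then recognize the remaining supremum as $\cnorm{p}{r}{C}$ by \cref{def:p_to_q_norm} again. One minor bookkeeping point to mention is that all the suprema are taken over $x \neq 0$ (equivalently over $\cnorm{p}{x} = 1$), and that the chain of inequalities is valid even when $BC$, $B$, or $C$ is the zero operator, since in that case the relevant norms are $0$ and the inequalities hold trivially. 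There is no real obstacle here — the content is entirely the submultiplicativity of operator norms through an intermediate normed space — so the ``hard part'' is merely stating it cleanly; I would keep the write-up to the two or three lines of the displayed chain already given, annotated with the justification for each step as above.
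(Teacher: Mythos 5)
Your proof is correct and follows the same elementary route the paper intends: unfold the definition, apply the pointwise bound $\cnorm{q}{Bz} \le \cnorm{r}{q}{B}\,\cnorm{r}{z}$ with $z=Cx$, then factor out the constant and recognize $\cnorm{p}{r}{C}$. The paper states this as an observation with the chain displayed and no separate proof, so your annotation of each step matches exactly what is implicit there.
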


\subsubsection{Fourier Analysis}\label{sec:fourier}
We introduce some basic facts about Fourier analysis of Boolean functions used in our proof of hardness
results. Let $R \in \N$ be a positive integer, and consider a function $f : \{ \pm 1 \}^R \to \R$. 
For any subset $S \subseteq [R]$ let $\chi_S := \prod_{i \in S} x_i$. 
Then we can represent $f$ as 
\begin{equation}\label{eq:inverse_fourier}
f(x_1, \dots, x_R) = \sum_{S \subseteq [R]} \hatf(S) \cdot \chi_S(x_1, \dots x_R),
\end{equation}
where 
\begin{equation}
\label{eq:fourier}
\hatf(S) = \Ex{x \in \{ \pm 1 \}^R}{f(x) \cdot \chi_S(x)}\mbox{ for all } S \subseteq [R].
\end{equation}
The {\em Fourier transform} refers to a linear operator $F$ that maps $f$ to $\hatf$ as defined as~\eqref{eq:fourier}. 
We interpret $\hatf$ as a $2^R$-dimensional vector whose coordinates are indexed by $S \subseteq [R]$. 
Endow the expectation norm and the expectation norm to $f$ and $\hatf$ respectively; i.e., 
\[
\enorm{p}{f} := \left( \Ex{x \in \{ \pm 1 \}^R}{|f(x)|^p} \right)^{1/p}
\quad \mbox{ and } \quad 
\cnorm{p}{\hatf} := \left( \sum_{S \subseteq [R]} | \hatf(S) |^p \right)^{1/p}.
\]
as well as the corresponding inner products $\mysmalldot{f}{g}$ and $\mysmalldot{\hatf}{\hatg}$ consistent with their $2$-norms.
We also define the {\em inverse Fourier transform} $F^T$ to be a linear operator 
that maps a given $\hatf : 2^R \to \R$ to $f : \{ \pm 1 \}^R \to \R$ defined as in~\eqref{eq:inverse_fourier}. 
We state the following well-known facts from Fourier analysis. 
\begin{observation} $F$ is an orthogonal transformation; i.e., for any $f, g : \{ \pm 1 \}^R \to \R$, 
\[\ip{f, g} = \ip{F f, F g}.\]
\end{observation}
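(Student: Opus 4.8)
The plan is to reduce the claim to the orthonormality of the characters $\{\chi_S\}_{S \subseteq [R]}$ under the expectation inner product on $\{ \pm 1 \}^R$, and then expand $f$ and $g$ in the Fourier basis. This is the Parseval/Plancherel identity in this setting, so the argument is short once the orthonormality is in place.

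First I would record the elementary fact that for a uniformly random $x \in \{ \pm 1 \}^R$ and any $U \subseteq [R]$, independence of the coordinates gives $\Ex{x}{\chi_U(x)} = \prod_{i \in U} \Ex{x_i}{x_i}$, which equals $1$ when $U = \emptyset$ and $0$ otherwise. Since $x_i^2 = 1$ for all $i$, we have $\chi_S(x) \cdot \chi_T(x) = \chi_{S \triangle T}(x)$, and hence $\ip{\chi_S, \chi_T} = \Ex{x}{\chi_S(x) \chi_T(x)} = \Ex{x}{\chi_{S \triangle T}(x)}$, which is $1$ if $S = T$ and $0$ otherwise. Thus the characters form an orthonormal family in the expectation inner product.

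Next, using the representations $f = \sum_{S \subseteq [R]} \hatf(S)\, \chi_S$ and $g = \sum_{T \subseteq [R]} \hatg(T)\, \chi_T$ from~\eqref{eq:inverse_fourier}, together with the definition $\hatf(S) = \Ex{x}{f(x)\chi_S(x)}$ from~\eqref{eq:fourier}, I would expand the left-hand side by bilinearity of the inner product:
\[
\ip{f, g} \;=\; \sum_{S, T \subseteq [R]} \hatf(S)\, \hatg(T)\, \ip{\chi_S, \chi_T} \;=\; \sum_{S \subseteq [R]} \hatf(S)\, \hatg(S) \;=\; \ip{\hatf, \hatg} \;=\; \ip{Ff, Fg},
\]
where the middle step uses the orthonormality established above, and the last two equalities are just the definition of the inner product on the Fourier side (the counting inner product consistent with $\cnorm{2}{\cdot}$) and of the transform $F$. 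This completes the proof.

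There is no genuine obstacle here; the only point requiring care is keeping track of which inner product sits on which side — the expectation inner product (consistent with $\enorm{2}{\cdot}$) for functions on $\{ \pm 1 \}^R$, versus the counting inner product (consistent with $\cnorm{2}{\cdot}$) for their Fourier coefficients — so that the normalizations match and no stray factor of $2^R$ appears.
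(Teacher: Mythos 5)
Your proof is correct. The paper states this observation as a well-known fact without providing a proof; your argument — establishing orthonormality of the characters $\chi_S$ under the expectation inner product and then expanding $f$ and $g$ by bilinearity to get $\sum_S \hatf(S)\hatg(S)$, which is precisely the counting inner product on the Fourier side — is the standard Parseval argument and is exactly what the paper tacitly relies on, with the right normalizations (expectation on the function side, counting on the coefficient side) so that no stray $2^R$ factor appears.
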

\begin{observation} [Parseval's Theorem]
For any $f : \{ \pm 1 \}^R \to \R$, $\enorm{2}{f} = \cnorm{2}{F f}$.
\end{observation}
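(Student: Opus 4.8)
The plan is to reduce Parseval's identity to the single elementary fact that the characters $\{\chi_S\}_{S \subseteq [R]}$ are orthonormal with respect to the expectation inner product. First I would record that for $S, T \subseteq [R]$,
\[
\Ex{x \in \{\pm 1\}^R}{\chi_S(x) \cdot \chi_T(x)} ~=~ \Ex{x \in \{\pm 1\}^R}{\prod_{i \in S \triangle T} x_i} ~=~ \indicator{S = T},
\]
using $x_i^2 = 1$ to collapse the overlap $S \cap T$, and then the independence of the coordinates together with $\Ex{}{x_i} = 0$ to see that the product over the symmetric difference $S \triangle T$ has expectation $0$ unless $S \triangle T = \emptyset$.

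Next I would plug the Fourier expansion $f = \sum_{S \subseteq [R]} \hatf(S)\,\chi_S$ from~\eqref{eq:inverse_fourier} into the expectation $2$-norm and expand bilinearly:
\[
\enorm{2}{f}^2 ~=~ \Ex{x \in \{\pm 1\}^R}{f(x)^2} ~=~ \sum_{S, T \subseteq [R]} \hatf(S)\,\hatf(T) \cdot \Ex{x}{\chi_S(x)\,\chi_T(x)} ~=~ \sum_{S \subseteq [R]} \hatf(S)^2 ~=~ \cnorm{2}{\hatf}^2,
\]
where the third equality is the orthonormality relation above and the last equality is the definition of the counting $2$-norm of $\hatf = F f$. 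Taking non-negative square roots then yields $\enorm{2}{f} = \cnorm{2}{F f}$. Equivalently, this is precisely the diagonal case $g = f$ of the preceding observation that $F$ is an orthogonal transformation, combined with the identifications $\ip{f,f} = \enorm{2}{f}^2$ on the function side and $\ip{Ff, Ff} = \cnorm{2}{Ff}^2$ on the Fourier side.

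There is no genuine obstacle here; the only point requiring care is the bookkeeping of the two asymmetric normalizations — the expectation (uniform-measure) norm on $f$ versus the counting norm on $\hatf$ — which is exactly what upgrades the characters from merely orthogonal to orthonormal, and hence what makes $F$ an isometry between these two differently normalized inner product spaces.
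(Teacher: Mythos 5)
Your proof is correct and is the standard argument from orthonormality of the characters under the expectation inner product, matched against the counting normalization on the Fourier side. The paper states this observation without proof (it is listed among the ``well-known facts from Fourier analysis''), so there is no in-paper argument to compare against; what you wrote is exactly the standard derivation one would fill in, including the astute remark that it is the diagonal case $g = f$ of the preceding orthogonality observation.
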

\begin{observation} $F$ and $F^T$ form an adjoint pair; i.e., for any $f : \{ \pm 1 \}^R \to \R$
and $\hatg : 2^R \to \R$, 
\[ 
\mysmalldot{\hatg}{Ff} = 
\mysmalldot{F^T \hatg}{f}.
 \]
\end{observation}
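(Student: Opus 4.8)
The plan is to prove the identity by simply unwinding the definitions of $F$, $F^T$, and the two inner products involved, and then interchanging a finite sum with a finite average. First I would recall the relevant conventions fixed above: for $f : \{\pm 1\}^R \to \R$ the Fourier transform is $(Ff)(S) = \hatg$-style, namely $(Ff)(S) = \Ex{x \in \{\pm 1\}^R}{f(x) \chi_S(x)}$, so $Ff$ is a vector indexed by subsets $S \subseteq [R]$; for $\hatg : 2^R \to \R$ the inverse transform is $(F^T \hatg)(x) = \sum_{S \subseteq [R]} \hatg(S) \chi_S(x)$, a function on the cube. Crucially, $\mysmalldot{\hatg}{Ff}$ is the inner product on $2^R$ consistent with $\cnorm{2}{\cdot}$ (the counting measure), while $\mysmalldot{F^T \hatg}{f}$ is the inner product on $\{\pm 1\}^R$ consistent with $\enorm{2}{\cdot}$ (the uniform measure); keeping these two normalizations straight is the only bookkeeping point, and is exactly why $F$ paired with $F^T$ (rather than with $F^{-1}$) is the adjoint pair.

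The computation I would then carry out is a one-line swap. Starting from the left-hand side,
\[
\mysmalldot{\hatg}{Ff} ~=~ \sum_{S \subseteq [R]} \hatg(S) \cdot \Ex{x \in \{\pm 1\}^R}{f(x) \chi_S(x)} ~=~ \Ex{x \in \{\pm 1\}^R}{f(x) \cdot \sum_{S \subseteq [R]} \hatg(S) \chi_S(x)}\mcom
\]
where pulling the expectation out past the finite sum over the $2^R$ subsets is justified by linearity of expectation. Recognizing the inner sum as $(F^T \hatg)(x)$ by definition, the right-hand side above is precisely $\Ex{x}{f(x) \cdot (F^T \hatg)(x)} = \mysmalldot{F^T \hatg}{f}$, which is the claim.

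As an alternative route one could instead invoke the earlier observation that $F$ is an orthogonal transformation: from $F^T F = \mathrm{id}$ and inner-product preservation one gets $\mysmalldot{\hatg}{Ff} = \mysmalldot{F^T \hatg}{\,F^T F f\,} = \mysmalldot{F^T \hatg}{f}$. I would mention this but favor the direct computation, since it is self-contained and makes the roles of the counting and expectation measures explicit. There is essentially no obstacle here — the statement is a formal consequence of the definitions — so the only thing to be careful about when writing it up is to state the two measures underlying the two inner products before performing the interchange.
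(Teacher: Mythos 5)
Your direct computation is correct and is the standard proof of this fact; the paper simply lists it as a "well-known fact from Fourier analysis" without giving a proof, so there is no competing argument in the paper to compare against. The one point worth being careful about is exactly the one you flag: $\mysmalldot{\hatg}{Ff}$ uses the counting inner product on $2^R$ while $\mysmalldot{F^T\hatg}{f}$ uses the expectation inner product on $\{\pm 1\}^R$, and the single swap of $\sum_S$ with $\Ex{x}$ is the whole content. Your "alternative route" via orthogonality is also fine in spirit, but note it implicitly needs $F$ to be a bijection (so that orthogonality of $F$ gives that $F^T$ preserves inner products), and it leans on $F^TF = \mathrm{id}$ which the paper happens to list as a later observation; the direct computation you favor is cleaner and self-contained, and I agree it is the one to write.
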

\begin{observation}
$F^T F$ is the identity operator.
\end{observation}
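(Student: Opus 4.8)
The plan is to establish $F^T F = \mathrm{Id}$ by verifying the Fourier inversion formula directly: for every $f : \{\pm 1\}^R \to \R$ one must show $(F^T F f)(x) = \sum_{S \subseteq [R]} \hatf(S)\,\chi_S(x) = f(x)$, with $\hatf(S)$ defined as in~\eqref{eq:fourier}.

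First I would record the orthonormality of the character system $\{\chi_S\}_{S \subseteq [R]}$ under the expectation inner product. Since $\chi_S(x)\,\chi_T(x) = \chi_{S \triangle T}(x)$ and, by independence of the coordinates, $\Ex{x}{\chi_U(x)} = \prod_{i \in U}\Ex{x_i}{x_i}$, which equals $1$ when $U = \emptyset$ and $0$ otherwise, we get $\ip{\chi_S, \chi_T} = 1$ if $S = T$ and $0$ if $S \neq T$. Since there are $2^R$ characters and the space of real-valued functions on $\{\pm 1\}^R$ has dimension $2^R$, the $\chi_S$ form an orthonormal basis.

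Next, I would expand an arbitrary $f$ in this basis, $f = \sum_S c_S\,\chi_S$, and pair both sides with $\chi_T$; orthonormality then gives $c_T = \ip{f, \chi_T} = \Ex{x}{f(x)\,\chi_T(x)} = \hatf(T)$, which is precisely the assertion $F^T F f = f$. Alternatively --- and even more economically --- the statement can be deduced formally from the two immediately preceding observations: orthogonality of $F$ gives $\ip{f, g} = \ip{Ff, Fg}$, while the adjoint-pair property applied with $\hatg := Fg$ gives $\ip{Ff, Fg} = \ip{F^T F f, g}$, so that $\ip{f, g} = \ip{F^T F f, g}$ for every $g$, and non-degeneracy of the inner product forces $F^T F f = f$.

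This is routine and I do not anticipate a genuine obstacle; the only point requiring care is the bookkeeping of which inner product (counting on $2^R$ versus expectation on $\{\pm 1\}^R$) and which ambient space each pairing lives on, so that the preceding observations are invoked on the correct sides.
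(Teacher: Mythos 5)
Your argument is correct; the paper states this observation without proof as a standard fact from Fourier analysis, so there is no in-text proof to compare against. Both of your routes work: the direct verification via orthonormality of the characters $\{\chi_S\}$ under the expectation inner product, and the formal deduction from the two preceding observations (orthogonality of $F$ gives $\ip{f,g} = \ip{Ff, Fg}$; the adjoint property with $\hatg := Fg$ gives $\ip{Ff, Fg} = \ip{F^T F g, f}$; non-degeneracy then forces $F^T F = \mathrm{Id}$). Your closing caution about keeping straight which inner product lives on which space is exactly the right thing to watch, and you apply the adjoint identity on the correct sides, so there is no gap.
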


In~\cref{sec:nhc}, we also consider a {\em partial} Fourier transform $F_P$
that maps a given function $f : \{ \pm 1 \}^R \to \R$ to a vector $\hatf : [R] \to \R$
defined as $\hatf(i) = \Ex{x \in \{ \pm 1 \}^R}{f(x) \cdot x_i}$ for all $i \in [R]$.
It is the original Fourier transform where $\hatf$ is further projected to $R$ coordinates corresponding to linear coefficients. 
The partial inverse Fourier transform $F_P^T$ is a transformation that maps 
a vector $\hatf : [R] \to \R$ to a function $f : \{ \pm 1 \}^R \to \R$ as in~\eqref{eq:inverse_fourier} restricted to $S = \{ i \}$ for some $i \in [R]$. 
These partial transforms satisfy similar observations as above: 
(1) $\enorm{2}{f} \geq \cnorm{2}{F_P f}$, 
(2) $\enorm{2}{F_P^T \hatf} = \cnorm{2}{\hatf}$,
(3) $F_P$ and $F_P^T$ form an adjoint pair, 
and (4) $(F_P^T F_P) f = f$ if and only if $f$ is a linear function. 

We now state and prove a useful lemma involving Fourier coefficients of linear functions that is very
useful for our proof of hardness.
\subsubsection{An Useful Lemma for Dictatorship Test}\label{sec:ptoq:dic}

\newcommand{\innernorm}{\cnorm}
\newcommand{\outernorf}{\enorm}

\begin{lemma}\label{lem:dict}
    Let $f : \{ \pm 1 \}^R \to \R$ be a linear 
    function for
    some positive integer $R \in \N$ and $\hatf : [R] \to \R$ be its linear Fourier
    coefficients defined by 
    \[ \hatf(i) \defeq \Ex{x \in \{ \pm 1 \}^R}{x_i f(x)}.\]
    For all $\varepsilon > 0$, there exists $\delta > 0$
    such that if $\enorm{r}{f} > (\gamma_r + \varepsilon) \cnorm{2}{\hatf}$ then
    $\cnorm{4}{\hatf} > \delta \cnorm{2}{\hatf}$ for all $1 \leq r < 2$.
\end{lemma}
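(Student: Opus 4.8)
The plan is to argue by contrapositive: assume $\cnorm{4}{\hatf}$ is small compared to $\cnorm{2}{\hatf}$, and show that $\enorm{r}{f}$ cannot be much larger than $\gamma_r \cdot \cnorm{2}{\hatf}$. Normalize so that $\cnorm{2}{\hatf} = 1$; by Parseval this is the same as $\enorm{2}{f} = 1$. Since $f$ is linear, $f(x) = \sum_{i \in [R]} \hatf(i) x_i$ is a weighted sum of independent $\pm 1$ Rademacher variables with $\sum_i \hatf(i)^2 = 1$. First I would invoke a Berry--Ess\'een-type estimate: if $\cnorm{4}{\hatf}^2 = \sum_i \hatf(i)^4 \le \delta^2$, then the distribution of $f(x)$ is close (in CDF / L\'evy distance, hence in any bounded-test-function sense and, with a truncation argument, in $r$-th moment for $r < 2$) to that of a standard Gaussian $g \sim \gaussian{0}{1}$. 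Concretely, the third absolute moment of each summand is controlled by $\max_i |\hatf(i)|$, and $\max_i |\hatf(i)| \le \cnorm{4}{\hatf} \le \delta$, so Berry--Ess\'een gives Kolmogorov distance $O(\delta)$ between $f(x)$ and $g$.

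The second step is to convert closeness of the distribution functions into closeness of the $r$-th moments for $1 \le r < 2$. Here I would use the standard layer-cake / integration-by-parts formula $\Ex{|f(x)|^r} = \int_0^\infty r t^{r-1} \Pr{|f(x)| > t}\, dt$, and similarly for $g$; the difference of the two tail functions is bounded pointwise by $O(\delta)$ and the weight $r t^{r-1}$ is integrable against that uniform bound only up to a cutoff, so one splits the integral at some $T$, uses the uniform $O(\delta)$ bound on $[0,T]$ contributing $O(\delta T^r)$, and uses that both $|f|$ and $|g|$ have bounded (say) fourth moments — for $f$ this is again controlled by $\cnorm{4}{\hatf}$ and Parseval — so the tail beyond $T$ contributes $O(T^{r-4})$. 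Optimizing $T$ yields $\bigl|\Ex{|f(x)|^r} - \Ex{|g|^r}\bigr| = \bigl| \enorm{r}{f}^r - \gamma_r^r \bigr| = o_\delta(1)$ as $\delta \to 0$. Taking $r$-th roots and recalling the normalization gives $\enorm{r}{f} \le (\gamma_r + \varepsilon)\cnorm{2}{\hatf}$ once $\delta$ is chosen small enough as a function of $\varepsilon$ (and $r$), which is exactly the contrapositive of the claim.

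The main obstacle I anticipate is the moment-transfer step rather than the Berry--Ess\'een step: Berry--Ess\'een only controls the CDF uniformly, and for $r < 2$ the $r$-th moment is genuinely sensitive to the tails, so one must supply an independent a priori bound on a higher moment of $f(x)$ to kill the contribution of the tail region. The clean way to get this is to note $\Ex{f(x)^4} = 3\bigl(\sum_i \hatf(i)^2\bigr)^2 - 2\sum_i \hatf(i)^4 \le 3\cnorm{2}{\hatf}^4$ (a direct expansion using independence and $\Ex{x_i^2}=1$, $\Ex{x_i^4}=1$), which is bounded under our normalization regardless of $\delta$; this hypercontractive-type fourth-moment bound is what makes the truncation argument go through. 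A minor point to be careful about is uniformity in $r$ over $[1, 2)$: the implied constants blow up as $r \to 2^-$ only through $\gamma_r$ and the exponent in $T^{r-4}$, both of which stay bounded away from degeneracy for $r$ in any fixed compact subinterval, and the lemma is stated for a fixed $r$, so this causes no real trouble. I would also remark that the same computation shows the converse direction is not needed here — we only need the one-sided estimate — which simplifies the bookkeeping.
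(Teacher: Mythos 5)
Your proposal is correct and follows the same high-level blueprint as the paper: argue by contrapositive, invoke Berry--Ess\'een to get Kolmogorov distance $O(\delta)$ between $f(x)$ and a standard Gaussian (using that $\max_i |\hatf(i)| \le \cnorm{4}{\hatf} \le \delta$ controls the Lyapunov ratio once $\cnorm{2}{\hatf}=1$), and then transfer this CDF closeness to closeness of $r$-th absolute moments via the layer-cake formula and a truncation at some cutoff $T$. The one place you genuinely diverge is the tail estimate beyond the cutoff: you use the exact fourth-moment computation $\Ex{f^4} = 3\cnorm{2}{\hatf}^4 - 2\cnorm{4}{\hatf}^4 \le 3$ for a linear form in Rademachers, and then Markov's inequality to get a polynomial tail $\Pr{|f|>t} \le 3/t^4$, optimizing to $T \asymp \delta^{-1/4}$ and an overall error $O(\delta^{1-r/4})$. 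The paper instead bounds the tail via Hoeffding's sub-Gaussian inequality $\Pr{|\sum X_i| \ge t} < 2e^{-2t^2}$ (packaged as a standalone lemma about sums of bounded random variables, so it applies beyond the Rademacher case), chooses the cutoff $a = \gamma_p\sqrt{\log(1/\delta)}$, and gets the sharper error $O(\delta^2 (\log(1/\delta))^{r/2})$ — it also squeezes a $\delta^2$ rather than a $\delta$ out of the Berry--Ess\'een step by bounding $\sum_i|\hatf(i)|^3 \le \cnorm{4}{\hatf}^2\cnorm{2}{\hatf}$ via Cauchy--Schwarz rather than via the sup bound. Your fourth-moment route is more elementary and self-contained for the specific setting of the lemma (you avoid quoting Hoeffding and incomplete gamma estimates), at the cost of a weaker quantitative dependence of $\delta$ on $\varepsilon$; since the lemma only needs \emph{some} $\delta(\varepsilon)>0$, this is not a defect. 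Both proofs go through, and your identification of the tail control as the crux of the argument is exactly right.
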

Before we can prove the above, we state and prove an implication of Berry-\Esseen~estimate for fractional moments
(similar to Lemma 3.3 of~\cite{GRSW16}, see also~\cite{KNS10}).
\begin{lemma}\label{lem:clt-analytic-sparse}
  There exist universal constants $c>0$ and $\delta_0>0$ such that the following statement is true.  If
  $X_1,\ldots,X_n$ are bounded independent random variables with $\abs{X_i}\le 1$, $\Ex{X_i}=0$ for
  $i\in[n]$, and $\sum_{i\in[n]}\Ex{X_i^2}=1$, $\sum_{i\in[n]}\Ex{\abs{X_i}^3}\leq \delta$ for some $0 <
  \delta < \delta_0$, then for every $p\geq 1$:
    \[
        \inparen{\Ex{\abs{\sum_{j=1}^n X_j}^p}}^{\frac{1}{p}} \leq \gamma_p\cdot
        \inparen{1+c\delta\inparen{\log\inparen{\nfrac{1}{\delta}}}^{\frac{p}{2}}}.
    \]
\end{lemma}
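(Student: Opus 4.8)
The target statement, Lemma~\ref{lem:clt-analytic-sparse}, is a fractional-moment version of the Berry--\Esseen{} theorem: for a sum $S=\sum_j X_j$ of bounded independent mean-zero random variables with unit variance and small third-moment mass $\delta$, its $p$-th moment is within a factor $1+O(\delta(\log(1/\delta))^{p/2})$ of the corresponding Gaussian moment $\gamma_p^p$. The strategy I would follow is the standard one of passing through the CDF: classical Berry--\Esseen{} gives $\sup_t |\,\Pr[S\le t]-\Phi(t)\,| \le c_0\,\delta$ for an absolute constant $c_0$, where $\Phi$ is the standard normal CDF, and I would integrate this pointwise bound against $|t|^{p-1}$, being careful that the contribution from large $|t|$ (where $|t|^{p-1}$ is large but the CDF difference is tiny because both tails decay) is controlled by a truncation at level $T\sim\sqrt{\log(1/\delta)}$.

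\textbf{Key steps.} First I would write, for any nonnegative random variable $Y$, the layer-cake identity $\Ex{Y^p}=p\int_0^\infty t^{p-1}\Pr[Y>t]\,dt$, apply it to $Y=|S|$ and to $Y=|Z|$ for $Z\sim\gaussian{0}{1}$, and subtract, obtaining
\[
\Ex{|S|^p}-\gamma_p^p \;=\; p\int_0^\infty t^{p-1}\bigl(\Pr[|S|>t]-\Pr[|Z|>t]\bigr)\,dt.
\]
Second, I would split the integral at $T:=\sqrt{2\log(1/\delta)}$ (or a suitable constant multiple). On $[0,T]$ I bound $|\Pr[|S|>t]-\Pr[|Z|>t]|\le 2c_0\delta$ by Berry--\Esseen, so this part contributes at most $2c_0\delta\cdot p\int_0^T t^{p-1}\,dt = 2c_0\delta\,T^p = 2c_0\delta\,(2\log(1/\delta))^{p/2}$, which is exactly of the claimed form. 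Third, on $[T,\infty)$ I would bound each of $\Pr[|S|>t]$ and $\Pr[|Z|>t]$ separately: the Gaussian tail gives $\Pr[|Z|>t]\le e^{-t^2/2}$, so $p\int_T^\infty t^{p-1}e^{-t^2/2}\,dt$ is a Gaussian incomplete-moment tail, which for $T\sim\sqrt{\log(1/\delta)}$ is $\lesssim \delta\cdot\operatorname{poly}(\log(1/\delta))$ — in fact $\le \delta\cdot(\log(1/\delta))^{O(p)}$, comfortably dominated by the main term after adjusting $c$. For the tail of $S$ itself I would invoke a Bernstein/Hoeffding-type inequality (using $|X_i|\le 1$ and $\sum\Ex{X_i^2}=1$), giving $\Pr[|S|>t]\le 2e^{-t^2/(2(1+t/3))}$, whose contribution to $p\int_T^\infty t^{p-1}(\cdot)\,dt$ is likewise $\le \delta\cdot(\log(1/\delta))^{O(p)}$ for the chosen $T$. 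Collecting the three pieces gives $\Ex{|S|^p}\le \gamma_p^p\bigl(1+c'\delta(\log(1/\delta))^{p/2}\bigr)$ for a suitable $c'$ and all $\delta<\delta_0$, and taking $p$-th roots and using $(1+x)^{1/p}\le 1+x/p\le 1+x$ yields the stated bound with constant $c=c'$.

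\textbf{Main obstacle.} The delicate point is getting the exponent on the logarithm to come out as $p/2$ rather than something larger: the $[0,T]$ piece contributes precisely $T^p=(2\log(1/\delta))^{p/2}$, so I must choose $T=\Theta(\sqrt{\log(1/\delta)})$, and then I have to verify that with this (not larger) choice of $T$ the two genuine tail integrals on $[T,\infty)$ are still $o(\delta(\log(1/\delta))^{p/2})$ or at worst $O(\delta(\log(1/\delta))^{p/2})$ — i.e. that $\delta\cdot T^p\cdot e^{-T^2/2}$-type quantities are dominated. Since $e^{-T^2/2}=e^{-\log(1/\delta)}=\delta$ with $T=\sqrt{2\log(1/\delta)}$, the Gaussian tail piece is actually of order $\delta^2\cdot\operatorname{poly}(\log(1/\delta))\ll \delta(\log(1/\delta))^{p/2}$, which is fine; the $S$-tail piece needs the slightly weaker Bernstein bound but the same bookkeeping works because $t/3$ in the denominator only costs a polynomial-in-$\log(1/\delta)$ factor. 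A secondary nuisance is that $p$ may be non-integer and possibly large relative to $\log(1/\delta)$; the bound is only meaningful (better than trivial) in the regime where $\delta(\log(1/\delta))^{p/2}$ is small, and for the application (dictatorship testing with $1\le r<2$) $p=r$ is bounded, so I would not worry about uniformity in $p\to\infty$ beyond noting the constant $c$ may absorb a mild $p$-dependence, or simply state the lemma for $p$ in a bounded range as is actually used.
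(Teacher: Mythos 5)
Your plan matches the paper's proof in structure: both use the layer-cake formula $\Ex{\abs{S}^p} = p\int_0^\infty u^{p-1}\Pr{\abs{S}\ge u}\,du$, split at a threshold of order $\sqrt{\log(1/\delta)}$ (the paper takes $a=\gamma_p\sqrt{\log(1/\delta)}$), apply Berry--\Esseen\ on the near piece to get the $\delta\,a^p$ contribution, and control the far piece by a concentration inequality plus a bound on the incomplete Gaussian moment. Your framing as a difference $\Ex{\abs{S}^p}-\gamma_p^p$ is cosmetically different from the paper's direct decomposition but equivalent.

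There is, however, a genuine quantitative gap in your treatment of the $S$-tail on $[T,\infty)$. You invoke Bernstein, $\Pr{\abs{S}>t}\le 2\exp\inparen{-t^2/(2(1+t/3))}$, and assert that the $t/3$ term ``only costs a polynomial-in-$\log(1/\delta)$ factor.'' That is false at the scale $t\sim T=\Theta\inparen{\sqrt{\log(1/\delta)}}$: the ratio $e^{-t^2/2}/e^{-t^2/(2(1+t/3))} = \exp\inparen{-\tfrac{t^2}{2}\cdot\tfrac{t/3}{1+t/3}}$, so the Bernstein bound is weaker than the pure Gaussian bound by a factor $\exp\inparen{\Theta(t^3)}$ when $t\lesssim 1$ and by $\exp\inparen{\Theta(t)}$ when $t\gtrsim 1$; at $t\sim\sqrt{\log(1/\delta)}$ this is $\exp\inparen{\Theta\inparen{\sqrt{\log(1/\delta)}}}$ or worse, which is super-polynomial in $\log(1/\delta)$. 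Concretely, the resulting far-tail integral is of order $e^{-\Theta(\sqrt{\log(1/\delta)})}$, which is much larger than the target $\delta\cdot(\log(1/\delta))^{O(p)}$. What the argument actually needs is a genuinely sub-Gaussian tail $\Pr{\abs{S}>t}\le 2e^{-ct^2}$ for a constant $c$; the paper supplies this via Hoeffding (their ``$2e^{-2t^2}$''), which, while not literally implied by the hypotheses $\abs{X_i}\le 1$, $\sum\Ex{X_i^2}=1$ in full generality, does hold in the Rademacher-type application (each $X_i=\pm\abs{y_i}$, so the squared ranges sum to $4\sum y_i^2=4$). To close your version, you should either invoke the sub-Gaussian Hoeffding bound under the additional structure present in the application, or add a truncation step on the $X_i$'s and argue that the contribution of the large values is controlled by $\sum\Ex{\abs{X_i}^3}\le\delta$; Bernstein alone at this truncation level does not suffice.
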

\begin{proof}
    The proof is almost similar to that of Lemma 2.1 of \cite{KNS10}.
    From Berry-\Esseen theorem (see \cite{Beek72} for the constant), we get that:
    \[
        \Pr{\abs{\sum_{i=1}^nX_i}\geq u} \leq \Pr{\abs{g}\geq u} +
        2\sum_{i=1}^n\Ex{\abs{X_i}^3} \leq \Pr{\abs{g}\geq u} + 2\delta\mcom
    \] for every $u>0$ and where $g \sim \gaussian{0}{1}$.
    By Hoeffding's lemma, 
\[
\Pr{\abs{\sum_{i\in[n]} X_i}\ge t} < 2 \ee^{-2t^2}
\]
    for every $t>0$.
    Combining the above observations, we get:
    \begin{align*}
        \Ex{\abs{\sum_{i=1}^n X_i}^p} & = \int_{0}^{\infty} p u^{p-1} 
            \Pr{\abs{\sum_{i=1}^nX_i}\geq u}du \\ 
        & \leq \int_{0}^{a} pu^{p-1}\Pr{\abs{g}>u} du + 2\delta a^p +
            2\int_{a}^{\infty} pu^{p-1} \ee^{-2u^2} du\\
        & = \sqrt{\frac{2}{\pi}}\int_{0}^{a} u^p \ee^{-\nfrac{u^2}{2}}du + 2\delta
        a^p + \frac{2p}{2^{\frac{p-1}{2}}}\int_{2a^2}^{\infty} z^{\frac{p+1}{2}-1} \ee^{-z} dz\\
        & =\gamma_p^p -\sqrt{\frac{2}{\pi}}\int_{a}^{\infty} u^p\ee^{-\nfrac{u^2}{2}}du + 2\delta a^p
        +\Gamma\inparen{\frac{p+1}{2},2a^2} ~ \mcom
    \end{align*}
    where $\Gamma(\cdot,\cdot)$ is the upper incomplete gamma function and $a$ is a large constant determined later depending on $\delta$ and $p$.
	The second term is bounded as 
    \begin{align*}
        \int_{a}^{\infty}u^p\ee^{-\nfrac{u^2}{2}}du = a^{p-1}
        \ee^{-\nfrac{a^2}{2}} + (p-1) \int_{a}^{\infty}
        u^{p-2}\ee^{-\nfrac{u^2}{2}} du \leq a^{p-1}
        \ee^{-\nfrac{a^2}{2}} + \frac{p-1}{a^2} \int_{a}^{\infty}
        u^{p}\ee^{-\nfrac{u^2}{2}} du \mper
    \end{align*}
    Hence $\int_{a}^{\infty}u^p\ee^{-\nfrac{u^2}{2}}du \leq
    \frac{a^{p+1}e^{-\nfrac{a^2}{2}}}{1+a^2-p}$.
    
    We know, $\Gamma(\nfrac{p+1}{2},x)\to x^{\frac{p-1}{2}}\ee^{-x}$ as $x\to\infty$.
    We choose $a=\gamma_p \sqrt{\log\frac{1}{\delta}}$. Hence, there exists $\delta_0$
    so that for all small enough
    $\delta<\delta_0$, we have $\Gamma(\nfrac{p+1}{2},2a^2) \sim 2^{\frac{p-1}{2}}a^{p-1}
    \delta^{2\gamma_p^2}\ll \delta a^p$ where the last inequality follows from 
    the fact that $2\gamma_p^2 > 1$ (as $p>1$).
    Putting all this together, we get:
    \[
        2\delta a^p +\Gamma\inparen{\frac{p+1}{2},2a^2}-
        \sqrt{\frac{2}{\pi}}\int_{a}^{\infty}u^p\ee^{-\nfrac{u^2}{2}}du 
        \ll 3\delta a^p -\sqrt{\frac{2}{\pi}}\frac{a^{p+1}e^{-\nfrac{a^2}{2}}}{1+a^2-p}
        \leq
        c\gamma_p^p \delta \inparen{\log{\frac{1}{\delta}}}^{\nfrac{p}{2}}\mcom
    \]
    where $c$ is an absolute constant independent of $a$ and $p$.
    This completes the proof of the lemma.
\end{proof}

Finally, we prove \cref{lem:dict}:
\begin{proofof}{\cref{lem:dict}}
    We will prove this lemma by the method of contradiction.
    Let us assume $\cnorm{4}{\hatf} \leq \delta\cnorm{2}{\hatf}$, for $\delta$ to be
    fixed later.

    Let us define $y_i\defeq \frac{\hatf(i)}{\cnorm{2}{\hatf}}$.
    Then, for all $x\in\on^R$,  
    \[
        g(x)\defeq \sum_{i\in[n]} x_i\cdot y_i = \frac{f(x)}{\cnorm{2}{\hatf}}\mper
    \]
    Let $Z_i=x_i\cdot y_i$ be the random variable when $x_i$ is independently uniformly
    randomly chosen from $\on$. Now
    \[
        \sum_{i\in[n]}\Ex{Z_i^2}=\sum_{i\in[n]}\frac{\hatf(i)^2}{\cnorm{2}{\hatf}^2}=1\mper
    \]
    and
    \[
        \sum_{i\in[n]}\Ex{\abs{Z_i}^3} = \sum_{i\in[n]}
        \frac{\abs{\hatf(i)}^3}{\cnorm{2}{\hatf}^3}= \sum_{i\in[n]}
        \frac{\abs{\hatf(i)}^2}{\cnorm{2}{\hatf}^2}\cdot \frac{\abs{\hatf(i)}}{\cnorm{2}{\hatf}}
        \leq \frac{\cnorm{4}{\hatf}^2}{\cnorm{2}{\hatf}^2} \leq \delta^2 \mcom
    \]
    where the penultimate inequality follows from Cauchy-Schwarz ineqality.

    Hence, by applying \cref{lem:clt-analytic-sparse}
    on the random variables $Z_1,\ldots, Z_n$, we get:
    \begin{align*}
        \frac{\enorm{r}{f}}{\cnorm{2}{\hatf}}=
        \enorm{r}{g} & = \inparen{ \Ex{x\in \on^n} {\abs{g(x)}^r}}^\frac{1}{r}\\
        & = \inparen{ \Ex{x\in \on^n} {\abs{\sum_{i\in[n]}Z_i}^r}}^\frac{1}{r}\\
        & \leq \gamma_r\inparen{1+c\delta^2 \inparen{\log{\frac{1}{\delta}}}^{r}}\\
    \end{align*}
    We choose $\delta>0$ small enough (since $1 \leq r < 2$, setting $\delta<
    \frac{\sqrt{\varepsilon}}{\min(\delta_0,\sqrt{\gamma_2}\log\frac{c\gamma_2}{\varepsilon})}
= \frac{\sqrt{\varepsilon}}{\min(\delta_0, \log\frac{c}{\varepsilon})}$
    suffices) so that $\delta^2(\log\frac{1}{\delta})^r <
    \frac{\varepsilon}{c\gamma_r}$. For this choise of $\delta$, we get:
    $\enorm{r}{f}\leq (\gamma_r+\varepsilon)\cnorm{2}{\hatf}$ -- a contradiction.
    And hence the proof follows.
\end{proofof}

\subsubsection{Starting Point of Hardness Reduction: Smooth Label Cover}
An instance of \labelcover~is given by a quadruple $\calL = (G, [R], [L], \Sigma)$ that consists of a
regular connected graph $G = (V, E)$, a label set $[R]$ for some positive integer $n$, and a collection
$\Sigma = ((\pi_{e, v}, \pi_{e, w}) : e = (v, w) \in E)$ of pairs of maps both from $[R]$ to $[L]$
associated with the endpoints of the edges in $E$. Given a {\em labeling} $\ell : V \to [R]$, we say that
an edge $e = (v, w) \in E$ is {\em satisfied } if $\pi_{e, v}(\ell(v)) = \pi_{e, w}(\ell(w))$. Let
$\opt(\calL)$ be the maximum fraction of satisfied edges by any labeling. 

The following hardness result for \labelcover, given in~\cite{GRSW16}, is a slight variant of the original construction due to~\cite{Khot02}. The theorem also describes the various structural properties, including smoothness, that are identified by the hard instances.
\begin{theorem}\label{thm:smooth_label_cover}
  For any $\xi >0$ and $J \in \N$, there exist positive integers $R = R(\xi, J), L = L(\xi, J)$ and $D =
  D(\xi)$, and a \labelcover instance $(G, [R], [L], \Sigma)$ as above such that
  \begin{itemize}
    \item (Hardness): It is NP-hard to distinguish between the following two cases:
      \begin{itemize}
        \item (Completeness): $\opt(\calL) = 1$. 
        \item (Soundness): $\opt(\calL) \leq \xi$.
      \end{itemize}
    \item (Structural Properties): 
      \begin{itemize}
        \item ($J$-Smoothness): For every vertex $v \in V$ and distinct $i, j \in [R]$, we have 
          \[
            \Pr{e : v \in e}{\pi_{e,v}(i)=\pi_{e,v}(j)} \leq 1 / J.
          \]
        \item ($D$-to-$1$): For every vertex $v \in V$, edge $e \in E$ incident on $v$, and $i \in [L]$,
          we have $|\pi^{-1}_{e, v}(i)| \leq D$; that is at most $D$ elements in $[R]$ are mapped to the
          same element in $[L]$. 
        \item (Weak Expansion): For any $\delta > 0$ and vertex set $V' \subseteq V$ such that $|V'| =
          \delta \cdot |V|$, the number of edges among the vertices in $|V'|$ is at least $(\delta^2 / 2)|E|$.
      \end{itemize}
  \end{itemize} 
\end{theorem}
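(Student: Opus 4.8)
The plan is to produce the instance by the standard smooth \labelcover construction of \cite{Khot02}, in the form recorded by \cite{GRSW16}, obtained in three stages: a base two‑prover game from a bounded‑occurrence hard CSP, parallel repetition to push the soundness below $\xi$, and a ``coordinate‑dropping'' smoothing gadget parametrized by $J$.

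First I would fix a \emph{gap-$3$-SAT-$5$} instance $\phi$ --- a $3$-CNF formula in which every variable occurs in exactly five clauses --- for which, by the PCP theorem composed with an expander‑based degree regularization, it is \NP-hard to distinguish $\phi$ satisfiable from the case in which no assignment satisfies more than a $(1-\eps_0)$-fraction of the clauses, for an absolute constant $\eps_0 > 0$. From $\phi$ build the $T$-fold clause--variable game $L_0$: one side (the $A$-side) has a vertex for every $T$-tuple of clauses, with label set all ways of choosing a satisfying assignment of each of the $T$ clauses (size $R_0 \le 7^T$); the other side has a vertex for every $T$-tuple consisting of one variable from each of $T$ clauses, with label set the assignments to those variables; an edge checks that the forwarded assignment and the forwarded values agree coordinatewise. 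Completeness is clear: a satisfying assignment of $\phi$ induces a labeling satisfying every edge, so $\opt(L_0) = 1$. For soundness, the single ($T=1$) clause--variable game has value at most $1 - \eps_0/3$ in the NO case (any strategy of the variable prover is essentially a global assignment $\sigma$; for a clause falsified by $\sigma$ the clause prover must disagree with $\sigma$ on at least one of the three variables and is caught with probability at least $1/3$ over the forwarded variable); by the Parallel Repetition Theorem, $\opt(L_0) \le (1 - \mathrm{poly}(\eps_0))^{\Omega(T)}$, so choosing $T = T(\xi)$ large enough gives $\opt(L_0) \le \xi$. The projection maps of $L_0$ are $d_0^T$-to-$1$ with $d_0$ a fixed constant (at most $4$ satisfying assignments of a clause extend a fixed value of one of its variables), and $L_0$ can be taken bi-regular.

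Next I would compose $L_0$ with the coordinate‑dropping smoothing parametrized by the given $J$: a new $A$-vertex is a $J$-tuple $(u_1, \dots, u_J)$ of $A$-vertices of $L_0$, with label set the $J$-fold product label set; a new $B$-vertex is $(t, u_1, \dots, u_{t-1}, w_t, u_{t+1}, \dots, u_J)$ with $t \in [J]$ and $w_t$ a neighbor of $u_t$ in $L_0$; the new projection acts as the identity on the coordinates $\ne t$ and as the old projection $\pi^{L_0}_{(u_t, w_t)}$ on coordinate $t$. Unwinding the definitions: completeness stays $1$; the value of the new game equals $\opt(L_0)$, since the new $B$-prover can read off the $L_0$-labels of the visible coordinates $u_i$ ($i \ne t$), so only the hidden coordinate $t$ is actually tested and a random new edge induces a uniformly random edge of $L_0$ there; hence the soundness remains $\le \xi$. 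Each new projection is $d_0^T$-to-$1$, so $D := d_0^T = D(\xi)$ depends on $\xi$ alone, while $R$ and $L$ are $R_0^J$ and $L_0^J$ up to uniformization, i.e.\ functions of $\xi$ and $J$. The $J$-smoothness holds because two distinct new $A$-labels differ in a nonempty set of coordinates $S$, and a random new edge forgets nothing on coordinates $\ne t$, so it can confuse the two labels only when $S = \{t\}$, an event of probability at most $1/J$; smoothness at $B$-vertices is immediate since there the projection is essentially injective.

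Finally, to secure weak expansion --- which cannot hold for a bipartite constraint graph, since a whole side is then an independent set --- I would, as a standard last step, arrange that the constraint graph is a good spectral expander, for instance by replacing each vertex with an expander ``cloud'' and distributing each original edge as a perfect matching between the corresponding clouds, inheriting the \labelcover data, and then invoke the Expander Mixing Lemma to get at least $(\delta^2/2)|E|$ edges inside any $V'$ of density $\delta$; one checks this transformation leaves completeness, soundness, the $D$-to-$1$ bound, and $J$-smoothness intact. The PCP theorem and parallel repetition enter only as cited black boxes; the real work --- and the step I expect to be fiddliest --- is the simultaneous bookkeeping, namely arranging all four structural properties at once with exactly the stated parameter dependencies: in particular making the graph expanding without harming smoothness, and keeping the fiber bound $D$ a function of $\xi$ only as $J \to \infty$ (which is precisely why smoothness must be introduced by the coordinate‑dropping gadget rather than by additional repetition).
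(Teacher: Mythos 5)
The paper does not prove this theorem: it cites it as a known result from \cite{GRSW16} (a variant of \cite{Khot02}) with the weak-expansion property added (which first appears in the literature in \cite{BRS15}). So there is no in-paper proof to compare against, and your task was really to reconstruct the standard construction. Your first three stages --- gap-$3$-SAT-$5$ composed with degree regularization, the $T$-fold clause--variable projection game with Raz's parallel repetition, and the coordinate-dropping smoothing gadget indexed by $J$ --- are exactly the standard route, and your bookkeeping there is correct: completeness stays $1$, soundness stays $\leq \xi$ (the extra visible coordinates only help the $B$-prover by revealing information she could already infer, so the value of the smoothed game equals $\opt(L_0)$), the projection on the hidden coordinate is $4^T$-to-$1$ so $D = D(\xi)$, and the smoothness argument (two distinct $A$-labels that are merged must differ in a single coordinate which must be the hidden one, probability $\leq 1/J$ under bi-regularity) is right.

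The weak-expansion step, however, is where your proof has a genuine gap. You correctly notice that a bipartite constraint graph cannot satisfy the stated property, but the fix you propose --- expander clouds plus the expander mixing lemma --- does not deliver the claimed bound uniformly in $\delta$. For a $d$-regular graph with second eigenvalue $\lambda$, the mixing lemma gives $e(V') \geq \tfrac{\delta n}{2}(d\delta - \lambda)$, while the target $(\delta^2/2)|E| = d\delta^2 n/4$ forces $\lambda \leq d\delta/2$; for a \emph{fixed} expander the ratio $\lambda/d$ is bounded below by $\Omega(1/\sqrt{d})$ (Alon--Boppana), so the inequality fails once $\delta$ is small compared to $1/\sqrt{d}$, and the degree of the constraint graph in this construction is a constant depending only on $T$ and $J$, not on the size of the $3$-SAT instance. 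In addition, the intra-cloud edges you introduce to make the graph non-bipartite need \labelcover constraints of their own, and you do not say what they are; giving them trivial (always-satisfied) constraints dilutes soundness by the fraction of intra-cloud edges, and you would have to argue that this fraction can be kept small \emph{and} that the resulting sparse graph is still expanding enough --- two requirements in tension with each other. (There is also a purely formal issue that the weak-expansion property as written cannot literally hold for $|V'| = 1$ in any nonempty graph; the intended reading, and the way it is applied downstream in \cite{BRS15}, is for $\delta$ bounded away from zero. But even granting that, your mixing argument needs the spectral gap to scale with $\delta$.) The correct way to obtain weak expansion in this line of work is structural rather than spectral --- it exploits the regular, high-multiplicity structure of the constraint graph after identification of the two sides, not a post-hoc expander replacement --- and you would need a different argument here.
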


\subsection{Preliminaries for Algorithm}\label{sec:p_to_q:prelims:algo}
For $p\ge 2\ge q\ge 1$, we will use the following notation: $a\defeq p^*-1$ and $b\defeq q-1$. We note that
$a,b\in [0,1]$.

For an $m\times n$ matrix $M$ (or vector, when $n=1$). For a unitary function $f$, we define $f[M]$ to 
be the matrix $M$ with entries defined as $(f[M])_{i,j}=f(M_{i,j})$ for $i\in[m],j\in[n]$. For vectors 
$u,v\in\R^{\ell}$, we denote by $u\circ v\in\R^{\ell}$ the entry-wise/Hadamard product of $u$ and $v$. 
We denote the concatenation of two vectors $u$ and $v$ by $u\oplus v$. For a vector $u$, we 
use $\Diag{u}$ to denote the diagonal matrix with the entries of $u$ forming the diagonal, and for 
a matrix $M$ we use $\diag(M)$ to denote the vector of diagonal entries. 

For a function $f(\tau)=\sum_{k\ge 0} f_k\cdot \tau^k$ defined as a power series, we denote the function
$\absolute{f}(\tau)\defeq \sum_{k\ge 0}\abs{f_k}\cdot \tau^k$. 

Next, we recall the relaxation due to Nesterov~\cite{Nesterov98}:
\begin{figure}[ht]\label{fig:convex}
\hrule
\vline
\begin{minipage}[t]{0.99\linewidth}
\vspace{-5 pt}
{\small
\begin{align*}
    \mbox{maximize} \quad&~~\sum_{i,j} A_{i,j}\cdot \mysmalldot{u^i}{v^j} ~~=~~  \ip{A, U V^T}\\
    \mbox{subject to}    \quad&~~\sum_{i\in [m]} \norm{2}{u^i}^{q^*} ~\leq~ 1 & \forall i\in [m]\\
    &~~\sum_{j\in [n]} \norm{2}{v^j}^{p} ~\leq~ 1 & \forall j\in [n]\\
    &~~ u^i, v^j\in \R^{m+n} & \forall i\in [m], j\in [n] \\
    u^i \text{ (resp. $v^j$) is the} & \text{ $i$-th (resp. $j$-th) row of $U$ (resp. $V$)} 
\end{align*}
}
\vspace{-14 pt}
\end{minipage}
\hfill\vline
\hrule
\caption{The relaxation $\CP{A}$ for approximating  \onorm{p}{q} of a matrix $A \in \R^{m \times n}$.} 
\end{figure}
Note that since $q^* \geq 2$ and $p \geq 2$, the above program is convex in the entries of the Gram
matrix of the vectors $\inbraces{u^i}_{i \in [m]}\cup  \inbraces{v^j}_{j \in [n]}$.

Next, we collect some basic ideas of Hermite Analysis that are relevant to analysis of our algorithm.
\subsubsection[Hermite Preliminaries]{Hermite Analysis Preliminaries}
Let $\gamma$ denote the standard Gaussian probability distribution. 
For this section (and only for this section), the (Gaussian) inner product for functions
$f,h\in (\R,\gamma) \rightarrow \R$ is defined as 
\[
    \mysmalldot{f}{h}:=\int_{\R} f(\tau)\cdot h(\tau)\,\,d\gamma(\tau)
    =\Ex{\tau\sim \gaussian{0}{1}}{f(\tau)\cdot h(\tau)}\mper
\]
Under this inner product there is a complete set of orthonormal polynomials $(H_k)_{k\in \N}$ defined below. 
\begin{definition}
    \label[def-hermite]{def:hermite}
    For a natural number $k$, then the $k$-th \emph{Hermite} polynomial $H_k: \R\to\R$
    \[
        H_k(\tau)=\frac{1}{\sqrt{k!}}\cdot (-1)^{\,k} \cdot \ee^{\tau^2/2}\cdot 
        \frac{d^k}{d\tau^k}\,\ee^{-\tau^2/2} \mper
    \]
\end{definition}
\noindent
Any function $f$ satisfying $\int_{\R}|f(\tau)|^{2}\,d\gamma(\tau)<\infty$ has a Hermite expansion given by 
$
    f = \sum_{k\geq 0} \widehat{f}_{k} \cdot H_{k}
$
where 
$
    \widehat{f}_k = \mysmalldot{f}{H_k} \mper
$
\medskip

\noindent
We have 
\begin{fact}
\label[fact]{even:odd:hermite}
    $H_k(\tau)$ is an even (resp. odd) function when $k$ is even (resp. odd). 
\end{fact}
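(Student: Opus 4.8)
The plan is to establish the stronger identity $H_k(-\tau) = (-1)^k H_k(\tau)$, from which the even/odd statement is immediate. I would work directly from the Rodrigues-type formula of \cref{def:hermite}, writing $H_k(\tau) = \tfrac{(-1)^k}{\sqrt{k!}}\, e^{\tau^2/2}\, g^{(k)}(\tau)$ with $g(\tau) := e^{-\tau^2/2}$ and $g^{(k)}$ denoting its $k$-th derivative.

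First I would record the auxiliary claim that the derivatives of $g$ alternate in parity: $g^{(k)}(-\tau) = (-1)^k g^{(k)}(\tau)$ for every $k \geq 0$. This follows by induction on $k$. The base case $k=0$ is just that $g$ is an even function. For the inductive step, differentiating the identity $g^{(k)}(-\tau) = (-1)^k g^{(k)}(\tau)$ with respect to $\tau$ (using the chain rule on the left-hand side) gives $-g^{(k+1)}(-\tau) = (-1)^k g^{(k+1)}(\tau)$, i.e.\ $g^{(k+1)}(-\tau) = (-1)^{k+1} g^{(k+1)}(\tau)$, which is exactly the claim for $k+1$.

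Then I would substitute $\tau \mapsto -\tau$ into the formula for $H_k$. Since $e^{(-\tau)^2/2} = e^{\tau^2/2}$, and, by the auxiliary claim, $g^{(k)}(-\tau) = (-1)^k g^{(k)}(\tau)$, one obtains
\[
H_k(-\tau) \;=\; \frac{(-1)^k}{\sqrt{k!}}\, e^{\tau^2/2}\, (-1)^k g^{(k)}(\tau) \;=\; \frac{(-1)^{2k}}{\sqrt{k!}}\, e^{\tau^2/2}\, g^{(k)}(\tau) \;=\; (-1)^k\, H_k(\tau),
\]
where the final equality simply re-expresses $\tfrac{1}{\sqrt{k!}} e^{\tau^2/2} g^{(k)}(\tau)$ as $(-1)^{-k} H_k(\tau) = (-1)^k H_k(\tau)$. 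Hence $H_k$ is even when $k$ is even and odd when $k$ is odd, as claimed. There is essentially no serious obstacle here; the only point requiring a moment's care is the sign bookkeeping — the chain-rule sign in the inductive step and the collapse of $(-1)^{2k}$ against the $(-1)^k$ prefactor of $H_k$. As an alternative route, one could instead induct using the standard three-term recurrence $\sqrt{k+1}\,H_{k+1}(\tau) = \tau H_k(\tau) - \sqrt{k}\,H_{k-1}(\tau)$ with base cases $H_0 \equiv 1$ and $H_1(\tau) = \tau$, observing that $\tau H_k$ and $\sqrt{k}\,H_{k-1}$ both have parity $k+1 \pmod 2$; but the computation above straight from the definition is the most direct.
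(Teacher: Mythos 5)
Your argument is correct. Note that the paper states this as a \emph{Fact} with no proof supplied — it is treated as a standard, well-known property of Hermite polynomials — so there is no paper proof to compare against. Your derivation straight from the Rodrigues formula in \cref{def:hermite} is the natural elementary one: the induction establishing $g^{(k)}(-\tau) = (-1)^k g^{(k)}(\tau)$ for $g(\tau) = e^{-\tau^2/2}$ is sound (the chain-rule sign in the inductive step is handled correctly), and the final substitution, including the collapse of $(-1)^{2k}$ against the $(-1)^k$ prefactor, is right. The recurrence-based alternative you sketch would also work; either would be acceptable as a supplied proof of the Fact.
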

We also have the Plancherel Identity (as Hermite polynomials form an orthonormal basis):
\begin{fact}
    \label[fact]{plancherel-identity}
    For two real valued functions $f$ and $h$ with Hermite coefficients $\widehat{f}_k$ and
    $\widehat{h}_k$, respectively, we have:
    \[
        \mysmalldot{f}{h} = \sum_{k\geq 0} \widehat{f}_k\cdot \widehat{h}_k \mper
    \]
\end{fact}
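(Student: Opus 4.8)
The plan is to obtain the identity directly from the two properties of the Hermite system $(H_k)_{k\ge 0}$ recorded above: orthonormality with respect to the Gaussian inner product, and completeness in $L^2(\R,\gamma)$ (the latter is exactly the statement, quoted in the preamble, that every $f$ with $\int_\R |f(\tau)|^2\,d\gamma(\tau)<\infty$ admits a Hermite expansion). First I would observe that both $f$ and $h$ lie in $L^2(\R,\gamma)$ — this is the standing hypothesis under which their Hermite coefficients $\widehat f_k=\mysmalldot{f}{H_k}$ and $\widehat h_k=\mysmalldot{h}{H_k}$ are even defined — so by completeness the truncations $f^{(N)}:=\sum_{k=0}^N \widehat f_k H_k$ and $h^{(N)}:=\sum_{k=0}^N \widehat h_k H_k$ converge to $f$ and $h$ respectively in $L^2(\gamma)$.

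Next I would evaluate the inner product at finite truncation level: by bilinearity of $\mysmalldot{\cdot}{\cdot}$,
$\mysmalldot{f^{(N)}}{h^{(N)}}=\sum_{j,k=0}^N \widehat f_j\,\widehat h_k\,\mysmalldot{H_j}{H_k}$, and since $\mysmalldot{H_j}{H_k}=\delta_{jk}$ by orthonormality of the Hermite polynomials, this telescopes to $\sum_{k=0}^N \widehat f_k\,\widehat h_k$. I would then pass to the limit $N\to\infty$ on both sides. On the left, the Gaussian inner product is continuous with respect to the $L^2(\gamma)$ norm (Cauchy--Schwarz gives $|\mysmalldot{a}{b}-\mysmalldot{a'}{b'}|\le \|a-a'\|\,\|b\|+\|a'\|\,\|b-b'\|$ in the $L^2(\gamma)$ norm), so $\mysmalldot{f^{(N)}}{h^{(N)}}\to\mysmalldot{f}{h}$. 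On the right, Bessel's inequality gives $\sum_k \widehat f_k^{\,2}\le \|f\|^2<\infty$ and $\sum_k \widehat h_k^{\,2}\le \|h\|^2<\infty$, so by Cauchy--Schwarz for $\ell^2$ sequences the scalar series $\sum_k \widehat f_k\widehat h_k$ converges absolutely, and its partial sums converge to $\sum_{k\ge 0}\widehat f_k\widehat h_k$. Equating the two limits yields the claimed identity.

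The only genuinely nontrivial input here is the $L^2(\R,\gamma)$-completeness of the Hermite polynomials — the assertion that the Hermite partial sums of an arbitrary $f\in L^2(\gamma)$ converge to $f$ in $L^2$. This is the step I would flag as the ``main obstacle'', though it is classical: it follows from the density of polynomials in $L^2(\gamma)$, which in turn follows because the standard Gaussian has finite exponential moments (so $\tau\mapsto e^{t\tau}$ lies in $L^2(\gamma)$ for all $t$), forcing the polynomial span to be dense. Since this completeness is already asserted in the excerpt just above the statement, in the present write-up it can simply be invoked rather than reproved, and the remaining argument is the routine bilinearity-plus-continuity computation sketched above.
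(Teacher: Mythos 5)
Your proof is correct and is exactly the standard argument the paper implicitly invokes when it notes that Plancherel follows ``as Hermite polynomials form an orthonormal basis'' — the paper states this as an unproved fact, and your write-up simply spells out the orthonormality-plus-completeness computation behind that one-line justification.
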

The generating function of appropriately normalized Hermite polynomials satisfies the following identity: 
\begin{equation}
\label[equation]{hermite:generating:function}
    e^{\,\tau\lambda - \lambda^2/2} 
    = 
    \sum_{k\geq 0} H_k(\tau)\cdot \frac{\lambda^{k}}{\sqrt{k!}} \mper
\end{equation}

Similar to the noise operator in Fourier analysis, we define the corresponding noise operator $\noise{}$
for Hermite analysis: 
\begin{definition}
    For $\rho \in [-1,1]$ and a real valued function $f$, we define the function $\noise{f}$ as:
    \[
        (\noise{f})(\tau) = \int_{\R} f\inparen{\rho\cdot \tau + \sqrt{1-\rho^2}\cdot
        \theta}\,d\gamma(\theta) =\Ex{\tau'\sim_\rho\,\tau}{f(\tau')}\mper
    \]
    \label[definition]{noise-operator}
\end{definition}
Again similar to the case of Fourier analysis, the Hermite coefficients admit the following identity:
\begin{fact}\label[fact]{hermite-noise}
    $
        \widehat{(\noise{f})}_k =  \rho^{k} \cdot \widehat{f}_k\mper
    $
\end{fact}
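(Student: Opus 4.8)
The plan is to first establish the identity on the Hermite polynomials themselves, $\noise{H_k}=\rho^k\,H_k$, and then extend to an arbitrary $f\in L^2(\R,\gamma)$ by linearity and continuity.

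For the polynomial case, fix $\lambda\in\R$ and set $g_\lambda(\tau):=\ee^{\,\tau\lambda-\lambda^2/2}$, so that \cref{hermite:generating:function} reads $g_\lambda=\sum_{k\ge 0}H_k\,\lambda^k/\sqrt{k!}$. I would compute $\noise{g_\lambda}$ directly from \cref{noise-operator}: using the Gaussian identity $\int_\R \ee^{\,s\theta}\,d\gamma(\theta)=\ee^{\,s^2/2}$ with $s=\sqrt{1-\rho^2}\,\lambda$,
\[
(\noise{g_\lambda})(\tau)=\int_\R \ee^{\,(\rho\tau+\sqrt{1-\rho^2}\,\theta)\lambda-\lambda^2/2}\,d\gamma(\theta)=\ee^{\,\rho\tau\lambda-\rho^2\lambda^2/2}=g_{\rho\lambda}(\tau).
\]
Expanding $g_{\rho\lambda}$ again via \cref{hermite:generating:function} gives $\noise{g_\lambda}=\sum_{k\ge 0}\rho^k H_k\,\lambda^k/\sqrt{k!}$, while pulling $T_\rho$ through the series defining $g_\lambda$ gives $\noise{g_\lambda}=\sum_{k\ge 0}(\noise{H_k})\,\lambda^k/\sqrt{k!}$. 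Equating these two power series in $\lambda$ coefficient-by-coefficient yields $\noise{H_k}=\rho^k H_k$ for every $k$.

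For general $f$, write $f=\sum_{k\ge 0}\widehat f_k H_k$, which converges in $L^2(\R,\gamma)$. Since $T_\rho$ is an average over $\theta$ of measure-preserving composition operators, Jensen's inequality gives $\|\noise{f}\|_2\le\|f\|_2$, so $T_\rho$ is bounded on $L^2(\R,\gamma)$ and commutes with the $L^2$-limit; hence $\noise{f}=\sum_{k\ge 0}\widehat f_k\,\noise{H_k}=\sum_{k\ge 0}\rho^k\widehat f_k H_k$. Taking the inner product with $H_k$ and using orthonormality (equivalently \cref{plancherel-identity}) gives $\widehat{(\noise{f})}_k=\rho^k\,\widehat f_k$, as claimed.

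The two places that need justification, and hence the main (mild) obstacle, are the interchanges of $T_\rho$ with an infinite sum. The one in the last paragraph is immediate from the $L^2$-contraction property. For the one in the generating-function computation, one checks that for $|\lambda|$ small the partial sums of $\sum_k H_k\,\lambda^k/\sqrt{k!}$ converge appropriately, with a dominating function for the inner $\theta$-integral available from standard Hermite growth bounds times Gaussian decay, so that dominated convergence and Fubini let $T_\rho$ pass inside; since $\noise{H_k}=\rho^k H_k$ is an identity between polynomials, its validity for all small $\lambda$ already determines it for all $k$. Alternatively, this step can be bypassed via the Mehler kernel representation of the $\rho$-correlated Gaussian together with Mehler's formula, but the generating-function route is the shortest.
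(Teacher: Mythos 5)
The paper states this as a bare ``Fact'' with no proof supplied, so there is no in-paper argument to compare against; your argument is a correct and standard way to establish it. Your generating-function computation $\noise{g_\lambda}=g_{\rho\lambda}$ is right (the Gaussian MGF step gives $\ee^{(1-\rho^2)\lambda^2/2}$, which cancels as you say), matching coefficients gives $\noise{H_k}=\rho^k H_k$, and the extension to general $f\in L^2(\R,\gamma)$ via the $L^2$-contraction of $T_\rho$ and orthonormality of $(H_k)$ is clean. One small simplification you could use at the end of paragraph two: since $\noise{H_k}$ and $\rho^k H_k$ are both polynomials of degree at most $k$, once the generating-function identity is established for $\lambda$ in any neighborhood of $0$, coefficient matching is purely algebraic and the term-by-term justification for pulling $T_\rho$ through the series is not really needed there — the interchange can instead be checked once, cheaply, because for each fixed $\tau$ the $\theta$-integral commutes with the finite truncation plus a uniformly small tail (absolute convergence of $\sum_k |H_k(\tau)|\,|\lambda|^k/\sqrt{k!}$ against the Gaussian weight). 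But as written the proof is correct.
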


Next, we collect some definitions and facts from the theory of special functions useful to our analysis
of the algorithm:
\subsubsection{Gamma, Hypergeometric and Parabolic Cylinder Function Preliminaries}
For a natural number $k$ and a real number $\tau$, we denote the rising factorial as
$(\tau)_k\defeq \tau \cdot (\tau+1)\cdot \cdots (\tau+k-1)$.
We now define the following fairly general classes of functions, and we later use them we obtain a
Taylor series representation of $\fplain{a}{b}{\tau}$.
\begin{definition}
    \label[definition]{conf-hypergeometric}
    The confluent hypergeometric function with parameters $\alpha,\beta$, and $\lambda$ as:
    \[
        \confHG(\alpha\,;\beta\,;\lambda)\defeq \sum_{k} \frac{(\alpha)_k}{(\beta)_k}\cdot \frac{\lambda^k}{k!}\mper
    \]
\end{definition}
The (Gaussian) hypergeometric function is defined as follows:
\begin{definition}
    \label[definition]{hypergeometric}
    The hypergeometric function with parameters $w,\alpha,\beta$ and $\lambda$ as:
    \[
        \hypergeometric(w,\alpha\,;\beta\,;\lambda) \defeq \sum_k \frac{(w)_k\cdot (\alpha)_k}{(\beta)_k}\cdot
        \frac{\lambda^k}{k!}\mper
    \]
\end{definition}

    Next we define the $\Gamma$ function:
    \begin{definition}
	    For a real number $\tau$, we define:
	    \[
	        \Gamma(\tau)\defeq \int_{0}^{\infty}t^{\tau-1}\cdot \ee^{-t}\,dt\mper
	    \]
	    \label[definition]{Gamma-function}
	\end{definition} The $\Gamma$ function has the following property:
    \begin{fact}[Duplication Formula]
    \label[fact]{duplication:formula}
        \[
            \frac{\Gamma(2\tau)}{\Gamma(\tau)} = \frac{\Gamma(\tau+1/2)}{2^{1-2\tau}\sqrt{\pi}}
        \]
    \end{fact}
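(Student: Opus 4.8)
The plan is to prove the Legendre duplication formula by relating both sides to Euler's Beta integral. Recall the Beta function $B(x,y)\defeq \int_0^1 t^{x-1}(1-t)^{y-1}\,dt$ for $x,y>0$, together with the classical identity $B(x,y)=\Gamma(x)\Gamma(y)/\Gamma(x+y)$; the latter is standard (it follows by writing $\Gamma(x)\Gamma(y)=\int_0^\infty\int_0^\infty s^{x-1}t^{y-1}e^{-s-t}\,ds\,dt$ and changing variables via $s=rw$, $t=r(1-w)$ with $r>0$, $w\in(0,1)$). Taking this identity and the value $\Gamma(\tfrac12)=\sqrt{\pi}$ as given, the argument reduces to one explicit change of variables.

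First I would compute $B(\tau,\tau)$ for $\tau>0$. Since $B(\tau,\tau)=\int_0^1\bigl(t(1-t)\bigr)^{\tau-1}\,dt$, the substitution $t=(1+u)/2$ gives $t(1-t)=(1-u^2)/4$ and $dt=du/2$, so $B(\tau,\tau)=4^{-(\tau-1)}\,\tfrac12\int_{-1}^{1}(1-u^2)^{\tau-1}\,du$, which by evenness of the integrand equals $4^{-(\tau-1)}\int_{0}^{1}(1-u^2)^{\tau-1}\,du$. A second substitution $v=u^2$ (so $du=dv/(2\sqrt{v})$) turns the remaining integral into $\tfrac12\int_0^1 v^{-1/2}(1-v)^{\tau-1}\,dv=\tfrac12 B(\tfrac12,\tau)$. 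Collecting the powers of two ($4^{-(\tau-1)}\cdot\tfrac12=2^{1-2\tau}$) yields $B(\tau,\tau)=2^{1-2\tau}\,B(\tfrac12,\tau)$.

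Next I would pass to the Gamma language using the Beta–Gamma identity and $\Gamma(\tfrac12)=\sqrt{\pi}$:
\[
\frac{\Gamma(\tau)^2}{\Gamma(2\tau)}\;=\;B(\tau,\tau)\;=\;2^{1-2\tau}\,B(\tfrac12,\tau)\;=\;2^{1-2\tau}\,\frac{\Gamma(\tfrac12)\,\Gamma(\tau)}{\Gamma(\tau+\tfrac12)}\;=\;2^{1-2\tau}\,\sqrt{\pi}\cdot\frac{\Gamma(\tau)}{\Gamma(\tau+\tfrac12)}.
\]
Cancelling one factor of $\Gamma(\tau)$ (nonzero for $\tau>0$) gives $\Gamma(\tau)/\Gamma(2\tau)=2^{1-2\tau}\sqrt{\pi}/\Gamma(\tau+\tfrac12)$, and inverting both sides produces exactly $\Gamma(2\tau)/\Gamma(\tau)=\Gamma(\tau+\tfrac12)/(2^{1-2\tau}\sqrt{\pi})$. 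Finally, since both sides are meromorphic in $\tau$, the identity established on $(0,\infty)$ extends to all $\tau$ in the common domain by analytic continuation; for the paper only real arguments (in particular the half-integer shifts arising in the hypergeometric and parabolic-cylinder computations) are needed.

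\textbf{Main obstacle.} There is no genuine obstacle: this is a classical computation. The only points deserving care are the bookkeeping of the powers of $2$ across the two substitutions, and — if one wants a fully self-contained proof rather than invoking standard facts — supplying the short derivations of the Beta–Gamma identity and of $\Gamma(\tfrac12)=\sqrt{\pi}$. All the changes of variables are legitimate because the integrands are continuous and absolutely integrable on the relevant ranges for $\tau>0$.
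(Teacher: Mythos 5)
Your proof is correct, and the bookkeeping of the powers of two checks out: the substitution $t=(1+u)/2$ contributes $4^{-(\tau-1)}\cdot\tfrac12$ after evenness, the substitution $v=u^2$ contributes another $\tfrac12$ while producing $B(\tfrac12,\tau)$, and $4^{-(\tau-1)}\cdot\tfrac12\cdot\tfrac12\cdot 2=2^{1-2\tau}$ (your writeup folds the cancellation of the factor-of-two from evenness correctly). Note, however, that there is nothing in the paper to compare against: the paper records this identity as a \emph{Fact} (a citation to the classical Legendre duplication formula) and does not supply a proof. Your Beta-function route is one of the two standard textbook derivations (the other being via the Gaussian product/Weierstrass form of $\Gamma$), and it is the more self-contained of the two given that the paper already uses the Beta function elsewhere (e.g.\ in the Euler-type integral representation of $\hypergeometric$), so it fits the surrounding material well.
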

	We also note the relationship between $\Gamma$ and $\gamma_r$:
	\begin{fact}
	\label[fact]{gamma:and:Gamma}
	    For $r\in [0,\infty)$, 
	    \[
	        \gamma_r^{r}~:=~\Ex{\bfg\sim \gaussian{0}{1}}{|\bfg|^r}~=~\frac{2^{r/2}}{\sqrt{\pi}}\cdot
	        \Gamma\inparen{\frac{1+r}{2}} \mper
	    \]
	\end{fact}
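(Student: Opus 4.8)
The plan is to evaluate the Gaussian absolute moment $\gamma_r^r = \Ex{\bfg\sim\gaussian{0}{1}}{|\bfg|^r}$ directly, by a change of variables that converts it into the defining integral of the Gamma function. First I would write the expectation against the standard Gaussian density and use that the integrand is even:
\[
  \Ex{\bfg\sim\gaussian{0}{1}}{|\bfg|^r} ~=~ \frac{1}{\sqrt{2\pi}} \int_{-\infty}^{\infty} |t|^r\, \ee^{-t^2/2}\, dt ~=~ \frac{2}{\sqrt{2\pi}} \int_{0}^{\infty} t^r\, \ee^{-t^2/2}\, dt \mper
\]
I would also note that this integral is finite for every $r \in [0,\infty)$ --- near $t = 0$ the factor $t^r$ is integrable (indeed for all $r > -1$), while for $t\to\infty$ the Gaussian factor dominates --- so that $\gamma_r$ is well-defined to begin with.

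Next I would substitute $u = t^2/2$, so that $t = \sqrt{2u}$, $dt = 2^{-1/2}\, u^{-1/2}\, du$, and therefore $t^r\, \ee^{-t^2/2}\, dt = 2^{(r-1)/2}\, u^{(r+1)/2 - 1}\, \ee^{-u}\, du$. By \cref{Gamma-function} this gives
\[
  \int_{0}^{\infty} t^r\, \ee^{-t^2/2}\, dt ~=~ 2^{(r-1)/2} \int_{0}^{\infty} u^{(r+1)/2 - 1}\, \ee^{-u}\, du ~=~ 2^{(r-1)/2}\cdot \Gamma\inparen{\frac{r+1}{2}} \mper
\]
Substituting this back into the previous display and collecting the constant via $\frac{2}{\sqrt{2\pi}}\cdot 2^{(r-1)/2} = \frac{2^{r/2}}{\sqrt{\pi}}$ yields
\[
  \Ex{\bfg\sim\gaussian{0}{1}}{|\bfg|^r} ~=~ \frac{2^{r/2}}{\sqrt{\pi}}\cdot \Gamma\inparen{\frac{1+r}{2}} \mcom
\]
which is precisely the claimed identity.

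This argument is entirely routine. The only place demanding any attention is keeping track of the powers of $2$ in the change of variables, together with the harmless convergence check in the first step; there is no genuine obstacle. The statement is recorded here only because it will be used repeatedly --- through $\gamma_r$ and the Gaussian transformations appearing in the rounding algorithm --- and, if one prefers, the substitution step can equally well be replaced by quoting a standard table of Gaussian moments, which produces the same constant.
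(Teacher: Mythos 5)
Your proof is correct and uses essentially the same approach as the paper: reduce to an integral over $[0,\infty)$ by evenness, then carry out the substitution $u = t^2/2$ to reveal the Gamma integral. The paper's version writes the substitution implicitly by rewriting the integrand in place, but the computation and constants are identical.
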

	\begin{proof}
	    \begin{align*}
	        \Ex{\bfg\sim\gaussian{0}{1}}{\abs{\bfg}^r} &= \frac{\sqrt{2}}{\sqrt{\pi}}\cdot\int_{0}^{\infty}
	        \abs{\bfg}^r\cdot\ee^{-\nfrac{\bfg^2}{2}}\,d\bfg\\
	        &= \sqrt{\frac{2}{\pi}}\cdot 2^{(r-1)/2}\cdot\int_{0}^{\infty}\abs{\frac{\bfg^2}{2}
	        }^{(r-1)/2}\!\cdot\ee^{-\nfrac{\bfg^2}{2}}\!\cdot \bfg\,d\bfg\\
	        &= \frac{2^{r/2}}{\sqrt{\pi}}\cdot\Gamma\inparen{\frac{1+r}{2}}
	    \end{align*} 
	\end{proof}

Next, we record some facts about parabolic cylinder functions:
\begin{fact}[12.5.1 of \cite{Lozier03}]
    \label[fact]{integral-parabolic}
    Let $U$ be the function defined as 
    \[
        U(\alpha,\lambda)\defeq\frac{\ee^{\lambda^2/4}}{\Gamma\inparen{\frac{1}{2}+\alpha}}
        \int_{0}^{\infty} t^{\alpha-1/2}\cdot \ee^{-(t+\lambda)^2/2} \,dt\mcom
    \]for all $\alpha$ such that $\Re(\alpha)>-\frac{1}{2}\mper$
    The function $U(\alpha,\pm \lambda)$ is a parabolic cylinder function and is a standard
    solution to the differential equation: $\frac{d^2 w}{d\lambda^2}-\inparen{\frac{\lambda^2}{4}+\alpha}w=0$. 
\end{fact}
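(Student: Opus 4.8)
The plan is to verify by direct computation that the integral defining $U(\alpha,\lambda)$ satisfies the stated second-order ODE, and then to obtain the claim for $U(\alpha,-\lambda)$ from a symmetry of the equation. First I would strip off the Gaussian prefactor: setting $I(\lambda)\defeq\int_0^\infty t^{\alpha-1/2}\ee^{-(t+\lambda)^2/2}\,dt$, so that $U(\alpha,\lambda)=\ee^{\lambda^2/4}\,I(\lambda)/\Gamma(\tfrac12+\alpha)$, the product rule gives (with $'$ denoting $d/d\lambda$)
\[
  U''-\inparen{\tfrac{\lambda^2}{4}+\alpha}U \;=\; \frac{\ee^{\lambda^2/4}}{\Gamma(\tfrac12+\alpha)}\,\inparen{I''+\lambda I'+\inparen{\tfrac12-\alpha}I}\mcom
\]
so it suffices to establish the ODE $I''+\lambda I'+(\tfrac12-\alpha)I=0$.

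I would first prove this for $\Re(\alpha)>\tfrac12$, where every boundary term below vanishes outright. The substitution $u=t+\lambda$ gives $I(\lambda)=\int_\lambda^\infty(u-\lambda)^{\alpha-1/2}\ee^{-u^2/2}\,du$. Using $\tfrac{d}{du}\ee^{-u^2/2}=-u\,\ee^{-u^2/2}$ and integrating by parts, $I'(\lambda)=-\int_\lambda^\infty u\,(u-\lambda)^{\alpha-1/2}\ee^{-u^2/2}\,du$; splitting $u=(u-\lambda)+\lambda$ yields $I'=-K-\lambda I$ where $K(\lambda)\defeq\int_\lambda^\infty(u-\lambda)^{\alpha+1/2}\ee^{-u^2/2}\,du$. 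Differentiating $K$ under the integral sign gives $K'=-(\alpha+\tfrac12)I$, while differentiating the identity $K=-I'-\lambda I$ gives $K'=-I''-I-\lambda I'$; equating the two expressions for $K'$ produces exactly $I''+\lambda I'+(\tfrac12-\alpha)I=0$.

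To finish I would remove the restriction $\Re(\alpha)>\tfrac12$ by analytic continuation: for each fixed real $\lambda$ the integral $I(\lambda)$ and its $\lambda$-derivatives converge locally uniformly in $\alpha$ on the half-plane $\{\Re(\alpha)>-\tfrac12\}$ and are holomorphic there, and $\Gamma(\tfrac12+\alpha)$ is holomorphic and nonvanishing there, so $\alpha\mapsto U''(\alpha,\lambda)-(\tfrac{\lambda^2}{4}+\alpha)U(\alpha,\lambda)$ is holomorphic on that half-plane and vanishes on $\{\Re(\alpha)>\tfrac12\}$, hence vanishes identically by the identity theorem. Since the coefficient $\tfrac{\lambda^2}{4}+\alpha$ is even in $\lambda$, the equation is invariant under $\lambda\mapsto-\lambda$, so $U(\alpha,-\lambda)$ is a solution whenever $U(\alpha,\cdot)$ is. Finally, to confirm that $U(\alpha,\lambda)$ is the \emph{standard} (recessive) solution one applies Laplace's method to $\int_0^\infty t^{\alpha-1/2}\ee^{-(t+\lambda)^2/2}\,dt$ as $\lambda\to+\infty$ to read off $U(\alpha,\lambda)\sim\lambda^{-\alpha-1/2}\ee^{-\lambda^2/4}$, which pins down the normalization and, together with the generically independent second solution $U(\alpha,-\lambda)$, exhibits a fundamental system. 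The hard part will be purely bookkeeping: justifying the interchange of differentiation and integration uniformly in $\lambda$, and carefully tracking which boundary terms need $\Re(\alpha)>\tfrac12$ rather than just $\Re(\alpha)>-\tfrac12$, so that the analytic-continuation step is legitimate.
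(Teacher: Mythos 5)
Your plan is sound, and the computations check out: with $I(\lambda)=\int_0^\infty t^{\alpha-1/2}\ee^{-(t+\lambda)^2/2}\,dt$ one indeed finds $U''-(\tfrac{\lambda^2}4+\alpha)U=\ee^{\lambda^2/4}\Gamma(\tfrac12+\alpha)^{-1}\bigl(I''+\lambda I'+(\tfrac12-\alpha)I\bigr)$, the identities $I'=-K-\lambda I$, $K'=-(\alpha+\tfrac12)I$, $K'=-I''-I-\lambda I'$ follow as you describe, and eliminating $K'$ yields $I''+\lambda I'+(\tfrac12-\alpha)I=0$. The $\lambda\mapsto-\lambda$ symmetry and the Laplace-asymptotic $U(\alpha,\lambda)\sim\lambda^{-\alpha-1/2}\ee^{-\lambda^2/4}$ are also correct. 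Be aware, though, that the paper does not prove this statement at all: it is cited verbatim as Equation 12.5.1 of the NIST DLMF \cite{Lozier03}, and the ``proof'' in the paper is just that citation. So your proposal is not taking the paper's route --- it is supplying a derivation where the paper supplies a pointer. As a derivation it is correct, but your detour through $\Re(\alpha)>\tfrac12$ followed by analytic continuation is avoidable: if you differentiate $I$ under the integral sign in the original $t$-variable (where $\lambda$ enters only through the Gaussian and there is no moving endpoint), you get $I'(\lambda)=-\int_0^\infty t^{\alpha-1/2}(t+\lambda)\ee^{-(t+\lambda)^2/2}\,dt$ and $I''$ similarly, with no boundary terms and full validity on $\Re(\alpha)>-\tfrac12$; the $K'=-(\alpha+\tfrac12)I$ step likewise works directly on that half-plane. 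The troublesome boundary term you are guarding against only appears because you switched to the $u=t+\lambda$ parametrization, where the lower limit moves with $\lambda$ and $(u-\lambda)^{\alpha-1/2}$ blows up at the endpoint. Staying in the $t$ coordinate throughout removes the need for the identity-theorem argument entirely and shortens the proof.
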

Next we quote the confluent hypergeometric representation of the parabolic cylinder function $U$ 
defined above:
\begin{fact}[12.4.1, 12.2.6, 12.2.7, 12.7.12, and 12.7.13 of \cite{Lozier03}]
    \label[fact]{power-parabolic}
    \begin{align*}
        U(\alpha,\lambda)=\frac{\sqrt{\pi}}{2^{\alpha/2+1/4}\cdot
        \Gamma\inparen{\frac{3}{4}+\frac{\alpha}{2}}}\cdot
        \ee^{\lambda^2/4}\cdot\confHG\inparen{-\frac{1}{2}\alpha+\frac{1}{4}\,;\,
        \frac{1}{2}\,;\,-\frac{\lambda^2}{2}}\\
        -\frac{\sqrt{\pi}}{2^{\alpha/2-1/4}\cdot
        \Gamma\inparen{\frac{1}{4}+\frac{\alpha}{2}}}\cdot\lambda\cdot\ee^{\lambda^2/4}
        \cdot\confHG\inparen{-\frac{\alpha}{2}+\frac{3}{4}\,;\,\frac{3}{2}\,;\,-\frac{\lambda^2}{2}}
    \end{align*}
\end{fact}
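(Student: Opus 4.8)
The plan is to recognize \cref{power-parabolic} as the statement that the integral defining $U(\alpha,\lambda)$ in \cref{integral-parabolic} coincides with a specific linear combination of the two standard power-series solutions of Weber's equation $\frac{d^2 w}{d\lambda^2} - \inparen{\frac{\lambda^2}{4}+\alpha}\,w = 0$. Since this equation is linear of second order, its solution space is two-dimensional, and a convenient basis is the \emph{even} solution $y_1(\alpha,\lambda) \defeq \ee^{\lambda^2/4}\,\confHG\inparen{\frac14-\frac\alpha2\,;\,\frac12\,;\,-\frac{\lambda^2}{2}}$ and the \emph{odd} solution $y_2(\alpha,\lambda) \defeq \lambda\,\ee^{\lambda^2/4}\,\confHG\inparen{\frac34-\frac\alpha2\,;\,\frac32\,;\,-\frac{\lambda^2}{2}}$. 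First I would check directly that $y_1$ and $y_2$ solve Weber's equation: writing each confluent hypergeometric factor as a power series in $\lambda^2$ and substituting into the ODE, one verifies that the recurrence forced on the Taylor coefficients is exactly the defining recurrence of $\confHG$ with the indicated parameters, the $\ee^{\lambda^2/4}$ prefactor being what collapses the two numerator/denominator parameters of $\confHG$ into Weber's single parameter $\alpha$. (Equivalently, Kummer's transformation $\confHG(s\,;\,t\,;\,z) = \ee^{z}\,\confHG(t-s\,;\,t\,;\,-z)$ shows $\ee^{\lambda^2/4}\confHG(\tfrac14-\tfrac\alpha2;\tfrac12;-\tfrac{\lambda^2}{2}) = \ee^{-\lambda^2/4}\confHG(\tfrac14+\tfrac\alpha2;\tfrac12;\tfrac{\lambda^2}{2})$, the form in which the even solution is more often displayed.) Since \cref{integral-parabolic} already asserts that $U(\alpha,\cdot)$ solves Weber's equation, we may write $U(\alpha,\lambda) = c_1(\alpha)\,y_1(\alpha,\lambda) + c_2(\alpha)\,y_2(\alpha,\lambda)$ for scalars depending only on $\alpha$, and the whole content of \cref{power-parabolic} reduces to identifying $c_1$ and $c_2$.

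I would pin down the constants by evaluating both sides, and their $\lambda$-derivatives, at $\lambda = 0$. Because $y_1$ is even with $y_1(\alpha,0)=1$ and $y_2$ is odd with $\partial_\lambda y_2(\alpha,0)=1$, one gets $c_1(\alpha) = U(\alpha,0)$ and $c_2(\alpha) = \partial_\lambda U(\alpha,0)$. Both are read off from the integral in \cref{integral-parabolic} by the substitution $s = t^2/2$: this turns $\int_0^\infty t^{\alpha-1/2}\ee^{-t^2/2}\,dt$ into $2^{\alpha/2-3/4}\,\Gamma\inparen{\frac\alpha2+\frac14}$, so $U(\alpha,0) = \frac{2^{\alpha/2-3/4}\,\Gamma\inparen{\frac\alpha2+\frac14}}{\Gamma\inparen{\alpha+\frac12}}$; differentiating under the integral sign (legitimate for $\Re(\alpha) > -\tfrac12$, with the $\ee^{\lambda^2/4}$ boundary term vanishing at $\lambda=0$) brings down a factor $-(t+\lambda)$, and at $\lambda=0$ the same substitution gives $\partial_\lambda U(\alpha,0) = -\,\frac{2^{\alpha/2-1/4}\,\Gamma\inparen{\frac\alpha2+\frac34}}{\Gamma\inparen{\alpha+\frac12}}$. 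Applying the duplication formula \cref{duplication:formula} in the form $\Gamma\inparen{\alpha+\frac12} = \frac{2^{\alpha-1/2}}{\sqrt\pi}\,\Gamma\inparen{\frac\alpha2+\frac14}\,\Gamma\inparen{\frac\alpha2+\frac34}$ then simplifies these to $c_1(\alpha) = \frac{\sqrt\pi}{2^{\alpha/2+1/4}\,\Gamma\inparen{\frac34+\frac\alpha2}}$ and $c_2(\alpha) = -\frac{\sqrt\pi}{2^{\alpha/2-1/4}\,\Gamma\inparen{\frac14+\frac\alpha2}}$, which is precisely the claimed identity (after the Kummer rewriting of the two $\confHG$ factors).

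I expect the obstacles here to be bookkeeping rather than conceptual: keeping the powers of $2$ and the half-integer $\Gamma$-arguments straight through the substitution and the duplication formula, and justifying the interchange of differentiation and integration, which is exactly where the standing hypothesis $\Re(\alpha) > -\tfrac12$ is needed (to get absolute convergence uniformly on a neighbourhood of $\lambda=0$). It should be emphasised that this is a classical identity recorded in~\cite{Lozier03}, so in the final text one may either carry out the short computation just sketched or simply cite it; the sketch is given only to make transparent why these particular two confluent hypergeometric series, and these particular $\Gamma$-prefactors, are the ones that occur — a point we rely on when extracting the Taylor series of $\fplain{a}{b}{\tau}$.
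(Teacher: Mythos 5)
The paper does not prove this statement: it is imported wholesale from the NIST Digital Library of Mathematical Functions (the specific DLMF equation numbers 12.4.1, 12.2.6, 12.2.7, 12.7.12, 12.7.13 cited in the fact label are precisely the Weber-equation decomposition, the even/odd power-series solutions $u_1,u_2$, and the Kummer rewriting you used). So there is no internal proof to compare against, and the appropriate judgment is whether your derivation stands on its own.

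It does, and the arithmetic checks out at every step. The decisive observations are exactly right: \cref{integral-parabolic} already records that $U(\alpha,\cdot)$ solves Weber's equation, the two $\confHG$ products are (after Kummer's transformation $\confHG(s;t;z)=\ee^{z}\confHG(t-s;t;-z)$) the standard even and odd solutions $u_1,u_2$ with $u_1(\alpha,0)=1$, $u_1'(\alpha,0)=0$, $u_2(\alpha,0)=0$, $u_2'(\alpha,0)=1$, so the coefficients are forced to be $U(\alpha,0)$ and $\partial_\lambda U(\alpha,0)$. Your substitution $s=t^2/2$ gives $\int_0^\infty t^{\alpha-1/2}\ee^{-t^2/2}\,dt = 2^{\alpha/2-3/4}\,\Gamma(\tfrac\alpha2+\tfrac14)$ and $\int_0^\infty t^{\alpha+1/2}\ee^{-t^2/2}\,dt = 2^{\alpha/2-1/4}\,\Gamma(\tfrac\alpha2+\tfrac34)$ correctly, and applying \cref{duplication:formula} with $\tau=\tfrac\alpha2+\tfrac14$ collapses $\Gamma(\alpha+\tfrac12)$ into $\tfrac{2^{\alpha-1/2}}{\sqrt\pi}\Gamma(\tfrac\alpha2+\tfrac14)\Gamma(\tfrac\alpha2+\tfrac34)$, giving exactly the two displayed prefactors. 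The justification for differentiating under the integral sign under $\Re(\alpha)>-\tfrac12$ is also the right place to invoke the standing hypothesis. The only substantive step you gesture at without carrying out is the verification that $u_1,u_2$ themselves solve Weber's equation; as you note, this is a one-line index computation on the power series, and it is the content of the cited DLMF 12.2.6--7, so deferring to the reference there is standard. Either citing DLMF (as the paper does) or including your two-paragraph derivation would be acceptable; your version has the pedagogical advantage of making visible where the specific half-integer $\Gamma$-arguments and powers of $2$ come from, which is precisely what gets threaded into \cref{integral:representation:confHG}.
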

Combining the previous two facts, we get the following:
\begin{corollary}
    \label[corollary]{parabolic-conf-hypergeometric}
    For all real $\alpha>-\frac{1}{2}$, we have:
    \begin{align*} 
        \int_{0}^{\infty} t^{\alpha-1/2}\cdot \ee^{-(t+\lambda)^2/2} \,dt ~~=~
        \frac{\sqrt{\pi}\cdot\Gamma\inparen{\frac{1}{2}+\alpha}}{2^{\alpha/2+1/4}
        \cdot\Gamma\inparen{\frac{3}{4}+\frac{\alpha}{2}}}
        \cdot\confHG\inparen{-\frac{\alpha}{2}+\frac{1}{4}\,;\,\frac{1}{2}\,;\,-\frac{\lambda^2}{2}}
        \\-\frac{\sqrt{\pi}\cdot\Gamma\inparen{\frac{1}{2}+\alpha}}{2^{\alpha/2-1/4}
        \cdot\Gamma\inparen{\frac{1}{4}+\frac{\alpha}{2}}}
        \cdot \lambda\cdot\confHG\inparen{-\frac{\alpha}{2}+\frac{3}{4}\,;\,\frac{3}{2}\,;\,-\frac{\lambda^2}{2}}
        \mper
    \end{align*}
\end{corollary}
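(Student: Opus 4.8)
The statement is a direct consequence of combining the two quoted facts about the parabolic cylinder function $U$, so the plan is essentially a substitution-and-cancellation argument. First I would invoke \cref{integral-parabolic}: since $\alpha$ is real with $\alpha > -1/2$, we have $\Re(\alpha) > -1/2$, so the integral converges and
\[
U(\alpha,\lambda) ~=~ \frac{\ee^{\lambda^2/4}}{\Gamma\inparen{\frac12 + \alpha}} \int_0^\infty t^{\alpha - 1/2}\cdot \ee^{-(t+\lambda)^2/2}\,dt.
\]
Solving this identity for the integral gives
\[
\int_0^\infty t^{\alpha - 1/2}\cdot \ee^{-(t+\lambda)^2/2}\,dt ~=~ \Gamma\inparen{\tfrac12 + \alpha}\cdot \ee^{-\lambda^2/4}\cdot U(\alpha,\lambda).
\]

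Next I would substitute the confluent-hypergeometric representation of $U(\alpha,\lambda)$ from \cref{power-parabolic} into the right-hand side. Each of the two terms in that representation carries a factor $\ee^{\lambda^2/4}$, which cancels against the $\ee^{-\lambda^2/4}$ coming from the previous step. Distributing the prefactor $\Gamma(\tfrac12+\alpha)$ over the two terms then yields exactly
\[
\int_0^\infty t^{\alpha-1/2}\,\ee^{-(t+\lambda)^2/2}\,dt ~=~ \frac{\sqrt{\pi}\,\Gamma\inparen{\frac12+\alpha}}{2^{\alpha/2+1/4}\,\Gamma\inparen{\frac34+\frac{\alpha}{2}}}\cdot \confHG\inparen{-\tfrac{\alpha}{2}+\tfrac14;\tfrac12;-\tfrac{\lambda^2}{2}} ~-~ \frac{\sqrt{\pi}\,\Gamma\inparen{\frac12+\alpha}}{2^{\alpha/2-1/4}\,\Gamma\inparen{\frac14+\frac{\alpha}{2}}}\cdot \lambda\cdot \confHG\inparen{-\tfrac{\alpha}{2}+\tfrac34;\tfrac32;-\tfrac{\lambda^2}{2}},
\]
which is the claimed formula.

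The only points that require care — and they are bookkeeping rather than genuine obstacles — are: (i) checking that the parameter range $\alpha > -1/2$ in the corollary matches the hypothesis $\Re(\alpha) > -1/2$ needed to legitimately apply \cref{integral-parabolic} (it does, since $\alpha$ is assumed real), so the integral is finite and the representation is valid on the nose; and (ii) matching the argument conventions of $\confHG$ between the two cited facts, i.e. confirming that $-\tfrac12\alpha + \tfrac14$ and $-\tfrac{\alpha}{2} + \tfrac14$ (and likewise the $\tfrac34$ variants) refer to the same first parameter, and that the third argument $-\lambda^2/2$ is carried through unchanged. There is no analytic difficulty: no convergence estimate, no contour manipulation, and no special-function identity beyond the two already quoted from \cite{Lozier03}. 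I therefore expect the whole argument to be a short computation, with the ``hard part'' being merely to state the cancellation of the Gaussian factors cleanly.
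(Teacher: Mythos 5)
Your proposal is correct and is exactly the computation the paper intends: the corollary is stated as an immediate consequence of \cref{integral-parabolic} and \cref{power-parabolic}, obtained by solving the first for the integral, substituting the second for $U(\alpha,\lambda)$, and cancelling the Gaussian prefactors. Nothing more is needed.
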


\section{Hardness of \onorm{2}{r} with $r < 2$}\label{sec:nhc}

This section proves the following theorem that serves as a starting point of our hardness results. 
The theorem is stated for the expectation norm for consistency with the current literature, but the same statement holds for the counting norm, since if 
$A$ is an $n \times n$ matrix,  $\cnorm{2}{r}{A} = n^{1/r -1/2} \cdot \enorm{2}{r}{A}$. 
Note that the matrix $A$ used in the reduction below does not depend on $r$.
\begin{theorem} For any $\varepsilon > 0$, 
    there is a polynomial time reduction that takes a 3-CNF formula $\phi$ and 
    produces a symmetric matrix $A \in \R^{n \times n}$ with $n = |\phi|^{\poly(1 / \eps)}$ such that 
	\begin{itemize}
        \item (Completeness) If $\phi$ is satisfiable, there exists $x \in
            \R^{n}$ with $|x(i)| = 1$ for all $i \in [n]$ and $Ax = x$.  In particular,
            $\enorm{2}{r}{A} \geq 1$ for all $1 \leq r \leq \infty$.
        \item (Soundness) $\enorm{2}{r}{A} \leq \gamma_r + \varepsilon^{2 - r}$ for all $1 \leq r < 2$. 
	\end{itemize}
    \label{thm:brs}
\end{theorem}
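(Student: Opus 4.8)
The plan is to adapt the dictatorship-test-over-smooth-\labelcover framework of Guruswami \etal~\cite{GRSW16} and \Briet~\etal~\cite{BRS15} to the ``concave'' regime $r<2$, in which (unlike for $r>2$) the cheap-sparse-solution obstacle for hypercontractive norms disappears. First I would apply \cref{thm:smooth_label_cover} to the input $3$-CNF $\phi$ to obtain a $J$-smooth, $D$-to-$1$ \labelcover instance $\mathcal{L}=(G=(V,E),[R],[L],\Sigma)$ with soundness $\xi$, where the smoothness $J=J(\eps)$ and the soundness $\xi=\xi(\eps)$ (hence $R=R(\eps)$) are fixed at the end. Index the coordinates of $\R^n$ by pairs $(v,y)$ with $v\in V$ and $y\in\{\pm1\}^R$ (so $n=|V|\cdot 2^R$), and use the expectation inner product. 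Let $\mathcal{W}\subseteq\R^n$ be the subspace of tuples $x=(f_v)_{v\in V}$ such that each $f_v$ is a \emph{linear} function $f_v(y)=\sum_i\widehat{f_v}(i)\,y_i$ and, for every edge $e=(v,w)$, the pushed-forward coefficient vectors agree, $\sum_{i\in\pi_{e,v}^{-1}(j)}\widehat{f_v}(i)=\sum_{i'\in\pi_{e,w}^{-1}(j)}\widehat{f_w}(i')$ for all $j\in[L]$. Each condition cuts out a linear subspace and the two commute (consistency touches only the linear Fourier level), so I let $A$ be the orthogonal projector onto $\mathcal{W}$: it is a \emph{symmetric} matrix (the uniform scaling of the inner product is harmless), a contraction in $\ell_2$, and $Ax=x$ iff $x\in\mathcal{W}$. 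Since $r<2$, the expectation $r$-norm is dominated by the expectation $2$-norm, so $\enorm{2}{r}{A}\le\enorm{2}{2}{A}\le 1$ unconditionally, and the whole task is to separate $1$ from $\gamma_r+\eps^{2-r}<1$.

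Completeness is immediate: if $\phi$ is satisfiable then $\opt(\mathcal{L})=1$, so there is $\ell:V\to[R]$ satisfying every edge; set $f_v(y):=y_{\ell(v)}$, which is linear, $\pm1$-valued on every coordinate, and whose pushforwards on an edge $e=(v,w)$ are both $\delta_{\pi_{e,v}(\ell(v))}=\delta_{\pi_{e,w}(\ell(w))}$, so $x\in\mathcal{W}$ and $Ax=x$. Then $\enorm{r}{Ax}/\enorm{2}{x}=\enorm{r}{x}/\enorm{2}{x}=1$ because both norms of a $\pm1$-vector equal $1$, giving $\enorm{2}{r}{A}\ge 1$, hence $=1$.

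For soundness, suppose $\enorm{2}{r}{A}>\gamma_r+\eps^{2-r}$, witnessed by a unit vector $x$, and set $z:=Ax\in\mathcal{W}$, with block functions $f_v$ satisfying $\mathbb{E}_v\cnorm{2}{\widehat{f_v}}^2=\enorm{2}{z}^2\le 1$ and $\mathbb{E}_v\enorm{r}{f_v}^r=\enorm{r}{z}^r>(\gamma_r+\eps^{2-r})^r$ (Parseval). Call $v$ \emph{good} if $\enorm{r}{f_v}>(\gamma_r+\delta')\cnorm{2}{\widehat{f_v}}$ for a small $\delta'=\delta'(\eps)$. For bad $v$ we have $\enorm{r}{f_v}^r\le(\gamma_r+\delta')^r\cnorm{2}{\widehat{f_v}}^r$, and since $t\mapsto t^{r/2}$ is concave ($r<2$) the bad blocks contribute at most $(\gamma_r+\delta')^r(\mathbb{E}_v\cnorm{2}{\widehat{f_v}}^2)^{r/2}\le(\gamma_r+\delta')^r$ to $\mathbb{E}_v\enorm{r}{f_v}^r$; choosing $\delta'$ small leaves a positive constant $\kappa=\kappa(\eps)$ of ``good mass''. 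By \cref{lem:dict}, which rests on the fractional-moment Berry-\Esseen estimate \cref{lem:clt-analytic-sparse}, each good $v$ has $\cnorm{4}{\widehat{f_v}}\ge\delta\cnorm{2}{\widehat{f_v}}$ for some $\delta=\delta(\delta')>0$, so its set of heavy coordinates $H_v:=\{i:|\widehat{f_v}(i)|\ge\tau\cnorm{2}{\widehat{f_v}}\}$ is nonempty and has $|H_v|\le\tau^{-2}$ for a threshold $\tau=\tau(\delta)$. I would then run the standard heavy-coordinate decoding: label each good $v$ by a (suitably weighted) random element of $H_v$. On an edge $e=(v,w)$ with both endpoints good, $J$-smoothness with $J\gg\tau^{-4}$ makes $\pi_{e,v},\pi_{e,w}$ injective on $H_v,H_w$ over a random incident edge, and the pushforward-agreement constraint forces the heavy images to coincide, so the random labels project to a common element of $[L]$ with constant probability; feeding $\kappa$ and the weak-expansion property of \cref{thm:smooth_label_cover} into this yields an $\Omega_\eps(1)$ expected fraction of satisfied edges, contradicting $\opt(\mathcal{L})\le\xi$ once $\xi$ is small enough. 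Finally, $n=|V|\cdot 2^R$ with $|V|=|\phi|^{\poly(1/\eps)}$ from the repetition-based \labelcover and $2^R=2^{\poly(1/\eps)}$, giving $n=|\phi|^{\poly(1/\eps)}$, and $A$ does not depend on $r$.

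The step I expect to be the main obstacle is the quantitative soundness analysis: threading the parameter chain from $\eps$ down through $\delta'$, $\delta$, $\tau$ to $(J,\xi,R)$ so that \cref{lem:dict} is applicable with a usable $\delta$, so that the good mass $\kappa$ converts, via weak expansion, into a constant fraction of good--good edges, and so that $J$-smoothness eliminates all $\le\tau^{-2}$-way collisions among heavy coordinates under the projections $\pi_{e,v}$. A subsidiary but necessary check is that the precise weighted choice of random heavy label makes the two induced distributions on $[L]$ collide with constant probability given pushforward-agreement, which is where the $D$-to-$1$ structure is used. Verifying that $A$ is genuinely symmetric (so it can be fed to the duality arguments of \cref{sec:hypercontractive}) is then routine given the orthogonal-projection description.
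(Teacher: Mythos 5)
Your construction of $A$ as the orthogonal projector onto the subspace $\mathcal{W}$ of label-cover-consistent tuples of linear long codes is exactly the paper's $\bfA = \bfF^T\bfhatP\bfF$ (one checks $\bfF\bfF^T = \id$ on the linear Fourier side, so $\bfA$ really is the projector onto $\mathcal{W}$), and the completeness step is identical. Your observation that the $r<2$ regime gives $\enorm{2}{r}{A}\le \enorm{2}{2}{A}\le 1$ unconditionally is a nice clarifying remark.

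The gap is in the soundness argument, and it is not merely the quantitative threading you flag as the ``main obstacle''. You classify a vertex $v$ as good purely by the \emph{relative} criterion $\enorm{r}{f_v}>(\gamma_r+\delta')\cnorm{2}{\widehat{f_v}}$ and then apply \cref{lem:dict} to get the relative heaviness $\cnorm{4}{\widehat{f_v}}\ge\delta\cnorm{2}{\widehat{f_v}}$. But the heavy-coordinate decoding you sketch needs \emph{absolute} control on $\cnorm{2}{\widehat{f_v}}$. On an edge $e=(v,w)$ with both endpoints ``good'' in your sense, the consistency constraint transfers magnitude of order $\tau\cnorm{2}{\widehat{f_v}}$ from a heavy coordinate of $\widehat{f_v}$ into the fiber $\pi_{e,w}^{-1}(\pi_{e,v}(i^*))$; if $\cnorm{2}{\widehat{f_w}}\gg\cnorm{2}{\widehat{f_v}}$ (or vice versa), the resulting entry of $\widehat{f_w}$ may be far below $w$'s heaviness threshold $\tau\cnorm{2}{\widehat{f_w}}$, so your two decoded label-sets need not agree. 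The paper avoids this by decomposing $V$ into four sets $V_0,V_1,V_2,V_3$ and working only with $V_0=\{v : \cnorm{4}{\hatg_v}>\delta\epsilon,~\cnorm{2}{\hatg_v}\le 1/\epsilon\}$: the \emph{absolute} $\cnorm{4}$ lower bound together with the $\cnorm{2}$ upper bound puts all $v\in V_0$ on a common scale, which is precisely the hypothesis of the decoding lemma it then imports (Lemma 3.6 of \cite{BRS15}). The bound $|V_0|\ge\epsilon^2|V|$ is what you informally call ``good mass,'' but the conversion from an $r$-th-moment mass bound to a vertex-count bound for $V_0$ leans on the $V_3$ trick ($\sum_{V_3}\enorm{r}{g_v}^r\le\epsilon^{2-r}|V|$, which is exactly the origin of the $\epsilon^{2-r}$ slack in the theorem statement) together with the $\cnorm{2}{\hatg_v}\le 1/\epsilon$ ceiling on $V_0$; your argument produces neither the absolute $\cnorm{4}$ floor nor the $\cnorm{2}$ ceiling, so it does not directly feed into the decoding step you invoke.
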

We adapt the proof by \Briet, Regev and Saket for the hardness of $2 \to 1$ and $\infty \to 1$
norms to prove the above theorem. A small difference is that, 
unlike their construction which starts with a Fourier encoding of the long-code functions, we start
with an evaluation table (to ensure that the resulting matrices are symmetric). We also analyze
their dictatorship tests for the case of fractional $r$. 
\subsection{Reduction and Completeness}
Let $\calL = (G, [R], [L], \Sigma)$ be an
instance of \labelcover with $G = (V, E)$. 
In the rest of this section, $n = |V|$ and our reduction will construct a self-adjoint linear operator 
$\bfA : \R^N \to \R^N$ with $N = |V| \cdot 2^R$, which yields a symmetric $N \times N$ matrix
representing $\bfA$ in the standard basis. 
This section concerns the following four Hilbert spaces based on the standard Fourier
analysis composed with $\calL$.
\begin{enumerate}
    \item Evaluation space $\R^{2^R}$. Each function in this space is
        denoted by $f : \{ \pm 1 \}^R \to \R$.  The inner product is defined as
        $\ip{f, g} := \Ex{x \in \{ \pm 1 \}^R}{f(x) g(x)}$, which
        induces $\norm{2}{f} := \enorm{2}{f}$. We also define $\enorm{p}{f} := \Ex{x}{|f(x)|^p}^{1/p}$ in this space. 
    \item Fourier space $\R^{R}$. Each function in this space is denoted by $\hatf : [R]
        \to \R$.  The inner product is defined as $\mysmalldot{\hatf}{\hatg} := \sum_{i
        \in [R]} \hatf(i) \hatg(i)$, which induces $\norm{2}{\hatf} := \cnorm{2}{\hatf}$. 

    \item Combined evaluation space $\R^{V \times 2^R}$. Each function in this space is
        denoted by $\bff : V \times \{ \pm 1 \}^R \to \R$.  The inner product is defined as
        $\mysmalldot{\bff}{\bfgg} := \Ex{v \in V}{\Ex{x \in \{ \pm 1 \}^R}{\bff(v, x)
        \bfgg(v, x)}}$, which induces $\enorm{2}{\bff} := \enorm{2}{\bff}$. 
We also define $\norm{p}{\bff} := \Ex{v, x}{|\bff(v, x)|^p}^{1/p}$ in this space.
    \item Combined Fourier space $\R^{V \times R}$. Each function in this space is denoted
        by $\bfhatf : V \times [R] \to \R$.  The inner product is defined as
        $\mysmalldot{\bfhatf}{\bfhatg} := \Ex{v \in V}{\sum_{i \in [R]} \bfhatf(v, i)
    \bfhatg(v, i) }$, which induces $\norm{2}{\bfhatf}$, which is neither a counting nor an
    expectation norm.
\end{enumerate}
Note that $\bff \in \R^{V \times 2^R}$ and a vertex
$v \in V$ induces $f_v \in \R^{2^R}$ defined by $f_v(x) := \bff(v, x)$, and similarly
$\bfhatf \in \R^{V \times R}$ and a vertex $v \in V$ induces $\hatf_v \in \R^{R}$
defined by $\hatf_v(x) := \bfhatf(v, x)$.  
As defined in \cref{sec:fourier}, 
we use the standard
following (partial) {\em Fourier transform} $F$ that maps $f \in \R^{2^R}$ to $\hatf \in
\R^{R}$ as follows.  \footnote{We use only {\em linear Fourier coefficients} in this
work. $F$ was defined as $F_P$ in~\cref{sec:fourier}.} \begin{equation} \hatf(i) = (Ff) (i) := \Ex{x \in \{\pm 1 \}^R}{x_i f(x)}.  \end{equation} 

The (partial) {\em inverse Fourier
transform} $F^T$ that maps $\hatf \in \R^{R}$ to $f \in \R^{2^R}$ is defined by
\begin{equation} f(x) = (F^T \hatf)(x) := \sum_{i \in [R]} x_i \hatf(i). \end{equation}

This Fourier transform can be naturally extended to combined spaces by defining $\bfF :
\bff \mapsto \bfhatf$ as $f_v \mapsto \hatf_v$ for all $v \in V$.  Then $\bfF^T$ maps
$\bfhatf$ to $\bff$ as $\hatf_v \mapsto f_v$ for all $v \in V$.

Finally, let $\bfhatP : \R^{V \times R} \to \R^{V
\times R}$ be the orthogonal projector to the following subspace of the combined
Fourier space: 
\begin{equation} 
    \bfhatL := \left\{ \bfhatf \in \R^{V \times R} :
    \sum_{j \in \pi^{-1}_{e, u}(i)} \hatf_u(i) = \sum_{j \in \pi^{-1}_{e, v}(i)}
    \hatf_v(j) \mbox{ for all } (u, v) \in E \mbox{ and } i \in [L] \right\}\mper
\end{equation}
Our transformation $\bfA : \R^{V \times 2^R} \to \R^{V \times 2^R}$ is
defined by \begin{equation} \bfA := (\bfF^T) \bfhatP \bfF.  \end{equation} In other
words, given $\bff$, we apply the Fourier transform for each $v \in V$, project the
combined Fourier coefficients to $\bfhatL$ that checks the \labelcover consistency, and
apply the inverse Fourier transform. Since $\bfhatP$ is a projector, $\bfA$ is
self-adjoint by design. 

We also note that a similar reduction that produces $(\bfF^T) \bfhatP$ 
was used in Guruswami \etal~\cite{GRSW16} and \Briet \etal~\cite{BRS15} 
for subspace approximation and Grothendieck-type problems, 
and indeed this reduction suffices for~\cref{thm:brs}
except the self-adjointness and additional properties in the completeness case. 

\paragraph{Completeness.}
We prove the following lemma for the completeness case. 
A simple intuition is that if $\calL$ admits a good labeling, 
we can construct a $\bff$ such that each $f_v$ is a linear function and 
$\bfhatf$ is already in the subspace $\bfhatL$. 
Therefore, each of Fourier transform, projection to $\bfhatL$, and inverse Fourier transform 
does not really change $\bff$.

\begin{lemma} [Completeness] Let $\ell : V \to [R]$ be a labeling that satisfies every
    edge of $\calL$. There exists a function $\bff \in \R^{V \times 2^R}$ such that
    $\bff(v, x)$ is either $+1$ or $-1$ for all $v \in V, x \in \{ \pm 1 \}^R$ and
    $\bfA \bff = \bff$.
\end{lemma}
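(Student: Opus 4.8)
The plan is to take $\bff$ to be the vector of \emph{dictator functions} picked out by the satisfying labeling: for each $v \in V$, set $f_v(x) := x_{\ell(v)}$, equivalently $\bff(v,x) := x_{\ell(v)}$. Since $x \in \{\pm 1\}^R$, every entry $\bff(v,x)$ is $\pm 1$, so the only thing left to verify is the fixed-point identity $\bfA\bff = \bff$. I would establish this by tracking $\bff$ through the three maps composing $\bfA = (\bfF^T)\bfhatP\bfF$.

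First, for the Fourier step: each $f_v$ is a linear function, so its partial Fourier transform is $\hatf_v(i) = \Ex{x \in \{\pm 1\}^R}{x_i\,x_{\ell(v)}} = \mathbf{1}[i = \ell(v)]$; that is, $\hatf_v \in \R^{R}$ is the indicator vector of the single coordinate $\ell(v)$. Next, for the projection step, I would show that the resulting $\bfhatf$ already lies in the subspace $\bfhatL$, so that $\bfhatP$ fixes it: given an edge $e = (u,v) \in E$ and $i \in [L]$, one has $\sum_{j \in \pi^{-1}_{e,u}(j)} \hatf_u(j) = \mathbf{1}[\pi_{e,u}(\ell(u)) = i]$ and $\sum_{j \in \pi^{-1}_{e,v}(i)} \hatf_v(j) = \mathbf{1}[\pi_{e,v}(\ell(v)) = i]$, and these coincide for every $i$ precisely because $\ell$ satisfies $e$, i.e.\ $\pi_{e,u}(\ell(u)) = \pi_{e,v}(\ell(v))$. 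Hence $\bfhatP\bfhatf = \bfhatf$. Finally, for the inverse transform: since each $f_v$ is linear, $F^T F f_v = f_v$ (as recorded among the Fourier identities in \cref{sec:fourier}, where $F_P^T F_P$ is the identity exactly on linear functions), so $\bfF^T\bfhatf = \bff$. Composing the three steps yields $\bfA\bff = \bfF^T\bfhatP\bfF\bff = \bfF^T\bfhatf = \bff$.

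I do not expect a genuine obstacle here: the argument is a direct verification. The one place to be careful is matching the orientation of the label-cover projections $\pi_{e,u}, \pi_{e,v}$ used in the definition of $\bfhatL$ with the edge-satisfaction condition, together with the observation that a dictator encoding concentrates all of its (linear) Fourier weight on one coordinate, which is what makes the fiber sums $\sum_{j \in \pi^{-1}_{e,\cdot}(i)}$ collapse to indicators. Note that smoothness, the $D$-to-$1$ property, and weak expansion of $\calL$ play no role in the completeness direction; they are needed only for the soundness analysis.
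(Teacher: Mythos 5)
Your proof is correct and follows the same route as the paper: take $\bff(v,x)=x_{\ell(v)}$, observe $\hatf_v$ is the indicator of $\ell(v)$, check membership in $\bfhatL$ using edge satisfaction, and recover $\bff$ by the inverse partial Fourier transform since each $f_v$ is linear. (There is a harmless typo in your fiber-sum display: $\pi^{-1}_{e,u}(j)$ should read $\pi^{-1}_{e,u}(i)$.)
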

\begin{proof} Let $\bff(v, x) := x_{\ell(v)}$ for
    every $v \in V, x \in \{ \pm 1 \}^R$.  Consider $\bfhatf = \bfF \bff$.  For each
    vertex $v \in V$, $\bfhatf(v, i) = \hatf_v(i) = 1$ if $i = \ell(v)$ and $0$
    otherwise.  Since $\ell$ satisfies every edge of $\calL$, $\bfhatf \in \bfhatL$ and
    $\bfhatP \bfhatf = \bfhatf$.  Finally, since each $f_v$ is a linear function, the
    partial inverse Fourier transform $F^T$ satisfies $(F^T) \hatf_v = f_v$, which
    implies that $(\bfF^T) \bfhatf = \bff$. Therefore, $\bfA \bff = (\bfF^T \bfhatP
    \bfF) \bff = \bff$.
\end{proof}

\subsection{Soundness}
We prove the following soundness lemma. 
This finishes the proof of~\cref{thm:brs} since~\cref{thm:smooth_label_cover} 
guarantees NP-hardness of \labelcover for arbitrarily small $\xi > 0$ and arbitrarily large $J \in \N$. 

\begin{lemma}[Soundness] \label{lem:soundness} For every $\varepsilon > 0$, 
	there exist $\xi > 0$ (that determines $D = D(\xi)$ as in~\cref{thm:smooth_label_cover}) 
	and $J \in \N$ such that if $\opt(\calL) \leq \xi$, $\calL$ is $D$-to-$1$, and $\calL$ is $J$-smooth, 
	$\enorm{2}{r}{\bfA} \leq \gamma_{r} + 4\varepsilon^{2-r}$ for every $1 \leq r < 2$. 
\end{lemma}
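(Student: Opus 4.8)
The plan is to argue by contradiction: suppose $\enorm{2}{r}{\bfA} > \gamma_r + 4\varepsilon^{2-r}$ for some $r\in[1,2)$, and produce from this a labeling of $\calL$ satisfying more than a $\xi$-fraction of the edges, contradicting $\opt(\calL)\le\xi$. Since $\bfA$ acts on a finite-dimensional space we may pick $\bff$ with $\enorm{2}{\bff}=1$ attaining $\enorm{r}{\bfgg}=\enorm{2}{r}{\bfA}$, where $\bfgg:=\bfA\bff=\bfF^T\bfhatP\bfhatf$ and $\bfhatf=\bfF\bff$. For each $v$, let $h_v\in\R^R$ be the $v$-block of $\bfhatP\bfhatf$; then $\bfgg_v=F^T h_v$ is a linear function on $\{\pm1\}^R$ whose (linear) Fourier coefficients are exactly $h_v$. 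Because $\bfF$ contracts the expectation $2$-norm and $\bfhatP$ is an orthogonal projection, $\Ex{v}{\cnorm{2}{h_v}^2}=\norm{2}{\bfhatP\bfhatf}^2\le\enorm{2}{\bff}^2=1$. Writing $w_v:=\cnorm{2}{h_v}^2$, so $\Ex{v}{w_v}\le1$, the quantity to bound is
\[
\enorm{r}{\bfgg}^r ~=~ \Ex{v\,:\,w_v>0}{w_v^{\,r/2}\cdot \enorm{r}{F^T(h_v/\cnorm{2}{h_v})}^r}\mper
\]

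Next I would split $V$ by the $\ell_4/\ell_2$ ratio of $h_v$: call $v$ \emph{spread} if $\cnorm{4}{h_v}\le\delta\cnorm{2}{h_v}$ and \emph{peaky} otherwise, where $\delta=\delta(\varepsilon)>0$ is the constant furnished by \cref{lem:dict} (whose proof yields a $\delta$ independent of $r\in[1,2)$). For a spread $v$, the contrapositive of \cref{lem:dict} applied to the unit-norm linear function $F^T(h_v/\cnorm{2}{h_v})$ gives $\enorm{r}{F^T(h_v/\cnorm{2}{h_v})}\le\gamma_r+\varepsilon$; for a peaky $v$ we use only $\enorm{r}{F^T(h_v/\cnorm{2}{h_v})}\le\enorm{2}{F^T(h_v/\cnorm{2}{h_v})}=1$. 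Since $t\mapsto t^{r/2}$ is concave and $\Ex{v}{w_v}\le1$ we get $\Ex{v}{w_v^{r/2}\,\ind{v\ \mathrm{spread}}}\le1$ and $\Ex{v}{w_v^{r/2}\,\ind{v\ \mathrm{peaky}}}\le W^{r/2}$ with $W:=\Ex{v}{w_v\,\ind{v\ \mathrm{peaky}}}$, hence
\[
\enorm{r}{\bfgg}^r ~\le~ (\gamma_r+\varepsilon)^r + W^{r/2}\mper
\]
By subadditivity of $t\mapsto t^{1/r}$ (as $1/r\le1$), $\enorm{r}{\bfgg}\le(\gamma_r+\varepsilon)+W^{1/2}\le\gamma_r+\varepsilon^{2-r}+W^{1/2}$, using $\varepsilon\le\varepsilon^{2-r}$ for $\varepsilon\le1$; this is below $\gamma_r+4\varepsilon^{2-r}$ once $W\le\varepsilon^2$. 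Thus everything reduces to choosing the parameters $\xi$ and $J$ of \cref{thm:smooth_label_cover} so that $W\le\varepsilon^2$.

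The main obstacle is exactly this bound on $W$, and it is the only place $\opt(\calL)\le\xi$ is used (in the completeness case one gets $\bfgg=\bff$ with every $h_v$ maximally peaky, so some hypothesis on $\calL$ is essential). I would follow the decoding argument of \Briet--Regev--Saket~\cite{BRS15}, also used in~\cite{GRSW16}: for a peaky $v$ the set $L_v:=\{i\in[R]:h_v(i)^2\ge\delta^4 w_v\}$ is non-empty (else $\cnorm{4}{h_v}^4=\sum_i h_v(i)^4\le(\max_i h_v(i)^2)w_v<\delta^4 w_v^2$) and has size $\le\delta^{-4}$, so we may define a randomized labeling with $\ell(v)\in_R L_v$ for peaky $v$. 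For an edge $e=(u,v)$ between peaky vertices of comparable weight, a heavy coordinate $i^*\in L_v$ folds under $\pi_{e,v}$ to a \emph{single} coordinate of the $[L]$-indexed ``bucket vector'' at $v$; $J$-smoothness (with $J$ taken large after $\delta$) ensures that over a random edge at $v$ this bucket is, with high probability, dominated by $i^*$ alone, so the linear constraints defining $\bfhatL$ transfer a $\delta^{O(1)}$-fraction of $w_v$ onto one bucket of size $\le D$ of $h_u$; Cauchy--Schwarz and the $D$-to-$1$ bound then make $\ell(u)$ match $\pi_{e,u}(\ell(v))$ with probability $\Omega_{\delta,D}(1)$. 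Hence $\ell$ satisfies an $\Omega_{\delta,D}(1)$-fraction of the edges internal to any weight-class of peaky vertices, and combining this with the Weak Expansion property of \cref{thm:smooth_label_cover} (a vertex set of measure $\mu$ spans $\ge(\mu^2/2)|E|$ edges) and $\opt(\calL)\le\xi$ forces each such class to have small measure; summing over the $O(\log(1/\varepsilon))$ relevant dyadic weight-classes (lighter ones contribute $o(\varepsilon^2)$ to $W$ directly) yields $W\le\varepsilon^2$ once $\xi$ is small and $J$ large enough. The delicate points — making the ``single dominant bucket'' step quantitative for arbitrary $D=D(\xi)$, handling vertices whose $w_v$ is too small to decode from, and the weight-class bookkeeping — are precisely where the careful estimates of~\cite{BRS15} are required, but no new ideas beyond those appear to be needed here.
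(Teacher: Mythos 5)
Your reduction from $\enorm{r}{\bfgg}^r$ to the weight $W=\Ex{v}{w_v\,\ind{v\ \mathrm{peaky}}}$ via Jensen is arithmetically correct, but the resulting target — ``choose $\xi,J$ so that $W\le\varepsilon^2$'' — is not the right invariant and is, in fact, unachievable in general. The problem is the vertices with $\cnorm{2}{\hatg_v}$ very large (say $w_v\gg 1/\varepsilon^2$). A handful of such vertices can carry almost all of the $\ell_2$-mass of $\bfhatg$, and if their local vectors happen to be peaky they make $W=\Theta(1)$; yet because $r<2$ their contribution to $\Ex{v}{\enorm{r}{g_v}^r}$ is only $\Ex{v}{\cnorm{2}{\hatg_v}^{r-2}\cnorm{2}{\hatg_v}^{2}}\le\varepsilon^{2-r}$, so they do no harm to the quantity you actually care about. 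They are, moreover, exactly the vertices your weight-class decoding cannot use: your sketch treats light classes (``contribute $o(\varepsilon^2)$ to $W$ directly'') and moderate classes (decode via smoothness, $D$-to-$1$ and weak expansion), but says nothing about the heavy classes, and it cannot — by Markov a heavy class has measure $O(\varepsilon^2)$, so weak expansion provides too few internal edges to decode from, while its $W$-contribution is bounded only by $1$. In short, $\opt(\calL)\le\xi$ does not imply $W\le\varepsilon^2$; the Jensen step $\Ex{v}{w_v^{r/2}\ind{\mathrm{peaky}}}\le W^{r/2}$ discards precisely the sublinear $r/2<1$ exponent that makes the heavy vertices harmless.

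The paper resolves this by never passing through $W$: it defines $V_0:=\{v:\cnorm{4}{\hatg_v}>\delta\varepsilon\ \text{and}\ \cnorm{2}{\hatg_v}\le 1/\varepsilon\}$, bounds the contributions of the complementary sets $V_1$ (too small in $\ell_2$), $V_2$ (spread, handled by the contrapositive of \cref{lem:dict} plus Jensen), and $V_3$ (heavy, bounded by $\varepsilon^{2-r}|V|$ using $r<2$), and derives the \emph{cardinality} bound $|V_0|\ge\varepsilon^2|V|$; this is exactly the hypothesis required by the BRS decoding lemma (Lemma 3.6 of~\cite{BRS15}), which you quote correctly. Your decoding sketch for the moderate class is essentially this lemma, so the missing piece is not the decoding itself but the bookkeeping before it: you need both the absolute lower bound on $\cnorm{4}{\hatg_v}$ (so that the decoding works) and the absolute upper bound on $\cnorm{2}{\hatg_v}$ (so that the contribution per vertex is controlled and the heavy vertices are split off), and you need a \emph{count}, not a \emph{weight}. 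Keeping the sharper Jensen factor $\Pr[\mathrm{peaky}]^{1-r/2}W^{r/2}$ still does not by itself recover the paper's split; the cleanest fix is simply to replicate the four-way cover $V_0,V_1,V_2,V_3$.
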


The proof closely follows that of Lemma 3.3 of~\cite{BRS15}. 
We (partially) generalize their proof for $1$-norm (of general norms) to $r$-norms when $1 \leq r < 2$. 

\begin{proof}
Let $\bff \in \R^{V \times 2^R}$ be an arbitrary vector such that $\enorm{2}{\bff} =
1$.  Let $\bfhatf = \bfF \bff$, $\bfhatg = \bfhatL \bfhatf$, and $\bfgg = \bfF^T
\bfhatg$ so that $\bfgg = (\bfF^T \bfhatL \bfF) \bff = \bfA \bff$.  By Parseval's
theorem, $\cnorm{2}{\hatf_v} \leq \enorm{2}{f_v}$ for all $v \in V$ and
$\norm{2}{\bfhatf} \leq \enorm{2}{\bff} \leq 1$.  Since $\bfhatL$ is an orthogonal
projection, $\norm{2}{\bfhatg} \leq \norm{2}{\bfhatf} \leq 1$.  
Fix $1 \leq r < 2$ and suppose
\begin{equation}
    \enorm{r}{\bfgg}^r = \Ex{v \in V}{\enorm{r}{g_v}^r} \geq \gamma_r^r +
    4 \epsilon^{2 - r}\mper
    \label{eq:soundness}
\end{equation}

Use~\cref{lem:dict} to obtain $\delta = \delta(\epsilon)$ such that $\enorm{p}{g_v}^p >
(\gamma_p^p + \epsilon) \cnorm{2}{\hatg_v}^p$ implies $\cnorm{4}{\hatg} > \delta
\cnorm{2}{\hatg}$ for all $1 \leq p < 2$ (so that $\delta$ does not depend on $r$), 
and consider \begin{equation} V_0 := \{ v \in V : \cnorm{4}{\hatg_v}
    > \delta \epsilon \mbox{ and } \cnorm{2}{\hatg_v} \leq 1/\epsilon \}.
    \label{eq:vzero} \end{equation} We prove the following lemma that lower bounds the
size of $V_0$.
\begin{lemma}
    For $V_0 \subseteq V$ defined as in~\eqref{eq:vzero}, we have $|V_0| \geq \epsilon^2
    |V|$.
\end{lemma}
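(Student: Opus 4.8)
The plan is to prove the bound by contradiction: assuming $|V_0| < \epsilon^2 |V|$, I will contradict the standing hypothesis \eqref{eq:soundness}, namely $\frac{1}{|V|}\sum_{v\in V}\enorm{r}{g_v}^r \ge \gamma_r^r + 4\epsilon^{2-r}$. Two elementary observations drive the argument. First, since $\bfhatL$ is an orthogonal projection and $\enorm{2}{\bff}=1$, Parseval's theorem gives $\frac{1}{|V|}\sum_{v\in V}\cnorm{2}{\hatg_v}^2 = \norm{2}{\bfhatg}^2 \le 1$. Second, since $\bfgg = \bfF^T\bfhatg$ and $\bfhatg$ has only linear Fourier coordinates, each $g_v = F^T\hatg_v$ is a linear function, so $\enorm{2}{g_v} = \cnorm{2}{\hatg_v}$, and because $r<2$, $\enorm{r}{g_v}\le\enorm{2}{g_v}=\cnorm{2}{\hatg_v}$.

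Next I would partition $V$ using the two thresholds from the definition \eqref{eq:vzero} of $V_0$: set $V_{\mathrm{big}} = \{v:\cnorm{2}{\hatg_v}>1/\epsilon\}$, keep $V_0$ as defined, and let $V_{\mathrm{conc}} = \{v:\cnorm{2}{\hatg_v}\le 1/\epsilon,\ \cnorm{4}{\hatg_v}\le\delta\epsilon\}$, so that $V = V_{\mathrm{big}}\sqcup V_0\sqcup V_{\mathrm{conc}}$, and bound the contribution of each piece to $\frac{1}{|V|}\sum_v\enorm{r}{g_v}^r$. For $V_{\mathrm{big}}$, Markov's inequality applied to $\sum_v\cnorm{2}{\hatg_v}^2\le |V|$ gives $|V_{\mathrm{big}}| < \epsilon^2|V|$, and then \Holder's inequality with exponents $2/r$ and $2/(2-r)$ yields $\frac{1}{|V|}\sum_{v\in V_{\mathrm{big}}}\enorm{r}{g_v}^r \le \bigl(\tfrac{1}{|V|}\sum_v\cnorm{2}{\hatg_v}^2\bigr)^{r/2}\bigl(\tfrac{|V_{\mathrm{big}}|}{|V|}\bigr)^{(2-r)/2}\le \epsilon^{2-r}$. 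For $V_0$, the whole reason the cap $\cnorm{2}{\hatg_v}\le 1/\epsilon$ appears in \eqref{eq:vzero} is that it forces $\enorm{r}{g_v}^r\le\cnorm{2}{\hatg_v}^r\le\epsilon^{-r}$, whence the contradiction hypothesis gives $\frac{1}{|V|}\sum_{v\in V_0}\enorm{r}{g_v}^r \le \frac{|V_0|}{|V|}\epsilon^{-r} < \epsilon^{2-r}$.

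The remaining piece, $V_{\mathrm{conc}}$, is where the real content sits. Applying the contrapositive of \cref{lem:dict} with its parameter $p$ set to $r$ (crucially, the $\delta$ it provides is uniform over $1\le r<2$), whenever $\cnorm{4}{\hatg_v}\le\delta\cnorm{2}{\hatg_v}$ one obtains $\enorm{r}{g_v}\le(\gamma_r+\epsilon)\cnorm{2}{\hatg_v}$. For $v\in V_{\mathrm{conc}}$ with $\cnorm{2}{\hatg_v}\ge\epsilon$ this applies since $\cnorm{4}{\hatg_v}\le\delta\epsilon\le\delta\cnorm{2}{\hatg_v}$, while for $\cnorm{2}{\hatg_v}<\epsilon$ one simply has $\enorm{r}{g_v}<\epsilon$; in both cases $\enorm{r}{g_v}^r\le(\gamma_r+\epsilon)^r\cnorm{2}{\hatg_v}^r+\epsilon^r$. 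Summing over $V_{\mathrm{conc}}$ and using Jensen's inequality for the concave map $t\mapsto t^{r/2}$ (so $\frac{1}{|V|}\sum_v\cnorm{2}{\hatg_v}^r\le 1$) gives $\frac{1}{|V|}\sum_{v\in V_{\mathrm{conc}}}\enorm{r}{g_v}^r\le(\gamma_r+\epsilon)^r+\epsilon^r = \gamma_r^r + O(\epsilon)$. Adding the three bounds, $\frac{1}{|V|}\sum_v\enorm{r}{g_v}^r < \gamma_r^r + 2\epsilon^{2-r} + O(\epsilon)$; since $r<2$ we have $\epsilon^{2-r}\ge\epsilon$ for $\epsilon\le 1$ (and $\epsilon^{2-r}/\epsilon\to\infty$ when $r>1$), so for $\epsilon$ small this is strictly below $\gamma_r^r + 4\epsilon^{2-r}$, contradicting \eqref{eq:soundness}.

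The main obstacle is the $V_{\mathrm{conc}}$ estimate: one must push the ``spread Fourier mass'' condition $\cnorm{4}{\hatg_v}\le\delta\epsilon$ through \cref{lem:dict} to conclude that $\enorm{r}{g_v}$ is essentially $\gamma_r\cnorm{2}{\hatg_v}$, and then verify that all of the accumulated additive error — from $(\gamma_r+\epsilon)^r-\gamma_r^r$, from the corner case $\cnorm{2}{\hatg_v}<\epsilon$, and from the $\epsilon^r$ term — is genuinely dominated by $\epsilon^{2-r}$. This is routine for $r$ bounded away from $2$, but the slack shrinks as $r\to 1$ and must be tracked carefully; choosing $\epsilon$ (and through it $\delta$, $\xi$, and $J$) last, as the soundness lemma does, absorbs this bookkeeping.
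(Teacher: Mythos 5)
Your proof is correct and follows essentially the same strategy as the paper's: the identical threshold decomposition (large $\cnorm{2}{\hatg_v}$, the set $V_0$, and the spread-mass remainder), the same ingredients (Parseval, Jensen, Markov, and \cref{lem:dict}), just phrased as a contradiction rather than a direct rearrangement of the sum; the Hölder/Markov treatment of $V_{\mathrm{big}}$ is an equivalent minor variation on the paper's $V_3$ bound. The only delicate point is that the constants become tight at $r=1$ (there your $O(\epsilon)$ term is exactly $2\epsilon=2\epsilon^{2-r}$), which you correctly flag; the contradiction still closes because the $V_0$ contribution is strictly below $\epsilon^{2-r}$.
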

\begin{proof} The proof closely follows the proof of Lemma 3.4 of~\cite{BRS15}.
    Define the sets
    \begin{align*}
        V_1 &= \{ v \in V : \cnorm{4}{\hatg_v}
        \leq \delta \epsilon \mbox{ and } \cnorm{2}{\hatg_v} < \epsilon \}, \\ V_2 &= \{ v
        \in V : \cnorm{4}{\hatg_v} \leq \delta \epsilon \mbox{ and } \cnorm{2}{\hatg_v}
        \geq \epsilon \}, \\ V_3 &= \{ v \in V : \cnorm{2}{\hatg_v} > 1/\epsilon \}.
    \end{align*}
    From~\eqref{eq:soundness}, we have
    \begin{equation}
        \sum_{v \in V_0} \enorm{r}{g_v}^r + \sum_{v \in V_1} \enorm{r}{g_v}^r + \sum_{v \in V_2}
        \enorm{r}{g_v}^r + \sum_{v \in V_3} \enorm{r}{g_v}^r \geq (\gamma_r^r + 4
        \epsilon^{2 - r}) |V|\mper
        \label{eq:soundness_sum}
    \end{equation}
    We bound the four sums on the left side of~\eqref{eq:soundness_sum} individually.
    Parseval's theorem and the fact that $r < 2$ implies $\enorm{r}{g_v} \leq
    \enorm{2}{g_v}=\cnorm{2}{\hatg_v}$, and since $\cnorm{2}{\hatg_v} \leq 1/\epsilon$
    for every $v \in V_0$, the first sum in~\eqref{eq:soundness_sum} can be bounded by
    \begin{equation}
        \sum_{v \in V_0} \enorm{r}{g_v}^r \leq |V_0| / \epsilon^r.
        \label{eq:soundness_zero}
    \end{equation}
    Similarly, using the definition of $V_1$
    the second sum in~\eqref{eq:soundness_sum} is at most $\epsilon^r |V|$.
    By~\cref{lem:dict}, for each $v \in V_2$, we have $\enorm{r}{g_v}^r \leq (\gamma_r^r +
    \epsilon) \cnorm{2}{\hatg_v}^r$. Therefore, the third sum in~\eqref{eq:soundness_sum}
    is bounded as
    \begin{align}
        \sum_{v \in V_2} \enorm{r}{g_v}^r &\leq (\gamma_r^r +
            \epsilon) \sum_{v \in V_2} \cnorm{2}{\hatg_v}^r    &   \nonumber \\
            &= (\gamma_r^r + \epsilon) |V_2| \Ex{v \in V_2}{\cnorm{2}{\hatg_v}^r}
            &     \nonumber \\
            &\leq (\gamma_r^r + \epsilon) |V_2| \Ex{v \in V_2}{\cnorm{2}{\hatg_v}^2}^{r / 2} &&
            \mbox{(By Jensen using $r < 2$)}    \nonumber \\
        &= (\gamma_r^r + \epsilon) |V_2| \bigg( \frac{\sum_{v \in V_2}
            \cnorm{2}{\hatg_v}^2 }{|V_2|} \bigg) ^{r / 2} &\nonumber \\
        &\leq (\gamma_r^r + \epsilon) |V_2|^{1 - r / 2} |V|^{r / 2} &&
            (\sum_{v \in V_2} \cnorm{2}{\hatg_v}^2 \leq \sum_{v \in V} \cnorm{2}{\hatg_v}^2
             \leq |V|) \nonumber \\
        &\leq (\gamma_r^r + \epsilon) |V|. &
    \end{align}
    Finally, the fourth sum in~\eqref{eq:soundness_sum} is bounded by
    \begin{align}
        \sum_{v \in V_3} \enorm{r}{g_v}^r &\leq \sum_{v \in V_3} \enorm{2}{g_v}^r 
            && \mbox{(Since $r < 2$)}\nonumber \\
        &= \sum_{v \in V_3} \cnorm{2}{\hatg_v}^r    && \mbox{(By Parseval's
           theorem)}   \nonumber \\
        &= \sum_{v \in V_3} \cnorm{2}{\hatg_v}^{r - 2}\cnorm{2}{\hatg_v}^2 \nonumber \\
        &< \sum_{v \in V_3} \epsilon^{2 - r}\cnorm{2}{\hatg_v}^2 
            && (\cnorm{2}{\hatg_v} > 1/\epsilon \mbox{ for } v \in V_3,\mbox{ and } r < 2)
            \nonumber \\
        &= \epsilon^{2 - r} \sum_{v \in V_3}\cnorm{2}{\hatg_v}^2 \leq \epsilon^{2- r}|V|.&&
    \end{align} Combining the above with~\eqref{eq:soundness_sum} yields
    \begin{align}
        |V_0| & \geq \epsilon^{r} \sum_{v\in V_0} \enorm{r}{g_v}^r  \nonumber \\
        & \geq \epsilon^{r} \bigg( (\gamma_r^r + 4\epsilon^{2 - r}) |V| - 
            \epsilon^r |V| - (\gamma_r^r + \epsilon)|V| - \epsilon^{2-r} |V| \bigg)
            \nonumber \\
        & \geq \epsilon^{r} \epsilon^{2 - r} |V|  =   \epsilon^2 |V|, 
    \end{align} where the last inequality uses the fact that
    $\epsilon^{2- r } \geq \epsilon \geq \epsilon^r$.
\end{proof}
Therefore, $|V_0| \geq \epsilon^2 |V|$ and every vertex of $v$ satisfies 
$\cnorm{4}{\hatg_v} > \delta \epsilon$ and $\cnorm{2}{\hatg_v} \leq 1 / \epsilon$.
Using only these two facts
together with $\bfhatg \in \bfhatL$, \Briet \etal~\cite{BRS15} proved that if the
smoothness parameter $J$ is large enough given other parameters, $\calL$
admits a labeling that satisfies a significant fraction of edges.
\begin{lemma} [Lemma
    3.6 of \cite{BRS15}] Let $\beta := \delta^2 \epsilon^3$.  There exists an absolute
    constant $c' > 0$ such that if $\calL$ is $T$-to-$1$ and $T / (c' \epsilon^8
    \beta^4)$-smooth for some $T \in \N$, there is a labeling that satisfies at least $\epsilon^8 \beta^4
    /1024$ fraction of $E$.
\end{lemma}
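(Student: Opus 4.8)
The plan is to run the standard ``influence-decoding'' argument for smooth \labelcover, which is precisely the argument of~\cite{BRS15}; in our setting it applies once we record that $\bfhatg \in \bfhatL$ together with the two bounds $\cnorm{4}{\hatg_v} > \delta\epsilon$ and $\cnorm{2}{\hatg_v} \le 1/\epsilon$ for every $v \in V_0$, where $|V_0| \ge \epsilon^2 |V|$. For each $v \in V_0$ put $\mu_v(i) := \hatg_v(i)^2 / \cnorm{2}{\hatg_v}^2$, a probability distribution on $[R]$ whose collision weight $\beta' := \sum_i \mu_v(i)^2 = \cnorm{4}{\hatg_v}^4 / \cnorm{2}{\hatg_v}^4$ satisfies $\beta' \ge \delta^4\epsilon^8$ (a quantity comparable to $\beta^4/\epsilon^4$ with $\beta = \delta^2\epsilon^3$). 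Since each $\mu_v(i) \le 1$, the heavy set $S_v := \{ i : \mu_v(i) \ge \beta'/2 \}$ has $|S_v| \le 2/\beta'$ and $\mu_v(S_v) \ge \beta'/2$. The candidate labeling assigns each $v \in V_0$ a label drawn independently from $\mu_v$, and an arbitrary label to each $v \notin V_0$; the aim is to show its expected fraction of satisfied edges is at least $\epsilon^8\beta^4 / 1024$.

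For an edge $e = (u,v)$ write $\widehat{G}_{e,v}(k) := \sum_{j \in \pi^{-1}_{e,v}(k)} \hatg_v(j)$ for $k \in [L]$; then $\bfhatg \in \bfhatL$ is exactly the system of identities $\widehat{G}_{e,u} = \widehat{G}_{e,v}$ over all edges. With $P_{e,v}(k) := \sum_{j \in \pi^{-1}_{e,v}(k)} \mu_v(j)$, the probability the labeling satisfies $e$ equals $\sum_k P_{e,u}(k) P_{e,v}(k)$. Call $e$ \emph{clean from $v$} if $\pi_{e,v}$ is injective on $S_v$ and, for every $j \in S_v$, the remaining coordinates in $j$'s fiber carry little squared mass, say $\sum_{j' \neq j,\, \pi_{e,v}(j') = \pi_{e,v}(j)} \hatg_v(j')^2 \le \hatg_v(j)^2 / (16T)$. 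Smoothness controls both events: a union bound gives $\Pr{e \ni v}{\pi_{e,v}\text{ is not injective on }S_v} \le \binom{|S_v|}{2} / J$, while $\Ex{e \ni v}{\sum_{j \in S_v} \sum_{j' \neq j,\, \pi_{e,v}(j') = \pi_{e,v}(j)} \hatg_v(j')^2} \le |S_v| \cnorm{2}{\hatg_v}^2 / J$, so Markov together with $\hatg_v(j)^2 \ge (\beta'/2)\cnorm{2}{\hatg_v}^2$ for $j \in S_v$ shows that a $(1-\eta)$-fraction of edges incident to $v$ are clean from $v$ as soon as $J$ is at least $O(T/(\eta (\beta')^2))$ --- which is exactly what the hypothesis that $\calL$ is $T/(c'\epsilon^8\beta^4)$-smooth provides, for a suitable small absolute constant $c'$ and $\eta = \epsilon^4/4$. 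By regularity a $(1-2\eta)$-fraction of all edges are clean from both endpoints, and the weak-expansion property applied to $V_0$ guarantees at least $\epsilon^4/2$ of all edges lie inside $V_0$; hence at least $\epsilon^4/4$ of all edges are simultaneously internal to $V_0$ and doubly clean.

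It remains to show the decoding succeeds with probability polynomial in $\delta,\epsilon$ on each such edge $e = (u,v)$. Cleanness from $v$ gives, for $j \in S_v$ and $k := \pi_{e,v}(j)$, that the junk in the fiber $\pi^{-1}_{e,v}(k)$ contributes (Cauchy--Schwarz over $\le T$ coordinates) at most $|\hatg_v(j)|/4$ to $\widehat{G}_{e,v}(k)$, so $|\widehat{G}_{e,v}(k)| \ge |\hatg_v(j)|/2 \ge (\beta'/2)^{1/2}\delta\epsilon/2 > 0$. Since $\widehat{G}_{e,u}(k) = \widehat{G}_{e,v}(k)$ and $e$ is clean from $u$, the fiber $\pi^{-1}_{e,u}(k)$ must contain some $j' \in S_u$ with $|\hatg_u(j')|$ within a factor $2$ of $|\widehat{G}_{e,u}(k)|$ (the rest of that fiber being too light to account for such a large sum), whence $P_{e,u}(k) \ge \mu_u(j') \ge |\widehat{G}_{e,u}(k)|^2 \epsilon^2 / 4 \ge \beta' \cdot \delta^2\epsilon^4 / 64$. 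Summing over the distinct values $k = \pi_{e,v}(j)$ for $j \in S_v$ and using $P_{e,v}(\pi_{e,v}(j)) \ge \mu_v(j)$ yields $\sum_k P_{e,u}(k) P_{e,v}(k) \ge (\beta' \cdot \delta^2\epsilon^4/64)\cdot \mu_v(S_v) \ge (\beta')^2 \delta^2\epsilon^4 / 128$. Combining with the $\ge \epsilon^4/4$ fraction of good edges and $\beta' \ge \delta^4\epsilon^8$, $\beta = \delta^2\epsilon^3$, the labeling satisfies at least $\epsilon^8\beta^4/1024$ of $E$ after bookkeeping the absolute constants, as claimed.

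The step I expect to be the main obstacle is the cleanness estimate and, with it, the order in which the parameters must be fixed: smoothness bounds only \emph{pairwise} collision probabilities by $1/J$, so to make the fiber junk truly negligible --- small enough that $\widehat{G}_{e,v}(\pi_{e,v}(j))$ inherits both the sign and, up to a factor $2$, the magnitude of $\hatg_v(j)$, and symmetrically at $u$ --- one must first choose the heavy-set threshold, then the Markov cutoff $\eta$, and only then $J$, which is precisely why $J$ has to grow linearly in $T$. The remaining ingredients --- the $\ell_4/\ell_2$ concentration bounding $|S_v|$ via the definition of $V_0$, and the weak-expansion averaging --- are routine; the full argument is carried out in~\cite{BRS15}.
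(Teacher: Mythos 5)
The paper itself does not prove this lemma: it is imported verbatim as Lemma~3.6 of~\cite{BRS15}, and the sentence immediately preceding it says so explicitly. Your attempt is therefore a blind reconstruction of a cited result. It has the right overall architecture --- heavy set from the $\ell_4$ mass, cleanness via pairwise smoothness collisions, weak expansion to localize in $V_0$ --- but there is a concrete gap at the decoding step on the $u$-side.

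Your ``clean from $u$'' condition (injectivity of $\pi_{e,u}$ on $S_u$, plus a small-junk bound for the fiber of each $j \in S_u$) constrains only those fibers $\pi^{-1}_{e,u}(k)$ that actually contain an element of $S_u$. When you invoke $\widehat{G}_{e,u}(k) = \widehat{G}_{e,v}(k)$ for $k = \pi_{e,v}(j)$ with $j \in S_v$, nothing forces that $k$ to lie in $\pi_{e,u}(S_u)$, and for such $k$ your cleanness hypothesis is silent. The assertion ``the fiber $\pi^{-1}_{e,u}(k)$ must contain some $j' \in S_u$ (the rest being too light to account for such a large sum)'' therefore does not follow. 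Concretely, a fiber of up to $T$ coordinates in which every $j'$ is light --- $\hatg_u(j')^2$ below the heavy-set threshold, roughly $\tfrac{1}{2}\delta^4\epsilon^8\cnorm{2}{\hatg_u}^2$ --- can still yield $|\widehat{G}_{e,u}(k)|$ as large as $T$ times the square root of that threshold via constructive interference, which dwarfs your lower bound $(\beta'/2)^{1/2}\delta\epsilon/2$ once $T$ is large, and the hypotheses impose no cap on $T$. So the claimed bound $P_{e,u}(k) \ge |\widehat{G}_{e,u}(k)|^2 \epsilon^2/4$ breaks at exactly the step where you try to extract a single heavy $u$-coordinate from a large signed fiber sum. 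This is the step where a $1/T$ loss genuinely enters (Cauchy--Schwarz gives only $\sum_{j' \in \pi^{-1}_{e,u}(k)} \hatg_u(j')^2 \ge |\widehat{G}_{e,u}(k)|^2/T$), and the reason the lemma's smoothness requirement scales linearly in $T$ is precisely so the edge-level probability budget can absorb that factor; your cleanness events spend only $O(|S_v|^2/J)$ on pairwise collisions on the $S_v$ side and cannot pay for a $T$-sized $u$-side loss. The rest of the sketch (the bounds $|S_v| \le 2/\beta'$, $\mu_v(S_v) \ge \beta'/2$, $\beta' > \delta^4\epsilon^8$, and the weak-expansion estimate that at least an $\epsilon^4/2$ fraction of edges are internal to $V_0$) is fine.
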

This finishes the proof of~\cref{lem:soundness} by setting $\xi := \epsilon^8 \beta^4 /1024$
and $J := D(\xi) / (c' \epsilon^8 \beta^4)$ with $D(\xi)$ defined in~\cref{thm:smooth_label_cover}. 
Given a $3$-SAT formula, $\phi$, 
by the standard property of Smooth Label Cover, the size of the reduction is 
$|\phi|^{O(J \log (1 / \xi))} = |\phi|^{\poly(1/\eps)}$. 
\end{proof}

\section{Hardness of \onorm{p}{q}}\label{sec:hypercontractive}
In this section, we prove our main results. 
We prove \cref{thm:main_nhc} on hardness of approximating \onorm{p}{q} when $p \geq 2 \geq q$, and 
\cref{thm:main_hype} on hardness of approximating \onorm{p}{q} when 
$2 < p < q$. By duality, the same hardness is implied for the case of $p < q < 2$.  

Our result for $p \geq 2 \geq q$ in~\cref{sec:composition} follows 
from \cref{thm:brs} using additional properties in the completeness case. 
For hypercontractive norms, we start by showing 
constant factor hardness via reduction from \onorm{p}{2} (see \cref{isometry}), and then amplify the 
hardness factor by using the fact that all hypercontractive norms productivize under Kronecker product, 
which we prove in \cref{sec:productivize}. 

\subsection{Hardness for $p \geq 2 \geq q$}
\label{sec:composition}

We use~\cref{thm:brs} to prove hardness of \onorm{p}{q} for $p \geq 2 \geq q$, which proves~\cref{thm:main_nhc}.

\begin{proofof}{\cref{thm:main_nhc}}
	Fix $p, q$, and $\delta > 0$ such that $\infty \geq p \geq 2 \geq q$ and $p > q$. 
	Our goal is to prove that \onorm{p}{q} is NP-hard to approximate within a factor $1 / (\gamma_{p^*}\gamma_q + \delta)$. 
	For \onorm{2}{q} for $1 \leq q < 2$, Theorem~\ref{thm:brs} (with $\varepsilon \leftarrow \delta^{1/(2-q)}$) directly proves a hardness
     ratio of $1/(\gamma_q + \varepsilon^{2-q}) = 1/(\gamma_q + \delta)$. 
	By duality, it also gives an $1/(\gamma_{p^*} + \delta)$ hardness for \onorm{p}{2} for $p > 2$. 

    For \onorm{p}{q} for $p > 2 > q$,
	apply~\cref{thm:brs} with $\varepsilon = (\delta / 3)^{\max(1/(2-p^*), 1/(2-q))}$. 
	It gives a polynomial time reduction that produces a symmetric matrix $A \in \R^{n \times n}$ given a \threesat formula $\phi$. 
Our instance for \onorm{p}{q} is  $AA^T = A^2$.
    \begin{itemize}
        \item (Completeness) If $\phi$ is satisfiable, there exists $x \in \R^{n}$ such that $|x(i)| = 1$ for all $i \in [N]$ and $Ax = x$. Therefore, $A^2
            x = x$ and $\enorm{p}{q}{A^2} \geq 1$. 

        \item (Soundness) If $\phi$ is not satisfiable, 
\begin{align*}
\enorm{p}{2}{A} &~=~ \enorm{2}{p^*}{A} \leq  \gamma_{p^*} + \varepsilon^{2-p^*} \leq  \gamma_{p^*} +
                  \delta / 3, 
\mbox{ and } \\
\enorm{2}{q}{A} &~\leq~ \gamma_{q} + \varepsilon^{2-q} \leq \gamma_{q} + \delta / 3. 
\end{align*}
This implies that 
            \[ \enorm{p}{q}{A^2} \leq \enorm{p}{2}{A}  \enorm{2}{q}{A} \leq
                (\gamma_{p^*} + \delta /3)(\gamma_{q} + \delta / 3) \leq
                \gamma_{p^*}\gamma_q + \delta \mper \]
    \end{itemize} 
This creates a gap of $1 / (\gamma_{p^*} \gamma_q + \delta)$ between the completeness and the soundness case. The same gap holds for the counting norm since $\cnorm{p}{q}{A^2} = n^{1/q - 1/p} \cdot \enorm{p}{q}{A^2}$. 
\end{proofof}

\subsection{Reduction from \onorm{p}{2} via Approximate Isometries}
\label{isometry}
Let $A \in \R^{n \times n}$ be a hard instance of \onorm{p}{2}. For any $q \geq 1$, if a matrix 
$B \in \R^{m \times n}$ satisfies $\cnorm{q}{Bx} = (1 \pm o(1)) \cnorm{2}{x}$ for all $x \in \R^n$,  
then $\norm{p}{q}{BA} = (1 \pm o(1)) \norm{p}{2}{A}$. Thus, $BA$ will serve as a hard instance for 
\onorm{p}{q} if one can compute such a matrix $B$ efficiently. In fact, a consequence of the 
Dvoretzky-Milman theorem is that a sufficiently tall random matrix $B$ satisfies the aforementioned 
property with high probability. 
In other words, for $m=m(q,n)$ sufficiently large, a random linear operator from 
$\ell_2^n$ to $\ell_q^m$ is an approximate isometry. 

To restate this from a geometric perspective, for $m(q,n)$ sufficiently larger than $n$, a random section 
of the unit ball in $\ell_q^m$ is approximately isometric to the unit ball in $\ell_2^n$. In the interest of 
simplicity, we will instead state and use a corollary of the following matrix deviation inequality due to 
Schechtman (see \cite{Schechtman06}, Chapter 11 in \cite{Vershynin17}). 
\begin{theorem}[Schechtman~\cite{Schechtman06}]
\label{matrix:deviation:bound}
    Let $B$ be an $m \times n$ matrix with i.i.d. $\gaussian{0}{1}$ entries. Let $f : \R^m\to \R$ be a 
    positive-homogeneous and subadditive function, and let $b$ be such that 
    $f(y)\leq b \cnorm{2}{y}$ for all $y\in \R^m$. Then for any $T\subset \R^n$, 
    \[
        \sup_{x\in T} |f(Bx)-\Ex{f(Bx)}| = O(b\cdot \gamma(T) + t\cdot \mathrm{rad}(T))
    \]
    with probability at least $1-e^{-t^2}$, 
    where $\mathrm{rad}(T)$ is the radius of $T$, and 
    $\gamma(T)$ is the Gaussian complexity of $T$ defined as 
    \[
        \gamma(T) := \Ex{g\sim \gaussian{0}{I_n}}{\sup_{t\in T} |\mysmalldot{g}{t}|}
    \]
\end{theorem}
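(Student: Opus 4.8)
The plan is to follow the now-standard route to matrix deviation inequalities: reduce the high-probability statement to a bound on the expectation of a supremum, control that expectation by a chaining argument, and recover the tail from Gaussian concentration of measure. Write $X_x \defeq f(Bx) - \Ex{f(Bx)}$ for $x \in T$, a centered stochastic process indexed by $T$. Since $f$ is positive-homogeneous and subadditive with $f(y) \le b\cnorm{2}{y}$, it is $b$-Lipschitz on $\R^m$ (indeed $|f(y)-f(y')| \le \max\{f(y-y'),f(y'-y)\} \le b\cnorm{2}{y-y'}$), so for each fixed $x$ the map $B \mapsto f(Bx)$ is $(b\cnorm{2}{x})$-Lipschitz in the Frobenius metric on $\R^{m\times n}$; taking a supremum over $x \in T$, the map $B \mapsto \sup_{x\in T}|X_x|$ is $(b\cdot\mathrm{rad}(T))$-Lipschitz in $B$. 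I will use these two observations at the end.

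First I would prove the expectation bound $\Ex{\sup_{x\in T} X_x} = O(b\cdot\gamma(T))$, which is the heart of the matter. This follows from generic chaining (Talagrand's majorizing-measure theorem, or Dudley's entropy integral for a cleaner-but-weaker constant) once one establishes that $(X_x)_{x\in T}$ has sub-gaussian increments with respect to the Euclidean metric, namely $\lVert X_x - X_{x'}\rVert_{\psi_2} \le C\,b\,\cnorm{2}{x-x'}$ for all $x,x' \in \R^n$. Granting this, chaining gives $\Ex{\sup_{x\in T}X_x} \lesssim b\cdot\gamma_2\!\big(T,\cnorm{2}{\cdot}\big)$, and by Talagrand's comparison theorem (or Fernique's inequality) $\gamma_2\!\big(T,\cnorm{2}{\cdot}\big) \asymp w(T) = \gamma(T)$, the Gaussian complexity appearing in the statement. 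Replacing $X_x$ by $-X_x$ and adding the trivial bound $\Ex{|X_{x_0}|} \le b\,\mathrm{rad}(T)$ for a fixed $x_0 \in T$ upgrades this to $\Ex{\sup_{x\in T}|X_x|} = O(b\cdot\gamma(T))$, using that $\mathrm{rad}(T) \le \sqrt{\pi/2}\,\gamma(T)$ automatically.

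Then the tail bound comes from Gaussian concentration applied to the $(b\cdot\mathrm{rad}(T))$-Lipschitz function $B \mapsto \sup_{x\in T}|X_x|$: with probability at least $1 - e^{-t^2}$ it deviates from its mean by at most $O(t\cdot b\cdot\mathrm{rad}(T))$. Combining this with the expectation bound of the previous paragraph yields $\sup_{x\in T}\big|f(Bx) - \Ex{f(Bx)}\big| = O\big(b\,\gamma(T) + t\,\mathrm{rad}(T)\big)$ (absorbing $b$-factors into the $O(\cdot)$ under the normalization used for $b$), which is the claimed estimate. For the application we have in mind ($f$ a scaled $\ell_q^m$-norm, $T$ the image of a convex body) one never needs chaining in its sharpest form, so Dudley's inequality in place of majorizing measures would already suffice.

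The main obstacle is the sub-gaussian increment bound $\lVert X_x - X_{x'}\rVert_{\psi_2} \le C\,b\,\cnorm{2}{x-x'}$. The naive estimate $|f(Bx)-f(Bx')| \le b\cnorm{2}{B(x-x')} = b\cnorm{2}{x-x'}\cdot\cnorm{2}{Bu}$ (with $u$ the unit vector along $x-x'$) is useless, since $\cnorm{2}{Bu}$ concentrates around $\sqrt m$ and produces a spurious $\sqrt m$ factor; the whole point is that this drift must cancel once one centers by expectations. The remedy is to combine the reverse-triangle/subadditivity inequality with Borell--TIS concentration applied separately: $f(Bx)$ concentrates around $\Ex{f(Bx)}$ with sub-gaussian parameter $b\cnorm{2}{x}$, so $X_x - X_{x'}$ is at least $O(b\,\mathrm{rad}(T))$-sub-gaussian for free, and to sharpen the parameter to $b\cnorm{2}{x-x'}$ one splits into "close" pairs (handled by a genuine Frobenius-Lipschitz-concentration estimate on $B\mapsto f(Bx)-f(Bx')$, after using positive-homogeneity to reduce to the sphere, where the subadditive structure does the real work) and "far" pairs (handled by the crude two-point bound, which is then good enough because the distance is already comparable to the diameter). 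Making this dichotomy close up cleanly — and checking that Schechtman's argument, originally phrased for embeddings of subsets of Euclidean space into $\ell_q$, goes through verbatim for the stated general positive-homogeneous subadditive $f$ — is the step I expect to require the most care.
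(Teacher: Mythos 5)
Your overall architecture — sub-gaussian increments for the centered process $X_x = f(Bx) - \Ex{f(Bx)}$ with respect to the Euclidean metric on $\R^n$, followed by chaining/comparison — matches the paper's own treatment exactly; note that the paper does not prove this theorem but cites it from Schechtman and gives only the one-sentence sketch you have correctly reconstructed. Your choice to split the final step into an expectation bound via generic chaining plus a separate Gaussian Lipschitz concentration bound for $B \mapsto \sup_{x\in T}|X_x|$ is an equivalent and standard repackaging of what the paper calls ``Talagrand's comparison tail bound.''

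The gap is in the mechanism you propose for the sub-gaussian increment bound, and it is real, not merely a matter of polish. A ``Frobenius-Lipschitz-concentration estimate on $B\mapsto f(Bx)-f(Bx')$'' cannot give the sharp parameter $b\cnorm{2}{x-x'}$: the best uniform Lipschitz constant of that map in Frobenius metric is $b(\cnorm{2}{x}+\cnorm{2}{x'})$, because $|f((B+\Delta)x)-f(Bx)|\le b\cnorm{2}{x}\norm{F}{\Delta}$ and $|f((B+\Delta)x')-f(Bx')|\le b\cnorm{2}{x'}\norm{F}{\Delta}$ are the only estimates available pointwise, and since $f$ is nonlinear the two increments do not subtract. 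Normalizing to the sphere does not fix this; even in the model case $f=\cnorm{2}{\cdot}$ the \emph{local} Lipschitz constant of $\Phi(B)=\cnorm{2}{Bx}-\cnorm{2}{Bx'}$ equals $\norm{F}{\widehat{Bx}\,x^T-\widehat{Bx'}\,(x')^T}$, which is of order $\cnorm{2}{x-x'}$ only at generic $B$ where $\widehat{Bx}\approx\widehat{Bx'}$, and is of order $1$ on a nontrivial set. So Gaussian concentration applied verbatim yields only the crude parameter $O(b)$ that you already discard as useless. The actual argument must exploit rotation invariance more structurally: write $Bx=g$ and $Bx'=\rho g+\sigma h$ with $g,h$ i.i.d.\ standard Gaussians in $\R^m$ and $\rho^2+\sigma^2=1$, split $f(g)-f(\rho g+\sigma h)$ as a linear combination $(1-\rho)f(g)-\sigma f(h)$ (whose centered $\psi_2$-norm is $\lesssim b\sqrt{(1-\rho)^2+\sigma^2}=b\cnorm{2}{x-x'}$ by one-dimensional Borell--TIS) plus the nonnegative defect $\rho f(g)+\sigma f(h)-f(\rho g+\sigma h)$, and then control the fluctuations of the defect \emph{conditionally on $g$}, where $h\mapsto f(\rho g+\sigma h)$ is $\sigma b$-Lipschitz — rather than through any single global Frobenius-Lipschitz estimate. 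You gesture at this with ``the subadditive structure does the real work'' and you rightly flag the close-pairs case as the hard step, but as written your dichotomy does not close there.
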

The above theorem is established by proving that the random process given by
$X_x := f(Bx) - \Ex{f(Bx)}$  has sub-gaussian increments with respect to $L_2$ and subsequently
appealing to  Talagrand's Comparison tail bound.

We will apply this theorem with $f(\cdot) = \cnorm{q}{\cdot}$, $b=1$ and $T$ being the unit ball 
under $\cnorm{2}{\cdot}$. We first state a known estimate of $\Ex{f(Bx)} = \Ex{\cnorm{q}{Bx}}$ for any fixed $x$ satisfying $\cnorm{2}{x} = 1$. 
Note that when $\cnorm{2}{x} = 1$, $Bx$ has the same distribution as an $m$-dimensional random vector with i.i.d. $\gaussian{0}{1}$ coordinates. 

\begin{theorem}[Biau and Mason~\cite{BM15}]
\label{expected:qnorm}
    Let $X\in\R^m$ be a random vector with i.i.d. $\gaussian{0}{1}$ coordinates. Then for any $q\geq 2$, 
    \[
        \Ex{\cnorm{q}{X}} = m^{1/q}\cdot \gamma_q+O(m^{(1/q) -1)}).
    \]
\end{theorem}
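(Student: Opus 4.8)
The plan is to write $S := \cnorm{q}{X}^q = \sum_{i=1}^m |X_i|^q$, so that $\cnorm{q}{X} = S^{1/q}$ and, since each $X_i\sim\gaussian{0}{1}$, Fact~\ref{gamma:and:Gamma} gives $\mu := \Ex{S} = m\,\gamma_q^q$ and hence $\mu^{1/q} = m^{1/q}\gamma_q$. The goal is then to show $\bigl|\Ex{S^{1/q}}-\mu^{1/q}\bigr| = O(m^{1/q-1})$. The idea is to Taylor-expand the concave function $t\mapsto t^{1/q}$ around $t=\mu$: on the event that $S$ is within a constant factor of $\mu$, the first-order term has zero expectation (up to a negligible correction) and the second-order term contributes exactly at the scale $\mu^{1/q-2}\Var(S)$; off that event, $S^{1/q}$ must be controlled by a crude H\"older bound against a strong tail estimate for $S$.

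First I would record the moment input, which is where the Gaussianity is used. Since $X_1$ is Gaussian, $\Ex{|X_1|^{qk}}<\infty$ for every $k\in\NN$; in particular $\Var(|X_1|^q)$ is a finite constant, so $\Var(S) = m\,\Var(|X_1|^q) = O(m)$. More generally, applying a standard moment inequality for sums of independent mean-zero random variables (Rosenthal / Marcinkiewicz--Zygmund, to $Y_i := |X_i|^q - \gamma_q^q$) gives $\Ex{|S-\mu|^k} = O(m^{k/2})$ for each fixed $k$, where the implied constant depends on $k$ and $q$ but not on $m$. By Markov's inequality, for the ``bad'' event $G^c := \{\,|S-\mu|>\mu/2\,\}$ this yields $\Pr{G^c} \le \Ex{|S-\mu|^k}/(\mu/2)^k = O(m^{-k/2})$, so $\Pr{G^c}$ decays faster than any prescribed polynomial once $k$ is taken large.

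Next, on the ``good'' event $G := \{\,|S-\mu|\le\mu/2\,\}$ I would use the second-order Taylor expansion $S^{1/q} = \mu^{1/q} + \tfrac1q\mu^{1/q-1}(S-\mu) + R$, where on $G$ a uniform bound on $|\tfrac{d^2}{dt^2}t^{1/q}|$ over $[\mu/2,3\mu/2]$ gives $|R| = O\bigl(\mu^{1/q-2}(S-\mu)^2\bigr)$. Taking $\Ex{\cdot\,\mathbf 1_G}$: since $\Ex{S-\mu}=0$, the linear term equals $-\tfrac1q\mu^{1/q-1}\Ex{(S-\mu)\mathbf 1_{G^c}}$, which by Cauchy--Schwarz is $O\bigl(\mu^{1/q-1}\sqrt{\Var(S)}\,\sqrt{\Pr{G^c}}\bigr)$ and hence negligible; the remainder contributes $\Ex{|R|\mathbf 1_G} = O\bigl(\mu^{1/q-2}\Var(S)\bigr) = O\bigl(\mu^{1/q-2}m\bigr) = O(m^{1/q-1})$, using $\mu = m\gamma_q^q$. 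For the bad event, because $q\ge 2$ forces $2/q\le 1$, Jensen gives $\Ex{S^{2/q}}\le\mu^{2/q} = O(m^{2/q})$, so Cauchy--Schwarz yields $\Ex{S^{1/q}\mathbf 1_{G^c}} \le \Ex{S^{2/q}}^{1/2}\Pr{G^c}^{1/2} = O(m^{1/q})\cdot O(m^{-k/4})$, and likewise $\mu^{1/q}\Pr{G^c} = O(m^{1/q-k/2})$; choosing $k\ge 4$ makes both $O(m^{1/q-1})$ or smaller. Adding the good- and bad-event contributions gives $\Ex{S^{1/q}} = \mu^{1/q} + O(m^{1/q-1}) = m^{1/q}\gamma_q + O(m^{1/q-1})$, as claimed, and the same statement transfers to the counting norm verbatim since it is already stated for $\cnorm{q}{\cdot}$.

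The step I expect to be the real (though modest) obstacle is the tail control: $S^{1/q}$ is an unbounded function of $S$, so a plain Chebyshev bound $\Pr{G^c} = O(1/m)$ is \emph{not} enough to push the bad-event contribution down to $O(m^{1/q-1})$ --- one genuinely needs the high-moment estimate $\Ex{|S-\mu|^k} = O(m^{k/2})$, which in turn relies on all moments of $|X_1|^q$ being finite, i.e.\ on $X$ being Gaussian. Once that estimate is in hand, the rest is a routine Taylor-remainder computation.
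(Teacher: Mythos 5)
Your proof is correct. The paper itself does not prove this statement --- it is cited from Biau and Mason~\cite{BM15} and used as a black box in \cref{embedding} --- so there is no in-paper argument to compare against. Your route is the standard delta-method calculation: set $S=\cnorm{q}{X}^q$, note $\mu:=\Ex{S}=m\gamma_q^q$ and $\Var(S)=O(m)$, Taylor-expand $t\mapsto t^{1/q}$ to second order on the event $\{|S-\mu|\le\mu/2\}$ (where the remainder is $O(\mu^{1/q-2}(S-\mu)^2)$ uniformly, contributing $O(\mu^{1/q-2}m)=O(m^{1/q-1})$), and kill the complementary event with a fourth-moment (Rosenthal) bound $\Ex{|S-\mu|^4}=O(m^2)$ plus Markov, then Cauchy--Schwarz and the Jensen bound $\Ex{S^{2/q}}\le\mu^{2/q}$ (valid because $q\ge 2$). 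All the exponent bookkeeping checks out: the remainder term is the dominant error and sits at exactly $m^{1/q-1}$, while $k=4$ in the tail bound pushes every other piece to that scale or below. You also correctly flag the one spot where care is needed --- Chebyshev alone ($k=2$) would not suffice for the $\Ex{S^{1/q}\mathbf 1_{G^c}}$ term, since it only gives $O(m^{1/q-1/2})$ --- and the fix (higher moments, available because all moments of $|X_1|^q$ are finite for Gaussians) is the right one.
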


We are now equipped to see that a tall random Gaussian matrix is an approximate isometry (as a linear map 
from $\ell_2^n$ to $\ell_q^m$) with high probability. 
\begin{corollary}
\label{embedding}
    Let $B$ be an $m \times n$ matrix with i.i.d. $\gaussian{0}{1}$ entries where $m = \omega(n^{q/2})$. 
    Then with probability at least $1-e^{-n}$, every vector $x\in \R^n$ 
    satisfies, \[\cnorm{q}{Bx} = (1\pm o(1))\cdot m^{1/q} \cdot \gamma_q  \cdot \cnorm{2}{x}. \]
\end{corollary}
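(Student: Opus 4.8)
The plan is to apply Schechtman's matrix deviation inequality (\cref{matrix:deviation:bound}) with $f = \cnorm{q}{\cdot}$ and $T$ the Euclidean unit ball $\inbraces{x \in \R^n : \cnorm{2}{x} \le 1}$, and then combine it with the moment estimate of \cref{expected:qnorm}. Since $q \ge 2$ in this regime, $f = \cnorm{q}{\cdot}$ is positive-homogeneous and subadditive (being a norm), and $\cnorm{q}{y} \le \cnorm{2}{y}$ for all $y \in \R^m$, so we may take $b = 1$ in \cref{matrix:deviation:bound}. For the set $T$ we have $\mathrm{rad}(T) = 1$ and Gaussian complexity $\gamma(T) = \Ex{g \sim \gaussian{0}{I_n}}{\sup_{\cnorm{2}{t} \le 1} \smallabs{\mysmalldot{g}{t}}} = \Ex{\cnorm{2}{g}} \le \sqrt{n}$. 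Invoking \cref{matrix:deviation:bound} with $t = \sqrt{n}$ then yields that, with probability at least $1 - e^{-n}$,
\[
  \sup_{\cnorm{2}{x} \le 1} \smallabs{\cnorm{q}{Bx} - \Ex{\cnorm{q}{Bx}}} ~=~ O\inparen{\sqrt{n}} \mper
\]

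Next, for any fixed unit vector $x$, the vector $Bx$ has i.i.d.\ $\gaussian{0}{1}$ coordinates, so \cref{expected:qnorm} gives $\Ex{\cnorm{q}{Bx}} = m^{1/q}\gamma_q + O(m^{1/q - 1})$, and this value is the same for every unit $x$. Combining the two displays, with probability at least $1 - e^{-n}$ every unit vector $x$ satisfies
\[
  \cnorm{q}{Bx} ~=~ m^{1/q}\gamma_q ~+~ O\inparen{m^{1/q-1}} ~+~ O\inparen{\sqrt{n}} \mper
\]
The hypothesis $m = \omega(n^{q/2})$ is precisely what is needed to absorb the error terms: it gives $m^{1/q} = \omega(\sqrt{n})$, hence $\sqrt{n} = o(m^{1/q}\gamma_q)$, while $m^{1/q - 1} = o(m^{1/q})$ trivially. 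Therefore $\cnorm{q}{Bx} = (1 \pm o(1)) \cdot m^{1/q}\gamma_q$ uniformly over the unit sphere.

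Finally I would extend from the unit sphere to all of $\R^n$ by positive homogeneity: for $x \ne 0$ write $\cnorm{q}{Bx} = \cnorm{2}{x} \cdot \cnorm{q}{B(x/\cnorm{2}{x})}$ and apply the uniform bound to $x/\cnorm{2}{x}$; the case $x = 0$ is trivial. Since the event above holds simultaneously for all unit vectors, this yields $\cnorm{q}{Bx} = (1 \pm o(1)) \cdot m^{1/q} \cdot \gamma_q \cdot \cnorm{2}{x}$ for every $x \in \R^n$ with the stated probability. I do not anticipate a genuine obstacle here; the only points requiring minor care are verifying the hypotheses of \cref{matrix:deviation:bound} (homogeneity and subadditivity of $\cnorm{q}{\cdot}$, and the bound $b = 1$, both using $q \ge 2$), and confirming that $\omega(n^{q/2})$ is exactly the right threshold to make both error terms lower order — which is the short computation sketched above.
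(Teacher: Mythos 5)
Your proof is correct and follows essentially the same route as the paper: apply Schechtman's matrix deviation inequality (\cref{matrix:deviation:bound}) with $f = \cnorm{q}{\cdot}$, $b = 1$, $t = \sqrt{n}$, and $T$ the Euclidean unit ball/sphere, combine it with the Biau--Mason moment estimate (\cref{expected:qnorm}), and observe that $m = \omega(n^{q/2})$ makes both error terms lower order. The only cosmetic differences are that you take $T$ to be the unit ball rather than the unit sphere (immaterial, since $\gamma(T)$ and $\mathrm{rad}(T)$ agree), and you spell out the extension from the unit sphere to all of $\R^n$ by homogeneity, which the paper leaves implicit.
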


\begin{proof}
    We apply \cref{matrix:deviation:bound} with function $f$ being the $\ell_q$ norm, $b=1$, and $t = \sqrt{n}$. 
    Further we set $T$ to be the $\ell_2$ unit sphere, which yields $\gamma(T) = \Theta(\sqrt{n})$ and 
    $\mathrm{rad}(T) = 1$.  Applying \cref{expected:qnorm} yields that with probability at least 
    $1 - e^{t^2} = 1-e^{-n}$, for all $x$ with $\cnorm{2}{x} = 1$, we have 
\begin{align*}
\left| \cnorm{q}{Bx} - m^{1/q}  \cdot \gamma_q \right| 
& \leq 
\left| \cnorm{q}{Bx} - \Ex{\cnorm{q}{X}} \right| 
+ 
\left| \Ex{\cnorm{q}{X}} - m^{1/q} \cdot  \gamma_q \right|  \\
& \leq O(b \cdot \gamma(T) + t  \cdot \mathrm{rad}(T) + m^{(1/q)-1})  \\
& \leq O(\sqrt{n} + \sqrt{n} + m^{(1/q)-1})\\ 
& \leq o(m^{1/q}). \tag*{\qedhere}
\end{align*}
\end{proof}

We thus obtain the desired constant factor hardness:
\begin{proposition}
\label{hypercontractive:const:factor}
    For any $p>2,~2\leq q< \infty$ and any $\varepsilon>0$, there is no polynomial time algorithm that 
    approximates \onorm{p}{q} (and consequently \onorm{q^*}{p^*}) within a factor of 
    $1/\gamma_{p^*}-\varepsilon$ ~unless $\NP \not\subseteq \BPP$. 
\end{proposition}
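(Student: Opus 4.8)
The plan is to compose the hardness instance from \cref{thm:brs} (via its dual form, which gives \onorm{p}{2} hardness for $p > 2$) with an approximate isometry $B : \ell_2^n \to \ell_q^m$ obtained from \cref{embedding}. First I would invoke \cref{thm:brs} in its dual formulation: for $p > 2$ and any $\eps' > 0$, setting $r = p^* \in (1,2)$, the theorem produces (in polynomial time, from a 3-CNF $\phi$) a symmetric matrix $A \in \R^{n \times n}$ such that in the completeness case there is $x$ with $|x(i)| = 1$ for all $i$ and $Ax = x$, so $\enorm{p}{2}{A} = \enorm{2}{p^*}{A} \geq 1$; and in the soundness case $\enorm{2}{p^*}{A} \leq \gamma_{p^*} + (\eps')^{2 - p^*}$, hence $\enorm{p}{2}{A} \leq \gamma_{p^*} + (\eps')^{2-p^*}$. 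Choosing $\eps'$ small enough makes this soundness bound at most $\gamma_{p^*} + \eps/2$. (One must be careful to use the counting-vs-expectation norm conversion $\cnorm{p}{2}{A} = n^{1/2 - 1/p} \cdot \enorm{p}{2}{A}$ consistently, but this only rescales both cases by the same factor and does not affect the ratio.)

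Next I would take $B$ to be an $m \times n$ matrix with i.i.d.\ $\gaussian{0}{1}$ entries, with $m = m(q,n)$ chosen so that $m = \omega(n^{q/2})$ and still $m = \poly(n)$ (e.g.\ $m = n^{\lceil q/2 \rceil + 1}$); note this is where the reduction becomes randomized, which is why the conclusion is under $\NP \not\subseteq \BPP$ rather than $\P \neq \NP$. By \cref{embedding}, with probability at least $1 - e^{-n}$ the matrix $B$ satisfies $\cnorm{q}{By} = (1 \pm o(1)) \cdot m^{1/q} \gamma_q \cdot \cnorm{2}{y}$ for every $y \in \R^n$ simultaneously. Condition on this event. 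The final instance for \onorm{p}{q} is the $m \times n$ matrix $BA$. In the completeness case, taking the vector $x$ above gives $\cnorm{q}{BAx} = \cnorm{q}{Bx} \geq (1 - o(1)) m^{1/q} \gamma_q \cnorm{2}{x} = (1-o(1)) m^{1/q}\gamma_q \cnorm{p}{2}{A}^{-1}\cdots$ — more cleanly, since $Ax = x$ we get $\cnorm{q}{BAx}/\cnorm{p}{x} = \cnorm{q}{Bx}/\cnorm{p}{x} \geq (1-o(1)) m^{1/q}\gamma_q \cdot \cnorm{2}{x}/\cnorm{p}{x}$, and combining with $\cnorm{2}{x}/\cnorm{p}{x}$ being exactly the ratio witnessed by $A$ in the completeness case, we obtain $\cnorm{p}{q}{BA} \geq (1 - o(1)) \cdot m^{1/q}\gamma_q \cdot \cnorm{p}{2}{A}$. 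In the soundness case, for every $z$ we bound $\cnorm{q}{BAz} \leq (1 + o(1)) m^{1/q}\gamma_q \cnorm{2}{Az}$, so $\cnorm{p}{q}{BA} \leq (1+o(1)) m^{1/q}\gamma_q \cdot \cnorm{p}{2}{A} \leq (1 + o(1)) m^{1/q}\gamma_q (\gamma_{p^*} + \eps/2)$. The ratio between the two cases is therefore $(1 - o(1))/(\gamma_{p^*} + \eps/2) \geq 1/\gamma_{p^*} - \eps$ for large $n$, giving the claimed hardness; the dual statement for \onorm{q^*}{p^*} follows from \cref{onorm:groth}.

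The main obstacle — really the only nontrivial point — is making the approximate-isometry step genuinely two-sided and uniform over all of $\R^n$ at once: we need $B$ to preserve $\ell_2$-norms up to $(1 \pm o(1))$ for \emph{every} vector $Az$ arising in the soundness analysis, not just for a fixed vector, and we need the dimension blowup $m$ to stay polynomial so the overall reduction is polynomial time. This is exactly what \cref{embedding} (Schechtman's matrix deviation inequality applied to $f = \cnorm{q}{\cdot}$, combined with the Biau–Mason expectation estimate) delivers, with failure probability $e^{-n}$ and $m = \omega(n^{q/2})$, so the argument goes through; the remaining work is purely the bookkeeping of constants and the counting/expectation norm conversions sketched above.
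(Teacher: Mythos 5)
Your proposal is correct and takes essentially the same route as the paper: the paper's own proof simply invokes \cref{embedding} to conclude $\cnorm{p}{q}{BA} = (1\pm o(1))\cdot\gamma_q\cdot m^{1/q}\cdot\cnorm{p}{2}{A}$ and then composes with the $p\to 2$ hardness from \cref{thm:brs}, which is exactly your argument with the completeness/soundness bookkeeping spelled out. One small note on phrasing: in the completeness case the vector $x$ from \cref{thm:brs} need not be the exact maximizer of $\cnorm{p}{2}{A}$, but your final inequality $\cnorm{p}{q}{BA}\geq(1-o(1))m^{1/q}\gamma_q\cnorm{p}{2}{A}$ is nevertheless correct because the approximate-isometry guarantee from \cref{embedding} holds uniformly over all of $\R^n$, so it preserves the supremum defining $\cnorm{p}{2}{A}$ up to $1\pm o(1)$; it would be cleaner to state it that way rather than via the specific witness $x$.
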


\begin{proof}
    By \cref{embedding}, for every $n\times n$ matrix $A$ and a random $m\times n$ matrix $B$ with i.i.d. 
    $\gaussian{0}{1}$ entries ($m = \omega(n^{q/2})$), with probability at least $1-e^{-n}$, we have 
    \[
        \cnorm{p}{q}{BA} = 
        (1\pm o(1))\cdot \gamma_q\cdot m^{1/q}\cdot \cnorm{p}{2}{A}.
    \]
    Thus the reduction $A\rightarrow BA$ combined with \onorm{p}{2} hardness implied by \cref{thm:brs}, 
    yields the claim. 
\end{proof}

The generality of the concentration of measure phenomenon underlying the proof of the 
Dvoretzky-Milman theorem allows us to generalize \cref{hypercontractive:const:factor}, to obtain constant 
factor hardness of maximizing various norms over the $\ell_p$ ball ($p>2$).  In this more general version, 
the strength of our hardness assumption is dependent on the Gaussian width of the dual of the norm being 
maximized. Its proof is identical to that of \cref{hypercontractive:const:factor}. 

\begin{theorem}
\label{p-to-anything:const:factor}
    Consider any $p>2, \varepsilon>0$, and any family $(f_m)_{m\in \N}$ of positive-homogeneous and  
    subadditive functions where $f_m : \R^m\to \R$. Let $(b_m)_{m\in \N}$ be such that $f_m(y) \leq b_m\cdot 
    \cnorm{2}{y}$ for all $y$ and let $N=N(n)$ be such that $\gamma_*(f_N) = \omega(b_N\cdot \sqrt{n})$, 
    where 
    \[
        \gamma^*(f_N) := \Ex{g\sim \gaussian{0}{I_N}}{f_N(g)}.
    \] 
    Then unless $\NP \not\subseteq \BPTIME{N(n)}$, there is no polynomial 
    time $(1/\gamma_{p^*}-\varepsilon)$-approximation algorithm for the problem of computing 
    $\sup_{\norm{p}{x} = 1} f_m(Ax)$, given an $m\times n$ matrix $A$. 
\end{theorem}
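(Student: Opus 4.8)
The plan is to mimic the proof of \cref{hypercontractive:const:factor} essentially verbatim, replacing the $\ell_q$-norm by the general function $f_N$ and the scalar $\gamma_q\, m^{1/q}$ by $\gamma^*(f_N)$. The only place where the specific structure of $\cnorm{q}{\cdot}$ entered that argument was through \cref{embedding}, and that corollary was itself derived from the general matrix deviation inequality \cref{matrix:deviation:bound} together with a mean estimate (\cref{expected:qnorm}). Both ingredients generalize for free: \cref{matrix:deviation:bound} already applies to an arbitrary positive-homogeneous subadditive $f$ satisfying $f(y)\leq b\cnorm{2}{y}$, and the role played by \cref{expected:qnorm} is here played simply by the definition $\Ex{f_N(g)}=\gamma^*(f_N)$ for $g\sim\gaussian{0}{I_N}$.

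Concretely, first I would establish the generalized approximate-isometry statement. Let $B$ be an $N\times n$ matrix with i.i.d.\ $\gaussian{0}{1}$ entries. Apply \cref{matrix:deviation:bound} with $f\leftarrow f_N$, $b\leftarrow b_N$, $t\leftarrow\sqrt n$, and $T$ the unit $\ell_2$-sphere in $\R^n$ (so $\mathrm{rad}(T)=1$ and $\gamma(T)=\Theta(\sqrt n)$). This yields, with probability at least $1-e^{-n}$,
\[
\sup_{\cnorm{2}{x}=1}\bigl|f_N(Bx)-\Ex{f_N(Bx)}\bigr| \;=\; O\bigl(b_N\sqrt n\bigr).
\]
Since $Bx\sim\gaussian{0}{I_N}$ whenever $\cnorm{2}{x}=1$, we have $\Ex{f_N(Bx)}=\gamma^*(f_N)$, and the hypothesis $\gamma^*(f_N)=\omega(b_N\sqrt n)$ turns the right-hand side into $o(\gamma^*(f_N))$. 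By positive-homogeneity of $f_N$, the additive bound on the sphere upgrades to a uniform multiplicative estimate
\[
f_N(Bx)\;=\;(1\pm o(1))\cdot\gamma^*(f_N)\cdot\cnorm{2}{x}\qquad\text{for all }x\in\R^n,
\]
with probability at least $1-e^{-n}$ over $B$.

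Next I would run the reduction. Starting from a $3$-SAT instance $\phi$, invoke \cref{thm:brs} (and the duality in \cref{onorm:groth}, using that the matrix it produces is symmetric) to obtain, in polynomial time, a symmetric $A\in\R^{n\times n}$ with $n=|\phi|^{\poly(1/\varepsilon)}$ that is a hard instance of \onorm{p}{2}: in the completeness case there is an $x$ with $|x(i)|=1$ for all $i$ and $Ax=x$, while in the soundness case $\cnorm{p}{2}{A}$ is smaller by a factor essentially $1/\gamma_{p^*}$, up to the $\varepsilon$-slack of \cref{thm:brs}. Sample $B\in\R^{N\times n}$ as above and output $BA$ as the instance of the $f_N$-maximization problem. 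Combining the displayed estimate (uniform in $x$) with the definition of $\cnorm{p}{2}{A}$ gives
\[
\sup_{\cnorm{p}{x}=1}f_N(BAx)\;=\;(1\pm o(1))\cdot\gamma^*(f_N)\cdot\cnorm{p}{2}{A},
\]
so the multiplicative gap between the completeness and soundness cases of the \onorm{p}{2} problem is preserved up to a $(1\pm o(1))$ factor; choosing the slack parameters of \cref{thm:brs} small enough makes the resulting gap at least $1/\gamma_{p^*}-\varepsilon$. A polynomial-time $(1/\gamma_{p^*}-\varepsilon)$-approximation for the $f_m$-maximization problem would therefore decide $3$-SAT; since the reduction (including sampling and evaluating $B$) runs in time $\poly(N(n),n)$ and uses randomness only through the choice of $B$, this places $\NP$ in randomized time essentially $N(n)$, contradicting the stated hypothesis.

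I do not expect a genuine obstacle here: the theorem is formulated precisely so that the proof of \cref{hypercontractive:const:factor} transfers unchanged, with $\ell_q$ replaced by $f_N$ and the Biau--Mason estimate replaced by the definition of $\gamma^*$. The one point requiring mild care is the bookkeeping of the $o(1)$ error --- one must verify that $O(b_N\sqrt n)$ is indeed $o(\gamma^*(f_N))$ for the relevant $N=N(n)$ and that this additive deviation holds \emph{uniformly} over the sphere (which is exactly the content of \cref{matrix:deviation:bound}), so that the common factor $\gamma^*(f_N)$ cancels in the completeness/soundness ratio. A secondary, purely cosmetic, point is reconciling the expectation-norm versus counting-norm normalization inherited from \cref{thm:brs}, which contributes only a fixed power-of-$n$ factor common to both cases and hence does not affect the gap.
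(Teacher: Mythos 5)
Your proposal is correct and follows exactly the route the paper intends: the paper states that the proof is identical to that of \cref{hypercontractive:const:factor}, and you have simply spelled out that substitution, replacing $\ell_q$ with $f_N$, replacing the Biau--Mason estimate with the definition of $\gamma^*(f_N)$, and invoking \cref{matrix:deviation:bound} in its full generality. No gaps.
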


\subsection{Derandomized Reduction}\label{sec:derandomization}
In this section,  we show how to derandomize the reduction in \cref{hypercontractive:const:factor}
to obtain  NP-hardness when $q\geq 2$ is an even integer and $p > 2$.  Similarly to~\cref{isometry},
given $A \in \R^{n \times n}$ as a hard instance of \onorm{p}{2}, our strategy is to construct a
matrix $B \in \R^{m \times n}$ and output $BA$ as a hard instance of \onorm{p}{q}.

Instead of requiring $B$ to satisfy $\cnorm{q}{Bx} = (1 \pm o(1)) \cnorm{2}{x}$ for all $x \in
\R^n$, we show that $\cnorm{q}{Bx} \leq (1 + o(1)) \cnorm{2}{x}$ for all $x \in \R^n$  and
$\cnorm{q}{Bx} \geq (1 - o(1)) \cnorm{2}{x}$ when every coordinate of $x$ has the same absolute
value. Since~\cref{thm:brs} ensures that $\cnorm{p}{2}{A}$ is achieved by $x = Ax$ for such a
well-spread $x$ in the completeness case,  $BA$ serves as a hard instance for \onorm{p}{q}. 

 We use the following construction of $q$-wise independent sets to construct such a $B$ deterministically.

\begin{theorem}[Alon, Babai, and Itai~\cite{ABI86}]
\label{kwise:independence}
    For any $k\in\N$, one can compute a set $S$ of vectors in $\{\pm 1\}^n$ of size $O(n^{k/2})$, in time 
    $n^{O(k)}$, such that the vector random variable $Y$ obtained by sampling uniformly from $S$ satisfies 
    that for any $I\in \binom{[n]}{k}$, the marginal distribution $Y\restrict{I}$ is the uniform distribution over 
    $\{\pm 1\}^{k}$. 
\end{theorem}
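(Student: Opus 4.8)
The plan is to construct $S$ from a dual BCH code, which I would describe via a short polynomial-evaluation map over a binary extension field. Identify $\{\pm 1\}$ with $\mathbb{F}_2$ through $b\mapsto(-1)^b$, so it suffices to produce, on $\{0,1\}^n$, a distribution that is $k$-wise independent with uniform marginals and is uniform on a multiset of size $O(n^{\lceil k/2\rceil})$. Set $t:=\lceil k/2\rceil$, let $m$ be the least integer with $2^m>n$, and fix an explicit degree-$m$ irreducible polynomial over $\mathbb{F}_2$ (found by brute force in time $2^{O(m)}=\mathrm{poly}(n)$); this gives a concrete model of $\mathbb{F}_{2^m}$ together with its trace map $\mathrm{Tr}:\mathbb{F}_{2^m}\to\mathbb{F}_2$. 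Associate to the $n$ coordinates distinct nonzero field elements $x_1,\dots,x_n$. The seed space is $S_0:=\mathbb{F}_{2^m}^{t}\times\mathbb{F}_2$, and a seed $(\gamma_1,\dots,\gamma_t,c)$ is mapped to $Y\in\{0,1\}^n$ with $Y_i:=c+\sum_{j=1}^t\mathrm{Tr}(\gamma_j\,x_i^{\,2j-1})$; let $S$ be the multiset image of this map, so sampling $Y$ uniformly from $S$ is the same as applying the map to a uniform seed. Then $|S|=|S_0|=2\,(2^m)^t\le 2\,(2n)^t=O(n^{\lceil k/2\rceil})$, which is $O(n^{k/2})$ when $k$ is even, and $S$ can be computed in time $n^{O(k)}$ since there are $n^{O(k)}$ seeds, each requiring $O(nt)$ operations in $\mathbb{F}_{2^m}$.

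For correctness I would invoke the standard parity criterion: a distribution on $\{0,1\}^n$ is $k$-wise independent with uniform marginals iff for every nonempty $T\subseteq[n]$ with $|T|\le k$ the bit $\bigoplus_{i\in T}Y_i$ is unbiased. For our $Y$ this parity equals $|T|\cdot c+\sum_{j=1}^t\mathrm{Tr}\!\bigl(\gamma_j\textstyle\sum_{i\in T}x_i^{\,2j-1}\bigr)$. When $|T|$ is odd the summand $|T|\cdot c=c$ is a uniform bit independent of everything else, so the parity is unbiased. When $|T|$ is even it suffices to exhibit $j\in\{1,\dots,t\}$ with $a_j:=\sum_{i\in T}x_i^{\,2j-1}\ne 0$: for such a $j$ the functional $\gamma\mapsto\mathrm{Tr}(\gamma\,a_j)$ is a nonzero $\mathbb{F}_2$-linear form, hence balanced over a uniform $\gamma_j$, and the remaining randomness merely adds an independent bit, so again the parity is unbiased.

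The heart of the argument, and the step I expect to need the most care, is the Vandermonde/BCH fact that some $a_j$ must be nonzero. Suppose not: $\sum_{i\in T}x_i^{\,2j-1}=0$ for all $j=1,\dots,t$. Writing $p_\ell:=\sum_{i\in T}x_i^{\,\ell}$ for the power sums, we are assuming $p_1=p_3=\dots=p_{2t-1}=0$; since $\mathrm{char}=2$ gives $p_{2\ell}=p_\ell^{\,2}$, a short induction on $\ell$ shows $p_\ell=0$ for every $\ell=1,\dots,2t$, and in particular $p_1=\dots=p_{|T|}=0$ because $|T|\le k\le 2t$. But the $|T|\times|T|$ matrix $\bigl(x_i^{\,\ell}\bigr)_{i\in T,\,1\le\ell\le|T|}$ is $\mathrm{diag}(x_i)_{i\in T}$ times a Vandermonde matrix in the distinct $x_i$, hence invertible (the $x_i$ are distinct and nonzero), so the only vector $(c_i)_{i\in T}$ with $\sum_{i\in T}c_i x_i^{\,\ell}=0$ for all $\ell\in\{1,\dots,|T|\}$ is $c=0$ — contradicting $c=\mathbf 1\ne 0$ in $\mathbb{F}_2^{T}$. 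This completes the verification. I would close by noting that for even $k$ this gives exactly the claimed bound $|S|=O(n^{k/2})$, and that a slightly more careful analysis of the BCH code trims the odd-$k$ bound to $O(n^{\lfloor k/2\rfloor})$, which is immaterial here since the application only uses even $k=q$.
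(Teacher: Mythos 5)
Your reconstruction is correct and is essentially the Alon--Babai--Itai dual-BCH construction itself (the paper cites~\cite{ABI86} without reproducing the argument). The parity criterion, the use of the extra $c$ bit to absorb odd-size parities, the characteristic-$2$ power-sum induction, and the nonsingularity of the scaled Vandermonde matrix are exactly the ingredients of the original proof; the only cosmetic slack is your choice $t=\lceil k/2\rceil$, which for odd $k$ yields $O(n^{(k+1)/2})$ rather than the stated $O(n^{k/2})$ — but as you already note, taking $t=\lfloor k/2\rfloor$ (so that the $c$ bit alone handles the odd set-size $|T|=2t+1$) recovers the optimal $O(n^{\lfloor k/2\rfloor})$, and the paper only invokes the theorem for even $k=q$ anyway.
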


For a matrix $B$ as above, a randomly chosen row behaves similarly to an $n$-dimensional
Rademacher random vector with respect to $\cnorm{q}{\cdot}$. 

\begin{corollary}
\label{derandomized:operator}
    Let $R\in\R^n$ be a vector random variable with i.i.d. Rademacher ($\pm 1$) coordinates. 
    For any even integer $q\geq 2$, there is an $m \times n$ matrix $B$ with $m = O(n^{q/2})$, computable in $n^{O(q)}$ time, 
    such that for all $x\in\R^n$, we have 
    \[
        \cnorm{q}{Bx} = m^{1/q} \cdot \Ex{R}{\mysmalldot{R}{x}^q}^{1/q}.
    \]
\end{corollary}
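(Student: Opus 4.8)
The plan is to let $S = \{r^1, \dots, r^m\} \subseteq \{\pm 1\}^n$ be the $q$-wise independent set guaranteed by \cref{kwise:independence} with $k = q$, so that $|S| = m = O(n^{q/2})$ and $S$ is computable in $n^{O(q)}$ time. Define $B \in \R^{m \times n}$ to be the matrix whose $j$-th row is $r^j$; then for a fixed $x \in \R^n$ we have
\[
\cnorm{q}{Bx}^q = \sum_{j=1}^m \mysmalldot{r^j}{x}^q = m \cdot \Ex{R' \sim S}{\mysmalldot{R'}{x}^q},
\]
where $R'$ is uniform over $S$. The only thing to check is that $\Ex{R' \sim S}{\mysmalldot{R'}{x}^q} = \Ex{R}{\mysmalldot{R}{x}^q}$, where $R$ has i.i.d. Rademacher coordinates; taking $q$-th roots of both sides then yields the claimed identity $\cnorm{q}{Bx} = m^{1/q} \cdot \Ex{R}{\mysmalldot{R}{x}^q}^{1/q}$.

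To verify the expectation identity, I would expand $\mysmalldot{R'}{x}^q = \sum_{i_1, \dots, i_q \in [n]} x_{i_1} \cdots x_{i_q} \cdot R'_{i_1} \cdots R'_{i_q}$ by multilinearity, and take expectations term by term. For a fixed tuple $(i_1, \dots, i_q)$, the product $R'_{i_1} \cdots R'_{i_q}$ depends only on the coordinates of $R'$ indexed by the \emph{set} $I = \{i_1, \dots, i_q\}$, which has size at most $q = k$. Since $S$ is $q$-wise independent, the marginal of $R'$ on any such $I$ is uniform on $\{\pm 1\}^{|I|}$, which is exactly the distribution of $R$ restricted to $I$; hence $\Ex{R' \sim S}{R'_{i_1} \cdots R'_{i_q}} = \Ex{R}{R_{i_1} \cdots R_{i_q}}$ for every tuple. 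Summing over all $n^q$ tuples with the common coefficients $x_{i_1} \cdots x_{i_q}$ gives $\Ex{R' \sim S}{\mysmalldot{R'}{x}^q} = \Ex{R}{\mysmalldot{R}{x}^q}$, as desired. The key point that makes this work for \emph{even} $q$ is that $\mysmalldot{R}{x}^q$ is a degree-$q$ polynomial in the coordinates of $R$ and hence its expectation is determined by $q$-wise marginals — no issue of signs or absolute values arises, so no approximation is needed and the identity is exact.

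There is essentially no obstacle here beyond bookkeeping: the corollary is a direct consequence of $q$-wise independence together with the observation that $\cnorm{q}{\cdot}^q$ is, entrywise-after-applying-$B$, a sum of degree-$q$ monomials in the rows. The mild subtlety worth stating carefully is that the $n^q$-term expansion collapses correctly because the coefficient multiset and the monomial depend only on the underlying multiset of indices, and that a multiset of size $q$ is supported on a set of size $\le q$, so $q$-wise independence suffices (one does not need to worry about repeated indices since $R_i^2 = 1$ holds for both $R$ and any element of $S$). Once the expectation identity is in hand, the running time and dimension bounds are inherited verbatim from \cref{kwise:independence}.
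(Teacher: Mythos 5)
Your proof is correct and is essentially the paper's own argument: take $B$'s rows to be the $q$-wise independent set from \cref{kwise:independence}, compute $\cnorm{q}{Bx}^q = \sum_{j} \mysmalldot{r^j}{x}^q = m\cdot\Ex{R'\sim S}{\mysmalldot{R'}{x}^q}$, and observe that this equals $m\cdot\Ex{R}{\mysmalldot{R}{x}^q}$ because $\mysmalldot{R}{x}^q$ is a degree-$q$ polynomial in the coordinates of $R$. The paper states this last step in a single line; you have unpacked the multilinearity argument and the fact that each monomial touches at most $q$ coordinates, which is the same reasoning made explicit.
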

\begin{proof}\belowdisplayskip=-12pt
    Let $B$ be a matrix, the set of whose rows is precisely $S$.
    By \cref{kwise:independence}, 
    \begin{align*}
        \cnorm{q}{Bx}^q
        =
        \sum_{Y \in S}{\mysmalldot{Y}{x}^q}
        &= 
        m\cdot \Ex{R}{\mysmalldot{R}{x}^q}
        \mper \tag*{\qedhere}
    \end{align*}
\end{proof}

We use the following two results that will bound $\cnorm{p}{q}{BA}$ for the completeness case and the soundness case respectively.

\begin{theorem}[Stechkin~\cite{Stechkin61}]
\label{stechkin:clt}
    Let $R\in\R^n$ be a vector random variable with i.i.d. Rademacher coordinates. Then for any $q\geq 2$ 
    and any $x\in \R^n$ whose coordinates have the same absolute value, 
    \[
        \Ex{\mysmalldot{R}{x}}^{1/q} = (1-o(1))\cdot \gamma_q \cnorm{2}{x}.
    \]
\end{theorem}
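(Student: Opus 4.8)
The plan is to reduce the statement to a one-dimensional fact about sums of Rademacher variables and then invoke a quantitative central limit theorem. First I would exploit the sign-symmetry of the Rademacher distribution: writing $x=(x(1),\dots,x(n))$ with $\smallabs{x(i)}=c/\sqrt{n}$ for all $i$ (so that $\cnorm{2}{x}=c$) and $x(i)=\varepsilon_i c/\sqrt{n}$ with $\varepsilon_i\in\inbraces{\pm 1}$, the variable $\mysmalldot{R}{x}=\tfrac{c}{\sqrt n}\sum_i \varepsilon_i R_i$ has the same law as $\tfrac{c}{\sqrt n}S_n$, where $S_n:=\sum_{i=1}^n R_i$, since $(\varepsilon_i R_i)_i$ has the same joint distribution as $(R_i)_i$. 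Hence it suffices to prove $\inparen{\Ex{R}{(S_n/\sqrt n)^q}}^{1/q}=(1-o(1))\cdot\gamma_q$ as $n\to\infty$; note that in our application $q$ is an even integer, so $(S_n/\sqrt n)^q=\smallabs{S_n/\sqrt n}^q$ and, by Fact~\ref{gamma:and:Gamma}, $\gamma_q^q=\Ex{g\sim\gaussian{0}{1}}{g^q}$.

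Next I would establish convergence of the $q$-th moment of $S_n/\sqrt n$ to that of a standard Gaussian $g$. By the classical CLT, $S_n/\sqrt n$ converges in distribution to $g$. To upgrade this to convergence of $q$-th moments I would verify uniform integrability of $\inbraces{(S_n/\sqrt n)^q}_n$: Khintchine's inequality gives a constant $C=C(q)$ with $\Ex{\smallabs{S_n/\sqrt n}^{2q}}\le C$ for every $n$, so the family is bounded in $L^2$ and therefore uniformly integrable. Convergence in distribution together with uniform integrability then yields $\Ex{(S_n/\sqrt n)^q}\to\Ex{g^q}=\gamma_q^q$, and taking $q$-th roots finishes the proof. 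This already suffices for the use made of the theorem in~\cref{derandomized:operator}, where only the lower bound $\Ex{R}{\mysmalldot{R}{x}^q}^{1/q}\ge(1-o(1))\,\gamma_q\,\cnorm{2}{x}$ enters the completeness analysis.

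If one wants the $o(1)$ made explicit as a function of $n$, I would argue quantitatively instead. The \emph{upper} bound follows immediately from Lemma~\ref{lem:clt-analytic-sparse} applied to the independent bounded variables $X_i:=R_i/\sqrt n$, which satisfy $\sum_i\Ex{X_i^2}=1$ and $\sum_i\Ex{\smallabs{X_i}^3}=n\cdot n^{-3/2}=n^{-1/2}$: for $n$ large the lemma gives $\inparen{\Ex{\smallabs{S_n/\sqrt n}^q}}^{1/q}\le\gamma_q\cdot\inparen{1+O\inparen{n^{-1/2}(\log n)^{q/2}}}$. For the matching \emph{lower} bound I would combine the Berry--\Esseen estimate for the CDF of $S_n/\sqrt n$ with the Hoeffding tail bound $\Pr{\smallabs{S_n/\sqrt n}\ge t}<2e^{-2t^2}$ (exactly as in the proof of Lemma~\ref{lem:clt-analytic-sparse}) to bound $\Ex{\smallabs{S_n/\sqrt n}^q}=\int_0^\infty q u^{q-1}\Pr{\smallabs{S_n/\sqrt n}\ge u}\,du$ from below by $\gamma_q^q(1-o(1))$; alternatively, one may simply cite~\cite{Stechkin61} for the precise two-sided statement.

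The main obstacle is the passage from convergence in distribution to convergence of the $q$-th moment: the CLT by itself is not enough, and one needs the uniform tail control supplied either by Khintchine's inequality (for the abstract argument) or by the Berry--\Esseen plus Hoeffding bounds (for the quantitative version). The remaining ingredients — the sign-symmetry reduction to $S_n$, the identity $\gamma_q^q=\Ex{g^q}$ for even $q$, and extracting $q$-th roots — are routine.
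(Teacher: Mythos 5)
The paper does not give its own proof of this statement; it is cited verbatim from Stechkin's 1961 paper \cite{Stechkin61}, so there is no paper argument to compare against. Your proposal supplies a correct, self-contained replacement. You correctly read through the typo in the displayed equation (which should be $\Ex{\mysmalldot{R}{x}^q}^{1/q}$, as is clear from the use in \cref{derandomized:operator}), and you correctly note that the application only ever needs even $q$, so that $\mysmalldot{R}{x}^q = \smallabs{\mysmalldot{R}{x}}^q$.

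The sign-symmetry reduction to $S_n/\sqrt n$ is right: independence plus symmetry of the $R_i$ gives $(\varepsilon_i R_i)_i \stackrel{d}{=} (R_i)_i$ for fixed signs $\varepsilon_i$, and the remaining scalar $c/\sqrt n$ accounts for $\cnorm{2}{x}$. The uniform-integrability step is the genuine content, and you handle it correctly: Khintchine (in the form of \cref{khintchine:clt}) gives $\Ex{\smallabs{S_n/\sqrt n}^{2q}} \le \gamma_{2q}^{2q}$ uniformly in $n$, which bounds $\{(S_n/\sqrt n)^q\}_n$ in $L^2$ and hence makes it UI, so the CLT can be upgraded to convergence of $q$-th moments. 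Your quantitative alternative, applying \cref{lem:clt-analytic-sparse} with $X_i = R_i/\sqrt n$ and $\delta = n^{-1/2}$ to get the upper side $\gamma_q\bigl(1 + O(n^{-1/2}(\log n)^{q/2})\bigr)$ and the Berry--\Esseen / Hoeffding argument (already written out inside the proof of \cref{lem:clt-analytic-sparse}) for the lower side, also works and makes the $o(1)$ explicit in $n$. Either route is a legitimate proof; the paper's choice to cite rather than prove is simply an economy of exposition, and the completeness step of \cref{hypercontractive:derandomized:const:factor} only needs the lower bound, exactly as you observe.
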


\begin{theorem}[Khintchine inequality~\cite{Haagerup81}]
\label{khintchine:clt}
    Let $R\in\R^n$ be a vector random variable with i.i.d. Rademacher coordinates. Then for any $q\geq 2$ 
    and any $x \in \R^n$, 
    \[
        \Ex{\mysmalldot{R}{x}^q}^{1/q} \leq \gamma_q \cdot \cnorm{2}{x}.
    \]
\end{theorem}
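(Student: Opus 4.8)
The plan is to establish the sharp Khintchine inequality with the Gaussian constant, i.e.\ that for $S \defeq \mysmalldot{R}{x} = \sum_i R_i x(i)$ one has $\Ex{\abs{S}^q}^{1/q} \le \gamma_q \cnorm{2}{x}$ for every $q \ge 2$. I would first dispatch the case actually used in this chapter (see the reduction in \cref{sec:derandomization}, which invokes this bound only for \emph{even integers} $q$, cf.\ \cref{derandomized:operator}) by an elementary moment comparison, and then explain how the general real-$q$ statement reduces to an extremal ``flat vector'' configuration, citing \cite{Haagerup81} for the analytic core.

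For $q = 2m$ an even integer, let $g_1, \dots, g_n$ be i.i.d.\ $\gaussian{0}{1}$ and set $G \defeq \sum_i g_i x(i) \sim \gaussian{0}{\cnorm{2}{x}^2}$, so that $\Ex{G^{2m}} = (2m-1)!!\,\cnorm{2}{x}^{2m} = \gamma_{2m}^{2m}\,\cnorm{2}{x}^{2m}$. Expanding by the multinomial theorem,
\[
  \Ex{S^{2m}} ~=~ \sum_{k_1 + \dots + k_n = 2m} \binom{2m}{k_1,\dots,k_n} \prod_{i} x(i)^{k_i}\, \Ex{R_i^{k_i}},
\]
and similarly for $G$ with $\Ex{R_i^{k_i}}$ replaced by $\Ex{g_i^{k_i}}$. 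Terms with some $k_i$ odd vanish in both sums; in a surviving term each $k_i = 2j_i$ is even, so $x(i)^{k_i} = (x(i)^2)^{j_i} \ge 0$, while $\Ex{R_i^{2j_i}} = 1 \le (2j_i - 1)!! = \Ex{g_i^{2j_i}}$. Hence the two expansions can be compared term by term, giving $\Ex{S^{2m}} \le \Ex{G^{2m}} = \gamma_{2m}^{2m}\,\cnorm{2}{x}^{2m}$, and taking $2m$-th roots finishes this case. (Equivalently, $\Ex{e^{\lambda S}} = \prod_i \cosh(\lambda x(i)) \le \prod_i e^{\lambda^2 x(i)^2/2} = \Ex{e^{\lambda G}}$ for all real $\lambda$, and since both sides are power series in $\lambda$ whose coefficients are nonnegative polynomials in the $x(i)^2$, comparing the $\lambda^{2m}$-coefficients recovers the same inequality.)

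For a general real $q \ge 2$, the crude bound $\cnorm{q}{S} \le \cnorm{2\lceil q/2\rceil}{S}$ together with the even case only yields the constant $\gamma_{2\lceil q/2\rceil}$, which is strictly too large. Instead one shows that, over unit vectors $x$, the quantity $\Ex{\abs{\mysmalldot{R}{x}}^q}$ is maximized in the limit by the flat vector $x = n^{-1/2}(1,\dots,1)$ as $n \to \infty$; granting this, \cref{stechkin:clt} (equivalently the central limit theorem plus uniform integrability of $\abs{S}^q$) identifies the limiting value as exactly $\gamma_q^q$, which is the asserted bound. The reduction to the flat configuration is Haagerup's theorem: via the representation $\abs{t}^q = c_q \int_0^\infty (1-\cos(st))\,s^{-1-q}\,ds$ (and an analogue involving higher derivatives for $q>2$), the expectation $\Ex{\abs{S}^q}$ is rewritten through integrals of $\prod_i \cos(s\,x(i))$, which must be controlled against the Gaussian characteristic function $e^{-s^2\cnorm{2}{x}^2/2}$.

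The main obstacle is precisely this last comparison: the naive pointwise estimate $\abs{\cos u} \le e^{-u^2/2}$ is \emph{false} for large $u$ (e.g.\ $u = 2\pi$), so the product of cosines cannot be dominated by the Gaussian transform pointwise; one must instead argue in an averaged sense and treat separately the regime where $s$ is large relative to $\min_i 1/\abs{x(i)}$. This delicate step is exactly where I would invoke \cite{Haagerup81} rather than reprove it, noting that for every application in this chapter the elementary even-integer argument above already suffices.
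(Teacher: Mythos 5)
The paper does not actually prove this statement; it is quoted verbatim as the sharp $L_q$ Khintchine inequality with a citation to \cite{Haagerup81}, so there is no in-paper argument to compare against. Your even-integer case is correct and complete: after expanding $S^{2m}$ and $G^{2m}$ by the multinomial theorem and discarding the terms with an odd exponent (which vanish for both Rademacher and Gaussian variables), each surviving monomial is a nonnegative polynomial in the $x(i)^2$ weighted by $\prod_i \Ex{R_i^{2j_i}} = 1 \le \prod_i (2j_i - 1)!! = \prod_i \Ex{g_i^{2j_i}}$, and the term-by-term comparison gives $\Ex{S^{2m}} \le \Ex{G^{2m}} = \gamma_{2m}^{2m}\,\cnorm{2}{x}^{2m}$; the MGF remark is a valid restatement of the same coefficient domination. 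You are also right that this is the only case the chapter uses (\cref{derandomized:operator} and \cref{hypercontractive:derandomized:const:factor} restrict to even $q$), so the elementary argument plus the citation already covers everything the paper needs. One small notational point: the theorem as stated writes $\Ex{\mysmalldot{R}{x}^q}$ without absolute values, which is only literally meaningful for even integer $q$; for general $q$ one should read $\Ex{\abs{\mysmalldot{R}{x}}^q}$, as you do. Your sketch of the non-even case correctly identifies the flat-vector extremality (with \cref{stechkin:clt} supplying the limiting value $\gamma_q^q$) and correctly pinpoints why a pointwise bound $\abs{\cos u} \le e^{-u^2/2}$ cannot be applied, which is indeed the technical core of \cite{Haagerup81}; deferring to that reference for the analytic estimate is appropriate.
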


We finally prove the derandomimzed version of 
\cref{hypercontractive:const:factor} for even $q \geq 2$.

\begin{proposition}
\label{hypercontractive:derandomized:const:factor}
    For any $p > 2, \varepsilon>0$, and any even integer $q\geq 2$, it is NP-hard to approximate 
    \onorm{p}{q} within a factor of ~$1/\gamma_{p^*}  - \varepsilon$.
\end{proposition}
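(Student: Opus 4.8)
For $p>2$, $\varepsilon>0$, and even integer $q\ge 2$, it is NP-hard to approximate $\|A\|_{p\to q}$ within a factor $1/\gamma_{p^*}-\varepsilon$.

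Let me think about the proof strategy. We have the randomized version (Proposition, `hypercontractive:const:factor`) which starts from the $\|A\|_{p\to 2}$ hardness of Theorem `thm:brs`, and then composes with a random Gaussian matrix $B$ using Dvoretzky-Milman. The derandomization goal is to replace that random $B$ with an explicitly computable $B$ (from $q$-wise independent sign vectors, Corollary `derandomized:operator`), and to replace the use of `embedding` (which needs two-sided $(1\pm o(1))$ control for *all* $x$) with one-sided control everywhere plus the lower bound only on well-spread $x$. The key observation is that Theorem `thm:brs` gives a completeness solution $x=Ax$ with $|x(i)|=1$ for all $i$, i.e. a maximally spread vector; so we only need the lower bound on such vectors, which Stechkin's theorem (`stechkin:clt`) provides, while the upper bound for all $x$ is exactly the Khintchine inequality (`khintchine:clt`).

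**Plan.** Fix $p>2$, $\varepsilon>0$, and even $q\ge 2$. Given a 3-SAT formula $\phi$, run the reduction of Theorem `thm:brs` (with its error parameter chosen small enough in terms of $\varepsilon$) to produce the symmetric matrix $A\in\R^{n\times n}$. Apply Corollary `derandomized:operator` with this $q$ to obtain, in $n^{O(q)}$ time, a matrix $B\in\R^{m\times n}$ with $m=O(n^{q/2})$ such that for every $x\in\R^n$,
\[
\cnorm{q}{Bx} \;=\; m^{1/q}\cdot \Ex{R}{\mysmalldot{R}{x}^q}^{1/q},
\]
where $R$ has i.i.d. Rademacher coordinates (the expression inside the expectation is $\ge 0$ since $q$ is even). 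Output $BA$ as the instance for $\|\cdot\|_{p\to q}$; since the norm of every vector can be computed exactly (all entries are rationals of bounded bit-length, $q$ an integer), this is a valid polynomial-time many-one reduction.

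\textbf{Completeness.} If $\phi$ is satisfiable, Theorem `thm:brs` gives $x\in\R^n$ with $|x(i)|=1$ for all $i$ and $Ax=x$. Then by Corollary `derandomized:operator` and Stechkin's theorem `stechkin:clt` (applicable since all coordinates of $x$ have equal absolute value),
\[
\cnorm{q}{BAx} \;=\; \cnorm{q}{Bx} \;=\; m^{1/q}\cdot \Ex{R}{\mysmalldot{R}{x}^q}^{1/q} \;\ge\; (1-o(1))\cdot m^{1/q}\cdot \gamma_q\cdot \cnorm{2}{x}.
\]
Since also $\cnorm{p}{x} = n^{1/p}$ and $\cnorm{2}{x} = n^{1/2}$, this yields $\cnorm{p}{q}{BA} \ge (1-o(1))\cdot m^{1/q}\cdot \gamma_q\cdot n^{1/2-1/p}$, i.e. $\enorm{p}{2}{A}\ge 1$ is promoted to a correspondingly large value of $\cnorm{p}{q}{BA}$ (concretely, $\ge (1-o(1))\,m^{1/q}\gamma_q$ after normalizing $A$ as in Theorem `thm:brs`).

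\textbf{Soundness.} If $\phi$ is unsatisfiable, then for \emph{every} $x\in\R^n$, Corollary `derandomized:operator` together with the Khintchine inequality `khintchine:clt` gives
\[
\cnorm{q}{Bx} \;=\; m^{1/q}\cdot \Ex{R}{\mysmalldot{R}{x}^q}^{1/q} \;\le\; m^{1/q}\cdot \gamma_q\cdot \cnorm{2}{x}.
\]
Hence $\cnorm{p}{q}{BA}\le m^{1/q}\gamma_q\cdot \cnorm{p}{2}{A}$, and by Theorem `thm:brs` (taking counting versus expectation norms into account, $\cnorm{p}{2}{A} = n^{1/2-1/p}\,\enorm{p}{2}{A}$, and $\enorm{p}{2}{A}=\enorm{2}{p^*}{A}\le \gamma_{p^*}+\varepsilon'$ for the error parameter $\varepsilon'$ chosen in the reduction) we get $\cnorm{p}{q}{BA}\le (1+o(1))\,m^{1/q}\gamma_q\cdot n^{1/2-1/p}\,(\gamma_{p^*}+\varepsilon')$. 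Comparing with the completeness bound, the multiplicative gap is $(1-o(1))/((1+o(1))(\gamma_{p^*}+\varepsilon'))$, which for a suitable choice of $\varepsilon'$ (in terms of $\varepsilon$) and large enough $n$ exceeds $1/\gamma_{p^*}-\varepsilon$. This establishes NP-hardness of $(1/\gamma_{p^*}-\varepsilon)$-approximation for $\|\cdot\|_{p\to q}$.

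\textbf{Main obstacle.} The technical crux is arranging that we never need a two-sided isometry estimate for \emph{all} vectors — which would be false for a deterministic $q$-wise independent $B$ — by exploiting that the completeness witness of Theorem `thm:brs` is the maximally spread vector $(\pm1,\dots,\pm1)$. This is precisely why Stechkin's equal-absolute-value hypothesis is acceptable on the completeness side and why we only need the one-sided Khintchine bound on the soundness side; one must check that the $q$-wise independence in Corollary `derandomized:operator` is genuinely enough for both Stechkin's and Khintchine's inequalities, i.e. that $\Ex{R}{\langle R,x\rangle^q}$ depends only on moments of order $\le q$ of the coordinates (true because $\langle R,x\rangle^q$ expands into monomials of degree exactly $q$ in the $R_i$). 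The remaining bookkeeping — normalization conventions between $\enorm{p}{q}{\cdot}$ and $\cnorm{p}{q}{\cdot}$, choosing $\varepsilon'$ and absorbing $o(1)$ terms — is routine.
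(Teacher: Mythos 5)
Your proof is correct and is essentially the same as the paper's: the same reduction $A \mapsto BA$ with $B$ from Corollary~\ref{derandomized:operator}, the same use of Stechkin's theorem on the spread completeness witness of Theorem~\ref{thm:brs}, and the same one-sided Khintchine bound for soundness. You also correctly identify the key structural point — that the asymmetry between the two sides (exact $\pm 1$ vector on the completeness side, arbitrary vector on the soundness side) is exactly what allows a one-sided deterministic embedding to replace the two-sided Dvoretzky-Milman estimate of the randomized reduction. The only difference is that you spell out the counting-vs.-expectation-norm bookkeeping more explicitly than the paper does, which is harmless.
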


\begin{proof}
    Apply \cref{thm:brs} with $r_1 \leftarrow p^*$ and $\varepsilon \leftarrow \varepsilon$.  Given an instance $\phi$ of \threesat, 
    \cref{thm:brs} produces a symmetric matrix $A \in \R^{n \times n}$ in polynomial time as a hard instance of \onorm{p}{2}.  Our instance for \onorm{p}{q} is $BA$ where $B$ is 
    the $m \times n$ matrix given by \cref{derandomized:operator} with $m = O(n^{q/2})$. 
    \begin{itemize}
        \item (Completeness) If $\phi$ is satisfiable, there exists a vector $x\in\{\pm \frac{ 1}{\sqrt{n}}\}^n$ such that $Ax = x$. 
        So we have $\cnorm{q}{BAx} = \cnorm{q}{Bx} = (1-o(1))\cdot m^{1/q} \cdot \gamma_q$, where the last 
        equality uses \cref{derandomized:operator} and \cref{stechkin:clt}. Thus, $\cnorm{p}{q}{BA}\geq 
        (1-o(1))\cdot m^{1/q} \cdot \gamma_q$. 
        
        \item (Soundness) If $\phi$ is not satisfiable, then for any $x$ with $\cnorm{p}{x}=1$, 
            \begin{align*}
                &\cnorm{q}{BAx} = m^{1/q} \cdot \Ex{R}{\mysmalldot{R}{Ax}^q}^{1/q} \leq m^{1/q} \cdot \gamma_q\cdot \cnorm{2}{Ax} 
                 \\
                \leq~ & m^{1/q} \cdot  \gamma_q \cdot \cnorm{p}{2}{A} 
                \leq~ m^{1/q} \cdot  \gamma_q \cdot (\gamma_{p^*}-\varepsilon)
            \end{align*}
            where the first inequality is a direct application of~\cref{khintchine:clt}. 
            \hfill\qedhere
   \end{itemize}
\end{proof}

\subsection{Hypercontractive Norms Productivize}\label{sec:productivize}
We will next amplify our hardness results using the fact that hypercontractive norms
productivize under the  natural operation of Kronecker or tensor product. 
Bhaskara and Vijayraghavan~\cite{BV11} showed this for the special case of $p=q$ and the Harrow  and 
Montanaro~\cite{HM13} showed this for \onorm{2}{4} (via parallel repetition for $\mathrm{QMA(2)}$). 
In this section we prove this claim whenever $p\leq q$. 

\begin{theorem}
\label{productivization}
    Let $A$ and $B$ be $m_1\times n_1$ and $m_2\times n_2$ matrices respectively. Then for any $1\leq p\leq q < \infty$, 
    $\cnorm{p}{q}{A\otimes B} \leq \cnorm{p}{q}{A}\cdot \cnorm{p}{q}{B}$. 
\end{theorem}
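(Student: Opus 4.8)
The plan is to identify $\R^{n_1 n_2}$ with the space of $n_1\times n_2$ matrices and to exploit the fact that a Kronecker product acts on the two tensor factors independently. Given $x\in\R^{n_1 n_2}$, write its entries as $X_{ij}$ for $i\in[n_1]$, $j\in[n_2]$, and observe that the entries of $(A\otimes B)x$, indexed by $(k,l)\in[m_1]\times[m_2]$, are $((A\otimes B)x)_{kl}=\sum_{i,j}A_{ki}B_{lj}X_{ij}$. Reading this two ways: for each fixed $k$ the ``row slice'' $\bigl(((A\otimes B)x)_{kl}\bigr)_{l\in[m_2]}$ equals $B z^{(k)}$, where $z^{(k)}$ is the $k$-th row of $AX$; and for each fixed $j$ the $j$-th column of $AX$ equals $A w^{(j)}$, where $w^{(j)}$ is the $j$-th column of $X$.

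First I would peel off $B$. Since $\cnorm{q}{Bz^{(k)}}\le \cnorm{p}{q}{B}\cdot\cnorm{p}{z^{(k)}}$ for every $k$, summing $q$-th powers over $k$ gives
\[
\cnorm{q}{(A\otimes B)x}^q=\sum_{k}\cnorm{q}{Bz^{(k)}}^q\ \le\ \cnorm{p}{q}{B}^q\sum_{k}\Bigl(\sum_j |(AX)_{kj}|^p\Bigr)^{q/p}.
\]
The exponent $q/p\ge 1$ is exactly where convexity enters: applying Minkowski's inequality for the $\ell_{q/p}$-norm over the index $k$ to the nonnegative functions $k\mapsto |(Aw^{(j)})_k|^p$ yields
\[
\sum_{k}\Bigl(\sum_j |(AX)_{kj}|^p\Bigr)^{q/p}\ \le\ \Bigl(\sum_j \cnorm{q}{Aw^{(j)}}^p\Bigr)^{q/p}.
\]
Then I would peel off $A$ in the same way, using $\cnorm{q}{Aw^{(j)}}\le\cnorm{p}{q}{A}\cdot\cnorm{p}{w^{(j)}}$ together with $\sum_j\cnorm{p}{w^{(j)}}^p=\sum_{i,j}|X_{ij}|^p=\cnorm{p}{x}^p$. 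Chaining the three bounds gives $\cnorm{q}{(A\otimes B)x}^q\le\cnorm{p}{q}{A}^q\,\cnorm{p}{q}{B}^q\,\cnorm{p}{x}^q$; taking $q$-th roots and the supremum over $\cnorm{p}{x}=1$ finishes the proof.

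The only place $p\le q$ is used is the Minkowski step, and that step is really the entire content of the theorem — everything else is bookkeeping about Kronecker products and two invocations of the definition of the $p\to q$ norm. I do not expect a genuine obstacle here (and indeed for $p>q$ the statement can fail, so one should not hope for more); the one mild subtlety to get right is the reshaping/index convention, so that $B$ consistently acts on the rows of $AX$ while $A$ acts on its columns. By the symmetry of the two factors, the choice of which of $A,B$ is peeled off first is immaterial.
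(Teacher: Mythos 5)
Your proposal is correct and takes essentially the same route as the paper: peel off one factor via the definition of the $p\to q$ norm, use that $q/p\ge 1$ to move the outer $\ell_{q/p}$ sum past the inner $\ell_1$ sum, and then peel off the other factor. The only cosmetic differences are that the paper peels off $A$ first (you peel off $B$, immaterial by symmetry, as you note) and phrases the key step as convexity of $\|\cdot\|_{q/p}^{q/p}$ applied to a convex combination with weights $\lambda_k=\cnorm{p}{z_k}^p$, whereas you invoke Minkowski's inequality for $\ell_{q/p}$ directly — two formulations of the same underlying fact.
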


\begin{proof}
    We will begin with some notation. Let $a_{i},b_{j}$ respectively denote the $i$-th and $j$-th rows of $A$ 
    and $B$. Consider any $z\in \R^{[n_1]\times [n_2]}$ satisfying $\cnorm{p}{z}=1$. 
    For $k\in [n_1]$, let $z_k\in \R^{n_2}$ denote the vector given by $z_k(\ell) := z(k,\ell)$. 
    For $j\in [m_2]$, let $\barz_j\in \R^{n_1}$ denote the vector given by $\barz_j(k) := 
    \mysmalldot{b_{j}}{z_{k}}$. 
    Finally, for $k\in [n_1]$, let $\lambda_k := \cnorm{p}{z_k}^{p}$ and let $v_k\in \R^{m_2}$ be the vector given 
    by $v_k(j) := |\barz_j(k)|^{p}/\lambda_k$. 
    
    We begin by 'peeling off' $A$:
    \begin{align*}
        \cnorm{q}{(A\otimes B)z}^{q} 
        ~=~ 
        \sum_{i,j} |\mysmalldot{a_i\otimes b_j}{z}|^{q}  
        &~=~
        \sum_{j} \sum_{i} |\mysmalldot{a_i}{\barz_j}|^{q}  \\
        &~=~
        \sum_{j} \cnorm{q}{A\barz_j}^{q}  \\ 
        &~\leq~
        \cnorm{p}{q}{A}^{q}\cdot \sum_{j} \cnorm{p}{\barz_j}^{q}  \\
        &~=~
        \cnorm{p}{q}{A}^{q}\cdot \sum_{j} \inparen{\cnorm{p}{\barz_j}^{p}}^{q/p} 
    \end{align*}
    In the special case of $p=q$, the proof ends here since the expression is a sum of terms of the 
    form $\cnorm{p}{By}^p$ and can thus be upper bounded term-wise by 
    $\cnorm{p}{p}{B}^p\cdot \cnorm{p}{z_k}^p$ which sums to $\cnorm{q}{p}{B}^p$. To handle the case of 
    $q>p$, we will use a convexity argument: 
    \begin{align*}
        &\cnorm{p}{q}{A}^{q}\cdot \sum_{j} \inparen{\cnorm{p}{\barz_j}^{p}}^{q/p}  \\
        ~=~
        &\cnorm{p}{q}{A}^{q}\cdot \sum_{j} \inparen{\sum_k |\barz_j(k)|^{p}}^{q/p}  \\
        ~=~
        &\cnorm{p}{q}{A}^{q}\cdot \cnorm{q/p}{\sum_k \lambda_k\cdot v_k}^{q/p}  
&&(|\barz_j(k)|^{p} = \lambda_k v_k(j)) \\
        ~\leq~
        &\cnorm{p}{q}{A}^{q}\cdot \sum_k \lambda_k \cdot \cnorm{q/p}{v_k}^{q/p}  
        &&(\text{by convexity of  }\norm{q/p}{\cdot}^{q/p} \text{ when } q\geq p) \\
        ~\leq~
        &\cnorm{p}{q}{A}^{q}\cdot \max_k \cnorm{q/p}{v_k}^{q/p}  
    \end{align*}
    
    It remains to show that $\cnorm{q/p}{v_k}^{q/p}$ is precisely $\cnorm{q}{Bz_k}^q/\cnorm{p}{z_k}^q$. 
    \begin{align*}
        \cnorm{p}{q}{A}^{q}\cdot \max_k \cnorm{q/p}{v_k}^{q/p} 
        ~=~
        &\cnorm{p}{q}{A}^{q}\cdot \max_k \frac{1}{\cnorm{p}{z_k}^q}\cdot \sum_j |\barz_j(k)|^q  \\
        ~=~
        &\cnorm{p}{q}{A}^{q}\cdot \max_k \frac{1}{\cnorm{p}{z_k}^q}\cdot \sum_j |\mysmalldot{b_j}{z_k}|^q  \\
        ~=~
        &\cnorm{p}{q}{A}^{q}\cdot \max_k \frac{\cnorm{q}{Bz_k}^q}{\cnorm{p}{z_k}^q}  \\
        ~\leq~
        &\cnorm{p}{q}{A}^{q}\cdot \cnorm{p}{q}{B}^q   
    \end{align*}
    Thus we have established $\cnorm{p}{q}{A\otimes B} \leq \cnorm{p}{q}{A}\cdot \cnorm{p}{q}{B}$. 
    Lastly, the claim follows by observing that the statement is equivalent to the statement obtained by 
    replacing the counting norms with expectation norms. 
\end{proof}

We finally establish super constant NP-Hardness of approximating \onorm{p}{q}, proving \cref{thm:main_hype}.
\begin{proofof}{\cref{thm:main_hype}}
Fix $2 < p \leq q < \infty$. 
\cref{hypercontractive:const:factor} states that there exists $c = c(p, q) > 1$ such that 
any polynomial time algorithm approximating the \onorm{p}{q} of an $n \times n$-matrix $A$ within a factor of $c$ will imply $\NP \subseteq \BPP$. 
Using \cref{productivization}, for any integer $k \in \N$ and $N = n^k$, any polynomial time algorithm approximating the \onorm{p}{q} of an $N \times N$-matrix $A^{\otimes k}$ within a factor of $c^k$ 
implies that $\NP$ admits a randomized algorithm running in time $\poly(N) = n^{O(k)}$. 
Under $\NP \not\subseteq \BPP$, any constant factor approximation algorithm is ruled out by setting $k$ to be a sufficiently large constant. 
For any $\varepsilon > 0$, setting $k = \log^{1/\varepsilon} n$ rules out an approximation factor of 
$c^k = 2^{O(\log^{1 - \varepsilon} N)}$ unless $\NP \subseteq \BPTIME{2^{\log^{O(1)} n}}$.

By duality, the same statements hold for $1 < p \leq q < 2$. 
When $2 < p \leq q$ and $q$ is an even integer, all reductions become deterministic due to \cref{hypercontractive:derandomized:const:factor}.
\end{proofof}
\subsection{A Simple Proof of Hardness for the Case $2 \notin [q,p]$}\label{sec:reverse}
In this section, we prove~\cref{thm:main_reverse}. That is, we show how to prove an almost-polynomial
factor hardness for approximating \onorm{p}{q} in the non-hypercontractive case when $2 > p \geq q$ (and
the case $p \geq q > 2$ follows by duality). 
This result is already known from the work of Bhaskara and Vijayaraghavan \cite{BV11}. We
show how to obtain a more modular proof, composing our previous results with a simple embedding
argument. 
However, while the reduction in \cite{BV11} was deterministic, we will only give a randomized
reduction below.

As in \cite{BV11}, we start with a strong hardness for the \onorm{p}{p}, obtained in
\cref{thm:main_hype}. While the reduction in \cite{BV11} relied on special properties of the
instance for \onorm{\ell_p}{\ell_p}, we can simply use the following embedding result of Schechtman
\cite{Schechtman87} (phrased in a way convenient for our application). 
\begin{theorem}[Schechtman \cite{Schechtman87}, Theorem 5]
Let $q<p<2$ and $\eps>0$. Then, there exists a polynomial time samplable distribution $\calD$ 
on random matrices in $\R^{m \times
  n}$ with $m = \Omega_{\eps}(n^{3})$,  such that with probability $1-o(1)$, we have for every
$x \in R^n$,  $\norm{\ell_q}{Bx} ~=~ (1 \pm \eps) \cdot \norm{\ell_p}{x}$.
\label{thm:schechtman}
\end{theorem}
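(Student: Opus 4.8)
The plan is to prove the embedding via the classical \emph{$p$-stable random variable} construction, combined with a random-sampling (empirical process) argument. Recall that for $0<q<p\le 2$ there is a standard symmetric $p$-stable random variable $\theta$ with the defining property that if $\theta_1,\dots,\theta_n$ are i.i.d.\ copies of $\theta$ and $x\in\R^n$, then $\sum_i x_i\theta_i \stackrel{d}{=} \|x\|_p\cdot\theta$. Since a $p$-stable variable has finite absolute moments of every order strictly less than $p$, and $q<p$, the constant $\mu \defeq \bigl(\Ex{\lvert\theta\rvert^q}\bigr)^{1/q}$ is finite, whence
\[
\Ex{\Bigl\lvert\sum_i x_i\theta_i\Bigr\rvert^q}~=~\mu^q\cdot\|x\|_p^q \qquad\text{for every } x\in\R^n.
\]
This already realizes an isometric embedding of $\ell_p^n$ into $L_q$; the real content of the theorem is to discretize it down to $\ell_q^m$ with $m$ only polynomial in $n$.

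Concretely, I would let $B'\in\R^{m\times n}$ have i.i.d.\ standard symmetric $p$-stable entries and set $B \defeq (m\,\mu^q)^{-1/q}B'$. Writing $b'_1,\dots,b'_m$ for the rows of $B'$, we have $\|Bx\|_q^q = (m\mu^q)^{-1}\sum_{j=1}^m \lvert\langle b'_j,x\rangle\rvert^q$, and the identity above gives $\Ex{\|Bx\|_q^q} = \|x\|_p^q$ for each fixed $x$. The task reduces to showing that, once $m$ is a suitable polynomial in $n$, with probability $1-o(1)$ this holds up to a multiplicative $(1\pm\eps)$ \emph{simultaneously} for all $x$. The distribution $\calD$ is then the law of $B$; it is polynomial-time samplable because standard symmetric $p$-stable variables are generated from two independent uniform samples by the Chambers--Mallows--Stuck formula using only elementary functions, and since we only need an approximate isometry, truncating the uniform samples to polynomially many bits is harmless.

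The heart of the proof is a uniform concentration estimate over the $\ell_p$-unit sphere. For a fixed $x$ with $\|x\|_p=1$, $\|Bx\|_q^q$ is an average of $m$ i.i.d.\ nonnegative variables $Y_j = (m\mu^q)^{-1}\lvert\langle b'_j,x\rangle\rvert^q$, each with mean $m^{-1}$, where $\langle b'_j,x\rangle$ is again standard $p$-stable. The main obstacle is that the $Y_j$ are \emph{heavy-tailed}: $\Pr{Y_j > t}$ decays only polynomially, like $t^{-p/q}$, so there is no sub-Gaussian concentration and a handful of large terms can dominate the sum. I would handle this by truncation at a threshold $T$: write $Y_j = Y_j\mathbf{1}_{\{Y_j\le T\}} + Y_j\mathbf{1}_{\{Y_j>T\}}$; the truncated part is bounded, so Bernstein's inequality controls its deviation from its mean; the tail part is controlled by (i) the estimate $\sum_j\Ex{Y_j\mathbf{1}_{\{Y_j>T\}}} = O\bigl((mT)^{1-p/q}\bigr)\to 0$, which is exactly where finiteness of $\Ex{\lvert\theta\rvert^q}$ is used, and (ii) $\Pr{\exists j:\ Y_j>T} = O\bigl(m^{1-p/q}T^{-p/q}\bigr)\to 0$, which discards the event that any single term is large. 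Taking a $\delta$-net $\calN$ of the $\ell_p$-sphere (of size $(C/\delta)^n$) and a union bound over $\calN$ then pins down the polynomial size of $m$: balancing the Bernstein exponent against $\lvert\calN\rvert$ and against the heavy-tail terms — whose decay is governed by the fractional exponent $p/q-1$, which may be arbitrarily small when $q$ is close to $p$ — forces $m$ to be a polynomial in $n$, and a conservative choice of $T$ and $\delta$ yields $m=\Omega_\eps(n^3)$.

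Finally, to pass from the net $\calN$ to all of $\R^n$ I would use that on the same high-probability event there is a crude a priori bound on $\|B\|_{p\to q}$ (obtained, e.g., by controlling $\max_{i,j}\lvert B'_{i,j}\rvert$ via the polynomial stable tails together with $m=\poly(n)$), so $x\mapsto\|Bx\|_q$ is Lipschitz and the standard successive-approximation argument upgrades $(1\pm\eps/2)$ control on $\calN$ to $(1\pm\eps)$ control everywhere, which is the assertion of the theorem. The step I expect to be delicate is precisely the heavy-tailed concentration: extracting a usable deviation inequality for sums of variables that possess only a fractional moment above $1$, and doing the bookkeeping so that a union bound over an exponential-size net still closes with $m$ merely polynomial (cubic) in $n$.
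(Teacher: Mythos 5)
Your overall strategy---embed $\ell_p^n$ into $L_q$ via $p$-stable variables, discretize by sampling $m$ i.i.d.\ rows, and control the supremum over an $\eps$-net---is exactly the route the paper takes. But there is a genuine gap, and it is not merely the bookkeeping you flag at the end: the matrix you construct, with \emph{untruncated} i.i.d.\ stable entries, does not satisfy the conclusion of the theorem when $q/p$ is close to $1$. With $mn$ i.i.d.\ $p$-stable entries, $\max_{i,j}|B'_{ij}|$ is of order $(mn)^{1/p}$ with probability $1-o(1)$, so for the corresponding coordinate vector $e_{i^*}$ one gets $\|Be_{i^*}\|_q^q \geq (m\mu^q)^{-1}\max_j |B'_{ji^*}|^q \gtrsim m^{q/p-1}n^{q/p}$, which with $m=n^3$ is $n^{4q/p-3}\to\infty$ whenever $q/p>3/4$, while $\|e_{i^*}\|_p=1$. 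So the uniform upper bound fails for the construction itself; no analysis can repair it. Your analysis reflects the same problem from the other side: the bad event $\{\exists j: Y_j>T\}$ is $x$-dependent and has probability only polynomially small in $m$, so it cannot be unioned over the $e^{O(n)}$-size net, and enlarging $T$ to compensate destroys the Bernstein bound for the truncated part.

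The fix---and this is what the paper does, following Schechtman---is to truncate the stable vector \emph{coordinatewise before forming the matrix}: replace $Z$ by $Y = Z\cdot\mathbf{1}[\forall i\ |Z_i|\le B]$ with $B=O_{\eps,p,q}(n^{1/p})$. A H\"older argument with an exponent $(1+\delta)q<p$ (using finiteness of the $(1+\delta)q$-th stable moment and the tail bound $\Pr{\exists i\ |Z_i|>B}\le nC_pB^{-p}$) shows this changes $\|\langle a,\cdot\rangle\|_{L_q}$ by at most $\eps\|a\|_{\ell_p}$. The payoff is an a priori $L_\infty$ bound $|\langle a,Y\rangle|\le \|a\|_{\ell_1}\|Y\|_{\ell_\infty}\le n^{1-1/p}\cdot B\cdot\|a\|_{\ell_p}=O(n)\|a\|_{\ell_p}$, uniform in $a$ and in the randomness, so every summand in $\|Bx\|_q^q$ is deterministically bounded by $O(n^q/m)$ times the mean; Bernstein then gives per-point failure probability $\exp(-\Omega(\eps^2 m/n^q))$, which beats the net of size $e^{O(n)}$ once $m=\Omega_\eps(n^{1+q})=O(n^3)$. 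Your Lipschitz extension from the net and the samplability argument are fine once this is in place.
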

In fact the distribution $\calD$ is based on $p$-stable distributions. 
While the theorem in~\cite{Schechtman87} does not mention the high probability bound or
samplability, it is easy to modify the proof to obtain these properties. We provide a proof sketch
below for completeness. We note that Schechtman obtains a stronger bound of $O(n^{1+p/q})$ on the
dimension $m$ of the $\ell_q$ space, which requires a more sophisticated argument using ``Lewis
weights''. However, we only state weaker $O(n^3)$ bound above, which suffices for our purposes and
is easier to convert to a samplable distribution. 

We first prove the following hardness result for approximating \onorm{p}{q} in the
reverse-hypercontractive case, using \cref{thm:schechtman}.
\begin{theorem}
For any $p, q$ such that $1 < q \leq p < 2$ or $2 < q \leq p < \infty$ and $\epsilon > 0$,  there
is no polynomial time algorithm that approximates the \onorm{p}{q} of an $n \times n$ matrix within
a factor $2^{\log^{1 - \epsilon} n}$ unless $\NP \subseteq \BPTIME{2^{(\log n)^{O(1)}}}$. 
\label{thm:reverse}
\end{theorem}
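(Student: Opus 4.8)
The plan is to transport the already-established almost-polynomial hardness of approximating $\cnorm{p}{p}{\cdot}$ to $\cnorm{p}{q}{\cdot}$ by post-composing the hard instance with an approximately isometric embedding of $\ell_p$ into $\ell_q$. First, by the duality identity $\cnorm{p}{q}{A} = \cnorm{q^*}{p^*}{A^T}$ of \cref{onorm:groth}, and since $A \mapsto A^T$ is an exact polynomial-time reduction, the case $2 < q \le p < \infty$ (where $1 < p^* \le q^* < 2$) is equivalent to the case $1 < q \le p < 2$ after renaming the exponents. So it suffices to treat $1 < q \le p < 2$. If $q = p$ there is nothing to prove: \cref{thm:main_hype} with the two exponents set equal already gives that $\cnorm{p}{p}{\cdot}$ is NP-hard to approximate within $2^{\log^{1-\eps}n}$ unless $\NP \subseteq \BPTIME{2^{(\log n)^{O(1)}}}$. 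Hence assume $q < p < 2$.

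\textbf{The reduction.} Given an $n \times n$ matrix $A$ witnessing the hardness of $\cnorm{p}{p}{\cdot}$, sample $B \sim \calD$ from the distribution of \cref{thm:schechtman} with distortion parameter $\eta > 0$ to be fixed; thus $B \in \R^{m \times n}$ with $m = \Omega_\eta(n^3)$, and with probability $1 - o(1)$ over $B$ we have $\cnorm{q}{By} = (1 \pm \eta)\cnorm{p}{y}$ \emph{simultaneously} for all $y \in \R^n$. Output $C := BA$, padded with zero columns to a square $N \times N$ matrix with $N = m$ (appending zero columns does not change the $p \to q$ norm). Conditioned on $B$ having the embedding property, for every $x$ we get $\cnorm{q}{Cx} = \cnorm{q}{B(Ax)} = (1 \pm \eta)\cnorm{p}{Ax}$, hence $\cnorm{p}{q}{C} = (1 \pm \eta)\cnorm{p}{p}{A}$. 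Therefore any polynomial-time $\alpha$-approximation for $\cnorm{p}{q}{\cdot}$ on $N \times N$ matrices yields, by composing with the sampling of $B$, a randomized polynomial-time algorithm that $\alpha(1 + O(\eta))$-approximates $\cnorm{p}{p}{\cdot}$.

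\textbf{Parameter bookkeeping.} The map $A \mapsto C$ is computable in randomized polynomial time and blows up the dimension only polynomially, $N = \Theta_\eta(n^3)$, so $\log N = \Theta(\log n)$. Fix any $\eps > 0$, take $\eta$ to be a small enough constant that $(1 + O(\eta))$ is negligible, and suppose $\cnorm{p}{q}{\cdot}$ admitted a polynomial-time $2^{\log^{1-\eps}N}$-approximation. Then the reduction would give a randomized polynomial-time algorithm approximating $\cnorm{p}{p}{\cdot}$ on $n \times n$ matrices within $2^{\log^{1-\eps}(\Theta(n^3))}(1 + O(\eta)) \le 2^{\log^{1-\eps/2}n}$ for large $n$, contradicting \cref{thm:main_hype} (applied with $\eps/2$) unless $\NP \subseteq \BPTIME{2^{(\log n)^{O(1)}}} = \BPTIME{2^{(\log N)^{O(1)}}}$. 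Because the reduction uses randomness (the sampling of $B$), the conclusion is stated with $\BPTIME{\cdot}$, matching the theorem. Undoing the duality step of the first paragraph then yields the statement for $2 < q \le p < \infty$ as well.

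\textbf{Main obstacle.} Given \cref{thm:main_hype} and \cref{thm:schechtman}, the composition above is entirely routine; the only points needing care are the dimension-blow-up bookkeeping and the fact that the reduction is randomized. The genuine content is concentrated in \cref{thm:schechtman} itself — producing a polynomial-time samplable family of $m \times n$ matrices with $m = \mathrm{poly}(n)$ that $(1 \pm \eta)$-embed $\ell_p^n$ into $\ell_q^m$ with probability $1 - o(1)$. This rests on the classical $p$-stable embedding: one normalizes a matrix with i.i.d.\ $p$-stable entries, truncates the entries to control their large deviations (which is what keeps the construction polynomial-time samplable and keeps $m$ polynomial), and then upgrades the ``for a fixed $x$, in expectation'' guarantee to ``for all $x$, with high probability'' by a union bound over an $\eta$-net of the unit sphere of $\ell_p^n$ together with standard tail estimates for the $\ell_q$-norm of a vector of truncated $p$-stable variables. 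I expect this net/truncation argument — not the reduction of \cref{thm:reverse} — to be where the real work lies.
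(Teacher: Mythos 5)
Your proposal follows essentially the same route as the paper's proof: reduce the $2 < q \le p$ case to $1 < q \le p < 2$ by duality, take the almost-polynomial hardness of $\cnorm{p}{p}{\cdot}$ supplied by \cref{thm:main_hype}, and transport it to $\cnorm{p}{q}{\cdot}$ by post-composing the hard instance with a random $(1\pm\eta)$-embedding $B : \ell_p^n \to \ell_q^m$ sampled from Schechtman's distribution (\cref{thm:schechtman}). Your extra bookkeeping — explicitly separating $q=p$ from $q<p$, padding $BA$ to a square matrix, and noting that the polynomial blowup $N = \Theta_\eta(n^3)$ together with a choice of a slightly smaller exponent in the source hardness absorbs the constant-factor distortion — is exactly the routine filling-in that the paper leaves implicit, and your identification of the real content as residing in \cref{thm:schechtman} (the truncation-plus-net argument) matches the paper's presentation.
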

\begin{proof}
We consider the case $1 < q \leq p < 2$ (the other case follows via duality).
\cref{thm:main_hype} gives a reduction from SAT on $n$ variables, approximating the \onorm{p}{p} of
matrices $A \in \R^{N \times N}$ with $N = 2^{(\log n)^{O(1/\eps)}}$, within a factor $2^{(\log
  N)^{1-\eps}}$. Sampling a matrix $B$ from the distribution $\calD$ given by \cref{thm:schechtman}
(with dimension $N$) gives that it is also hard to approximate $\norm{p}{q}{BA} \approx \norm{p}{p}{A}$,
within a factor $2^{(\log N)^{1-\eps}}$.
\end{proof}
We now give a sketch of the proof of \cref{thm:schechtman} including the samplability condition. The
key idea is to embed the space $\ell_p^n$ into the infinite-dimensional space $L_q$ (for $0 \leq q
\leq p < 2$) using $p$-stable random variables. The corresponding subspace of $L_q$ can then be
embedded into $\ell_q^m$ if the random variables (elements of $L_q$) constructed in the previous
space are bounded in $L_{\infty}$ norm. This is the content of the following claim.
\begin{claim}[Schechtman \cite{Schechtman87}, Proposition 4]
\label{clm:schechtman}
Let $\eps > 0$ and $\Omega$ be an efficiently samplable probability space and let $V$ be an
$n$-dimensional subspace of $L_q(\Omega)$, such that
\[
M ~\defeq~ \sup\inbraces{\norm{L_{\infty}}{f} ~\mid~ \norm{L_{q}}{f} \leq 1, f \in V} ~<~ \infty \mper
\]
Then there exists a polynomial time samplable distribution $\calD$ over linear operators $T:
L_q(\Omega) \to \R^{m}$ for $m = C(\eps, q) \cdot n \cdot M^q$ such that with
probability $1 - o(1)$, we have that for every $f \in V$, $\norm{\ell_q}{Tf} = (1\pm \eps) \cdot \norm{L_q}{f}$.
\end{claim}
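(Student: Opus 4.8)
The plan is to realize $T$ as a (suitably rescaled) random \emph{evaluation} operator and then combine a Bernstein-type concentration estimate for a single function with a volumetric $\eps$-net argument over the unit sphere of the $n$-dimensional space $V$. Concretely, I would draw $\omega_1,\dots,\omega_m$ independently from $\Omega$ and set
\[
  Tf ~\defeq~ m^{-1/q}\cdot\bigl(f(\omega_1),\dots,f(\omega_m)\bigr) \mper
\]
This $T$ is linear, and sampling it only requires drawing $m$ points from $\Omega$, so the resulting distribution $\calD$ is efficiently samplable because $\Omega$ is. The point is that $\norm{\ell_q}{Tf}^q = \tfrac{1}{m}\sum_{i=1}^m \abs{f(\omega_i)}^q$ is an empirical average of i.i.d.\ random variables each with mean exactly $\norm{L_q}{f}^q$.

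First I would prove the estimate for one fixed $f \in V$ with $\norm{L_q}{f}=1$. The defining property of $M$ gives the pointwise bound $0 \le \abs{f(\omega)}^q \le M^q$, and hence also $\Ex{\abs{f(\omega)}^{2q}} \le M^q \cdot \Ex{\abs{f(\omega)}^q} = M^q$. Bernstein's inequality applied to the i.i.d.\ variables $\abs{f(\omega_i)}^q$ then yields, for every $\eps \in (0,1)$,
\[
  \Pr{\Bigl\lvert \tfrac{1}{m}\textstyle\sum_{i=1}^m \abs{f(\omega_i)}^q - 1 \Bigr\rvert > \tfrac{\eps}{3}}
  ~\le~ 2\exp\!\left(-\,c\,\frac{\eps^2\, m}{M^q}\right)
\]
for an absolute constant $c>0$. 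I would then take a $\delta$-net $\calN$ of the unit sphere $S_V := \inbraces{x \in V : \norm{L_q}{x}=1}$ (with respect to $\norm{L_q}{\cdot}$) of cardinality $\abs{\calN} \le (3/\delta)^n$, for $\delta = \delta(\eps)$ a small constant to be fixed. A union bound over $\calN$ shows that, provided $m \ge C(\eps,q)\cdot n\cdot M^q$ for a large enough constant $C(\eps,q)$, with probability $1-o(1)$ we have $1-\eps/3 \le \norm{\ell_q}{Tx} \le 1+\eps/3$ simultaneously for all $x \in \calN$.

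Finally I would bootstrap from the net to all of $S_V$ using linearity of $T$. Since $V$ is finite-dimensional, $B := \sup_{x \in S_V}\norm{\ell_q}{Tx}$ is finite; for $f \in S_V$ pick $x \in \calN$ with $\norm{L_q}{f-x} \le \delta$, so that $\norm{\ell_q}{Tf} \le \norm{\ell_q}{Tx} + \norm{\ell_q}{T(f-x)} \le (1+\eps/3) + \delta B$, and taking the supremum over $f$ gives $B \le (1+\eps/3)/(1-\delta) \le 1+\eps$ once $\delta$ is small. The matching lower bound follows the same way: $\norm{\ell_q}{Tf} \ge \norm{\ell_q}{Tx} - \norm{\ell_q}{T(f-x)} \ge (1-\eps/3) - \delta B \ge 1-\eps$. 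Homogeneity of both sides extends the two-sided bound from $S_V$ to all of $V$, completing the proof. The one point that needs care is the $M^q$-dependence in the Bernstein step: it is exactly the boundedness of $\abs{f}^q$ by $M^q$ — used both for the range and for the variance proxy — that makes $m$ grow only like $n M^q$, which is precisely what the claimed bound requires; the existence of the net, the union bound, and the linearity bootstrap are all routine. (When $q<1$ the functional $\norm{\ell_q}{\cdot}$ is only a quasi-norm; one then replaces the ordinary triangle inequality above by the $q$-inequality $\norm{\ell_q}{u+v}^q \le \norm{\ell_q}{u}^q + \norm{\ell_q}{v}^q$, which changes nothing essential.)
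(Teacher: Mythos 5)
Your proposal is correct and takes essentially the same route as the paper's own (sketchy) proof: both define $T$ as the rescaled random evaluation operator $Tf = m^{-1/q}(f(\omega_1),\dots,f(\omega_m))$ and deduce the result from concentration of $L_\infty$-bounded i.i.d.\ random variables together with a union bound over an $\eps$-net of the unit sphere of the $n$-dimensional space $V$. You simply fill in the Bernstein estimate and the net-to-sphere bootstrap that the paper leaves implicit.
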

\begin{proofsketch}
The linear operator is simply defined by sampling $x_1, \ldots, x_m \sim \Omega$ independently, and
taking
\[
Tf ~\defeq~ \frac{1}{m^{1/q}} \cdot \inparen{f(x_1), \ldots, f(x_m)} \qquad \forall f\mper
\]
The proof then follows by concentration bounds for $L_{\infty}$-bounded random variables, and a
union bound over an epsilon net for the space $V$.
\end{proofsketch}
The problem then reduces to constructing an embedding of $\ell_p^n$ into $L_q$, which is bounded in
$L_{\infty}$ norm. While a simple embedding can be constructed using $p$-stable distributions,
Schechtman uses a clever reweighting argument to control the $L_{\infty}$ norm. We show below that a
simple truncation argument can also be used to obtain a somewhat crude bound on the $L_{\infty}$
norm, which suffices for our purposes and yields an easily samplable distribution. 

We collect below the relevant facts about $p$-stable random variables needed for our argument, which
can be found in many well-known references, including \cite{Indyk06, AlbiacK06}. 
\begin{fact}
For all $p \in (0,2)$, there exist (normalized) $p$-stable random variables $Z$ satisfying the
following properties:
\begin{enumerate}
\item For $Z_1, \ldots, Z_n$ iid copies of $Z$, and for all $a \in \R^n$, the random variable
\[
S ~\defeq~ \frac{a_1 \cdot Z_1 + \cdots + a_n \cdot Z_n}{\norm{\ell_p}{a}} \mcom
\]
has distribution identical to $Z$.
\item For all $q < p$, we have 
\[
C_{p,q} ~\defeq~ \norm{L_q}{Z} ~=~ \inparen{\Ex{\abs{Z}^q}}^{1/q} ~<~ \infty \mper
\]
\item There exists a constant $C_p$ such that for all $t > 0$,
\[
\Pr{\abs{Z} \geq t} ~<~ \frac{C_p}{t} \mper
\]
\item $Z$ can be sampled by choosing $\theta \in_R [-\pi/2, \pi/2]$, $r \in_R [0,1]$, and taking
\[
Z ~=~ \frac{\sin(p\theta)}{(\cos(\theta))^{1/p}} \cdot \inparen{\frac{\cos((1-p) \cdot
    \theta)}{\ln(1/r)}}^{(1-p)/p} \mper
\]
\end{enumerate}
\end{fact}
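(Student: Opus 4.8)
The plan is to define $Z$ analytically through its characteristic function and then read off the four properties, citing the classical constructions where a fully self-contained argument would be long. Take $Z$ to be the real-valued random variable with characteristic function $\Ex{e^{\mathrm{i}tZ}}=e^{-|t|^p}$ for all $t\in\R$. The first substantive point is that such a $Z$ exists, i.e.\ that $t\mapsto e^{-|t|^p}$ is positive definite and hence, by Bochner's theorem, the Fourier transform of a (necessarily symmetric) probability measure. For $0<p\le1$ this is Pólya's criterion, since on $[0,\infty)$ the map $s\mapsto e^{-s^p}$ is nonnegative, convex, and decreasing to $0$. For $1<p<2$ one instead uses the Lévy--Khintchine representation with Lévy measure $\nu(dx)=c_p\,|x|^{-1-p}\,dx$, together with the elementary identity $|t|^p=c_p\int_0^\infty(1-\cos ts)\,s^{-1-p}\,ds$ (the integral converges and is positive precisely when $0<p<2$), which exhibits $e^{-|t|^p}$ as the characteristic function of a symmetric infinitely divisible law. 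This existence step is the main obstacle if one insists on a from-scratch treatment; it is classical and I would cite it (e.g.\ \cite{AlbiacK06}).

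Property~1 is then immediate from multiplicativity of characteristic functions: for i.i.d.\ copies $Z_1,\dots,Z_n$ and any $a\in\R^n$,
\[
\Ex{e^{\mathrm{i}t\sum_i a_iZ_i}}=\prod_{i}e^{-|a_it|^p}=e^{-|t|^p\sum_i|a_i|^p}=e^{-(\|a\|_{\ell_p}\,|t|)^p},
\]
so $S=(\sum_i a_iZ_i)/\|a\|_{\ell_p}$ has characteristic function $e^{-|t|^p}$ and hence the same law as $Z$.

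For Properties~2 and~3 I would extract the tail behaviour of $Z$. From the Lévy measure $\nu(dx)=c_p|x|^{-1-p}dx$ (equivalently, by a Tauberian/Karamata argument applied to $1-e^{-|t|^p}\sim|t|^p$ as $t\to0$) one obtains $\Pr{|Z|\ge t}\sim 2c_p'\,t^{-p}$ as $t\to\infty$; combined with boundedness of the density near $0$ this yields a global bound $\Pr{|Z|\ge t}\le C_p\min\{1,t^{-p}\}$. Property~2 follows at once, since $\Ex{|Z|^q}=q\int_0^\infty t^{q-1}\Pr{|Z|\ge t}\,dt$ is finite exactly when $q<p$. For Property~3, note that when $1\le p<2$ (the only range in which the paper invokes this tail bound, cf.\ \cref{thm:reverse}) one has $t^{-p}\le t^{-1}$ for $t\ge1$ while $t^{-1}\ge1$ for $t\le1$, so after enlarging the constant $\Pr{|Z|\ge t}<C_p/t$ for all $t>0$; in any case the stronger bound $\Pr{|Z|\ge t}\le C_p\min\{1,t^{-p}\}$ is the form actually used downstream.

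Finally, for Property~4 I would verify the Chambers--Mallows--Stuck sampling identity by a direct characteristic-function computation. Setting $W:=\ln(1/r)$, so that $W$ is $\mathrm{Exponential}(1)$ and independent of $\theta\sim\mathrm{Unif}(-\pi/2,\pi/2)$, and writing $X$ for the displayed expression, one conditions on $\theta$ to get $X=A(\theta)\,W^{-(1-p)/p}$ with $A(\theta)=\frac{\sin(p\theta)}{(\cos\theta)^{1/p}}\,(\cos((1-p)\theta))^{(1-p)/p}$, and then evaluates $\Ex{e^{\mathrm{i}tX}}=\Ex{\theta}{\Ex{W}{e^{\mathrm{i}tA(\theta)W^{-(1-p)/p}}}}$. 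Integrating the inner expectation against the exponential density and then over $\theta$ collapses — this is the classical CMS computation; after a substitution the $\theta$-integral becomes a multiple of $\int_0^\infty(1-\cos ts)\,s^{-1-p}\,ds$ — to $e^{-c_p|t|^p}$ for an explicit constant $c_p>0$. Thus $X$ is symmetric $p$-stable with a fixed scale, coinciding with our normalized $Z$ after rescaling by the deterministic constant $c_p^{1/p}$, which we fold into the normalization convention. Carrying out this last computation in full is routine but lengthy, so I would cite the Chambers--Mallows--Stuck construction (and Zolotarev's treatment) rather than reproduce it.
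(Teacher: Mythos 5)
The paper offers no proof of this Fact at all: it is presented as a package of classical properties of $p$-stable laws with a pointer to \cite{Indyk06, AlbiacK06}, so there is no in-paper argument to compare yours against. Your reconstruction is correct and is the standard one — existence of a symmetric law with characteristic function $e^{-|t|^p}$ (P\'olya for $p\le 1$, L\'evy--Khintchine with L\'evy measure $c_p|x|^{-1-p}dx$ for the whole range $0<p<2$), item~1 by multiplicativity of characteristic functions, items~2 and~3 from the tail asymptotics $\Pr{|Z|\ge t}\asymp t^{-p}$, and item~4 via the Chambers--Mallows--Stuck computation — with the genuinely long classical steps deferred to references, which matches the level of detail the paper itself adopts here. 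One observation you make is worth stressing: item~3 as literally stated is false for $p<1$, since the true tail is of order $t^{-p}\gg t^{-1}$ as $t\to\infty$; the bound actually invoked downstream (in the proof of \cref{thm:schechtman}) is $\Pr{\abs{Z}\ge B}\le C_p/B^{p}$, and the application only concerns $1<q<p<2$, so your restriction to $p\ge 1$, or equivalently the corrected $t^{-p}$ form, is the right reading of the Fact. The only other loose end is the mismatch in scale between the convention $e^{-|t|^p}$ and the particular constant produced by the sampling formula in item~4; folding the factor $c_p^{1/p}$ into the (unspecified) normalization, as you do, is harmless because every downstream use either divides by $C_{p,q}=\norm{L_q}{Z}$ or absorbs the scale into the constants $C_p$ and $C_{p,q}$.
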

We now define an embedding of $\ell_p^n$ into $L_q$ with bounded $L_{\infty}$, using truncated
$p$-stable random variables. Let $Z = (Z_1, \ldots, Z_n)$ be a vector of 
iid $p$-stable random variables as above, and let $B$ be a parameter to be chosen later. We consider
the random variables
\[
\Delta(Z) ~\defeq~ \indicator{\exists i \in [n] ~\abs{Z_i} > B} 
\quad \text{and} \quad
Y ~\defeq~ (1 - \Delta(Z)) \cdot Z ~=~ \indicator{\forall i \in [n] ~\abs{Z_i} \leq B} \cdot Z\mper
\]
For all $a \in \R^n$, we define the (linear) embedding
\[
\phi(a) ~\defeq~ \frac{ \ip{a,Y}}{C_{p,q}} ~=~ \frac{\ip{a,Z}}{C_{p,q}} - \Delta(Z) \cdot
\frac{\ip{a,Z}}{C_{p,q}}\mper 
\] 
By the properties of $p$-stable distributions, we know that $\norm{L_q}{\ip{a,Z}/C_{p,q}} =
\norm{\ell_p}{a}$ for all $a \in \R^n$. By the following claim, we can choose $B$ so that the second
term only introduces a small error.
\begin{claim}
For all $\eps > 0$, there exists $B = O_{p,q,\eps}(n^{1/p})$ such that for the embedding $\phi$ defined
above
\[
\abs{\norm{L_q}{\phi(a)} - \norm{\ell_p}{a}} ~\leq~ \eps \cdot \norm{\ell_p}{a} \mper
\]
\end{claim}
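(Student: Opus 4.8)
The plan is to show that the truncation term $\Delta(Z)\cdot\ip{a,Z}$ contributes negligibly to the $L_q$-norm of $\phi(a)$, uniformly over $a$, as soon as $B$ is taken slightly larger than $n^{1/p}$. Since both $\norm{L_q}{\phi(a)}$ and $\norm{\ell_p}{a}$ are positively homogeneous of degree one in $a$, it suffices to prove the estimate for $a$ with $\norm{\ell_p}{a}=1$. For such $a$, property~1 of the $p$-stable variables says that $\ip{a,Z}$ has the same law as $Z$, so $\norm{L_q}{\ip{a,Z}/C_{p,q}} = \norm{L_q}{Z}/C_{p,q} = 1$. Because $q>1$, $\norm{L_q}{\cdot}$ is a genuine norm, so the reverse triangle inequality applied to the identity $\phi(a) = \ip{a,Z}/C_{p,q} - \Delta(Z)\ip{a,Z}/C_{p,q}$ gives
\[
\abs{\,\norm{L_q}{\phi(a)} - 1\,} ~\leq~ \frac{1}{C_{p,q}}\,\Bigl(\Ex{\Delta(Z)\,\abs{\ip{a,Z}}^{q}}\Bigr)^{1/q}.
\]
Thus it remains to bound $\Ex{\Delta(Z)\,\abs{\ip{a,Z}}^{q}}$ by $\eps^{q}C_{p,q}^{q}$.

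To do this I would fix an intermediate exponent $q' := (q+p)/2 \in (q,p)$, so that $\Ex{\abs{Z}^{q'}}<\infty$ by property~2. Applying \Holder's inequality to the product $\Delta(Z)\cdot\abs{\ip{a,Z}}^{q}$ with the conjugate pair $\bigl(q'/(q'-q),\,q'/q\bigr)$, and using that $\Delta(Z)$ is $\{0,1\}$-valued (so $\Delta(Z)^{r}=\Delta(Z)$), yields
\[
\Ex{\Delta(Z)\,\abs{\ip{a,Z}}^{q}} ~\leq~ \bigl(\Pr{\Delta(Z)=1}\bigr)^{1-q/q'}\cdot\Bigl(\Ex{\abs{\ip{a,Z}}^{q'}}\Bigr)^{q/q'} ~=~ \bigl(\Pr{\Delta(Z)=1}\bigr)^{1-q/q'}\cdot\Bigl(\Ex{\abs{Z}^{q'}}\Bigr)^{q/q'},
\]
where the last step again uses property~1 with $\norm{\ell_p}{a}=1$. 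Both factors on the right are independent of $a$ (the first since $\Delta$ depends only on $Z$, the second since only $\norm{\ell_p}{a}$ enters), which is precisely what delivers uniformity over $a$. Next, a union bound over the $n$ coordinates together with the sharp tail $\Pr{\abs{Z}\geq t}\leq C'_{p}\,t^{-p}$ for $p$-stable variables (the standard strengthening of property~3, coming from the $\abs{t}^{-1-p}$ decay of the $p$-stable density) gives $\Pr{\Delta(Z)=1}\leq n\,\Pr{\abs{Z}>B}\leq C'_{p}\,n\,B^{-p}$. Plugging this in, the bound becomes $\bigl(C'_{p}\,n\,B^{-p}\bigr)^{1-q/q'}\cdot\bigl(\Ex{\abs{Z}^{q'}}\bigr)^{q/q'}$; since the exponent $1-q/q' = (p-q)/(p+q)$ is a fixed positive constant, taking $B = K\,n^{1/p}$ for a sufficiently large $K=K(p,q,\eps)$ makes this at most $\eps^{q}C_{p,q}^{q}$. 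This gives $B = O_{p,q,\eps}(n^{1/p})$ and proves the claim.

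The main obstacle is obtaining the sharp power $n^{1/p}$ rather than $n$. This forces use of the genuine $\Theta(t^{-p})$ tail of $p$-stable laws; the crude $O(1/t)$ bound stated in property~3 would only yield $B = O_{p,q,\eps}(n)$ (which, incidentally, would still be enough for the $m = O(n^{3})$ target in \cref{thm:schechtman}, since $q<2$). It also requires the choice of an intermediate moment $q'\in(q,p)$ that simultaneously keeps $\Ex{\abs{Z}^{q'}}$ finite and leaves the \Holder exponent $1-q/q'$ bounded away from zero, since otherwise the dependence of $B$ on $\eps$ would blow up. Everything else — the homogeneity reduction, the reverse triangle inequality, the union bound, and verifying $a$-independence — is routine.
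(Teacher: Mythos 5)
Your proof is correct and takes essentially the same route as the paper. Both arguments reduce to bounding $\norm{L_q}{\Delta(Z)\cdot\ip{a,Z}}$ by the triangle inequality, then apply \Holder's inequality with an intermediate moment strictly between $q$ and $p$ to split off $\Pr{\Delta(Z)=1}$, use $p$-stability to make $\Ex{\abs{\ip{a,Z}}^{q'}}$ depend on $a$ only through $\norm{\ell_p}{a}$, and finish with a union bound and the $\Theta(t^{-p})$ tail of the $p$-stable law. Your parameter $q'=(q+p)/2$ is just a concrete instance of the paper's $q(1+\delta)<p$; the exponent $1-q/q'=(p-q)/(p+q)$ you get is precisely $\delta/(1+\delta)$ there.

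One thing you caught that is worth flagging: the paper's stated Property~3 gives only the crude $\Pr{\abs{Z}\geq t}<C_p/t$ bound, yet the paper's own proof uses $C_p/B^p$ in the union bound — so the paper's Fact as written is a typo (it should be $C_p/t^p$), and the proof silently relies on the sharper tail, exactly as you do. A small inaccuracy on your side: your parenthetical that $B=O_{p,q,\eps}(n)$ (from the crude $t^{-1}$ tail) would still suffice for the $m=O(n^3)$ target is not quite right. With $B=O(n)$ the subsequent claim gives $M=O(n^{2-1/p})$ and hence $m=O(n\cdot M^q)=O(n^{1+q(2-1/p)})$, and $q(2-1/p)$ can exceed $2$ when $p,q$ are both close to $2$ (e.g.\ $p=1.9$, $q=1.8$), so one only gets $m=O(n^4)$; this still yields a polynomial-size embedding, but not the stated $n^3$ bound. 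This does not affect the main proof.
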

\begin{proof}
By triangle inequality, it suffices to bound $\norm{L_q}{\Delta(Z) \cdot \ip{a,Z}}$ by $\eps \cdot
C_{p,q} \cdot \norm{\ell_p}{a}$. Let $\delta > 0$ be such that $(1+\delta) \cdot q < p$. Using the
fact that $\Delta(Z)$ is Boolean and \Holder's inequality, we observe that
\begin{align*}
\norm{L_q}{\Delta(Z) \cdot \ip{a,Z}}
&~=~
\inparen{\Ex{\abs{\ip{a,Z}}^q \cdot \Delta(Z)}}^{1/q} \\
&~\leq~
\inparen{\Ex{\abs{\ip{a,Z}}^{q(1+\delta)}}}^{1/(q(1+\delta))} \cdot
  \inparen{\Ex{\Delta(Z)}}^{\delta/(q(1+\delta))} \\
&~=~
C_{p,(1+\delta)q} \cdot \norm{\ell_p}{a} \cdot \inparen{\Pr{\exists i \in [n] ~\abs{Z_i} \geq
  B}}^{\delta/(q(1+\delta))} \\
&~\leq~
C_{p,(1+\delta)q} \cdot \norm{\ell_p}{a} \cdot \inparen{n \cdot \frac{C_p}{B^p}}^{\delta/(q(1+\delta))}
\end{align*}
Thus, choosing $B = O_{\eps, p, q}(n^{1/p})$ such that
\[
\frac{C_{p,(1+\delta)q}}{C_{p,q}} \cdot \inparen{n \cdot \frac{C_p}{B^p}}^{\delta/(q(1+\delta))}
~\leq~ \eps
\]
proves the claim.
\end{proof}
Using the value of $B$ as above, we now observe a bound on $\norm{L_{\infty}}{\phi(a)}$.
\begin{claim}
Let $B = O_{\eps,p,q}(n^{1/p})$ be chosen as above. Then, we have that
\[
M ~\defeq~ \sup\inbraces{\norm{L_{\infty}}{\ip{a,Y}} ~\mid~ \norm{L_{q}}{\ip{a,Y}} \leq 1} ~=~
O_{\eps,p,q}(n) \mper 
\]
\end{claim}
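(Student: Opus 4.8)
The plan is to bound $\norm{L_\infty}{\ip{a,Y}}$ uniformly over all $a$ with $\norm{L_q}{\ip{a,Y}}\leq 1$, using the truncation that is built into $Y$. The key observation is that by construction $Y = \indicator{\forall i\ \abs{Z_i}\leq B}\cdot Z$, so every coordinate of $Y$ is supported on $[-B,B]$; hence for any realization, $\abs{\ip{a,Y}}\leq \sum_{i}\abs{a_i}\cdot B = B\cdot\norm{\ell_1}{a}$. Combining this with the standard norm inequality $\norm{\ell_1}{a}\leq n^{1-1/p}\cdot\norm{\ell_p}{a}$ and the fact (from the previous claim) that $\norm{L_q}{\ip{a,Y}} = (1\pm\eps)\cdot C_{p,q}\cdot\norm{\ell_p}{a}$, we get
\[
\norm{L_\infty}{\ip{a,Y}} ~\leq~ B\cdot n^{1-1/p}\cdot\norm{\ell_p}{a} ~\leq~ \frac{B\cdot n^{1-1/p}}{(1-\eps)\cdot C_{p,q}}\cdot\norm{L_q}{\ip{a,Y}}\mper
\]
Since $B = O_{\eps,p,q}(n^{1/p})$, the prefactor is $O_{\eps,p,q}(n^{1/p}\cdot n^{1-1/p}) = O_{\eps,p,q}(n)$, which gives $M = O_{\eps,p,q}(n)$ as claimed.

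First I would record the pointwise bound $\abs{\ip{a,Y(Z)}}\leq B\cdot\norm{\ell_1}{a}$, which holds for every sample $Z$ (not just almost surely) because the indicator zeroes out $\ip{a,Z}$ precisely on the event where some $\abs{Z_i}>B$, and on the complementary event each $\abs{Z_i}\leq B$. Taking the essential supremum over $Z$ preserves this, so $\norm{L_\infty}{\ip{a,Y}}\leq B\cdot\norm{\ell_1}{a}$. Next I would convert the $\ell_1$ norm of $a$ into an $\ell_p$ norm via H\"older (equivalently power-mean) and then into an $L_q$ norm of $\ip{a,Y}$ using the preceding claim, being careful that the constant $C_{p,q} = \norm{L_q}{Z}$ is a fixed finite constant depending only on $p$ and $q$, and that the $(1\pm\eps)$ factor is bounded away from $0$ for $\eps$ small.

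I would then feed this value of $M$ into \cref{clm:schechtman} applied to the $n$-dimensional subspace $V := \{\ip{a,Y} : a\in\R^n\}\subseteq L_q(\Omega)$, where $\Omega$ is the (efficiently samplable) probability space underlying the truncated $p$-stable vector $Z$; samplability of $\Omega$ follows from property (4) of $p$-stable random variables combined with a rejection step to implement the truncation. Since $M = O_{\eps,p,q}(n)$, the target dimension is $m = C(\eps,q)\cdot n\cdot M^q = O_{\eps,p,q}(n^{q+1})$, which is at most $O_{\eps}(n^3)$ only when $q\leq 2$; for the regime $q<p<2$ this is exactly the range we need, matching the statement of \cref{thm:schechtman}. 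Composing the embedding $a\mapsto\ip{a,Y}$ with the sampling operator $T$ from \cref{clm:schechtman} yields the samplable distribution $\calD$ of matrices $B = T\circ\phi$ with $\norm{\ell_q}{Ba} = (1\pm O(\eps))\norm{\ell_p}{a}$ for all $a$, with probability $1-o(1)$.

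The main obstacle I anticipate is purely bookkeeping of the dimension: the crude truncation bound gives $M = O(n)$ rather than the $O(n^{(1-q/p)/p})$-type bound one would get from Lewis weights, so $m$ scales like $n^{q+1}$ instead of $n^{1+p/q}$. This is fine for $q<2$ (giving $m = O(n^3)$ as stated, since $q+1<3$), but one must verify that $V$ genuinely has dimension $n$ (i.e.\ the map $a\mapsto\ip{a,Y}$ is injective into $L_q$, which holds because $\norm{L_q}{\ip{a,Y}}\asymp\norm{\ell_p}{a}$), and that all the $\eps$-dependences compose correctly so that the final distortion is $(1\pm\eps')$ for a prescribed $\eps'$ after renaming constants. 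A minor additional check is that the truncation-rejection sampling of $\Omega$ succeeds with probability bounded away from $0$ (it does, since $\Pr{\exists i\ \abs{Z_i}>B}\leq nC_p/B^p$ which is $o(1)$ for our choice of $B$), so $\Omega$ remains efficiently samplable.
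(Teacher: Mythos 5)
Your proof is correct and follows essentially the same route as the paper's: pointwise bound $\abs{\ip{a,Y}}\leq B\norm{\ell_1}{a}$ via the truncation, then $\norm{\ell_1}{a}\leq n^{1-1/p}\norm{\ell_p}{a}$ by H\"older, then the preceding claim to control $\norm{\ell_p}{a}$ by $\norm{L_q}{\ip{a,Y}}$. The extra material on feeding $M$ into \cref{clm:schechtman}, injectivity of $a\mapsto\ip{a,Y}$, and rejection-sampling the truncated $p$-stable distribution goes beyond the statement of this claim (it belongs to the surrounding proof sketch of \cref{thm:schechtman}) but is accurate.
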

\begin{proof}
By the choice of $B$, we have that $\norm{L_{q}}{\ip{a,Y}} \geq (1-\eps) \norm{\ell_p}{a}$. Thus, we
can assume that $\norm{\ell_p}{a} \leq 2$. \Holder's inequality then gives for all such $a$,
\begin{align*}
\abs{\ip{a,Y}} 
&~\leq~ 
\norm{\ell_1}{a} \cdot \norm{\ell_{\infty}}{Y} \\
&~\leq~
n^{1-1/p} \cdot \norm{\ell_p}{a} \cdot B \\
&~\leq~ 2 \cdot n^{1-1/p} \cdot B ~=~ O_{\eps, p, q}(n) \mcom
\end{align*}
which proves the claim.
\end{proof}
Using the above bound on $M$ in \cref{clm:schechtman} gives a bound of $m = O_{\eps,p,q}(n^{q+1}) =
O_{\eps,p,q}(n^3)$. Moreover, the distribution over embeddings is efficiently samplable, since it
obtained by truncating $p$-stable random variables. This completes the proof  of \cref{thm:schechtman}.

\section{Analyzing the Approximation Ratio via Rounding}
We will show that $\CP{A}$ is a good approximation to $\norm{p}{q}{A}$ by using an appropriate 
generalization of Krivine's rounding procedure. Before stating the generalized procedure, we shall 
give a more detailed summary of Krivine's procedure. 

\subsection{Krivine's Rounding Procedure}
Krivine's procedure centers around the classical random hyperplane rounding. In this context, we define the 
random hyperplane rounding procedure on an input pair of matrices $U\in \R^{m\times \ell},~ 
V\in\R^{n\times \ell}$ as outputting the vectors $\sgn [U\bfg]$ and $\sgn [V\bfg]$ where $\bfg\in \R^{\ell}$ 
is a vector with i.i.d. standard Gaussian coordinates ($f[v]$ denotes entry-wise application of a scalar 
function $f$ to a vector $v$. We use the same convention for matrices.). 
The so-called Grothendieck identity states that for vectors $u,v\in \R^\ell$, 
\[
    \Ex{\sgn\!{\mysmalldot{\bfg}{u}}\cdot \sgn\!{\mysmalldot{\bfg}{v}}} 
    = 
    \frac{\sin^{-1}\!{\mysmalldot{\widehat{u}}{\widehat{v}}}}{\pi/2}
\]
where $\widehat{u}$ denotes $u/\norm{2}{u}$. This implies the following equality which we will call 
the hyperplane rounding identity: 
\begin{equation}
    \Ex{\sgn [U\bfg](\sgn [V\bfg])^T}
    = 
    \frac{\sin^{-1}[\hatU \hatV^T]}{\pi/2} \mper
\end{equation}
where for a matrix $U$, we use $\hatU$ to denote the matrix obtained by replacing the rows of $U$ 
by the corresponding unit (in $\ell_2$ norm) vectors. 
Krivine's main observation is that for any matrices $U,V$, there exist matrices 
$\phi(\hatU),\psi(\hatV)$ with unit vectors as rows, such that 
\[
    \phi(\hatU)\,\psi(\hatV)^T = \sin [(\pi/2) \cdot c\cdot \hatU\hatV^T]
\]
where $c = \sinh^{-1}(1)\cdot 2/\pi$. Taking $\hatU,\hatV$ to be the optimal solution to $\CP{A}$, 
it follows that 
\[
    \norm{\infty}{1}{A}
    \geq 
    \mydot{A}{\Ex{\sgn [\phi(\hatU)\,\bfg]~(\sgn [\psi(\hatV)\,\bfg])^T}} 
    = 
    \mysmalldot{A}{c\cdot \hatU\hatV^T}
    = 
    c\cdot \CP{A} \mper
\]
The proof of Krivine's observation follows from simulating the Taylor series of a scalar function using inner 
products. We will now describe this more concretely. 
\begin{observation}[Krivine]
    \label[observation]{simulating:taylor}
    Let $f:[-1,1]\to\R$ be a scalar function 
    satisfying $f(\rho) = \sum_{k\geq 1} f_k\, \rho^k$ for an absolutely convergent series $(f_k)$. 
    Let $\absolute{f}(\rho):= \sum_{k\geq 1} |f_k|\,\rho^{k}$ and further for vectors 
    $u,v\in\R^{\ell}$ of $\ell_2$-length at most $1$, let 
    \begin{align*}
        &\seriesLeft{f}{u} := (\sgn(f_1)\sqrt{f_1}\cdot u)\oplus 
        (\sgn(f_2)\sqrt{f_2}\cdot u^{\otimes 2})\oplus 
        (\sgn(f_3)\sqrt{f_3}\cdot u^{\otimes 3})\oplus \cdots \\
        &\seriesRight{f}{v} := (\sqrt{f_1}\cdot v)\oplus (\sqrt{f_2}\cdot v^{\otimes 2})\oplus 
        (\sqrt{f_3}\cdot v^{\otimes 3})\oplus \cdots 
    \end{align*}
    Then for any $U\in \R^{m\times \ell},~V\in \R^{n\times \ell}$, ~$\seriesLeft{f}{\sqrt{c_f}\cdot\hatU}$ 
    and $\seriesRight{f}{\sqrt{c_f}\cdot \hatV}$ have $\ell_2$-unit vectors as rows, and 
    \[
        \seriesLeft{f}{\sqrt{c_f}\cdot\hatU}~\seriesRight{f}{\sqrt{c_f}\cdot \hatV}^T 
        = 
        f\,[c_f\cdot \hatU\hatV^T]
    \]
    where $\seriesLeft{f}{W}$ for a matrix $W$, is applied to row-wise and $c_f := (\absolute{f}^{-1})(1)$. 
\end{observation}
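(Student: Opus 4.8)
\textbf{Proof proposal for \cref{simulating:taylor}.}

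The plan is to verify three claims: first, that each row of $\seriesLeft{f}{\sqrt{c_f}\cdot \hatU}$ and of $\seriesRight{f}{\sqrt{c_f}\cdot \hatV}$ is an $\ell_2$-unit vector; second, that the entrywise identity $\seriesLeft{f}{\sqrt{c_f}\cdot \hatU}~\seriesRight{f}{\sqrt{c_f}\cdot \hatV}^T = f[c_f \cdot \hatU \hatV^T]$ holds; and third, that the construction is well-defined, i.e. the infinite direct sums converge in $\ell_2$ so that all inner products are finite. The key computational device is the elementary identity $\langle u^{\otimes k}, v^{\otimes k}\rangle = \langle u, v\rangle^k$ for the Kronecker powers, which converts the formal power series manipulation into genuine statements about vectors in (a countable direct sum of) finite-dimensional inner product spaces.

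First I would fix two rows $u$ (of $\hatU$) and $v$ (of $\hatV$), which by definition of the hat-operation satisfy $\norm{2}{u} = \norm{2}{v} = 1$ (discarding any zero rows, which are harmless). Writing $w := \sqrt{c_f}\cdot u$ and $z := \sqrt{c_f}\cdot v$, the corresponding rows of $\seriesLeft{f}{\cdot}$ and $\seriesRight{f}{\cdot}$ are $\bigoplus_{k\ge 1} \sgn(f_k)\sqrt{|f_k|}\, w^{\otimes k}$ and $\bigoplus_{k\ge 1}\sqrt{|f_k|}\, z^{\otimes k}$ respectively (absorbing the sign into the left vector; note $\sqrt{f_k}$ in the statement should be read as $\sgn(f_k)\sqrt{|f_k|}$ paired with $\sqrt{|f_k|}$, so that the product of corresponding coordinates is $|f_k|\cdot(\text{coordinate of }w^{\otimes k})(\text{coordinate of }z^{\otimes k})$ with the correct sign $\sgn(f_k)$ pulled out). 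Then I compute the inner product of these two rows as
\[
\sum_{k\ge 1} \sgn(f_k)\,|f_k|\,\langle w^{\otimes k}, z^{\otimes k}\rangle ~=~ \sum_{k\ge 1} f_k\,\langle w, z\rangle^k ~=~ f\!\left(\langle w, z\rangle\right) ~=~ f\!\left(c_f\,\langle u, v\rangle\right),
\]
which, applied to every pair of rows, is exactly the matrix identity $\seriesLeft{f}{\sqrt{c_f}\hatU}~\seriesRight{f}{\sqrt{c_f}\hatV}^T = f[c_f\,\hatU\hatV^T]$. For the norm claim, I take $u = v$ (so $\langle u,v\rangle = 1$) to get $\norm{2}{\text{row of }\seriesRight{f}{\sqrt{c_f}\hatV}}^2 = \sum_k |f_k|\,(c_f)^k = \absolute{f}(c_f) = \absolute{f}\!\left((\absolute{f}^{-1})(1)\right) = 1$ by the choice $c_f := (\absolute{f}^{-1})(1)$; the same computation works for $\seriesLeft{f}{\sqrt{c_f}\hatU}$ since squaring kills the signs. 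This also shows the series defining each row converges absolutely in $\ell_2$, so the whole construction is well-defined — here one uses that $c_f \le 1$ (which holds because $\absolute{f}(1) = \sum_k |f_k| \ge 1$ forces $(\absolute{f}^{-1})(1)\le 1$, given $\absolute{f}$ is increasing with $\absolute{f}(0)=0$), hence $|\langle w,z\rangle| = c_f|\langle u,v\rangle| \le c_f \le 1$ keeps every intermediate series inside its radius of convergence.

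The only genuinely delicate point — and the step I'd expect to write most carefully — is the bookkeeping around signs and the interpretation of "$\sqrt{f_k}$" when $f_k < 0$: one must split $\sqrt{f_k} = \sgn(f_k)\sqrt{|f_k|}$ across the two factors so that the $k$-th block contributes $+|f_k|$ to each squared norm but $\sgn(f_k)|f_k|$ to the cross inner product. Everything else is the routine Kronecker-power identity plus the defining property of $c_f$. I would also note explicitly that absolute convergence of $(f_k)$ (hypothesis) together with $c_f \le 1$ justifies rearranging the sum $\sum_k f_k \langle w,z\rangle^k$ and identifying it with $f(\langle w,z\rangle)$, so no analytic subtlety is swept under the rug.
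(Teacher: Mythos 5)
Your proposal is correct and follows essentially the same route as the paper: both rest on the Kronecker identity $\langle u^{\otimes k}, v^{\otimes k}\rangle = \langle u,v\rangle^k$, the direct-sum inner-product additivity, and the defining property $\absolute{f}(c_f)=1$, with the sign $\sgn(f_k)$ placed on the left factor so that inner products pick up $f_k$ while squared norms pick up $|f_k|$. The extra observations you add — that $c_f\le 1$ follows from well-definedness of $(\absolute{f}^{-1})(1)$ and that absolute convergence licenses the rearrangements — are correct and are left implicit in the paper's two-line proof.
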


\begin{proof}
    Using the facts $\mysmalldot{y^1\otimes y^2}{y^3\otimes y^4} = 
    \mysmalldot{y^1}{y^3}\cdot \mysmalldot{y^2}{y^4}$ and \\
    $\mysmalldot{y^1\oplus y^2}{y^3\oplus y^4} = 
    \mysmalldot{y^1}{y^3}+ \mysmalldot{y^2}{y^4}$,~we have  
    \begin{itemize}
        \item $\mysmalldot{\seriesLeft{f}{u}}{\seriesRight{f}{v}} = f(\mysmalldot{u}{v})$
        
        \item $\norm{2}{\seriesLeft{f}{u}} = \sqrt{\absolute{f}(\norm{2}{u}^{2})}$
        
        \item $\norm{2}{\seriesRight{f}{v}} = \sqrt{\absolute{f}(\norm{2}{v}^{2})}$ 
    \end{itemize}
    The claim follows. 
\end{proof}

Before stating our full rounding procedure, we first discuss a natural generalization of random hyperplane 
rounding, and much like in Krivine's case this will guide the final procedure. 

\subsection{Generalizing Random Hyperplane Rounding -- \Holder Dual Rounding}
Fix any convex bodies $B_1 \subset \R^m$ and $B_2\subset \R^k$. Suppose that we would like a strategy 
that for given vectors $y\in\R^m,~x\in\R^n$, outputs $\bary\in B_1,~\barx\in B_2$ so that 
$y^TA\,x = \mysmalldot{A}{y\,x^T}$ is close to $\mysmalldot{A}{\bary\, \barx^T}$ for all $A$. A natural 
strategy is to take 
\[
    (\bary,\barx) := \argmax_{(\tildey,\tildex)\in B_1\times B_2} 
    \mydot{\tildey\,\tildex^T}{y\,x^T}
    = 
    \inparen{
    \argmax_{\tildey\in B_1} 
    \mydot{\tildey}{y}
    ~,~
    \argmax_{\tildex\in B_2} 
    \mydot{\tildex}{x}
    }
\]
In the special case where $B$ is the unit $\ell_p$ ball, there is a closed form for an optimal 
solution to $\max_{\tildex\in B} \mysmalldot{\tildex}{x}$, given by $\holderdual{p^*}{x}/
\norm{p^*}{x}^{p^*-1}$, where $\holderdual{p^*}{x} := \sgn[x]\circ |[x]|^{p^*-1}$. Note that for 
$p= \infty$, this strategy recovers the random hyperplane rounding procedure. We shall call this 
procedure, \emph{Gaussian \Holder Dual Rounding} or \HD for short. 

Just like earlier, we will first understand the effect of \HD on a solution pair $U,V$.
For $\rho\in [-1, 1]$, let $\bfg_1 \!\sim_{\rho} \bfg_2$ denote $\rho$-correlated standard 
Gaussians, \ie $\bfg_1 = \rho\,\bfg_2 + \sqrt{1-\rho^2}\,\bfg_3$ ~where 
$(\bfg_2,\bfg_3) \sim \gaussian{0}{\id_2}$, and let 
\[
    \fplain{a}{b}{\rho} 
    :=  
    \Ex{\bfg_1 \sim_\rho \bfg_2}{\sgn(\bfg_1)|\bfg_1|^{b}\sgn(\bfg_2)|\bfg_1|^{a}}
\]
We will work towards a better understanding of $\fplain{a}{b}{\cdot}$ in later sections. 
For now note that we have for vectors $u,v\in\R^{\ell}$, 
\[
    \Ex{\sgn\!{\mysmalldot{\bfg}{u}}\,|\mysmalldot{\bfg}{u}|^{b} \cdot 
    \sgn\!{\mysmalldot{\bfg}{v}} \,|\mysmalldot{\bfg}{v}|^{a}} 
    = 
    \norm{2}{u}^{b}\cdot\norm{2}{v}^{a}\cdot \fplain{a}{b}{\mysmalldot{\widehat{u}}{\widehat{v}\,}} \mper
\]
Thus given matrices $U,V$, we obtain the following generalization of the hyperplane rounding identity 
for \HD: 
\begin{equation}
\label[equation]{cvgp:identity}
    \Ex{\holderdual{q}{[U\bfg]}\,\holderdual{p^*}{[V\bfg]}^T}
    = 
    \Diag{(\norm{2}{u^i}^{b})_{i\in [m]}} \cdot \fplain{a}{b}{[\hatU \hatV^T]}\cdot 
    \Diag{(\norm{2}{v^j}^{a})_{j\in [n]}} 
    \mper
\end{equation}

\subsection{Generalized Krivine Transformation and the Full Rounding Procedure}
\label[subsection]{rounding}

We are finally ready to state the generalized version of Krivine's algorithm. At a high level the algorithm 
simply applies \HD to a transformed version of the optimal convex program solution pair $U,V$. 
Analogous to Krivine's algorithm, the transformation is a type of ``inverse'' of \cref{cvgp:identity}. 
\begin{enumerate}[label=(Inversion \arabic*),align=left]
    \item Let $(U,V)$ be the optimal solution to $\CP{A}$, and let $(u^i)_{i\in [m]}$ and $(v^j)_{j\in[n]}$ 
    respectively denote the rows of $U$ and $V$. 

    \item Let $c_{a,b} := \inparen{\absolute{\alfin{a}{b}}}^{-1}\!\!(1)$ and let 
        \begin{align*}
            \phi(U) &:= \Diag{(\norm{2}{u^i}^{1/b})_{i\in [m]}}\,\,\seriesLeft{\alfin{a}{b}}{\sqrt{c_{a,b}}\cdot\hatU} 
            \mcom \\
            \psi(V) &:= \Diag{(\norm{2}{v^j}^{1/a})_{j\in [n]}}\,\,\seriesRight{\alfin{a}{b}}{\sqrt{c_{a,b}}\cdot\hatV}
            \mper
        \end{align*}
\end{enumerate}
\begin{enumerate}[label=(\Holder-Dual \arabic*),align=left]
    \item Let $\bfg \sim \gaussian{0}{\id}$ be an infinite dimensional i.i.d. Gaussian vector. 
    
    \item Return $y:=\holderdual{q}{\phi(U)\,\bfg}/\norm{q}{\phi(U)\,\bfg}^{b}$ and 
    $x:=\holderdual{p^*}{\psi(V)\,\bfg}/\norm{p^*}{\psi(V)\,\bfg}^{a}$. 
\end{enumerate}

\bigskip

\begin{remark}
\label[remark]{round:feasibility}
    Note that $\norm{r^*}{\holderdual{r}{\barx}} = \norm{r}{\barx}^{r-1}$ and so the returned solution pair 
    always lie on the unit $\ell_{q^*}$ and $\ell_{p}$ spheres respectively. 
\end{remark}

\begin{remark}
    Like in~\cite{AN04} the procedure above can be made algorithmic by observing that there always exist 
    $\phi'(U)\in \R^{m\times (m+n)}$ and $\psi'(V)\in \R^{m \times (m+n)}$, whose rows have the exact 
    same lengths and pairwise inner products as those of $\phi(U)$ and $\psi(V)$ above. 
    Moreover they can be computed without explicitly computing $\phi(U)$ and $\psi(V)$ by 
    obtaining the Gram decomposition of 
    \[
        M~:=~
        \left[
        \begin{array}{cc}
        \absolute{\alfin{a}{b}}[c_{a,b}\cdot \hatV\hatV^T] & \fin{a}{b}{[c_{a,b}\cdot \hatU\hatV^T]} \\
        \fin{a}{b}{[c_{a,b}\cdot \hatV\hatU^T]} & \absolute{\alfin{a}{b}}[c_{a,b}\cdot \hatV\hatV^T]
        \end{array}
        \right] \mcom
    \]
    and normalizing the rows of the decomposition according to the definition of $\phi(\cdot)$ and 
    $\psi(\cdot)$ above. The entries of $M$ can be computed in polynomial time with exponentially 
    (in $m$ and $n$) good accuracy by implementing the Taylor series of $\alfin{a}{b}$ 
    upto $\mathrm{poly}(m,n)$ terms (Taylor series inversion can be done upto $k$ terms in time 
    $\mathrm{poly}(k)$).
\end{remark}

\begin{remark}
\label[remark]{algo:rmk}
    Note that the $2$-norm of the $i$-th row (resp. $j$-th row) of $\phi(U)$ (resp. $\psi(V)$) is 
    $\norm{2}{u^i}^{1/b}$ (resp. $\norm{2}{v^j}^{1/a}$).
\end{remark}

We commence the analysis by defining some convenient normalized functions, and we will 
also show that $c_{a,b}$ above is well-defined. 
\subsection{Auxiliary Functions}
Let ~$\nf{p}{q}{\rho} := \fplain{p}{q}{\rho}/(\gamma_{p^*}^{p^*}\,\gamma_{q}^{q})$, ~ 
$\alfinabs{a}{b} := \absolute{\alfin{a}{b}} $,~ and ~$\alnh{a}{b} := \absolute{\alnfin{a}{b}}$. 
Also note that $\nhin{a}{b}{\rho} = \hin{a}{b}{\rho}/(\gamma_{p^*}^{p^*}\,\gamma_{q}^{q})$. 

\paragraph{Well Definedness. }
By \cref{inv:coeff:bound}, $\nfin{a}{b}{\rho}$ and $\nh{a}{b}{\rho}$ are well-defined for $\rho\in [-1,1]$.  
By (M1) in \cref{monotonicity:properties}, ~
$\kfinc{1} = 1$ and hence $\nh{a}{b}{1} \geq 1$ and $\nh{a}{b}{-1}\leq -1$. Combining this with the 
fact that $\nh{a}{b}{\rho}$ is continuous and strictly increasing on $[-1,1]$, implies that 
$\nhin{a}{b}{x}$ is well-defined on $[-1,1]$. 

\medskip
We can now proceed with the analysis. 

\subsection[Bound on Approximation Factor]{$1/(\nhin{p}{q}{1}\cdot \gamma_{p^*}\,\gamma_q)$
Bound on Approximation Factor}

For any vector random variable $\bfX$ in a universe $\Omega$, and scalar valued functions 
$f_1:\Omega \to\R$ and $f_2:\Omega\to (0,\infty)$. Let $\lambda = \Ex{f_1(\bfX)}/\Ex{f_2(\bfX)}$. 
Now we have 
\begin{align*}
    &\max_{x\in\Omega} f_1(x)-\lambda\cdot f_2(x) \geq \Ex{f_1(\bfX)-\lambda\cdot f_2(\bfX)} = 0 \\
    \Rightarrow\quad  
    &\max_{x\in\Omega} f_1(x)/f_2(x) \geq \lambda = \Ex{f_1(\bfX)}/\Ex{f_2(\bfX)}\mper
\end{align*}
Thus we have 
\[
    \norm{p}{q}{A} 
    ~\geq ~
    \frac{\Ex{\mysmalldot{A}{\holderdual{q}{\phi(U)\,\bfg}~\holderdual{p^*}{\psi(V)\,\bfg}^T}}} 
    {\Ex{\norm{q^*}{\holderdual{q}{\phi(U)\,\bfg}}\cdot \norm{p}{\holderdual{p^*}{\psi(V)\,\bfg}}}} 
    ~= ~
    \frac{\mysmalldot{A}{\Ex{\holderdual{q}{\phi(U)\,\bfg}~\holderdual{p^*}{\psi(V)\,\bfg}^T}}} 
    {\Ex{\norm{q^*}{\holderdual{q}{\phi(U)\,\bfg}}\cdot \norm{p}{\holderdual{p^*}{\psi(V)\,\bfg}}}} 
    \mcom
\]
which allows us to consider the numerator and denominator separately. 
We begin by proving the equality that the above algorithm was designed to satisfy: 
\begin{lemma}
    \label[lemma]{numerator}
    $
        \Ex{\holderdual{q}{\phi(U)\,\bfg}~ \holderdual{p^*}{\psi(V)\,\bfg}^T} 
        ~=~ 
        c_{a,b}\cdot (\tildeU \tildeV^T)
    $
\end{lemma}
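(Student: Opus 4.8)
The plan is to evaluate $\Ex{\holderdual{q}{\phi(U)\bfg}~\holderdual{p^*}{\psi(V)\bfg}^T}$ entrywise and to recognize the result as $c_{a,b}$ times the Gram matrix of the convex--program solution. Recall $b=q-1$ and $a=p^*-1$, so that, writing $\phi(U)_i$ and $\psi(V)_j$ for the rows of $\phi(U)$ and $\psi(V)$, the $i$-th coordinate of $\holderdual{q}{\phi(U)\bfg}$ is $\sgn(\mysmalldot{\bfg}{\phi(U)_i})\,\abs{\mysmalldot{\bfg}{\phi(U)_i}}^{b}$ and the $j$-th coordinate of $\holderdual{p^*}{\psi(V)\bfg}$ is $\sgn(\mysmalldot{\bfg}{\psi(V)_j})\,\abs{\mysmalldot{\bfg}{\psi(V)_j}}^{a}$. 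Applying the identity \cref{cvgp:identity} with $\phi(U),\psi(V)$ in place of $U,V$ gives
\[
    \Ex{\holderdual{q}{\phi(U)\bfg}~\holderdual{p^*}{\psi(V)\bfg}^T}
    ~=~
    \Diag{(\norm{2}{\phi(U)_i}^{b})_{i}}\cdot \alfplain{a}{b}\bigl[\,\widehat{\phi(U)}\,\widehat{\psi(V)}^{T}\,\bigr]\cdot \Diag{(\norm{2}{\psi(V)_j}^{a})_{j}},
\]
where $\alfplain{a}{b}[\,\cdot\,]$ denotes entrywise application. Thus it remains to compute the row norms of $\phi(U),\psi(V)$ and the matrix $\widehat{\phi(U)}\,\widehat{\psi(V)}^{T}$ of normalized inner products.

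The key observation is that the ``series'' matrices are already row-normalized. Since $c_{a,b}=(\alfinabs{a}{b})^{-1}(1)$ and each $\hatU_i$, $\hatV_j$ is an $\ell_2$-unit vector, \cref{simulating:taylor} (applied with the scalar function $\alfin{a}{b}$, whose associated constant $c_f$ is exactly $c_{a,b}$) tells us that $\seriesLeft{\alfin{a}{b}}{\sqrt{c_{a,b}}\hatU}$ and $\seriesRight{\alfin{a}{b}}{\sqrt{c_{a,b}}\hatV}$ have $\ell_2$-unit rows, and that
\[
    \seriesLeft{\alfin{a}{b}}{\sqrt{c_{a,b}}\hatU}\;\seriesRight{\alfin{a}{b}}{\sqrt{c_{a,b}}\hatV}^{T}
    ~=~ \alfin{a}{b}\bigl[\,c_{a,b}\cdot\hatU\hatV^{T}\,\bigr].
\]
By \cref{algo:rmk} the $i$-th row of $\phi(U)$ has $\ell_2$ norm $\norm{2}{u^i}^{1/b}$ and the $j$-th row of $\psi(V)$ has $\ell_2$ norm $\norm{2}{v^j}^{1/a}$, so $\norm{2}{\phi(U)_i}^{b}=\norm{2}{u^i}$ and $\norm{2}{\psi(V)_j}^{a}=\norm{2}{v^j}$; and since $\phi(U),\psi(V)$ are, by their definition in the inversion step, exactly the series matrices above left-scaled by these row norms, their row-normalizations are precisely $\widehat{\phi(U)}=\seriesLeft{\alfin{a}{b}}{\sqrt{c_{a,b}}\hatU}$ and $\widehat{\psi(V)}=\seriesRight{\alfin{a}{b}}{\sqrt{c_{a,b}}\hatV}$. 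Hence the $(i,j)$ entry of $\widehat{\phi(U)}\,\widehat{\psi(V)}^{T}$ is $\fin{a}{b}{c_{a,b}\mysmalldot{\hatU_i}{\hatV_j}}$.

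Composing these computations, the $(i,j)$ entry of $\Ex{\holderdual{q}{\phi(U)\bfg}~\holderdual{p^*}{\psi(V)\bfg}^T}$ equals $\norm{2}{u^i}\,\norm{2}{v^j}\cdot\fparen{a}{b}{\fin{a}{b}{c_{a,b}\mysmalldot{\hatU_i}{\hatV_j}}}$. Since $\abs{c_{a,b}\mysmalldot{\hatU_i}{\hatV_j}}\le c_{a,b}\le 1$ --- the bound $c_{a,b}\le 1$ coming from $\alfinabs{a}{b}(1)\ge 1$, i.e.\ (M1) of \cref{monotonicity:properties} --- this argument lies in the interval where $\alfin{a}{b}$ converges absolutely and is the genuine functional inverse of $\alfplain{a}{b}$ (exactly the well-definedness facts recorded after \cref{inv:coeff:bound}), so $\fparen{a}{b}{\fin{a}{b}{c_{a,b}\mysmalldot{\hatU_i}{\hatV_j}}}=c_{a,b}\mysmalldot{\hatU_i}{\hatV_j}$. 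Thus the $(i,j)$ entry is $c_{a,b}\,\norm{2}{u^i}\,\norm{2}{v^j}\,\mysmalldot{\hatU_i}{\hatV_j}=c_{a,b}\mysmalldot{u^i}{v^j}$, which is $c_{a,b}$ times the $(i,j)$ entry of $\tildeU\tildeV^{T}$; assembling all entries proves the lemma. I do not expect a real obstacle here: the only two points needing care are the identification of the series matrices with the \emph{exact} row-normalizations of $\phi(U),\psi(V)$ (so that the diagonal scalings are absorbed cleanly and no stray scalar contaminates the normalized inner-product matrix) and the verification that every scalar argument stays in $[-c_{a,b},c_{a,b}]\subseteq[-1,1]$ --- both of which follow immediately from the lemmas already established.
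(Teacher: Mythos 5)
Your proposal is correct and follows essentially the same route as the paper: apply \cref{cvgp:identity} to $\phi(U),\psi(V)$, extract the diagonal scalings from \cref{algo:rmk}, identify the row-normalizations with the series matrices and invoke \cref{simulating:taylor}, then compose $\alfplain{a}{b}$ with $\alfin{a}{b}$. You write the argument entrywise and explicitly verify the scalar arguments stay in $[-c_{a,b},c_{a,b}]\subseteq[-1,1]$ (a point the paper leaves implicit but justifies in its well-definedness remark), which is the same argument in slightly more granular form.
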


\begin{proof}
    \begin{align*}
        &\quad~\Ex{\holderdual{q}{\phi(U)\,\bfg}~\holderdual{p^*}{\psi(V)\,\bfg}^T}  \\
        &= 
        \Diag{{(\norm{2}{u^i})}_{i\in [m]}} \cdot 
        \fplain{a}{b}{[\seriesLeft{\alfin{a}{b}}{\sqrt{c_{a,b}}\cdot\hatU}\cdot
        \seriesRight{\alfin{a}{b}}{\sqrt{c_{a,b}}\cdot\hatV}^T]} \cdot 
        \Diag{{(\norm{2}{v^j})}_{j\in [n]}} \\
        &~(\text{by \cref{cvgp:identity} and \cref{algo:rmk}}) \\
        &= 
        \Diag{(\norm{2}{u^i})_{i\in [m]}} \cdot 
        \fplain{a}{b}{[\fin{a}{b}{[c_{a,b}\cdot \hatU\hatV^T]}]} \cdot 
        \Diag{(\norm{2}{v^j})_{j\in [n]}} \\
        &~(\text{by \cref{simulating:taylor}}) \\
        &= 
        \Diag{(\norm{2}{u^i})_{i\in [m]}} \cdot 
        c_{a,b}\cdot \hatU\hatV^T \cdot \Diag{{(\norm{2}{v^j})}_{j\in [n]}} \\
        &= 
        c_{a,b}\cdot UV^T
    \end{align*}
\end{proof}

It remains to upper bound the denominator which we do using a straightforward convexity argument. 
\begin{lemma}
    \label[lemma]{denominator}
    $
        \Ex{\norm{q}{\phi(U)\,\bfg}^{b}\cdot \norm{p^*}{\psi(V)\,\bfg}^{a}}
        \leq  
        \gamma_{p^*}^{a}\, \gamma_{q}^{b} \mper
    $
\end{lemma}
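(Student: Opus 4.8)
The plan is to reduce \cref{denominator} to two one-variable Gaussian moment computations — one for $\phi(U)$, one for $\psi(V)$ — and then combine them with H\"older's inequality. Throughout write $X := \norm{q}{\phi(U)\,\bfg}$ and $Y := \norm{p^*}{\psi(V)\,\bfg}$, so the left-hand side of \cref{denominator} is $\Ex{X^{b}\,Y^{a}}$ with $a = p^*-1$ and $b = q-1$. The substantive facts are already in place: by \cref{algo:rmk} the $i$-th row of $\phi(U)$ has Euclidean norm $\norm{2}{u^i}^{1/b}$ and the $j$-th row of $\psi(V)$ has Euclidean norm $\norm{2}{v^j}^{1/a}$, and the feasibility constraints of $\CP{A}$ (see \cref{fig:convex}) give $\sum_{i}\norm{2}{u^i}^{q^*}\le 1$ and $\sum_{j}\norm{2}{v^j}^{p}\le 1$.

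\textbf{Step 1: the two moment bounds.} Since $\bfg$ has i.i.d.\ standard Gaussian coordinates, the $i$-th coordinate of $\phi(U)\,\bfg$ is a centered Gaussian of variance $\norm{2}{u^i}^{2/b}$, so by the definition of $\gamma_q$ (\cref{gamma:and:Gamma}) its $q$-th absolute moment equals $\norm{2}{u^i}^{q/b}\,\gamma_q^{q}$. Summing over $i$ and using $q/b = q^{*}$ together with the $\CP{A}$ constraint,
\[
\Ex{X^{q}} \;=\; \sum_{i}\norm{2}{u^i}^{q^{*}}\,\gamma_q^{q}\;\le\;\gamma_q^{q}.
\]
The identical computation for $\psi(V)$, using $p^{*}/a = p$ and $\sum_j\norm{2}{v^j}^{p}\le 1$, gives $\Ex{Y^{p^{*}}}\le\gamma_{p^{*}}^{p^{*}}$.

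\textbf{Step 2: H\"older gluing, and why it is easy.} Because $p\ge 2\ge q$ we have $1/q\ge 1/2$ and $1/p^{*}=1-1/p\ge 1/2$, hence $1/q^{*}+1/p = 2-1/q-1/p^{*}\le 1$; set $1/w := 1-1/q^{*}-1/p\ge 0$ (allowing $w=\infty$). Applying the three-factor H\"older inequality to $X^{b}\cdot Y^{a}\cdot 1$ with exponents $q^{*},p,w$, and using $bq^{*}=q$, $ap=p^{*}$, $1/q^{*}=b/q$, $1/p=a/p^{*}$,
\[
\Ex{X^{b}\,Y^{a}}\;\le\;\Ex{X^{q}}^{\,b/q}\;\Ex{Y^{p^{*}}}^{\,a/p^{*}}\;\le\;\big(\gamma_q^{q}\big)^{b/q}\big(\gamma_{p^{*}}^{p^{*}}\big)^{a/p^{*}}\;=\;\gamma_q^{\,b}\,\gamma_{p^{*}}^{\,a},
\]
which is exactly \cref{denominator}. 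Combined with \cref{numerator} (which identifies the numerator of the rounding ratio with $c_{a,b}\cdot\langle A, UV^{T}\rangle$) and \cref{round:feasibility} (feasibility of the rounded pair), this yields the asserted bound on the approximation ratio of $\CP{A}$. There is no real obstacle here; the only care needed is the exponent bookkeeping in the identities $b=q-1$, $q^{*}=q/b$, $a=p^{*}-1$, $p=p^{*}/a$, and the observation that the H\"older exponents $q^{*}$ and $p$ are \emph{super}-conjugate rather than conjugate when $p\neq q$ — harmless on a probability space, absorbed by the constant factor $1$ with exponent $w$.
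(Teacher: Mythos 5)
Your proof is correct and follows essentially the same route as the paper: you apply H\"older with exponents $q^*$ and $p$ (exploiting $1/q^* + 1/p \le 1$ on a probability space, which you formalize via the third dummy exponent $w$ while the paper just records the inequality $1/p+1/q^*\le 1$), identify the resulting $q$-th and $p^*$-th Gaussian moments of the coordinates of $\phi(U)\bfg$ and $\psi(V)\bfg$ using \cref{algo:rmk}, and close with the feasibility constraints of $\CP{A}$. The exponent bookkeeping ($bq^*=q$, $ap=p^*$, $1/q^*=b/q$, $1/p=a/p^*$) matches the paper's computation exactly.
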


\begin{proof}
    \begin{align*}
        &\quad~\Ex{\norm{q}{\phi(U)\,\bfg}^{b}\cdot \norm{p^*}{\psi(V)\,\bfg}^{a}} \\
        &\leq~ 
        \Ex{\norm{q}{\phi(U)\,\bfg}^{q^* b}}\!^{1/q^*} \cdot \Ex{\norm{p^*}{\psi(V)\,\bfg}^{p a}}\!^{1/p}
        &&\inparen{\frac{1}{p}+\frac{1}{q^*}\leq 1} \\
        &=~ 
        \Ex{\norm{q}{\phi(U)\,\bfg}^{q}}\!^{1/q^*} \cdot \Ex{\norm{p^*}{\psi(V)\,\bfg}^{p^*}}\!^{1/p} \\
        &=~ 
        \insquare{\sum_{i\in [m]}\Ex{|\gaussian{0}{\norm{2}{u^i}^{1/b}}|^{q}}}^{1/q^*} 
        \cdot 
        \insquare{\sum_{j\in [n]}\Ex{|\gaussian{0}{\norm{2}{v^j}^{1/a}}|^{p^*}}}^{1/p}  
        &&(\text{By \cref{algo:rmk}}) \\
        &=~ 
        \insquare{\sum_{i\in [m]} \norm{2}{u^i}^{q/b}}^{1/q^*} 
        \cdot 
        \insquare{\sum_{j\in [n]} \norm{2}{v^j}^{p^*/a}}^{1/p}  
        \cdot \gamma_{q}^{q/q^*}\,\gamma_{p^*}^{p^*/p} \\
        &=~ 
        \insquare{\sum_{i\in [m]} \norm{2}{u^i}^{q^*}}^{1/q^*} 
        \cdot 
        \insquare{\sum_{j\in [n]} \norm{2}{v^j}^{p}}^{1/p}  
        \cdot \gamma_{q}^{b}\,\gamma_{p^*}^{a} \\
        &=~
        \gamma_{q}^{b}\,\gamma_{p^*}^{a} 
        &&\hspace{-40 pt}(\text{feasibility of } U,V) \qedhere
    \end{align*}
\end{proof}

\medskip

We are now ready to prove our approximation guarantee. 
\begin{lemma}
    \label[lemma]{defect:bound:implies:apx}
    Consider any $1\leq q\leq 2\leq p \leq \infty$. Then, 
    \[
        \frac{\CP{A}}{\norm{p}{q}{A}} 
        ~\leq ~ 
        1/(\gamma_{p^*}\,\gamma_q \cdot \nhin{a}{b}{1}) 
    \]
\end{lemma}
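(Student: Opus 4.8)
The plan is to read the lemma off from \cref{numerator} and \cref{denominator} together with the variational inequality already recorded just above its statement, namely
\[
  \norm{p}{q}{A} ~\geq~
  \frac{\mysmalldot{A}{\Ex{\holderdual{q}{\phi(U)\bfg}\,\holderdual{p^*}{\psi(V)\bfg}^T}}}
       {\Ex{\norm{q^*}{\holderdual{q}{\phi(U)\bfg}}\cdot\norm{p}{\holderdual{p^*}{\psi(V)\bfg}}}}\mper
\]
This inequality is itself valid because, by \cref{round:feasibility}, dividing $\holderdual{q}{\phi(U)\bfg}$ and $\holderdual{p^*}{\psi(V)\bfg}$ by their $\ell_{q^*}$- and $\ell_p$-norms yields a pair on the unit $\ell_{q^*}$- and $\ell_p$-spheres, so by \cref{onorm:groth} its bilinear pairing against $A$ is at most $\norm{p}{q}{A}$; multiplying through by the (a.s.\ positive) denominator and taking expectations gives the display, exactly as in the ``$\max \ge \E[f_1]/\E[f_2]$'' remark preceding it. First I would evaluate the numerator: by \cref{numerator} the inner expectation equals $c_{a,b}\,(UV^T)$, and since $(U,V)$ is the optimal $\CP{A}$ solution the numerator is precisely $c_{a,b}\,\mysmalldot{A}{UV^T} = c_{a,b}\cdot\CP{A}$. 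Next I would rewrite the denominator using $\norm{q^*}{\holderdual{q}{z}} = \norm{q}{z}^{q-1} = \norm{q}{z}^{b}$ and $\norm{p}{\holderdual{p^*}{z}} = \norm{p^*}{z}^{p^*-1} = \norm{p^*}{z}^{a}$ (again \cref{round:feasibility}), turning it into $\Ex{\norm{q}{\phi(U)\bfg}^{b}\cdot\norm{p^*}{\psi(V)\bfg}^{a}} \le \gamma_{p^*}^{a}\gamma_q^{b}$ by \cref{denominator}. Combining, $\norm{p}{q}{A} \ge c_{a,b}\,\CP{A}/(\gamma_{p^*}^{a}\gamma_q^{b})$, i.e.\ $\CP{A}/\norm{p}{q}{A} \le \gamma_{p^*}^{a}\gamma_q^{b}/c_{a,b}$ (the inequality being trivial in the degenerate case $\CP{A}=0$).

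The remaining step is purely bookkeeping: rewrite $\gamma_{p^*}^{a}\gamma_q^{b}/c_{a,b}$ in the claimed form. By definition $c_{a,b} = (\absolute{\alfin{a}{b}})^{-1}(1) = \hin{a}{b}{1}$, and since $\fplain{a}{b}{\rho} = \gamma_{p^*}^{p^*}\gamma_q^{q}\cdot\nf{a}{b}{\rho}$ is a positive rescaling of a power series, taking coefficient-wise absolute values and then functional inverses commutes with this rescaling; this is precisely the identity $\nhin{a}{b}{\rho} = \hin{a}{b}{\rho}/(\gamma_{p^*}^{p^*}\gamma_q^{q})$ noted in the Auxiliary Functions section (well-defined at $\rho=1$ by \cref{inv:coeff:bound} together with monotonicity of $\nh{a}{b}{\cdot}$), which at $\rho=1$ reads $c_{a,b} = \gamma_{p^*}^{p^*}\gamma_q^{q}\cdot\nhin{a}{b}{1}$. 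Substituting $a = p^*-1$, $b = q-1$,
\[
  \frac{\gamma_{p^*}^{a}\gamma_q^{b}}{c_{a,b}} ~=~ \frac{\gamma_{p^*}^{p^*-1}\gamma_q^{q-1}}{\gamma_{p^*}^{p^*}\gamma_q^{q}\cdot\nhin{a}{b}{1}} ~=~ \frac{1}{\gamma_{p^*}\,\gamma_q\,\nhin{a}{b}{1}}\mcom
\]
which is the desired bound.

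I do not expect a genuine obstacle within this lemma --- it sits entirely on top of \cref{numerator} and \cref{denominator}. The only points requiring a little care are: (i) the step from the pointwise bound to the ratio of expectations, which needs $\norm{q^*}{\holderdual{q}{\phi(U)\bfg}}$ and $\norm{p}{\holderdual{p^*}{\psi(V)\bfg}}$ to be positive almost surely (true, or one simply restricts to the probability-$1$ event where they are); and (ii) the degenerate endpoints $p=\infty$ or $q=1$, where the exponents $1/b$, $1/a$ in the definitions of $\phi$ and $\psi$ must be read as limits --- there the diagonal rescalings collapse (one may normalize all $\norm{2}{u^i}$, resp.\ all $\norm{2}{v^j}$, to $1$ because the corresponding $\CP{A}$ constraint degenerates to an $\ell_\infty$-type bound), and the same chain of identities carries through. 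The genuinely hard analytic content --- lower bounding $\nhin{a}{b}{1}$ uniformly over $a,b\in[0,1]$, which is what later upgrades this bound into the stated $\tfrac{1+\eps_0}{\sinh^{-1}(1)\,\gamma_{p^*}\gamma_q}$ approximation guarantee --- is done in the subsequent sections and is not invoked here.
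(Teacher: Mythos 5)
Your proposal is correct and follows essentially the same route as the paper's proof: start from the $\max \ge \E[f_1]/\E[f_2]$ variational inequality, plug in \cref{numerator} and optimality of $(U,V)$ for the numerator, \cref{round:feasibility} plus \cref{denominator} for the denominator, then convert $c_{a,b} = \hin{a}{b}{1}$ into $\gamma_{p^*}^{p^*}\gamma_q^q\,\nhin{a}{b}{1}$ and simplify exponents using $a=p^*-1$, $b=q-1$. The extra remarks you append about a.s.\ positivity of the denominators and the degenerate endpoints $p=\infty$, $q=1$ are reasonable sanity checks that the paper handles implicitly.
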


\begin{proof}
    \begin{align*}
        \norm{p}{q}{A}~
        &\geq ~
        \frac{\mysmalldot{A}{\Ex{\holderdual{q}{\phi(U)\,\bfg}~\holderdual{p^*}{\psi(V)\,\bfg}^T}}}
        {\Ex{\norm{q^*}{\holderdual{q}{\phi(U)\,\bfg}}\cdot \norm{p}{\holderdual{p^*}{\psi(V)\,\bfg}}}} \\
        &= ~
        \frac{\mysmalldot{A}{\Ex{\holderdual{q}{\phi(U)\,\bfg}~\holderdual{p^*}{\psi(V)\,\bfg}^T}}}
        {\Ex{\norm{q}{\phi(U)\,\bfg}^{b}\cdot \norm{p^*}{\psi(V)\,\bfg}^{a}}} 
        &&(\text{by \cref{round:feasibility}}) \\
        &= ~ 
        \frac{c_{a,b}\cdot \mysmalldot{A}{UV^T}}
        {\Ex{\norm{q}{\phi(U)\,\bfg}^{b}\cdot \norm{p^*}{\psi(V)\,\bfg}^{a}}} 
        &&(\text{by \cref{numerator}}) \\
        &= ~ 
        \frac{c_{a,b}\cdot \CP{A}}
        {\Ex{\norm{q}{\phi(U)\,\bfg}^{b}\cdot \norm{p^*}{\psi(V)\,\bfg}^{a}}} 
        &&(\text{by optimality of }U,V) \\
        &\geq ~ 
        \frac{c_{a,b}\cdot \CP{A}}{\gamma_{p^*}^{a}\,\gamma_q^{b}} 
        &&(\text{by \cref{denominator}}) \\
        &= ~ 
        \frac{\hin{a}{b}{1}\cdot \CP{A}}{\gamma_{p^*}^{a}\,\gamma_q^{b}} \\
        &= ~ 
        \nhin{a}{b}{1}\cdot \gamma_{p^*}\,\gamma_{q}\cdot \CP{A}
    \end{align*}
\end{proof}

We next begin the primary technical undertaking for the analysis of our algorithm, namely proving upper bounds on 
$\nhin{p}{q}{1}$.

\section[Hypergeometric Representation]{Hypergeometric Representation of  $\nf{a}{b}{x}$}

In this section, we show that $\nf{a}{b}{\rho}$ can be represented using the Gaussian hypergeometric 
function $\hypergeometric$. The result of this section can be thought of as a generalization of the 
so-called Grothendieck identity for hyperplane rounding which simply states that 
\[
    \nf{0}{0}{\rho}~=~\frac{\pi}{2}\cdot \Ex{\bfg_1 \sim_\rho \,\bfg_2}{\sgn(\bfg_1)\sgn(\bfg_2)}~=~\sin^{-1}
    (\rho)
\]
We believe the result of this section and its proof technique to be of independent 
interest in analyzing generalizations of hyperplane rounding to convex bodies other than the hypercube. 

Recall that $\fplain{a}{b}{\rho}$ is defined as follows: 
\[
    \Ex{\bfg_1 \sim_\rho \,\bfg_2}{\sgn(\bfg_1)|\bfg_1|^{a}\sgn(\bfg_2)|\bfg_1|^{b}}
\]
where $a=p^*-1$ and $b=q-1$. 
Our starting point is the simple observation that the above expectation can be viewed as the noise 
correlation (under the Gaussian measure) of the functions  $\fa{\tau}:= \sgn{\tau}\cdot |\tau|^{a}$ and 
$\fb{\tau}:= \sgn{\tau}\cdot |\tau|^{b}$. Elementary Hermite analysis then implies that it suffices 
to understand the Hermite coefficients of $\alfa$ and $\alfb$ individually, in order to understand the 
Taylor coefficients of $\alnf{a}{b}$. To understand the Hermite coefficients of $\alfa$ and $\alfb$ individually, 
we use a generating function approach. More specifically, we derive an integral representation for 
the generating function of the (appropriately normalized) Hermite coefficients which fortunately turns out 
to be closely related to a well studied special function called the parabolic cylinder function. 

We can write: $\fplain{a}{b}{\rho}=\Ex{\bfg_1\sim\rho\,\bfg_2}{\alfa(\bfg_1)\cdot \alfb(\bfg_2)}$,
where $\fwild{c}{\tau}:=\sgn(\tau)\cdot \abs{\tau}^c$ for $c\in \{a,b\}$. 
Now we note that $\nf{a}{b}{\rho}$ is the noise correlation 
of $\alfa$ and $\alfb$. Thus, we can relate the Taylor coefficients of $\nf{a}{b}{\rho}$, to the Hermite 
coefficients of $\alfa$ and $\alfb$. 
\begin{claim}[Coefficients of $\fplain{a}{b}{\rho}$]
    \label[claim]{noise:correlation} For $\rho\in[-1,1]$, we have:
    \[
        \fplain{a}{b}{\rho}=\sum_{k\ge 0}\rho^{2k+1}\cdot \hfac{2k+1}\cdot \hfbc{2k+1}\mcom
    \] where $\hfac{i}$ and $\hfbc{j}$ are the $i$-th and $j$-th Hermite coefficients of
    $\alfa$ and $\alfb$, respectively. Moreover, $\hfac{2k}=\hfbc{2k}=0$ for $k\ge 0$.
\end{claim}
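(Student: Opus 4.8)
The plan is to relate $\fplain{a}{b}{\rho}$ to the Hermite coefficients of the two single-variable functions $\fwild{a}{\cdot}$ and $\fwild{b}{\cdot}$ via the standard noise-operator identity, and then read off the Taylor coefficients.

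First I would recall the definitions: $\fwild{c}{\tau} = \sgn(\tau) |\tau|^c$, so $\fplain{a}{b}{\rho} = \Ex{\bfg_1 \sim_\rho \bfg_2}{\fwild{a}{\bfg_1}\cdot \fwild{b}{\bfg_2}}$. This is exactly the noise correlation $\mysmalldot{\fwild{a}{\cdot}}{\noise{\fwild{b}{\cdot}}}$ under the Gaussian measure. Both functions are in $L^2(\R,\gamma)$ since $\abs{\tau}^{2c}$ has finite Gaussian moment for $c \in [0,1]$, so each admits a Hermite expansion $\fwild{c}{\cdot} = \sum_{k\geq 0} \hfwildc{c}{k} H_k$ with $\hfwildc{c}{k} = \mysmalldot{\fwild{c}{\cdot}}{H_k}$. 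By \cref{even:odd:hermite}, $H_k$ is even when $k$ is even and odd when $k$ is odd; since $\fwild{c}{\cdot}$ is an odd function, $\mysmalldot{\fwild{c}{\cdot}}{H_{2k}} = 0$ for all $k \geq 0$, giving the claimed vanishing of the even coefficients.

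Next I would apply the Hermite noise identity \cref{hermite-noise}, which says $\widehat{(\noise{f})}_k = \rho^k \widehat{f}_k$, together with the Plancherel identity \cref{plancherel-identity}:
\[
\fplain{a}{b}{\rho} = \mysmalldot{\fwild{a}{\cdot}}{\noise{\fwild{b}{\cdot}}} = \sum_{k\geq 0} \hfwildc{a}{k} \cdot \rho^k \hfwildc{b}{k} = \sum_{k\geq 0} \rho^{2k+1} \cdot \hfac{2k+1}\cdot \hfbc{2k+1},
\]
where the last equality uses that only odd indices survive. Convergence of this series on $[-1,1]$ follows since $\sum_k (\hfwildc{c}{k})^2 = \enorm{2}{\fwild{c}{\cdot}}^2 < \infty$ for each $c$, and Cauchy–Schwarz then bounds the sum of $\abs{\hfac{2k+1}\hfbc{2k+1}}$.

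**The main obstacle** is really a bookkeeping subtlety rather than a deep difficulty: one must justify that the noise operator $\noise{}$ is well-defined and bounded on the relevant $L^2$ space and that the expansion $\noise{\fwild{b}{\cdot}} = \sum_k \rho^k \hfwildc{b}{k} H_k$ converges in $L^2$ (so that the inner product may be taken term by term), which is standard but should be stated. Everything else is a direct application of the Hermite-analysis facts collected earlier in the excerpt, so I expect this claim to be short.
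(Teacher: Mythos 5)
Your proposal is correct and follows essentially the same route as the paper: rewrite $\fplain{a}{b}{\rho}$ as the noise correlation $\mysmalldot{\alfa}{\noise{\alfb}}$, apply \cref{plancherel-identity} and \cref{hermite-noise}, and use \cref{even:odd:hermite} plus oddness of $\alfwild{c}$ to kill the even-index coefficients. Your added remarks on $L^2$ membership and term-by-term convergence are fine but go slightly beyond what the paper states explicitly.
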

\begin{proof}
    We observe that both $\alfa$ and $\alfb$ are odd functions and hence \cref{even:odd:hermite} implies that
    $\hfac{2k}=\hfbc{2k}=0$ for all $k\ge 0$ -- as $\alfa(\tau)\cdot H_{2k}(\tau)$ is an odd function of
    $\tau$.
    \begin{align*}
        \fplain{a}{b}{\rho} &=  \Ex{\bfg_1\sim\rho\,\bfg_2}{\alfa(\bfg_1)\cdot \alfb(\bfg_2))}\\
        &= \Ex{\bfg_1}{\alfa(\bfg_1)\cdot \noise{\alfb}(\bfg_1)} & (\text{\cref{noise-operator}})\\
        &= \mysmalldot{\alfa}{\noise{\alfb}}\\
    &= \sum_{k\ge 0} \hfac{k}\cdot \widehat{(\noise{\alfb})}_{k} & (\text{\cref{plancherel-identity}})\\
        &= \sum_{k\ge 0} \hfac{2k+1}\cdot \widehat{(\noise{\alfb})}_{2k+1} \\
        &= \sum_{k\ge 0} \rho^{2k+1}\cdot \hfac{2k+1}\cdot \hfbc{2k+1} &
        (\text{\cref{hermite-noise}})\mper
    \end{align*}
\end{proof}

\subsection[Hermite Coefficients via Parabolic Cylinder Functions]
{Hermite Coefficients of $\alfa$ and $\alfb$ via Parabolic Cylinder Functions}
In this subsection, we use the generating function of Hermite polynomials to obtain an integral
representation for the generating function of the ($\sqrt{k!}$ normalized) odd Hermite coefficients
of $\alfa$ (and similarly of $\alfb$) is closely related to 
a special function called the parabolic cylinder function. We then use known facts about the 
relation between parabolic cylinder functions and confluent hypergeometric functions, to show 
that the Hermite coefficients of $\alfwild{c}$ can be obtained from the Taylor coefficients of a 
confluent hypergeometric function.

\subsubsection{Generating Function of Hermite Coefficients and its Confluent Hypergeometric Representation}
Using the generating function of (appropriately normalized) Hermite polynomials, we derive an integral 
representation for the generating function of the (appropriately normalized) Hermite coefficients of 
$\alfa$ (and similarly $\alfb$):
\begin{lemma}\label[lemma]{gen:function:integral:rep}
    For $c\in\{a,b\}$, let $\hfwildc{c}{k}$ denote the $k$-th Hermite coefficient of 
    $\fwild{c}{\tau} := \sgn{(\tau)}\cdot |\tau|^{c}$. 
    Then we have the following identity:
    \[
        \sum_{k\geq 0} \frac{\lambda^{2k+1}}{\sqrt{(2k+1)!}}\cdot \hfwildc{c}{2k+1}=
        \frac{1}{\sqrt{2\pi}}\int_{0}^{\infty} \tau^{c}\cdot 
        \inparen{\ee^{-(\tau-\lambda)^2/2} - \ee^{-(\tau+\lambda)^2/2}} \,d\tau \mper
    \] 
\end{lemma}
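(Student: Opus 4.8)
The plan is to combine the generating-function identity \cref{hermite:generating:function} for the normalized Hermite polynomials with the definition of Hermite coefficients as Gaussian inner products. Recall that $\hfwildc{c}{k} = \mysmalldot{\fwild{c}{\cdot}}{H_k} = \Ex{\tau \sim \gaussian{0}{1}}{\fwild{c}{\tau}\,H_k(\tau)}$, and that for $c\in[0,1]$ the function $\fwild{c}{\tau} = \sgn(\tau)\abs{\tau}^c$ lies in $L^2(\RR,\gamma)$ (since $\int \tau^{2c}\ee^{-\tau^2/2}\,d\tau<\infty$), so it admits a convergent Hermite expansion with square-summable coefficients. First I would multiply the identity $\ee^{\tau\lambda-\lambda^2/2} = \sum_{k\ge 0} H_k(\tau)\,\lambda^k/\sqrt{k!}$ by $\fwild{c}{\tau}$ and integrate against the Gaussian measure. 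The partial sums of the right-hand series converge to $\ee^{\tau\lambda-\lambda^2/2}$ in $L^2(\gamma)$ for every fixed $\lambda$ (its Hermite coefficients are exactly $\lambda^k/\sqrt{k!}$ and $\|\ee^{\tau\lambda-\lambda^2/2}\|_{L^2(\gamma)}^2 = \ee^{\lambda^2}<\infty$), so Cauchy--Schwarz against $\fwild{c}{}\in L^2(\gamma)$ justifies exchanging the sum and the expectation, yielding
\[
\sum_{k\ge 0}\frac{\lambda^k}{\sqrt{k!}}\,\hfwildc{c}{k} \;=\; \Ex{\tau\sim\gaussian{0}{1}}{\fwild{c}{\tau}\,\ee^{\tau\lambda-\lambda^2/2}}.
\]

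Next, since $\fwild{c}{\tau}$ is an odd function, $\fwild{c}{\tau}H_{2k}(\tau)$ is odd and hence $\hfwildc{c}{2k}=0$ for all $k$ (this is the same observation used in the proof of \cref{noise:correlation}); thus the left-hand side above is already equal to $\sum_{k\ge 0}\lambda^{2k+1}\hfwildc{c}{2k+1}/\sqrt{(2k+1)!}$, the quantity we want. Writing the Gaussian expectation as an integral and completing the square, $\ee^{\tau\lambda-\lambda^2/2}\cdot\tfrac{1}{\sqrt{2\pi}}\ee^{-\tau^2/2} = \tfrac{1}{\sqrt{2\pi}}\ee^{-(\tau-\lambda)^2/2}$, so
\[
\sum_{k\ge 0}\frac{\lambda^{2k+1}}{\sqrt{(2k+1)!}}\,\hfwildc{c}{2k+1} \;=\; \frac{1}{\sqrt{2\pi}}\int_{\RR}\fwild{c}{\tau}\,\ee^{-(\tau-\lambda)^2/2}\,d\tau.
\]
Finally I would split the integral at $0$: on $(-\infty,0)$ the substitution $\tau\mapsto-\tau$ together with $\fwild{c}{-\tau} = -\tau^c$ (for $\tau>0$) and $(-\tau-\lambda)^2=(\tau+\lambda)^2$ turns that half into $-\int_0^\infty \tau^c\,\ee^{-(\tau+\lambda)^2/2}\,d\tau$, while the half on $(0,\infty)$ is $\int_0^\infty \tau^c\,\ee^{-(\tau-\lambda)^2/2}\,d\tau$. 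Adding the two pieces and restoring the factor $1/\sqrt{2\pi}$ gives exactly the claimed identity.

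I do not expect a genuine obstacle here; the only point requiring any care is the term-by-term integration, which is handled by the $L^2(\gamma)$-convergence argument above (equivalently, one checks absolute convergence directly: $\sum_k \frac{\abs{\lambda}^{2k+1}}{\sqrt{(2k+1)!}}\abs{\hfwildc{c}{2k+1}} \le \ee^{\lambda^2/2}\,\|\fwild{c}{}\|_{L^2(\gamma)}$ by Cauchy--Schwarz, so both sides are entire in $\lambda$). Alternatively, one can first prove the identity for $\lambda$ in a neighbourhood of $0$ and then extend it by analyticity of both sides.
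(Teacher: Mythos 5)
Your proof is correct and takes essentially the same approach as the paper: pair the generating-function identity \cref{hermite:generating:function} with a Cauchy--Schwarz/Fubini argument to justify the sum--integral interchange, and exploit the oddness of $\fwild{c}{}$. The only (cosmetic) difference is ordering: the paper folds the integration domain to $[0,\infty)$ first, via the parity of $H_{2k+1}$, and then uses the odd part of the generating function, while you keep the full generating function over $\RR$, complete the square to obtain $\frac{1}{\sqrt{2\pi}}\int_{\RR}\fwild{c}{\tau}\,\ee^{-(\tau-\lambda)^2/2}\,d\tau$, and fold at the end via $\tau\mapsto -\tau$.
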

\begin{proof}
    We observe that for, $\alfwild{c}$ is an odd function and hence \cref{even:odd:hermite} implies that 
    $\fwild{c}{\tau}\cdot H_{2k}(\tau)$ is an odd function and $\fwild{c}{\tau}\cdot H_{2k+1}(\tau)$ is an 
    even function. This implies for any $k\geq 0$, that $\hfwildc{c}{2k} = 0$ and 
    \[
        \hfwildc{c}{2k+1}
        = 
        \frac{1}{\sqrt{2\pi}} \int_{-\infty}^{\infty} 
        \sgn{(\tau)}\cdot \tau^{c} \cdot H_{2k+1}(\tau) \cdot \ee^{-\tau^2/2}  \,d\tau
        = 
        \sqrt{\frac{2}{\pi}} \int_{0}^{\infty} 
        \tau^{c} \cdot H_{2k+1}(\tau) \cdot \ee^{-\tau^2/2}  \,d\tau \mper
    \]
    Thus we have 
    \begin{align*}
        &\quad~\sum_{k\geq 0} \frac{\lambda^{2k+1}}{\sqrt{(2k+1)!}}\cdot \hfwildc{c}{2k+1} \\
        &= 
        \sqrt{\frac{2}{\pi}} \cdot \sum_{k\geq 0}~\int_{0}^{\infty} 
        \tau^{c} \cdot \ee^{-\tau^2/2} \cdot H_{2k+1}(\tau) \cdot \frac{\lambda^{2k+1}}{\sqrt{(2k+1)!}}\,\,d\tau  \\
        &= 
        \sqrt{\frac{2}{\pi}} \cdot \int_{0}^{\infty} \tau^{c} \cdot \ee^{-\tau^2/2} \sum_{k\geq 0} 
        H_{2k+1}(\tau) \cdot \frac{\lambda^{2k+1}}{\sqrt{(2k+1)!}}\,\,d\tau 
        &&(\text{see below}) \\
        &= 
        \frac{1}{\sqrt{2\pi}} \cdot \int_{0}^{\infty} \tau^{c} \cdot \ee^{-\tau^2/2} \cdot 
        \inparen{\ee^{\tau\lambda  -\lambda^2/2} - \ee^{-\tau\lambda  -\lambda^2/2}} \,\,d\tau
        &&(\text{~by \cref{hermite:generating:function}}) \\
        &= 
        \frac{1}{\sqrt{2\pi}} \cdot \int_{0}^{\infty} \tau^{c} \cdot 
        \inparen{\ee^{-(\tau-\lambda)^2/2} - \ee^{-(\tau+\lambda)^2/2}} \,d\tau 
    \end{align*}
    where the exchange of summation and integral in the second equality follows by Fubini's theorem. 
    We include this routine verification for the sake of completeness.
    As a consequence of Fubini's theorem, if $(f_k:\R\to\R)_k$ is a sequence 
    of functions such that $\sum_{k\geq 0}\int_{0}^{\infty} |f_k| <\infty$, then 
    $
        \sum_{k\geq 0}\int_{0}^{\infty} f_k = \int_{0}^{\infty} \sum_{k\geq 0} f_k \mper
    $
    Now for any fixed $k$, we have 
    \[
        \int_{0}^{\infty} \tau^{c}\cdot |H_{k}(x)|\,d\gamma(\tau) 
        ~\leq ~
        \inparen{\int_{0}^{\infty} \tau^{2c}\,d\gamma(\tau)}^{1/2} \cdot 
        \inparen{\int_{0}^{\infty} |H_{k}(x)|^2\,d\gamma(\tau)}^{1/2} 
        ~\leq ~ 
        \gamma_{2c}^{c}
        < 
        \infty \mper
    \]
    Setting $f_k(\tau) := \tau^{c} \cdot \ee^{-\tau^2/2} \cdot H_{2k+1}(\tau) \cdot 
    \lambda^{2k+1}/\sqrt{(2k+1)!}\,\mcom$ we get that $\sum_{k\geq 0}\int_{0}^{\infty} |f_k| <\infty$. This 
    completes the proof. 
\end{proof}

Finally using known results about parabolic cylinder functions, we are able to relate the aforementioned 
integral representation to a confluent hypergeometric function (whose Taylor coefficients are known). 
\begin{lemma}\label[lemma]{integral:representation:confHG}
    For $\lambda \in [-1,1]$ and real valued $c>-1$, we have 
    \[
        \frac{1}{\sqrt{2\pi}}
        \int_{0}^{\infty} \tau^{c}\inparen{\ee^{-(\tau-\lambda)^2/2} - \ee^{-(\tau+\lambda)^2/2}}\,d\tau
        ~=~
        \gamma_{c+1}^{c+1}\cdot 
        \lambda\cdot \confHG\inparen{\frac{1-c}{2}\,;\,\frac{3}{2}\,;\,-\frac{\lambda^2}{2}} 
    \]
\end{lemma}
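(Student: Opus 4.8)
\textbf{Proof plan for \cref{integral:representation:confHG}.}
The plan is to split the integrand into its two Gaussian pieces, evaluate each by the confluent‑hypergeometric representation of the parabolic cylinder function already recorded in \cref{parabolic-conf-hypergeometric}, and then clean up the resulting $\Gamma$‑factor. Concretely, apply \cref{parabolic-conf-hypergeometric} with $\alpha := c + \tfrac12$ (the hypothesis $\Re(\alpha) > -\tfrac12$ is exactly $c > -1$), and substitute $\alpha - \tfrac12 = c$. After simplifying the parameters one gets
\[
  \int_{0}^{\infty} \tau^{c}\, \ee^{-(\tau+\lambda)^2/2}\,d\tau
  ~=~ A_c \cdot \confHG\!\inparen{-\tfrac{c}{2}\,;\,\tfrac12\,;\,-\tfrac{\lambda^2}{2}}
  ~-~ B_c \cdot \lambda \cdot \confHG\!\inparen{\tfrac{1-c}{2}\,;\,\tfrac32\,;\,-\tfrac{\lambda^2}{2}},
\]
where $A_c = \dfrac{\sqrt{\pi}\,\Gamma(c+1)}{2^{c/2+1/2}\,\Gamma(c/2+1)}$ and $B_c = \dfrac{\sqrt{\pi}\,\Gamma(c+1)}{2^{c/2}\,\Gamma(c/2+1/2)}$. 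The only content here is the bookkeeping needed to verify that the exponents of $2$ and the arguments of $\Gamma$ and $\confHG$ collapse to the stated values; I would carry this out but not belabor it.

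Next, since every $\confHG(\cdot\,;\cdot\,;-\lambda^2/2)$ above depends on $\lambda$ only through $\lambda^2$, replacing $\lambda$ by $-\lambda$ in the displayed identity gives the companion formula for $\int_{0}^{\infty}\tau^{c}\,\ee^{-(\tau-\lambda)^2/2}\,d\tau$, with the sign of the second (odd) term flipped. Subtracting the two, the even term involving $\confHG(\,\cdot\,;\tfrac12\,;\,\cdot\,)$ cancels and we are left with
\[
  \int_{0}^{\infty} \tau^{c}\inparen{\ee^{-(\tau-\lambda)^2/2} - \ee^{-(\tau+\lambda)^2/2}}\,d\tau
  ~=~ 2 B_c \cdot \lambda \cdot \confHG\!\inparen{\tfrac{1-c}{2}\,;\,\tfrac32\,;\,-\tfrac{\lambda^2}{2}}.
\]
Multiplying by $1/\sqrt{2\pi}$ turns the constant into $\dfrac{2 B_c}{\sqrt{2\pi}} = \dfrac{2^{(1-c)/2}\,\Gamma(c+1)}{\Gamma(c/2+1/2)}$.

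It then remains to identify this constant with $\gamma_{c+1}^{\,c+1}$. By \cref{gamma:and:Gamma}, $\gamma_{c+1}^{\,c+1} = \dfrac{2^{(c+1)/2}}{\sqrt{\pi}}\,\Gamma\!\inparen{\tfrac{c}{2}+1}$, so the required equality $\dfrac{2^{(1-c)/2}\,\Gamma(c+1)}{\Gamma(c/2+1/2)} = \dfrac{2^{(c+1)/2}}{\sqrt{\pi}}\,\Gamma\!\inparen{\tfrac{c}{2}+1}$ is equivalent to $\Gamma(c+1) = 2^{c}\,\dfrac{\Gamma(c/2+1/2)\,\Gamma(c/2+1)}{\sqrt{\pi}}$, which is precisely \cref{duplication:formula} (the Legendre duplication formula) applied with $\tau = \tfrac{c}{2}+\tfrac12$, since then $2\tau = c+1$ and $2^{1-2\tau} = 2^{-c}$. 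This closes the proof. The main obstacle is not conceptual but clerical: keeping the parameters of the two $\confHG$'s and the powers of $2$ straight through \cref{parabolic-conf-hypergeometric}, and ensuring the final $\Gamma$‑identity is invoked at the correct argument; there is no analytic difficulty beyond the already‑cited facts.
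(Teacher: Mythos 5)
Your proof is correct and follows the same route as the paper's: apply \cref{parabolic-conf-hypergeometric} with $\alpha = c+\tfrac12$, use evenness of $\confHG(\cdot\,;\cdot\,;-\lambda^2/2)$ in $\lambda$ to cancel the $\confHG(\cdot;\tfrac12;\cdot)$ term upon subtraction, and clean up the constant via \cref{duplication:formula} and \cref{gamma:and:Gamma}. Your bookkeeping of the parameters and the $\Gamma$-identity at $\tau=\tfrac{c}{2}+\tfrac12$ all checks out.
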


\begin{proof}
    We prove this by using the \cref{parabolic-conf-hypergeometric} with $a=c+\frac{1}{2}$. We note that
    $\alpha>-\frac{1}{2}$ and $\confHG\inparen{\cdot,\cdot,-\lambda^2/2}$ is an even
    function of $\lambda$. So combining the two, we get:
    \begin{align*}
        &\quad~\frac{1}{\sqrt{2\pi}}
        \int_{0}^{\infty} \tau^{c}\inparen{\ee^{-(\tau-\lambda)^2/2} - \ee^{-(\tau+\lambda)^2/2}}\,d\tau \\
        &=~\frac{2}{\sqrt{2\pi}}\cdot\frac{\sqrt{\pi}\cdot\Gamma\inparen{c+1}}{2^{c/2}
        \cdot\Gamma\inparen{\frac{c+1}{2}}}
        \cdot \lambda\cdot\confHG\inparen{-\frac{c}{2}+\frac{1}{2}\,;\frac{3}{2}\,;-\frac{1}{2}\lambda^2}\\
        &=~2^{(1-c)/2}\cdot
        \frac{\Gamma\inparen{\frac{c+1}{2}+\frac{1}{2}}}{2^{-c}\cdot\sqrt{\pi}}\cdot 
        \lambda\cdot \confHG\inparen{\frac{1-c}{2}\,;\,\frac{3}{2}\,;\,-\frac{\lambda^2}{2}} 
        &&(\text{by \cref{duplication:formula}}) \\
        &=~\gamma_{c+1}^{c+1}\cdot 
        \lambda\cdot \confHG\inparen{\frac{1-c}{2}\,;\,\frac{3}{2}\,;\,-\frac{\lambda^2}{2}} 
        &&(\text{by \cref{gamma:and:Gamma}})
    \end{align*}
\end{proof}
\subsection[Taylor Coefficients of f]{Taylor Coefficients of $\fplain{a}{b}{x}$ and Hypergeometric
  Representation}
By \cref{noise:correlation}, we are left with understanding the function whose power series is given by 
a weighted coefficient-wise product of a certain pair of confluent hypergeometric functions. This 
turns out to be precisely the Gaussian hypergeometric function, as we will see below. 
\begin{observation}
\label[observation]{1F1:circ:1F1:gives:2F1}
    Let $f_k := [\tau^k]\, \confHG(a_1,3/2,\tau)$ and $h_k := [\tau^k] \,\confHG(b_1,3/2,\tau)$. Further let \\
    $\mu_k := f_k\cdot h_k\cdot (2k+1)!/4^k$. Then for $\rho\in [-1,1]$, 
    \[
        \sum_{k\geq 0} \mu_k\cdot \rho^n  ~=~  \hypergeometric(a_1,b_1\,;\,3/2\,;\,\rho) \mper
    \]
\end{observation}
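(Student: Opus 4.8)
The statement is, after fixing the evident typo ($\rho^n$ should read $\rho^k$), a purely formal identity between the coefficient sequences of the two hypergeometric series, so the plan is to unfold the definitions and match coefficients. First I would write, using \cref{conf-hypergeometric}, $f_k = [\tau^k]\,\confHG(a_1;3/2;\tau) = \frac{(a_1)_k}{(3/2)_k}\cdot\frac{1}{k!}$ and likewise $h_k = \frac{(b_1)_k}{(3/2)_k}\cdot\frac{1}{k!}$, where $(\alpha)_k$ denotes the rising factorial. Substituting into the definition of $\mu_k$ gives
\[
\mu_k \;=\; \frac{(a_1)_k\,(b_1)_k}{(3/2)_k^{\,2}\,(k!)^2}\cdot\frac{(2k+1)!}{4^k}.
\]
Comparing with \cref{hypergeometric}, which says $[\rho^k]\,\hypergeometric(a_1,b_1;3/2;\rho) = \frac{(a_1)_k(b_1)_k}{(3/2)_k}\cdot\frac{1}{k!}$, the claim reduces to the single scalar identity
\[
\frac{(2k+1)!}{4^k\,(3/2)_k\,k!} \;=\; 1 \qquad\text{for all } k\ge 0.
\]

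\textbf{The key elementary identity.} I would prove $(3/2)_k = \dfrac{(2k+1)!}{4^k\,k!}$ directly. Indeed $(3/2)_k = \tfrac32\cdot\tfrac52\cdots\tfrac{2k+1}{2} = 2^{-k}\,\bigl(1\cdot 3\cdot 5\cdots(2k+1)\bigr)$, and splitting $(2k+1)!$ into its odd and even parts, $(2k+1)! = \bigl(1\cdot 3\cdots(2k+1)\bigr)\cdot\bigl(2\cdot 4\cdots 2k\bigr) = \bigl(1\cdot 3\cdots(2k+1)\bigr)\cdot 2^k\,k!$, so $1\cdot 3\cdots(2k+1) = \frac{(2k+1)!}{2^k\,k!}$ and hence $(3/2)_k = \frac{(2k+1)!}{4^k\,k!}$. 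Plugging this back in yields $\mu_k = \frac{(a_1)_k(b_1)_k}{(3/2)_k\,k!}$, which is exactly the $\rho^k$-coefficient of $\hypergeometric(a_1,b_1;3/2;\rho)$; this is the heart of the proof. (An alternative, if one prefers to avoid the factorial bookkeeping, is to invoke the Gamma-function duplication formula \cref{duplication:formula} to express $(3/2)_k = \Gamma(3/2+k)/\Gamma(3/2)$ in the required form, but the double-factorial argument is shorter.)

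\textbf{Convergence and conclusion.} It remains only to note that both power series in question actually converge on the closed interval $\rho\in[-1,1]$, so that the coefficientwise identity implies the claimed functional identity there. Since the observation is applied with $a_1 = (1-a)/2$ and $b_1 = (1-b)/2$ for $a,b\in[0,1]$, we have $a_1,b_1\in[0,1/2]$ and therefore $3/2 - a_1 - b_1 \ge 1/2 > 0$; by the standard convergence criterion for $\hypergeometric$ this guarantees absolute convergence of $\hypergeometric(a_1,b_1;3/2;\rho)$ on $|\rho|\le 1$, and the nonnegativity of the parameters makes $\mu_k\ge 0$ with $\sum_k\mu_k = \hypergeometric(a_1,b_1;3/2;1)<\infty$, so the left-hand series converges on $[-1,1]$ as well (one may also just cite Abel's theorem at the endpoints). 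Matching coefficients then finishes the proof. The only place any care is needed is this endpoint convergence discussion; everything else is a one-line manipulation of rising factorials, so I do not anticipate a genuine obstacle.
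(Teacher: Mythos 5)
Your proof is correct and matches the paper's argument essentially line for line: both reduce the statement to the scalar identity $(3/2)_k = (2k+1)!/(4^k\,k!)$ and verify it by splitting $(2k+1)!$ into odd and even parts. The brief convergence remark you add at the end is a harmless extra that the paper omits.
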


\begin{proof}
    The claim is equivalent to showing that
    $\mu_k = \RisingFactorial{a_1}\,\RisingFactorial{b_1}/(\RisingFactorial{3/2}\,k!)$. 
    Since we have $f_k = \RisingFactorial{a_1}/(\RisingFactorial{3/2}\,k!)$ and 
    $h_k = \RisingFactorial{b_1}/(\RisingFactorial{3/2}\,k!)$, it is sufficient to show that 
    $(2k+1)!/4^k = \RisingFactorial{3/2}\cdot k!$. Indeed, we have: 
    \begin{align*}
        (2k+1)! 
        &= 
        2^k\cdot k!\cdot 1\cdot 3 \cdot 5 \cdots (2k+1) \\
        &= 
        4^k\cdot k!\cdot \frac{3}{2} \cdot \frac{5}{2} \cdots \inparen{\frac{3}{2}+k-1} \\
        &= 
        4^k\cdot k!\cdot \RisingFactorial{3/2} \mper
    \end{align*}
\end{proof}

We are finally equipped to put everything together. 
\begin{theorem}
    \label[theorem]{hypergeometric:representation}
    For any $a,b\in (-1,\infty)$ and $\rho\in [-1,1]$, we have
    \[
        \nf{a}{b}{\rho}
        ~:=~
        \frac{1}{\gamma_{a+1}^{a+1}\cdot \gamma_{b+1}^{b+1}}\cdot 
        \Ex{\bfg_1 \sim_\rho \,\bfg_2}{\sgn(\bfg_1)|\bfg_1|^{a}\sgn(\bfg_2)|\bfg_1|^{b}}
        ~=~
        \hgp{\rho} \mper
    \]
    It follows that the $(2k+1)$-th Taylor coefficient of $\nf{a}{b}{\rho}$ is 
    \[
        \frac{\RisingFactorial{(1-a)/2}\,\RisingFactorial{(1-b)/2}}{(\RisingFactorial{3/2}\,k!)} \mper
    \]
\end{theorem}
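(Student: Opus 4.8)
The plan is to assemble the statement from the four results already established in this section. By \cref{noise:correlation}, $\fplain{a}{b}{\rho}$ is a power series in odd powers of $\rho$ whose coefficient of $\rho^{2k+1}$ is exactly the product $\hfac{2k+1}\cdot\hfbc{2k+1}$ of the $(2k+1)$-th Hermite coefficients of $\alfwild{a}$ and $\alfwild{b}$. So the whole task reduces to (i) writing down a clean closed form for each odd Hermite coefficient $\hfwildc{c}{2k+1}$ and (ii) multiplying the two and simplifying. Normalizing at the end by $\gamma_{a+1}^{a+1}\gamma_{b+1}^{b+1}$ then yields $\nf{a}{b}{\rho}$, and reading off the coefficient of $\rho^{2k+1}$ gives the stated Taylor expansion.

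For step (i), I would combine \cref{gen:function:integral:rep} and \cref{integral:representation:confHG}: the former expresses $\sum_{k\ge 0}\frac{\lambda^{2k+1}}{\sqrt{(2k+1)!}}\hfwildc{c}{2k+1}$ as a Gaussian integral, and the latter identifies that integral with $\gamma_{c+1}^{c+1}\cdot\lambda\cdot\confHG(\tfrac{1-c}{2};\tfrac32;-\tfrac{\lambda^2}{2})$. Expanding the confluent hypergeometric series and comparing coefficients of $\lambda^{2k+1}$ gives $\hfwildc{c}{2k+1}=\gamma_{c+1}^{c+1}\,\sqrt{(2k+1)!}\,(-\tfrac12)^{k}\,[\tau^{k}]\confHG(\tfrac{1-c}{2};\tfrac32;\tau)$ for $c\in\{a,b\}$ (the even coefficients vanish, as already recorded in \cref{noise:correlation}).

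For step (ii), I would multiply these expressions for $c=a$ and $c=b$. The two sign factors combine to $4^{-k}$, so $\hfac{2k+1}\hfbc{2k+1}=\gamma_{a+1}^{a+1}\gamma_{b+1}^{b+1}\cdot\frac{(2k+1)!}{4^{k}}\cdot f_k h_k$, where $f_k=[\tau^{k}]\confHG(\tfrac{1-a}{2};\tfrac32;\tau)$ and $h_k=[\tau^{k}]\confHG(\tfrac{1-b}{2};\tfrac32;\tau)$. This is precisely $\gamma_{a+1}^{a+1}\gamma_{b+1}^{b+1}$ times the quantity $\mu_k$ appearing in \cref{1F1:circ:1F1:gives:2F1} with $a_1=\tfrac{1-a}{2}$, $b_1=\tfrac{1-b}{2}$; the elementary identity $(2k+1)!/4^{k}=(3/2)_k\,k!$ from the proof of that observation does the bookkeeping. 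Summing over $k$ and invoking \cref{1F1:circ:1F1:gives:2F1} yields $\fplain{a}{b}{\rho}=\gamma_{a+1}^{a+1}\gamma_{b+1}^{b+1}\cdot\rho\cdot\hypergeometric(\tfrac{1-a}{2},\tfrac{1-b}{2};\tfrac32;\rho^2)$, i.e.\ $\nf{a}{b}{\rho}=\hgp{\rho}$, and the $(2k+1)$-th Taylor coefficient of $\nf{a}{b}{\rho}$ equals $\frac{\RisingFactorial{(1-a)/2}\,\RisingFactorial{(1-b)/2}}{(\RisingFactorial{3/2}\,k!)}$.

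The steps above are essentially bookkeeping; the part I expect to require genuine care is the analytic justification. Reading off Taylor coefficients term-by-term and interchanging the double summation is unproblematic for $|\rho|<1$, but the endpoints $\rho=\pm1$ require an appeal to Abel's theorem together with the observation that $\tfrac32-\tfrac{1-a}{2}-\tfrac{1-b}{2}=\tfrac12+\tfrac{a+b}{2}>0$ exactly when $a+b>-1$, so the parameter range needs some attention there. Relatedly, the Hermite/Plancherel identity underlying \cref{noise:correlation} is cleanest when $\alfwild{c}\in L^2(\gamma)$, i.e.\ $c>-\tfrac12$; to reach the full range $a,b\in(-1,\infty)$ one can either note that both sides are real-analytic in $(a,b)$ for $\rho\in(-1,1)$ fixed and extend by analytic continuation, or establish finiteness of the noise correlation directly via H\"older's inequality with exponents chosen so that $|\bfg_1|^{a}$ and $|\bfg_2|^{b}$ lie in dual $L^{p}$ spaces. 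I would present the $a,b>-\tfrac12$, $|\rho|<1$ case first and then patch up the boundary and the extended parameter range.
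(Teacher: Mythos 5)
Your proposal is correct and follows exactly the route the paper takes: the paper's entire proof is the single line ``The claim follows by combining \cref{noise:correlation}, \cref{gen:function:integral:rep,integral:representation:confHG}, and \cref{1F1:circ:1F1:gives:2F1},'' and your bookkeeping fills in the intended details faithfully. Your closing remarks about endpoint convergence and the $L^2(\gamma)$ requirement for $c \le -1/2$ address genuine gaps the paper leaves unstated, but since the application only needs $a,b\in[0,1]$ the paper simply ignores them.
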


\begin{proof}
    The claim follows by combining \cref{noise:correlation}, 
    \cref{gen:function:integral:rep,integral:representation:confHG}, and \cref{1F1:circ:1F1:gives:2F1}. 
\end{proof}

This hypergeometric representation immediately yields some non-trivial coefficient and 
monotonicity properties: 
\begin{corollary}
\label[corollary]{monotonicity:properties}
    For any $a,b\in [0,1]$, the function $\alnf{a}{b}:[-1,1]\to \R$ satisfies 
    \begin{enumerate}[label=(M\arabic*)]
        \item $[\rho]\,\nf{a}{b}{\rho} = 1$ ~and~ $[\rho^{3}]\,\nf{a}{b}{\rho} = (1-a)(1-b)/6$. 
        
        \item All Taylor coefficients are non-negative. Thus, $\nf{a}{b}{\rho}$ 
        is increasing on $[-1,1]$. 
        
        \item All Taylor coefficients are decreasing in $a$ and in $b$. Thus, for any fixed $\rho\in [-1,1]$,~ 
        $\nf{a}{b}{\rho}$ is decreasing in $a$ and in $b$. 
        
        \item Note that $\nf{a}{b}{0}=0$ and by (M1) and (M2), $\nf{a}{b}{1} \geq 1$. By continuity, 
        $\nf{a}{b}{[0,1]}$ contains $[0,1]$. Combining this with (M3) implies that for any fixed 
        $\rho\in [0,1]$, $\nfin{a}{b}{\rho}$ is increasing in $a$ and in $b$. 
    \end{enumerate}
\end{corollary}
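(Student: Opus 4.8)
The plan is to deduce all four items directly from the Taylor expansion furnished by \cref{hypergeometric:representation}, namely
\[
\nf{a}{b}{\rho} ~=~ \sum_{k\geq 0} c_k(a,b)\,\rho^{2k+1}, \qquad c_k(a,b) ~\defeq~ \frac{\RisingFactorial{(1-a)/2}\,\RisingFactorial{(1-b)/2}}{\RisingFactorial{3/2}\;k!},
\]
with all even-degree coefficients equal to $0$. The only preliminary facts I would record are: for $a,b\in[0,1]$ the numbers $\tfrac{1-a}{2},\tfrac{1-b}{2}$ lie in $[0,\tfrac12]$, so each factor of $\RisingFactorial{(1-a)/2}=\prod_{j=0}^{k-1}\!\big(\tfrac{1-a}{2}+j\big)$ (and likewise of $\RisingFactorial{(1-b)/2}$) is non-negative, while $\RisingFactorial{3/2}$ and $k!$ are strictly positive; and the series converges absolutely for $|\rho|\le 1$, since in the notation ${}_2F_1(\alpha,\beta;\gamma;z)$ one has $\gamma-\alpha-\beta=\tfrac32-\tfrac{1-a}{2}-\tfrac{1-b}{2}=\tfrac12+\tfrac{a+b}{2}>0$ (Gauss's convergence criterion), so $\nf{a}{b}{\cdot}$ is continuous on $[-1,1]$.

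For \textbf{(M1)} I would just read off $c_0(a,b)=1$ and $c_1(a,b)=\tfrac{1-a}{2}\cdot\tfrac{1-b}{2}\big/\big(\tfrac32\cdot 1\big)=(1-a)(1-b)/6$. For \textbf{(M2)}, the sign observation above gives $c_k(a,b)\ge 0$ for every $k$; since only odd powers of $\rho$ appear, $\tfrac{d}{d\rho}\nf{a}{b}{\rho}=\sum_{k\ge 0}(2k+1)c_k(a,b)\rho^{2k}\ge c_0(a,b)=1>0$ on $(-1,1)$, which together with continuity at the endpoints makes $\nf{a}{b}{\cdot}$ strictly increasing on $[-1,1]$.

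For \textbf{(M3)}, fix $k$ and $b$; then $c_k(a,b)$ is a non-negative constant (independent of $a$) times $\RisingFactorial{(1-a)/2}=\prod_{j=0}^{k-1}\!\big(\tfrac{1-a}{2}+j\big)$. Each factor is non-negative and non-increasing in $a$ (as $\tfrac{1-a}{2}$ decreases while staying $\ge 0$), and a product of non-negative non-increasing functions is non-increasing; hence $c_k(\cdot,b)$ is non-increasing, and by symmetry so is $c_k(a,\cdot)$. Summing against $\rho^{2k+1}\ge 0$ shows that for each fixed $\rho\in[0,1]$ the maps $a\mapsto\nf{a}{b}{\rho}$ and $b\mapsto\nf{a}{b}{\rho}$ are non-increasing (this is the range used in \cref{defect:bound:implies:apx}; for $\rho\in[-1,0]$ the same coefficient monotonicity controls the sign via oddness of $\nf{a}{b}{\cdot}$). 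For \textbf{(M4)}, $\nf{a}{b}{0}=0$ is immediate, and at $\rho=1$ the convergent series has $c_0=1$ with all other terms $\ge 0$, so $\nf{a}{b}{1}\ge 1$; continuity and strict monotonicity on $[0,1]$ then make $\nf{a}{b}{\cdot}$ a homeomorphism of $[0,1]$ onto $[0,\nf{a}{b}{1}]\supseteq[0,1]$, so $\nfin{a}{b}{\cdot}$ is well-defined and increasing on $[0,1]$. To see $\nfin{a}{b}{x}$ is increasing in $a$: fix $x\in[0,1]$ and $a\le a'$; by (M3), $\nf{a'}{b}{\nfin{a}{b}{x}}\le \nf{a}{b}{\nfin{a}{b}{x}}=x$, and applying the increasing map $\nfin{a'}{b}{\cdot}$ yields $\nfin{a}{b}{x}\le\nfin{a'}{b}{x}$; the argument in $b$ is symmetric.

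I do not anticipate a real obstacle: once \cref{hypergeometric:representation} is in hand, the corollary is a sequence of elementary rising-factorial manipulations. The two points deserving a line of care are the absolute convergence (hence continuity) of the series at $\rho=\pm1$, which is exactly the Gauss condition $\gamma-\alpha-\beta>0$ verified above, and the fact that the pointwise monotonicity statements in (M3)/(M4) are the natural ones for $\rho\in[0,1]$ — the range that actually enters \cref{defect:bound:implies:apx} — with the inequalities reversing for negative $\rho$ by oddness.
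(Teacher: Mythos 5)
Your proposal is correct and takes the same route the paper implicitly intends: read the Taylor coefficients off the hypergeometric representation in \cref{hypergeometric:representation}, observe they are products of non-negative, non-increasing (in $a$, $b$) rising factorials, and deduce (M1)--(M4) by elementary manipulation, with Gauss's criterion $\gamma-\alpha-\beta=\tfrac12+\tfrac{a+b}{2}>0$ giving continuity at $\rho=\pm1$. Your parenthetical caveat about $\rho\in[-1,0]$ is a genuine refinement: as stated, the paper's (M3) (``decreasing in $a$ and $b$ for any fixed $\rho\in[-1,1]$'') literally reverses sign for negative $\rho$, but only the range $\rho\in[0,1]$ is used downstream (in \cref{bound:defect}), so the flaw is cosmetic.
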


\section[Bound on Defect]{$\sinh^{-1}(1)/(1+\eps_0)$ ~Bound on $\nhin{a}{b}{1}$}
In this section we show that $p=\infty, q=1$ (the Grothendieck case) is roughly the extremal case for the 
value of $\nhin{a}{b}{1}$, \ie we show that for any $1\leq q\leq 2 \leq p \leq \infty$, $\nhin{a}{b}{1}\geq 
\sinh^{-1}(1)/(1+\eps_0)$ (recall that $\nhin{0}{0}{1} = \sinh^{-1}(1)$). While we were unable to establish as 
much, we conjecture that $\nhin{a}{b}{1} \geq \sinh^{-1}(1)$. 
\cref{behavior:of:coefficients} details some of the challenges involved in establishing that $\sinh^{-1}(1)$ is the worst case, and presents our approach to establish an approximate bound, which will be formally 
proved in \cref{bounding:coefficients}.

\subsection[Behavior of The Coefficients of The Inverse Function]
{Behavior of The Coefficients of $\nfin{a}{b}{z}$.}
\label[subsection]{behavior:of:coefficients}
Krivine's upper bound on the real Grothendieck constant, 
Haagerup's upper bound \cite{Haagerup81} on the complex Grothendieck constant and the work of 
Naor and Regev~\cite{NR14, BFV14} on the optimality of Krivine schemes are all closely related to our work in 
that each of the aforementioned papers needs to lower bound $(\absolute{f^{-1}})^{-1}(1)$ for an appropriate 
odd function $f$ (the work of Briet \etal~\cite{BFV14} on the rank-constrained Grothendieck problem is 
also a generalization of Krivine's and Haagerup's work, however they did not derive a closed form upper 
bound on $(\absolute{f^{-1}})^{-1}(1)$ in their setting). In Krivine's setting $f=\sin^{-1} x$, implying 
$(\absolute{f^{-1}})^{-1} = \sinh^{-1}$ and hence the bound is immediate. In our setting, as well as 
in~cite{Haagerup81} and~\cite{NR14, BFV14}, $f$ is  given by its Taylor coefficients and is not known to have 
a closed form. In~\cite{NR14}, all coefficients of $f^{-1}$ after the third are negligible and so one 
doesn't incur much loss by assuming that $\absolute{f^{-1}}(\rho) = c_1\rho + c_3\rho^3$.
In~\cite{Haagerup81}, the coefficient of $\rho$ in $f^{-1}(\rho)$ is $1$ and every subsequent coefficient is 
negative, which implies that $\absolute{f^{-1}}(\rho) = 2\rho - f^{-1}(\rho)$. 
Note that if the odd coefficients of $f^{-1}$ are alternating in sign like in Krivine's setting, then 
$\absolute{f^{-1}}(\rho) = -i \cdot f^{-1}(i\rho)$. 
These structural properties of the coefficients help their analyses. 

In our setting there does not appear to be such a strong relation between $(\absolute{f^{-1}})$ and $f^{-1}$. 
Consider $f(\rho) = \nf{a}{a}{\rho}$. For certain $a\in (0,1)$, the sign pattern of the coefficients of 
$f^{-1}$ is unlike that of \cite{Haagerup81} or $\sin \rho$. In fact empirical results suggest that the odd 
coefficients of $f$ alternate in sign up to some term $K=K(a)$, and subsequently the coefficients 
are all non-positive (where $K(a) \to \infty$ as $a\to 0$), \ie the sign pattern appears to be interpolating 
between that of $\sin \rho$ and that of $f^{-1}(\rho)$ in the case of Haagerup~\cite{Haagerup81}.
 
Another source of difficulty is that for a fixed $a$, the coefficients of $f^{-1}$ (with and without magnitude) 
are not necessarily monotone in $k$, and moreover for a fixed $k$, the $k$-th coefficient of $f^{-1}$ is not 
necessarily monotone in $a$. 

A key part of our approach is noting that certain milder assumptions on the coefficients are sufficient to 
show that $\sinh^{-1}(1)$ is the worst case. The proof crucially uses the monotonicity of $\nf{a}{b}{\rho}$ in 
$a$ and $b$. The conditions are as follows: \medskip 

\noindent
Let $\ngc := [\rho^k]\,\nfin{a}{b}{\rho}$. Then 
\begin{enumerate}[label=(C\arabic*)]
    \item $\ngc \leq 1/k!$ ~if~ $k\! \pmod{4} \equiv 1$. 
    
    \item $\ngc \leq 0$ ~if~ $k\! \pmod{4} \equiv 3$.
\end{enumerate}
To be more precise, we were unable to establish that the above conditions hold for all $k$ (however 
we conjecture that it is true for all $k$), and instead use 
Mathematica to verify it for the fist few coefficients. We additionally show that the coefficients of 
$\alnfin{a}{b}$ decay exponentially. Combining this exponential decay with a robust version of the previously 
advertised claim yields that $\nhin{a}{b}{1} \geq \sinh^{-1}(1)/(1 + \eps_0)$. 

We next proceed to prove the claim that the aforementioned conditions are sufficient to show that 
$\sinh^{-1}(1)$ is the worst case. We will need the following definition.
For an odd positive integer $t$, let 
\[
    h_{err}(t,\rho) := \sum_{k\geq t} |\ngc|\cdot \rho^{k}
\]

\begin{lemma}
\label[lemma]{bound:defect}
    If $t$ is an odd integer such that (C1) and (C2) are satisfied for all $k<t$, and 
    $\rho = \sinh^{-1}(1-2 h_{err}(t,\delta))$ ~for some $\delta\geq \rho$,~then $\nh{a}{b}{\rho}\leq 1$. 
\end{lemma}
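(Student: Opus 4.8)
The plan is to reduce the statement to a single power–series inequality. Since the hypothesis fixes $\rho$ so that $\sinh(\rho) = 1 - 2h_{err}(t,\delta)$, it suffices to prove $\nh{a}{b}{\rho} \leq \sinh(\rho) + 2h_{err}(t,\delta)$. Write $g_k := [\rho^k]\,\nfin{a}{b}{\rho}$; then $g_k = 0$ for even $k$ (as $\nf{a}{b}{\cdot}$ is odd), $g_1 = 1$ by (M1), $\nh{a}{b}{\rho} = \sum_{k}|g_k|\rho^k$, and $h_{err}(t,\delta) = \sum_{k \geq t}|g_k|\delta^k$. Also set $s_k := [\rho^k]\,\sin(\rho)$, so $s_k = 1/k!$ when $k \equiv 1 \pmod{4}$ and $s_k = -1/k!$ when $k \equiv 3 \pmod{4}$, while $[\rho^k]\,\sinh(\rho) = 1/k!$ for all odd $k$. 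If $1 - 2h_{err}(t,\delta) \leq 0$ then $\rho \leq 0$ and $\nh{a}{b}{\rho}\leq 0 \leq 1$ with nothing to prove, so we may assume $\rho \in [0,\sinh^{-1}(1)) \subset [0,1)$, where all the series above converge absolutely (using \cref{inv:coeff:bound}), and $\rho \leq \delta$.

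First I would record the only external input, namely $\nfin{a}{b}{\rho} \geq \sin(\rho)$. This follows from \cref{monotonicity:properties}: by (M2) and (M3) the Taylor coefficients of $\nf{a}{b}{\cdot}$ are nonnegative and are dominated term by term by those of $\sin^{-1}$ (using the Grothendieck identity $\nf{0}{0}{\rho} = \sin^{-1}(\rho)$), so $\nf{a}{b}{y} \leq \sin^{-1}(y)$ on $[0,1]$; evaluating at $y = \nfin{a}{b}{\rho}$ gives $\rho \leq \sin^{-1}(\nfin{a}{b}{\rho})$, and applying $\sin$ (legitimate since $\rho < 1 < \pi/2$) yields the claim.

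Now split the target difference at $t$, so that $\nh{a}{b}{\rho} - \sinh(\rho) = \sum_{k<t}(|g_k| - 1/k!)\rho^k + \sum_{k \geq t}(|g_k| - 1/k!)\rho^k$. The tail is bounded using $0 \leq \rho \leq \delta$ by $\sum_{k \geq t}(|g_k| - 1/k!)\rho^k \leq h_{err}(t,\delta) - \sum_{k \geq t}\rho^k/k!$. For the head I claim the term‑wise estimate $|g_k| - 1/k! \leq s_k - g_k$ for every odd $k < t$: when $k \equiv 1 \pmod{4}$ this reads $|g_k| + g_k \leq 2/k!$, which holds by (C1) when $g_k \geq 0$ and is trivial when $g_k < 0$; when $k \equiv 3 \pmod{4}$ it reads $|g_k| + g_k \leq 0$, which holds since (C2) gives $g_k \leq 0$. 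Summing this and then completing the sum over all odd $k$, one gets $\sum_{k<t}(|g_k| - 1/k!)\rho^k \leq \sum_{k<t}(s_k - g_k)\rho^k = (\sin(\rho) - \nfin{a}{b}{\rho}) + \sum_{k \geq t}(g_k - s_k)\rho^k \leq 0 + h_{err}(t,\delta) + \sum_{k \geq t}\rho^k/k!$, where the first term is bounded by the inequality of the previous paragraph and the second using $|g_k - s_k| \leq |g_k| + 1/k!$ together with $\rho \leq \delta$. Adding the head and tail bounds, the two copies of $\sum_{k \geq t}\rho^k/k!$ cancel, giving $\nh{a}{b}{\rho} - \sinh(\rho) \leq 2h_{err}(t,\delta)$ and hence $\nh{a}{b}{\rho} \leq 1$.

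The entire content sits in the head estimate: spotting that the raw comparison $|g_k| - 1/k! \leq s_k - g_k$ is exactly what (C1) and (C2) are designed to furnish, and noticing that summing it regroups — through the quantity $\nfin{a}{b}{\rho} - \sin(\rho)$ — into something controlled by the monotonicity of the hypergeometric coefficients, while the residual $\sum_{k\geq t}\rho^k/k!$ it leaves behind is precisely the quantity subtracted in the tail estimate. I do not expect a real obstacle beyond keeping these signs straight; the factor $2$ in $1 - 2h_{err}(t,\delta)$ is there exactly to provide one copy of $h_{err}(t,\delta)$ to absorb the tail of $\nh{a}{b}{\rho}$ and one copy to absorb the mismatch between the tails of $\nfin{a}{b}{\cdot}$ and $\sin$.
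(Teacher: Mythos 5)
Your argument is correct and, once unpacked, is essentially the paper's own proof: your term‑wise bound $|g_k| - 1/k! \le s_k - g_k$ is identical to the paper's $|g_k| = -g_k + \max\{2g_k,0\}$ together with $\max\{2g_k,0\}\le 2/k!$ (for $k\equiv 1\bmod 4$, via (C1)) and $\max\{2g_k,0\}\le 0$ (for $k\equiv 3\bmod 4$, via (C2)), your "external input" $\nfin{a}{b}{\rho}\ge\sin\rho$ is exactly (M4) applied at $(0,0)$, and the tail control via $\rho\le\delta$ is the same. The only cosmetic difference is that you subtract $\sinh\rho$ up front and split head/tail, whereas the paper runs a single chain of inequalities starting from the identity $\sum|g_k|\rho^k = -\nfin{a}{b}{\rho} + \sum\max\{2g_k,0\}\rho^k$; the bounds land in the same place.
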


\begin{proof}
    We have, 
    \begin{align*}
        &\quad~ \nh{a}{b}{\rho}  \\
        &= 
        \sum_{k\geq 1} |\ngc|\cdot \rho^k  \\
        &= 
        -\nfin{a}{b}{\rho} 
        ~+ 
        \sum_{k\geq 1} \max\{2\ngc,0\}\cdot \rho^k  \\
        &= 
        -\nfin{a}{b}{\rho} 
        ~+ 
        \sum_{\substack{1\leq k < t \\ k \text{ mod }4 \equiv 1}} \max\{2\ngc,0\}\cdot \rho^k  
        ~+~
        \sum_{k\geq t} \max\{2\ngc,0\} \cdot \rho^k
        &&(\text{by (C2)}) \\
        &\leq 
        -\nfin{a}{b}{\rho} 
        ~+ 
        \sum_{\substack{1\leq k < t \\ k \text{ mod } 4 \equiv 1}} \max\{2\ngc,0\}\cdot \rho^k  
        ~+~
        2 \, h_{err}(t,\rho) \\
        &\leq 
        -\nfin{a}{b}{\rho} 
        ~+ 
        \sin(\rho) + \sinh(\rho)
        ~+~
        2 \, h_{err}(t,\rho)
        &&(\text{by (C1)}) \\
        &\leq 
        -\nfin{a}{b}{\rho} 
        ~+ 
        \sin(\rho) + 1 + 2(h_{err}(t,\rho) - h_{err}(t,\delta))
        &&(\,\rho = \sinh^{-1}(1-2 h_{err}(t,\delta))\,) \\
        &\leq 
        -\nfin{a}{b}{\rho} 
        ~+ 
        \sin(\rho) + 1 
        &&(\rho\leq \delta) \\
        &\leq 
        -\nfin{0}{0}{\rho} 
        ~+ 
        \sin(\rho) + 1  
        &&(\text{\cref{monotonicity:properties} : (M4)})\\
        &= 
        1 
        &&(\nfin{0}{0}{\rho} = \sin(\rho))
    \end{align*}
\end{proof}

Thus, we obtain: 
\begin{theorem}
    For any $1\leq q \leq 2 \leq p \leq \infty$, let $a:=p^*-1,b=q-1$. Then for any $m,n\in \N$ and 
    $A\in \R^{m\times n}$,~~ 
    $\CP{A}/\norm{p}{q}{A} ~\leq~ 1/(\nhin{a}{b}{1}\cdot \gamma_{q}\,\gamma_{p^*})$ ~
    and moreover 
    \begin{itemize}
        \item $\nhin{1}{b}{1} = \nhin{a}{1}{1} = 1$.
        \item $\nhin{a}{b}{1}\geq \sinh^{-1}(1)/(1 + \eps_0)$ ~where $\eps_0 = 0.00863$. 
    \end{itemize}
\end{theorem}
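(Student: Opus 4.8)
The final statement has three parts. The first, $\CP{A}/\norm{p}{q}{A} \leq 1/(\nhin{a}{b}{1}\cdot \gamma_q\,\gamma_{p^*})$, is exactly \cref{defect:bound:implies:apx} (restated with $a = p^*-1$, $b = q-1$), so nothing new is needed there — I would simply invoke that lemma. The second bullet, $\nhin{1}{b}{1} = \nhin{a}{1}{1} = 1$, follows from the explicit Taylor coefficients in \cref{hypergeometric:representation}: when $a = 1$ (equivalently $p = 2$) the rising factorial $\RisingFactorial{(1-a)/2} = (0)_k$ vanishes for all $k \geq 1$, so $\nf{a}{b}{\rho} = \rho$ identically, hence $\alnfin{a}{b}$ is the identity, $\alnhin{a}{b}$ is the identity, and $\nhin{a}{b}{1} = 1$; the case $b = 1$ is symmetric. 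So the real content is the third bullet: $\nhin{a}{b}{1} \geq \sinh^{-1}(1)/(1+\eps_0)$ for all $a, b \in [0,1]$.

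The plan for the third bullet is to apply \cref{bound:defect}. That lemma says: if $t$ is odd, conditions (C1) and (C2) hold for all $k < t$, and $\rho = \sinh^{-1}(1 - 2\,h_{err}(t,\delta))$ for some $\delta \geq \rho$, then $\nh{a}{b}{\rho} \leq 1$, i.e. $\nhin{a}{b}{1} \geq \rho$. So I would choose a concrete odd $t$ (the text indicates a small constant, verified in Mathematica), check (C1)/(C2) for $k < t$, and then produce a quantitative bound on the tail sum $h_{err}(t,\delta)$ uniformly over $a, b \in [0,1]$. Concretely, the steps are: (i) from \cref{hypergeometric:representation}, read off $\ngc = [\rho^k]\,\nfin{a}{b}{\rho}$ via Taylor-series inversion of $\rho \cdot \hypergeometric((1-a)/2,(1-b)/2;3/2;\rho^2)$; (ii) prove an exponential-decay estimate $|\ngc| \leq C\cdot \beta^k$ for absolute constants $C, \beta < 1$, valid for all $a,b\in[0,1]$ — this is where the complex-analytic contour-integral machinery of \cref{bounding:coefficients} (in the spirit of Haagerup's argument) comes in, bounding the inverse function's coefficients by estimating a family of contour integrals; (iii) use this to bound $h_{err}(t,\delta) = \sum_{k\geq t}|\ngc|\delta^k \leq C\beta^t\delta^t/(1-\beta\delta)$ and solve the fixed-point equation $\rho = \sinh^{-1}(1 - 2h_{err}(t,\delta))$ with $\delta = \rho$, getting $\rho \geq \sinh^{-1}(1)/(1+\eps_0)$ with $\eps_0 = 0.00863$; (iv) conclude via \cref{bound:defect} and the monotonicity property (M4) of \cref{monotonicity:properties} which is already used inside that lemma.

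The main obstacle is step (ii): establishing a clean exponential decay bound on the Taylor coefficients of the inverse of a \emph{family} of hypergeometric functions, uniformly in the two parameters $a,b \in [0,1]$. The difficulty flagged in \cref{behavior:of:coefficients} is that, unlike the classical Grothendieck/Haagerup settings, there is no succinct closed form for $f^{-1}$, the sign pattern of its coefficients changes with $a$ (interpolating between the alternating pattern of $\sin$ and the eventually-non-positive pattern of Haagerup's function), and the coefficients are not monotone in $k$ or in the parameters. My approach would be to treat $\nf{a}{b}{\cdot}$ as an analytic function, locate the singularity of its inverse (which governs the decay rate $\beta$) by analyzing where $\tfrac{d}{d\rho}\nf{a}{b}{\rho}$ first vanishes or where $\nf{a}{b}{\rho}$ leaves the relevant domain, show this singularity stays bounded away from the unit disk uniformly over $(a,b) \in [0,1]^2$ using the integral representations from \cref{integral:representation:confHG}, and then extract coefficient bounds by a contour-integral (Cauchy-estimate) argument on a circle of radius slightly less than $1/\beta$. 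The conditions (C1), (C2) for small $k$ would be verified by direct symbolic computation of the first few inverted coefficients as functions of $a, b$ and checking the claimed inequalities on $[0,1]^2$. Since we only claim $\nhin{a}{b}{1} \geq \sinh^{-1}(1)/(1+\eps_0)$ rather than the conjectured $\sinh^{-1}(1)$, a somewhat lossy but robust tail bound suffices, which is what makes the argument go through without resolving the full conjecture on the sign pattern.
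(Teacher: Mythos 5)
Your plan matches the paper's proof essentially exactly: part one cites \cref{defect:bound:implies:apx}, part two is the observation that the rising factorial $\RisingFactorial{0}$ vanishes, forcing $\nf{a}{b}{\rho}=\rho$ when $a=1$ or $b=1$, and part three combines a Mathematica verification of (C1)/(C2) for $k\leq 29$ with the tail bound from \cref{inv:coeff:bound} fed into \cref{bound:defect} at $t=31$, $\delta=\sinh^{-1}(0.974203)$. One caveat on your step (ii): the bound that the paper actually uses uniformly over $a,b\in[0,1)$ is the \emph{polynomial} estimate $|\ngc|\leq 6.1831/k$; the exponential factor $(1+\eps)^{-k}$ in \cref{inv:coeff:bound} has $\eps=\eps(a,b)$ degenerating as $a,b\to 1$, so no absolute $\beta<1$ is established (or needed) --- the geometric decay in $h_{err}(31,\delta)$ comes from $\delta<1$, not from a uniform exponential coefficient bound, and the paper's contour argument expands the inversion-formula integral of $\nfext{a}{b}{z}^{-k}$ to a quarter-disk of radius $6$ rather than performing a Cauchy estimate around the inverse function's singularity as you suggest.
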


\begin{proof}
    The first inequality follows from \cref{defect:bound:implies:apx}. 
    As for the next item, If $p=2$ or $q=2$ ~(\ie $a=1$ or $b=1$) we are trivially done since 
    $\nhin{a}{b}{\rho} = \rho$ in that case (since for $k\geq 1$,~ $\RisingFactorial{0}=0$). 
    So we may assume that $a,b\in [0,1)$. 
    
    We are left with proving the final part of the claim. 
    Now using Mathematica we verify (exactly)\footnote{
        We generated $\ngc$ as a polynomial in $a$ and $b$ and maximized it over $a,b\in [0,1]$ using 
        the Mathematica ``Maximize'' function which is exact for polynomials.}
    that $(C1)$ and $(C2)$ are true for $k\leq 29$. Now let $\delta=\sinh^{-1}(0.974203)$. 
    Then by \cref{inv:coeff:bound} (which states that $\ngc$ decays exponentially and will be proven in the 
    subsequent section), 
    \[
        h_{err}(31,\delta) :=  \sum_{k\geq 31} |\ngc|\cdot d^{k} 
        \leq  \frac{6.1831}{31}\cdot \frac{\delta^{31}}{1-\delta^2} 
        \leq  0.0128991\dots \mper
    \] 
    Now by \cref{bound:defect} we know $\nhin{a}{b}{1} \geq \sinh^{-1}(1-2h_{err}(31,\delta))$. 
    Thus, \\$\nhin{a}{b}{1}\geq \sinh^{-1}(0.974202) \geq \sinh^{-1}(1)/(1+\eps_0)$ for 
    $\eps_0 = 0.00863$, which completes the proof. 
\end{proof}

\subsection{Bounding Inverse Coefficients}
\label[subsection]{bounding:coefficients}
In this section we prove that $\ngc$ decays as $1/c^k$ for some $c=c(a,b)>1$, proving 
\cref{inv:coeff:bound}. 
Throughout this section we assume $1\leq p^*,q <2$, and $a=p^*-1,~b=q-1$ (\ie $a,b\in [0,1)$).
Via the power series representation, $\nf{a}{b}{z}$ can be analytically continued to the unit complex disk. 
Let $\nfin{a}{b}{z}$ be the inverse of $\nf{a}{b}{z}$ and recall $\ngc$ denotes its $k$-th Taylor coefficient. 

We begin by stating a standard identity from complex analysis that provides a convenient contour integral 
representation of the Taylor coefficients of the inverse of a function. We include a proof for completeness. 
\begin{lemma}[Inversion Formula]
\label[lemma]{residue} 
    There exists $\delta>0$, such that for any odd $k$, 
    \begin{equation}
    \label[equation]{inversion:formula}
        \ngc = \frac{2}{\pi k} \,\Im\!\inparen{\int_{C^+_\delta} \nf{a}{b}{z}^{-k}\,dz}
    \end{equation}
    where $C^+_\delta$ denotes the first quadrant quarter circle of radius 
    $\delta$ with counter-clockwise orientation. 
\end{lemma}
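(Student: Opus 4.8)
The statement is a standard consequence of Lagrange inversion together with the symmetries of $f := \nf{a}{b}{\cdot}$, so the plan is to assemble these ingredients carefully. First I would record the analytic input: by \cref{hypergeometric:representation} the function $f$ has a power series $f(\rho)=\sum_{j\ge 0}c_{2j+1}\rho^{2j+1}$ supported on odd powers, with real (indeed non-negative, by \cref{monotonicity:properties}) coefficients, $c_1=1$, and radius of convergence at least $1$ since the coefficients $\frac{((1-a)/2)_j\,((1-b)/2)_j}{(3/2)_j\,j!}$ are bounded. Hence $f$ is holomorphic on the unit disk, $f(0)=0$, $f'(0)=1\ne 0$, so there is a $\delta_0>0$ with $f$ biholomorphic from a neighbourhood of $\overline{D_{\delta_0}}$ onto its image and $f(z)\ne 0$ for $0<|z|\le\delta_0$. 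Fix any $0<\delta<\delta_0$; this is the $\delta$ in the statement. Let $g:=f^{-1}$, holomorphic near $0$ with $[\rho^k]g=\ngc$ (which vanishes for even $k$ by oddness, so only odd $k$ are of interest).

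Second, I would derive the closed contour formula. Writing $\ngc=\frac{1}{2\pi i}\oint_{|w|=\eps}\frac{g(w)}{w^{k+1}}\,dw$ for small $\eps$ and substituting $w=f(z)$ (so $g(w)=z$, $dw=f'(z)\,dz$), the $w$-circle pulls back under $f^{-1}$ to a simple closed curve $\gamma\subset D_{\delta_0}\setminus\{0\}$ winding once around $0$, and $\ngc=\frac{1}{2\pi i}\oint_{\gamma}\frac{z\,f'(z)}{f(z)^{k+1}}\,dz$. Integration by parts, using $\frac{z\,f'(z)}{f(z)^{k+1}}=-\frac1k\,z\,\frac{d}{dz}\!\inparen{f(z)^{-k}}$ and the vanishing of the boundary term on the closed curve $\gamma$, gives $\ngc=\frac{1}{2\pi i k}\oint_{\gamma}f(z)^{-k}\,dz$. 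Since $f(z)^{-k}$ is holomorphic on $D_{\delta_0}\setminus\{0\}$, Cauchy's theorem lets me replace $\gamma$ by the circle $|z|=\delta$ (the two curves are homotopic in the punctured disk), yielding $\ngc=\frac{1}{2\pi i k}\oint_{|z|=\delta}f(z)^{-k}\,dz$.

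Third, I would exploit the symmetries of $h(z):=f(z)^{-k}$. Oddness of $f$ and oddness of $k$ give $h(-z)=-h(z)$; real coefficients give $h(\bar z)=\overline{h(z)}$, hence $h(-\bar z)=-\overline{h(z)}$. Parametrizing $|z|=\delta$ by $z(\theta)=\delta e^{i\theta}$ and splitting $\oint_{|z|=\delta}$ into the four quarter-arc integrals $I_1,\dots,I_4$ over $\theta\in[0,\tfrac\pi2],[\tfrac\pi2,\pi],[\pi,\tfrac{3\pi}2],[\tfrac{3\pi}2,2\pi]$, the substitution $\theta\mapsto\theta+\pi$ with $h(-z)=-h(z)$ shows $I_3=I_1$ and $I_4=I_2$, while $\theta\mapsto\pi-\theta$ with $h(-\bar z)=-\overline{h(z)}$ shows $I_2=-\overline{I_1}$. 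Therefore $\oint_{|z|=\delta}h=2\inparen{I_1-\overline{I_1}}=4i\,\Im(I_1)$ with $I_1=\int_{C^+_\delta}f(z)^{-k}\,dz$, and substituting back gives $\ngc=\frac{1}{2\pi i k}\cdot 4i\,\Im(I_1)=\frac{2}{\pi k}\,\Im\!\inparen{\int_{C^+_\delta}\nf{a}{b}{z}^{-k}\,dz}$, as claimed.

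The only points needing care — the ``main obstacles,'' though routine rather than deep — are (i) choosing $\delta$ so that $f$ has no zero other than $0$ in $\overline{D_\delta}$ and that the pulled-back contour $\gamma$ is legitimately homotopic to $|z|=\delta$ in the punctured disk, and (ii) keeping track of orientations and conjugations in the four-arc reduction. No analytic facts about $\nf{a}{b}{\cdot}$ beyond \cref{hypergeometric:representation} and \cref{monotonicity:properties} are needed.
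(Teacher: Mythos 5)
Your proof is correct and follows essentially the same route as the paper's: Cauchy's integral formula for the $k$-th coefficient of $\nfin{a}{b}{\cdot}$, the change of variables $w=\nf{a}{b}{z}$, an integration by parts on a closed curve to eliminate the derivative factor, and then the oddness of $\nf{a}{b}{z}^{-k}$ together with $\overline{\nf{a}{b}{z}}=\nf{a}{b}{\overline{z}}$ to collapse the full circle to the quarter arc $C^+_\delta$. One small simplification on your side: you stay entirely inside the disk $D_{\delta_0}$ where the biholomorphism already guarantees $\nf{a}{b}{z}\ne 0$ for $z\ne 0$, so you never need the paper's \cref{no:roots} to assert holomorphy of $z/\nf{a}{b}{z}^k$; the paper instead invokes that fact to justify the integration-by-parts step on $C_\delta$ in the full punctured unit disk (it is also used in the later contour-expansion argument, which is why the paper states it at that strength). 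The paper also reaches the imaginary part slightly differently — first noting $\ngc$ is real so $\ngc = \frac{1}{2\pi k}\Im(\oint_{C_\delta})$ and then using the symmetries to replace $C_\delta$ by $4\,C^+_\delta$ — whereas you decompose the circle integral directly into four arcs; these are the same calculation in different order.
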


\begin{proof}
    Via the power series representation, $\nf{a}{b}{z}$ can be analytically continued to the unit complex disk. 
    Thus by inverse function theorem for holomorphic functions, there exists $\delta_0 \in (0,1]$ such that 
    $\nf{a}{b}{z}$ has an analytic inverse in the open disk $|z|<\delta_0$. So for $\delta \in (0,\delta_0)$, 
    $\nf{a}{b}{C_\delta}$ is a simple-closed curve with winding number $1$ (where $C_\delta$ is the complex 
    circle of radius $\delta$ with the usual counter-clockwise orientation). Thus, by Cauchy's integral formula 
    we have:
    \[
        \ngc 
        ~=~
        \frac{1}{2\pi i}\,\int_{\nf{a}{b}{C_\delta}} \frac{\nfin{a}{b}{w}}{w^{\,k}}\,dw
        ~~=~ 
        \frac{1}{2\pi i}\,\int_{C_\delta} \,\frac{z \cdot \alnf{a}{b}'(z)}{\nf{a}{b}{z}^{k+1}} \,\,dz
    \]
    where the second equality follows from substituting $w = \nf{a}{b}{z}$. 
    
    Now by \cref{no:roots},~\, $z/\nf{a}{b}{z}^k$~ is holomorphic on the open set $|z|\in (0,1)$, which 
    contains $C_{\delta}$. Hence, by the fundamental theorem of contour integration we have 
    \[
        \int_{C_{\delta}} \frac{d}{dz}\inparen{\frac{z}{\nf{a}{b}{z}^{k}}}\,dz 
        ~=~ 
        0 
        \quad~~ \Rightarrow \quad
        \int_{C_{\delta}} \,\frac{z \cdot \alnf{a}{b}'(z)}{\nf{a}{b}{z}^{k+1}} \,\,dz
        ~=~
        \frac{1}{k}\,\int_{C_{\delta}} \,\frac{1}{\nf{a}{b}{z}^{k}}\,\,dz 
    \]
    So we get, 
    \[
        \ngc 
        ~=~
        \frac{1}{2\pi i k}\,\int_{C_\delta} \nf{a}{b}{z}^{-k}\,dz
        ~~=~ 
        \frac{1}{2\pi k}\,
        \Im\!\inparen{
        \int_{C_\delta} \nf{a}{b}{z}^{-k}\,dz
        }
    \]
    where the second equality follows since $\ngc$ is purely real. Lastly, we complete the proof of the 
    claim by using the fact that for odd $k$,~ $\nf{a}{b}{z}^{-k}$ is odd and that 
    $\overline{\nf{a}{b}{z}} = \nf{a}{b}{\overline{z}}$.
\end{proof}

We next state a standard bound on the magnitude of a contour integral that we will use in our analysis. 
\begin{fact}[ML-inequality]
\label[fact]{ML:inequality}
    If $f$ is a complex valued continuous function on a contour $\Gamma$ and $|f(z)|$ is bounded by $M$ for 
    every $z\in \Gamma$, then 
    \[
        \abs{\int_{\Gamma}f(z)} \leq M\cdot \ell(\Gamma)
    \]
    where $\ell(\Gamma)$ is the length of $\Gamma$. 
\end{fact}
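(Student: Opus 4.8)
The plan is to reduce the estimate to the triangle inequality for complex-valued integrals over a real interval, via a parametrization of $\Gamma$. First I would fix a piecewise-$C^1$ parametrization $\gamma \colon [\alpha,\beta] \to \CC$ of the contour $\Gamma$; if $\Gamma$ is a concatenation of finitely many smooth arcs (as is the case for $C^+_\delta$ in~\cref{residue}) one simply runs the argument on each arc and adds the resulting bounds, since both sides are additive over the decomposition. By definition of the contour integral and of arc length we then have
\[
  \int_{\Gamma} f(z)\,dz ~=~ \int_{\alpha}^{\beta} f(\gamma(t))\,\gamma'(t)\,dt
  \qquad\text{and}\qquad
  \ell(\Gamma) ~=~ \int_{\alpha}^{\beta} \abs{\gamma'(t)}\,dt\mper
\]

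The one nontrivial ingredient is the following: for any piecewise-continuous $g\colon[\alpha,\beta]\to\CC$ one has $\abs{\int_{\alpha}^{\beta} g(t)\,dt} \le \int_{\alpha}^{\beta} \abs{g(t)}\,dt$. I would prove this by the standard rotation trick: write $\int_{\alpha}^{\beta} g(t)\,dt = r e^{i\theta}$ with $r \ge 0$ and $\theta\in\RR$; then $r = e^{-i\theta}\int_{\alpha}^{\beta} g(t)\,dt = \int_{\alpha}^{\beta} e^{-i\theta} g(t)\,dt$, and since $r$ is a nonnegative real number it equals its own real part, so
\[
  r ~=~ \real{\int_{\alpha}^{\beta} e^{-i\theta} g(t)\,dt}
  ~=~ \int_{\alpha}^{\beta} \real{e^{-i\theta} g(t)}\,dt
  ~\le~ \int_{\alpha}^{\beta} \abs{e^{-i\theta} g(t)}\,dt
  ~=~ \int_{\alpha}^{\beta} \abs{g(t)}\,dt\mcom
\]
using that $\real{w}\le\abs{w}$ pointwise and monotonicity of the real integral.

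Finally I would apply this with $g(t) = f(\gamma(t))\,\gamma'(t)$, which is piecewise-continuous because $f$ is continuous on $\Gamma$ and $\gamma$ is piecewise-$C^1$, and invoke the hypothesis $\abs{f(z)}\le M$ for $z\in\Gamma$:
\[
  \abs{\int_{\Gamma} f(z)\,dz}
  ~\le~ \int_{\alpha}^{\beta} \abs{f(\gamma(t))}\,\abs{\gamma'(t)}\,dt
  ~\le~ M \int_{\alpha}^{\beta} \abs{\gamma'(t)}\,dt
  ~=~ M\cdot \ell(\Gamma)\mper
\]
This completes the argument. The main obstacle is purely the complex triangle-inequality step above: unlike the real case $\abs{\int g}\le\int\abs{g}$ is not immediate for complex-valued $g$ and genuinely requires the phase-cancellation device; once it is established, the remainder is just unwinding the definitions of $\int_\Gamma$ and $\ell(\Gamma)$.
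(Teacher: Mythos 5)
Your proof is correct and is exactly the standard textbook argument for the ML-inequality, which the paper simply cites as a known fact without proof. The parametrization, the rotation trick for the complex triangle inequality $\abs{\int g} \le \int \abs{g}$, and the final bound are all handled properly, including the remark on additivity over a piecewise-$C^1$ decomposition, which is indeed what is needed for the contours $C^+_\delta$ and $P(\alpha,\eps)$ used later.
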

Unfortunately the integrand in \cref{inversion:formula} can be very large for small $\delta$, and we cannot 
use the ML-inequality as is. To fix this, we modify the contour of integration (using Cauchy's integral 
theorem) so that the imaginary part of the integral vanishes when restricted to the sections close to 
the origin, and the integrand is small in magnitude on the sections far from the origin (thus allowing us to 
use the ML-inequality). To do this we will need some preliminaries. 

$\nf{a}{b}{z}$ is defined on the closed complex unit disk. The domain is analytically extended to 
the region $\C\setminus ((-\infty,-1) \cup (1,\infty))$, using the Euler-type integral representation of the hypergeometric function. 
\[
    \nfext{a}{b}{z} 
    := 
    \betaterm^{-1}\cdot 
    \intfull{z}
\]
where $\mathrm{B}(\tau_1,\tau_2)$ is the beta function and 
\[
    \intfull{z}
    :=  
    z \int_{0}^{1} \frac{(1-t)^{b/2} \,dt}{t^{(1+b)/2}\cdot (1-z^2 t)^{(1-a)/2}}.
\]

\begin{fact}
\label[fact]{no:roots}
    For any $a_1>0$, $\hypergeometric(a_1,b_1,c_1,z)$ has no non-zero roots in the region 
    $\C\setminus  (1,\infty)$. This implies that if $p^*<2$,~ $\nfext{a}{b}{z}$ has no non-zero roots in 
    the region $\C\setminus ((-\infty,-1) \cup (1,\infty))$. 
\end{fact}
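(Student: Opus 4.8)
The plan is to first reduce the second assertion to the first and then prove the first via the Euler integral representation of ${}_{2}F_{1}$. For the reduction: by \cref{hypergeometric:representation} we have $\nf{a}{b}{z}=z\cdot\hypergeometric\inparen{\tfrac{1-a}{2},\tfrac{1-b}{2};\tfrac32;z^{2}}$ on the unit disk, and comparing the integrand of $\intfull{\cdot}$ with Euler's integral after the substitution $w=z^{2}$ shows that $\nfext{a}{b}{z}=\betaterm^{-1}\,\intfull{z}$ is exactly the analytic continuation of $z\mapsto z\cdot\hypergeometric\inparen{\tfrac{1-a}{2},\tfrac{1-b}{2};\tfrac32;z^{2}}$ to $\CC\setminus\inparen{(-\infty,-1)\cup(1,\infty)}$. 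The squaring map sends $\CC\setminus\inparen{(-\infty,-1)\cup(1,\infty)}$ onto $\CC\setminus(1,\infty)$, and $a_{1}:=\tfrac{1-a}{2}>0$ precisely when $p^{*}<2$; since $\hypergeometric(a_{1},b_{1};c_{1};0)=1$, the zero $z=0$ of $\nfext{a}{b}{\cdot}$ is simple, so once we know that $\hypergeometric$ with $a_{1}>0$, $b_{1}=\tfrac{1-b}{2}\in(0,\tfrac12]$, $c_{1}=\tfrac32$ is zero-free on $\CC\setminus(1,\infty)$, the only zero of $\nfext{a}{b}{\cdot}$ in $\CC\setminus\inparen{(-\infty,-1)\cup(1,\infty)}$ is $z=0$, which is the claim.

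For the zero-freeness of $\hypergeometric$: since $0<b_{1}\le\tfrac12<\tfrac32=c_{1}$, Euler's integral gives for every $w\in\CC\setminus[1,\infty)$
\[
\hypergeometric(a_{1},b_{1};c_{1};w)=\frac{1}{\mathrm{B}(b_{1},c_{1}-b_{1})}\int_{0}^{1}t^{b_{1}-1}(1-t)^{c_{1}-b_{1}-1}(1-wt)^{-a_{1}}\,dt,
\]
with $(1-wt)^{-a_{1}}$ the principal branch, which is well defined since $1-wt\notin(-\infty,0]$ for $t\in[0,1]$ when $w\notin[1,\infty)$. As $t$ runs over $[0,1]$ the point $1-wt$ moves along a straight segment avoiding the origin, so $\Im\log(1-wt)$ is monotone in $t$ (its $t$-derivative equals $-\Im(w)/\abs{1-wt}^{2}$, of constant sign) and runs from $0$ to $\arg(1-w)$, an interval of length $\abs{\arg(1-w)}<\pi$. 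Hence $\arg\!\inparen{(1-wt)^{-a_{1}}}=-a_{1}\,\Im\log(1-wt)$ ranges over an interval of length $<a_{1}\pi\le\pi/2$, so there is $\psi\in\RR$ with $\Re\!\inparen{e^{-i\psi}(1-wt)^{-a_{1}}}>0$ for all $t\in[0,1]$; integrating against the strictly positive weight $t^{b_{1}-1}(1-t)^{c_{1}-b_{1}-1}$ yields $\Re\!\inparen{e^{-i\psi}\hypergeometric(a_{1},b_{1};c_{1};w)}>0$, so $\hypergeometric(a_{1},b_{1};c_{1};w)\neq0$. Finally, at $w=1$ we have $c_{1}-a_{1}-b_{1}=\tfrac{1+a+b}{2}>0$, so Gauss's evaluation $\hypergeometric(a_{1},b_{1};c_{1};1)=\Gamma(c_{1})\Gamma(c_{1}-a_{1}-b_{1})/\inparen{\Gamma(c_{1}-a_{1})\Gamma(c_{1}-b_{1})}$ is also nonzero, completing the proof that $\hypergeometric$ has no roots in $\CC\setminus(1,\infty)$.

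The only delicate point is the half-plane estimate, and it is where the parameter range is used: bounding the angular spread of $(1-wt)^{-a_{1}}$ by a quantity $<\pi$ needs $a_{1}\le1$, which is comfortably met here since $a_{1}=\tfrac{1-a}{2}\le\tfrac12$; for unrestricted $a_{1}>0$ this crude argument breaks, and one would instead appeal to the classical results on the zeros of ${}_{2}F_{1}$ in the cut plane. The remaining ingredient — matching the branch of $(1-wt)^{-a_{1}}$ used in Euler's integral with the continuous branch described above, and checking that $\intfull{\cdot}$ indeed analytically continues $z\,\hypergeometric\inparen{\tfrac{1-a}{2},\tfrac{1-b}{2};\tfrac32;z^{2}}$ — is routine.
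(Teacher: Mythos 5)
The paper gives no proof of this statement; it is labeled a \emph{Fact} and invoked as a black box, so there is no internal argument to compare against. Your proof is correct, self-contained, and in my view an improvement over the bare citation.

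Two remarks. First, the half-plane argument is clean and checks out: the identity $\frac{d}{dt}\Im\log(1-wt)=-\Im(w)/\abs{1-wt}^2$ gives monotonicity of the argument of $1-wt$ along the segment, the total angular spread of $(1-wt)^{-a_1}$ is $a_1\abs{\arg(1-w)}<a_1\pi\le\pi/2$ (using $a_1\le 1/2$), and a single rotation by $e^{-i\psi}$ then places the integrand in the open right half-plane; positivity of the Euler weight $t^{b_1-1}(1-t)^{c_1-b_1-1}$ (which requires $c_1>b_1>0$, satisfied here) finishes the nonvanishing. The match between $\intfull{\cdot}$ and the Euler integral after $w=z^2$ is also correctly verified: $b_1-1=-(1+b)/2$, $c_1-b_1-1=b/2$, and the Beta-prefactor equals $\betaterm^{-1}$, so $\nfext{a}{b}{z}=z\cdot\hypergeometric((1-a)/2,(1-b)/2;3/2;z^2)$ is indeed the analytic continuation, and the squaring map is a surjection $\CC\setminus\inparen{(-\infty,-1)\cup(1,\infty)}\to\CC\setminus(1,\infty)$, so the reduction goes through.

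Second — and this is worth flagging — the paper's phrasing \emph{``for any $a_1>0$''} overstates what the application needs and almost certainly overstates what is true. Your argument caps the angular spread by $a_1\pi$, so it requires $a_1<1$ (with $a_1=1$ needing a boundary-case check), and you say so explicitly. The paper only ever uses $a_1=(1-a)/2\in(0,1/2]$ (from $p^*<2$), so the Fact as applied is covered by your proof, but as stated the Fact is not. This is a defect in the paper's statement, not in your argument.
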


We are now equipped to expand the contour. Our choice of contour is inspired by that of
Haagerup~\cite{Haagerup81} 
which he used in deriving an upper bound on the complex Grothendieck constant. The contour we choose 
has some differences for technical reasons related to the region to which hypergeometric 
functions can be analytically extended. The analysis is quite different from that of Haagerup 
since the functions in consideration behave differently. In fact the inverse function Haagerup considers 
has polynomially decaying coefficients while the class of inverse functions we consider have coefficients 
that have decay between exponential and factorial. 

\begin{observation}[Expanding Contour]
    For any $\alpha \geq 1$ and $\eps>0$, let $P(\alpha,\eps)$ be the four-part curve (see \cref{contour}) 
    given by 
    \begin{itemize}
        \item the line segment $\delta~ \rightarrow ~(1-\eps)$, 
        
        \item the line segment $(1-\eps)~ \rightarrow ~ (\sqrt{\alpha-\eps} + i\sqrt{\eps})$ 
        (henceforth referred to as $L_{\alpha,\eps}$), 
        
        \item the arc along $C^+_\alpha$ starting at $(\sqrt{\alpha-\eps} + i\sqrt{\eps})$ and ending at $i\alpha$ 
        (henceforth referred to as $C^+_{\alpha,\eps}$), 
        
        \item the line segment $i\alpha~ \rightarrow ~i\delta$. 
    \end{itemize}
    By Cauchy's integral theorem, combining \cref{residue} with \cref{no:roots} yields that for odd $k$, 
    \[
        \ngc = \frac{2}{\pi k}\, \Im\!\inparen{\int_{P(\alpha,\eps)} \nfext{a}{b}{z}^{-k}\,dz}
    \]
\end{observation}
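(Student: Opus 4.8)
The statement is a contour-deformation consequence of Cauchy's integral theorem, so the plan is to exhibit a closed contour across which the integrand $\nfext{a}{b}{z}^{-k}$ is holomorphic. First I would take $\delta$ as in \cref{residue}, shrinking it if necessary so that $0 < \delta < 1-\eps$ and $\delta < \alpha$; then $P(\alpha,\eps)$ genuinely runs from $\delta$ to $i\delta$, so it has the same endpoints as the quarter-circle $C^+_\delta$. Since $|z| = \delta < 1$ on $C^+_\delta$, the power-series function $\nf{a}{b}{z}$ agrees there with its analytic continuation $\nfext{a}{b}{z}$, so the integral in \cref{residue} may be rewritten with $\nfext{a}{b}{z}^{-k}$ in place of $\nf{a}{b}{z}^{-k}$.

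Next I would let $\Gamma$ be the closed contour consisting of $C^+_\delta$ followed by $P(\alpha,\eps)$ traversed in reverse, and let $D$ be the compact region it bounds. Since $C^+_\delta$ and the four pieces of $P(\alpha,\eps)$ all lie in the closed first quadrant, so does $\overline{D}$; and $0 \notin \overline{D}$, since $0 \notin \Gamma$ and $0$ is joined to $\infty$ along the non-positive real axis, which $\Gamma$ never meets. By \cref{no:roots} (using $p^* < 2$), $\nfext{a}{b}{\cdot}$ is holomorphic and nonvanishing on $\C \setminus ((-\infty,-1) \cup (1,\infty))$ away from $z=0$, so $\nfext{a}{b}{z}^{-k}$ is holomorphic on $\overline{D}$ once we know $\overline{D}$ avoids $(1,\infty)$ — which, together with $0 \notin \overline{D}$, is the only way $\overline{D}$ could meet the singular set of $\nfext{a}{b}{\cdot}^{-k}$. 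The only portion of $\partial D$ on the positive real axis is the segment $[\delta, 1-\eps]$, disjoint from $[1,\infty)$, and near the real axis $\overline{D}$ is otherwise contained in the open upper half-plane because $L_{\alpha,\eps}$ leaves the real axis immediately — its imaginary part vanishes only at its initial point $1-\eps < 1$. This is precisely the role of the detour $(1-\eps) \to (\sqrt{\alpha-\eps}+i\sqrt{\eps})$ in $P(\alpha,\eps)$: it rounds the branch point $z=1$, which would otherwise be enclosed once $\alpha \geq 1$.

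Finally, Cauchy's integral theorem gives $\int_\Gamma \nfext{a}{b}{z}^{-k}\,dz = 0$, hence $\int_{C^+_\delta} \nfext{a}{b}{z}^{-k}\,dz = \int_{P(\alpha,\eps)} \nfext{a}{b}{z}^{-k}\,dz$; combining this with \cref{residue} and the agreement $\nfext{a}{b}{z} = \nf{a}{b}{z}$ on $|z| = \delta$ yields the claimed identity for odd $k$. The only step requiring genuine care is the geometric bookkeeping of the second paragraph — rigorously confirming that $\Gamma$ bounds a region disjoint from $\{0\} \cup (1,\infty)$ — but this is straightforward from the explicit polygonal-plus-arc description of $P(\alpha,\eps)$, which was chosen precisely to skirt the branch point at $1$.
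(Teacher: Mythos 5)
Your proposal is correct and fills in exactly the argument the paper leaves implicit in its one-sentence justification: replace $\nf{a}{b}{\cdot}$ by $\nfext{a}{b}{\cdot}$ on $C^+_\delta$ (valid since $\delta<1$), close the contour $C^+_\delta \cup (-P(\alpha,\eps))$, verify via \cref{no:roots} that the enclosed region avoids both the pole at $0$ and the branch cut $(1,\infty)$ — the latter being precisely what the small detour $L_{\alpha,\eps}$ is for — and apply Cauchy's theorem. This is the same approach as the paper.
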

\begin{figure}
\caption{The Contour $P(\alpha,\eps)$}
\label[figure]{contour}
\includegraphics{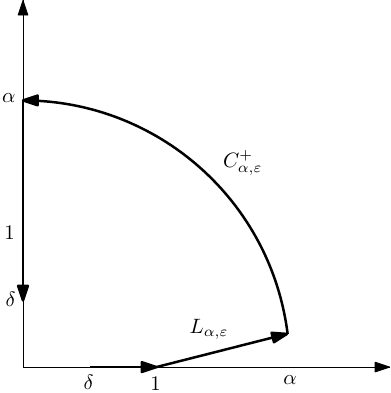}
\centering
\end{figure}

We will next see that the imaginary part of our contour integral vanishes on section of $P(\alpha,\eps)$. 
Applying ML-inequality to the remainder of the contour, combined with lower bounds on 
$|\nfext{a}{b}{z}|$ ~(proved below the fold in \cref{largeness:of:f}), allows us to derive an exponentially 
decaying upper bound on $|\ngc|$. 
\begin{lemma}
\label[lemma]{inv:coeff:bound}
    For any $1\leq p^*,q<2$,  there exists $\eps>0$ such that 
    \[
        |\ngc| \leq \frac{6.1831}{k(1+\eps)^{k}}.
    \]
\end{lemma}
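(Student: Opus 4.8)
The plan is to estimate $\ngc$ directly from the contour‑integral representation supplied by the Expanding Contour observation,
\[
\ngc \;=\; \frac{2}{\pi k}\,\Im\!\inparen{\int_{P(\alpha,\eps)} \nfext{a}{b}{z}^{-k}\,dz},
\]
valid for every $\alpha\ge 1$ and every sufficiently small $\eps,\delta>0$, and then to bound the right–hand side by the ML–inequality after discarding the two straight pieces of $P(\alpha,\eps)$ that contribute nothing to the imaginary part. Here $a=p^*-1$ and $b=q-1$, both lying in $[0,1)$ by hypothesis.

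First I would establish the axis cancellation. On the real segment from $\delta$ to $1-\eps$ the function $\nfext{a}{b}{z}=\nf{a}{b}{z}$ has real Taylor coefficients, hence is real there, so $\nfext{a}{b}{z}^{-k}\,dz$ is real. On the imaginary segment from $i\alpha$ to $i\delta$, oddness of $\nf{a}{b}$ together with $\overline{\nfext{a}{b}{z}}=\nfext{a}{b}{\bar z}$ forces $\nfext{a}{b}{it}$ to be purely imaginary for real $t$; for odd $k$ this makes $\nfext{a}{b}{it}^{-k}$ purely imaginary (since $i^{-k}\in\{i,-i\}$), and because $dz=i\,dt$ the product $\nfext{a}{b}{z}^{-k}\,dz$ is again real. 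One also checks $\nfext{a}{b}$ has no zero on these segments ($\nf{a}{b}{x}\ge x>0$ on $(\delta,1-\eps)$, and \cref{no:roots} rules out zeros on the imaginary axis away from the origin), so that the deformation from $C^+_\delta$ to $P(\alpha,\eps)$ and the formula of \cref{residue} are legitimate. Hence only $L_{\alpha,\eps}$ and $C^+_{\alpha,\eps}$ survive, and
\[
|\ngc|\;\le\;\frac{2}{\pi k}\,\bigl(\ell(L_{\alpha,\eps})+\ell(C^+_{\alpha,\eps})\bigr)\cdot\!\!\max_{z\in L_{\alpha,\eps}\cup C^+_{\alpha,\eps}}\!|\nfext{a}{b}{z}|^{-k}.
\]

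At this point I would invoke the largeness estimate \cref{largeness:of:f}: for a suitable choice of $\alpha\ge 1$ and $\eps>0$ depending only on $p^*,q$, one has $|\nfext{a}{b}{z}|\ge 1+\eps$ for all $z$ on $L_{\alpha,\eps}\cup C^+_{\alpha,\eps}$. Since the total length $\ell(L_{\alpha,\eps})+\ell(C^+_{\alpha,\eps})$ is then bounded by an absolute constant, the last display yields $|\ngc|\le \frac{C}{k}(1+\eps)^{-k}$ with $C=\frac{2}{\pi}\bigl(\ell(L_{\alpha,\eps})+\ell(C^+_{\alpha,\eps})\bigr)$; carrying out the arc–length computation for the contour fixed in \cref{largeness:of:f} produces the explicit constant $6.1831$, completing the proof. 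The only care needed within this deduction is the parity bookkeeping above and the verification that the integrand stays holomorphic on the region swept between $C^+_\delta$ and $P(\alpha,\eps)$.

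The real difficulty of this part of the paper is therefore \cref{largeness:of:f}, and that is the main obstacle. Near $z=1$ it is soft: $\nfext{a}{b}$ extends continuously with $\nfext{a}{b}{1}=\hypergeometric\!\inparen{\frac{1-a}{2},\frac{1-b}{2};\frac{3}{2};1}=\frac{\Gamma(3/2)\,\Gamma(\frac{1+a+b}{2})}{\Gamma(1+\frac a2)\,\Gamma(1+\frac b2)}>1$ strictly for $a,b<1$ (the cubic coefficient $(1-a)(1-b)/6$ of $\nf{a}{b}$ being positive), so $L_{\alpha,\eps}$ and the start of the arc are controlled by continuity once $\eps$ is small and $\alpha$ close to $1$. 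The hard region is the rest of the arc, out to modulus $\approx\alpha$ and down to the imaginary axis, where bare modulus bounds on the Euler integral $\nfext{a}{b}{z}=\betaterm^{-1}\,z\int_0^1(1-t)^{b/2}\,t^{-(1+b)/2}(1-z^2t)^{-(1-a)/2}\,dt$ are too lossy — even for $a=b=0$, where $\nfext{0}{0}=\sin^{-1}$, the naive estimate fails although $|\sin^{-1}|>1$ does hold on circles of radius $\gtrsim\sinh 1$. The way through is to use that for $\arg z\ge\pi/4$ the factor $1-z^2t$ has positive real part, so the integrand lies in a fixed cone about the positive reals and the integral cannot cancel; together with a lower bound on $|z|$ and a matching choice of $\alpha$ as a function of $(a,b)$ this pins $|\nfext{a}{b}{z}|$ above $1$. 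Making all of this uniform over $a,b\in[0,1)$ — where the margin $\nfext{a}{b}{1}-1$ degenerates as $a,b\to 1$ and one must let $\eps\to 0$ accordingly — is exactly what \cref{largeness:of:f} is designed to carry out.
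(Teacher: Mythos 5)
Your proposal reconstructs the paper's argument faithfully: pass to the contour–integral formula of \cref{residue}, expand the contour to $P(\alpha,\eps)$ using \cref{no:roots}, observe that the two axis segments contribute nothing to the imaginary part (exactly the paper's remark that $\nfext{a}{b}{z}$ is real on $(0,1)$ by the power series and purely imaginary on $i\R_{>0}$ by the Euler integral), and then bound the remainder by ML against the largeness estimates \cref{|f(z)|:incr:in:|z|} and \cref{close:to:real:contour}. Your parity bookkeeping on the imaginary axis for odd $k$ is also exactly the reason the contribution there vanishes, and you correctly locate the technical weight of the lemma in the two sub-lemmas and in \cref{reim:monotonicity}.

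The one place you drift from the paper is in how $\alpha$ is chosen, and it matters. You first suggest $\alpha$ close to $1$ so that $L_{\alpha,\eps}$ and the start of the arc are controlled by continuity near $z=1$, and later say $\alpha$ should be a function of $(a,b)$. Neither is what happens, and the first would actually break the argument: with $\alpha$ near $1$ the arc $C^+_{\alpha,\eps}$ sweeps through a neighborhood of $z=i$, where $|\nfext{a}{b}{z}|<1$ (the paper points this out explicitly as motivation for \cref{close:to:real:contour}), so ML gives nothing. What the paper does is fix the arc radius at the universal constant $6$: \cref{|f(z)|:incr:in:|z|} shows $|\nfext{a}{b}{z}|>1$ whenever $|z|\geq 6$, uniformly over $a,b\in[0,1)$ (the $1.003$ in \cref{mathematica:fact:beta} is a uniform margin), so no tuning of $\alpha$ to $(a,b)$ is needed. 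The segment $L_{\alpha,\eps}$ is not controlled by being near $z=1$; it is controlled by staying within $O(\sqrt{\eps})$ of the real axis all the way out to modulus $\sqrt{\alpha}$, which is the regime \cref{close:to:real:contour} is built for, with the margin degenerating (and hence $\eps\to 0$) only as $a,b\to 1$, exactly as you anticipate. With the contour pinned, the constant $6.1831$ is just $\frac{2}{\pi}$ times the total length of $L_{6,\eps}\cup C^+_{6,\eps}$, as in the paper's final display.
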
 

\begin{proof}
    For a contour $P$, we define $V(P)$ as  
    \[
        V(P) := \frac{2}{\pi k} \,\Im\!\inparen{\int_{P} \nfext{a}{b}{z}^{-k}\,dz}
    \]
    As is evident from the integral representation, $\nfext{a}{b}{z}$ is purely imaginary if $z$ is 
    purely imaginary, and as is evident from the power series, $\nf{a}{b}{z}$ is purely real if $z$ lies on 
    the real interval $[-1,1]$. This implies that $V(\delta \rightarrow (1-\eps)) = 
    V(i\alpha\rightarrow i\delta) = 0$. 
    
    Now combining \cref{ML:inequality} (ML-inequality) with \cref{|f(z)|:incr:in:|z|} and 
    \cref{close:to:real:contour} (which state that the integrand is small in magnitude over $C^+_{6,\eps}$ 
    and $L_{6,\eps}$ respectively), we get that for sufficiently small $\eps>0$, 
    \begin{align*}
        |V(P(6,\eps))| 
        &\leq 
        |V(C^+_{6,\eps})| + |V(L_{6,\eps})|  \\
        &\leq 
        \frac{2}{\pi k}\cdot \frac{3\pi/2}{(1+\eps)^{k}} + 
        \frac{2}{\pi k}\cdot \frac{6-1+O(\sqrt{\eps})}{(1+\eps)^k} \\
        &\leq 
        \frac{6.1831}{k (1+\eps)^{k}}. 
        &&(\text{taking } \eps \text{ sufficiently small}) 
    \end{align*}
\end{proof}

\subsubsection{Lower bounds on $|\nfext{a}{b}{z}|$ Over $C^+_{\alpha,\eps}$ and $L_{\alpha,\eps}$}
\label[section]{largeness:of:f}
In this section we show that for sufficiently small $\eps$, $|\nfext{a}{b}{z}|>1$ over $L_{\alpha,\eps}$ (regardless of the value of $\alpha$, Lemma~\ref{close:to:real:contour}), and over $C^+_{\alpha,\eps}$ when $\alpha$ is a sufficiently large constant (Lemma~\ref{|f(z)|:incr:in:|z|}). 

We will first show the claim for $C^+_{\alpha,\eps}$ by relating $|\nfext{a}{b}{z}|$ to $|z|$. While the 
asymptotic behavior of hypergeometric functions for $|z|\to \infty$ has been extensively studied (see 
for instance \cite{Lozier03}), it appears that our desired estimates aren't immediate consequences of 
prior work for two reasons. Firstly, we require relatively precise estimates for moderately large but 
constant $|z|$. Secondly, due to the expressive power of hypergeometric functions, the estimates we 
derive can only be true for hypergeometric functions parameterized in a specific range. Indeed, our proof 
crucially uses the fact that $a,b\in [0,1)$. Our approach is to use the Euler-type integral representation 
of $\nfext{a}{b}{z}$ which as a reminder to the reader is as follows: 
\[
    \nfext{a}{b}{z} 
    := 
    \betaterm^{-1}\cdot 
    \intfull{z}
\]
where $\mathrm{B}(x,y)$ is the beta function and 
\[
    \intfull{z}
    :=  
    z \int_{0}^{1} \frac{(1-t)^{b/2} \,dt}{t^{(1+b)/2}\cdot (1-z^2 t)^{(1-a)/2}}.
\]
We start by making the simple observation that the integrand of $\intfull{z}$ is always in the positive 
complex quadrant --- an observation that will come in handy multiple times in this section, in dismissing 
the possibility of cancellations. This is the part of our proof that makes the most crucial use of 
the assumption that $0\leq a<1$ (equivalently $1\leq p^* <2$). 
\begin{observation}
\label[observation]{reim:monotonicity}
    Let $z = r e^{i\theta}$ be such that either one of the following two cases is satisfied:
    \begin{enumerate}[label=(\Alph*)]
        \item $r<1$ and $\theta = 0$.
        
        \item $\theta\in (0,\pi/2]$.
    \end{enumerate}     
    Then for any $0 \leq a\leq 1$ and any $t\in \R^{+}$, 
    \[
        \arg\inparen{\frac{z}{(1 - tz^2)^{(1-a)/2}}} \in [0,\pi/2]
    \]
\end{observation}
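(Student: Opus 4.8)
The plan is to argue directly with the principal branch, for which the branch cut is $(-\infty,0]$ and $\arg(w)\in(-\pi,\pi)$; write $s:=\frac{1-a}{2}$, so $s\in[0,\tfrac12]$ because $a\in[0,1]$. We will use repeatedly that $\arg(w^{s})=s\,\arg(w)$ whenever $w\notin(-\infty,0]$, and we only ever raise such $w$ to the power $s$ (the one borderline possibility, a positive real, being unambiguous with $\arg=0$). The two easy sub-cases are dispatched first. In case (A), $z=r$ is a positive real with $r<1$; then $tz^{2}=tr^{2}$ is a non-negative real, and in the regime that actually occurs in $\intfull{z}$ (namely $t\in(0,1)$, whence $tz^{2}=tr^{2}<1$) the number $1-tz^{2}$ is a positive real, so $z/(1-tz^{2})^{s}$ is a positive real with argument $0\in[0,\pi/2]$. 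In case (B) with $\theta=\pi/2$, we have $z=ir$, so $z^{2}=-r^{2}$ and $1-tz^{2}=1+tr^{2}>0$ is a positive real; hence $(1-tz^{2})^{s}>0$ and $\arg\inparen{z/(1-tz^{2})^{s}}=\arg(z)=\pi/2\in[0,\pi/2]$.

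The substantive case is (B) with $\theta\in(0,\pi/2)$. Put $\psi:=2\theta\in(0,\pi)$ and $u:=tr^{2}>0$, so that $1-tz^{2}=1-ue^{i\psi}=:w(u)$, with real part $A(u)=1-u\cos\psi$ and imaginary part $B(u)=-u\sin\psi$. Since $\sin\psi>0$, we have $B(u)<0$, so $w(u)$ lies strictly in the open lower half-plane; in particular it avoids the cut and $\arg(w(u))\in(-\pi,0)$. A one-line differentiation, $\frac{d}{du}\arg(w)=\frac{A B'-B A'}{A^{2}+B^{2}}$ with $A'=-\cos\psi$, $B'=-\sin\psi$, gives $A B'-B A'=-\sin\psi$, hence $\frac{d}{du}\arg(w(u))=\frac{-\sin\psi}{A^{2}+B^{2}}<0$, so $\arg(w(u))$ is strictly decreasing on $[0,\infty)$; since $\arg(w(0))=0$ and $\arg(w(u))\to\psi-\pi$ as $u\to\infty$, we obtain the key estimate $2\theta-\pi<\arg(1-tz^{2})\le 0$.

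It remains to combine the bounds. Because $\arg(z)=\theta$ and $\arg\inparen{(1-tz^{2})^{s}}=s\,\arg(1-tz^{2})$,
\[
\arg\inparen{\frac{z}{(1-tz^{2})^{s}}}=\theta-s\,\arg(1-tz^{2}).
\]
The lower bound is immediate: $\arg(1-tz^{2})\le 0$ forces $-s\,\arg(1-tz^{2})\ge 0$, so the quantity is $\ge\theta>0$. For the upper bound, $\arg(1-tz^{2})>2\theta-\pi$ gives $-s\,\arg(1-tz^{2})<s(\pi-2\theta)$, so the quantity is $<\theta+s(\pi-2\theta)=(1-2s)\theta+s\pi\le(1-2s)\tfrac{\pi}{2}+s\pi=\tfrac{\pi}{2}$, where the middle inequality uses $\theta<\pi/2$ together with $1-2s\ge 0$. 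Thus the argument lies in $(0,\pi/2)\subseteq[0,\pi/2]$, which finishes case (B) and the proof.

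The only real point requiring care — and the step I expect to be the main obstacle in writing it up cleanly — is the branch-of-logarithm bookkeeping: one must verify that $1-tz^{2}$ never lands on the cut $(-\infty,0]$ (so that $\arg$ is continuous along the ray $u\mapsto 1-ue^{i\psi}$ and the identity $\arg(w^{s})=s\arg(w)$ is legitimate), and track the endpoint behaviour $\arg(w(u))\to 0$ as $u\to 0^{+}$ and $\arg(w(u))\to\psi-\pi$ as $u\to\infty$. The monotonicity computation and the final inequality chain are routine; what genuinely drives the result are the two structural hypotheses, namely $0\le a\le 1$ (equivalently $s\le\tfrac12$, which is exactly what makes the upper bound close at $\pi/2$) and $\theta\in(0,\pi/2]$ (equivalently $\sin 2\theta\ge 0$, which keeps $1-tz^{2}$ out of the upper half-plane).
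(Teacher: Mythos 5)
Your proof is correct and follows essentially the same route as the paper's: dispose of the endpoints $\theta=0$ (with $r<1$) and $\theta=\pi/2$ directly, then for $\theta\in(0,\pi/2)$ show $\arg(1-tz^2)\in(2\theta-\pi,0)$ and combine the argument bounds using $s=(1-a)/2\le 1/2$. The only cosmetic difference is that you establish $\arg(1-tz^2)>2\theta-\pi$ via an explicit derivative/monotonicity computation in $u=tr^2$, whereas the paper reads off $\arg(-tz^2)=2\theta-\pi$ and simply notes that adding the positive real $1$ can only move the argument toward $0$; the two are interchangeable. Your aside on case (A) is also a fair point: as literally stated for arbitrary $t\in\R^+$, the claim would fail once $tr^2\ge 1$ (then $1-tz^2\le 0$ sits on the cut and contributes argument $-\pi(1-a)/2\le 0$), so the paper's ``clearly true'' tacitly relies on the integration range $t\in(0,1)$, which is exactly the regime in which the observation is invoked.
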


\begin{proof}
    The claim is clearly true when $\theta = 0$ and $r<1$. It is also clearly true when $\theta=\pi/2$. 
    Thus we may assume $\theta\in (0,\pi/2)$.  
    \begin{align*}
        & \arg(z) \in (0,\pi/2) 
        \quad\Rightarrow 
        \arg(-tz^2) \in (-\pi,0) 
        \quad\Rightarrow 
        \Im(-tz^2) < 0 \\
        &\Rightarrow 
        \Im(1-tz^2) < 0
        \quad\Rightarrow 
        \arg(1 - tz^2) \in (-\pi,0) 
    \end{align*}
    Moreover since $\arg(-tz^2)=2\theta-\pi \in (-\pi,0)$, we have $\arg(1-tz^2)> 2\theta-\pi$. 
    Thus we have, 
    \begin{align*}
        & \arg(1-tz^2) \in (2\theta - \pi,0) 
        \quad\Rightarrow 
        \arg\inparen{(1-tz^2)^{(1-a)/2}} \in ((1-a)(\theta-\pi/2),0) \\
        &\Rightarrow 
        \arg\inparen{1/(1-tz^2)^{(1-a)/2}} \in (0,(1-a)(\pi/2-\theta))  \\
        &\Rightarrow 
        \arg\inparen{z/(1-tz^2)^{(1-a)/2}} \in (0,(1-a)(\pi/2-\theta) +\theta) \subseteq (0,\pi/2)
    \end{align*}
\end{proof}

We now show $|\nfext{a}{b}{z}|$ is large over $C^+_{\alpha,\eps}$. The main idea is to move from a 
complex integral to a real integral with little loss, and then estimate the real integral. To do this, we use 
\cref{reim:monotonicity} to argue that the magnitude of $\intfull{z}$ is within $\sqrt{2}$ of the 
integral of the magnitude of the integrand. 

\begin{lemma}[ \bf{$|\nfext{a}{b}{z}|$ is large over $C^+_{\alpha,\eps}$} ] 
\label[lemma]{|f(z)|:incr:in:|z|}
    Assume $a,b\in[0,1)$ and consider any $z\in \C$ with $|z| \geq 6$. Then $|\nfext{a}{b}{z}| > 1$. 
\end{lemma}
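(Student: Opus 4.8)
The plan is to start from the Euler-type integral representation $\nfext{a}{b}{z} = \betaterm^{-1}\,\intfull{z}$, where
\[
\intfull{z} ~=~ z\int_0^1 \frac{(1-t)^{b/2}\,dt}{t^{(1+b)/2}\,(1-z^2t)^{(1-a)/2}},
\]
and to reduce the desired bound $|\nfext{a}{b}{z}|>1$, i.e. $|\intfull{z}|>\betaterm$, to an elementary one-variable inequality. Since $\nfext{a}{b}{}$ is odd and $\overline{\nfext{a}{b}{z}}=\nfext{a}{b}{\bar z}$, it suffices to handle $z$ in the closed first quadrant; and since $|z|\ge 6>1$ excludes $z$ on the branch cut $(1,\infty)$, we may assume $\Im z>0$, which is case (B) of \cref{reim:monotonicity}. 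First I would invoke \cref{reim:monotonicity}: for each $t\in(0,1)$ the integrand of $\intfull{z}$ equals the positive real number $(1-t)^{b/2}/t^{(1+b)/2}$ times $z/(1-z^2t)^{(1-a)/2}$, whose argument lies in $[0,\pi/2]$; hence the whole integrand lies in the closed first quadrant, its real and imaginary parts are nonnegative, and therefore $|\intfull{z}|\ge \tfrac1{\sqrt2}\int_0^1|\,\text{integrand}\,|\,dt$.

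Next I would estimate this real integral. Using $|1-z^2t|\le 1+r^2t$ (triangle inequality, $r:=|z|$) together with $(1-a)/2\in(0,\tfrac12]$ and $a\ge 0$ gives $|1-z^2t|^{(1-a)/2}\le (1+r^2t)^{1/2}$, so after the substitution $s=r^2t$,
\[
|\intfull{z}| ~\ge~ \frac{r^{\,b}}{\sqrt2}\int_0^{r^2}\frac{(1-s/r^2)^{b/2}\,ds}{s^{(1+b)/2}(1+s)^{1/2}}.
\]
The crux is then to show the right side exceeds $\betaterm=\mathrm{B}\!\bigl(\tfrac{1-b}{2},1+\tfrac b2\bigr)$, which conveniently depends only on $b$. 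Here I would use the Beta-function evaluation $\int_0^\infty s^{-(1+b)/2}(1+s)^{-1/2}\,ds=\mathrm{B}\!\bigl(\tfrac{1-b}{2},\tfrac b2\bigr)=\betaterm/b$ (the last step by the same $\Gamma$-recurrence used to normalize the Euler constant in \cref{hypergeometric:representation}), the crude tail bound $\int_{r^2}^\infty s^{-(1+b)/2}(1+s)^{-1/2}\,ds\le 2r^{-b}/b$, and the Bernoulli inequality $(1-s/r^2)^{b/2}\ge 1-\tfrac{b}{2r^2}s$ to handle the weight; together these reduce the lemma to the scalar inequality $\betaterm\,(r^{\,b}-\sqrt2\,b)>2+b$ for all $r\ge 6$ and $b\in[0,1)$.

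I expect the uniform verification of this last inequality in $b$ to be the main obstacle, and it naturally splits into two regimes. As $b\to 1$, both $\betaterm$ and the Beta integral diverge like $\tfrac2{1-b}$, so one only needs a comparison of constants, and $r\ge 6$ leaves ample room. The delicate regime is $b\to 0$: there $\betaterm\to\mathrm{B}(\tfrac12,1)=2$ and $r^{\,b}-\sqrt2\,b\to1$, so the leading terms of both sides equal $2$ and one must compare first-order coefficients. Since $\tfrac{d}{db}\betaterm|_{b=0}=\tfrac12\,\betaterm(0)\,\bigl(\psi(1)-\psi(\tfrac12)\bigr)=2\ln 2$, the left side grows at rate $2\ln2+2(\ln r-\sqrt2)$ while $2+b$ grows at rate $1$, so the inequality holds near $0$ precisely when $\ln r>\tfrac12+\sqrt2-\ln2\approx 1.22$ — which is why the threshold $6$ is comfortably safe — and the intermediate $b$ bounded away from $0$ and $1$ are dispatched by a direct estimate (using $r^{\,b}\ge 6^{b}$ together with both integral contributions). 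Finally, uniformity in $a$ is automatic: enlarging $a$ only increases the integrand through the factor $|1-z^2t|^{-(1-a)/2}$ while $\betaterm$ is independent of $a$, so the worst case is $a=0$, which is exactly the case bounded above.
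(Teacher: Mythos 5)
Your high-level plan shares the paper's starting point (the Euler integral, the quarter-plane observation of \cref{reim:monotonicity}, and the $\frac{1}{\sqrt2}\int|\cdot|$ reduction), but then diverges: the paper splits the $s$-integral over $[0,1]$ and $[1,r^2]$ to extract a $\log(r/\sqrt2)/\sqrt2$ term (dominant as $b\to 0$) plus an $r^b\sqrt{1-1/r^2}/(1-b)$ term (dominant as $b\to 1$), and then defers a one-variable inequality to Mathematica. You instead pass to $a=0$ via the cruder bound $|1-z^2t|\le 1+r^2t$ and try to evaluate the full-range $s$-integral in closed form as a Beta function minus tail and weight corrections. That is a genuinely different route, and it would avoid the symbolic verification the paper relies on — if it worked.

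There are two real problems. First, the Bernoulli step is backwards: for $\alpha\in(0,1)$ the function $(1-x)^\alpha$ is \emph{concave}, so $(1-x)^\alpha\le 1-\alpha x$ on $[0,1]$ (try $\alpha=1/2$, $x=1/2$: $0.707<0.75$). The inequality you want, $(1-s/r^2)^{b/2}\ge 1-\tfrac{b}{2r^2}s$, is false. You can salvage the argument with the weaker but valid $(1-s/r^2)^{b/2}\ge 1-s/r^2$; redoing the bookkeeping then yields the target scalar inequality $\betaterm(r^b-\sqrt2\,b)>2+\tfrac{2b}{2-b}$ rather than $2+b$ (still first-order $2+b$ at $b=0$, so your derivative comparison $2\ln2+2(\ln r-\sqrt2)>1$ is unaffected). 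Second, and more structurally: at $b=0$ your Beta-function evaluation $\int_0^\infty s^{-(1+b)/2}(1+s)^{-1/2}\,ds=\betaterm/b$ and the tail bound $2r^{-b}/b$ both diverge, so the difference you form is $\infty-\infty$ and the scalar inequality degenerates to $2>2$. The lemma is stated for $b\in[0,1)$, so $b=0$ must be covered. You need a separate (easy) argument there — at $b=0$ the full integral is $\int_0^{r^2}\frac{ds}{\sqrt{s(1+s)}}=2\operatorname{arcsinh}(r)\approx 4.98$ for $r=6$, giving $|\intfull{z}|\ge 3.5>2=\betaterm(0)$ — and your ``the $b\to 0$ regime is delicate'' discussion should be flagged as a one-sided limit, not a proof at the endpoint. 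Finally, the intermediate-$b$ regime is asserted rather than proved; the paper has the same issue (its Mathematica fact), so this is not a differentiator, but it should be acknowledged as an unfinished step rather than ``dispatched by a direct estimate.''
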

\begin{proof}
    We start with a useful substitution. 
    \begin{align*}
        \intfull{z}
        &=  z \int_{0}^{1} \frac{(1-t)^{b/2} \,dt}{t^{(1+b)/2}\cdot (1-z^2 t)^{(1-a)/2}} \\
        &= 
         r^{b}e^{i\theta} 
         \int_{0}^{r^2} \frac{(1-s/r^2)^{b/2} \,ds}{s^{(1+b)/2}\cdot (1-e^{2i\theta} s)^{(1-a)/2}} 
         && (\text{Subst. } s = r^2 t, \text{ where } z=re^{i\theta}) \\
         &= 
         r^{b}
         \int_{0}^{r^2} \frac{w_a(s,\theta)\cdot (1-s/r^2)^{b/2} \,ds}{s^{1+(b-a)/2}} 
    \end{align*}
    where
    \[
        w_a(s,\theta) 
        := 
        \frac{e^{i\theta}}{(1/s-e^{2i\theta})^{(1-a)/2}} \mper
    \]
    
    \medskip 
    
    We next exploit the observation that the integrand is always in the positive complex quadrant by 
    showing that $|\intfull{z}|$ is at most a factor of $\sqrt{2}$ away from the integral obtained 
    by replacing the integrand with its magnitude. 
    \begin{align*}
        &\quad~~ |\intfull{z}|  \\
        &=~ 
        \sqrt{\Re(\intfull{z})^2 + \Im(\intfull{z})^2} \\
        &\geq~ 
        (|\Re(\intfull{z})| + |\Im(\intfull{z})|)/\sqrt{2} &&(\text{Cauchy-Schwarz}) \\
        &=~ 
        (\Re(\intfull{z}) + \Im(\intfull{z}))/\sqrt{2} &&(\text{by \cref{reim:monotonicity}}) \\
        &=~
        \frac{r^{b}}{\sqrt{2}}
        \int_{0}^{r^2} 
        \inparen{ \Re(w_a(s,\theta)) + \Im(w_a(s,\theta)) } \cdot 
        \frac{(1-s/r^2)^{b/2} \,ds}{s^{1+(b-a)/2}} \\
        &=~
        \frac{r^{b}}{\sqrt{2}}
        \int_{0}^{r^2} 
        \inparen{ |\Re(w_a(s,\theta))| + |\Im(w_a(s,\theta))| } \cdot 
        \frac{(1-s/r^2)^{b/2} \,ds}{s^{1+(b-a)/2}} 
        &&(\text{by \cref{reim:monotonicity}})\\
        &\geq~
        \frac{r^{b}}{\sqrt{2}}
        \int_{0}^{r^2} 
        |w_a(s,\theta)| \cdot 
        \frac{(1-s/r^2)^{b/2} \,ds}{s^{1+(b-a)/2}} 
        &&(\norm{1}{v}\geq \norm{2}{v}) \\
        &\geq~
        \frac{r^{b}}{\sqrt{2}}
        \int_{0}^{r^2} 
        \frac{1}{(1+1/s)^{(1-a)/2}} \cdot 
        \frac{(1-s/r^2)^{b/2} \,ds}{s^{1+(b-a)/2}} 
    \end{align*}
    We now break the integral into two parts and analyze them separately. We start by analyzing the part that's  
    large when $b\rightarrow 0$.  
    \begin{align*}
        &\quad~~
        \frac{r^{b}}{\sqrt{2}}
        \int_{1}^{r^2} 
        \frac{1}{(1+1/s)^{(1-a)/2}} \cdot 
        \frac{(1-s/r^2)^{b/2} \,ds}{s^{1+(b-a)/2}} \\
        &\geq~
        \frac{r^{b}}{2}
        \int_{1}^{r^2} 
        \frac{(1-s/r^2)^{b/2} \,ds}{s^{1+(b-a)/2}} \\
         &\geq~ 
         \frac{r^b}{2} \int_{1}^{r^2/2} 
         \frac{(1-s/r^2)^{b/2} \,ds}{s^{1+(b-a)/2}} \\
         &\geq~ 
         \frac{r^b}{2\sqrt{2}} \int_{1}^{r^2/2} 
         \frac{ds}{s^{1+(b-a)/2}} 
         &&(\text{since } s\leq r^2/2) \\
         &\geq~ 
         \frac{r^b\cdot \min\{1,r^{a-b}\}}{2\sqrt{2}} \int_{1}^{r^2/2} 
         \frac{ds}{s} \\
         &=~ 
         \frac{\min\{r^a,r^b\}\cdot \log (r^2/2)}{2\sqrt{2}} \\
         &\geq~ 
         \frac{\log (r/\sqrt{2})}{\sqrt{2}} 
    \end{align*}
    We now analyze the part that's large when $b\rightarrow 1$.
    \begin{align*}
        &\quad~~
        \frac{r^{b}}{\sqrt{2}}
        \int_{0}^{1} 
        \frac{1}{(1+1/s)^{(1-a)/2}} \cdot 
        \frac{(1-s/r^2)^{b/2} \,ds}{s^{1+(b-a)/2}} \\
        &=~
        \frac{r^{b}}{\sqrt{2}}
        \int_{0}^{1} 
        \frac{(1-s/r^2)^{b/2} }{(1+s)^{(1-a)/2}} \cdot 
        \frac{ds}{s^{(1+b)/2}} \\
        &\geq~
        \frac{r^{b}\cdot \sqrt{1-1/r^2}}{2}
        \int_{0}^{1} 
        \frac{ds}{s^{(1+b)/2}} 
        &&(\text{since } s\leq 1) \\
        &=~
        \frac{r^{b}\cdot \sqrt{1-1/r^2}}{1-b} 
    \end{align*}
    Combining the two estimates above yields that if $r>\sqrt{2}$, 
    \[
        |\nfext{a}{b}{z}|
        \geq~
        \betaterm^{-1} \cdot 
        \inparen{
        \frac{\log (r/\sqrt{2})}{\sqrt{2}} 
        + 
        \frac{r^{b}\cdot \sqrt{1-1/r^2}}{1-b} 
        }
    \]
    Lastly, the proof follows by using the following estimate:
    \begin{fact}
    \label[fact]{mathematica:fact:beta}
        Via Mathematica, for $0\leq b< 1$ we have 
        \[
            \betaterm^{-1} \cdot 
            \inparen{
            \frac{\log (6/\sqrt{2})}{\sqrt{2}} 
            + 
            \frac{6^{b}\cdot \sqrt{1-1/6^2}}{1-b} 
            }
            ~~\geq~~
            1.003
        \]
    \end{fact}
\end{proof}

\begin{remark}
    The preceding proof can be used to derive the precise asymptotic behavior of 
    $|\nfext{a}{b}{z}|$ in $r$. Specifically, it grows as ~$r^{a} \log r$~ if ~$a=b$~ and as ~$r^{\,\max\{a,b\}}$ 
    ~if ~$a\neq b$. 
\end{remark}

\bigskip

We now show that $|\nfext{a}{b}{z}|>1$ over $L_{\alpha, \eps}$. To do this, it is insufficient to assume 
that $|z|\geq 1$ since there exist points $z$ (for instance $z=i$) of unit length such that 
$|\nfext{a}{b}{z}|<1$. To show the claim, we observe that $|\nfext{a}{b}{z}|$ is large when $z$ is close to 
the real line and use the fact that $L_{\alpha,\eps}$ is close to the real line. Formally, we show that if 
$z$ is of length at least $1$ and is sufficiently close to the real line, $|\nfext{a}{b}{z}|$ is close to 
$\nfext{a}{b}{1}$. Lastly, we use the power series representation of the hypergeometric function to obtain 
a sufficiently accurate lower bound on $\nfext{a}{b}{1}$. 

\begin{lemma}[ \bf{$|\nfext{a}{b}{z}|$ is large over $L_{\alpha,\eps}$} ] 
\label[lemma]{close:to:real:contour}
    Assume $a,b\in [0,1)$ and consider any $\gamma\geq 1-\eps_1$. 
    Let $\eps_2 := \sqrt{\eps_1}$ and $z := \gamma (1+i\eps_1)$. 
    Then for $\eps_1>0$ sufficiently small, $|\nfext{a}{b}{z}|>1$. 
\end{lemma}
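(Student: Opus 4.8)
The plan is to reduce the claim to elementary properties of the hypergeometric factor of $\nfext{a}{b}{\cdot}$ together with a controlled description of that factor on and near the branch cut $[1,\infty)$. Throughout, $a,b\in[0,1)$ are fixed (determined by $p,q$), and I write $G(w):={}_{2}F_{1}\inparen{\tfrac{1-a}{2},\tfrac{1-b}{2};\tfrac{3}{2};w}$, so that by \cref{hypergeometric:representation} $\nfext{a}{b}{z}=z\cdot G(z^{2})$, and via $\nfext{a}{b}{z}=\betaterm^{-1}\cdot\intfull{z}$ we may work with $G$ on all of $\mathbb{C}\setminus[1,\infty)$; by \cref{reim:monotonicity} the integrand of $\intfull{z}$ lies in the closed first quadrant whenever $\mathrm{Im}(z)>0$. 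By \cref{monotonicity:properties} the Taylor coefficients of $\nf{a}{b}{\cdot}$ are non-negative with the coefficient of $\rho^{3}$ equal to $(1-a)(1-b)/6>0$; moreover the series defining $G$ converges at $w=1$ since $\tfrac32-\tfrac{1-a}{2}-\tfrac{1-b}{2}=\tfrac12+\tfrac{a+b}{2}>0$. Hence $\delta_{0}:=\nfext{a}{b}{1}-1\ge(1-a)(1-b)/6$ is a strictly positive constant depending only on $p,q$. Since $|z|=\gamma\sqrt{1+\eps_1^{2}}\ge\gamma\ge1-\eps_1$, it suffices to prove $\abs{\nfext{a}{b}{z}}\ge1+\tfrac{\delta_{0}}{2}$. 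Finally, if $\gamma\ge6$ then $|z|\ge6$ and \cref{|f(z)|:incr:in:|z|} applies directly, so we may assume $\gamma\in[1-\eps_1,6]$ and split on the position of $\gamma$ relative to the branch point at $w=1$, using $\eps_2=\sqrt{\eps_1}$.

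\textbf{Near the branch point: $\gamma\le1+\eps_2$.} Here $\abs{z-1}\le\abs{\gamma-1}+\gamma\eps_1\le\eps_2+6\eps_1=O(\sqrt{\eps_1})$. Since $\mathrm{Re}(z)>0$ and $\mathrm{Im}(z)\ge0$, the point $z^{2}$ lies in the closed upper half-plane, hence in $\mathbb{C}\setminus(1,\infty)$, and $z^{2}\to1$ as $\eps_1\to0$. The map $z\mapsto\nfext{a}{b}{z}$ is continuous at $z=1$ through this region — either by Abel's theorem applied to the convergent series for $G$ at $w=1$, or by dominated convergence in $\intfull{z}$ (the limiting integrand has $(1-t)$ raised to the integrable power $\tfrac{a+b-1}{2}\in(-\tfrac12,0]$). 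Therefore $\abs{\nfext{a}{b}{z}}\to\nfext{a}{b}{1}=1+\delta_{0}$, so $\abs{\nfext{a}{b}{z}}\ge1+\tfrac{\delta_0}{2}$ once $\eps_1$ is small.

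\textbf{Away from the branch point: $1+\eps_2<\gamma\le6$.} Write $z^{2}=x_{0}+iy_{0}$ with $x_{0}=\gamma^{2}(1-\eps_1^{2})$ and $y_{0}=2\gamma^{2}\eps_1$; for $\eps_1$ small, $x_{0}\in(1+\eps_2,36]$ (using $\eps_2=\sqrt{\eps_1}\gg\eps_1^{2}$) and $0<y_{0}\le72\eps_1$. Two facts then finish the argument. First, the one-sided boundary value $x\mapsto\abs{\nfext{a}{b}{x+i0^{+}}}$ is continuous on $[1,6]$ and strictly exceeds $1$ there: from $\intfull{z}$, splitting the $t$-integral at $t=x^{-2}$, on $t<x^{-2}$ the factor $(1-x^{2}t)^{-(1-a)/2}$ is real and positive (contributing exactly $\nfext{a}{b}{1}=1+\delta_{0}$ at $x=1$, and close to it for $x$ slightly above $1$), while on $t>x^{-2}$ one picks up the phase $e^{i(1-a)\pi/2}$ with $(1-a)\pi/2\in(0,\pi/2]$, so this second contribution has non-negative real part and positive imaginary part; no cancellation occurs in either component, and a direct estimate of the two explicit convergent integrals yields a uniform lower bound $c(a,b)>1$ on $[1,6]$. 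Second, away from $w=1$ the slit-plane function $\nfext{a}{b}{\cdot}$ extends smoothly to the cut from above, with modulus of continuity on $[1+\eps_2,36]\times[0,1]$ of order $\eps_2^{-1/2}=\eps_1^{-1/4}$ in the distance to the cut; since $y_{0}=O(\eps_1)$ this gives $\abs{\nfext{a}{b}{z}-\nfext{a}{b}{x_{0}+i0^{+}}}=O(\eps_1^{3/4})$, uniformly in $\gamma$. Combining, $\abs{\nfext{a}{b}{z}}\ge c(a,b)-O(\eps_1^{3/4})>1$ for $\eps_1$ small, completing the proof.

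\textbf{Main obstacle.} The delicate step is the first claim in the last case: that the modulus of $\nfext{a}{b}{\cdot}$ on the cut $(1,6]$ — where the hypergeometric factor becomes genuinely complex and can have modulus below $1$ — stays above $1$ after multiplication by $z$. This requires accounting for the phase $e^{i(1-a)\pi/2}$ arising across $t=x^{-2}$ precisely enough to rule out cancellation with the positive real part. It is exactly here, and in the quantity $\delta_0$ of the near-diagonal case, that the hypothesis $a,b\in[0,1)$ (i.e. $2\notin[p,q]$ \emph{strictly}) is indispensable: at $a=1$ or $b=1$ the hypergeometric factor is the constant $1$, $\nfext{a}{b}{z}=z$, $\delta_0=0$, and the statement fails. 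Everything else is soft — continuity of $G$ at the single point $w=1$, uniform continuity of the slit-plane extension away from $w=1$, the bound $|z|\ge1-\eps_1$, and invoking \cref{|f(z)|:incr:in:|z|} for $\gamma\ge6$.
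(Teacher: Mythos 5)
Your Cases~1 and~2 are sound ($\gamma\ge 6$ is dispatched by \cref{|f(z)|:incr:in:|z|}, and the continuity of $\intfull{\cdot}$ at $z=1$ handles $\gamma\le 1+\eps_2$), but Case~3 contains a genuine gap, which you yourself flag under ``Main obstacle.'' It rests on two assertions that are never proved: (i)~that ``a direct estimate of the two explicit convergent integrals yields a uniform lower bound $c(a,b)>1$ on $[1,6]$,'' and (ii)~that $\nfext{a}{b}{\cdot}$ has a Lipschitz-type bound of order $\eps_2^{-1/2}$ in the distance to the cut on $[1+\eps_2,36]$. Assertion~(i) is essentially the lemma itself evaluated on the boundary of the slit, so deferring it to an unperformed estimate merely relocates the proof obligation rather than discharging it. (It is in fact true: substituting $s=x^2 t$ in the $t<x^{-2}$ piece of $\intfull{x}$ shows that piece equals $x^{b}\int_0^1 (1-s/x^2)^{b/2}\,s^{-(1+b)/2}(1-s)^{-(1-a)/2}\,ds\ge\intfull{1}$ for all $x\ge 1$, and combined with your correct observation via \cref{reim:monotonicity} that the $t>x^{-2}$ piece lies in the closed first quadrant, this gives $|\nfext{a}{b}{x+i0^+}|\ge\nf{a}{b}{1}>1$; but this computation is absent from the proposal.) Assertion~(ii) is plausible --- the derivative of the hypergeometric factor blows up like $(1-w)^{(a+b-1)/2}$ near $w=1$, which at $a=b=0$ is $(1-w)^{-1/2}$, consistent with your $O(\eps_2^{-1/2})$ --- but this too needs a proof, and at the boundary exponent the constants must be tracked.

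The paper's proof avoids the case split and the boundary-value detour entirely. It truncates the $t$-integral at the point corresponding to $s=1-\eps_2$ (safely inside the region where the integrand is nearly real) and, using only \cref{reim:monotonicity} together with the two elementary pointwise estimates \cref{eq:real:close:final} and \cref{eq:numerator}, shows
\[
\Re\bigl(\intfull{z}\bigr) \ge \bigl(1-O(\sqrt{\eps_1})\bigr)\int_0^{1-\eps_2}\frac{(1-s)^{b/2}\,ds}{s^{(1+b)/2}(1-s)^{(1-a)/2}}
\]
uniformly over all $\gamma\ge 1-\eps_1$ at once. As $\eps_1\to 0$ the right-hand side converges to $\intfull{1}=\betaterm\cdot\nf{a}{b}{1}$, and $\nf{a}{b}{1}>1$ by (M1)--(M2) of \cref{monotonicity:properties}, so the conclusion follows with no separate treatment of large $\gamma$, no passage to the cut, and no modulus-of-continuity estimate. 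If you wish to rescue your route, carry out (i) explicitly via the substitution above and prove the derivative bound in (ii); both are doable, but they constitute the actual work of this lemma.
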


\begin{proof}
    Below the fold we will show 
    \begin{align}
        |\intfull{z}| 
        ~\geq~ 
        (1-O(\sqrt{\eps_1}))\int_{0}^{1-\eps_2} \frac{(1-s)^{b/2} \,ds}{s^{(1+b)/2}\cdot (1 - s)^{(1-a)/2}} & 
        \label{ineq:real:closeness} 
    \end{align}
    But we know ($\mathrm{LHS},~\mathrm{RHS}$ refer to \cref{ineq:real:closeness})
    \begin{align*}
        \betaterm^{-1}\cdot \mathrm{LHS} &~=~ \nfext{a}{b}{z} \text{~~~and ~~} \\
        \betaterm^{-1}\cdot \mathrm{RHS} ~&\rightarrow~ \nf{a}{b}{1} \text{  ~as~  } \eps_1\rightarrow 0 
    \end{align*}
    Also by \cref{monotonicity:properties} : (M1), (M2),~ $\nf{a}{b}{1} \geq 1+(1-a)(1-b)/6>1$. 
    Thus for $\eps_1$ sufficiently small, we must have $|\nfext{a}{b}{z}|>1$. \medskip

    We now show \cref{ineq:real:closeness}, by comparing integrands point-wise. To do this, we will 
    assume the following closeness estimate that we will prove below the fold: 
    \begin{equation}
    \label[equation]{eq:real:close:final}
        \Re\inparen{\frac{1+i\eps_1}{(1-s(1+i\eps_1)^2)^{(1-a)/2}}}
        = 
        \frac{1-O(\eps_2)}{(1-s)^{(1-a)/2}} \mper
    \end{equation}
    We will also need the following inequality. Since $\gamma\geq 1-\eps_1 = 1-\eps_2^2$, 
    for any $0\leq s \leq 1-\eps_2$, we have 
    \begin{equation}
    \label[equation]{eq:numerator}
        (1-s/\gamma^2)^{b/2} \geq (1-O(\eps_2))\cdot (1-s)^{b/2}.
    \end{equation}   
    Given, these estimates, we can complete the proof of \cref{ineq:real:closeness} as follows: 
    \begin{align*}
        &\quad~\Re(\intfull{z}) \\
        &= 
        \Re\inparen{
         z\int_{0}^{1} \frac{(1-t)^{b/2} \,dt}{t^{(1+b)/2}\cdot (1 - tz^2)^{(1-a)/2}} 
         } \\
         &= 
         \Re\inparen{
         \gamma^{b}(1+i\eps_1)
         \int_{0}^{\gamma^2} \frac{(1-s/\gamma^2)^{b/2} \,ds}{s^{(1+b)/2}\cdot (1-s(1+i\eps_1)^2)^{(1-a)/2}} 
         }
         &&(\text{subst. } s\leftarrow \gamma^2 t)  \\
         &\geq 
         \Re\inparen{
         \gamma^{b}(1+i\eps_1)
         \int_{0}^{1-\eps_2} \frac{(1-s/\gamma^2)^{b/2} \,ds}{s^{(1+b)/2}\cdot (1-s(1+i\eps_1)^2)^{(1-a)/2}} 
         }
         &&(\text{by \cref{reim:monotonicity}}) \\
         &= 
         \gamma^{b}
         \int_{0}^{1-\eps_2} 
         \Re\inparen{ \frac{1+i\eps_1}{(1-s(1+i\eps_1)^2)^{(1-a)/2}} }
         \frac{(1-s/\gamma^2)^{b/2} \,ds}{s^{(1+b)/2}} \\ 
         &\geq 
         (1-O(\eps_2))\cdot \gamma^b
         \int_{0}^{1-\eps_2} \frac{(1-s/\gamma^2)^{b/2} \,ds}{s^{(1+b)/2}\cdot (1 - s)^{(1-a)/2}} 
         &&(\text{by \cref{eq:real:close:final}}) \\
         &\geq 
         (1-O(\eps_2))
         \int_{0}^{1-\eps_2} \frac{(1-s)^{b/2} \,ds}{s^{(1+b)/2}\cdot (1 - s)^{(1-a)/2}} 
         &&(\text{by \cref{eq:numerator}, } \gamma\geq 1-\eps_1) 
    \end{align*}
    
    It remains to establish \cref{eq:real:close:final}, which we will do by considering the numerator 
    and reciprocal of the denominator separately and subsequently using the fact that
    $\Re(z_1 z_2) = \Re(z_1)\Re(z_2) - \Im(z_1)\Im(z_2)$. In doing this, we need to show that the 
    respective real parts are large and respective imaginary parts are small for which the following 
    simple facts will come in handy. 
    \begin{fact}
        \label[fact]{real:part:under:power}
        Let $z = re^{i\theta}$ be such that $\Re{z}\geq 0$ ~(i.e. $-\pi/2\leq \theta \leq \pi/2$). 
        Then for any $0\leq \alpha\leq 1$, 
        \[
            \Re(1/z^{\alpha}) = \cos(-\alpha\theta)/r^{\alpha} = \cos(\alpha\theta)/r^{\alpha} 
            \geq \cos(\theta)/r^{\alpha} = \Re(z)/r^{1+\alpha}
        \]
    \end{fact}

    \begin{fact}
    \label[fact]{imaginary:part:under:power}
        Let $z = re^{-i\theta}$ be such that $\Re{z}\geq 0, \Im{z}\leq 0$ ~(i.e. $0\leq \theta \leq \pi/2$). 
        Then for any $0\leq \alpha\leq 1$, 
        \[
            \Im(1/z^{\alpha}) = \sin(\alpha\theta)/r^{\alpha} 
            \leq \sin(\theta)/r^{\alpha} = -\Im(z)/r^{1+\alpha}
        \]
    \end{fact}
    We are now ready to prove the claimed properties of the reciprocal of the denominator from 
    \cref{eq:real:close:final}. For any $0\leq s\leq 1-\eps_2$ we have, 
    \begin{align}
    \label[equation]{eq:real:close}
        &\quad~\Re\inparen{\frac{1}{(1-s(1+i\eps_1)^2)^{(1-a)/2}}} \nonumber \\
        &=
        \Re\inparen{\frac{1}{(1-s+s\eps_1^2-2is\eps_1)^{(1-a)/2}}} \nonumber \\
        &=
        \frac{1}{(1-s)^{(1-a)/2}}\cdot 
        \Re\inparen{\frac{1}{(1+s\eps_1^2/(1-s)-2i\eps_1/(1-s))^{(1-a)/2}}} \nonumber \\
        &\geq 
        \frac{1}{(1-s)^{(1-a)/2} \cdot (1+O(\eps_1^2/\eps_2^2))^{(3-a)/4}} 
        &&\hspace{-20 pt}(\text{by \cref{real:part:under:power}, and } 1-s\geq \eps_2) \nonumber \\
        &=
        \frac{1-O(\eps_1^2/\eps_2^2)}{(1-s)^{(1-a)/2}} \nonumber \\
        &= 
        \frac{1-O(\eps_2)}{(1-s)^{(1-a)/2}} 
    \end{align}
    Similarly, 
    \begin{align}
    \label[equation]{eq:im:small}
        &\Im\inparen{\frac{1}{(1-s+s\eps_1^2-2is\eps_1)^{(1-a)/2}}} \nonumber \\
        &=
        \frac{1}{(1-s)^{(1-a)/2}}\cdot 
        \Im\inparen{\frac{1}{(1+s\eps_1^2/(1-s)-2i\eps_1/(1-s))^{(1-a)/2}}} \nonumber \\
        &\leq 
        \frac{2\eps_1}{(1-s)^{(1-a)/2}}
        &&(\text{by \cref{imaginary:part:under:power}})
    \end{align}
    Combining \cref{eq:real:close} and \cref{eq:im:small} with the fact that 
    $\Re(z_1 z_2) = \Re(z_1)\Re(z_2) - \Im(z_1)\Im(z_2)$ yields, 
    \[
        \Re\inparen{\frac{1+i\eps_1}{(1-s(1+i\eps_1)^2)^{(1-a)/2}}}
        = 
        \frac{1-O(\eps_2)}{(1-s)^{(1-a)/2}} \mper
    \]
    This completes the proof. 
\end{proof}

\subsubsection[Challenges of Proving (C1) and (C2) for all k]{Challenges of Proving (C1) and (C2) for all $k$}
\label[subsubsection]{coefficient:challenges}
For certain values of $a$ and $b$, the inequalities in (C1) and (C2) leave very little room for error. 
In particular, when $a=b=0$, (C1) holds at equality and (C2) has $1/k!$ additive slack. 
In this special case, it would mean that one cannot analyze the contour integral (for the 
$k$-th coefficient of $\nfin{a}{b}{\rho}$) by using ML-inequality on any section of the contour that is within 
a distance of $\mathrm{exp}(k)$ from the origin. Analytic approaches would require extremely precise 
estimates on the value of the contour integral on parts close to the origin. Other challenges to naive 
approaches come from the lack of monotonicity properties for $\ngc$ (both in $k$ and in $a,b$ - see 
\cref{behavior:of:coefficients})

\section{Factorization of Linear Operators}
\label[section]{factorization}
Let $X,Y,E$ be Banach spaces and let $A:X\to Y$ be a continuous linear operator. We say that $A$ 
\emph{factorizes} through $E$ if there exist continuous operators $C:X\to E$ and $B:E\to Y$ such that 
$A=BC$. 
Factorization theory has been a major topic of study in functional analysis, going as far back as 
Grothendieck's famous ``Resume'' \cite{Grothendieck56}. It has many striking applications, like 
the isomorphic characterization of Hilbert spaces and $L_p$ spaces due to 
\Kwapien~\cite{Kwapien72a, Kwapien72b}, connections to type and cotype through the work of 
\Kwapien~\cite{Kwapien72a}, Rosenthal~\cite{Rosenthal73}, Maurey~\cite{Maurey74} and 
Pisier~\cite{Pisier80}, connections to Sidon sets through the work of Pisier~\cite{Pisier86}, 
characterization of weakly compact operators due to Davis \etal~\cite{DFJP74}, connections to the 
theory of $p$-summing operators through the work of Grothendieck~\cite{Grothendieck56}, 
Pietsch~\cite{Pietsch67} and Lindenstrauss and Pelczynski~\cite{LP68}. 

Let $\factorConst{A}$ denote 
\[
    \factorConst{A}:= \inf_{H} \inf_{BC=A} \frac{\norm{X}{H}{C}\cdot \norm{H}{Y}{B}}{\norm{X}{Y}{A}}
\]
where the infimum runs over all Hilbert spaces $H$. We say $A$ factorizes through a Hilbert space if 
$\factorConst{A}< \infty$.
Further, let 
\[
    \factorSpConst{X}{Y} := \sup_{A} ~\factorConst{A}
\]
where the supremum runs over continuous operators $A:X\to Y$.  
As a quick example of the power of factorization theorems, observe 
that if $\id:X\to X$ is the identity operator on a Banach space $X$ and $\factorConst{\id}<\infty$, 
then $X$ is isomorphic to a Hilbert space and moreover the distortion (Banach-Mazur distance) is at most 
$\factorConst{\id}$ (\ie there exists an invertible operator $T:X \to H$ for some Hilbert space $H$ such 
that $\norm{X}{H}{T}\cdot \norm{H}{X}{T^{-1}}\leq \factorConst{\id}$). In fact (as observed by Maurey), 
\Kwapien gave an isomorphic characterization of Hilbert spaces by proving a factorization theorem. 

In this section we will show that our approximation results imply improved bounds on 
$\factorSpConst{\ell_{p}^{n}}{\ell_{q}^{m}}$ for certain values of $p$ and $q$. 
Before doing so, we first summarize prior work which will require the definitions of type and cotype: 
\begin{definition}
    The Type-2 constant of a Banach space $X$, denoted by $T_2(X)$, is the smallest constant $C$ such that 
    for every finite sequence of vectors $\{x^i\}$ in $X$, 
    \[
        \Ex{\norm{\sum_{i} \eps_i\cdot x^i}} \leq C\cdot \sqrt{\sum_{i} \norm{x^i}^{2}}
    \]
    where $\eps_i$ is an independent Rademacher random variable. We say $X$ is of Type-2 if 
    $T_2(X)<\infty$. 
\end{definition} 
\begin{definition}
    The Cotype-2 constant of a Banach space $X$, denoted by $C_2(X)$, is the smallest constant $C$ 
    such that for every finite sequence of vectors $\{x^i\}$ in $X$, 
    \[
        \Ex{\norm{\sum_{i} \eps_i\cdot x^i}} \geq \frac{1}{C}\cdot \sqrt{\sum_{i} \norm{x^i}^{2}}
    \]
    where $\eps_i$ is an independent Rademacher random variable. We say $X$ is of Cotype-2 if 
    $C_2(X)<\infty$. 
\end{definition} 
\begin{remark}~
    \begin{itemize}
        \item It is known that $C_2(X^*)\leq T_2(X)$. 
        
        \item It is known that for $p \geq 2$,  we have $T_2(\ell_p^n) = \gamma_p$ (while $C_2(\ell_p^n) \to
            \infty$ ~as~ $n \to \infty$) and for $q \leq 2$,~ $C_2(\ell_q^n) = \max\{2^{1/q-1/2},1/\gamma_q\}$ 
            (while~ $T_2(\ell_q^n) \to \infty$ ~as~ $n \to \infty$). 
    \end{itemize}
\end{remark}
\noindent
We say $X$ is Type-2 (resp. Cotype-2) if $T_2(X)<\infty$ (resp. $C_2(X)<\infty$). 
$T_2(X)$ and $C_2(X)$ can be regarded as measures of the ``closeness'' of $X$ to a 
Hilbert space. Some notable manifestations of this correspondence are: 
\begin{itemize}
    \item $T_2(X)=C_2(X)=1$ if and only if $X$ is isometric to a Hilbert space. 
    
    \item {\Kwapien~\cite{Kwapien72a}}:~~$X$ is of Type-2 and Cotype-2 if and only if it is isomorphic to 
    a Hilbert space. 
    
    \item {Figiel, Lindenstrauss and Milman~\cite{FLM77}}:~~If $X$ is a Banach space of Cotype-2, then any 
    $n$-dimensional subspace of $X$ has an $m=\Omega(n)$-dimensional subspace with Banach-Mazur 
    distance at most $2$ from $\ell_{2}^{m}$.  
\end{itemize}
Maurey observed that a more general factorization result underlies \Kwapien's 
work: 
\begin{theorem}[\Kwapien-Maurey]
    Let $X$ be a Banach space of Type-2 and $Y$ be a Banach space of Cotype-2. Then any 
    operator $T:X\to Y$ factorizes through a Hilbert space. Moreover, $\factorSpConst{X}{Y}\leq 
    T_2(X)C_2(Y)$. 
\end{theorem}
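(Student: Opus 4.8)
The plan is to prove the general statement: given Banach spaces $X$ with $T_2(X)<\infty$, $Y$ with $C_2(Y)<\infty$, and an operator $T\colon X\to Y$, produce a factorization $T=BA$ through a Hilbert space $H$ with $\|A\|\,\|B\|\le T_2(X)\,C_2(Y)\,\|T\|$. After normalizing $\|T\|=1$, the whole argument is organized around one auxiliary object: a Hilbertian seminorm $\beta$ on $X$ squeezed between $\|T\cdot\|_Y$ and $T_2(X)C_2(Y)\|\cdot\|_X$, i.e. satisfying $\|Tx\|_Y\le\beta(x)\le T_2(X)\,C_2(Y)\,\|x\|_X$ for all $x\in X$. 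Granting such a $\beta$, one takes $H$ to be the completion of $X/\ker\beta$ under the inner product inducing $\beta$, lets $A\colon X\to H$ be the canonical map (so $\|A\|\le T_2(X)C_2(Y)$), and defines $B\colon\overline{A(X)}\to Y$ by $B(Ax)=Tx$; this is well defined and $1$-Lipschitz because $\|Tx\|_Y\le\beta(x)=\|Ax\|_H$, and it extends to all of $H$ by precomposing with the orthogonal projection onto $\overline{A(X)}$. Since the factorization gap only needs to be bounded uniformly in the dimensions, a routine reduction (restricting to the finitely many vectors appearing in any putative violation, or invoking the principle of local reflexivity) lets us assume $\dim X,\dim Y<\infty$, so $\beta$ may be sought as a positive semidefinite quadratic form $S$ on $X\cong\mathbb R^n$.

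First I would set up the existence of $S$ as a convex feasibility problem and dualize it. The constraints ``$x^{\!\top}Sx\ge\|Tx\|_Y^2$ for all $x$'' and ``$S\succeq 0$ and $x^{\!\top}Sx\le C^2\|x\|_X^2$ for all $x$'' (with $C:=T_2(X)C_2(Y)$) each cut out a convex subset of the space of symmetric matrices, so by Hahn--Banach separation these intersect unless there is a separating functional; since the first set is stable under adding positive semidefinite matrices, such a functional is itself positive semidefinite, and unwinding this reduces feasibility to the inequality
\[
  \sum_{i=1}^{N}\|Tx_i\|_Y^{2}\ \le\ C^{2}\cdot\Big\|\textstyle\sum_{i=1}^{N}x_i x_i^{\!\top}\Big\|_{\gamma}
  \qquad\text{for all }N\text{ and all }x_1,\dots,x_N\in X,
\]
where $\|\Lambda\|_{\gamma}:=\inf\big\{\sum_k\mu_k\|z_k\|_X^2:\ \Lambda=\sum_k\mu_k z_k z_k^{\!\top},\ \mu_k\ge 0\big\}$ is the natural ``Hilbertian'' functional of a PSD matrix dual to the constraint set above. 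This is precisely the classical criterion for an operator to factor through a Hilbert space, and I would either quote it from the functional-analysis literature or reprove it via this finite-dimensional duality.

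Next I would verify the displayed inequality, which is where the hypotheses enter. Fix $x_1,\dots,x_N$, put $\Lambda=\sum_i x_i x_i^{\!\top}$, and let $g_1,\dots,g_N$ be i.i.d.\ standard Gaussians so that $G:=\sum_i g_i x_i$ is a Gaussian vector with covariance $\Lambda$. Using the Gaussian form of cotype $2$ of $Y$ on the vectors $Tx_i$ (whose Gaussian cotype-$2$ constant is controlled by $C_2(Y)$), then pulling $T$ out and using $\|T\|\le 1$,
\[
  \sum_i\|Tx_i\|_Y^{2}\ \le\ C_2(Y)^2\,\mathbb E_g\Big\|\textstyle\sum_i g_i T x_i\Big\|_Y^2
  \ =\ C_2(Y)^2\,\mathbb E_g\big\|T G\big\|_Y^2\ \le\ C_2(Y)^2\,\mathbb E_g\|G\|_X^2 .
\]
On the other hand, for any representation $\Lambda=\sum_k\mu_k z_k z_k^{\!\top}$ one has $G\stackrel{d}{=}\sum_k\sqrt{\mu_k}\,g_k z_k$, so Gaussian type $2$ of $X$ gives $\mathbb E_g\|G\|_X^2\le T_2(X)^2\sum_k\mu_k\|z_k\|_X^2$; taking the infimum over representations yields $\mathbb E_g\|G\|_X^2\le T_2(X)^2\,\|\Lambda\|_{\gamma}$. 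Chaining the two estimates proves the inequality with $C=T_2(X)C_2(Y)$ up to the discrepancy between the Gaussian and Rademacher type/cotype constants; to recover the stated constant exactly one reruns the sandwiching step with Rademacher averages (representing $\Lambda$ through sign sums rather than through eigenvectors), which goes through verbatim since the constraint $x^{\!\top}Sx\le\|x\|_X^2$ is insensitive to the choice of averaging. Undoing the normalization of $\|T\|$ and specializing $X=\ell_p^n$, $Y=\ell_q^m$ then gives $\Phi(\ell_p^n,\ell_q^m)\le T_2(\ell_p^n)\,C_2(\ell_q^m)$.

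The main obstacle, I expect, is the duality step: precisely formulating the convex feasibility problem for the intermediate form, identifying its dual as the functional $\|\cdot\|_{\gamma}$, and checking that strong duality (no gap, attainment) holds so that failure of feasibility really does produce a finite violation of the displayed inequality. Everything after that --- the Gaussian averaging and the type/cotype manipulations --- is short, and the constant-bookkeeping (Gaussian versus Rademacher constants) is a standard but slightly fussy point worth isolating as a separate lemma.
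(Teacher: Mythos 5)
The paper does not contain a proof of the Kwapień--Maurey theorem; it is cited as a classical result. Your proposed argument is essentially the canonical proof (Maurey's route through the Lindenstrauss--Pełczyński $\gamma_2$-factorization criterion, verified by Gaussian averaging), and the overall structure — reduce to the existence of a Hilbertian seminorm $\beta$ sandwiched between $\|T\cdot\|_Y$ and $C\|\cdot\|_X$, separate convex sets to get the criterion
\[
\sum_i\|Tx_i\|_Y^2 \le C^2\,\big\|\textstyle\sum_i x_ix_i^{\top}\big\|_{\gamma},
\]
and then verify this inequality using the fact that the Gaussian vector $G$ with covariance $\Lambda=\sum_i x_ix_i^{\top}$ has a law depending only on $\Lambda$ — is correct and is exactly how the theorem is proved in the literature.

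There is, however, a concrete error in your constant-bookkeeping. You claim that the argument ``goes through verbatim'' if Gaussians are replaced by Rademacher signs, and that this recovers the stated constant $T_2(X)C_2(Y)$ (with $T_2,C_2$ as the paper defines them, i.e.\ Rademacher $L_1$-constants). This is false, and it is false for a structural reason: the entire argument hinges on the invariance of the law of $G$ under changing the decomposition of $\Lambda$. If you take $R=\sum_i\epsilon_ix_i$, the distribution of $R$ is \emph{not} determined by $\Lambda$; two decompositions $\Lambda=\sum_ix_ix_i^{\top}=\sum_k\mu_kz_kz_k^{\top}$ produce Rademacher sums with the same second moments but different laws, so the inequality $\mathbb E\|R\|_X^2\le T_2(X)^2\inf\sum_k\mu_k\|z_k\|_X^2$ simply does not follow from Rademacher type $2$. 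This is precisely why Gaussians appear in Maurey's proof. What your argument actually proves is the bound with the \emph{Gaussian $L_2$} type and cotype constants; to compare with the Rademacher $L_1$ constants used by the paper you must pay a comparison factor (Kahane's inequality for $L_1$ versus $L_2$, and a $\sqrt{\pi/2}$ on the cotype side from $\mathbb E|g|=\sqrt{2/\pi}$), which you neither prove nor cite. The cleanest fix is to state and prove the theorem in Pisier's convention (Gaussian $L_2$ type and cotype constants), for which your argument is tight, and record the standard comparison as a remark; your current assertion that Rademacher averaging works ``verbatim'' should be deleted.

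One small point worth making explicit: your criterion only quantifies over decompositions with \emph{equality} $\sum_ix_ix_i^{\top}=\sum_k\mu_kz_kz_k^{\top}$, whereas the classical criterion allows $\preceq$. These are equivalent (given $\Lambda\succeq\sum_ix_ix_i^{\top}$ write $\Lambda-\sum_ix_ix_i^{\top}=\sum_jw_jw_j^{\top}$, apply the equality criterion to the enlarged family, and drop the nonnegative $\|Tw_j\|^2$ terms), but this deserves a sentence since $\|\cdot\|_\gamma$ is not obviously Loewner-monotone.
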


Surprisingly Grothendieck's work which predates the work of \Kwapien and Maurey, established that 
$\factorSpConst{\ell_\infty^n}{\ell_1^m}\leq K_G$ for all $m,n\in \N$, which is not implied by the above 
theorem since $T_2(\ell_\infty^n) \to \infty$ as $n\to \infty$. Pisier~\cite{Pisier80} unified the above 
results for the case of approximable operators by proving the following: 
\begin{theorem}[Pisier]
    Let $X,Y$ be Banach spaces such that $X^*, Y$ are of Cotype-2. Then any approximable 
    operator $T:X\to Y$ factorizes through a Hilbert space. Moreover, \\$\factorConst{T}\leq 
    (2\,C_2(X^*)C_2(Y))^{3/2}$. 
\end{theorem}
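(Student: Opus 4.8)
The goal is to show that an approximable operator $T\colon X\to Y$ with $\|T\|=1$ admits a factorization $T=BC$ through a Hilbert space $H$ with $\|B\|\,\|C\|\le (2\,C_2(X^*)\,C_2(Y))^{3/2}$; writing $\Gamma_2(T)$ for the (normed operator ideal) Hilbert-factorization norm, so that $\factorConst{T}=\Gamma_2(T)/\|T\|$, it suffices to bound $\Gamma_2(T)$. The plan is to assemble three classical ingredients. \emph{First}, an approximation reduction: since $\Gamma_2$ is a \emph{maximal} operator ideal that is both injective and surjective, it is enough to obtain the bound uniformly for finite-rank $T$ — equivalently, for all compressions $v\,T\,u$ with contractions $u\colon\ell_2^k\to X$ and $v\colon Y\to\ell_2^m$ — and then pass to the limit along a finite-rank approximating net. \emph{Second}, the two ``one-sided'' factorization theorems that convert the cotype hypotheses into Hilbert-factorizations of auxiliary maps: every operator from an $\mathcal{L}_\infty$-space into the cotype-$2$ space $Y$ is $2$-summing (Maurey), hence by Pietsch's theorem factors through a Hilbert space with $\Gamma_2\le\pi_2\lesssim C_2(Y)\,\|\cdot\|$; and dually — by applying this to the adjoint of an operator $X\to\mathcal{L}_1$, which is an operator $\mathcal{L}_\infty\to X^*$ to which cotype $2$ of $X^*$ applies, and then dualizing the diagonal Pietsch factorization so obtained — every operator from $X$ into an $\mathcal{L}_1$-space factors through a Hilbert space with $\Gamma_2\lesssim C_2(X^*)\,\|\cdot\|$. \emph{Third}, the trace duality $\Gamma_2(T)=\sup\{\,|\tr(vT)|:v\ \text{finite rank},\ \Gamma_2^{\mathrm{dual}}(v)\le1\,\}$, where an element $v$ of the dual ideal decomposes as $v=a\,b$ with $b$ and $a^*$ both $2$-summing and $\Gamma_2^{\mathrm{dual}}(v)=\inf\pi_2(b)\,\pi_2(a^*)$.

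Carrying this out, for finite-rank $T$ and such a decomposition $v=a\,b$ with $b\colon Y\to Z$, $a\colon Z\to X$ on an auxiliary finite-dimensional $Z$, cyclicity of the trace turns $\tr(vT)$ into $\tr(b\,T\,a)$; replacing $b$ and $a^*$ by their Pietsch factorizations routes this composition through diagonal operators $\ell_\infty^n\to\ell_2^n$, so that $T$ is squeezed between a controlled $\ell_2\to X$ block (coming from $a$) and a controlled $Y\to\ell_2$ block (coming from $b$). Now the cotype-$2$ hypotheses enter: a $2$-summing map \emph{out of} the cotype-$2$ space $Y$ (resp.\ $X^*$) can, by a Gaussian averaging argument using Gaussian cotype (equivalent to Rademacher cotype up to a universal constant), be compared with a Hilbert--Schmidt-type quantity, which lets one estimate $\norm{S_1}{b\,T\,a}$ by $\|T\|$ times the product of two square-function norms controlled respectively by $C_2(Y)$ and $C_2(X^*)$. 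Collapsing the resulting two-sided Hilbert factorizations into a single one — gluing the finitely many $\ell_2^n$'s that appear at different ranks by a routine ultraproduct / principle-of-local-reflexivity argument — then yields the stated bound.

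The crux, and the reason the exponent is $3/2$ rather than $1$, is precisely this merging of the two one-sided factorizations, which is forced on us because an arbitrary approximable $T$ does \emph{not} factor through an $\mathcal{L}_\infty$- or $\mathcal{L}_1$-space with any control by $\|T\|$, so the two cotype inputs can only be combined through the trace-duality detour, which fails to be submultiplicative for $\Gamma_2$. The estimates one naturally gets there have the mixed form $\Gamma_2(\text{block})^2\lesssim\|\text{block}\|\cdot\pi_2(\text{adjacent block})$ — the square root coming from Pietsch together with the Gaussian averaging — and pairing such an estimate on the $Y$-side with its counterpart on the $X^*$-side and optimizing the free scaling parameter $t$ in an inequality of shape $\min_{t>0}\big(t\alpha+\beta/t\big)=2\sqrt{\alpha\beta}$ produces the factor $(2\,C_2(X^*)\,C_2(Y))^{3/2}$. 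Making every inequality in this chain quantitatively tight, and — crucially — ensuring the constant is genuinely independent of the rank so that it survives the limit $T_n\to T$, is where essentially all the work lies; the reduction to finite rank and the two one-sided theorems are, by contrast, standard.
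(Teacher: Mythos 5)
The paper does not prove this theorem; it records it as a classical result and cites Pisier~\cite{Pisier80}, so there is no internal proof for your attempt to be compared against. Judged on its own, your sketch assembles the right cast of classical ingredients — maximality, injectivity and surjectivity of the Hilbert-factorization ideal, Maurey's theorem that operators from $\mathcal{L}_\infty$-spaces into a cotype-$2$ space are $2$-summing, Pietsch factorization, and trace duality against the dual ideal — and the overall strategy (reduce to finite rank, pair $T$ against a decomposition $v=ab$ with $b$, $a^*$ both $2$-summing, feed in the two one-sided factorization theorems) is indeed the skeleton of Pisier's argument.

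However, your exposition stops short of a proof precisely where Pisier's argument is hard. The quantitative core — how the $2$-summing norm of a map out of a cotype-$2$ space is converted, via a Gaussian/Rademacher-cotype comparison, into a Hilbert--Schmidt-type estimate on the triple composition $b\,T\,a$, and how the three resulting factors are balanced so that the final bound on $\factorConst{T}$ comes out as $(2\,C_2(X^*)\,C_2(Y))^{3/2}$ — is asserted rather than carried out. The claim that the exponent $3/2$ "comes from" optimizing an inequality of the form $\min_{t>0}\bigl(t\alpha+\beta/t\bigr)=2\sqrt{\alpha\beta}$ is not substantiated: you do not identify what $\alpha$ and $\beta$ are, what scaling freedom $t$ parameterizes in the Pietsch decomposition, or how the universal factor $2$ and the combination of square roots actually produce a $3/2$ power of the product of cotype constants (as opposed to, say, $1$ or $2$). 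To turn this into a proof you would need to write out the trace-duality pairing explicitly for finite-rank $T$, fix the Pietsch measures for $b$ and $a^*$, state the Gaussian-cotype comparison with constants, and perform the optimization — the present sketch gestures at all of this but does none of it.
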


In the next section we show that for any $p^*,q\in [1,2]$, any $m,n\in \N$~  
\[
    \factorSpConst{\ell_{p}^{n}}{\ell_{q}^{m}} 
    \leq 
    \frac{1+\eps_0}{\sinh^{-1}(1)}\cdot C_2(\ell_{p^*}^{n})\cdot C_2(\ell_{q}^{m})
\] 
which improves upon Pisier's bound and for certain ranges of $(p,q)$, improves upon $K_G$ as well as the 
bound of \Kwapien-Maurey. 

\subsection{Integrality Gap Implies Factorization Upper Bound}
Known upper bounds on $\factorSpConst{X}{Y}$ involve Hahn-Banach separation arguments. 
In this section we see that for a special class of Banach spaces admitting a convex programming relaxation, 
$\factorSpConst{X}{Y}$ is bounded by the integrality gap of the relaxation as an immediate consequence 
of Convex programming duality (which of course uses a separation argument under the hood). 
A very similar observation had already been made by Tropp~\cite{Tropp09} in the special case of 
$X=\ell_{\infty}^{n}, Y=\ell_{1}^{m}$ with a slightly different convex program. 

We start by restating the relaxation in a more general setup, and stating its dual. To this end, let 
$\sqnormX,\subset\R^{n},~\sqnormY\subset \R^{m}$ be convex sets. Also let 
\[
    \sqrtnormX := \{x~|~[x]^{2}\in \sqnormX\} \quad 
    \sqrtnormY := \{y~|~[y]^{2}\in \sqnormY\} \mper
\]
Given an input matrix $A\in \R^{m\times n}$, we shall give a convex programming relaxation for the 
following problem: 
\[
    \sup_{x\in \sqrtnormX,~y\in\sqrtnormY} y^T A\,x
\]
The relaxation $\CP{A}$ (due to Nesterov \etal~\cite{NWY00}) is as follows:
\begin{align*}
    \textbf{maximize} \quad
    &\frac{1}{2}\cdot 
    \mydot{
    \left[
    \begin{array}{cc}
    0 & A \\
    A^T & 0
    \end{array}
    \right]   
    }
    {
    \left[
    \begin{array}{cc}
    \bbY & \bbW \\
    \bbW^T & \bbX
    \end{array}
    \right]   
    } 
    \quad 
    \text{s.t.} \\[0.5em]
    &
    \mathrm{diag}(\bbX)\in \sqnormX ,\quad
    \mathrm{diag}(\bbY)\in \sqnormY \\[0.5em]
    &
    \left[
    \begin{array}{cc}
    \bbY & \bbW \\
    \bbW^T & \bbX
    \end{array}
    \right]   
    \succeq 0 ,\quad 
    \bbY\in \Sym^{m\times m},~\bbX\in \Sym^{n\times n},~\bbW\in \R^{m\times n}
\end{align*}
For a vector $s$, let $D_s$ denote the diagonal matrix with $s$ as diagonal entries. 
Let 
\[
    \support{B}{s} := \sup_{x\in B} |\mysmalldot{x}{s}| \mper
\]
The dual program $\CPD{A}$ is as follows: 
\begin{align*}
    & \textbf{minimize} \quad 
    (\support{\sqnormY}{s}+\support{\sqnormX}{t})/2 \quad \text{s.t.}\\
    &\left[
    \begin{array}{cc}
    D_s & -A \\
    -A^T & D_t
    \end{array}
    \right]   
    \succeq 0 ,\quad 
    s\in \R^{m},~t\in \R^{n}\mper
\end{align*}
Strong duality is satisfied, i.e. $\CPD{A}=\CP{A}$, and a proof can be found in Theorem 13.2.3 of 
\cite{NWY00}. Assume $\sqrtnormX$ and $\sqrtnormY$ are convex and let $\norm{\sqrtnormX}{\cdot}$ 
and $\norm{\sqrtnormY}{\cdot}$ respectively denote the norms they induce. 
For Banach spaces $X$ over $\R^n$, $Y$ over $\R^m$ and an operator $A:X\to Y$, we define 
\[
    \threeFactorConst{A} := \inf_{D_1BD_2 = A} 
    \frac{\norm{X}{2}{D_2}\cdot\norm{2}{2}{B}\cdot\norm{2}{Y}{D_1}}{\norm{X}{Y}{A}} \qquad 
    \threeFactorSpConst{X}{Y} := \sup_{A:X\to Y} \threeFactorConst{A}
\]
where the infimum runs over diagonal matrices $D_1,D_2$ and $B\in\R^{m\times n}$. 
Clearly, $\factorConst{A}\leq \threeFactorConst{A}$ and therefore $\factorSpConst{X}{Y} \leq 
\threeFactorSpConst{X}{Y}$. 

Henceforth, we fix $X$ and $Y$ to be the Banach spaces $(\R^{n},\norm{\sqrtnormX}{\cdot})$ and 
$(\R^{m},\norm{\sqrtnormY^{\,\,\ast}}{\cdot})$ respectively.
As was the approach of Grothendieck, we give an upper bound on 
$\factorSpConst{X}{Y}$ by giving an upper bound on 
$\threeFactorSpConst{X}{Y}$. We do this by showing 
\begin{lemma}
    For any $A:X\to Y$, ~~
    $\threeFactorConst{A}~\leq ~\CPD{A}/\norm{\sqrtnormX}{\sqrtnormY^{\,*}}{A}$.
\end{lemma}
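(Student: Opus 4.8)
The statement to prove is that for any continuous operator $A \colon X \to Y$ (with $X = (\R^n, \norm{\sqrtnormX}{\cdot})$ and $Y = (\R^m, \norm{\sqrtnormY^{\,*}}{\cdot})$), one has $\threeFactorConst{A} \leq \CPD{A}/\norm{\sqrtnormX}{\sqrtnormY^{\,*}}{A}$. The plan is to exhibit an explicit factorization $A = D_1 B D_2$ through $\ell_2$ whose product of operator norms matches the dual objective $\CPD{A}$, and then divide by $\norm{\sqrtnormX}{\sqrtnormY^{\,*}}{A}$.

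First I would take an optimal (or near-optimal) solution $(s, t)$ to the dual program $\CPD{A}$, so that the block matrix $\begin{pmatrix} D_s & -A \\ -A^T & D_t \end{pmatrix}$ is positive semidefinite and $(\support{\sqnormY}{s} + \support{\sqnormX}{t})/2 = \CPD{A}$. By scaling $s \mapsto \lambda s$, $t \mapsto \lambda^{-1} t$ for the right $\lambda > 0$ (which preserves PSD-ness of the block matrix, since it is a congruence by $\mathrm{diag}(\lambda^{1/2} I, \lambda^{-1/2} I)$), I may assume $\support{\sqnormY}{s} = \support{\sqnormX}{t} = \CPD{A}$. Next I would observe that PSD-ness of the block matrix is equivalent (via the Schur complement / the standard fact that $\begin{pmatrix} P & -A \\ -A^T & Q\end{pmatrix} \succeq 0$ with $P, Q \succ 0$ forces $A = P^{1/2} R Q^{1/2}$ for a contraction $R$) to writing $A = D_s^{1/2} \, R \, D_t^{1/2}$ where $\norm{2}{2}{R} \leq 1$. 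Here one should first perturb $s, t$ slightly to make them strictly positive so $D_s^{1/2}, D_t^{1/2}$ are invertible, then pass to the limit — this is the kind of routine technical step I would not grind through.

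Then I would set $D_2 := D_t^{1/2}$, $B := R$, $D_1 := D_s^{1/2}$, giving $A = D_1 B D_2$, and bound the three factor norms. The middle factor satisfies $\norm{2}{2}{B} \leq 1$. For the outer factors, the point is that $\norm{X}{2}{D_t^{1/2}}$ and $\norm{2}{Y}{D_s^{1/2}}$ are exactly the support-function quantities: for instance $\norm{X}{2}{D_t^{1/2}}^2 = \sup_{\norm{\sqrtnormX}{x} \le 1} \cnorm{2}{D_t^{1/2} x}^2 = \sup_{\norm{\sqrtnormX}{x}\le 1} \sum_i t_i \, x(i)^2 = \sup_{[x]^2 \in \sqnormX} \mysmalldot{[x]^2}{t} = \support{\sqnormX}{t}$, using that $\sqnormX$ is (essentially) the set of squares of vectors in the unit ball of $\norm{\sqrtnormX}{\cdot}$, and similarly (using duality $\norm{2}{Y}{D_s^{1/2}} = \norm{Y^*}{2}{D_s^{1/2}}$ and $Y^* = (\R^m, \norm{\sqrtnormY}{\cdot})$) one gets $\norm{2}{Y}{D_s^{1/2}}^2 = \support{\sqnormY}{s}$. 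Multiplying, $\norm{X}{2}{D_2}\cdot \norm{2}{2}{B} \cdot \norm{2}{Y}{D_1} \leq \sqrt{\support{\sqnormX}{t}} \cdot \sqrt{\support{\sqnormY}{s}} = \CPD{A}$ by the AM-GM normalization chosen above. Dividing by $\norm{\sqrtnormX}{\sqrtnormY^{\,*}}{A}$ yields $\threeFactorConst{A} \leq \CPD{A}/\norm{\sqrtnormX}{\sqrtnormY^{\,*}}{A}$.

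\textbf{Main obstacle.} The genuinely substantive point — as opposed to bookkeeping — is the second step: extracting the contractive ``core'' $R$ from the PSD block matrix and making the identification $A = D_s^{1/2} R D_t^{1/2}$ rigorous when $s$ or $t$ has zero coordinates (so $D_s^{1/2}$ is not invertible). I would handle this by a standard limiting argument (replace $t$ by $t + \epsilon \mathbf 1$, $s$ by $s + \epsilon \mathbf 1$, which keeps the block matrix PSD up to an $O(\epsilon)$ error absorbed into the objective, build the factorization, and let $\epsilon \to 0$ using compactness of the unit ball of contractions). The other mild subtlety is correctly matching $\sqnormX, \sqnormY$ to the norm balls of $X, Y^*$ so that the support-function identities hold verbatim; this is just unwinding definitions of $\sqrtnormX$, $\sqrtnormY$ and the convention that $\ftwo{f}$-type quantities for folded/square objects are applied to the square.
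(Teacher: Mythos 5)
Your proposal is correct and follows essentially the same route as the paper: take an optimal dual pair $(s,t)$, set $D_1 = D_s^{1/2}$, $D_2 = D_t^{1/2}$, extract a contraction $B$ from the PSD dual block matrix, and bound $\norm{X}{2}{D_2}$, $\norm{2}{Y}{D_1}$ by $\sqrt{\support{\sqnormX}{t}}$, $\sqrt{\support{\sqnormY}{s}}$. The one place the paper is slicker — precisely the point you flag as the main obstacle — is the handling of zero diagonal entries: instead of perturbing $s,t$ and passing to a limit over contractions, it sets $B := \inparen{D_s^{1/2}}^{\dagger} A \inparen{D_t^{1/2}}^{\dagger}$ with pseudoinverses, observes via a $2\times 2$ principal-minor argument that $s_i=0$ forces the $i$-th row of $A$ to vanish (so $D_1 B D_2 = A$ holds exactly, no limit needed), and verifies $\norm{2}{2}{B}\le 1$ by conjugating the PSD block matrix by $\mathrm{diag}(\inparen{D_s^{1/2}}^\dagger,\inparen{D_t^{1/2}}^\dagger)$ and padding the resulting $\{0,1\}$-valued diagonal blocks up to identities; it also records the norm bound as an inequality $\norm{X}{2}{D_t^{1/2}} \le \sqrt{\support{\sqnormX}{t}}$ rather than your claimed equality (which can fail for general convex $\sqnormX$ since only nonnegative vectors arise as $[x]^2$, though the inequality direction is all you use), and applies AM-GM $\sqrt{ab}\le(a+b)/2$ directly where you first rescale $s,t$ to equalize the two support terms — both normalizations are valid.
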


\begin{proof}
    Consider an optimal solution to $\CPD{A}$. We will show 
    \[
        \inf_{D_1BD_2 = A} \norm{X}{2}{D_2}\cdot\norm{2}{2}{B}\cdot\norm{2}{Y}{D_1}
        \leq ~
        \CPD{A}
    \]
    by taking $D_1 := D_{s}^{1/2}$, ~$D_2 := D_{t}^{1/2}$ and 
    $B := \inparen{D_{s}^{1/2}}^{\dagger} A \inparen{D_{t}^{1/2}}^{\dagger}$ 
    (where for a diagonal matrix $D$, $D^{\dagger}$ only inverts the non-zero diagonal entries and 
    zero-entries remain the same). 
    Note that $s_i = 0$ (resp. $t_i=0$) implies the $i$-th row (resp. $i$-th column) of $A$ is all zeroes, since 
    otherwise one can find a $2\times 2$ principal submatrix (of the block matrix in the relaxation) 
    that is not PSD. This implies that $D_1 B D_2 = A$.  
    
    It remains to show that 
    $\norm{X}{2}{D_2}\cdot\norm{2}{2}{B}\cdot\norm{2}{Y}{D_1}\leq ~\CPD{A}$. 
    Now we have, 
    \[
        \norm{X}{2}{D_{t}^{1/2}}
        =
        \sup_{x\in \sqrtnormX} \norm{2}{D_{t}^{1/2} x} 
        = 
        \sup_{x\in \sqrtnormX} \sqrt{\mysmalldot{t}{[x]^{2}}}  
        \leq  
        \sup_{x^1\in \sqnormX} \sqrt{|\mysmalldot{t}{x^1}|}  
        =
        \sqrt{\support{\sqnormX}{t}} \mper
    \]
    Similarly, since~ $\norm{2}{Y}{D_1} = \norm{Y^*}{2}{D_1}$~ we have 
    \[
        \norm{Y^*}{2}{D_{1}} 
        \leq  
        \sqrt{\support{\sqnormY}{s}} \mper
    \]
    Thus it suffices to show $\norm{2}{2}{B}\leq 1$ since 
    \[
        \norm{X}{2}{D_2}\cdot\norm{2}{Y}{D_1}
        \leq 
        \sqrt{\support{\sqnormX}{s}\cdot \support{\sqnormY}{t}} 
        \leq 
        (\support{\sqnormY}{s}+\support{\sqnormX}{t})/2
        =
        \CPD{A} \mper
    \]
    We have, 
    \begin{align*}
        &\left[
        \begin{array}{cc}
        D_s & -A \\
        -A^T & D_t
        \end{array}
        \right]   
        \succeq 
        0 \\
        \Rightarrow
        &\left[
        \begin{array}{cc}
        \inparen{D_{s}^{1/2}}^{\dagger} & 0 \\
        0 & \inparen{D_{t}^{1/2}}^{\dagger}
        \end{array}
        \right]
        \left[
        \begin{array}{cc}
        D_s & -A \\
        -A^T & D_t
        \end{array}
        \right]
        \left[
        \begin{array}{cc}
        \inparen{D_{s}^{1/2}}^{\dagger} & 0 \\
        0 & \inparen{D_{t}^{1/2}}^{\dagger}
        \end{array}
        \right]   
        \succeq 
        0 \\
        \Rightarrow 
        &\left[
        \begin{array}{cc}
        D_{\bars} & -B \\
        -B^T & D_{\bart}
        \end{array}
        \right]
        \succeq 
        0 \qquad
        \text{for some ~} \bars\in \{0,1\}^{m},~ \bart \in \{0,1\}^{n} \\
        \Rightarrow 
        &\left[
        \begin{array}{cc}
        \id & -B \\
        -B^T & \id
        \end{array}
        \right]
        \succeq 
        0 \\
        \Rightarrow 
        &~\,\norm{2}{2}{B}
        \leq 1
    \end{align*}
\end{proof}

\subsection[Improved Factorization Bounds for Certain p,q-norms]
{Improved Factorization Bounds for Certain $\ell_{p}^{n},\ell_{q}^{m}$}
Let $1\leq q \leq 2 \leq p\leq \infty$. Then taking $\sqnormX$ to be the $\ell_{p/2}^{n}$ unit ball and 
$\sqnormY$ to be the $\ell_{q^*/2}^{m}$ unit ball, we have $\sqrtnormX$ and $\sqrtnormY$ are 
respectively the unit balls in $\ell_{p}^{n}$ and $\ell_{q^*}^{m}$. Therefore, $X$ and $Y$ as defined above 
are the spaces $\ell_{p}^{n}$ and $\ell_{q}^{m}$ respectively. Hence, we obtain 
\begin{theorem}[$\ell_{p}^{n}\to \ell_{q}^{m}$ factorization]
\label[theorem]{p:q:factorization}
    If $1\leq q \leq 2\leq p\leq \infty$, then for any $m,n\in \N$ and $\eps_0= 0.00863$, 
    \[
        \factorSpConst{\ell_{p}^{n}}{\ell_{q}^{m}}
        ~\leq ~
        \frac{1+\eps_0}{\sinh^{-1}(1)\cdot \gamma_{p^*}\,\gamma_{q}}
        ~\leq ~
        \frac{1+\eps_0}{\sinh^{-1}(1)}\cdot C_2(\ell_{p^*}^{n})\cdot C_2(\ell_{q}^{m})
    \]. 
\end{theorem}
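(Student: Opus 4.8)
The statement will follow by assembling three ingredients that are already in place: the factorization-to-convex-programming reduction of the previous subsection, strong duality for Nesterov's relaxation, and the integrality-gap bound from the rounding analysis. Concretely, for any Banach spaces $X,Y$ arising from convex sets $\sqnormX,\sqnormY$ in the framework above, we have the chain $\factorConst{A}\le\threeFactorConst{A}\le \CPD{A}/\norm{X}{Y}{A}$, where the first inequality is the trivial observation (stated in the excerpt) that a factorization $A=D_1BD_2$ with $B:\ell_2\to\ell_2$ is in particular a Hilbert-space factorization with $\|BD_2\|\cdot\|D_1\|\le\|D_2\|\cdot\|B\|\cdot\|D_1\|$, and the second is the lemma ``$\threeFactorConst{A}\le\CPD{A}/\norm{\sqrtnormX}{\sqrtnormY^{\,*}}{A}$'' just proved via conic duality.

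The first step is to specialize the framework: take $\sqnormX$ to be the unit ball of $\ell_{p/2}^n$ and $\sqnormY$ the unit ball of $\ell_{q^*/2}^m$ (both convex since $p/2,q^*/2\ge 1$ when $q\le 2\le p$). Then $\sqrtnormX$ and $\sqrtnormY$ are exactly the unit balls of $\ell_p^n$ and $\ell_{q^*}^m$, so the attached spaces are $X=\ell_p^n$ and $Y=(\R^m,\|\cdot\|_{(\sqrtnormY)^{*}})=\ell_q^m$, and $\sup_{x\in\sqrtnormX,y\in\sqrtnormY}y^{T}Ax=\norm{p}{q}{A}$ by Hölder duality. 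One should check here that $\CP{A}$ in this instantiation is literally the relaxation of Figure~\ref{fig:convex} (the constraint $\|\mathrm{diag}(\bbX)\|_{p/2}\le1$ is the same as $\sum_j\|v^j\|_2^{p}\le1$, and similarly for $\bbY$), and that the boundary cases $p=\infty$ and $q=1$ are unproblematic (e.g.\ $\ell_{p/2}^n=\ell_\infty^n$ when $p=\infty$).

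Now I would chain everything together: for every $A:\ell_p^n\to\ell_q^m$, writing $a=p^*-1$ and $b=q-1$,
\[
\factorConst{A}\ \le\ \threeFactorConst{A}\ \le\ \frac{\CPD{A}}{\norm{p}{q}{A}}\ =\ \frac{\CP{A}}{\norm{p}{q}{A}}\ \le\ \frac{1}{\nhin{a}{b}{1}\,\gamma_q\,\gamma_{p^*}}\ \le\ \frac{1+\eps_0}{\sinh^{-1}(1)\,\gamma_{p^*}\,\gamma_q},
\]
where the equality is strong duality of $\CP{A}$/$\CPD{A}$, the penultimate inequality is the integrality-gap bound $\CP{A}/\norm{p}{q}{A}\le 1/(\nhin{a}{b}{1}\gamma_q\gamma_{p^*})$, and the last uses $\nhin{a}{b}{1}\ge\sinh^{-1}(1)/(1+\eps_0)$ with $\eps_0=0.00863$, both established in the rounding section. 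Taking the supremum over all $A$ gives the first inequality $\factorSpConst{\ell_p^n}{\ell_q^m}\le (1+\eps_0)/(\sinh^{-1}(1)\gamma_{p^*}\gamma_q)$. For the second inequality I would invoke the recorded Cotype-2 values: since $q\le2$ and $p^*\le2$, we have $C_2(\ell_r^k)=\max\{2^{1/r-1/2},1/\gamma_r\}\ge 1/\gamma_r$ for $r\in\{q,p^*\}$, hence $1/(\gamma_{p^*}\gamma_q)\le C_2(\ell_{p^*}^n)C_2(\ell_q^m)$, which finishes the proof.

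I do not anticipate a real obstacle: the substantive work — the hypergeometric identity for $\nf{a}{b}{\rho}$, the contour-integral estimates bounding the inverse Taylor coefficients, and the Hahn--Banach/conic-duality argument behind the factorization lemma — is all done in earlier sections. The only care needed is bookkeeping: confirming that the choice of $(\sqnormX,\sqnormY)$ reproduces the pair $(\ell_p^n,\ell_q^m)$ and the norm $\norm{p}{q}{\cdot}$ exactly, including the dualization $\sqrtnormY\mapsto(\sqrtnormY)^{*}$ and the endpoints $p=\infty$, $q=1$, and keeping the direction of the operator ($X\to Y$ versus the bilinear form) consistent throughout.
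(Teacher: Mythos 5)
Your proposal is correct and matches the paper's own argument essentially line for line: the same instantiation of $\sqnormX,\sqnormY$ as the $\ell_{p/2}^n$ and $\ell_{q^*/2}^m$ balls, the same chain $\factorConst{A}\le\threeFactorConst{A}\le\CPD{A}/\norm{p}{q}{A}=\CP{A}/\norm{p}{q}{A}\le 1/(\nhin{a}{b}{1}\gamma_{p^*}\gamma_q)\le(1+\eps_0)/(\sinh^{-1}(1)\gamma_{p^*}\gamma_q)$, and the same appeal to $C_2(\ell_r^k)\ge 1/\gamma_r$ for $r\le 2$ to obtain the second inequality.
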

\noindent
This improves upon Pisier's bound and for a certain range of $(p,q)$, improves upon $K_G$ as well as the 
bound of \Kwapien-Maurey. 
\bigskip

On a slightly unrelated note, straightforward observations imply that the integrality gap of $\CP{A}$ for any 
pair of convex sets $\sqnormX,\sqnormY$ is $K_G$ (Grothendieck's constant). This provides a class of 
Banach space pairs for which $K_G$ is an upper bound on the factorization constant, and it would be 
interesting to get a better understanding of how this class compares to that of Pisier. We include a proof of 
this in the next section. 

\subsection[Grothendieck Bound on Approximation Ratio]{$K_G$ Bound on Approximation Ratio}
\label[subsection]{KG:worst:case}
In this subsection, we prove that for any pair of convex sets $\sqnormX$ and $\sqnormY$ such that 
$\sqrtnormX$ and $\sqrtnormY$ are convex, the approximation ratio is bounded by $K_G$. As in the previous 
section, we fix $X$ and $Y$ to be the Banach spaces $(\R^{n},\norm{\sqrtnormX}{\cdot})$ and 
$(\R^{m},\norm{\sqrtnormY^{\,\,\ast}}{\cdot})$ respectively.
\begin{lemma}
    For any $A:X\to Y$, ~~
    $\CP{A}/\norm{X}{Y}{A} ~\leq ~ K_G$.
\end{lemma}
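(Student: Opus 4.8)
The plan is to round an optimal solution of $\CP{A}$ to a feasible point of the bilinear problem $\sup_{x\in\sqrtnormX,\,y\in\sqrtnormY} y^TAx$, which equals $\norm{X}{Y}{A}$ (by $\ell_p$-type duality, using that $\sqrtnormX$ and $\sqrtnormY$ are symmetric), losing only a factor $K_G$ via Grothendieck's inequality. Let $(\bbY,\bbW,\bbX)$ be an optimal solution of $\CP{A}$ and fix a Gram decomposition $\bbY = UU^T$, $\bbX = VV^T$, $\bbW = UV^T$ with $U\in\R^{m\times(m+n)}$, $V\in\R^{n\times(m+n)}$; write $u^i$ for the $i$-th row of $U$ and $v^j$ for the $j$-th row of $V$. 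Then the objective value is $\CP{A} = \sum_{i,j}A_{ij}\mysmalldot{u^i}{v^j}$, while the diagonal constraints say precisely that $[(\norm{2}{u^i})_{i\in[m]}]^2 = \diag(\bbY)\in\sqnormY$ and $[(\norm{2}{v^j})_{j\in[n]}]^2 = \diag(\bbX)\in\sqnormX$.

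The key move is to absorb the lengths of the Gram vectors into the matrix. Define $B\in\R^{m\times n}$ by $B_{ij} := \norm{2}{u^i}\cdot\norm{2}{v^j}\cdot A_{ij}$ and set $\widehat{u}^i := u^i/\norm{2}{u^i}$, $\widehat{v}^j := v^j/\norm{2}{v^j}$ (if $\norm{2}{u^i}=0$ then the $i$-th row of $B$ vanishes and we may take $\widehat{u}^i$ to be any unit vector; likewise for $v^j$). Then $\sum_{i,j}B_{ij}\mysmalldot{\widehat{u}^i}{\widehat{v}^j} = \sum_{i,j}A_{ij}\mysmalldot{u^i}{v^j} = \CP{A}$, and Grothendieck's inequality applied to $B$ together with the unit vectors $\{\widehat{u}^i\},\{\widehat{v}^j\}$ gives
\[
  \CP{A} ~=~ \sum_{i,j}B_{ij}\mysmalldot{\widehat{u}^i}{\widehat{v}^j} ~\leq~ K_G\cdot\max_{\epsilon\in\{\pm1\}^m,\ \delta\in\{\pm1\}^n}\ \sum_{i,j}B_{ij}\,\epsilon_i\,\delta_j\mper
\]

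It remains to interpret each sign pattern as a feasible point. Given $\epsilon,\delta$, put $y := \epsilon\circ(\norm{2}{u^i})_{i\in[m]}$ and $x := \delta\circ(\norm{2}{v^j})_{j\in[n]}$. Since coordinate-wise squaring erases signs, $[y]^2 = \diag(\bbY)\in\sqnormY$ and $[x]^2 = \diag(\bbX)\in\sqnormX$, so $y\in\sqrtnormY$ and $x\in\sqrtnormX$, and therefore $y^TAx\leq \norm{X}{Y}{A}$. On the other hand $y^TAx = \sum_{i,j}\epsilon_i\norm{2}{u^i}A_{ij}\delta_j\norm{2}{v^j} = \sum_{i,j}B_{ij}\,\epsilon_i\,\delta_j$. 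Taking the maximum over $\epsilon,\delta$ and combining with the previous display yields $\CP{A}\leq K_G\cdot\norm{X}{Y}{A}$, as claimed.

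There is no substantial obstacle: the statement follows directly from Grothendieck's inequality once the rescaling is set up correctly. The only points requiring care are the bookkeeping verifications in the last paragraph — that coordinate sign flips preserve membership in $\sqrtnormX$ and $\sqrtnormY$, that the diagonal constraints of $\CP{A}$ place the squared-length vectors exactly into $\sqnormX$ and $\sqnormY$ (so the rounded pair is feasible for \emph{every} sign choice), and the trivial handling of zero-length rows — together with the observation at the outset that $\norm{X}{Y}{A} = \sup_{x\in\sqrtnormX,\,y\in\sqrtnormY} y^TAx$, so that $\CP{A}$ is genuinely a relaxation of this quantity.
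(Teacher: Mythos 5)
Your proof is correct, and it is essentially the same argument as the paper's: both proofs absorb the diagonal magnitudes into the matrix (you form $B_{ij} = \norm{2}{u^i}\norm{2}{v^j}A_{ij}$, the paper conjugates $B$ by $\Diag{d_y\oplus d_x}$), both use that the diagonal constraints of $\CP{A}$ exactly characterize the magnitude profiles allowed in $\sqrtnormX,\sqrtnormY$, and both then invoke the bilinear Grothendieck inequality in its standard form. The only difference is the direction of reading: you round a Gram decomposition of the relaxation optimum down to signs, while the paper decomposes the true optimizer $x$ as $\abs{[x]}\circ\sgn[x]$ and applies Grothendieck in the other direction to recover $\CP{A}$ — the same proof, mirrored.
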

\begin{proof}
Let 
$
B :=  \frac{1}{2}  \left[
    \begin{array}{cc}
    0 & A \\
    A^T & 0
    \end{array}
    \right]$.
The main intuition of the proof is to decompose $x \in \sqrtnormX$ as $x = |[x]| \circ \sgn{[x]}$ 
(where $\circ$ denotes Hadamard/entry-wise multiplication), 
and then use Grothendieck's inequality on $\sgn{[x]}$ and $\sgn{[y]}$. Another simple observation is that 
for any convex set $\calF$, the feasible set we optimize over is invariant under {\em factoring out the 
magnitudes of the diagonal entries}. In other words, 
\begin{align}
    &\{ \Diag{d} \ \Sigma \ \Diag{d} : d \in \sqrt{\calF} \cap \R^n_{\geq 0},~ 
    \Sigma \succeq 0,~ \diag(\Sigma) = 1 \} \nonumber \\
    = 
    &\{ \bbX : \diag(\bbX) \in \calF,~ \bbX \succeq 0 \} \label[equation]{eq:feasible_sets}
\end{align}
We will apply the above fact for $\calF = \sqnormX \oplus \sqnormY$. 
Let $\sqrtnormX^+$ denote $\sqrtnormX\cap \R_{\geq 0}^n$ (analogous for $\sqrtnormY^+$). 
Now simple algebraic manipulations yield 
\begin{align*}
    &\quad \norm{X}{Y}{A} \\
    =&\quad 
    \max_{x \in \sqrtnormX,~ y \in \sqrtnormY} (y \oplus x)^T B (y \oplus x) \\
    =&\quad 
    \max_{\substack{d_x \in \sqrtnormX^+,\  \sigma_x \in \{ \pm 1 \}^{n}, \\
    d_y \in \sqrtnormY^+, \ \sigma_y \in \{ \pm 1 \}^{m}}} 
    ((d_y \circ \sigma_y) \oplus (d_x \circ \sigma_x))^T ~B~ ((d_y \circ \sigma_y) \oplus (d_x \circ \sigma_x)) \\
    =&\quad 
    \max_{\substack{d_x \in \sqrtnormX^+, \ \sigma_x \in \{ \pm 1 \}^{n}, \\
    d_y \in \sqrtnormY^+, \ \sigma_y \in \{ \pm 1 \}^{m}}} 
    (\sigma_y \oplus \sigma_x)^T  (\Diag{d_y \oplus d_x} ~B~ \Diag{d_y \oplus d_x}) 
    (\sigma_y \oplus \sigma_x) \\
    \geq &\quad 
    (1/K_G) \cdot 
    \max_{\substack{d_x \in \sqrtnormX^+, \ d_y \in \sqrtnormY^+, \\
    \Sigma :~ \diag(\Sigma) = 1, ~\Sigma \succeq 0}}  
    \mydot{ \Sigma~}{~\Diag{d_y \oplus d_x} ~B~  \Diag{d_y \oplus d_x}} 
    &&(\text{Grothendieck})\\
    = &\quad 
    (1/K_G) \cdot 
    \max_{\substack{d_x \in \sqrtnormX^+, \ d_y \in \sqrtnormY^+, \\
    \Sigma :~ \diag(\Sigma) = 1, ~\Sigma \succeq 0}}  
    \mydot{\Diag{d_y \oplus d_x} ~\Sigma ~  \Diag{d_y \oplus d_x}~}{ ~B} \\
    =&\quad 
    (1/K_G) \cdot \CP{A} &&(\text{by \cref{eq:feasible_sets}})\mper
\end{align*}
\end{proof}

\bibliographystyle{alpha}
\bibliography{bibs,sa_csp,polynomials,p-to-q-refs,p-to-q_algo-refs}

\newcommand{\etalchar}[1]{$^{#1}$}
\begin{thebibliography}{KMOW17b}

\bibitem[AAT05]{AroraAT05}
Michael Alekhnovich, Sanjeev Arora, and Iannis Tourlakis.
\newblock Towards strong nonapproximability results in the {L}ovasz-{S}chrijver
  hierarchy.
\newblock In {\em Proceedings of the 37th ACM Symposium on Theory of
  Computing}, pages 294--303, 2005.

\bibitem[ABH{\etalchar{+}}05]{ABHKS05}
Sanjeev Arora, Eli Berger, Elad Hazan, Guy Kindler, and Muli Safra.
\newblock On non-approximability for quadratic programs.
\newblock In {\em Foundations of Computer Science, 2005. FOCS 2005. 46th Annual
  IEEE Symposium on}, pages 206--215. IEEE, 2005.

\bibitem[ABI86]{ABI86}
Noga Alon, L{\'a}szl{\'o} Babai, and Alon Itai.
\newblock A fast and simple randomized parallel algorithm for the maximal
  independent set problem.
\newblock {\em Journal of algorithms}, 7(4):567--583, 1986.

\bibitem[ABL02]{AroraBL02}
Sanjeev Arora, B{\'e}la Bollob{\'a}s, and L{\'a}szl{\'o} Lov{\'a}sz.
\newblock Proving integrality gaps without knowing the linear program.
\newblock In {\em Proceedings of the 43rd IEEE Symposium on Foundations of
  Computer Science}, pages 313--322. IEEE, 2002.

\bibitem[ABLT06]{AroraBLT06}
Sanjeev Arora, B{\'e}la Bollob{\'a}s, L{\'a}szl{\'o} Lov{\'a}sz, and Iannis
  Tourlakis.
\newblock Proving integrality gaps without knowing the linear program.
\newblock {\em Theory of Computing}, 2(2):19--51, 2006.

\bibitem[ACCV21]{Awasthi21a}
Pranjal Awasthi, Vaggos Chatziafratis, Xue Chen, and Aravindan Vijayaraghavan.
\newblock Adversarially robust low dimensional representations.
\newblock In Mikhail Belkin and Samory Kpotufe, editors, {\em Proceedings of
  Thirty Fourth Conference on Learning Theory}, volume 134 of {\em Proceedings
  of Machine Learning Research}, pages 237--325. PMLR, 15--19 Aug 2021.

\bibitem[AH13]{AustrinH13}
Per Austrin and Johan H{\aa}stad.
\newblock On the usefulness of predicates.
\newblock {\em ACM Trans. Comput. Theory}, 5(1):1:1--1:24, May 2013.

\bibitem[AK06a]{AK06}
Fernando Albiac and Nigel~J Kalton.
\newblock {\em Topics in Banach space theory, volume 233 of Graduate Texts in
  Mathematics}.
\newblock Springer, New York, 2006.

\bibitem[AK06b]{AlbiacK06}
Fernando Albiac and Nigel~John Kalton.
\newblock {\em Topics in {Banach} space theory}, volume 233.
\newblock Springer, 2006.

\bibitem[AMMN06]{AMMN06}
Noga Alon, Konstantin Makarychev, Yury Makarychev, and Assaf Naor.
\newblock Quadratic forms on graphs.
\newblock {\em Inventiones mathematicae}, 163(3):499--522, 2006.
\newblock Conference version in STOC '05.

\bibitem[AN04]{AN04}
Noga Alon and Assaf Naor.
\newblock Approximating the cut-norm via {G}rothendieck's inequality.
\newblock In {\em Proceedings of the Thirty-sixth Annual ACM Symposium on
  Theory of Computing}, STOC 2004, pages 72--80, New York, NY, USA, 2004. ACM.

\bibitem[BBB{\etalchar{+}}19]{BBBKLW19}
Frank Ban, Vijay Bhattiprolu, Karl Bringmann, Pavel Kolev, Euiwoong Lee, and
  David~P. Woodruff.
\newblock {\em A PTAS for $\ell_p$-Low Rank Approximation}, pages 747--766.
\newblock Society for Industrial and Applied Mathematics, 2019.

\bibitem[BBH{\etalchar{+}}12]{BBHKSZ12}
Boaz Barak, Fernando~GSL Brandao, Aram~W Harrow, Jonathan Kelner, David
  Steurer, and Yuan Zhou.
\newblock Hypercontractivity, sum-of-squares proofs, and their applications.
\newblock In {\em Proceedings of the forty-fourth annual ACM symposium on
  Theory of computing}, pages 307--326. ACM, 2012.

\bibitem[BCK15]{BarakCK15}
Boaz Barak, Siu~On Chan, and Pravesh~K. Kothari.
\newblock Sum of squares lower bounds from pairwise independence.
\newblock In {\em Proceedings of the 47th ACM Symposium on Theory of
  Computing}, pages 97--106, New York, NY, USA, 2015. ACM.

\bibitem[BdOFV14]{BFV14}
Jop Bri{\"e}t, Fernando~M{\'a}rio de~Oliveira~Filho, and Frank Vallentin.
\newblock {G}rothendieck inequalities for semidefinite programs with rank
  constraint.
\newblock {\em Theory Of Computing}, 10(4):77--105, 2014.

\bibitem[BGG{\etalchar{+}}17]{BGGLT17}
Vijay Bhattiprolu, Mrinalkanti Ghosh, Venkatesan Guruswami, Euiwoong Lee, and
  Madhur Tulsiani.
\newblock Weak decoupling, polynomial folds and approximate optimization over
  the sphere.
\newblock In {\em 58th {IEEE} Annual Symposium on Foundations of Computer
  Science, {FOCS} 2017, Berkeley, CA, USA, October 15-17, 2017}, pages
  1008--1019, 2017.

\bibitem[BGK{\etalchar{+}}01]{BGKKLS01}
Andreas Brieden, Peter Gritzmann, Ravindran Kannan, Victor Klee, L{\'a}szl{\'o}
  Lov{\'a}sz, and Mikl{\'o}s Simonovits.
\newblock Deterministic and randomized polynomial-time approximation of radii.
\newblock {\em Mathematika}, 48(1-2):63--105, 2001.

\bibitem[BGL16]{BGL16}
Vijay Bhattiprolu, Venkatesan Guruswami, and Euiwoong Lee.
\newblock Certifying random polynomials over the unit sphere via sum of squares
  hierarchy.
\newblock {\em arXiv preprint arXiv:1605.00903}, 2016.

\bibitem[BGMT12]{BenabbasGMT12}
Siavosh Benabbas, Konstantinos Georgiou, Avner Magen, and Madhur Tulsiani.
\newblock {SDP} gaps from pairwise independence.
\newblock {\em Theory of Computing}, 8(12):269--289, 2012.

\bibitem[BH13]{BH13}
Fernando~GSL Brandao and Aram~W Harrow.
\newblock Quantum de finetti theorems under local measurements with
  applications.
\newblock In {\em Proceedings of the forty-fifth annual ACM symposium on Theory
  of computing}, pages 861--870. ACM, 2013.

\bibitem[BH15]{BH15}
Fernando~GSL Brandao and Aram~W Harrow.
\newblock Estimating operator norms using covering nets.
\newblock {\em arXiv preprint arXiv:1509.05065}, 2015.

\bibitem[BHK{\etalchar{+}}16]{BHKKMP16}
Boaz Barak, Samuel~B Hopkins, Jonathan Kelner, Pravesh~K Kothari, Ankur Moitra,
  and Aaron Potechin.
\newblock A nearly tight sum-of-squares lower bound for the planted clique
  problem.
\newblock In {\em Proceedings of the 57th annual IEEE symposium on Foundations
  of Computer Science}, FOCS '16, pages 428--437, 2016.

\bibitem[Bis11]{Biswal11}
Punyashloka Biswal.
\newblock Hypercontractivity and its applications.
\newblock {\em arXiv preprint arXiv:1101.2913}, 2011.

\bibitem[BKS14]{BKS14}
Boaz Barak, Jonathan~A. Kelner, and David Steurer.
\newblock Rounding sum-of-squares relaxations.
\newblock In {\em Proceedings of the Forty-sixth Annual {ACM} Symposium on
  Theory of Computing}, STOC 2014, pages 31--40, New York, NY, USA, 2014. ACM.

\bibitem[BKS15]{BKS15}
Boaz Barak, Jonathan~A Kelner, and David Steurer.
\newblock Dictionary learning and tensor decomposition via the sum-of-squares
  method.
\newblock In {\em Proceedings of the Forty-Seventh Annual ACM on Symposium on
  Theory of Computing}, pages 143--151. ACM, 2015.

\bibitem[BKS17]{BKS17}
Boaz Barak, Pravesh~K. Kothari, and David Steurer.
\newblock Quantum entanglement, sum of squares, and the log rank conjecture.
\newblock In {\em Proceedings of the 49th Annual ACM SIGACT Symposium on Theory
  of Computing}, STOC 2017, page 975–988, New York, NY, USA, 2017.
  Association for Computing Machinery.

\bibitem[BLT22]{BLT22}
Vijay Bhattiprolu, Euiwoong Lee, and Madhur Tulsiani.
\newblock {Separating the NP-Hardness of the Grothendieck Problem from the
  Little-Grothendieck Problem}.
\newblock In Mark Braverman, editor, {\em 13th Innovations in Theoretical
  Computer Science Conference (ITCS 2022)}, volume 215 of {\em Leibniz
  International Proceedings in Informatics (LIPIcs)}, pages 22:1--22:17,
  Dagstuhl, Germany, 2022. Schloss Dagstuhl -- Leibniz-Zentrum f{\"u}r
  Informatik.

\bibitem[BM15]{BM15}
Boaz Barak and Ankur Moitra.
\newblock Tensor prediction, rademacher complexity and random 3-xor.
\newblock {\em arXiv preprint arXiv:1501.06521}, 2015.

\bibitem[BMMN13]{BMMN13}
Mark Braverman, Konstantin Makarychev, Yury Makarychev, and Assaf Naor.
\newblock The {G}rothendieck constant is strictly smaller than {K}rivine's
  bound.
\newblock In {\em Forum of Mathematics, Pi}, volume~1. Cambridge University
  Press, 2013.
\newblock Conference version in FOCS '11.

\bibitem[BRS15]{BRS15}
J.~{Briët}, O.~{Regev}, and R.~{Saket}.
\newblock Tight hardness of the non-commutative {G}rothendieck problem.
\newblock In {\em Proceedings of the 2015 IEEE 56th Annual Symposium on
  Foundations of Computer Science (FOCS)}, FOCS 2015, pages 1108--1122,
  Washington, DC, USA, 2015. IEEE Computer Society.

\bibitem[BRW21]{BRW21}
Aditya Bhaskara, Aravinda~Kanchana Ruwanpathirana, and Maheshakya Wijewardena.
\newblock Additive error guarantees for weighted low rank approximation.
\newblock In Marina Meila and Tong Zhang, editors, {\em Proceedings of the 38th
  International Conference on Machine Learning}, volume 139 of {\em Proceedings
  of Machine Learning Research}, pages 874--883. PMLR, 18--24 Jul 2021.

\bibitem[BS01]{BS01}
Eli Ben-Sasson.
\newblock {\em Expansion in Proof Complexity}.
\newblock PhD thesis, Hebrew University, 2001.

\bibitem[BS14]{BS14}
Boaz Barak and David Steurer.
\newblock Sum-of-squares proofs and the quest toward optimal algorithms.
\newblock {\em arXiv preprint arXiv:1404.5236}, 2014.

\bibitem[BT87]{BT87}
Jean Bourgain and Lior Tzafriri.
\newblock Invertibility of large submatrices with applications to the geometry
  of banach spaces and harmonic analysis.
\newblock {\em Israel journal of mathematics}, 57(2):137--224, 1987.

\bibitem[BV09]{BV09}
S~Charles Brubaker and Santosh~S Vempala.
\newblock Random tensors and planted cliques.
\newblock In {\em Approximation, Randomization, and Combinatorial Optimization.
  Algorithms and Techniques}, pages 406--419. Springer, 2009.

\bibitem[BV11]{BV11}
Aditya Bhaskara and Aravindan Vijayaraghavan.
\newblock Approximating matrix p-norms.
\newblock In {\em Proceedings of the twenty-second annual ACM-SIAM symposium on
  Discrete Algorithms}, pages 497--511. SIAM, 2011.

\bibitem[CCG{\etalchar{+}}98]{CharikarCGGP98}
Moses Charikar, Chandra Chekuri, Ashish Goel, Sudipto Guha, and Serge Plotkin.
\newblock Approximating a finite metric by a small number of tree metrics.
\newblock In {\em Proceedings of the 39th IEEE Symposium on Foundations of
  Computer Science}, pages 379--388, 1998.

\bibitem[Cha12]{Chatterjee12}
Sourav Chatterjee.
\newblock The missing log in large deviations for triangle counts.
\newblock {\em Random Structures \& Algorithms}, 40(4):437--451, 2012.

\bibitem[CLRS13]{ChanLRS13}
Siu~On Chan, James~R. Lee, Prasad Raghavendra, and David Steurer.
\newblock Approximate constraint satisfaction requires large {LP} relaxations.
\newblock In {\em Proceedings of the 54th IEEE Symposium on Foundations of
  Computer Science}, pages 350--359, Washington, DC, USA, 2013. IEEE Computer
  Society.

\bibitem[CMM07a]{CMM07Metric}
Moses Charikar, Konstantin Makarychev, and Yury Makarychev.
\newblock Local global tradeoffs in metric embeddings.
\newblock In {\em Proceedings of the 48th IEEE Symposium on Foundations of
  Computer Science}, pages 713--723, Washington, DC, USA, 2007. IEEE Computer
  Society.

\bibitem[CMM07b]{CMM07UGCCSP}
Moses Charikar, Konstantin Makarychev, and Yury Makarychev.
\newblock Near-optimal algorithms for maximum constraint satisfaction problems.
\newblock In {\em Proceedings of the 18th ACM-SIAM Symposium on Discrete
  Algorithms}, pages 62--68, 2007.

\bibitem[CMM09]{CMM09SACSP}
Moses Charikar, Konstantin Makarychev, and Yury Makarychev.
\newblock Integrality gaps for {S}herali-{A}dams relaxations.
\newblock In {\em Proceedings of the 41st ACM Symposium on Theory of
  Computing}, pages 283--292, New York, NY, USA, 2009. ACM.

\bibitem[DFJP74]{DFJP74}
Wayne~J Davis, Tadeusz Figiel, William~B Johnson, and Aleksander
  Pe{\l}czy{\'n}ski.
\newblock Factoring weakly compact operators.
\newblock {\em Journal of Functional Analysis}, 17(3):311--327, 1974.

\bibitem[DHPT25]{DHPT25}
Ilias Diakonikolas, Samuel~B. Hopkins, Ankit Pensia, and Stefan Tiegel.
\newblock Sos certifiability of subgaussian distributions and its algorithmic
  applications.
\newblock In {\em Proceedings of the 57th Annual ACM Symposium on Theory of
  Computing}, STOC '25, page 1689–1700, New York, NY, USA, 2025. Association
  for Computing Machinery.

\bibitem[DK08]{deKlerk08}
Etienne De~Klerk.
\newblock The complexity of optimizing over a simplex, hypercube or sphere: a
  short survey.
\newblock {\em Central European Journal of Operations Research},
  16(2):111--125, 2008.

\bibitem[DK12a]{DK12a}
Bobby DeMarco and Jeff Kahn.
\newblock Upper tails for triangles.
\newblock {\em Random Structures \& Algorithms}, 40(4):452--459, 2012.

\bibitem[DK12b]{DK12b}
Robert DeMarco and Jeff Kahn.
\newblock Tight upper tail bounds for cliques.
\newblock {\em Random Structures \& Algorithms}, 41(4):469--487, 2012.

\bibitem[dKLP06]{deKLP06}
Etienne de~Klerk, Monique Laurent, and Pablo~A Parrilo.
\newblock A {PTAS} for the minimization of polynomials of fixed degree over the
  simplex.
\newblock {\em Theoretical Computer Science}, 361(2):210--225, 2006.

\bibitem[dKLS14]{dKMS14}
Etienne de~Klerk, Monique Laurent, and Zhao Sun.
\newblock Convergence analysis for {L}asserre's measure--based hierarchy of
  upper bounds for polynomial optimization.
\newblock {\em arXiv preprint arXiv:1411.6867}, 2014.

\bibitem[dKLS15]{dKLS15}
Etienne de~Klerk, Monique Laurent, and Zhao Sun.
\newblock An alternative proof of a {PTAS} for fixed-degree polynomial
  optimization over the simplex.
\newblock {\em Mathematical Programming}, 151(2):433--457, 2015.

\bibitem[DlPG12]{de2012decoupling}
Victor De~la Pena and Evarist Gin{\'e}.
\newblock {\em Decoupling: from dependence to independence}.
\newblock Springer Science \& Business Media, 2012.

\bibitem[dlVKM07]{delaVegaK07}
Wenceslas~Fernandez de~la Vega and Claire Kenyon-Mathieu.
\newblock Linear programming relaxations of maxcut.
\newblock In {\em Proceedings of the 18th ACM-SIAM Symposium on Discrete
  Algorithms}, pages 53--61, Philadelphia, PA, USA, 2007. Society for
  Industrial and Applied Mathematics.

\bibitem[DM15]{DM15}
Yash Deshpande and Andrea Montanari.
\newblock Improved sum-of-squares lower bounds for hidden clique and hidden
  submatrix problems.
\newblock In {\em COLT}, pages 523--562, 2015.

\bibitem[DW12]{DW12}
Andrew~C Doherty and Stephanie Wehner.
\newblock Convergence of sdp hierarchies for polynomial optimization on the
  hypersphere.
\newblock {\em arXiv preprint arXiv:1210.5048}, 2012.

\bibitem[Fei02]{Feige02}
Uriel Feige.
\newblock Relations between average case complexity and approximation
  complexity.
\newblock In {\em Proceedings of the Thiry-Fourth Annual ACM Symposium on
  Theory of Computing}, STOC '02, page 534–543, New York, NY, USA, 2002.
  Association for Computing Machinery.

\bibitem[FK08]{FK08}
Alan Frieze and Ravi Kannan.
\newblock A new approach to the planted clique problem.
\newblock In {\em LIPIcs-Leibniz International Proceedings in Informatics},
  volume~2. Schloss Dagstuhl-Leibniz-Zentrum f{\"u}r Informatik, 2008.

\bibitem[FLM77]{FLM77}
Tadeusz Figiel, Joram Lindenstrauss, and Vitali~D Milman.
\newblock The dimension of almost spherical sections of convex bodies.
\newblock {\em Acta Mathematica}, 139(1):53--94, 1977.

\bibitem[FV{\etalchar{+}}62]{feingold1962block}
David~G Feingold, Richard~S Varga, et~al.
\newblock Block diagonally dominant matrices and generalizations of the
  gerschgorin circle theorem.
\newblock {\em Pacific J. Math}, 12(4):1241--1250, 1962.

\bibitem[GM15]{GM15}
Rong Ge and Tengyu Ma.
\newblock Decomposing overcomplete 3rd order tensors using sum-of-squares
  algorithms.
\newblock {\em Approximation, Randomization, and Combinatorial Optimization.
  Algorithms and Techniques}, page 829, 2015.

\bibitem[GMU23]{GMU23}
Larry Guth, Dominique Maldague, and John Urschel.
\newblock Estimating the matrix $p \rightarrow q$ norm, 2023.

\bibitem[Gro56]{Grothendieck56}
Alexandre Grothendieck.
\newblock {\em R{\'e}sum{\'e} de la th{\'e}orie m{\'e}trique des produits
  tensoriels topologiques}.
\newblock Soc. de Matem{\'a}tica de S{\~a}o Paulo, 1956.

\bibitem[GRSW16]{GRSW16}
Venkatesan Guruswami, Prasad Raghavendra, Rishi Saket, and Yi~Wu.
\newblock Bypassing {UGC} from some optimal geometric inapproximability
  results.
\newblock {\em ACM Transactions on Algorithms (TALG)}, 12(1):6, 2016.
\newblock Conference version in SODA '12.

\bibitem[GW95]{GoemansW95}
Michel~X. Goemans and David~P. Williamson.
\newblock Improved approximation algorithms for maximum cut and satisfiability
  problems using semidefinite programming.
\newblock {\em J. ACM}, 42(6):1115–1145, November 1995.

\bibitem[Haa81]{Haagerup81}
Uffe Haagerup.
\newblock The best constants in the khintchine inequality.
\newblock {\em Studia Mathematica}, 70(3):231--283, 1981.

\bibitem[H{\aa}s96a]{Hastad96}
Johan H{\aa}stad.
\newblock Clique is hard to approximate within n 1-\&epsiv.
\newblock In {\em Foundations of Computer Science, 1996. Proceedings., 37th
  Annual Symposium on}, pages 627--636. IEEE, 1996.

\bibitem[H{\aa}s96b]{Hastad99}
Johan H{\aa}stad.
\newblock Clique is hard to approximate within $n^{1-\epsilon}$.
\newblock In {\em Proceedings of the 37th Annual Symposium on Foundations of
  Computer Science}, FOCS 1996, pages 627--636, Washington, DC, USA, 1996. IEEE
  Computer Society.

\bibitem[H{\aa}s01]{Hastad01}
Johan H{\aa}stad.
\newblock Some optimal inapproximability results.
\newblock {\em J. ACM}, 48(4):798–859, July 2001.

\bibitem[HHM{\etalchar{+}}15]{HHMOW14}
Johan H{\aa}stad, Sangxia Huang, Rajsekar Manokaran, Ryan O’Donnell, and John
  Wright.
\newblock {Improved NP-Inapproximability for 2-Variable Linear Equations}.
\newblock In Naveen Garg, Klaus Jansen, Anup Rao, and Jos{\'e} D.~P. Rolim,
  editors, {\em Approximation, Randomization, and Combinatorial Optimization.
  Algorithms and Techniques (APPROX/RANDOM 2015)}, volume~40 of {\em Leibniz
  International Proceedings in Informatics (LIPIcs)}, pages 341--360, Dagstuhl,
  Germany, 2015. Schloss Dagstuhl--Leibniz-Zentrum fuer Informatik.

\bibitem[HKP{\etalchar{+}}17]{HKPRSS16}
Samuel~B Hopkins, Pravesh~K Kothari, Aaron Potechin, Prasad Raghavendra, Tselil
  Schramm, and David Steurer.
\newblock The power of sum-of-squares for detecting hidden structures.
\newblock {\em Personal communication}, 2017.

\bibitem[HLZ10]{HLZ10}
Simai He, Zhening Li, and Shuzhong Zhang.
\newblock Approximation algorithms for homogeneous polynomial optimization with
  quadratic constraints.
\newblock {\em Mathematical Programming}, 125(2):353--383, 2010.

\bibitem[HM13]{HM13}
Aram~W Harrow and Ashley Montanaro.
\newblock Testing product states, quantum {M}erlin-{A}rthur games and tensor
  optimization.
\newblock {\em Journal of the ACM (JACM)}, 60(1):3, 2013.

\bibitem[HNW16]{HNW16}
Aram~W Harrow, Anand Natarajan, and Xiaodi Wu.
\newblock Limitations of semidefinite programs for separable states and
  entangled games.
\newblock {\em arXiv preprint arXiv:1612.09306}, 2016.

\bibitem[HSS15]{HSS15}
Samuel~B Hopkins, Jonathan Shi, and David Steurer.
\newblock Tensor principal component analysis via sum-of-square proofs.
\newblock In {\em Proceedings of The 28th Conference on Learning Theory}, pages
  956--1006, 2015.

\bibitem[HV04]{HV04}
Johan H{\aa}stad and Srinivasan Venkatesh.
\newblock On the advantage over a random assignment.
\newblock {\em Random Structures \& Algorithms}, 25(2):117--149, 2004.

\bibitem[Ind06]{Indyk06}
Piotr Indyk.
\newblock Stable distributions, pseudorandom generators, embeddings, and data
  stream computation.
\newblock {\em Journal of the ACM (JACM)}, 53(3):307--323, 2006.

\bibitem[IP01]{IP01}
Russell Impagliazzo and Ramamohan Paturi.
\newblock On the complexity of k-sat.
\newblock {\em Journal of Computer and System Sciences}, 62(2):367--375, 2001.

\bibitem[JOR04]{JOR04}
Svante Janson, Krzysztof Oleszkiewicz, and Andrzej Ruci{\'n}ski.
\newblock Upper tails for subgraph counts in random graphs.
\newblock {\em Israel Journal of Mathematics}, 142(1):61--92, 2004.

\bibitem[Kho02a]{Khot02}
Subhash Khot.
\newblock Hardness results for coloring 3-colorable 3-uniform hypergraphs.
\newblock In {\em Foundations of Computer Science, 2002. Proceedings. The 43rd
  Annual IEEE Symposium on}, pages 23--32. IEEE, 2002.

\bibitem[Kho02b]{Khot02UGC}
Subhash Khot.
\newblock On the power of unique 2-prover 1-round games.
\newblock In {\em Proceedings of the Thiry-Fourth Annual ACM Symposium on
  Theory of Computing}, STOC '02, page 767–775, New York, NY, USA, 2002.
  Association for Computing Machinery.

\bibitem[Kho02c]{Khot02:unique}
Subhash Khot.
\newblock On the power of unique 2-prover 1-round games.
\newblock In {\em Proceedings of the 34th ACM Symposium on Theory of
  Computing}, pages 767--775, New York, NY, USA, 2002. ACM.

\bibitem[KKMO04]{KKMO04}
S.~Khot, G.~Kindler, E.~Mossel, and R.~O'Donnell.
\newblock Optimal inapproximability results for max-cut and other 2-variable
  csps?
\newblock In {\em 45th Annual IEEE Symposium on Foundations of Computer
  Science}, pages 146--154, 2004.

\bibitem[KLMN05]{KrauthgamerLMN05}
Robert Krauthgamer, James~R Lee, Manor Mendel, and Assaf Naor.
\newblock Measured descent: A new embedding method for finite metrics.
\newblock {\em Geometric \& Functional Analysis GAFA}, 15(4):839--858, 2005.

\bibitem[KMOW17a]{KothariMOW17}
Pravesh Kothari, Ryuhei Mori, Ryan O'Donnell, and David Witmer.
\newblock Sum of squares lower bounds for refuting any {CSP}.
\newblock In {\em Proceedings of the 49th ACM Symposium on Theory of
  Computing}, 2017.
\newblock (To appear).

\bibitem[KMOW17b]{KMOW17}
Pravesh~K. Kothari, Ryuhei Mori, Ryan O'Donnell, and David Witmer.
\newblock Sum of squares lower bounds for refuting any {CSP}.
\newblock In {\em Proceedings of the 49th Annual {ACM} {SIGACT} Symposium on
  Theory of Computing}, STOC 2017, pages 132--145, New York, NY, USA, 2017.
  ACM.

\bibitem[KMR17]{KothariMR16}
Pravesh Kothari, Raghu Meka, and Prasad Raghavendra.
\newblock Approximating rectangles by juntas and weakly-exponential lower
  bounds for {LP} relaxations of {CSP}s.
\newblock In {\em Proceedings of the 49th ACM Symposium on Theory of
  Computing}, 2017.
\newblock (To appear).

\bibitem[KMW18]{KMW18}
Aditya Krishnan, Sidhanth Mohanty, and David~P. Woodruff.
\newblock On sketching $q$ to $p$ norms.
\newblock Manuscript, 2018.

\bibitem[KN08]{KN08}
Subhash Khot and Assaf Naor.
\newblock Linear equations modulo 2 and the l\_1 diameter of convex bodies.
\newblock {\em SIAM Journal on Computing}, 38(4):1448--1463, 2008.

\bibitem[KN12]{KN12}
Subhash Khot and Assaf Naor.
\newblock Grothendieck-type inequalities in combinatorial optimization.
\newblock {\em Communications on Pure and Applied Mathematics},
  65(7):992--1035, 2012.

\bibitem[KNS10]{KNS10}
Guy Kindler, Assaf Naor, and Gideon Schechtman.
\newblock The {UGC} hardness threshold of the {L}p {G}rothendieck problem.
\newblock {\em Mathematics of Operations Research}, 35(2):267--283, 2010.
\newblock Conference version in SODA '08.

\bibitem[KO09]{KO09}
Subhash Khot and Ryan O'Donnell.
\newblock {SDP} gaps and {UGC}-hardness for {Max-Cut-Gain}.
\newblock {\em Theory OF Computing}, 5:83--117, 2009.

\bibitem[KPPS15]{KenkrePPS15}
Sreyash Kenkre, Vinayaka Pandit, Manish Purohit, and Rishi Saket.
\newblock On the approximability of digraph ordering.
\newblock In Nikhil Bansal and Irene Finocchi, editors, {\em Algorithms - ESA
  2015: 23rd Annual European Symposium, Patras, Greece, September 14-16, 2015,
  Proceedings}, pages 792--803, Berlin, Heidelberg, 2015. Springer Berlin
  Heidelberg.

\bibitem[Kri77]{Krivine77}
Jean-Louis Krivine.
\newblock Sur la constante de {G}rothendieck.
\newblock {\em CR Acad. Sci. Paris Ser. AB}, 284(8):A445--A446, 1977.

\bibitem[KS09]{KhotS09}
Subhash Khot and Rishi Saket.
\newblock Sdp integrality gaps with local $\ell_1$-embeddability.
\newblock In {\em Proceedings of the 50th IEEE Symposium on Foundations of
  Computer Science}, pages 565--574, Washington, DC, USA, 2009. IEEE Computer
  Society.

\bibitem[KS15]{KhotS15}
Subhash Khot and Rishi Saket.
\newblock Approximating {CSP}s using {LP} relaxation.
\newblock In {\em Proceedings of the 42nd International Colloquium on Automata,
  Languages and Programming}, volume 9134, pages 822--833. Springer Verlag,
  2015.

\bibitem[KTv15]{KolmogorovTZ15}
Vladimir Kolmogorov, Johan Thapper, and Stanislav \v{Z}ivn\'{y}.
\newblock The power of linear programming for general-valued {CSP}s.
\newblock {\em SIAM Journal on Computing}, 44(1):1--36, 2015.

\bibitem[KTW14]{KhotTW14}
Subhash Khot, Madhur Tulsiani, and Pratik Worah.
\newblock A characterization of strong approximation resistance.
\newblock In {\em Proceedings of the 46th ACM Symposium on Theory of
  Computing}, pages 634--643, New York, NY, USA, 2014. ACM.

\bibitem[KV04]{KV04}
Jeong~Han Kim and Van~H Vu.
\newblock Divide and conquer martingales and the number of triangles in a
  random graph.
\newblock {\em Random Structures \& Algorithms}, 24(2):166--174, 2004.

\bibitem[Kwa72a]{Kwapien72a}
Stanislaw Kwapie{\'n}.
\newblock Isomorphic characterizations of inner product spaces by orthogonal
  series with vector valued coefficients.
\newblock {\em Stud. Math.}, 44:583--595, 1972.

\bibitem[Kwa72b]{Kwapien72b}
Stanislaw Kwapie{\'n}.
\newblock On operators factorizable through $l_p $ space.
\newblock {\em M{\'e}moires de la Soci{\'e}t{\'e} Math{\'e}matique de France},
  31:215--225, 1972.

\bibitem[Las09]{Lasserre09}
Jean~Bernard Lasserre.
\newblock {\em Moments, positive polynomials and their applications}, volume~1.
\newblock World Scientific, 2009.

\bibitem[Lau09]{Laurent09}
Monique Laurent.
\newblock Sums of squares, moment matrices and optimization over polynomials.
\newblock In {\em Emerging applications of algebraic geometry}, pages 157--270.
  Springer, 2009.

\bibitem[Lee15]{Lee15}
Euiwoong Lee.
\newblock Hardness of graph pricing through generalized max-dicut.
\newblock In {\em Proceedings of the 47th ACM Symposium on Theory of
  Computing}, pages 391--399, New York, NY, USA, 2015. ACM.

\bibitem[Loz03]{Lozier03}
Daniel~W. Lozier.
\newblock {NIST} digital library of mathematical functions.
\newblock {\em Annals of Mathematics and Artificial Intelligence},
  38(1):105--119, May 2003.

\bibitem[LP68]{LP68}
Joram Lindenstrauss and Aleksander Pe{\l}czy{\'n}ski.
\newblock Absolutely summing operators in $\mathcal{L}_p$-spaces and their
  applications.
\newblock {\em Studia Mathematica}, 29(3):275--326, 1968.

\bibitem[LS91]{LoS91}
L.~Lov{\'a}sz and A.~Schrijver.
\newblock Cones of matrices and set-functions and 0-1 optimization.
\newblock {\em SIAM J. on Optimization}, 1(12):166--190, 1991.

\bibitem[LZ16]{LZ16}
Eyal Lubetzky and Yufei Zhao.
\newblock On the variational problem for upper tails in sparse random graphs.
\newblock {\em Random Structures \& Algorithms}, 2016.

\bibitem[Mau74]{Maurey74}
Bernard Maurey.
\newblock Th{\'e}or{\`e}mes de factorisation pour les op{\'e}rateurs {\`a}
  valeurs dans un espace $ l_{p}$.
\newblock {\em S{\'e}minaire Analyse fonctionnelle (dit)}, pages 1--5, 1974.

\bibitem[MR14]{MR14}
Andrea Montanari and Emile Richard.
\newblock A statistical model for tensor {PCA}.
\newblock In {\em Advances in Neural Information Processing Systems}, pages
  2897--2905, 2014.

\bibitem[Nes98]{Nesterov98}
Yurii Nesterov.
\newblock Semidefinite relaxation and nonconvex quadratic optimization.
\newblock {\em Optimization methods and software}, 9(1-3):141--160, 1998.

\bibitem[Nes03]{Nesterov03}
Yurii Nesterov.
\newblock Random walk in a simplex and quadratic optimization over convex
  polytopes.
\newblock Technical report, UCL, 2003.

\bibitem[NR14]{NR14}
Assaf Naor and Oded Regev.
\newblock Krivine schemes are optimal.
\newblock {\em Proceedings of the American Mathematical Society},
  142(12):4315--4320, 2014.

\bibitem[NWY00]{NWY00}
Yuri Nesterov, Henry Wolkowicz, and Yinyu Ye.
\newblock Semidefinite programming relaxations of nonconvex quadratic
  optimization.
\newblock In {\em Handbook of semidefinite programming}, pages 361--419.
  Springer, 2000.

\bibitem[OV25]{OV25}
{\'E}tienne Objois and Adrian Vladu.
\newblock Approximating $q\to p$ norms of non-negative matrices in
  nearly-linear time.
\newblock {\em arXiv preprint arXiv:2503.19553}, 2025.

\bibitem[OZ13]{OZ13}
Ryan O'Donnell and Yuan Zhou.
\newblock Approximability and proof complexity.
\newblock In {\em Proceedings of the Twenty-Fourth Annual ACM-SIAM Symposium on
  Discrete Algorithms}, pages 1537--1556. Society for Industrial and Applied
  Mathematics, 2013.

\bibitem[Pie67]{Pietsch67}
Albrecht Pietsch.
\newblock Absolut p-summierende abbildungen in normierten r{\"a}umen.
\newblock {\em Studia Mathematica}, 28(3):333--353, 1967.

\bibitem[Pis80]{Pisier80}
Gilles Pisier.
\newblock Un th{e}oreme sur les op{e}rateurs lin{e}aires entre espaces de
  banach qui se factorisent par un espace de hilbert.
\newblock In {\em Annales scientifiques de l{E}cole Normale Sup{e}rieure},
  volume~13, pages 23--43. Elsevier, 1980.

\bibitem[Pis86]{Pisier86}
Gilles Pisier.
\newblock {\em Factorization of linear operators and geometry of Banach
  spaces}.
\newblock Number~60. American Mathematical Soc., 1986.

\bibitem[Pis12]{Pisier12}
Gilles Pisier.
\newblock Grothendieck's theorem, past and present.
\newblock {\em Bulletin of the American Mathematical Society}, 49(2):237--323,
  2012.

\bibitem[Rag08]{Raghavendra08}
Prasad Raghavendra.
\newblock Optimal algorithms and inapproximability results for every {CSP}?
\newblock In {\em Proceedings of the Fortieth Annual {ACM} Symposium on Theory
  of Computing}, STOC 2008, pages 245--254, New York, NY, USA, 2008. ACM.

\bibitem[Ree91]{Reeds91}
JA~Reeds.
\newblock A new lower bound on the real {G}rothendieck constant.
\newblock Manuscript, 1991.

\bibitem[Ros73]{Rosenthal73}
Haskell~P Rosenthal.
\newblock On subspaces of lp.
\newblock {\em Annals of Mathematics}, pages 344--373, 1973.

\bibitem[RRS16]{RRS16}
Prasad Raghavendra, Satish Rao, and Tselil Schramm.
\newblock Strongly refuting random csps below the spectral threshold.
\newblock {\em arXiv preprint arXiv:1605.00058}, 2016.

\bibitem[RS09a]{RaghavendraS09}
Prasad Raghavendra and David Steurer.
\newblock Integrality gaps for strong {SDP} relaxations of unique games.
\newblock In {\em Proceedings of the 50th IEEE Symposium on Foundations of
  Computer Science}, 2009.

\bibitem[RS09b]{RS09}
Prasad Raghavendra and David Steurer.
\newblock Towards computing the {G}rothendieck constant.
\newblock In {\em Proceedings of the Twentieth Annual ACM-SIAM Symposium on
  Discrete Algorithms}, pages 525--534. Society for Industrial and Applied
  Mathematics, 2009.

\bibitem[RS10]{RS10}
Prasad Raghavendra and David Steurer.
\newblock Graph expansion and the unique games conjecture.
\newblock In {\em Proceedings of the Forty-second ACM Symposium on Theory of
  Computing}, STOC 2010, pages 755--764, New York, NY, USA, 2010. ACM.

\bibitem[SA90]{SA90}
Hanif~D. Sherali and Warren~P. Adams.
\newblock A hierarchy of relaxations between the continuous and convex hull
  representations for zero-one programming problems.
\newblock {\em SIAM J. Discrete Math.}, 3(3):411--430, 1990.

\bibitem[Sch87]{Schechtman87}
Gideon Schechtman.
\newblock More on embedding subspaces of$\backslash$(l\_p$\backslash$)
  in$\backslash$(l\^{} n\_r$\backslash$).
\newblock {\em Compositio Math}, 61(2):159--169, 1987.

\bibitem[Sch06]{Schechtman06}
Gideon Schechtman.
\newblock Two observations regarding embedding subsets of euclidean spaces in
  normed spaces.
\newblock {\em Advances in Mathematics}, 200(1):125--135, 2006.

\bibitem[Sch08]{Schoenebeck08}
Grant Schoenebeck.
\newblock Linear level {L}asserre lower bounds for certain k-{CSP}s.
\newblock In {\em Proceedings of the 49th IEEE Symposium on Foundations of
  Computer Science}, pages 593--602, Washington, DC, USA, 2008. IEEE Computer
  Society.

\bibitem[So11a]{So11}
Anthony Man-Cho So.
\newblock Deterministic approximation algorithms for sphere constrained
  homogeneous polynomial optimization problems.
\newblock {\em Mathematical programming}, 129(2):357--382, 2011.

\bibitem[So11b]{So}
Anthony Man-Cho So.
\newblock Deterministic approximation algorithms for sphere constrained
  homogeneous polynomial optimization problems.
\newblock {\em Math. Program.}, 129(2):357–382, October 2011.

\bibitem[sta07]{Hastad07:survey}
Johan~H\aa stad.
\newblock {On the Efficient Approximability of Constraint Satisfaction
  Problems}.
\newblock In {\em Surveys in Combinatorics}, volume 346, pages 201--222.
  Cambridge University Press, 2007.

\bibitem[Ste61]{Stechkin61}
Sergei~Borisovich Stechkin.
\newblock On best lacunary systems of functions.
\newblock {\em Izvestiya Rossiiskoi Akademii Nauk. Seriya Matematicheskaya},
  25(3):357--366, 1961.

\bibitem[Ste05]{Steinberg05}
Daureen Steinberg.
\newblock Computation of matrix norms with applications to robust optimization.
\newblock {\em Research thesis, Technion-Israel University of Technology},
  2005.

\bibitem[Tro09]{Tropp09}
Joel~A Tropp.
\newblock Column subset selection, matrix factorization, and eigenvalue
  optimization.
\newblock In {\em Proceedings of the Twentieth Annual ACM-SIAM Symposium on
  Discrete Algorithms}, pages 978--986. Society for Industrial and Applied
  Mathematics, 2009.

\bibitem[Tul09]{Tulsiani09}
Madhur Tulsiani.
\newblock {CSP} gaps and reductions in the {L}asserre hierarchy.
\newblock In {\em Proceedings of the 41st ACM Symposium on Theory of
  Computing}, pages 303--312, New York, NY, USA, 2009. ACM.

\bibitem[Tv13]{ThapperZ13}
Johan Thapper and Stanislav \v{Z}ivn\'{y}.
\newblock The complexity of finite-valued {CSP}s.
\newblock In {\em Proceedings of the 45th ACM Symposium on Theory of
  Computing}, pages 695--704, New York, NY, USA, 2013. ACM.

\bibitem[Tv16]{ThapperZ16}
Johan Thapper and Stanislav \v{Z}ivn\'{y}.
\newblock The power of {S}herali-{A}dams relaxations for general-valued {CSP}s.
\newblock {\em arXiv preprint arXiv:1606.02577}, 2016.

\bibitem[vB72]{Beek72}
Paul van Beek.
\newblock An application of {F}ourier methods to the problem of sharpening the
  {B}erry-{E}sseen inequality.
\newblock {\em Zeitschrift f{\"u}r Wahrscheinlichkeitstheorie und verwandte
  Gebiete}, 23(3):187--196, 1972.

\bibitem[Ver17]{Vershynin17}
Roman Vershynin.
\newblock {\em High Dimensional Probability}.
\newblock 2017.

\bibitem[Vu01]{Vu01}
Van~H Vu.
\newblock A large deviation result on the number of small subgraphs of a random
  graph.
\newblock {\em Combinatorics, Probability \& Computing}, 10(1):79--94, 2001.

\end{thebibliography}
\end{document}